\newcommand*\ie{i.\kern.1em e.\ }
\newcommand*\eg{e.\kern.1em g.\ }
\theoremstyle{plain}
\newtheorem{theorem}{Theorem}[section]
\newtheorem{lemma}[theorem]{Lemma}
\newtheorem{fact}[theorem]{Fact}
\newtheorem{proposition}[theorem]{Proposition}
\newtheorem{claim}[theorem]{Claim}
\newtheorem{corollary}[theorem]{Corollary}
\newtheorem{observation}[theorem]{Observation}
\theoremstyle{definition}
\newtheorem{definition}[theorem]{Definition}
\newtheorem{remark}[theorem]{Remark}
\theoremstyle{plain}
\newcommand{\ignore}[1]{}
\DeclareMathOperator{\supp}{supp}   % support
\DeclareMathOperator{\poly}{poly}
\DeclareMathOperator{\sign}{sign}
\newcommand{\dist}{\mathsf{dist}}
\newcommand{\Var}[1]{\mathrm{Var} \left[ #1 \right]}
\newcommand{\Varu}[2]{\underset{ #1 } {\mathrm{Var}} \left[ #2 \right]}
\newcommand{\Varuc}[3]{\underset{ #1 } {\mathrm{Var}} \left[ #2 \;\; \left| \;\; #3 \right. \right]}
\newcommand{\Ex}[1]{\bE \left[ #1 \right]}
\newcommand{\Exu}[2]{\underset{#1} \bE \left[ #2 \right] }
\newcommand{\Exuc}[3]{\underset{#1} \bE \left[ #2 \;\; \left| \;\; #3
\right.\right] }
\renewcommand{\Pr}[1]{\bP \left[ #1 \right]} % Probability
\newcommand{\Pru}[2]{\underset{ #1 }\bP \left[ #2 \right]}
\newcommand{\Pruc}[3]{\underset{ #1 }\bP \left[ #2 \;\; \mid \;\; #3 \right]}
\newcommand{\define}{\vcentcolon=}
\newcommand{\floor}[1]{\ensuremath{\lfloor #1 \rfloor}}
\newcommand{\ceil}[1]{\ensuremath{\lceil #1 \rceil}}
\newcommand{\inn}[1]{\langle #1 \rangle}
\DeclarePairedDelimiter{\abs}{\lvert}{\rvert}
\newcommand{\ind}[1]{\mathds{1} \left[ #1 \right] }
\newcommand{\zo}{\{0,1\}}
\newcommand{\cC}{\ensuremath{\mathcal{C}}}
\newcommand{\cD}{\ensuremath{\mathcal{D}}}
\newcommand{\cE}{\ensuremath{\mathcal{E}}}
\newcommand{\cF}{\ensuremath{\mathcal{F}}}
\newcommand{\cG}{\ensuremath{\mathcal{G}}}
\newcommand{\cH}{\ensuremath{\mathcal{H}}}
\newcommand{\cI}{\ensuremath{\mathcal{I}}}
\newcommand{\cM}{\ensuremath{\mathcal{M}}}
\newcommand{\cO}{\ensuremath{\mathcal{O}}}
\newcommand{\cS}{\ensuremath{\mathcal{S}}}
\newcommand{\cT}{\ensuremath{\mathcal{T}}}
\newcommand{\cX}{\ensuremath{\mathcal{X}}}
\newcommand{\bE}{\ensuremath{\mathbb{E}}}
\newcommand{\bN}{\ensuremath{\mathbb{N}}}
\newcommand{\bR}{\ensuremath{\mathbb{R}}}
\newcommand{\bZ}{\ensuremath{\mathbb{Z}}}
\newcommand\lequestion{\stackrel{\mathclap{\normalfont\mbox{?}}}{\le}}
\newcommand\gtquestion{\stackrel{\mathclap{\normalfont\mbox{?}}}{>}}
\newcommand\gequestion{\stackrel{\mathclap{\normalfont\mbox{?}}}{\ge}}
\newcommand\eqquestion{\stackrel{\mathclap{\normalfont\mbox{?}}}{=}}
\newcommand{\parity}{\mathsf{par}}
\newcommand{\Poi}{\mathsf{Poi}}
\newcommand{\Bin}{\mathsf{Bin}}
\newcommand{\Ber}{\mathsf{Ber}}
\newcommand{\Multinomial}{\mathsf{Multinomial}}
\newcommand{\TV}{\mathsf{TV}}
\newcommand{\edit}{\mathsf{edit}}
\newcommand{\str}{\mathsf{str}}
\newcommand{\close}{\textsc{close}}
\newcommand{\far}{\textsc{far}}
\newcommand{\samp}{\mathsf{samp}}
\DeclareMathOperator{\smallinterval}{\mathsf{small}}
\DeclareMathOperator{\largeinterval}{\mathsf{large}}
\newcommand{\bPhi}{\boldsymbol{\Phi}}
\DeclarePairedDelimiterX{\infdivx}[2]{(}{)}{%
  #1\;\delimsize\|\;#2%
}
\newcommand{\RelativeConcentrationT}[3]{\rho_{#3}\infdivx{#1}{#2}}
\newcommand{\reledit}{\mathsf{rel-edit}}
\newcommand{\ham}{\mathsf{ham}}
\newcommand{\trace}{\mathsf{trace}}
\title{Distribution Testing Under the Parity Trace}
\author{Anonymous Author}
\author{%
  Renato Ferreira Pinto Jr.\thanks{Partly funded by an NSERC Canada Graduate Scholarship Doctoral
  Award.}\\
  University of Waterloo\\
  \texttt{r4ferrei@uwaterloo.ca}
\and Nathaniel Harms\thanks{Partly funded by an NSERC Postdoctoral Fellowship. Much of this work was
done while the author was at the University of Waterloo.}\\
  EPFL \\
  \texttt{nathaniel.harms@epfl.ch}}
\date{}
\begin{document}
\maketitle

\begin{abstract}
    \emph{Distribution testing} is a fundamental statistical task with many applications, but we
    are interested in a variety of problems where systematic mislabelings of the sample prevent us
from applying the existing theory.
    To apply distribution testing to these problems, we introduce
    \emph{distribution testing under the parity trace}, where the algorithm receives an ordered
sample $S$ that reveals only the least significant bit of each element. This abstraction reveals
connections between the following three problems of interest, allowing new upper and lower bounds:
    \begin{enumerate}
        \item In distribution testing with a \emph{confused collector}, the collector of the sample
may be incapable of distinguishing between nearby elements of a domain (\eg a machine learning
classifier). We prove bounds for distribution testing with a confused collector on domains
structured as a cycle or a path.
\item Recent work on the fundamental \emph{testing vs.~learning} question established tight
lower bounds on \emph{distribution-free sample-based} property testing by reduction from
distribution testing, but the tightness is limited to \emph{symmetric} properties. The parity trace
allows a broader family of equivalences to \emph{non-symmetric} properties, while recovering and
strengthening many of the previous results with a different technique.
        \item We give the first results for property testing in the well-studied
            \emph{trace reconstruction} model, where the goal is to test whether an unknown string
            $x$ satisfies some property or is far from satisfying that property, given only
            independent random \emph{traces} of $x$.
    \end{enumerate}
    Our main technical result is a tight bound of $\widetilde \Theta\left((n/\epsilon)^{4/5} + \sqrt
n/\epsilon^2\right)$ for testing uniformity of distributions over $[n]$ under the parity trace,
leading also to results for the problems above.
\end{abstract}

\thispagestyle{empty}
\setcounter{page}{0}
\newpage

% ToC spacing options, see tocloft package docs.
\setlength{\cftbeforesecskip}{0.125em}

\thispagestyle{empty}
\setcounter{page}{0}
\newpage
{\small
\setcounter{tocdepth}{2} 
\tableofcontents
}
\thispagestyle{empty}
\setcounter{page}{0}
\newpage
\setcounter{page}{1}

\section{Introduction}

Making decisions about an unknown probability distribution $\cD$, using only random samples, is a
basic type of statistical task. Deciding whether $\cD$ satisfies some property, or is \emph{far}
(according to some distance metric) from all distributions satisfying that property, is the purpose
of a \emph{distribution testing} algorithm. Distribution testing is well-studied and interesting on
its own, and also has many useful applications. But we are interested in some problems where
systematic mislabelings of data prevent us from applying the existing theory.  So we define
\emph{distribution testing under the parity trace} to help understand these problems.  Before
defining this abstraction, let us explain these problems.

\newcommand{\profAnonymous}{\iftoggle{anonymous}{Anonymous }{Julie Messier }}

\paragraph*{1. Distribution testing with a confused collector.} We wish to make a decision about an
unknown distribution $\cD$ over some domain $\cX$, using only a random sample $S$ from $\cD$.
Unfortunately, $S$ has been collected or labeled by an entity who does not know the difference
between some elements of $\cX$. Perhaps our sample of woodland flora was tabulated by a research
assistant who cannot differentiate between black spruce and white spruce, or between red maple and
sugar maple, and has counted the spruces together and the maples together by mistake\footnote{We
thank ecologist Prof. \profAnonymous for these examples of species that are easily confused by
students.}. Or, the sample was labeled by a machine learning classifier, and for each pair of
elements $x,y \in \cX$ there is some chance that it has not learned to distinguish $x$ from $y$ and
lumps together all the samples of $x$ and $y$. Or, the sample labels have been hashed by a function
that introduces collisions between nearby elements of $\cX$.  Or, we wish to know about the
distribution of fossils by year, but it is not possible to distinguish between fossils from year $x$
and fossils from year $y$, unless a random geological event leaves a mark in the rock between years
$x$ and $y$. Recent work in learning theory notes that this type of problem is common in the applied
literature, but little is known theoretically \cite{FKKT21}. We introduce a model for this type of
problem, called \emph{distribution testing with a confused collector}. 

\paragraph*{2. Distribution-free sample-based property testing.} The \emph{testing vs.~learning}
question of \cite{GGR98} is one of the fundamental questions in property testing. It asks which
properties can be \emph{tested} more efficiently than they can be \emph{learned}.
\emph{Distribution-free sample-based property testing} is the property testing model
corresponding to standard PAC learning, so understanding testing vs.~learning in this model is
essential for many of the standard motivations for property testing \cite{GGR98}.  Recent progress
on testing vs.~learning used connections to distribution testing to get new upper bounds
\cite{GR16} and lower bounds \cite{ES20,BFH21,CP22} on property testing, exhibiting in particular
an equivalence between property and distribution testing for \emph{symmetric} properties of
functions $[n] \to \zo$ (\ie properties which are closed under permutations on $[n]$)
\cite{GR16,BFH21}. But these techniques fall short of answering the testing vs.~learning question
for important properties like $k$-alternating functions and halfspaces, because these properties are
non-symmetric and do not allow the same equivalences, which leaves a gap between the lower bounds of
\cite{BFH21} and the upper bounds from PAC learning that requires new techniques to resolve.

Distribution testing under the parity trace overcomes some of the limitations of \cite{BFH21} by
giving us the first equivalence between distribution testing and testing \emph{non-symmetric}
properties. We recover many of the lower bounds of \cite{BFH21} using a different
technique with stronger consequences for testing vs.~learning, and we also get new tight
positive results for distribution-free sample-based testing of joint function-distribution
properties, adding to the short list of
positive results on distribution-free sample-based testing \cite{GR16,RR20,RR21,BFH21}.

\paragraph*{3. Property testing for trace reconstruction.} 
Trace reconstruction is a beautiful problem posed in \cite{Lev01,BKKM04}.
Relevant to us is trace reconstruction under the
\emph{deletion channel}, which has recently received significant attention (\eg
\cite{HMPW08,DOS17,NP17,HL20,KMMP21,GSZ22,CDL+22,Rubi22,CDL+23}).  The problem is this: There is a
string $x \in \zo^N$ and a deletion rate $\delta \in (0,1)$. A random \emph{trace} is obtained from
$x$ by deleting each character independently with probability $\delta$ to produce a substring.  The
algorithm is given a sequence of independent traces and it must reconstruct the string $x$ using as
few traces as possible. The problem is often motivated by computational biology, where this is a
simplified model of the way biologists typically have access to DNA strings: the ``true'' DNA is not
available, but instead there are a number of corrupted copies. See \cite{BPRS20} for a survey on
biological applications.

Trace reconstruction is notoriously difficult to analyze, with a huge gap between the best known
lower bound of $\widetilde \Omega(N^{3/2})$ \cite{Chas21lower} and upper bound of $\exp(\widetilde
O(N^{1/5}))$ \cite{Chas21upper} (where the hidden constants depend on $\delta$).  However, if the
goal is to make a decision about the unknown string $x$, complete \emph{reconstruction} may be
unnecessary. We propose \emph{property testing} in the trace reconstruction model, which, to our
knowledge, has not yet been studied. The goal is simply to make a decision about $x$ from its
traces, without reconstructing $x$ completely.  In terms of the standard biological application, we
wish to make a decision about a DNA string, from a number of corrupted copies, \emph{without}
reconstructing it.  We prove the first non-trivial property testing results for trace
reconstruction, which follow from an equivalence to distribution testing under the parity trace.

\paragraph*{This paper.} Standard distribution testing algorithms make their decisions based on the
\emph{histogram}, which counts the number of times each element of the domain occurs in the sample.
The common challenge in each of the problems above is that, to apply distribution testing, the
tester needs to handle a certain structured mislabeling of the sample that prevents it from
constructing the histogram. \emph{Distribution testing under the parity trace} is an abstraction of
this challenge. The purpose of this paper is to relate this model to the problems above, and to
begin understanding the model by proving tight bounds on the most fundamental distribution testing
task, \emph{testing uniformity}. This is significantly more difficult to analyze than in the
standard model, and we believe it is necessary before advancing to some more difficult problems that
we will discuss.

\subsection{Distribution Testing under the Parity Trace}

Let us now define distribution testing under the parity trace.  Let $\Pi$ be a property (\ie set) of
probability distributions over $\bN$.  As in standard distribution testing, for a distribution $\pi$
over $\bN$, a distribution tester under the parity trace must accept (with probability $2/3$) any
input $\pi \in \Pi$, and reject (with probability $2/3$) any input $\pi$ that is $\epsilon$-far from
$\Pi$, meaning that its distance to any $\pi' \in \Pi$ is at least $\epsilon$. (Standard
distribution testing often uses the total variation distance, but we will see that this is not the
natural choice in this case.)  Instead of receiving a sample $S$ from the distribution $\pi$, the
tester receives the \emph{parity trace} of $S$, denoted by $\trace(S)$, defined as follows. For any
multiset $S \subset \bN$ of size $m$, put $S = \{ x_1, x_2, \dotsc, x_m \}$ in sorted order $x_1
\leq x_2 \leq \dotsm \leq x_m$, and write $\parity(x) \define (x \mod 2)$ for the parity of $x$.
Then
\[
  \trace(S) \define (\parity(x_1), \parity(x_2), \dotsc, \parity(x_m)) \,.
\]
For example, on sample $S = \{ 5, 1, 6, 2, 4, 2\}$, the algorithm receives $\trace(S) = 100010$,
which is the string of parities of $(1,2,2,4,5,6)$. Notice that, for example, the uniform
distribution over $\{1, \dotsc, n\}$ and the uniform distribution over $\{n+1, \dotsc, 2n\}$ are
indistinguishable under the parity trace when $n$ is even, although they have total variation (TV)
distance 1, so it is not obvious \emph{a priori} which distribution testing tasks are even possible
under the parity trace.

\paragraph*{Testing uniformity.}
To begin understanding the parity trace, consider the problem of
\emph{testing uniformity} (\eg \cite{GR00,Pan08,ADK15,DKN15b}, see \cite{Can22}). The goal is
to accept the uniform distribution over $[2n]$ and reject the distributions over $[2n]$ that are
$\epsilon$-far from uniform.  One may see that testing uniformity under the parity trace is indeed
possible, even with respect the TV distance, using a coupon-collector argument. After receiving a
trace of size $\Theta(n \log n)$, with high probability the trace either included every element of
the domain $[2n]$, or it can safely reject.  If the trace included every element of the domain,
the algorithm can deduce the exact identity of each sample point, and simulate the standard
distribution tester, giving a bound of $O(n \log n + \sqrt{n}/\epsilon^2 )$, which
follows from the tight $\Theta(\sqrt n / \epsilon^2)$ bound in the standard model
\cite{Pan08,VV17,DGPP18,DGPP19}.

It is not immediately clear whether a sample of size $o(n)$ suffices. The main technical
contribution of this paper is to establish tight bounds on this problem: sublinear sample size is
indeed achievable, but the problem exhibits a gap between the standard model and the parity trace
model.  We discuss the proof in \cref{section:intro-confused-collector,section:proof-overview},
as the confused collector model will serve as a warm-up.

\begin{theorem}[Informal; see \cref{thm:intro-main}]
\label{thm:intro-main-informal}
Testing uniformity of distributions on domain $[2n]$ under the parity trace, with respect to the TV
distance, requires sample size $\widetilde \Theta\left(\left(\frac{n}{\epsilon}\right)^{4/5} +
\frac{\sqrt n}{\epsilon^2}\right)$.
\end{theorem}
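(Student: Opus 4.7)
The plan is to prove matching upper and lower bounds of $\widetilde\Theta((n/\epsilon)^{4/5} + \sqrt n/\epsilon^2)$. I begin by Poissonizing: replace the sample size $m$ by a $\Poi(m)$ random variable, so the multiplicity $N_j$ of each element $j \in [2n]$ becomes an independent $\Poi(m \pi(j))$ variable. The parity trace is then the deterministic function $1^{N_1}0^{N_2}\cdots 1^{N_{2n-1}}0^{N_{2n}}$ of these independent counts, which factors the analysis over ``buckets'' $(N_{2i-1}, N_{2i})$ but loses information whenever a bucket is empty.

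For the upper bound, I would build an estimator that is a sum over positions in the trace of a local statistic on a small window of bits -- for instance, indicators of short patterns such as $11$, $00$, or $110$ in a sliding window. Expected values under uniform can be computed exactly from Poisson bucket moments. For an $\epsilon$-far $\pi$, the deviation in expectation should be bounded below by some function of a $\chi^2$-type divergence between the bucket-level laws of $\pi$ and uniform. Two regimes are expected: in the small-$\epsilon$ regime, a pairwise (length-2) pattern statistic should match the collision-based $\sqrt n/\epsilon^2$ rate; in the large-$\epsilon$ regime, a higher-order statistic (length-3 or 4 patterns) is needed, whose signal scales with a larger moment of the per-bucket perturbation, leading to the $(n/\epsilon)^{4/5}$ rate after balancing against the corresponding increased fluctuation.

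For the lower bound I would apply Le Cam with two separate priors. The first is the Paninski-style within-bucket pairing $\pi_\sigma(2i-1)=(1+\sigma_i\epsilon)/(2n),\;\pi_\sigma(2i)=(1-\sigma_i\epsilon)/(2n)$ for uniformly random $\sigma\in\{\pm 1\}^n$, which leaves every pair mass exactly $1/n$ while perturbing only the within-bucket split. A $\chi^2$-bound between the mixture-over-$\sigma$ trace distribution and the uniform trace distribution should yield $\Omega(\sqrt n/\epsilon^2)$. The second prior should exploit information loss specific to the trace: a natural candidate concentrates the perturbation on a fraction $n^{-\alpha}$ of buckets with a correspondingly larger per-bucket magnitude, so that the per-bucket trace signature is still small while the total TV distance remains $\epsilon$; optimizing $\alpha$ to balance the $\chi^2$-contribution per perturbed bucket (which depends on higher moments of the local trace block) against the number of perturbed buckets should produce the $(n/\epsilon)^{4/5}$ bound.

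The main technical obstacle is that at the position level the trace has long-range correlations -- empty buckets can merge what would otherwise be separate blocks -- so standard product-distribution tools do not apply to the trace seen as a bit string. The key insight that makes the analysis tractable is that under Poissonization the trace is an independent concatenation of per-bucket blocks $1^{N_{2i-1}}0^{N_{2i}}$, so divergences tensorize over the $n$ buckets. This reduces each divergence calculation to a per-bucket quantity under a single-coordinate Poisson perturbation, which can be controlled in closed form via Poisson generating functions; combining these bucket bounds with optimization over the statistic order (upper bound) and the perturbation support (lower bound) should produce the stated rates.
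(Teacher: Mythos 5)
Your lower-bound plan is built on a tensorization step that the paper explicitly identifies and rejects. You claim that ``under Poissonization the trace is an independent concatenation of per-bucket blocks $1^{N_{2i-1}}0^{N_{2i}}$, so divergences tensorize over the $n$ buckets.'' The tensorization over the \emph{annotated} block tuple is correct, and by data processing it does give a valid one-sided bound on the divergence of the observed trace, but it is too loose: it implicitly grants the tester knowledge of where every block begins and ends, which is exactly the information the parity trace hides. Carrying out the computation for the domino-style prior (a within-block perturbation $\pm\epsilon$ on the $1$-mass with uniform $0$-mass) gives a per-block chi-squared contribution of order $\epsilon^4(m/n)^3$ (the first informative event needs at least three symbols in the block, with two $1$s separated by a $0$), hence a total of order $m^3\epsilon^4/n^2$, which yields only $m=\Omega(n^{2/3}/\epsilon^{4/3})$ --- strictly weaker than $\Omega((n/\epsilon)^{4/5})$ throughout the interesting regime $\epsilon>n^{-1/4}$. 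The paper gets the extra strength by applying the chain rule not per block but over $\Theta(m)$ contiguous ranges of $\Theta(n/m)$ blocks each, so that each range carries $\Theta(1)$ expected samples; the per-range information then scales like $\epsilon^4(m/n)^4\cdot\mathrm{polylog}$ and summing over the $\Theta(m)$ ranges gives $\epsilon^4 m^5/n^4$, which produces $(n/\epsilon)^{4/5}$. Your second lower-bound prior --- concentrating the perturbation on an $n^{-\alpha}$ fraction of blocks with correspondingly larger magnitude $\delta\sim n^{\alpha}\epsilon$ --- makes matters worse, not better: the per-block contribution grows like $\delta^4$, which overwhelms the reduction in block count, so the optimization over $\alpha$ returns $\alpha=0$ and the sparse family buys you nothing.

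Your upper-bound sketch also differs from the paper's and has a concrete gap. A sliding-window count of constant-length patterns such as $11$, $00$, $110$ is a statistic that is \emph{linear} in the run lengths: counting adjacent $11$s gives $\sum_i (X_i-1)^+$, and a length-$K$ pattern still yields at most a linear-in-$X_i$ tail for $X_i>K$. The signal the paper relies on is the \emph{quadratic} collision count $\sum_i X_i(X_i-1)$, equivalently the total number of occurrences of $1^k$ over all $k\ge 2$; truncating at constant $k$ loses the crucial contribution from moderately long runs, and the paper only gets away with polylogarithmic-length windows because it first runs a filtering step that rejects any trace with a run of length $\geq\alpha\log n$. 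More importantly, your proposal never addresses the hard case in the analysis: when the input $\pi(p,q)$ has $q$ (the $0$-mass) far from uniform, the $1$-runs $X_i$ are grouped by a non-uniform random partition, and making $\mathbb{E}[\bm Y]$ large for all such $q$ requires the paper's ``uniform conjugate'' construction (\cref{lemma:approx-conjugate-existence}) together with a quantitative Jensen bound for $\tanh$. A naive ``$\chi^2$-type divergence between bucket-level laws'' does not capture this, because the deviation can live entirely in the $0$-coordinates while the bucket-level $1$-marginals look exactly uniform. Finally, the paper's small-$\epsilon$ regime is handled by a coupon-collector argument that simply recovers the full histogram, not by a length-2 pattern statistic as you suggest.
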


\paragraph*{Edit distance.}
\cref{thm:intro-main-informal} uses the TV distance, but this is not always possible.  Two
distributions may have TV distance 1 while being indistinguishable under the parity trace, so TV
distance is not the most natural metric, and we require a new one in order to relate the parity
trace model to the other problems discussed in this paper.  We define the \emph{edit distance}
pseudo-metric for distributions, which has the desired property that two distributions $\pi$ and
$\pi'$ are indistinguishable under the parity trace, if and only if the edit distance is 0.  We
think of a distribution $\pi$ over $\bN$ as an alternating ``fractional string''\!\!,
\[
  1^{\pi(1)} 0^{\pi(2)} 1^{\pi(3)} \dotsm 1^{\pi(2i-1)} 0^{\pi(2i)} \dotsm
\]
where $b^{p}$ indicates that $b$ is repeated $p$ times (which is fractional). Then the parity trace
of size $m$ from $\pi$ is obtained by sampling $m$ independent random characters proportional to
their fractional number of repetitions $p$, and concatenating them in order. The
distribution of the parity trace is invariant under certain ``free'' edit operations, like $b^p \to
b^{p/2} b^{p/2}$, $a^p c^q \to a^p b^0 c^q$, or $a^p b^0 c^q \to a^p c^q$, while other ``expensive''
edit operations like $b^p c^q \to b^{p-\delta} c^{q+\delta}$ may change the distribution of the
trace. The edit distance is the cost of transforming one distribution into another; see
\cref{def:distribution-edit-distance}.

\subsection{Distribution Testing with a Confused Collector}
\label{section:intro-confused-collector}

We introduce the \emph{confused collector} to model distribution testing problems where the
algorithm receives a random sample $S$ that has been systematically mislabeled; recall the examples
on the first page. To formalize the problem, imagine that for each two elements $x$ and $y$ in the
domain, there is some probability that all appearances of $x$ and $y$ in the sample $S$ have been
joined and counted together.  These joins must be transitive, so the probabilities that the pairs
$(x,y)$, $(y,z)$, or $(x,z)$ are joined are not independent. That means there must be some
structured random process that joins the domain elements, which we choose to model as follows.

Let $p$ be a distribution over a (finite) structured domain $V$, whose structure is given
by a ``base graph'' $G = (V,E)$. For example, $G$ could be a tree representing the taxonomy of a
collection of fauna.  The distribution testing algorithm has a parameter $\eta \in (0,1]$ called the
\emph{resolution} (representing the accuracy of the classifier), and it receives a random sample $S$
of size $m$ produced as follows. First sample a subgraph $H$ of $G$ by including each edge $uv \in
E$ with probability $1-\eta$, and let $C_1, \dotsc, C_t$ be its connected components.  For each
$C_i$, let $c_i \in C_i$ be an arbitrary representative of the component. Then sample a set $S'$ of
$m$ independent points from $p$ and label each $s \in S'$ with the representative $c_i$ of its
component. The resulting sample $S$ is given to the algorithm.  Note that, with resolution $\eta =
1$, the graph $H$ is an independent set and each element is given its proper label, so the model
becomes the standard distribution testing model.

Given a property $\Pi$ of distributions over $V$ and parameter $\epsilon$, a tester for $\Pi$, with
resolution $\eta$ and sample complexity $m$, must accept (with probability\footnote{Unlike standard
testers, we cannot simply repeat the tester to boost the success probability, which depends partly
on the resolution $\eta$.} $2/3$) any distribution in $\Pi$, and reject (with probability $2/3$)
any distribution that is $\epsilon$-far in TV distance from all distributions in $\Pi$.

\paragraph*{Results.}
We get results when the base graph $G$ is a cycle or path, which capture situations where the domain
is $[n]$ and domain elements are distinguishable only if a random ``separator'' occurs between them
(like the fossil example given on the first page, or if the sample labels have been randomly hashed
by a monotone hash function\footnote{A monotone hash function is one that preserves the order of the
keys, see \eg \cite{AFK23}}).

\begin{theorem}[Informal; see \cref{thm:confused-collector-main}.]
\label{thm:intro-confused-collector-main}
Let $G = (V,E)$ be a path or cycle on $n$ vertices, let $\epsilon \geq \widetilde
\Omega(n^{-1/4})$ and $\eta \geq \widetilde \Omega(n^{-1/5} \epsilon^{-4/5})$. Then testing
uniformity requires $\widetilde O\left(\frac{\sqrt n}{\epsilon^2
\eta^{3/2}}\right)$ samples.
\end{theorem}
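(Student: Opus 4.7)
The plan is to reduce the confused collector problem on a path or cycle to uniformity testing under the parity trace, and then apply \cref{thm:intro-main-informal}. The structure of the reduction is (i) interpret the confused collector sample as a parity trace of a coarsened distribution, (ii) prove a distance-preservation lemma that translates TV distance on $[n]$ into edit distance on the coarsening, and (iii) plug into the main theorem.

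First, fix the random subgraph $H$. On a cycle or path with each edge independently retained with probability $1-\eta$, the components $C_1,\dotsc,C_k$ are intervals of sizes $s_i$ that are (up to boundary effects) geometric with mean $1/\eta$. Chernoff bounds give $k = \Theta(\eta n)$ and $s_i = \widetilde\Theta(1/\eta)$ for all $i$ with high probability. Given $G$ and the component representatives in the sample, the algorithm can identify each sample's component index $i \in [k]$, number the components in cyclic order, sort the sample by component index, and read off the component parities; this is exactly the parity trace of the pushforward distribution $p'(i) := p(C_i)$ viewed as a distribution on $[k] \subset \bN$.

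The key technical step is a \emph{distance-preservation lemma}: if $\|p - U\|_{\TV} \geq \epsilon$ on $[n]$, then with high probability over $H$ the pushforward $p'$ is at edit distance $\Omega(\epsilon\eta)$ from the pushforward of $U$. The intuition is that random coarsening into intervals of length $\Theta(1/\eta)$ can suppress the signal $\delta := p - U$ by at most a factor of $O(\eta)$, since only the ``high-frequency'' part of $\delta$ averages out inside a component. Quantitatively, $\bE_H\left[\sum_i |\delta(C_i)|\right]$ is controlled by a convolution of $\delta$ against the ``same-component'' kernel $(1-\eta)^{|u-v|}$, which yields the $\Omega(\epsilon\eta)$ lower bound in expectation; a second-moment concentration argument then promotes this to a high-probability statement. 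The mild non-uniformity of the reference distribution $U'(i) = s_i/n$ can be absorbed using the concentration of the $s_i$.

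Applying \cref{thm:intro-main-informal} to $p'$ with $k = \Theta(\eta n)$ and $\epsilon' = \Theta(\epsilon\eta)$ yields
\[
  \widetilde O\!\left( \left(\frac{k}{\epsilon'}\right)^{4/5} + \frac{\sqrt{k}}{{\epsilon'}^2} \right) = \widetilde O\!\left( \left(\frac{n}{\epsilon}\right)^{4/5} + \frac{\sqrt{n}}{\epsilon^2 \eta^{3/2}} \right),
\]
and the hypothesis $\eta \geq \widetilde\Omega(n^{-1/5}\epsilon^{-4/5})$ is precisely the threshold at which the second term dominates. The main obstacle is the distance-preservation lemma: the $\Omega(\epsilon\eta)$ loss is tight, witnessed by an oscillating $\delta = \pm 2\epsilon/n$ for which only components of odd size retain any signal, so the proof must carefully track how the random cut pattern preserves an $\Omega(\eta)$ fraction of the signal against any adversarial $\delta$.
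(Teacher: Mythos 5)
There is a genuine gap, and in fact two: the reduction both fails to be faithful and, even granting it, yields the wrong bound.

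First, the reduction to the parity trace is not well-defined. The confused-collector algorithm sees a multiset of component \emph{representatives} $c_i \in V$, but it does not know the random subgraph $\bm H$. Consequently it cannot determine the parity of the component index for any observed representative: between two observed representatives there may be any number of unobserved (zero-hit) components, and those shift every subsequent parity. So ``number the components in cyclic order and read off the component parities'' is not an operation the algorithm can perform — you are implicitly assuming the tester knows $\bm H$, which it does not. Second, even if the simulation were faithful, the black-box application of \cref{thm:intro-main-informal} to the coarsened instance with $k = \Theta(\eta n)$ and $\epsilon' = \Theta(\epsilon\eta)$ gives $\widetilde O\left((n/\epsilon)^{4/5} + \sqrt n/(\epsilon^2\eta^{3/2})\right)$, and under the hypothesis $\eta \geq \widetilde\Omega(n^{-1/5}\epsilon^{-4/5})$ it is the \emph{first} term that dominates (you have the threshold direction reversed). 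The theorem claims only the second term, which can be much smaller: at $\eta = 1$, $\epsilon = \Theta(1)$ the statement is $\widetilde O(\sqrt n)$ but your reduction gives $\widetilde O(n^{4/5})$. The paper cannot and does not go through a reduction here; \cref{thm:confused-collector-main} is proved by a direct analysis of the collision statistic $\bm Y = \frac{1}{m}(\bm T^\top \bm\Phi \bm T - \|\bm T\|_1)$, using the spectral gap $\lambda_{\min}(\phi) = \Omega(\eta)$ of the Toeplitz/circulant matrix $\phi$ to bound the mean separation, and a tailored concentration bound (\cref{lemma:confused-collector-concentration}) for the random quadratic form; the confused collector is then the \emph{warm-up} whose analysis is extended to the parity trace, not the other way around. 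Finally, your distance-preservation step is stated only as a sketch, and it must control the \emph{edit} distance to a non-uniform reference (the pushforward $U'$ has entries $s_i/n$ with random $s_i$), so \cref{lemma:edit-to-tv-uniform} does not apply directly; this would need a new argument even if the other gaps were closed.
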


This interpolates between the optimal $\Theta(\sqrt n /\epsilon^2)$ bound for uniformity testing
with resolution $\eta = 1$, and $\widetilde O\left((n/\epsilon)^{4/5}\right)$ when $\eta$ is as
small as the theorem allows\footnote{Note that a lower bound on $\eta$ in the theorem is necessary.
For example, a sample of woodland fauna labeled by the authors would have resolution $\eta = 0$ and
no decisions could be made based on this, regardless of sample size.}.

\paragraph*{Techniques.} We describe our techniques for
\cref{thm:intro-main-informal,thm:intro-confused-collector-main} in more detail in
\cref{section:proof-overview}, but briefly mention the main idea here.
\cref{thm:intro-confused-collector-main} serves as a sort of warm-up to
\cref{thm:intro-main-informal}, which is proved by considering a similar problem on the
\emph{weighted} cycle, although the confused collector poses its own separate challenges in handling
the resolution parameter $\eta$.

Let $\bm T_i$ denote the multiplicity of element $i$ in the sample.  A standard simplification is to
assume that $\bm T_i$ is distributed as the independent Poisson $\Poi( m \cdot p(i) )$. The random
graph $\bm H$ introduces dependencies in the observed variables, and we let $\bm \Phi$ be the random
Boolean matrix describing these dependences, with $\bm \Phi_{i,j} = 1$ iff vertices $i,j$ belong to
the same connected component. Our proof boils down to an analysis of the random quadratic form $\bm
T^\top \bm \Phi \bm T$.  While concentration bounds for quadratic forms $\bm X^\top A \bm X$ have
been studied (including Hanson-Wright type inequalities, see \eg \cite{GSS21}), we are not aware of
bounds when the matrix $A$ is itself random, and inequalities of the type we require may be of
independent interest. See \cref{section:proof-overview} for more details.

\paragraph*{Related Work.} Similar models have been proposed independently in the recent literature
on machine learning and distribution testing, with similar motivations. See \cite{FKKT21} and
references therein for a discussion of the applied literature. \cite{FKKT21} propose a different
model from ours, where the partition of the domain is more general, but it is resampled
independently for each sample point\footnote{The reason for the difference is, briefly, that
\cite{FKKT21} assume sample points may be labeled by different entities with different
classifications, while we assume sample points are labeled by one entity with imperfect
classification.}, and they study questions of learning.  In distribution testing,
\cite{GR22,CFGMS22} study a model where the sample contains ``huge objects''\!\!, which themselves
support queries, so again the algorithm is not given the histogram, and must perform
queries on its sample.  \cite{CW21} studies systematic mislabelings that are guaranteed to be
permutations. In \cite{CW20}, the goal is to test if there exists a partition into intervals that
makes the input distribution $p$ equal to a reference distribution $q$.  Other models with imperfect
information about the samples include locally private testing \cite{GR18,She18,ACFT19} and inference
under information constraints \cite{ACT19,ACT20,ACFST21}.

\subsection{Distribution-Free Sample-Based Property Testing}
\label{section:intro-sample-based-testing}

\newcommand{\VC}{\mathsf{VC}}
We are interested in the fundamental \emph{testing vs.~learning} question of \cite{GGR98},
especially in the distribution-free sample-based property testing model corresponding to standard
PAC learning. This is essential for some proposed applications of property testing, like model
selection (\ie selecting an appropriate hypothesis class $\cH$ for learning) \cite{GGR98}.  Formal
connections between property testing and distribution testing, which we believe are essential for
understanding the testing vs.~learning question, were first articulated by \cite{GR16}, but their
results applied only to \emph{symmetric} properties of functions (\ie properties closed under
permutations on the domain).

As noted in \cite{GGR98,BFH21}, testing vs.~learning is essentially \emph{testing vs.~VC dimension},
since the sample size required for PAC learning a hypothesis class $\cH$ (ignoring the error
$\epsilon$) is $\Theta(\VC)$, where $\VC$ is the VC dimension of $\cH$.  Therefore, the goal is to
determine which classes $\cH$ can be tested using $o(\VC)$ samples.  For many important hypothesis
classes, including halfspaces over $\bR^n$, and $k$-alternating functions over $\bR$, \cite{BFH21}
show a lower bound of $\Omega\left(\frac{\VC}{\log \VC}\right)$ by defining the ``lower VC
dimension'' and using it to construct a reduction from \emph{support-size distinction} (see
\cite{RRSS09,VV11,WY19}), which is the problem of deciding whether a distribution on $[n]$ has
support size at most $\alpha n$ or at least $\beta n$. The bound is tight in some cases, due to
an $O\left(\frac{\VC}{\log \VC}\right)$ bound of \cite{GR16} for some symmetric properties, reducing
in the other direction to testing support size.

This leaves a gap between the sample size required for testing and learning many of the most
important hypothesis classes, like halfspaces. As in \cite{GR16}, we consider the gap between
$\Omega\left(\frac{\VC}{\log \VC}\right)$ and $O(\VC)$ to be significant; firstly because it leaves
open the question of whether testing can be done with sample size \emph{sublinear} in the sample
size required for PAC learning, and secondly because of the relationship to distribution testing,
especially support-size estimation, where this log factor is surprising and important
\cite{RRSS09,VV11,WY19}. Unfortunately, the technique of \cite{BFH21} cannot close this gap,
because, informally speaking, the tightness of the relationship between distribution testing and
property testing reaches its limit at the symmetric properties.

Our goal is to develop a stronger relationship between distribution testing and property testing
that surpasses this limitation. Distribution testing under the parity trace is a step towards this
goal. Consider the (non-symmetric) property of $k$-alternating functions, which are the functions
$\bR \to \zo$ which alternate between 0 and 1 at most $k$ times (equivalently, the class of unions
of $k$ intervals), studied in \cite{KR00,Nee14,BBBY12,BH18,CGG+19,BFH21}, for which the testing
vs.~learning question remains open. A first example of our technique is the following:

\begin{theorem}[Informal; see \cref{thm:testing-support-k}]
\label{thm:testing-support-k-informal}
Let $m_1(k,\epsilon)$ be the sample size required to test if a distribution has support size $k$, or
is $\epsilon$-far in \emph{edit distance} from having support size $k$, under the parity trace. Let
$m_2(k,\epsilon)$ be the sample size required to test if a function is $k$-alternating in the
distribution-free sample-based model. Then $m_1(k,\epsilon) = \Theta(m_2(k, \epsilon))$.
\end{theorem}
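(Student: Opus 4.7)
The plan is to establish the equivalence via sample-preserving reductions in both directions, centered on a canonical bijection between function-distribution pairs $(f, D)$ and distributions on $\bN$ accessed through the parity trace. For $f : \bR \to \zo$ and distribution $D$ on $\bR$, let $I_1, I_2, \dotsc$ be the maximal intervals on which $f$ is constant, ordered from left to right. I define $\pi_{f,D}$ on $\bN$ by $\pi_{f,D}(j) = D(I_j)$, padded with zero-weight entries if necessary so that the parity of the index $j$ agrees with $f$'s value on $I_j$. Under this mapping, the number of runs of $f$ equals the support size of $\pi_{f,D}$, so $f$ is $k$-alternating iff $\pi_{f,D}$ has support size matching $k$ in the appropriate convention. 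Moreover, a sorted sample $(x_{(1)}, f(x_{(1)})), \dotsc, (x_{(m)}, f(x_{(m)}))$ drawn from $(D,f)$, projected to its label sequence, has exactly the distribution of $\trace(\bm S)$ for $\bm S$ drawn from $m$ i.i.d.\ copies of $\pi_{f,D}$.

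Given this bijection, the forward direction (property testing $\le$ distribution testing) is essentially immediate: simulate a tester for support size $k$ under the parity trace by drawing $m_1$ samples from $(D,f)$ and feeding the sorted labels into the tester, with a small amount of bookkeeping to handle the choice of starting parity. For the reverse direction, given oracle access to the parity trace of an unknown $\pi$, I embed the support of $\pi$ into $\bR$ by placing element $j$ at a distinct point $x_j$ in increasing order with $D(x_j) = \pi(j)$ and $f(x_j)$ equal to the parity of $j$; a parity trace of size $m_2$ is converted into a sample-labeled input for the $k$-alternating tester by attaching fresh real coordinates in the correct sorted order.

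The main obstacle is the metric correspondence: showing that $f$ is $\epsilon$-far in distribution-free distance from being $k$-alternating if and only if $\pi_{f,D}$ is $\Theta(\epsilon)$-far in edit distance from distributions of the corresponding support size. In one direction, if $g$ is a $k$-alternating witness with $\Pr_{x \sim D}[f(x) \neq g(x)] \leq \epsilon$, then $\pi_{g,D}$ has the required support size, and $\pi_{f,D}$ can be transformed into $\pi_{g,D}$ by first applying free edit operations (splitting runs and inserting zero-weight padding) to align the supports, then applying expensive mass-transfer operations $b^p c^q \to b^{p-\delta} c^{q+\delta}$, one for each contiguous block of $\{f \neq g\}$, with total expensive cost equal to $\Pr_{x \sim D}[f \neq g] \leq \epsilon$. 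Conversely, each expensive step in a cost-$\epsilon$ edit sequence from $\pi_{f,D}$ to a support-$k$ distribution can be realized as a pointwise modification of $f$ on a subinterval of $D$-mass equal to the transfer amount, yielding a $k$-alternating $g$ with $\Pr[f \neq g] \leq \epsilon$.

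The hard part will be executing this correspondence rigorously: one must verify that an arbitrary edit sequence can be normalized into an alternation of free alignment steps and local mass transfers whose costs are additive, and that the resulting sequence of function modifications never accidentally re-introduces alternations. I expect this to be handled by an inductive argument that processes edits in sorted order along the domain, tracking a ``current'' function $g$ that is modified only on intervals corresponding to the mass transferred so far. The remaining details (the off-by-one in the alternation count, the choice of starting parity, the treatment of measure-zero boundaries) are bookkeeping that does not affect sample complexity up to constants.
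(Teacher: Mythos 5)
The overall framing — the bijection via density sequences, the easy direction by dropping positions, and the metric correspondence between $\dist_\TV$ on labeled distributions and $\dist_\edit$ on distributions — matches the structure of the paper's argument. But there is a genuine gap in your ``reverse direction,'' which is exactly where the paper's key lemma lives.

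You propose: given the parity trace of a sample from $\pi$, ``embed the support of $\pi$ into $\bR$'' to form $(f,D)$, then convert the observed label sequence into a labeled sample by ``attaching fresh real coordinates in the correct sorted order,'' and feed it to the $k$-alternating tester. This does not faithfully simulate a sample from $(f,D)$ — or from \emph{any} labeled distribution the tester is guaranteed to behave correctly on. From the trace alone you do not know which atoms of $D$ were hit or with what multiplicities; if you place the sample at fresh distinct coordinates, you produce a sample that (a) has no collisions, whereas a real sample from a finitely-supported $D$ generally does, and (b) has positions chosen by you rather than by the sampling process. A sample-based tester in the distribution-free model is entitled to inspect those positions (gaps, repeats, absolute values), and a correct tester could behave arbitrarily on your synthetic sample because it is never obligated to be correct on inputs that are not honest i.i.d.\ samples from some $(g,\cE)$. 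Concretely, nothing forbids a correct $k$-alternating tester that first checks whether the sample looks like it came from a continuous distribution versus a discrete one and branches accordingly; your reduction feeds it the wrong branch.

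The paper closes this gap non-constructively with Ramsey theory (\cref{lemma:density-property-ramsey}). For each size-$m$ subset $S \subset \bN$, the tester's restriction $A_S : \zo^m \to \zo$ is one of finitely many decision functions; the infinite Ramsey theorem yields an infinite $N \subseteq \bN$ on which all $A_S$ coincide with a single $\Delta : \zo^m \to \zo$. One then replaces the density property $\Xi$ by a witness $(f',\cD')$ with $\pi_{f',\cD'} = \pi_{f,\cD}$ but with $\cD'$ spread over $N$ with atoms of mass at most $1/C$, so that all $m$ sample points are distinct with probability $1-\delta$ and the tester's decision reduces to $\Delta(\trace^*(S_{f'}))$. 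This gives a tester that \emph{provably} depends only on the trace, which is what makes the parity-trace simulation valid. Without some argument of this kind, the $m_1 = O(m_2)$ direction does not go through. The remaining pieces of your plan (the off-by-one in the alternation/support-size count, and the ``hard part'' that $\dist_\TV((f,\cD),\Xi) \leq \dist_\edit((f,\cD),\Xi)$ for the support-size property, which the paper handles in \cref{lemma:edit-tv-distances-to-support-size-k}) are correctly identified as nontrivial and are addressed in the paper, but the Ramsey step is the missing idea.
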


This is the first tight relationship between distribution testing and property testing for a
\emph{non-symmetric} property, and it is only a special case of a more general equivalence between
distribution testing and testing \emph{density properties}, explained below, which is required for
our results in the trace reconstruction model. The appearance of the edit distance highlights its
importance for applications of the parity trace. The authors disagree on what the correct sample
size $m_1(k,\epsilon)$ in this theorem should be\footnote{In fact, this doesn't seem to be known
even in the \emph{standard} model: the best upper bounds we could find are
$O\left(\tfrac{n}{\epsilon^2 \log n}\right)$ and $O(n/\epsilon)$, compared to a lower bound of
$\Omega\left(\tfrac{n}{\epsilon \log n}\right)$.}, which we will study in future work; the current
paper focuses on the simpler problem of testing uniformity, which is already significantly more
challenging to analyze in the parity trace model than the standard model. But, even without knowing
$m_1(k,\epsilon)$, we use \cref{thm:testing-support-k-informal} to recover many of the bounds of
\cite{BFH21} using a different proof that has stronger consequences for the testing vs.~learning
question.  We state the bounds for $k$-alternating functions and halfspaces, but we also recover the
bounds for intersections of halfspaces, and decision trees\footnote{Our statement includes a
dependence on $\epsilon$, which \cite{BFH21} does not. Ours follows from bounds on the distribution
testing problem, but the $\epsilon$ dependence can be appended to the \cite{BFH21} results in a
standard way (as in \cite{ES20}).}. (See \cite{MORS10,BBBY12,Har19,CP22} for other prior work on
testing halfspaces.)

\begin{theorem}[See \cite{BFH21}]
\label{thm:bfh-simple}
Distribution-free sample-based testing
$k$-alternating functions on domain $\bR$ requires $\Omega(\tfrac{k}{\epsilon \log k})$
samples, and testing halfspaces on domain $\bR^n$ requires $\Omega( \tfrac{n}{\epsilon \log n})$
samples.
\end{theorem}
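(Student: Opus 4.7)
The plan is to derive both lower bounds from the equivalence in Theorem~\ref{thm:testing-support-k-informal}, together with a standard support-size distinction lower bound transferred through the parity trace. Writing $m_1(k,\epsilon)$ for the sample complexity of testing support size $k$ under the parity trace with respect to the edit distance, Theorem~\ref{thm:testing-support-k-informal} reduces the $k$-alternating lower bound to showing $m_1(k,\epsilon) \geq \Omega(k/(\epsilon \log k))$. I would prove this by reducing from the classical support-size distinction problem of \cite{VV11,WY19}, and then extend the bound to halfspaces via a moment-curve embedding.

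For the support-size reduction, I would embed a distribution $p$ on $[k']$ (for an appropriate $k' = \Theta(k/\epsilon)$) into $\bN$ by placing the $i$-th atom at position $i$, so consecutive atoms carry alternating parities. Under this embedding, the sorted sample groups equal atoms together, and the parity trace records, in order, a block of $1$s and a block of $0$s for each atom whose lengths are exactly the standard sample multiplicities; thus the parity trace recovers the full histogram. A YES instance with support size at most $k$ maps to an embedded distribution of the same support size; a NO instance that is $\epsilon$-far in TV distance from every support-$k$ distribution maps to one that is $\Omega(\epsilon)$-far in \emph{edit} distance, because the alternating-parity layout forces every mass-transport operation to cross a parity boundary. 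Choosing the support-size distinction parameters appropriately transfers the $\Omega(k/(\epsilon \log k))$ lower bound into $m_1$.

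For halfspaces, I would use the moment-curve embedding $\phi(t) = (t, t^2, \ldots, t^n) \in \bR^n$. Any halfspace $h(x) = \sign(\inn{w, x} - b)$ composed with $\phi$ is the sign of a univariate polynomial of degree at most $n$, which has at most $n$ sign changes, so $h \circ \phi$ is $n$-alternating along the image of the curve; conversely, any $n$-alternating labeling of $n+1$ fixed points on $\phi$ can be realized by a degree-$n$ polynomial, hence by a halfspace. Any tester for halfspaces on $\bR^n$ therefore yields a tester for $n$-alternating functions on $\bR$ by pushing the input distribution forward through $\phi$, so the first part gives the $\Omega(n/(\epsilon \log n))$ bound with $k = n$.

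The step I expect to be the main obstacle is the edit-distance accounting in the reduction: unlike TV distance, the free operations on fractional strings permit certain rearrangements (in particular splitting blocks of equal parity), so I must confirm that placing atoms at positions of alternating parity really forces the edit distance between the embedded distributions to be $\Omega(\TV(p,p'))$. This amounts to tracking the cost of the ``expensive'' operations $b^p c^q \to b^{p-\delta} c^{q+\delta}$ when $b \neq c$ and checking that the block-alternating structure of the fractional string is preserved by all the cheap operations, so that a cheap transformation between the two embedded distributions would induce a small-TV-cost transformation between the original distributions on $[k']$, contradicting the support-size gap.
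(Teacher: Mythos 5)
Your overall route is the paper's route: use the equivalence in \cref{thm:testing-support-k-informal}, transfer the support-size distinction lower bound of \cite{VV11,WY19} into the parity-trace/edit-distance setting, and then lift to halfspaces by the moment curve. The moment-curve part is essentially identical to \cref{lemma:support-size-to-halfspace} in the paper, and is fine. But the embedding you propose for the support-size transfer is wrong, and the step you flag as "the main obstacle" is in fact fatal with your embedding.

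Concretely, placing the $i$-th atom at position $i$ does \emph{not} force edit distance to track TV distance. Counterexample: let $p$ put mass $1/2$ on position $1$ and mass $\frac{1}{2(n-1)}$ on each of positions $3,5,\dots,2n-1$ (all odd). Then $\dist_\TV(p,\Pi_1) = 1/4$, but the embedded distribution $\pi$ has fractional string $1^{1/2}\,0^{0}\,1^{1/(2(n-1))}\,0^{0}\cdots$, which reduces by free deletes and rearranges to $1^{1}$, the fractional string of a point mass, so $\dist_\edit(\pi,\Pi_1)=0$. Your intuition that ``the alternating-parity layout forces every mass-transport operation to cross a parity boundary'' fails precisely because a hard instance may place all its mass on positions of a single parity, after which the free operations collapse the whole string. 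The paper avoids this with a doubling trick (\cref{lemma:distr-free-reduction-edit}): each atom $p(i)$ is split evenly between positions $2i-1$ and $2i$, so every original atom corresponds to an adjacent $1$-block \emph{and} $0$-block of positive length. Then deleting or merging away an atom necessarily zeros out one of these blocks at edit cost $\Omega(p(i))$, which is what gives $\dist_\edit(\pi',\Pi_{2k}) \geq \frac14\,\dist_\TV(\pi,\Pi_k)$. With that replacement for your embedding, and using the testing (as opposed to estimation) form of the support-size lower bound to introduce the $1/\epsilon$ dependence (not a larger domain $k'=\Theta(k/\epsilon)$), the rest of your argument goes through and matches the paper's.
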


Unlike the technique of \cite{BFH21}, it is possible that our technique can lead to better answers
for testing vs.~learning for $k$-alternating functions, halfspaces, and others.  Better lower bounds
on distribution testing under the parity trace would imply better lower bounds for $k$-alternating
functions, halfspaces, intersections of halfspaces, and decision trees. On the other hand, an
$o(\VC)$ \emph{upper bound} on (say) testing halfspaces, would imply an analog of the surprising
$o(n)$ bounds of \cite{VV11,WY19} for distinguishing support size $\leq \alpha n$ from $\geq n$,
which would hold \emph{even under the parity trace}, where the tester does not know the identities
of the sample points.

To clarify the connection between distribution testing and distribution-free sample-based property
testing, we expand our view of distribution-free property testing to allow properties of
\emph{labeled distributions}. A labeled distribution on is a pair $(f, \cD)$ of a function $f$ and a
distribution $\cD$.  The idea is that one may wish to test not only a property of a function $f$,
but a joint property of the function $f$ and probability distribution $\cD$. (We also point the
reader to a different interesting type of joint function-distribution testing in \cite{RV23,GKK23}.)

For example, we may wish to test not only whether $f$ is $k$-alternating, but that it also evenly
partitions $\cD$ into uniform monochromatic intervals. We call these the \emph{uniformly
$k$-alternating} functions, and we get a tight result for testing uniformly $2k$-alternating
functions (assuming the input is promised to be $2k$ alternating). This adds to the short list of
positive results in distribution-free sample-based testing \cite{GR16,BFH21,RR20,RR21}.

\begin{theorem}[Informal; see \cref{thm:intro-main-testing}]
\label{thm:intro-main-testing-informal}
Let $f$ be $2k$-alternating. Then testing if it is \emph{uniformly} $2k$-alternating with respect to
the TV distance requires $\widetilde \Theta((k/\epsilon)^{4/5} +\sqrt{k}/\epsilon^2)$ samples.
\end{theorem}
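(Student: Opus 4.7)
I would prove \cref{thm:intro-main-testing-informal} via a tight reduction to testing uniformity on $[2k]$ under the parity trace (\cref{thm:intro-main-informal}). Given $(f,\cD)$ with $f$ promised to be $2k$-alternating, let $I_1,\ldots,I_{2k}\subset\bR$ be the intervals of $f$ in sorted order, and associate to $(f,\cD)$ the distribution $\pi$ on $[2k]$ defined by $\pi(i)\define\cD(I_i)$. Clearly $(f,\cD)$ is uniformly $2k$-alternating iff $\pi$ is uniform on $[2k]$.

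\textbf{Upper bound.} A sample-based tester receives i.i.d.\ pairs $(X_1,f(X_1)),\ldots,(X_m,f(X_m))$ with $X_j\sim\cD$. Because uniform $2k$-alternation is invariant under monotone reparametrizations of $\bR$, I may assume without loss of generality that the tester uses only the sorted sequence of labels. Writing $X_j\in I_{K_j}$ with $K_j\sim\pi$ independently, and noting that $f(X_j)=K_j\bmod 2$ depends only on $K_j$ and is constant on each $I_i$, sorting the $X_j$'s by value and reading off labels produces exactly the parity trace of an i.i.d.\ sample of size $m$ from $\pi$. Hence any parity-trace uniformity tester is simulated in the labeled sample-based model, and \cref{thm:intro-main-informal} delivers the $\widetilde O((k/\epsilon)^{4/5}+\sqrt k/\epsilon^2)$ upper bound.

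\textbf{Lower bound.} Given the hard instances $\pi$ from \cref{thm:intro-main-informal}, fix canonical intervals $I_1,\ldots,I_{2k}\subset\bR$, define $f|_{I_i}\equiv i\bmod 2$, and let $\cD$ place mass $\pi(i)$ uniformly on $I_i$. Samples from $(f,\cD)$ then generate the same parity trace as i.i.d.\ samples from $\pi$, so any labeled tester yields a parity-trace tester, giving the matching $\widetilde\Omega((k/\epsilon)^{4/5}+\sqrt k/\epsilon^2)$ lower bound.

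\textbf{Distance calibration (the main obstacle).} For the reduction to be tight I need that $(f,\cD)$ is $\epsilon$-far in labeled TV from uniformly $2k$-alternating iff $\pi$ is $\Theta(\epsilon)$-far from uniform in edit distance (up to losses absorbed in $\widetilde\Theta$). The easy direction keeps $f$'s intervals fixed and rebalances $\cD$ to mass $1/(2k)$ on each interval, incurring labeled TV equal to $\tfrac12\|\pi-\unif\|_1$, which is at most the edit distance. The harder direction is to show that no competitor $(f',\cD')$ with different interval boundaries can be substantially closer: shifting a boundary by a chunk of $\delta$ of $\cD$-mass flips the label on that chunk and incurs labeled TV exactly $\delta$, which is precisely the cost charged to the expensive edit operation $b^p c^q\to b^{p-\delta}c^{q+\delta}$ that moves $\delta$ between adjacent positions of opposite parity in $\pi$'s fractional-string representation. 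Composing such boundary shifts and intra-interval redistributions then simulates the optimal labeled alignment by an edit-distance transformation of comparable cost. Making this two-sided inequality clean, in particular ruling out cheaper alignments that simultaneously reassign labels on many chunks while rebalancing mass, is the most delicate step.
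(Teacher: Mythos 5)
Your upper bound reduction (sort by position, read labels, simulate the parity-trace tester, plus the rebalancing argument) is correct and matches the paper's approach. But the lower bound argument has a genuine gap, and I think the distance-calibration discussion misidentifies which inequality is the hard one.

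\textbf{The lower-bound gap.} You fix canonical intervals $I_1,\dots,I_{2k}\subset\bR$ and place $\pi(i)$ mass uniformly on $I_i$. But then a labeled tester that knows the canonical intervals can compute $K_j$ exactly from $x_j$ (e.g.\ with $I_i = (i-1,i]$, just take $\lceil x_j\rceil$), recover the full histogram of $\pi$, and run a standard uniformity tester with $O(\sqrt k/\epsilon^2)$ samples --- beating the claimed $\widetilde\Omega((k/\epsilon)^{4/5})$ lower bound whenever $\epsilon \gg k^{-1/4}$. So ``any labeled tester yields a parity-trace tester'' is false for your instances: positions genuinely help when the intervals are fixed and known. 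The paper's way around this is the infinite Ramsey argument (\cref{lemma:density-property-ramsey}): since the density property is invariant under monotone relocation of the alternation points, one colors $m$-subsets of $\bN$ by the tester's decision function and extracts an infinite set on which the tester's behaviour depends only on the label string. The adversary then places intervals and mass inside that set, nullifying position information. This non-constructive step is the crux of the $(\LD) \to (\PT)$ direction and cannot be replaced by a concrete ``canonical'' embedding.

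\textbf{Distance calibration.} Your ``hard direction'' --- arguing every competitor alignment $(f',\cD')$ corresponds to an edit transformation of comparable cost --- is actually the \emph{trivial} direction: $\dist_\TV((f,\cD),\Xi_1)\ge\dist_\edit((f,\cD),\Xi_1)=\dist_\edit(\pi,\unif)$ holds by definition of edit distance as an infimum of TV over re-embeddings. The non-trivial step is your parenthetical ``which is at most the edit distance'': the claim $\dist_\TV(\pi,\unif)\le \frac{1}{c}\dist_\edit(\pi,\unif)$ when $\pi$ is supported in $[2k]$ is the paper's \cref{lemma:edit-to-tv-uniform}, which is a real lemma requiring a case analysis and is what ultimately lets the TV lower bound from \cref{thm: lower bound} be transported to the edit metric and thence to the labeled model.

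In short: you have the right target reduction and a correct upper bound, but the lower bound needs the Ramsey step in place of canonical intervals, and the step you think is delicate ($\dist_\TV\ge\dist_\edit$) is free while the one you wave past ($\dist_\edit\gtrsim\dist_\TV$ against the uniform reference) is where the work is.
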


(For the more challenging non-promise version of this problem, we get a bound of
$O(\tfrac{k}{\epsilon} + \tfrac{k}{\epsilon^2 \log k})$ by defining a suitable
``testing-by-learning'' reduction for labeled distributions and using the tolerant uniformity tester
of \cite{VV17}; see \cref{thm:uniform-k-alternating-learning-bound}). The proof of
\cref{thm:intro-main-testing-informal,thm:testing-support-k-informal} use an equivalence to
distribution testing under the parity trace that holds in general for a natural class of labeled
distributions that we call \emph{density properties}.

\paragraph*{Density properties.} Informally\footnote{For simplicity, this discussion ignores the
possibility of infinitely many alternation points.}, every Boolean function $f : \bR \to \zo$ has a
unique set of \emph{alternation points} in $\bR$ where it changes value from 0 to 1 or vice-versa. A
\emph{density property} is a set of labeled distributions where membership of $(f, \cD)$ is
determined by its \emph{density sequence}: the sequence of probability masses $\cD(a,b]$ where $a,b$
are consecutive alternation points of $f$.

$k$-Alternating and uniformly $k$-alternating functions are both definable as density properties,
but there are many other interesting examples. The difficulty in testing density properties is that
the tester does not know \emph{which} interval of alternation points a sample belongs to.  Given two
sample points $x, y \in \bR$, the tester does not know if $x,y$ belong to the same interval or
different intervals, unless $f(x) \neq f(y)$, or $f(x) = f(y)$ and there is another sample point $z$
between $x$ and $y$ with $f(z) \neq f(x)$. Prior work has used \emph{queries} to overcome this
difficulty \cite{CGG+19}, but this is not possible in the sample-based model. 

Distribution testing under the parity trace captures this difficulty: testing density properties is
essentially equivalent to testing distributions under the parity trace. For any density property
$\Xi$, let $\Pi(\Xi)$ be the set of density sequences (\ie probability distributions) that define
$\Xi$. Using Ramsey theory (inspired by \cite{Fis04, DKN15a}, see also \cite{CW20}), we prove:

\begin{lemma}[Informal; see \cref{lemma:labeled-to-linear-trace-reductions}]
\label{lemma:labeled-tester-to-parity-tester-informal}
Testing $\Xi$ in the labeled-distribution model, with respect to an appropriate analogue of
\emph{edit distance}, is equivalent to testing $\Pi(\Xi)$ under the parity trace with respect to the
edit distance.
\end{lemma}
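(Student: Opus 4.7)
\emph{Proof plan.} I would prove the equivalence by constructing sample-preserving reductions in both directions. Let $(f, \cD)$ be a labeled distribution on $\bR$, let $a_0 < a_1 < \dotsm$ be its alternation points, and let $\pi = \pi(f, \cD)$ be the density sequence with $\pi_i = \cD(a_{i-1}, a_i]$. The driving observation is that if we sort an i.i.d.\ sample $(x_1, f(x_1)), \dotsc, (x_m, f(x_m))$ from $(f, \cD)$ by the $x_i$'s, then the sequence of labels we read off is distributed exactly like the parity trace of an i.i.d.\ sample of size $m$ from $\pi$, because $f$ is constant on each interval and its value there matches the parity of the interval index.

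Forward direction (labeled tester from parity-trace tester). Given $T_{\mathsf{parity}}$ for $\Pi(\Xi)$, I define $T_{\mathsf{label}}$ as follows: on input a labeled sample, sort by the first coordinate, extract the sequence of labels, and run $T_{\mathsf{parity}}$ on it. Correctness uses the observation above, together with the fact that the ``appropriate analogue of edit distance'' must be defined so that $\dist_{\edit}((f, \cD), \Xi) = \dist_{\edit}(\pi, \Pi(\Xi))$; I would verify this compatibility as a preparatory lemma that precisely pins down the labeled-distribution edit distance as the one induced by pulling back from density sequences. The sample complexity is preserved exactly.

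Reverse direction (parity-trace tester from labeled tester). Given $T_{\mathsf{label}}$ for $\Xi$ and a parity trace $b_1 b_2 \dotsm b_m$ of a sample from $\pi$, I want to feed $T_{\mathsf{label}}$ labeled samples from some $(f^\star, \cD^\star)$ with density sequence $\pi$. The natural candidate is $\cD^\star(i) = \pi_i$ on an infinite ordered ground set, with $f^\star(i) = i \bmod 2$. The obstacle is that the trace reveals only the sorted labels, not the underlying positions $x_i$, so I cannot directly simulate a labeled sample point-by-point. I would resolve this via a canonicalization argument in the style of \cite{Fis04, DKN15a}: show that $T_{\mathsf{label}}$ may be replaced, at no cost in sample complexity or correctness, by a tester that depends only on the sorted label sequence, so that the trace already supplies the entire input.

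Main obstacle (the Ramsey canonicalization). The canonicalization step is the key technical hurdle. I would colour each ordered $m$-tuple of positions by the full randomized decision $T_{\mathsf{label}}$ produces, jointly over every labeling of that tuple, and invoke Ramsey's theorem for $m$-tuples to extract an arbitrarily large monochromatic subset of positions inside the support of any sufficiently rich ambient distribution. Restricting the embedding of $(f^\star, \cD^\star)$ to this monochromatic subset, $T_{\mathsf{label}}$ becomes invariant under any order-preserving embedding of positions, hence depends only on the sorted label sequence. Since density properties and the labeled edit distance are themselves invariant under order-preserving embeddings, this restriction preserves both membership in $\Xi$ and $\epsilon$-farness, so the canonical tester is still valid. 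Applying it to the parity trace $b_1 \dotsm b_m$ then exactly simulates $T_{\mathsf{label}}$ on an i.i.d.\ sample from a labeled distribution with density sequence $\pi$, completing the reverse reduction and establishing the claimed equivalence.
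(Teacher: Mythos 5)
Your plan tracks the paper's proof of \cref{lemma:labeled-to-linear-trace-reductions} quite closely: you correctly identify the easy direction (a labeled tester can sort by position and hand the label string to a parity-trace tester, since the two are identically distributed), you correctly identify the Ramsey canonicalization as the engine of the hard direction, and your colouring---finite tuples of positions coloured by the induced decision function---is the same colouring the paper uses. So the shape of the argument is right.

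The concrete gap is your claim that canonicalization happens ``at no cost in sample complexity or correctness.'' It cannot be cost-free, and the reason is precisely the one you do not address: Ramsey's theorem is applied to $m$-element \emph{sets} of positions, while an i.i.d.\ sample of size $m$ is a \emph{multiset}. If two sample points coincide, the monochromatic set gives you no handle on the tester's behaviour, because that sample is not an $m$-set. This forces a loss in success probability, and the paper's proof spends most of its effort on exactly this point: it re-embeds the labeled distribution into the monochromatic infinite set $N$, spreading the mass of each alternation interval over at least $C$ fresh points of $N$ so that every point carries mass at most $1/C$, and then a union bound shows the probability of a repeated sample point is at most $m^2/C \le \delta$. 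The canonical tester is only guaranteed to match the original on samples that are genuine sets, so its success probability drops from $\alpha$ to $\alpha - \delta$. That $\delta$ is why the formal statement (\cref{lemma:labeled-to-linear-trace-reductions}) gives an ``up to arbitrary $\delta > 0$'' equivalence rather than an exact one. Your ``natural candidate'' $\cD^\star(i) = \pi_i$ also cannot be used directly for the re-embedding: it has support of size $|\supp(\pi)|$, whereas you need the support to sit inside an arbitrary infinite monochromatic set with arbitrarily small per-point mass, which is exactly what the thinning construction achieves. A smaller wrinkle worth flagging: ``colour each $m$-tuple by the full randomized decision'' is not a finite colouring, so to invoke Ramsey you either first fix the tester's randomness or discretize acceptance probabilities into finitely many buckets, folding that error into the same $\delta$.
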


\paragraph*{Techniques.}
The key contributions here are the definitions of edit distance and the parity trace, which allow
sample-based property testing to be related to distribution testing using an application of Ramsey
theory in \cref{lemma:labeled-tester-to-parity-tester-informal}. The main results in this section
(including the recovery of the results of \cite{BFH21}) then follow by reductions that mainly rely
on properties of the edit distance.

\subsection{Property Testing for Trace Reconstruction}
\label{section:intro-trace-reconstruction}
\newcommand{\del}{\mathsf{del}}

We now turn to property testing for trace reconstruction, which had interested us separately from
the other problems in this paper, and the formal connection we present here was unexpected. In the
trace reconstruction problem (with the deletion channel), there is a string $x \in \zo^N$ and a
deletion rate $\delta \in (0,1)$.  A \emph{trace} $\bm T$ of $x$ is obtained by deleting each
character of $x$ independently with probability $\delta$ and taking the resulting substring. For
example, a trace of $x = 110011001100$ might look like $11110000$ or $101010$. The goal is to
reconstruct $x$ using as few independent traces as possible (see references above).

We are interested in making decisions about $x$ \emph{without} completely reconstructing it, so we
propose property testing in the trace reconstruction model. For a property $\Psi$ of strings
$\zo^N$, the algorithm should accept (with probability $2/3$) strings $x \in \Psi$, and reject
(with probability $2/3$) strings that are \emph{far} from $\Psi$. A natural choice of metric is
the (relative\footnote{The relative edit distance between two strings of length $N$ is $\frac{1}{N}$
times the edit distance.}) edit distance on strings, which is the standard choice for
\emph{approximate} trace reconstruction \cite{CDL+22}. The edit distance on strings is closely
related to our notion of edit distance on probability distributions.

To measure the complexity of a trace tester, we consider both the \emph{number of traces}, and the
\emph{expected size} of each trace.  Trace reconstruction is usually studied with \emph{constant}
deletion rate $\delta$, corresponding to traces of expected size $\Theta(N)$.  For testing, we hope
to permit extremely high deletion rates, so that traces have expected size $\rho N = o(N)$ where
$\rho = 1-\delta$ is the \emph{retention rate} (which puts our study in the same low-retention-rate
regime as the recent independent work \cite{CDL+23} on trace reconstruction). This is consistent
with conventional property testing problems, where the goal is to make a decision while seeing less
than a constant fraction of the input. We relate this problem to distribution testing under the
parity trace and labeled-distribution testing, and give results for three trace testing problems,
which appear to be the first results on this type of problem (the most similar problem in prior work
is distinguishing between two arbitrary strings $x, y$ that are close in Hamming- or edit distance
\cite{GSZ22}).

\paragraph*{Results.}

To initiate the study of property testing for trace reconstruction, we prove bounds on testing three
basic properties of strings.  For $n \in \bN$, we say that $x \in \zo^N$ is an \emph{$n$-block
string} if $x$ consists of at most $n$ consecutive \emph{blocks}, where a block is a (maximal)
all-1s string or all-0s string. The \emph{uniform} $n$-block strings are those consisting of $n$
blocks of equal length. We give results for:
\begin{enumerate}[itemsep=-2pt]
\item Testing if an $n$-block string is a \emph{uniform} $n$-block string;
\item Testing if an arbitrary string is a \emph{uniform} $n$-block string; and
\item Testing if an arbitrary string is an $n$-block string.
\end{enumerate}
These results use general two-way reductions between trace testing and distribution testing under
the parity trace
(\cref{lemma:trace-tester-from-parity-trace-tester,lemma:trace-testing-general-lower-bound}).  The
na\"ive application of our reduction gives an upper bound for testing with a \emph{single} trace,
which corresponds to the single-trace approximate reconstruction problem whose study was initiated
in concurrent and independent work \cite{CDL+23}. Our main application uses an additional trick to
apply the reduction, which gives a bound for an arbitrary number of traces.  Observe that even when
the number of blocks $n$ is large, \eg $n = \Theta(N)$, we can still test the property with a
\emph{single} trace of sublinear size (\ie deletion rate $\delta = 1 - o(1)$).

\begin{theorem}[Informal; see \cref{thm:trace-testing-uniform-k,thm:multiple-trace-lower-bound}]
\label{thm:intro-trace-testing-uniform}
Suppose $x \in \zo^N$ is promised to be an $n$-block string. For any $k$, trace testing
whether $x$ is a uniform $n$-block string, or $\epsilon$-far from a uniform $n$-block string in
relative edit distance, can be done with $k$ traces of expected size $\rho N = \widetilde
O\left(\frac{n^{4/5}}{k^{1/5}\epsilon^{4/5}} + \frac{\sqrt n}{\sqrt k \epsilon^2} \right)$.
Meanwhile, for large enough $N$, we must have $k \rho N =
\widetilde \Omega\left(\frac{n^{4/5}}{\epsilon^{4/5}} + \frac{\sqrt n}{\epsilon^2}\right)$.
\end{theorem}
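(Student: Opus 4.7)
The proof proceeds through a two-way reduction between trace testing of $n$-block strings and distribution testing under the parity trace, then invokes \cref{thm:intro-main-informal}. The central observation is that any $n$-block string $x \in \zo^N$ with block lengths $\ell_1, \dots, \ell_n$ induces a density distribution $p$ on $[n]$ with $p_i \define \ell_i / N$. A single random trace of $x$ with retention rate $\rho$ keeps each character independently with probability $\rho$, so by Poissonization its surviving positions deliver a $\Poi(\rho N)$-sized sequence of iid draws from $p$, sorted by position, whose characters are precisely the parities of their block indices (up to a global flip). In other words, a trace of $x$ is distributionally a parity trace from $p$, which is what \cref{lemma:trace-tester-from-parity-trace-tester,lemma:trace-testing-general-lower-bound} formalize as general reductions.

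For the upper bound, when $k=1$ one simply applies \cref{thm:intro-main-informal} to the parity trace obtained from the single trace, which requires trace length $\widetilde{O}((n/\epsilon)^{4/5} + \sqrt{n}/\epsilon^2)$. To handle arbitrary $k$, the trick is to \emph{concatenate} the $k$ independent traces of $x$ and view the result as a single parity trace from the stacked distribution $p^{(k)}$ on $[kn]$ defined by $p^{(k)}_{(i-1)n+j} \define p_j/k$. By Poissonization, when the trace sizes are independent $\Poi(\rho N)$, the concatenation has exactly the distribution of a single parity trace from $p^{(k)}$ of size $\Poi(k\rho N)$; the reason is that conditional on the total, trace sizes form a symmetric multinomial, and sorting $p^{(k)}$ globally agrees with sorting within each trace because everything from the $i$-th copy precedes everything from the $(i{+}1)$-st. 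Crucially, $p = \unif_{[n]}$ iff $p^{(k)} = \unif_{[kn]}$, and the relative edit distance from $x$ to the uniform $n$-block string equals, up to constants, $\|p - \unif_{[n]}\|_{\TV} = \|p^{(k)} - \unif_{[kn]}\|_{\TV}$, so completeness and soundness transfer. Invoking \cref{thm:intro-main-informal} on domain $[2kn]$ demands $k\rho N \geq \widetilde{\Omega}((kn/\epsilon)^{4/5} + \sqrt{kn}/\epsilon^2)$, which rearranges to the stated per-trace bound on $\rho N$.

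The lower bound direction is cleaner. Given a distribution $p$ on $[2n]$ to be tested under the parity trace, fix $N$ huge and construct $x \in \zo^N$ with $2n$ blocks of lengths as close to $N p_i$ as integer rounding allows; then $p = \unif_{[2n]}$ iff $x$ is the uniform $2n$-block string, and for $N$ large enough the rounding error in relative edit distance is negligible compared to $\epsilon$. Any $n$-block trace tester using $k$ traces of expected size $\rho N$ now simulates a parity-trace uniformity tester on $p$ consuming $\Poi(k\rho N)$ total samples, so \cref{thm:intro-main-informal} forces $k\rho N = \widetilde{\Omega}((n/\epsilon)^{4/5} + \sqrt{n}/\epsilon^2)$. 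The step I expect to demand the most care is the Poissonization in the upper bound: identifying the concatenation of $k$ independent parity traces with a single parity trace from $p^{(k)}$ depends delicately on the trace sizes being independent Poissons and on the stacked-domain order aligning with the trace order. Verifying this equality of distributions, together with the edit-distance-to-TV correspondence for the promise class, are the technical pieces that let both bounds reduce cleanly to \cref{thm:intro-main-informal}.
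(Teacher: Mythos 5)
Your overall strategy matches the paper's: the upper bound concatenates the $k$ traces and views the result as a single trace from the $k$-fold repetition of $x$ (the paper calls this $x^{\circ k}$, and your $p^{(k)}$ is exactly the paper's $\pi_x^k$), then applies the single-trace reduction and the uniformity tester; the lower bound constructs a string from a parity-trace hard instance by rounding. The edit-distance-preservation fact you cite corresponds to \cref{prop:concat-edit-dist}, which indeed goes through \cref{lemma:edit-to-tv-uniform}. A small technical slip in the upper bound: a trace of $x$ has size $\Bin(N,\rho)$, not $\Poi(\rho N)$, and the parity-trace identity holds only \emph{after} applying \textsc{Poissonize}, which produces a sample of size $\Poi\bigl(N\log\tfrac{1}{1-\rho}\bigr)$; the concatenation of $k$ Bernoulli traces is, on its own, a Bernoulli trace from $x^{\circ k}$, not yet a parity trace from $p^{(k)}$. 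This doesn't break the argument — it's handled inside the single-trace lemma — but the phrasing conflates two steps.

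The genuine gap is in the lower bound. You reduce a parity-trace tester to a $k$-trace tester, but you do not say how a parity-trace tester, given a \emph{single} parity trace of $\psi^{-1}(x)$ (where each position contributes a $\Poi(\lambda)$ multiplicity), manufactures $k$ \emph{independent-looking} traces of $x$ (where each position contributes a $\Ber(\rho)$ multiplicity, i.e., $0$ or $1$). Assigning each symbol of the parity trace to one of the $k$ slots uniformly gives each position $\Poi(\lambda/k)$ copies per slot, which can still exceed $1$. The paper's \cref{prop:trace-splitting} bounds the TV distance between $\Poi(\lambda/k)$ and $\Ber(\rho)$ and shows the aggregate error is small only when $\rho < \tfrac{1}{20\sqrt{kN}}$; this is precisely what forces the "for large enough $N$" hypothesis in the theorem, beyond what's needed for the rounding. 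Your proof proposal attributes the $N$ constraint solely to rounding error, but the Poisson-to-Bernoulli coupling is a separate and necessary piece, and without it the claimed simulation does not produce inputs with the right distribution.
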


We find it convenient to measure complexity using the expected size of each trace, but one may
rephrase our result in more conventional trace reconstruction language by saying that for fixed $k$,
if $\rho_k \cdot N$ is the bound on expected trace size, then for all retention rates $\rho \geq
\rho_k$, the number of traces required for testing is at most $k$. Increasing the number of traces
$k$ allows the tester to handle smaller retention rates, but the total number of observed bits $k
\rho N$ will increase. 

For the final two results, we do not have tight bounds for the corresponding distribution testing
problems under the parity trace, but we get non-trivial bounds that beat the coupon-collector
argument, almost ``for free'' from the theory we have developed.  For the labeled-distribution
testing model (\cref{section:intro-sample-based-testing}), we show that a ``testing-by-learning''
reduction holds, similar to the standard reduction of \cite{GGR98}, by defining a ``proper
learner-and-verifier pair'' that uses a distribution testing task instead of the ``verification
step'' of \cite{GGR98}. We then use \cref{lemma:labeled-tester-to-parity-tester-informal}, and the
relationship to trace testing, to get a general ``testing-by-learning'' technique for trace testing. 

\begin{theorem}[Informal; see \cref{thm:trace-testing-uniform-k-no-promise}]
\label{thm:trace-testing-uniform-k-no-promise-informal}
Testing whether $x \in \zo^N$ is a uniform $n$-block string, or $\epsilon$-far in relative edit
distance from the uniform $n$-block strings, can be done with a single trace of expected size $\rho
N = O\left(\frac{n}{\epsilon} + \frac{n}{\epsilon^2 \log n }\right)$.
\end{theorem}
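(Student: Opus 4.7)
The plan is to combine the trace-to-parity-trace reduction \cref{lemma:trace-tester-from-parity-trace-tester} with the labeled-distribution testing-by-learning upper bound \cref{thm:uniform-k-alternating-learning-bound} for uniformly $n$-alternating functions, which is foreshadowed in the parenthetical remark after \cref{thm:intro-main-testing-informal}. The target bound $O(n/\epsilon + n/(\epsilon^2 \log n))$ is exactly the sample complexity of that labeled-distribution tester, so the job is to transport that bound into the trace-testing world while preserving the metric.

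First I would recast the trace-testing problem as a parity-trace testing problem. A string $x \in \zo^N$ naturally defines a discrete probability distribution $\pi_x$ on $[n(x)]$ given by its normalized block lengths, where $n(x)$ is the number of blocks of $x$. A single trace under retention rate $\rho$ can be viewed (after Poissonization) as a parity-trace sample of expected size $\rho N$ from $\pi_x$, so \cref{lemma:trace-tester-from-parity-trace-tester} reduces trace testing of the class $\Psi_n$ of uniform $n$-block strings, with respect to relative edit distance, to testing under the parity trace the class $\Pi_n$ of distributions uniform on exactly $n$ atoms, with respect to the edit-distance pseudo-metric. This reduction turns a parity-trace tester of sample complexity $m$ into a single-trace tester of expected trace size $O(m)$, and preserves $\epsilon$ up to constants.

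Next I would push through the Ramsey-theoretic equivalence \cref{lemma:labeled-to-linear-trace-reductions}: parity-trace testing of $\Pi_n$ is, up to constants in $\epsilon$, the same as distribution-free sample-based testing, in the labeled-distribution model, of the density property of \emph{uniformly $n$-alternating} labeled distributions $(f,\cD)$, without a promise on $f$. Applying \cref{thm:uniform-k-alternating-learning-bound} with $k=n$ then gives a tester with sample complexity $O(n/\epsilon + n/(\epsilon^2 \log n))$: use the first $O(n/\epsilon)$ samples with a proper PAC learner for $n$-alternating functions (VC dimension $\Theta(n)$) to extract a candidate hypothesis $\tilde f$, and then apply the tolerant uniformity tester of Valiant and Valiant \cite{VV17} on the multinomial obtained by bucketing $O(n/(\epsilon^2 \log n))$ fresh samples into the $n$ intervals determined by $\tilde f$. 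Composing with \cref{lemma:trace-tester-from-parity-trace-tester} yields the claimed single-trace tester.

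The main obstacle is careful bookkeeping across the chain of reductions. One must verify that strings $\epsilon$-far from $\Psi_n$ in relative edit distance map to labeled distributions $\Omega(\epsilon)$-far from uniformly $n$-alternating in the labeled edit distance, and that uniform $n$-block strings remain accepted; this is exactly what the edit-distance definitions and the cited reductions are set up to guarantee. The verifier must genuinely be \emph{tolerant} rather than a vanilla uniformity tester, so that it absorbs the $O(\epsilon)$ TV error introduced by the learning stage: using \cite{VV17} here is precisely what enables the $1/\log n$ savings in the second term. A single trace is sufficient because we may split it into learning and verification halves at a cost of only a factor of $2$ in expected trace size, and Poissonization makes the two halves independent parity-trace samples from $\pi_x$.
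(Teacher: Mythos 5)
Your proposal follows essentially the same chain of reductions as the paper: reduce a single trace to a Poissonized parity-trace sample via \cref{prop:poissonize-trace} and \cref{lemma:trace-tester-from-parity-trace-tester}, pass through the Ramsey-theoretic equivalence \cref{lemma:labeled-to-linear-trace-reductions} to the labeled-distribution model, and apply the learner-verifier (testing-by-learning) bound of \cref{thm:uniform-k-alternating-learning-bound} with the tolerant uniformity tester of \cite{VV17a} as verifier. The only cosmetic difference is that the paper tests the singleton property $\{u^{(1)}\}$ and handles the 0-uniform orientation by repeating the tester with negated symbols, whereas you phrase the target as the two-element class of uniform $n$-atom distributions directly, but the substance is identical.
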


\begin{theorem}[Informal; see
\cref{thm:trace-testing-support-n,thm:trace-testing-support-n-lower-bound}.]
\label{thm:intro-trace-testing-support-n-informal}
Testing whether $x \in \zo^N$ is an $n$-block string, or $\epsilon$-far in relative edit distance
from all $n$-block strings, can be done with a single trace of expected size $\rho N =
O(n/\epsilon)$, while for large enough $N$, any trace tester using $k$ traces must satisfy $k \rho N =
\Omega(n / \log n)$.
\end{theorem}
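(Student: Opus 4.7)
The approach is to exploit the reductions between trace testing and distribution testing under the parity trace. Mapping an $n$-block string $x \in \zo^N$ with block lengths $\ell_1, \ldots, \ell_b$ to the distribution $\pi_x$ supported on $[b]$ with $\pi_x(i) = \ell_i / N$, the property ``$x$ is $n$-block'' becomes ``$\pi_x$ has at most $n$ non-zero blocks in its density sequence,'' and relative edit distance on strings matches edit distance on distributions. Via \cref{lemma:trace-tester-from-parity-trace-tester} and \cref{lemma:trace-testing-general-lower-bound}, both directions of the theorem reduce to the distribution testing question for the ``at most $n$ blocks'' property under the parity trace.

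For the upper bound, I would use the tester that takes a single parity-trace sample of expected size $m = O(n/\epsilon)$ and accepts iff the trace contains at most $n$ blocks. Completeness is immediate since subsampling cannot introduce new alternations. For soundness, the observation is that $\epsilon$-farness in edit distance forces the minimum cost of reducing $\pi$ to $\leq n$ blocks via interior block flips (each paying the block mass while dropping the block count by $2$) to be at least $\epsilon$; this implies that the $\lceil (b-n)/2 \rceil$ smallest non-adjacent interior blocks must sum to $\geq \epsilon$. A case split then shows that either strictly more than $n$ blocks have mass $\Omega(\epsilon/n)$ (each hit with constant probability), or aggregate mass $\Omega(\epsilon)$ is contributed by small blocks (among which a Chernoff argument yields $\Omega(n)$ hits). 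In either regime, the sorted hit set exhibits more than $n$ parity switches, so the trace contains more than $n$ blocks.

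For the lower bound, I would reduce from the support-size distinction problem \cite{VV11, WY19}, which requires $\Omega(n/\log n)$ samples in the standard distribution testing model. Given $p$ on $[n']$, encode it as $x_p \in \zo^N$ by partitioning $[N]$ into intervals of lengths $\lfloor p(i) N \rfloor$ with alternating parities; for $N$ sufficiently large, $x_p$ is $n$-block iff $p$ has support $\leq n$, and $\Omega(\epsilon)$-far in relative edit distance iff $p$ is $\epsilon$-far in TV. A single trace of $x_p$ with retention $\rho$ is distributed as the parity trace of a $\Bin(N, \rho)$-sized sample from $p$, so $k$ traces yield a parity-trace distribution tester using $k \rho N$ samples in expectation. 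Since parity traces can be simulated from standard iid samples by sorting and taking parities, the standard $\Omega(n/\log n)$ lower bound transfers, giving $k \rho N = \Omega(n/\log n)$.

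The hard part will be the soundness case analysis when $\pi$ has many small blocks, since no individual block is reliably hit. The naive claim that ``each hit yields an alternation in expectation'' is too crude, so the argument will likely proceed via a dyadic decomposition of block sizes combined with a parity-class concentration argument showing that the hit set cannot concentrate on a single parity class without contradicting edit-distance farness.
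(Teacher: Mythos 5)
Your proposal takes a genuinely different route from the paper, which proves both directions by first passing through the equivalence (\cref{thm:testing-support-k}) between testing support size under the parity trace and distribution-free sample-based testing of $k$-alternating functions: the upper bound inherits the $O(k/\epsilon)$ PAC-learning bound and transfers it via the Ramsey-theoretic \cref{lemma:density-property-ramsey}, producing a \emph{non-constructive} trace tester; the lower bound inherits the $\Omega(k/\log k)$ bound of \cite{BFH21}. Your proposed explicit tester (accept iff the trace has $\leq n$ blocks) and your direct reduction from support-size distinction would both be cleaner than the paper's detour, but each has a genuine gap.

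For the upper bound, you acknowledge the gap yourself: the soundness argument is a sketch, not a proof. The characterization of edit-distance farness via the $\lceil (b-n)/2 \rceil$ lightest non-adjacent interior blocks is correct up to constants, but showing that the hit set actually exhibits $>n$ parity runs requires simultaneously controlling how many small blocks are hit \emph{and} how the hits merge into runs, uniformly over block-mass profiles; the ``dyadic decomposition'' and ``parity-class concentration'' ideas are not carried out. For the lower bound, the step ``$\Omega(\epsilon)$-far in relative edit distance iff $p$ is $\epsilon$-far in TV'' is asserted as if obvious, but it is false as a general statement about distributions: the edit distance to the support-$\leq n$ property can be strictly smaller than the TV distance, which is exactly why the paper's \cref{lemma:distr-free-reduction-edit} uses a \emph{doubling trick} (embedding each element of $p$ as a pair of equal-mass adjacent coordinates of opposite parity) to obtain the general inequality. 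Your claim does hold for the specific roughly-uniform \cite{VV11,WY19} hard instances (all densities $\geq 1/n'$), via a block-counting argument showing any monotone matching to an $n$-block string must fully delete $\geq (K-n)/2$ of the $K$ blocks, but you do not carry out this verification. Finally, simulating $k$ independent traces from a parity-trace sample needs the splitting argument of \cref{prop:trace-splitting}; a single trace being approximately a $\Bin(N,\rho)$ parity-trace sample does not immediately give $k$ conditionally independent traces from one sample.
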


\cref{thm:trace-testing-uniform-k-no-promise-informal} uses the tolerant tester for uniformity from
\cite{VV17a} in the ``verification step'' of the testing-by-learning reduction, while
\cref{thm:intro-trace-testing-support-n-informal} uses the $O(k/\epsilon)$ upper bound for testing
$k$-alternating functions which follows from the VC dimension. We find these bounds somewhat
mysterious, because our testing-by-learning reduction for trace testing goes through the
non-constructive Ramsey theory argument of \cref{lemma:labeled-tester-to-parity-tester-informal} and
therefore the trace testers, which do not know the positions of the characters of the trace,
are obtained non-constructively from a labeled-distribution learner and verifier that strongly rely
on knowing the absolute positions of the sample points.

\subsection{Proof Overview}
\label{section:proof-overview}

We briefly describe our proofs for testing uniformity under the parity trace and with a confused
collector, \cref{thm:intro-main-informal,thm:intro-confused-collector-main}.

\paragraph*{Upper bounds.}
Let us review the standard uniformity tester \cite{GR00,DGPP19} (see also \cite{Can22}).  Let $p$ be
the input distribution over $[n]$. For a sample $S$ of size $m$, let $X_i$ be the multiplicity of
element $i$ in $S$. The tester counts the number of ``collisions'' in the sample: it
computes $Y \define \frac{1}{m(m-1)} \sum_{i=1}^n X_i (X_i - 1)$, and rejects if this is too
large. This works because $\Ex{\bm{Y}} = p^\top p = \|p \|_2^2$, which is large when $p$ is far from
uniform. Now we describe the tester for the confused collector.  For input distribution $p$ on
domain $\bZ_n$ (which are the vertices of the path or cycle), we use the standard simplification
that element $j$ occurs in the sample with multiplicity $\bm T_j \sim \Poi(m p_j)$ independently of
the other elements. Now redefine $\bm X_i$ as the number of sample points contained in the
$i^{th}$ connected component of $\bm H$, which the tester cannot distinguish: the $\bm X_i$
variables remain Poisson, but they are not independent.  The tester computes a ``collision count'',
as in the standard algorithm:
\[
  \bm Y \define \frac{1}{m} \sum_i \bm X_i (\bm X_i - 1)
    = \frac{1}{m}\left(\bm T^\top \bm \Phi \bm T  - \|\bm T\|_1 \right) \,,
\]
where $\bm \Phi$ is the random Boolean matrix with $\bm \Phi_{i,j} = 1$ iff $i,j$ belong to the same
connected component of $\bm H$. The expected value is $\Ex{\bm Y} = m \cdot p^\top \phi p$ where
$\phi = \Ex{\bm \Phi}$, and we show that this is again large when $p$ is far from uniform, using
spectral analysis of the matrix $\phi$ which is either Toeplitz (for paths) or circulant (for
cycles). To complete the analysis, we require a concentration inequality for the random quadratic
form $\bm T^\top \bm \Phi \bm T - \|\bm T\|_1$, which we obtain as long as $p$ is not too ``highly
concentrated'' in any interval (which the algorithm can test separately); see
\cref{lemma:confused-collector-concentration}:
\begin{equation}
\label{eq:intro-confused-collector-concentration}
  \Pr{ | \bm Y - \Ex{\bm Y} | \geq t } \leq \frac{\|p\|_2^2}{\eta t^2} \cdot \poly\log n \,.
\end{equation}
Extending the result to the parity trace is more challenging. On domain $[2n]$, we separate the
input distribution $\pi$ into the ``odd part'' $p$ and ``even part'' $q$, so 
$\pi = \pi(p,q) = (p_1, q_1, p_2, q_2, \dotsc, p_n, q_n)$.  The tester receives a trace of the form
\[
  \trace(S) = 1^{X_1} 0^{Z_1} 1^{X_2} 0^{Z_2} \dotsc 1^{X_t} 0^{Z_t} \,,
\]
where each $X_i, Z_i$ is the length of a consecutive ``run'' of 1s or 0s in the trace (\ie $X_i, Z_i
> 0$ except we may have $X_0=0$ or $Z_t=0$). By analogy to the standard tester, the natural thing to
try is to compute the number of ``collisions'' $\sum_{i=1}^n X_i(X_i-1) + \sum_{i=1}^n Z_i(Z_i-1)$
and pray that it works, which it does, more or less. Our tester considers the runs of 1s and 0s
separately: first, we think of $p$ as being a distribution over the vertices of a cycle, with $q$
giving weights to the edges. If $q$ was uniform, the analysis for the confused collector would now
apply, but it may not be.

To handle this, we define the \emph{uniform conjugate} of $q$ and denote it by $\tilde p$.
Informally, $\tilde p$ is the ``worst case'' instance of $p$ that makes every connected component
of $\bm H$ (sampled according to the weights determined by $q$) have the same expected mass $\tau$,
which would minimize $\Ex{ \bm Y }$. We essentially calculate a closed form solution for $\tilde p$
with $\tau = \frac{1-\|q\|_1}{\sum_{i=1}^n \tanh(mq_i/2)}$ by approximating the process of sampling
components of $\bm H$ with a Markov process (for which the use of a cycle instead of a path is
helpful). Then we write $p = \tilde p + z$ and, crucially, use the deviation $z$ from the uniform
conjugate to control both the mean and variance of $\bm Y$. We get an analog of equation
\eqref{eq:intro-confused-collector-concentration} that holds under some conditions on $p,q$ that the
algorithm can test separately; see \cref{lemma:concentration-parity-trace}:
\begin{equation}
\label{eq:intro-parity-trace-concentration}
  \Pr{ |\bm Y - \Ex{\bm Y}| \geq t } \leq \frac{\tfrac{1}{m} + z^\top \phi z}{t^2} \cdot \poly\log n
\,.
\end{equation}
The main condition that the algorithm must test separately is that $p$ is not too ``highly
concentrated'' relative to $q$, meaning that there is no interval where $p$ and $q$ are both
sufficiently large but $p$ is much larger than $q$. The algorithm repeats these tests with the roles
of $p$ and $q$ reversed.

\paragraph*{Lower bound.}
To get the lower bound in \cref{thm:intro-main-informal}, consider an adversary who flips a random
bit $\bm Z$ and gives the algorithm a input distribution sampled from ``meta-distribution''
$\cD_{\bm Z}$, where $\cD_0$ and $\cD_1$ are constructed out of \emph{dominoes} as follows.  A
domino is a 4-element piece $(p_i, q_i, p_{i+1}, q_{i+1})$ of a distribution $\pi(p,q) = (p_1, q_1,
p_2, q_2, \dotsc, p_n, q_n)$, so that $\pi(p,q)$ on domain $[2n]$ is made of $n/2$ dominoes. We use
the dominoes $(\tfrac{1}{2n}, \tfrac{1}{2n}, \tfrac{1}{2n}, \tfrac{1}{2n})$,
$(\tfrac{1-\epsilon}{2n}, \tfrac{1}{2n}, \tfrac{1+\epsilon}{2n}, \tfrac{1}{2n})$, and
$(\tfrac{1+\epsilon}{2n}, \tfrac{1}{2n}, \tfrac{1-\epsilon}{2n}, \tfrac{1}{2n})$. $\cD_0$ contains
only the uniform distribution ($n/2$ copies of the first domino), while $\cD_1$ is obtained by a
sequence of $n/2$ random choices from the last two dominoes.

We use an information-theoretic argument inspired by \cite{DK16}, to show that the algorithm
receives insufficient information about $\bm Z$ unless it receives $\widetilde
\Omega((n/\epsilon)^{4/5})$ samples. The tester gains no information about $\bm Z$ from any domino
receiving fewer than 3 sample points. We use the chain rule of information over small-enough
sequences of dominoes, and use bounds on the number of dominoes receiving 3 sample points to bound
the information from each small-enough sequence.

\subsection{Discussion \& Open Problems}
\label{section:intro-discussion}

The reader may notice three unfortunate negative qualities of this paper: The upper bounds have
$\widetilde O(\cdot)$ instead of $O(\cdot)$; the testing algorithms have more than 1 step; and the
number of pages seems excessive. We suspect that these three birds can be killed with one stone, if
one could prove tighter, unconditional concentration bounds on the quadratic forms $\bm T^\top \bm
\Phi \bm T$.

%Equivalence between distribution testing under the parity trace, and testing density properties, may
%hold even if the property tester for density properties can use non-adaptive (maybe even adaptive)
%queries. This may follow from a more careful application of Ramsey theory.

Regarding the testing vs.~learning question, the next step is to prove tight bounds on testing
support size under the parity trace, which would either give better lower bounds for $k$-alternating
functions (and therefore halfspaces and intersections of halfspaces) or possibly a surprising
$O(k / \log k)$ upper bound for $k$-alternating functions. We intend to study this in follow-up
work.

Our results for the confused collector were limited to paths and cycles, due to the connection to
the parity trace, but we suspect that a similar upper bound holds for trees, which we think would be
the next most natural problem in this model, given the ubiquity of tree-structured data.

Density properties are a natural class of properties where property testing is equivalent to
distribution testing under the parity trace. Adapting other distribution testing results, like
testing monotonicity (\cite{BKR04,CDGR18}), to the parity trace model, would imply new results in
distribution-free sample-based testing (for labeled-distributions), and the trace reconstruction
model.

We consider property testing in the trace reconstruction model to be one of the main conceptual
contributions of this paper. We have shown that testing properties of $n$-block strings is related
to distribution testing under the parity trace and testing density properties in the
labeled-distribution testing model. Other interesting properties to study would be
subsequence-freeness (with non-binary alphabet), which could possibly build on recent work
in sample-based testing \cite{RR21}, or testing regular languages, which are testable in the
standard query model (\eg \cite{AKNS01, BS21}) and which are already related to trace reconstruction
\cite{Chas21upper}.

\section{Preliminaries and Common Framework for Upper Bounds}
%\section{Parity Trace and Confused Collector: Definitions and Common Framework}
\label{sec:collision-based-testing-general}

In this section, we give the formal definitions for the parity trace and confused collector models
of distribution testing, and we introduce a common terminology and framework for analyzing our
algorithms in these models. The section is organized as follows: \cref{section:framework-notation}
introduces notation we use throughout the paper.
\cref{section:common-framework-confused-collector-def,section:parity-trace} define the confused
collector and parity trace models of distribution testing, respectively.
\cref{section:path-cycle-structured-rvs} introduces unifying vocabulary that views these two models
as outcomes from Poisson random variables on certain path- and cycle-structured domains. Then
\cref{section:shared-analysis} uses this vocabulary to establish generic results that will be
specialized into our upper bounds for the confused collector and parity trace models in the
subsequent sections.

\subsection{Notation}
\label{section:framework-notation}

In this paper, $\log x$ denotes the natural logarithm of $x$. $\bN$ denotes the set of positive
integers, \ie it does not include 0. For any $x$, we write $\bZ_{> x}$ for the set of integers
greater than $x$, and $\bZ_{< x}, \bZ_{\geq x}, \bZ_{\leq x}$ are defined similarly.
We denote random variables by boldface symbols, \eg $\bm{X}$.
We write $x = a \pm b$ as a shorthand for $a-b \le x \le a+b$. For an event $E$, $\ind{E}$ is the
indicator variable for $E$, which takes value 1 if and only if $E$ occurs.

For a distance metric $\dist(\cdot, \cdot)$ on a domain $\cX$, an element $y \in \cX$, and a set $X
\subseteq \cX$, we write
\[
  \dist( y, X ) \define \inf_{x \in X} \dist(y,x) \,.
\]
For a probability distribution $\cD$ over (countable) domain $\cX$ and any set $S \subseteq \cX$, we
write $\cD(S) = \sum_{x \in S} \cD(x)$.

Given a probability distribution $\pi$ and $m \in \bN$, we will write $\bm{S} \sim \samp(\pi, m)$
for the distribution over multisets $S$ obtained by drawing $m$ independent samples from $\pi$.

For a fixed domain $\cX$ and set $\Pi$ of probability distributions over $\cX$,
we will write $\far^\TV_\epsilon(\Pi)$ to denote the
set of distributions $\pi$ over $\cX$ such that $\dist_\TV(\pi, \Pi) > \epsilon$.
We will use a similar notation for other domains such as classes of labeled distributions $\Xi$
and strings $\Psi$, and for other applicable (pseudo)-metrics (\eg $\far^\edit_\epsilon(\Pi)$ for
distributions that are far from $\Pi$ in edit distance).

We will often use the notations $\gequestion$, \;$\lequestion$, \;$\eqquestion$\;, etc., within
proofs, when stating an (in)equality that will be established later on in the proof.

\subsection{Confused Collector: Definition \& Terminology}
\label{section:common-framework-confused-collector-def}

We will introduce the general confused collector model, although for this paper we will be
interested only in path- and cycle-structured domains.
The confused collector model on these domains also serves as a warm-up to the parity trace, so we
introduce and analyze it first.
Standard practice in distribution testing is to analyze a ``Poissonized'' version of the
algorithms, where instead of receiving a $m$ independent random sample points from the input
distribution $\pi$, the algorithm first chooses $\bm m \sim \Poi(m)$ and then samples $\bm m$
independent random points from $\pi$; this means that each point $x$ of the domain appears in the
sample $\Poi(m \cdot \pi(x))$-many times, independently of the other points.
For simplicity, we will define the Poissonized version of the confused collector
(See \cref{section:poissonization} and references therein for more details).

\begin{definition}[Confused Collector Sampling]
Let $G = (V,E)$ be a graph, and let $w : E \to [0,1]$ be a vector of non-negative weights.  We
define the following sampling process. A random subgraph $\bm H$ of $G$ is chosen by including each
edge $e$ independently with probability $1-w(e)$. Let $\bm C_1, \dotsc, \bm C_k$ be the connected
components of $\bm H$; assign to each $\bm{C}_i$ an arbitrary representative vertex
$\bm{c}_i$ of $\bm{C}_i$.

For a probability distribution $\pi$ (or indeed any non-negative vector $\pi : V \to \bR_{\geq 0}$)
and sample-size parameter $m$, we define a \emph{confused collector sample} $\bm S$ from $\pi$ as
follows.  $\bm H$ is chosen as above. For each vertex $v \in V$, we sample an independent Poisson
random variable $\bm s(v) \sim \Poi(m \pi(v))$, and add $\bm c(v)$ to the sample $\bm S$ with
multiplicity $\bm s(v)$, where $\bm c(v)$ is the representative of the connected component $\bm C_i$
that contains vertex $v$.
\end{definition}

For the moment, we are interested only in the case where the weights $w$ are constant, so that there
is some $\eta \in [0,1]$ such that $w(e) = \eta$ for all edges $e$. We call $\eta$ the
\emph{resolution}.

\begin{definition}[Distribution Testing with a Confused Collector]
\label{def:testing-confused-collector}
Fix a graph $G = (V,E)$ and a resolution parameter $\eta$. Let $\Pi_1, \Pi_2$ be properties of
probability distributions over $V$, and let $\alpha \in (0,1)$. A $(\Pi_1, \Pi_2,
\alpha)$-distribution tester with resolution $\eta$ and sample complexity $m$ is an algorithm $A$ that
receives a confused collector sample $\bm S$ from the input distribution $\pi$ and satisfies:
\begin{enumerate}
\item If $\pi \in \Pi_1$ then $\Pr{ A( \bm S ) \text{ accepts }} \geq \alpha$; and
\item If $\pi \in \Pi_2$ then $\Pr{ A( \bm S ) \text{ rejects }} \geq \alpha$.
\end{enumerate}
We will drop $\alpha$ from the notation when we assume $\alpha = 2/3$. However, we remark that the
confused collector does not allow to boost the success probability in the same way as a standard
distribution tester, due to the modified sampling process.
\end{definition}

\subsection{Parity Trace: Definition \& Terminology}
\label{section:parity-trace}

In this section we will formally define distribution testing under the parity trace and introduce
the notation and terminology that we will use to analyze our tester and prove \cref{thm:intro-main-informal}.
For a multiset $S \subset \bN$, recall the definition of the trace $\trace(S)$ from the
introduction. Then we define our testing model:

\begin{definition}
    \label{def:testing-parity-trace}
    Let $\Pi_1$ and $\Pi_2$ be any properties of distributions over domain $\bN$. A $(\Pi_1, \Pi_2,
    \alpha)$-distribution tester under the parity trace, with sample complexity $m$, is an algorithm $A$
    which satisfies the following.
    \begin{enumerate}
        \item If $\pi \in \Pi_1$, then
            $\Pru{\bm{S} \sim \samp(\pi,m)}{ A( \trace(\bm{S}) ) \text{ accepts }} \geq \alpha$.
        \item If $\pi \in \Pi_2$,
            then $\Pru{\bm{S} \sim \samp(\pi,m)}{ A( \trace(\bm{S}) ) \text{ rejects }} \geq \alpha$.
    \end{enumerate}
    The canonical version of this problem will have $\Pi_2 = \far^\edit_\epsilon(\Pi_1)$ or, in some
    cases, $\Pi_2 = \far^\TV_\epsilon(\Pi_1)$.
\end{definition}

We say that a vector $r \in \bR^{\bN}$ is a \textbf{partial distribution} if all of its entries are
non-negative, and $\sum_i r_i \leq 1$.

In the parity trace model, we think of a probability distribution $\pi$ over $\bN$ as defined by two
partial distributions $p, q \in \bR^{\bN}_{\geq 0}$, so that $\pi = \pi(p,q)$ where
\[
  \pi(p,q) \define (p_1, q_1, p_2, q_2, p_3, q_3, \dotsc ) \,,
\]
so that $p$ defines the part of the distribution over the odd elements, and $q$ defines the part of
the distribution over the even elements. We will always use the letters $p$ and $q$ for the partial
distributions over the odd and even elements, respectively.

In the parity trace model, the algorithm receives a trace $\trace(S)$ containing 1s and 0s, and it
will separately consider the statistics of the 1s and of the 0s. In the analysis, we will treat only
the statistics of the 1s, because the statistics for the 0s may be handled symmetrically. For the
purpose of analyzing the 1s, we may write the trace received by the algorithm (in regular expression
notation) in the form
\[
  \trace(S) = 1^{Z_1} \; 0^+ \; 1^{Z_2} \; 0^+ \dotsm 1^{Z_t} \; 0^* \,,
\]
for some $t$, where $Z_2, \dotsc, Z_t > 0$ and we allow $Z_1 = 0$. A contiguous sequence of 1s is
called a ``run''\!\!, and we call the values $Z_i$ the ``run-lengths''\!\!.

It will be convenient for our tester to actually use the ``circular trace''\!\!, obtained from
$\trace(S)$ string by stitching the ends of the string together, to form a necklace. If the trace
begins and ends with the same symbol, the first and last ``run'' are combined. So the algorithm sees
a \textbf{circular trace} of the form
\[
  1^{X_1} \; 0^+ \; 1^{X_2} \; 0^+ \dotsm 1^{X_b} \; 0^+ \,,
\]
where we might have $X_1 = Z_1 + Z_t$. (Here, $\sigma^+$ indicates that symbol $\sigma$ occurs at
least once.) For the purpose of testing uniformity, we are concerned only with the domain $[2n]$,
with the partial distributions $p, q$ being over $[n]$, so we may think of the domain itself as being
stitched into a necklace. Equivalently, we think of the domain $[n]$ of $p$ as being the vertices of
a cycle.

More precisely, we think of a cycle on vertices $\bZ_n$ with a partial distribution $p$ over the
vertices, and we define a weight vector $w$ on the edges, where the edge between vertex $i$ and
$i+1$ has weight $1-e^{-mq_i}$. Then, sampling a subgraph $\bm H$ as in the confused collector
sampling process, we see that vertices $i$ and $i+1$ in the cycle are adjacent in
$\bm H$ with probability $e^{-mq_i} = \Pr{ \Poi(mq_i) = 0 }$, which is the probability that
these vertices will contribute to the same run-length $X_j$ in the trace.

\subsection{Path- and Cycle-Structured Poisson Random Variables.}
\label{section:path-cycle-structured-rvs}

It is convenient to introduce a shared vocabulary for analyzing Poisson random variables on the
cycle and on the path. We will label the $n$ vertices of the cycle with the set $\bZ_n$ of integers
mod $n$, and we will also label the \emph{edges} of the cycle with the set $\bZ_n$, so that edge $i$
connects vertices $i$ and $i+1$ (with arithmetic mod $n$).  We will treat the path on $n$ vertices
as the subgraph of the cycle that excludes edge $n-1$ connecting vertices labeled $0$ and $n-1$.
When the subgraph $H$ contains edge $e$, we will sometimes abuse notation and write $e \in H$.

\newcommand{\llangle}{\langle\!\langle}
\newcommand{\rrangle}{\rangle\!\rangle}

A \textbf{circular interval} is a tuple $\llangle i, d \rrangle$ where $i \in \bZ_n$ and $d \in
\bZ$. If $d \geq 0$, we define the elements $\cE\llangle i, d \rrangle$ as the multiset of elements
starting at vertex $i \in \bZ_n$ and containing the $d-1$ elements ``clockwise'' from $i$, \ie the
multiset $\{ i, i+1, i+2, \dotsc, i+d-1 \}$, where addition is mod $n$. Note that for $d=1$ this
contains only $i$, while for $d > n$ this contains some elements with multiplicity greater than
1. If $d < 0$, we define the elements $\cE\llangle i, d \rrangle$ as the multiset of elements
starting at vertex $i \in \bZ_n$ and containing the $|d|-1$ elements ``counter-clockwise'', \ie the
multiset $\{i, i-1, i-2, \dotsc, i-|d|+1\}$.

The \textbf{endpoints} of $\llangle i, d \rrangle$ are the integers $i$ and $i + d - 1$ if $d \geq
0$, or $i+d+1$ and $i$ if $d < 0$.  We will often drop the $\cE$ from the notation, and equivocate
between the tuple $\llangle i, d \rrangle$ and its multiset of elements, so that we write $x \in
\llangle i, d \rrangle$ instead of $\cE \llangle i, d \rrangle$. However, a circular interval is
\emph{not} identified with its multiset of elements; for example, the circular intervals $\llangle
i, n \rrangle$ and $\llangle i+1, n \rrangle$ both contain the same elements $\bZ_n$, but they have
different endpoints.

For a circular interval $I$ and a vector $u : \bZ \to \bR$, we define
\[
  u[I] \define \sum_{s \in I} u_s \,,
\]
where we note that $s$ may occur multiple times in $I$ and $u_s$ is counted each time.

For a circular interval $\llangle i, d \rrangle$, we will define the circular interval $\llangle i,
d \rrangle^*$ to be the integers corresponding to the \emph{edges} induced by the vertices
$\llangle i, d \rrangle$; specifically
\[
\llangle i, d \rrangle^* = \begin{cases}
  \emptyset &\text{ if } d = 0 \\
  \llangle i, d-1 \rrangle &\text{ if } d \geq 1 \\
  \llangle i-1, 1-|d| \rrangle &\text{ if } d \leq -1 \,.
\end{cases}
\]
For any $s \in \bZ_n$, we say that a circular interval $I$ \textbf{crosses} $s$ if $s \in I^*$; \ie
$s$ is an edge between two vertices in $I$.

Fix any subgraph $H$ of the cycle (or path), and suppose that $H$ has $b$ connected components; note
that each connected component is a circular interval. We define the \textbf{buckets} induced by $H$
as $\Gamma_1, \dotsc, \Gamma_n$ such that $\Gamma_1, \dotsc, \Gamma_b$ are the connected components
of $H$, while $\Gamma_{b+1}, \dotsc, \Gamma_n = \emptyset$. For each vertex $i \in \bZ_n$, we define
\[
  \gamma(i) \define t \text{ such that } i \in \Gamma_t \,.
\]
We say that two vertices $i,j$ are \textbf{joined} if $\gamma(i) = \gamma(j)$, and we define the
\textbf{join matrix} $\Phi = \Phi(H)$ as
\[
  \Phi_{i,j} \define \begin{cases}
    1 &\text{ if } \gamma(i) = \gamma(j) \\
    0 &\text{ otherwise.}
  \end{cases}
\]
We define a \textbf{join function} $J$ such that for any circular interval
$I = \llangle i,d \rrangle$,
\[
    J(I) \define \ind{ \forall e \in \llangle i,d \rrangle^* : e \in H } \,.
\]
Thus if $J(I) = 1$, then for every $i, j \in \cE(I)$ we have $\Phi_{i,j} = 1$.

For a fixed sample (\ie multiset) $S \subset \bZ_n$ and for $i \in \bZ_n$, write $T_i$ for the
multiplicity of element $i$ in $S$. We then define for each $i \in [n]$ the variable
\[
  X_i \define \sum_{j \in \bZ_n : \gamma(j) = i} T_j \,,
\]
which is the total multiplicity of elements from bucket $\Gamma_i$ that occur in $S$.

Observe that the above variables depend on the subgraph $H$ and the sample $S$. For a fixed weight
vector $w$ and random subgraph $\bm H$ chosen according to the confused collector sampling
procedure, and a random sample $\bm S$ of vertices, we write the above variables in bold to denote
the random variables depending on $\bm H$ and $\bm S$. We will then write
\[
  \phi \define \Ex{\bm \Phi} \,,
\]
and observe that
\[
  \phi_{i,j} = \Pr{ \bm{\gamma}(i) = \bm{\gamma}(j) } \,.
\]
In our analysis of the confused collector and the parity trace, we have a sample-size parameter $m$
and an input (partial) distribution $p : \bZ_n \to [0,1]$. We will then have
\[
  \bm{T}_j \sim \Poi(m p_j)
\]
for each $j \in \bZ_n$, and therefore
\[
  \bm{X}_i \sim \Poi( m \cdot p[\bm \Gamma_i ] ) 
\]
for each $i \in [n]$. We will also have the random Boolean matrix $\bm \Phi$ which indicates the
connected components of $\bm H$. Our testing algorithms will rely on an analysis of the following
\emph{test statistic}.

\begin{definition}[Test Statistic]
For a fixed parameter $m$ and weight vector $w$, and random variables defined as above, we define
the test statistic
\[
  \bm Y \define \frac{1}{m} \sum_{i=1}^n \bm{X}_i (\bm{X}_i - 1) \,.
\]
By expanding the variables $\bm{X}_i$, the test statistic may be written as the quadratic form
\[
  \bm Y = \frac{1}{m} \left( \bm{T}^\top \bm{\Phi} \bm{T} - \| \bm T \|_1 \right) \,.
\]
\end{definition}

\subsection{Shared Analysis}
\label{section:shared-analysis}

We now proceed with a part of the analysis that is shared between our confused collector and parity
trace results, reflecting common challenges presented by each model. The
application of these results in the subsequent sections will then exploit
the particularities of each model---essentially, that the resolution $\eta$ is fixed
in the confused collector model, whereas in the parity trace model the partial distributions and
the selected sample size affect the sampling rate of both vertices and edges.

\subsubsection{Circular Intervals}
\label{section:circular-intervals}

\newcommand{\cyc}{\mathsf{cycle}}
\newcommand{\pth}{\mathsf{path}}

Our analysis will handle the cases where $G$ is a cycle or a path.
For the path, the circular intervals that cross the edge between vertices $0$ and $n-1$ 
are irrelevant, so it is convenient to define $\cI^\cyc$ as the set of all circular intervals, and
$\cI^\pth$ as the set of all circular intervals that do not cross edge $n-1$.

We will use $\cI \in \{ \cI^\cyc, \cI^\pth \}$ to denote the set of circular intervals relevant to
the analysis. In the case $\cI = \cI^\cyc$, each pair of vertices $i \le j$ has two disjoint paths
connecting them and therefore may be joined together in two ways. We define $\smallinterval(i,j)$
and $\largeinterval(i,j)$ as the two circular intervals defined as follows. Let
\[
  I_1 \define \llangle i, j - i + 1 \rrangle
  \qquad\text{ and }\qquad
  I_2 \define \llangle j, n - (j-i) + 1 \rrangle
\]
as the circular intervals corresponding to the two separate paths between $i$ and $j$. Then we
define
\[
  \smallinterval(i,j) \define \arg \max_{I \in \{I_1, I_2\}} \Ex{ \bm{J}[ I ] }
  \qquad\text{ and }\qquad
  \largeinterval(i,j) \define \arg \min_{I \in \{I_1, I_2\}} \Ex{ \bm{J}[ I ] } \,,
\]
breaking ties arbitrarily. Note that, in the case of the path, we will have only one way of joining
$i$ and $j$, so that $\Ex{ \bm{J}[ \largeinterval(i,j) ] } = 0$ in this case. Symmetrically, when
$i > j$ we define $\smallinterval(i,j) \define \smallinterval(j,i)$ and
$\largeinterval(i,j) \define \largeinterval(j,i)$.

For $\cI \in \{\cI^\cyc, \cI^\pth\}$, we define
\[
  \zeta(\cI) \define \max_{i,j} \Ex{ \bm{J}[ \largeinterval(i,j) ] } \,.
\]
The analysis proceeds in two cases. For the confused collector, we assume that the weight vector is
constant, so that $w(j) = \eta$ for each edge $j$, where $\eta$ is the resolution parameter; then
the probability that edge $j$ appears in $\bm H$ is $1-w(j) = 1-\eta$. For the parity trace, we have two
partial distributions, $p$ and $q$, which are the parts of the input distribution corresponding to
the odd and even elements of the domain, respectively. We treat $p$ as the distribution over the
vertices of the cycle, and we define the weight vector $w(j) \define 1 - e^{-mq_j}$, so that the
probability of edge $j$ appearing in $\bm H$ is $1-w(j) = e^{-mq_j} = \Pr{ \Poi(mq_j) = 0 }$.

\begin{proposition}
\label{prop:zeta-cycle-bound}
$\zeta(\cI^\pth) = 0$. For constant weights $w(j) = \eta$, we have
\[
    \zeta(\cI^\cyc) \leq (1-\eta)^{n/2} \,,
\]
and for weights $w(j) = 1-e^{-mq_j}$, we have
\[
  \zeta(\cI^\cyc) \leq e^{-\frac{m \|q\|_1}{2}} \,.
\]
\end{proposition}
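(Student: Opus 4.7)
The plan is to start from two elementary observations. First, because the edges of $G$ are included in $\bm H$ independently, the event $\{\bm J(I) = 1\}$ factorizes into the per-edge events, giving
\[
\Ex{\bm J[I]} = \prod_{e \in I^*} \Pr{e \in \bm H} \,.
\]
Second, from the definitions $I_1^* = \llangle i, j-i \rrangle$ (with $j-i$ edges) and $I_2^* = \llangle j, n-(j-i) \rrangle$ (with $n-(j-i)$ edges), the two edge sets $I_1^*$ and $I_2^*$ partition the $n$ edges of the cycle $\bZ_n$ exactly once; in particular $|I_1^*| + |I_2^*| = n$, and $\sum_{e \in I_1^*} q_e + \sum_{e \in I_2^*} q_e = \|q\|_1$.

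For the path case, the edge labeled $n-1$ is not in $G$ and hence never in $\bm H$, and it lies in exactly one of $I_1^*, I_2^*$ for every choice of $i, j$; the corresponding interval $I$ then has $\Ex{\bm J[I]} = 0$, which is the minimum possible value, so by definition it is $\largeinterval(i, j)$. Maximizing over $i, j$ yields $\zeta(\cI^\pth) = 0$. For the constant-weight cycle, the factorization specializes to $\Ex{\bm J[I]} = (1-\eta)^{|I^*|}$, which is decreasing in $|I^*|$. Since $|I_1^*| + |I_2^*| = n$, the $\arg\min$ defining $\largeinterval(i,j)$ is the interval with at least $\lceil n/2 \rceil \geq n/2$ edges, yielding $\Ex{\bm J[\largeinterval(i,j)]} \leq (1-\eta)^{n/2}$ for every $i, j$, and the bound $\zeta(\cI^\cyc) \leq (1-\eta)^{n/2}$ follows by maximizing.

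For the weighted cycle, the factorization gives $\Ex{\bm J[I]} = \prod_{e \in I^*} e^{-m q_e} = \exp\!\left(-m \sum_{e \in I^*} q_e\right)$, a decreasing function of the $q$-mass on $I^*$. By the partition identity, the $\arg\min$ (namely $\largeinterval(i, j)$) carries $q$-mass at least $\|q\|_1 / 2$, so $\Ex{\bm J[\largeinterval(i, j)]} \leq e^{-m \|q\|_1 / 2}$, and maximizing over $i, j$ completes the proof. There is no substantive obstacle here: the proposition is a direct consequence of edge independence together with the partitioning of the cycle's $n$ edges by $I_1^*$ and $I_2^*$. The only care required is the routine verification that the minimizer of the expected join coincides with the interval of larger edge-count (respectively, larger $q$-mass), which is immediate from the fact that $\Ex{\bm J[I]}$ is monotonically decreasing in those quantities.
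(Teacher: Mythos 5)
Your proof is correct and follows essentially the same route as the paper's: factorize $\Ex{\bm J[I]}$ over the edges by independence, observe that $I_1^*$ and $I_2^*$ partition the $n$ edges of the cycle, and note that at least one of the two must carry half the edge count (resp.\ half the $q$-mass), so the $\arg\min$ in the definition of $\largeinterval(i,j)$ satisfies the claimed bound. You are a touch more careful than the paper in two places: you work consistently with the edge sets $I_1^*, I_2^*$ and their sizes (the paper's proof momentarily writes $|I_1| \geq n/2$ where it means $|I_1^*| \geq n/2$), and you spell out the path case $\zeta(\cI^\pth) = 0$, which the paper treats as a definitional remark in the surrounding text rather than inside the proof.
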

\begin{proof}
For distinct $i, j \in \bZ_n$, define $I_1$ and $I_2$ as above, and note that for $a \in \{1,2\}$,
\[
  \Ex{ \bm{J}[I_a] } = \prod_{j \in I_a^*} (1-w(j)) \,.
\]
In the case $w(j) = \eta$, this is $(1-\eta)^{|I_a^*|}$, while in the case $w(j) = 1-e^{-mq_j}$, this
is $e^{-mq[I_a^*]}$. Note that $I_1, I_2$ partition $\bZ_n$, so in the first case we have either
$|I_1| \geq n/2$ or $|I_2| \geq n/2$, so the minimum is at most $(1-\eta)^{n/2}$. In the second case we
have either $q[I^*_1] \geq \|q\|_1/2$ or $q[I^*_2] \geq \|q\|_1/2$, so the minimum is at most
$e^{-\frac{m\|q\|_1}{2}}$.
\end{proof}

\subsubsection{Expectation of the Test Statistic}
\label{section:common-expectation}

We start by giving an expression for the expectation of the statistic $\bm{Y}$. Recall that we write
$p : \bZ_n \to [0,1]$ for the (partial) distribution over the vertices (of either the path or the
cycle), $m$ is the sample-size parameter, and $\phi = \Ex{\bm \Phi}$.

\begin{proposition}
    \label{prop:expectation-y-quadratic-form}
    The statistic $\bm{Y}$ satisfies
    \[
        \Ex{\bm{\bm{Y}}} = m p^\top \phi p \,.
    \]
\end{proposition}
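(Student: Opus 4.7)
The plan is to expand the quadratic form $\bm{T}^\top \bm{\Phi} \bm{T}$ entry by entry and compute its expectation using two facts: the random subgraph $\bm{H}$ (and hence $\bm{\Phi}$) is independent of the Poisson counts $\{\bm{T}_j\}_{j \in \bZ_n}$, and the $\bm{T}_j$ are mutually independent with $\bm{T}_j \sim \Poi(m p_j)$. Starting from
\[
  \Ex{\bm{Y}} = \frac{1}{m}\left( \Ex{\bm{T}^\top \bm{\Phi} \bm{T}} - \Ex{\|\bm{T}\|_1} \right),
\]
the goal reduces to evaluating the two terms on the right.

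For the quadratic form, I would write $\Ex{\bm{T}^\top \bm{\Phi} \bm{T}} = \sum_{i,j} \phi_{i,j} \, \Ex{\bm{T}_i \bm{T}_j}$, using independence of $\bm{\Phi}$ from $\bm{T}$ to factor the expectation, and then split into diagonal and off-diagonal contributions. For $i \ne j$, independence of $\bm{T}_i, \bm{T}_j$ gives $\Ex{\bm{T}_i \bm{T}_j} = m^2 p_i p_j$. For $i = j$, the Poisson second-moment identity gives $\Ex{\bm{T}_i^2} = m p_i + m^2 p_i^2$; crucially, since every vertex belongs to its own connected component in $\bm{H}$, we have $\bm{\Phi}_{i,i} = 1$ deterministically, so $\phi_{i,i} = 1$. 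Combining these cases,
\[
  \Ex{\bm{T}^\top \bm{\Phi} \bm{T}} = \sum_i (m p_i + m^2 p_i^2) + \sum_{i \ne j} \phi_{i,j} m^2 p_i p_j = m \|p\|_1 + m^2 \, p^\top \phi \, p,
\]
where the last equality re-absorbs the diagonal $m^2 p_i^2$ terms into $p^\top \phi p$ using $\phi_{i,i} = 1$.

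Finally, since $\Ex{\|\bm{T}\|_1} = \sum_i \Ex{\bm{T}_i} = m\|p\|_1$, the two $m\|p\|_1$ contributions cancel, and we obtain $\Ex{\bm{Y}} = m \, p^\top \phi \, p$. There is no real obstacle; the only details to be careful about are (i) invoking independence of $\bm{\Phi}$ and $\bm{T}$ to factor expectations, and (ii) not losing the diagonal when passing from the entrywise sum to the quadratic form $p^\top \phi p$, which is precisely what $\phi_{i,i} = 1$ guarantees.
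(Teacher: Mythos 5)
Your proof is correct and follows essentially the same computation as the paper: expand the quadratic form using the independence of $\bm\Phi$ from $\bm T$ and the mutual independence of the $\bm T_j$, use the Poisson second-moment identity on the diagonal, absorb the diagonal terms via $\phi_{i,i}=1$, and cancel $m\|p\|_1$ against $\Ex{\|\bm T\|_1}$.
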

\begin{proof}
    We use the facts that $\bm{T}$ and $\bPhi$ are independent and that, for $i \neq j$,
    $\bm{T}_i$ and $\bm{T}_j$ are independent. We will also use the property that, for
    $\bm{Z} \sim \Poi(\lambda)$, we have $\Ex{\bm{Z}} = \Var{\bm{Z}} = \lambda$ and, therefore,
    $\Ex{\bm{Z}^2} = \Ex{\bm{Z}} + \Ex{\bm{Z}}^2$. We obtain:
    \begin{align*}
        \Ex{\bm{Y}} &= \frac{1}{m} \left( \Ex{\bm{T}^\top \bPhi \bm{T}} - \Ex{\|\bm{T}\|_1} \right)
        = \frac{1}{m} \sum_{i=0}^{n-1} \sum_{j=0}^{n-1} \Ex{\bm{T}_i \bm{T}_j} \Ex{\bPhi_{i,j}}
            - \frac{1}{m} \sum_{i=0}^{n-1} \Ex{\bm{T}_i} \\
        &= \frac{1}{m} \sum_{i=0}^{n-1} \sum_{j=0}^{n-1} \Ex{\bm{T}_i} \Ex{\bm{T}_j} \phi_{i,j}
            + \frac{1}{m} \sum_{i=0}^{n-1} \Ex{\bm{T}_i} - \frac{1}{m} \sum_{i=0}^{n-1} \Ex{\bm{T}_i}
        = \frac{1}{m} \sum_{i=0}^{n-1} \sum_{j=0}^{n-1} (m p_i) (m p_j) \phi_{i,j} \\
        &= m p^\top \phi p \,. \qedhere
    \end{align*}
\end{proof}

It will sometimes be useful to write $p = p^* + z$ where $p^*$ is a reference partial distribution,
in which case we require:

\begin{proposition}
    \label{prop:quadratic-form-decomposition}
    Write $p = p^* + z$. Then $\bm{Y}$ satisfies
    \[
        \Ex{\bm{Y}} =
        m \left(p^*\right)^\top \phi p^* + 2m \left(p^*\right)^\top \phi z + m z^\top \phi z \,.
    \]
\end{proposition}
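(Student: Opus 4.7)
The plan is to invoke \cref{prop:expectation-y-quadratic-form} directly and then expand the quadratic form $p^\top \phi p$ using bilinearity, substituting $p = p^* + z$. The only non-routine observation needed is that $\phi$ is symmetric, which is immediate from the definition $\phi_{i,j} = \Pr{\bm\gamma(i) = \bm\gamma(j)}$ (equality of bucket indices is a symmetric relation, so $\bm\Phi$ is symmetric pointwise, and hence so is its expectation).

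More concretely, first I would write
\[
  p^\top \phi p = (p^* + z)^\top \phi (p^* + z) = (p^*)^\top \phi p^* + (p^*)^\top \phi z + z^\top \phi p^* + z^\top \phi z.
\]
Then, using symmetry of $\phi$, the two cross terms coincide: $z^\top \phi p^* = \bigl((p^*)^\top \phi^\top z\bigr) = (p^*)^\top \phi z$, so they combine into $2(p^*)^\top \phi z$. Multiplying through by $m$ and applying \cref{prop:expectation-y-quadratic-form} yields the claimed identity.

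There is essentially no obstacle here; this is a bookkeeping lemma whose purpose is to set up the later proofs where one chooses $p^*$ to be a convenient reference distribution (\eg the uniform conjugate $\tilde p$ alluded to in \cref{section:proof-overview}) and controls $\Ex{\bm Y}$ via the deviation $z$. The only thing worth double-checking in the write-up is the symmetry of $\phi$, which justifies combining the two cross terms.
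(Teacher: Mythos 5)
Your proposal is correct and matches the paper's own proof: both invoke \cref{prop:expectation-y-quadratic-form}, expand the quadratic form by bilinearity, and use the symmetry of $\phi$ (inherited from $\bPhi$) to merge the two cross terms into $2(p^*)^\top \phi z$.
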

\begin{proof}
    This follows immediately from \cref{prop:expectation-y-quadratic-form} by expanding the
    quadratic form and recalling that $\bPhi$ is always a symmetric matrix, and hence so is
    $\phi = \Ex{\bPhi}$.
\end{proof}

\subsubsection{Variance of the Test Statistic: First Component}
\label{section:shared-analysis-variance-1}
\label{section:common-variance}

In this section, we will establish upper bounds for the variance of $\bm{Y}$ that are
general to both the parity trace and confused collector models. The sections dealing with
each particular model will proceed from here. 

Recall that the (random) partition of vertices into buckets $\bm \Gamma_1, \dotsc, \bm \Gamma_n$
depends on the random subgraph $\bm H$.  We start by noting that we can break down the variance of
$\bm{Y}$ into two components by the law of total variance:
\[
    \Var{\bm{Y}} = \Varu{\bm{H}}{\Exuc{\bm{T}}{\bm{Y}}{\bm{H}}}
                 + \Exu{\bm{H}}{\Varuc{\bm{T}}{\bm{Y}}{\bm{H}}} \,.
\]
We will handle the first term here, and the second term in
\cref{section:shared-analysis-variance-2}.

Recall that the weight vector $w$ is either the constant $\eta$ vector, or $w(j) = 1-e^{-mq_j}$.

\begin{proposition}
\label{prop:negligible-interval-bounds-new}
\label{prop:negligible-interval-bounds}
Let $\cI \in \{\cI^\cyc, \cI^\pth\}$.
For every $i, j, k, \ell \in \bZ_n$, the following hold:
\begin{enumerate}
\item $\Ex{\Phi(\bm J)_{i,j}} \ge \Ex{\bm J [\smallinterval(i,j)]}$;
\item $\Ex{\Phi(\bm{J})_{i,j} \cdot \Phi(\bm{J})_{k,\ell}} \le
        \Ex{\bm J [\smallinterval(i,j)] \cdot \bm J [\smallinterval(k,\ell)]} + 4 \cdot \zeta(\cI)$.
\end{enumerate}
\end{proposition}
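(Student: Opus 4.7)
The plan is to decompose the indicator $\Phi(\bm J)_{i,j}$ according to the two arcs between vertices $i$ and $j$. In any subgraph $\bm H$ of the cycle (or path, viewed as the cycle missing edge $n-1$), the vertices $i$ and $j$ lie in the same connected component if and only if all edges along at least one of the two arcs between them are present in $\bm H$. Writing $A \define \bm{J}[\smallinterval(i,j)]$ and $B \define \bm{J}[\largeinterval(i,j)]$, this amounts to $\Phi(\bm J)_{i,j} = A \lor B$. Since $A, B \in \{0,1\}$, the logical OR has the polynomial form
\[
\Phi(\bm J)_{i,j} = A + B - AB \,.
\]
In the path case, the definition of $\largeinterval$ forces $B = 0$ deterministically, so the argument below still applies (with $\zeta(\cI^\pth) = 0$ from \cref{prop:zeta-cycle-bound}).

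For part 1, the event $\{A = 1\}$ is contained in $\{\Phi(\bm J)_{i,j} = 1\}$ pointwise, so $\bm J[\smallinterval(i,j)] \le \Phi(\bm J)_{i,j}$ everywhere, and taking expectations gives the claim.

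For part 2, I would set $C \define \bm J[\smallinterval(k,\ell)]$ and $D \define \bm J[\largeinterval(k,\ell)]$, so that
\[
\Phi(\bm J)_{i,j} \cdot \Phi(\bm J)_{k,\ell} = (A + B - AB)(C + D - CD) \,.
\]
Expanding gives
\[
AC + AD + BC + BD - ACD - BCD - ABC - ABD + ABCD \,.
\]
The leading term $\Ex{AC} = \Ex{\bm J[\smallinterval(i,j)] \cdot \bm J[\smallinterval(k,\ell)]}$ is the target quantity. Each of the remaining positive terms $\Ex{AD}, \Ex{BC}, \Ex{BD}, \Ex{ABCD}$ contains a factor of $B$ or $D$, and since all variables are $\{0,1\}$-valued, monotonicity of indicators gives $XY \le \min(X,Y)$; hence each such expectation is bounded by $\max(\Ex{B}, \Ex{D}) \le \zeta(\cI)$ by definition. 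The four negative terms $-\Ex{ACD}, -\Ex{BCD}, -\Ex{ABC}, -\Ex{ABD}$ are nonpositive and can be dropped to preserve the upper bound. Summing, the total error beyond $\Ex{AC}$ is at most $4\zeta(\cI)$, as required.

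The steps involve no genuine obstacle; the only care needed is in the polynomial expansion of $(A + B - AB)(C + D - CD)$ and in verifying that, once the negative terms are dropped, exactly four expectations remain, each controlled by $\zeta(\cI)$ via one of the ``long arc'' indicators $B$ or $D$.
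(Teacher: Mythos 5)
Your proof is correct and takes essentially the same approach as the paper's: decompose $\Phi_{i,j}$ via the small and large arc indicators and control the long-arc contributions by $\zeta(\cI)$. The only cosmetic difference is that you use the exact identity $A\lor B = A + B - AB$ and drop the negative terms in the resulting nine-term expansion, whereas the paper uses the cruder inequality $\max\{A,B\}\le A+B$ and bounds $(A+B)(C+D)$ directly; both yield the same $4\zeta(\cI)$ error term.
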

\begin{proof}
Recall that $\Phi(\bm J)_{i,j} = 1$ if and only if $\bm\gamma(i) = \bm\gamma(j)$. This will occur
if $\smallinterval(i,j) \subseteq \bm \Gamma_{\gamma(i)}$, which happens when
$\bm{J}[\smallinterval(i,j)]=1$, yielding the first conclusion. Next, observe
\begin{align*}
    \bm\Phi_{i,j}
    = \max\{ \bm{J}[\smallinterval(i,j)], \bm{J}[\largeinterval(i,j)] \}
    \le \bm{J}[\smallinterval(i,j)] + \bm{J}[\largeinterval(i,j)] \,.
\end{align*}
    To prove the second statement, expand the product and use the fact that
    $\bm{J}$ is a Boolean vector:
    \begin{align*}
        \bm\Phi_{i,j} \bm \Phi_{k,\ell}
        &\le \left( \bm{J}[\smallinterval(i,j)] + \bm{J}[\largeinterval(i,j)] \right)
            \left( \bm{J}[\smallinterval(k,\ell)] + \bm{J}[\largeinterval(k,\ell)] \right) \\
        &\le \bm{J}[\smallinterval(i,j)]\bm{J}[\smallinterval(k,\ell)]
            + 2\left( \bm{J}[\largeinterval(i,j)] + \bm{J}[\largeinterval(k,\ell)] \right) \,.
    \end{align*}
    We have $\Ex{\bm{J}[\largeinterval(i,j)]}, \Ex{\bm{J}[\largeinterval(k,\ell)]} \leq \zeta(\cI)$
by definition, so the conclusion follows from taking the expectation.
\end{proof}

\begin{lemma}
\label{prop:general-variance-first-component}

Let $\cI \in \{\cI^\cyc, \cI^\pth\}$ and let $w$ be the weights on the edges.  There exists an
absolute constant $c > 0$ such that the first component of the variance of $\bm{Y}$ satisfies
\[
\Varu{\bm{H}}{\Exuc{\bm{T}}{\bm{Y}}{\bm{H}}}
  \le 5m^2 \zeta(\cI) \|p\|_1^4
    + c m^2 \cdot \sum_{\substack{I=\llangle i,d \rrangle \in \cI \\ 1 \le d \le n}}
        p_{i} p_{i+d-1} p[I]^2 \Ex{\bm{J}(I)}  \,.
\]
\end{lemma}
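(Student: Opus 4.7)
The plan is to condition on the random subgraph $\bm{H}$, exploit Poisson moments to obtain a closed form for $\Exuc{\bm{T}}{\bm{Y}}{\bm{H}}$, and then bound the resulting variance over $\bm{H}$ using \cref{prop:negligible-interval-bounds} together with a careful regrouping of the covariance sum.

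Conditional on $\bm{H}$, the variables $\bm{T}_i$ are still independent $\Poi(m p_i)$, so the $\|\bm{T}\|_1$ subtraction in the definition of $\bm{Y}$ exactly cancels the diagonal contributions $\bm{\Phi}_{i,i} (\Ex{\bm{T}_i^2} - \Ex{\bm{T}_i}^2) = \bm{\Phi}_{i,i} \, m p_i$, yielding $\Exuc{\bm{T}}{\bm{Y}}{\bm{H}} = m \cdot p^\top \bm{\Phi} p$ by the same calculation used in the proof of \cref{prop:expectation-y-quadratic-form}. Hence $\Varu{\bm{H}}{\Exuc{\bm{T}}{\bm{Y}}{\bm{H}}} = m^2 \Var{p^\top \bm{\Phi} p}$. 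Expanding this as
\[
  \sum_{i,j,k,\ell} p_i p_j p_k p_\ell \bigl(\Ex{\bm{\Phi}_{i,j} \bm{\Phi}_{k,\ell}} - \Ex{\bm{\Phi}_{i,j}} \Ex{\bm{\Phi}_{k,\ell}}\bigr)
\]
and applying \cref{prop:negligible-interval-bounds} replaces each summand by $\Cov\bigl(\bm{J}[\smallinterval(i,j)], \bm{J}[\smallinterval(k,\ell)]\bigr) + 4\zeta(\cI)$. Summing the $4\zeta(\cI)$ piece against $p_i p_j p_k p_\ell$ contributes at most $4\zeta(\cI)\|p\|_1^4$, which produces the first term of the claim with room to spare.

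The heart of the argument is the remaining sum of $\smallinterval$-covariances. The key algebraic observation is that $\bm{J}(I)$ is a product of edge indicators, so $\bm{J}(I_1)\bm{J}(I_2)$ depends only on the edges in $I_1^* \cup I_2^*$. When these edge sets are disjoint the two variables are independent and the covariance vanishes; otherwise $I_1 \cup I_2$ is itself a single circular interval $K \in \cI$ with $\bm{J}(I_1)\bm{J}(I_2) = \bm{J}(K)$, so the covariance is bounded above by $\Ex{\bm{J}(K)}$. I would then regroup the surviving sum by $K$: for fixed $K = \llangle i_0, d \rrangle$ with endpoints $a = i_0$ and $b = i_0 + d - 1$, a quadruple $(i,j,k,\ell)$ contributes to $K$ only if all four indices lie in $K$ and both $a$ and $b$ appear among them, since otherwise the union of $\smallinterval(i,j)$ and $\smallinterval(k,\ell)$ either fails to reach an endpoint of $K$ or escapes $K$. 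A union bound over which of the four positions equals $a$ and which equals $b$ then bounds the inner sum by a constant times $p_a p_b \cdot p[K]^2$, matching the shape of the second term of the claim after multiplying by $m^2 \Ex{\bm{J}(K)}$ and summing over $K \in \cI$.

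The main obstacle I anticipate is precisely this final bookkeeping step. After the two-step relaxation---lower-bounding $\Ex{\bm{\Phi}_{i,j}}$ via \cref{prop:negligible-interval-bounds} and dropping the product-of-expectations term inside each covariance---one must verify that the sum factors through the \emph{endpoint} weights $p_a p_b$ rather than symmetrically through all four indices in $K$. This asymmetric factorization is exactly what the right-hand side of the lemma encodes, so care is needed to ensure that the counting over quadruples $\{i,j,k,\ell\} \supseteq \{a,b\}$ does not introduce extra $p[K]$ factors that would spoil the target form.
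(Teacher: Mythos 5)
Your outline follows the paper's proof closely through the variance expansion and the application of \cref{prop:negligible-interval-bounds}, and the final regrouping-by-$K$ bookkeeping you flag as the main obstacle does go through as you sketched (two of the four indices become the endpoints of $K$, the other two are summed over $K$ to give $p[K]^2$). But there is a concrete gap in the covariance step: you claim that whenever the edge sets $\smallinterval(i,j)^*$ and $\smallinterval(k,\ell)^*$ intersect, the union $I_1 \cup I_2$ is a circular interval $K \in \cI$ of size at most $n$ with $\bm{J}(I_1)\bm{J}(I_2) = \bm{J}(K)$. On the cycle this fails precisely when $\smallinterval(i,j)^* \cup \smallinterval(k,\ell)^* = \bZ_n$: then the merged object would need every one of the $n$ edges present, and no circular interval $\llangle i,d\rrangle$ with $1 \le d \le n$ has that property, since $\llangle i,d\rrangle^*$ contains at most $n-1$ edges. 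Such quadruples simply do not appear in your regrouped sum, so your bound omits them.

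The paper handles this as a separate case: when the two edge sets cover $\bZ_n$, the event $\bm{J}[\smallinterval(i,j)]\,\bm{J}[\smallinterval(k,\ell)] = 1$ forces every edge into $\bm H$, which in particular implies every large interval is joined and therefore has probability at most $\zeta(\cI)$. This contributes an additional $\zeta(\cI)\|p\|_1^4$ and explains why the lemma's first term reads $5\zeta(\cI)\|p\|_1^4$ rather than the $4\zeta(\cI)\|p\|_1^4$ you claimed to obtain ``with room to spare.'' To patch your argument, split off this full-coverage case before regrouping. (On the path, $\zeta(\cI^\pth)=0$ and the issue is vacuous, but the lemma also covers the cycle.)
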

\begin{proof}
    Fix any subgraph $H$. Conditional on $\bm H = H$,
    \begin{align*}
        \Exuc{\bm{T}}{\bm{Y}}{\bm{H} = H}
        &= \frac{1}{m} \Ex{\bm{T}^\top \Phi \bm{T}} - \frac{1}{m} \Ex{\|\bm{T}\|_1}
        = \frac{1}{m} \sum_{i,j \in \bZ_n} \Ex{\bm{T}_i \bm{T}_j} \Phi_{i,j} - \frac{1}{m} \Ex{\Poi(m)} \\
        &= \frac{1}{m} \left( \sum_{i \in \bZ_n} \Ex{\bm{T}_i}
            + \sum_{i,j \in \bZ_n} \Ex{\bm{T}_i} \Ex{\bm{T}_j} \Phi_{i,j} \right)
            - 1
        = \frac{1}{m} \left( m + m^2 p^\top \Phi p \right) - 1 \\
        &= m p^\top \Phi p \,,
    \end{align*}
    and therefore the desired variance is
    \[
        \Varu{\bm{H}}{\Exuc{\bm{T}}{\bm{Y}}{\bm{H}}}
        = m^2 \Varu{\bm H}{p^\top \bm\Phi p} \,.
    \]
    Then, recalling that $\phi = \Ex{\bPhi}$, we expand
    $\Varu{\bm H}{p^\top \bm\Phi p}$ as follows:
    \begin{align*}
        \Var{p^\top \bm\Phi p}
        &= \Ex{\left( p^\top \bm\Phi p \right)^2}
            - \left( \Ex{p^\top \bm\Phi p} \right)^2
        = \Ex{\left( p^\top \bm\Phi p \right)^2} - \left( p^\top \phi p \right)^2 \\
        &= \Ex{\left( \sum_{i,j \in \bZ_n} p_i p_j \bm\Phi_{i,j} \right)^2}
            - \left( \sum_{i,j \in \bZ_n} p_i p_j \phi_{i,j} \right)^2 \\
        &= \Ex{\sum_{i,j,k,\ell \in \bZ_n}
                    p_i p_j p_k p_\ell \bm\Phi_{i,j} \bm\Phi_{k,\ell}}
            - \sum_{i,j,k,\ell \in \bZ_n} p_i p_j p_k p_\ell \phi_{i,j} \phi_{k,\ell} \\
        &= \sum_{i,j,k,\ell \in \bZ_n} p_i p_j p_k p_\ell \left(
            \Ex{\bm\Phi_{i,j} \bm\Phi_{k,\ell}} - \phi_{i,j} \phi_{k,\ell} \right) \,.
    \end{align*}
    We now use \cref{prop:negligible-interval-bounds} to simplify the quantity
    $\Ex{\bm\Phi_{i,j} \bm\Phi_{k,\ell}} - \phi_{i,j} \phi_{k,\ell}$:
    \begin{align*}
        &\Ex{\bm\Phi_{i,j} \bm\Phi_{k,\ell}} - \phi_{i,j} \phi_{k,\ell} \\
        &\qquad\le \Ex{\bm{J}[\smallinterval(i,j)] \bm{J}[\smallinterval(k,\ell)]}
            - \Ex{\bm{J}[\smallinterval(i,j)]} \Ex{\bm{J}[\smallinterval(k,\ell)]}
            + 4 \cdot \zeta(\cI) \,.
    \end{align*}
    If the intervals $\smallinterval(i,j)$ and $\smallinterval(k,\ell)$ are disjoint, then
    \[
      \Ex{\bm{J}[\smallinterval(i,j)] \bm{J}[\smallinterval(k,\ell)]}
    - \Ex{\bm{J}[\smallinterval(i,j)]} \Ex{\bm{J}[\smallinterval(k,\ell)]} = 0 \,.
    \]
    On the other hand, if these intervals are not disjoint, we will employ the simple upper bound
    \[
      \Ex{\bm{J}[\smallinterval(i,j)] \bm{J}[\smallinterval(k,\ell)]}
    - \Ex{\bm{J}[\smallinterval(i,j)]} \Ex{\bm{J}[\smallinterval(k,\ell)]} \le
    \Ex{\bm{J}[\smallinterval(i,j)] \bm{J}[\smallinterval(k,\ell)]} \,.
    \]
    We then consider two cases.

First, suppose that for every edge $s \in \bZ_n$, $\smallinterval(i,j)$ crosses $s$ or
$\smallinterval(k,\ell)$ crosses $s$. Then $\bZ_n \subseteq \smallinterval(i,j)^* \cup
\smallinterval(k,\ell)^*$, so $\bm J(\smallinterval(i,j)) \cdot \bm J(\smallinterval(k,\ell)) = 1$
only when every edge appears in $\bm{H}$, which happens with probability at most $\zeta(\cI)$
(since this event implies that every large interval is joined).
In this case, $\Ex{\bm{J}[\smallinterval(i,j)] \bm{J}[\smallinterval(k,\ell)]} \le \zeta(\cI)$.

As for the second case, let $s \in \bZ_n$ be such that neither $\smallinterval(i,j)$ nor
$\smallinterval(k,\ell)$ crosses $s$. Since $\smallinterval(i,j)$ and $\smallinterval(k,\ell)$ are not
disjoint,  it follows that there exists an interval $I = I_{i,j,k,\ell} \in \cI$ satisfying the
following:
    \begin{enumerate}
        \item The set $\smallinterval(i,j) \cup \smallinterval(k,\ell)$ is equal to
            the set of elements of $I_{i,j,k,\ell}$, where we are here taking the union as sets (not as
multisets); 
        \item The endpoints of $I_{i,j,k,\ell}$ are two of the indices $i,j,k,\ell$; and,
        \item $|I_{i,j,k,\ell}| \le n$ (because, in particular, $I_{i,j,k,\ell}$ does not cross $s$).
    \end{enumerate}
    It follows that $\bm{J}[\smallinterval(i,j)] \bm{J}[\smallinterval(k,\ell)] = 1$ if and only if
    $\bm{J}[I] = 1$, and hence we have the upper bound
    $\Ex{\bm{J}[\smallinterval(i,j)] \bm{J}[\smallinterval(k,\ell)]}
    - \Ex{\bm{J}[\smallinterval(i,j)]} \Ex{\bm{J}[\smallinterval(k,\ell)]} \le \Ex{\bm{J}[I]}$.
    Therefore, 
\begin{align*}
  &\sum_{i,j,k,\ell \in \bZ_n} p_i p_j p_k p_\ell \left(
    \Ex{\bm\Phi_{i,j} \bm\Phi_{k,\ell}} - \phi_{i,j} \phi_{k,\ell} \right) \\
  &\leq \sum_{i,j,k,\ell \in \bZ_n} p_i p_j p_k p_\ell \left(
    \Ex{\bm{J}[\smallinterval(i,j)] \bm{J}[\smallinterval(k,\ell)]}
      - \Ex{\bm{J}[\smallinterval(i,j)]} \Ex{\bm{J}[\smallinterval(k,\ell)]}
      + 4\cdot\zeta \right) \\
  &\leq 4 \cdot \|p\|_1^4 \cdot \zeta
    + \sum_{i,j,k,\ell \in \bZ_n} p_ip_jp_kp_\ell \left(
        \ind{ \bZ_n \subseteq \smallinterval(i,j)^* \cup \smallinterval(k,\ell)^* } \cdot \zeta(\cI)
      + \ind{I_{i,j,k,\ell} \text{ exists}} \cdot \Ex{ \bm{J}[ I_{i,j,k,\ell} ] } \right) \\
  &\leq 5 \zeta(\cI) \| p \|_1^4
    + \sum_{i,j,k,\ell \in \bZ_n} p_ip_jp_kp_\ell 
      \ind{I_{i,j,k,\ell} \text{ exists}} \cdot \Ex{ \bm{J}[ I_{i,j,k,\ell} ] } \,.
\end{align*}
The latter is bounded by summing over all intervals $I = \llangle i, d \rrangle \in \cI$
with $d \leq n$ and for each one taking the expression $c \cdot p_i p_{i+d-1} \sum_{j,k \in I}
p_j p_k \cdot \Ex{\bm{J}[I]}$, where $i$ and $i+d-1$ are the endpoints of $I$, and $c$ is a constant
counting the number of ways to get intersecting intervals with endpoints in $i,(i+d-1),k,\ell$.
Now, using $\sum_{j,k \in I} p_j p_k = p[I]^2$, we obtain
\[
    \Var{p^\top \Phi(\bm{J}) p}
    \leq 5 \zeta(\cI) \|p\|_1^4
    + c \cdot \sum_{\substack{I = \llangle i, d \rrangle \in \cI \\ 1 \le d \le n}}
      \Ex{\bm{J}[I]} p_i p_{i+d-1} p[I]^2 \,. \qedhere
\]
\end{proof}

\subsubsection{Variance of the Test Statistic: Second Component}
\label{section:shared-analysis-variance-2}

We introduce some notation for the partial distribution over the buckets $\Gamma_i$ (\ie connected
components of $H$) induced by $p$.

\begin{definition}[Bucketed Vector]
Let $\Gamma = (\Gamma_1, \dotsc, \Gamma_b)$ be the buckets resulting from a subgraph $H$, and
let $u : \bZ_n \to \bR$. Then \emph{$\Gamma$-bucketing of $u$} is the vector
$u_{|\Gamma} \in \bR^b$ given by
\[
    \left(u_{|\Gamma}\right)_i \define u\left[\Gamma_i\right] = \sum_{j \in \Gamma_i} u_j
    \quad \text{for all } i \in [b] \,.
\]
\end{definition}

We now show that the second component of the variance is captured by 2- and 3-norms of the bucketed
vector $p$. Recall that $\bm T_i \sim \Poi(mp_i)$ is the number of occurrences of vertex $i \in
\bZ_n$ in the sample. We first compute the variance of the terms $\bm{X}_i (\bm{X}_i - 1)$
that make up the test statistic:

\begin{proposition}
    \label{prop:poisson-expression-variance}
    If $\bm X \sim \Poi(\lambda)$, then $\Var{\bm X(\bm X-1)} = 4\lambda^3 + 2\lambda^2$.
\end{proposition}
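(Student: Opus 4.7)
The plan is to compute $\Ex{\bm X(\bm X-1)}$ and $\Ex{(\bm X(\bm X-1))^2}$ directly using the well-known factorial moments of the Poisson distribution, namely
\[
  \Ex{\bm X (\bm X - 1)(\bm X - 2) \dotsm (\bm X - k + 1)} = \lambda^k
  \quad \text{for every } k \in \bN,
\]
which follows from term-by-term manipulation of the Poisson pmf (each falling factorial cancels the leading terms of $\frac{\lambda^j}{j!}$, leaving a shifted copy of the Poisson series that sums to 1). Taking $k = 2$ immediately gives $\Ex{\bm X (\bm X - 1)} = \lambda^2$.

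For the second moment of $\bm X(\bm X-1)$, I would expand the polynomial $\bm X^2 (\bm X - 1)^2$ in the basis of falling factorials. Writing
\[
  \bm X^2 (\bm X - 1)^2 = \bm X (\bm X - 1)(\bm X - 2)(\bm X - 3) + a \cdot \bm X (\bm X - 1)(\bm X - 2) + b \cdot \bm X (\bm X - 1),
\]
the constants $a, b$ are forced by matching values at small integers: evaluating at $\bm X = 2$ forces $b = 2$, and evaluating at $\bm X = 3$ then forces $a = 4$ (one can verify the identity on $\bm X = 4$ as a sanity check, or derive $a, b$ from Stirling numbers of the second kind). Taking expectations and applying the factorial moment formula yields
\[
  \Ex{(\bm X(\bm X-1))^2} = \lambda^4 + 4 \lambda^3 + 2 \lambda^2.
\]

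The conclusion then follows from
\[
  \Var{\bm X (\bm X - 1)} = \Ex{(\bm X(\bm X-1))^2} - \left(\Ex{\bm X(\bm X-1)}\right)^2
    = \lambda^4 + 4\lambda^3 + 2\lambda^2 - \lambda^4 = 4\lambda^3 + 2\lambda^2.
\]
There is no real obstacle here; the only thing to be careful about is the falling-factorial expansion, and the cleanest way to write the proof is simply to quote the Poisson factorial moment identity and do the arithmetic. (Alternatively, one could use the moment generating function $\Ex{e^{t\bm X}} = \exp(\lambda(e^t - 1))$ to read off $\Ex{\bm X^k}$ for $k \leq 4$ and compute $\Var{\bm X^2 - \bm X}$ directly, but the factorial-moment route avoids any cancellation bookkeeping.)
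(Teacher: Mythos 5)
Your proof is correct and takes a genuinely different route from the paper's. The paper simply quotes the raw moments $\Ex{X^k}$ for $k \le 4$ (citing \cite{Rio37}) and expands $\Var{X(X-1)} = \Ex{X^4} - 2\Ex{X^3} + \Ex{X^2} - \bigl(\Ex{X^2} - \Ex{X}\bigr)^2$ directly, relying on algebraic cancellation. You instead quote the cleaner identity $\Ex{X(X-1)\dotsm(X-k+1)} = \lambda^k$ and rewrite $X^2(X-1)^2$ in the falling-factorial basis
\[
  X^2(X-1)^2 = X(X-1)(X-2)(X-3) + 4\,X(X-1)(X-2) + 2\,X(X-1),
\]
which gives $\Ex{(X(X-1))^2} = \lambda^4 + 4\lambda^3 + 2\lambda^2$ with no cancellation to track. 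Both approaches are standard and give the same answer with similar effort; the factorial-moment route is a bit slicker because each term contributes its own monomial, whereas the raw-moment expansion involves quite a bit of cross-term cancellation. One small thing you should make explicit: matching at $X=2,3,4$ alone pins down $a,b$ only once you note that both sides are degree-4 monic polynomials vanishing at $X=0$ and $X=1$ (so that you have agreement at five points, or equivalently that the difference factors through $X(X-1)$ and has degree at most $3$). Alternatively, $X^2(X-1)^2 - X(X-1)(X-2)(X-3) = X(X-1)(4X-6)$ by direct expansion, and then $4X-6 = 4(X-2)+2$ gives $a=4,\ b=2$ without any interpolation.
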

\begin{proof}
    The Poisson random variable $\bm{X}$ has the following raw moments (see \eg \cite{Rio37}):
    \begin{align*}
        \Ex{X}   &= \lambda \,, \\
        \Ex{X^2} &= \lambda + \lambda^2 \,, \\
        \Ex{X^3} &= \lambda + 3\lambda^2 + \lambda^3 \,, \\
        \Ex{X^4} &= \lambda + 7\lambda^2 + 6\lambda^3 + \lambda^4 \,.
    \end{align*}
    Therefore we have
    \begin{align*}
        \Var{X(X-1)}
        &= \Ex{(X(X-1))^2} - \Ex{X(X-1)}^2
        = \Ex{(X^2 - X)^2} - (\Ex{X^2} - \Ex{X})^2 \\
        &= \Ex{X^4} - 2\Ex{X^3} + \Ex{X^2} - \Ex{X^2}^2 + 2\Ex{X^2}\Ex{X} - \Ex{X}^2 \\
        &= (\lambda + 7\lambda^2 + 6\lambda^3 + \lambda^4)
            - 2(\lambda + 3\lambda^2 + \lambda^3)
            + (\lambda + \lambda^2)
            - (\lambda + \lambda^2)^2
            + 2(\lambda + \lambda^2)\lambda
            - \lambda^2 \\
        &= 4\lambda^3 + 2\lambda^2 \,. \qedhere
    \end{align*}
\end{proof}

\begin{lemma}
\label{prop:general-variance-second-component}
Let $H$ be a subgraph
with induced buckets $\Gamma = (\Gamma_1, \dotsc, \Gamma_b)$, and let $p$ be a measure on $\bZ_n$
Then the conditional variance of $\bm{Y}$ given $\bm{H} = H$ satisfies
    \begin{align*}
        \Varuc{\bm{T}}{\bm{Y}}{\bm{H} = H} = 2 \|p_{|\Gamma}\|_2^2 + 4m \|p_{|\Gamma}\|_3^3 \,.
    \end{align*}
\end{lemma}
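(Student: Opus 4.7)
The plan is a direct calculation exploiting Poissonization and the independence of the bucket counts conditional on $\bm{H} = H$. First I would observe that, writing $\bm{Y} = \frac{1}{m} \sum_{i=1}^n \bm{X}_i(\bm{X}_i - 1)$, conditional on $\bm{H} = H$ the bucket counts $\bm{X}_i = \sum_{j \in \Gamma_i} \bm{T}_j$ are \emph{independent} Poisson random variables, because the buckets $\Gamma_1, \dotsc, \Gamma_b$ partition $\bZ_n$ and the underlying $\bm{T}_j$ are mutually independent Poisson. Specifically, $\bm{X}_i \sim \Poi(\lambda_i)$ with $\lambda_i = m \cdot p[\Gamma_i]$ for $i \le b$, and $\bm{X}_i = 0$ for $i > b$ (so these terms vanish from the sum).

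Second, by independence of the $\bm{X}_i$'s, the variance of the sum equals the sum of variances:
\[
    \Varuc{\bm{T}}{\bm{Y}}{\bm{H} = H}
    = \frac{1}{m^2} \sum_{i=1}^b \Var{\bm{X}_i(\bm{X}_i - 1)} \,.
\]
Third, I would apply \cref{prop:poisson-expression-variance} to each term to obtain $\Var{\bm{X}_i(\bm{X}_i - 1)} = 4 \lambda_i^3 + 2 \lambda_i^2$, substitute $\lambda_i = m \cdot p[\Gamma_i]$, and collect terms:
\[
    \frac{1}{m^2} \sum_{i=1}^b \left( 4 m^3 p[\Gamma_i]^3 + 2 m^2 p[\Gamma_i]^2 \right)
    = 4m \sum_{i=1}^b p[\Gamma_i]^3 + 2 \sum_{i=1}^b p[\Gamma_i]^2 \,.
\]
Finally, recognizing $\sum_{i=1}^b p[\Gamma_i]^k = \|p_{|\Gamma}\|_k^k$ by the definition of the $\Gamma$-bucketing of $p$ yields the claimed identity. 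There is no real obstacle here: the only subtlety is justifying that the Poisson $\bm{T}_j$'s aggregate into independent Poissons on disjoint buckets, which is the standard additivity/independence property of Poisson random variables.
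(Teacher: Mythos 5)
Your proposal is correct and follows the same route as the paper: conditional on $\bm{H}=H$ the bucket counts $\bm{X}_i$ are independent Poissons by additivity and disjointness of the buckets, the variance of the sum splits, and \cref{prop:poisson-expression-variance} plus the definition of the bucketed vector finish the calculation. The paper states the Poisson independence step implicitly; otherwise the two proofs are identical.
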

\begin{proof}
Using \cref{prop:poisson-expression-variance}, the desired variance is
\begin{align*}
      \Varuc{\bm{T}}{\bm{Y}}{\bm{H} = H}
      &= \Var{\frac{1}{m} \sum_{i=1}^b \bm{X}_i (\bm{X}_i - 1)}
      = \frac{1}{m^2} \sum_{i=1}^b \left[
          4\left(m p\left[\Gamma_i\right]\right)^3
          + 2\left(m p\left[\Gamma_i\right]\right)^2
          \right] \\
      &= 4m \|p_{|\Gamma}\|_3^3 + 2\|p_{|\Gamma}\|_2^2 \,. \qedhere
\end{align*}
\end{proof}

\subsubsection{Relative Concentration}
\label{sec:relative-concentration}

One of the main tools in our analysis will be ``relative concentration''\!\!, which compares the
probability mass of $p$ inside the circular intervals $I$, to another measure $q$ on the edges.

\begin{definition}[Relative Concentration]
Let $\cI \in \{\cI^\cyc, \cI^\pth\}$ and let $p, q : \bZ_n \to \bR_{\geq 0}$ be partial
distributions. Let $t \in \bR$. Then we define
\[
    \RelativeConcentrationT{p}{q}{t} \define
        \max_{I \in \cI : |I| \le n}
            \frac{p[I]}{ \max\left\{ q[I^*], t \right\} } \,.
\]
\end{definition}

We will require the following lemma, which allows us to find an interval $I$ exhibiting a large
difference between $p[I]$ and $q[I^*]$ if we assume high relative concentration
$\RelativeConcentrationT{p}{q}{t}$. 

\begin{lemma}
\label{lemma:relative-concentration-structural}
Let $\cI \in \{ \cI^\cyc, \cI^\pth \}$, and let $p, q : \bZ_n \to \bR_{\geq 0}$ be partial
distributions. Then there exists $I \in \cI$ of size at most
$n$ satisfying the following:
    \begin{enumerate}
        \item $q[I^*] \le t$; and
        \item $p[I] \ge \frac{1}{2} \cdot t \cdot \RelativeConcentrationT{p}{q}{t}$.
    \end{enumerate}
\end{lemma}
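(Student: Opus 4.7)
Let $I_0 \in \cI$ with $|I_0| \le n$ be an interval achieving the maximum in the definition of $\RelativeConcentrationT{p}{q}{t}$, so that $p[I_0] = \RelativeConcentrationT{p}{q}{t} \cdot \max\{q[I_0^*], t\}$. The plan is to split into two cases based on whether $q[I_0^*]$ is small or large.

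\emph{Easy case:} If $q[I_0^*] \le t$, then $\max\{q[I_0^*], t\} = t$, so $I_0$ itself satisfies both conclusions and we are done (in fact with the stronger bound $p[I_0] \ge t \cdot \RelativeConcentrationT{p}{q}{t}$).

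\emph{Hard case:} Suppose $q[I_0^*] > t$. Write $I_0 = \llangle i_0, d \rrangle$ with $d \geq 2$, so $I_0^*$ has edges $e_1, e_2, \dotsc, e_{d-1}$ listed in order along $I_0$. The plan is to partition $I_0$ into consecutive sub-intervals (each in $\cI$, of size at most $n$) by deleting a small number of ``cut'' edges, such that every sub-interval has $q$-mass on its edges at most $t$; then an averaging argument over the sub-intervals extracts one with large $p$-mass. Concretely, I would set $k = \lceil q[I_0^*]/t \rceil$ and, using prefix sums $Q_j = q(e_1)+\dotsm+q(e_j)$, choose cut edges $e_{j_1}, \dotsc, e_{j_{k-1}}$ where $j_i$ is the smallest index with $Q_{j_i} \geq i \cdot t$. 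Deleting these cuts splits $I_0$ into sub-intervals $J_1, \dotsc, J_k$; by minimality of each $j_i$, the $q$-mass on edges of each $J_a$ is strictly less than $t$, verifying the first desired property. Each $J_a$ lies inside $I_0$, hence belongs to $\cI$ and has $|J_a| \le n$.

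Since the $J_a$ partition the vertices of $I_0$, we have $\sum_a p[J_a] = p[I_0]$, so by averaging there exists $a$ with
\[
    p[J_a] \;\geq\; \frac{p[I_0]}{k} \;\geq\; \frac{p[I_0] \cdot t}{2\, q[I_0^*]} \;=\; \frac{t}{2} \cdot \RelativeConcentrationT{p}{q}{t},
\]
where the second inequality uses $k \le q[I_0^*]/t + 1 \le 2\,q[I_0^*]/t$ (since $q[I_0^*] > t$), and the last equality uses the hard-case formula $p[I_0]/q[I_0^*] = \RelativeConcentrationT{p}{q}{t}$. Taking $I = J_a$ completes the proof.

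\textbf{Main obstacle.} The argument is fairly short; the only subtlety is where the factor of $\tfrac{1}{2}$ in the conclusion comes from. This factor is essentially optimal for a partitioning-and-averaging strategy, because using $k = \lceil q[I_0^*]/t \rceil$ cuts is necessary to force each piece to have $q$-edge-mass at most $t$, and the bound $k \leq 2q[I_0^*]/t$ (valid in the hard case) is what ultimately determines the constant. I would also double-check a minor edge case ($d \in \{0,1\}$) where $I_0^* = \emptyset$, which falls harmlessly into the easy case.
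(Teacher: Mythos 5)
Your proof is correct, and it takes a genuinely different route from the paper's. The paper picks a \emph{minimum-size} interval $I$ with $q[I^*] \geq t$ and $p[I] = q[I^*]\,\RelativeConcentrationT{p}{q}{t}$, then runs a delicate case analysis on whether $I$ admits a ``balanced'' partition $I = L \cup R$ (both halves' edge-masses on the same side of $t$), using monotonicity of prefix $q$-masses to find a cut point and deriving contradictions to minimality where needed. You instead take the maximizer $I_0$ directly and, in the hard case $q[I_0^*] > t$, chop $I_0$ into $k = \lceil q[I_0^*]/t \rceil$ pieces using prefix-sum thresholds, so each piece has $q$-edge-mass at most $t$ by construction, and then average over the pieces. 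This partition-and-pigeonhole argument is shorter, avoids minimality entirely, and makes the source of the constant $\tfrac12$ transparent ($\lceil x \rceil \leq 2x$ for $x \geq 1$). One tiny inaccuracy: the last piece $J_k$ has $q[J_k^*] \leq t$ (not strictly less than $t$ as you wrote), but the lemma only requires $\leq$, so nothing breaks. You should also note explicitly that if some thresholds $j_i$ coincide the corresponding pieces are empty, which only helps the averaging; and that each $J_a \subseteq I_0$ inherits membership in $\cI$ and the size bound $|J_a| \leq n$, which you did mention. Both proofs are valid; yours is the more elementary one.
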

\begin{proof}
    By definition of relative concentration, there exists an interval $I \in \cI$ of size at most
    $n$ such that either
    \begin{enumerate}
    \item $q[I^*] \le t$ and $p[I] = t \RelativeConcentrationT{p}{q}{t}$; or
    \item $q[I^*] \ge t$ and $p[I] = q[I^*] \RelativeConcentrationT{p}{q}{t}$.
    \end{enumerate}
    In the former case, $I$ satisfies the required conditions and we are done.

    Therefore, we may assume that the second condition holds.
    Let $I$ be an interval of minimum size satisfying
    $q[I^*] \ge t$ and $p[I] \ge q[I^*] \RelativeConcentrationT{p}{q}{t}$ (in particular,
    equality will hold). Note that we must have $|I| \ge 2$, since otherwise $I^*$ would be empty,
    contradicting the assumption that $q[I^*] \ge t > 0$. We now consider two cases.

    \textbf{Case 1.} Suppose that we may partition $I = L \cup R$ where $L, R$ are nonempty circular
intervals such that one of the following two conditions hold\footnote{For two intervals $I_1 =
\llangle i, d_1 \rrangle$ and $I_2 = \llangle i, d_2 \rrangle$ with $d_1 > d_2 > 0$ (\ie $I_2$ is a
nonempty prefix of $I_1$), we will write $I_1 \setminus I_2$ to denote the interval $\llangle i+d_2,
d_1-d_2\rrangle$.}, call this pair of conditions ($*$):
    \begin{enumerate}
        \item $q[L^*], q[I^* \setminus L^*] \le t$; or
        \item $q[L^*], q[I^* \setminus L^*] \ge t$.
    \end{enumerate}
    If the first condition holds, we conclude the proof as follows. Since $p[I] = p[L] + p[R]$ and
    $p[I] = q[I^*] \RelativeConcentrationT{p}{q}{t} \ge t \RelativeConcentrationT{p}{q}{t}$,
    it must be that either $p[L] \ge \frac{1}{2} t \RelativeConcentrationT{p}{q}{t}$ or
    $p[R] \ge \frac{1}{2} t \RelativeConcentrationT{p}{q}{t}$. In the first case, $L$ satisfies the
    required conditions. In the second case, since $q[R^*] \le q[I^* \setminus L^*] \le t$,
    we conclude that $R$ satisfies the required conditions and we are done.

    If the second condition holds (which in particular implies that $L^*, I^* \setminus L^*$
    are nonempty), note that since $p[I] = p[L] + p[R]$ and
    $q[I^*] = q[L^*] + q[I^* \setminus L^*]$, we have
    \[
        \RelativeConcentrationT{p}{q}{t}
        = \frac{p[I]}{q[I^*]}
        = \frac{p[L] + p[R]}{q[L^*] + q[I^* \setminus L^*]}
        \le
        \max\left\{ \frac{p[L]}{q[L^*]}, \frac{p[R]}{ q[I^* \setminus L^*] } \right\}  \,.
    \]
    If $\frac{p[L]}{q[L^*]} \ge \RelativeConcentrationT{p}{q}{t}$, then since $q[L^*] \ge t$,
    we conclude that $L$ contradicts the minimality of $I$. Therefore we must have
    $\frac{p[R]}{q[R^*]} \ge \frac{p[R]}{q[I^* \setminus L^*]} \ge \RelativeConcentrationT{p}{q}{t}$.
    Now, if $q[R^*] \le t$, then $R$ satisfies the required conditions, since we have
    $p[R] \ge q[I^* \setminus L^*] \RelativeConcentrationT{p}{q}{t}
    \ge t \RelativeConcentrationT{p}{q}{t}$.
    Otherwise, if $q[R^*] > t$, then $R$ contradicts the minimality of $I$.
    This completes the proof in the first case.

\textbf{Case 2.} In the second case, we have that every partition $I = L \cup R$ with nonempty $L$ and $R$ fails
both of the conditions in ($*$). Write $I = \llangle i,d \rrangle$ where, as previously remarked, $d
\ge 2$.  Consider the sequences of circular intervals $L_1, \dotsc, L_{d-1}$ given by $L_j \define
\llangle i,j \rrangle$, and write $R_j \define \llangle i+j, d-j \rrangle = I \setminus L_i$, so
that $L^*_j = \llangle i, j-1 \rrangle$ and $R^*_j
= \llangle i+j, d-j-1 \rrangle$. Then each $L_j \cup R_j$ for $j \in [d-1]$ is a partition of $I$ with nonempty
$L_j$ and $R_j$, which therefore must fail the two conditions in ($*$).

    Now observe that $q[L^*_j]$ is non-decreasing with $j$ and $q[I^* \setminus L^*_j], q[R^*_j]$
    are non-increasing with $j$, while $q[L^*_1] = q[\emptyset] = 0$ and hence
    $q[L^*_1] \le q[I^* \setminus L^*_1]$. Fix the maximum index $j \in [d-1]$ satisfying
    $q[L^*_j] \le q[I^* \setminus L^*_j]$. We claim that $q[L^*_j], q[R^*_j] \le t$.

    Assume for the sake of contradiction that $q[L^*_j] > t$. By the selection of $j$, we have
    $q[I^* \setminus L^*_j] > t$. But then the partition $I = L_j \cup R_j$ satisfies the second
    condition in ($*$), a contradiction. So we have verified that $q[L^*_j] \le t$.

    Now assume for the sake of contradiction that $q[R^*_j] > t$. Then $R^*_j$ is nonempty, implying
    that $j < d-1$, and one can verify that $R^*_j = I^* \setminus L^*_{j+1}$. Therefore
    $q[I^* \setminus L^*_{j+1}] > t$. If $q[L^*_{j+1}] \ge t$, then the partition
    $I = L_{j+1} \cup R_{j+1}$ satisfies the second condition in ($*$), again a contradiction.
    We may therefore assume that $q[L^*_{j+1}] < t$. But this implies that
    $q[L^*_{j+1}] < q[I^* \setminus L^*_{j+1}]$, contradicting the maximality of our choice of $j$.
    Therefore we have verified that $q[R^*_j] \le t$.

    Finally, recall that $p[L_j] + p[R_j] = p[I] \ge q[I^*] \RelativeConcentrationT{p}{q}{t}
    \ge t \RelativeConcentrationT{p}{q}{t}$, and therefore either
    $p[L_j] \ge \frac{1}{2} t \RelativeConcentrationT{p}{q}{t}$ or
    $p[R_j] \ge \frac{1}{2} t \RelativeConcentrationT{p}{q}{t}$.
    Since $q[L^*_j], q[R^*_j] \le t$, it follows that either $L_j$ or $R_j$ satisfies the
    required conditions.
\end{proof}

\section{Testing Uniformity in the Confused Collector Model}
\label{sec:upper-bound-confused-collector}

Following the setup from \cref{sec:collision-based-testing-general}, we consider the task of
testing uniformity of an unknown distribution $p$ over the vertices $\bZ_n$ of a base graph
$G = (\bZ_n, E)$ in the confused collector model. Here, $G$ is the path or the cycle and every
edge $(i, i+1) \in E$ has weight $w(e) = \eta$, where $\eta$ is the resolution parameter.

Our analysis will treat the cases of the cycle and the path in a unified presentation. In the case
of the cycle, $E$ contains all $n$ edges connecting each vertex $i$ to $i+1$ (mod $n$), and the
set of relevant intervals is $\cI = \cI^\cyc$. We will write $\bPhi = \bPhi^{\cyc}$ for the
corresponding join matrix.
In the case of the path, $E$ does not contain an edge between $0$ and $n-1$, the set of relevant
intervals is $\cI = \cI^\pth$, and the join matrix is $\bPhi = \bPhi^{\pth}$. When a
result depends on the choice of domain, we will explicitly state the domain under consideration.

The tester is \cref{alg:tester-confused-collector}, and consists of two steps:
\begin{enumerate}
    \item Concentration test: checks whether any count in the sample is too large;
        this case corresponds to highly concentrated distributions, which can be rejected.
    \item Collision-based test: accept or reject depending on whether the test statistic $\bm Y$ is
        below a certain threshold.
\end{enumerate}

\begin{algorithm}[H]
    \caption{Uniformity tester in the confused collector model.}
    \label{alg:tester-confused-collector}
      
    \hspace*{\algorithmicindent}
        Set $m \gets c \cdot \frac{\sqrt{n}}{\epsilon^2} \cdot \frac{\log^2 n}{\eta^{3/2}}$. \\
    \hspace*{\algorithmicindent} \textbf{Constants:}
        $\alpha, \beta, L, c > 0$ to be defined later. \\
    \hspace*{\algorithmicindent} \textbf{Requires:}
        $\eta \ge \frac{L \log^{4/5} n}{n^{1/5} \epsilon^{4/5}}$. \\
    \begin{algorithmic}[1]
        \Procedure{UniformityTester-ConfusedCollector}{$p, n, \epsilon, \eta$}
            \State Let $X = (X_1, \dotsc, X_n)$ be the variables defined in
                                            \cref{sec:collision-based-testing-general}
                for a sample of size $\Poi(m)$ from $p$.
            \State \textbf{If} $\max_i X_i \ge \alpha \log n$ then \textbf{reject}.
            \State $Y \gets \frac{1}{m} \sum_i X_i(X_i-1)$.
            \State \textbf{If}
                $Y \ge \frac{m}{n^2}\sum_{i,j} \phi_{i,j} + \beta \frac{m}{n} \epsilon^2 \eta$
                then \textbf{reject}.
            \State \textbf{Accept}.
        \EndProcedure
    \end{algorithmic}
\end{algorithm}

\paragraph{Remark on the optimality of the collision-based tester.} Considering that we give a
Poissonized tester whose main statistic $\bm{Y}$ is equivalent to the collision-based statistic of
\cite{GR00} when $\eta=1$, it may seem surprising that we claim a sample complexity of
$\widetilde O\left( \sqrt{n}/\epsilon^2 \right)$ --- as opposed to
$O\left(\sqrt{n}/\epsilon^4\right)$--- when it is known that, for an analysis based on
bounding the variance of $\bm{Y}$ and applying Chebyshev's inequality, establishing the optimal
sample complexity is only possible with a different test statistic (\eg the modified chi-squared
statistic~\cite{CDVV14,DKN15b,VV17}) or a careful analysis of the non-Poissonized
tester~\cite{DGPP19} (see also the Remark in Section~2 therein). Our analysis implicitly avoids this
issue via our relative concentration test, which upper bounds $\|p\|_\infty$,
but another way to resolve the apparent conflict is to notice that dropping this extra test and
specializing our proof to the case $\eta=1$ would only incur a dependence on $1/\epsilon^4$, rather
than $\sqrt{n}/\epsilon^4$; and since our analysis only handles the case $\epsilon \ge
\widetilde \Omega\left( n^{-1/4} \right)$, the term $\sqrt{n}/\epsilon^2$ dominates $1/\epsilon^4$.

\subsection{Easy Case: Highly Concentrated Distributions}
\label{subsec:confused-collector-highly-concentrated}

We would like to call distribution $p$ ``highly concentrated'' if it contains too much mass in a
small contiguous range of the vertices $V$.
The tester will detect the highly concentrated distributions and reject,
while non-highly concentrated distributions are well-behaved in our analysis of the variance of the
main test statistic.
Concretely, we define highly concentrated distributions by imposing
a threshold on the relative concentration $\RelativeConcentrationT{p}{w}{t}$ introduced in
\cref{sec:collision-based-testing-general}, where $w$ is the constant vector of edge weights given
by $w(e) = \eta$.

\begin{definition}[Highly concentrated distributions]
Given a constant $C > 0$, positive integer $m$, resolution parameter $\eta$, and probability
distribution $p$ over $\bZ_n$, we say that $p$ is \emph{$C$-highly concentrated (under resolution
$\eta$ with respect to $m$)} if $\RelativeConcentrationT{p}{w}{t} \ge \frac{C \log^2 n}{m}$, where
$t = \frac{1}{\log n}$.
\end{definition}

One may think of this definition as follows: the average size of a bucket (connected component)
with resolution $\eta$ is $\approx 1/\eta$. In such an interval $I = \llangle i,d \rrangle$ with
$I^* = \llangle i,d-1 \rrangle$, we obtain $w[I^*] \approx 1$. If $p[I]/w[I^*] \ge \poly\log(n)/m$,
then the sample $\Poi(m p[I])$ ought to produce a large entry $\bm{X}_i$, so the
algorithm should be able to reject such distributions. On the other hand, for intervals $I$ that are
very small, the only way to ensure that the algorithm will likely reject is if $p[I]$ is still large
regardless of how small $w[I^*]$ is, which motivates the choice of $t$ in the definition.

\begin{remark}
    \label{remark:non-concentrated-l-infty-confused-collector}
    If $p$ is not $C$-highly concentrated, then in particular $\|p\|_\infty < \frac{C\log n}{m}$,
    as can be seen by taking intervals $I = \llangle i,1 \rrangle$ for each $i \in \bZ_n$.
\end{remark}

We now show that the first step of the tester correctly accepts the uniform distribution and
rejects highly concentrated distributions with good probability. Therefore, we will be able
to assume that $p$ is not highly concentrated when analyzing the second step of the tester.
We will need the following auxiliary result.

\begin{proposition}[Buckets are almost always small]
    \label{prop:bucket-sizes-small} Let $K \ge 2$ be a constant, and
    suppose $n \ge 3$. Then the buckets $\bm{\Gamma} = (\bm{\Gamma}_i, \dotsc, \bm{\Gamma}_{\bm{b}})$
    induced by $\bm{H}$ satisfy
    \[
        \abs*{\bm{\Gamma}_i} \le \frac{2K\log n}{\eta} \qquad \forall i \in [\bm{b}]
    \]
    except with probability at most $1/n^K$.
\end{proposition}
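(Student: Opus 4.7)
The plan is to reduce the claim about bucket sizes to a bound on the length of the longest consecutive run of edges present in $\bm{H}$, then apply a union bound using the independence of edge inclusions. Since the base graph $G$ is a path or cycle, every connected component of $\bm{H}$ is itself a subpath (respectively, subarc) of $G$. Consequently, if $|\bm{\Gamma}_i| \ge k$ for some integer $k \ge 1$, then $\bm{\Gamma}_i$ contains exactly $k-1$ consecutive edges of $G$, and all of these edges must lie in $\bm{H}$.

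Set $d = \tfrac{2K\log n}{\eta}$. Because $|\bm{\Gamma}_i|$ is an integer, the event $|\bm{\Gamma}_i| > d$ forces $|\bm{\Gamma}_i| \ge \lfloor d \rfloor + 1$, and so there is some edge position (among the at most $n$ edges of $G$) from which $\lfloor d \rfloor$ consecutive edges all lie in $\bm{H}$. Each edge appears in $\bm{H}$ independently with probability $1-\eta$, so any fixed block of $\lfloor d \rfloor$ consecutive edges is entirely in $\bm{H}$ with probability $(1-\eta)^{\lfloor d \rfloor}$. A union bound over the at most $n$ possible starting positions, together with $1-\eta \le e^{-\eta}$ and $\lfloor d \rfloor \ge d-1$, gives
\[
    \Pr\left[\exists i \in [\bm{b}] : |\bm{\Gamma}_i| > d\right]
    \le n \cdot (1-\eta)^{\lfloor d \rfloor}
    \le n \cdot e^{-\eta(d-1)}
    = n \cdot e^{-2K \log n + \eta}
    \le e \cdot n^{1-2K}.
\]
Since $K \ge 2$ and $n \ge 3$, we have $n^{K-1} \ge 3 > e$, whence $e \cdot n^{1-2K} \le n^{-K}$, as required.

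There is no real obstacle in this proof; the only subtleties are purely arithmetic: correctly translating between bucket size (a vertex count) and the required run of edges (one fewer), and handling the fact that $d$ need not be an integer. The argument is identical for the path and the cycle, since in both cases the total number of edges is at most $n$ and each connected component of $\bm{H}$ is necessarily a contiguous interval of the underlying chain; the only feature of $\eta$ used is the independence of edge inclusions and the resulting geometric decay $(1-\eta)^{\lfloor d \rfloor}$.
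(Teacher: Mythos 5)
Your proof is correct and follows essentially the same approach as the paper's: reduce a large bucket to a run of consecutive edges present in $\bm{H}$, bound the probability of any fixed such run by $(1-\eta)^{\text{run length}}$, then union bound over the at most $n$ starting positions. The only differences are minor arithmetic choices for cleaning up the non-integer threshold (you use $\lfloor d\rfloor \ge d-1$ together with $e^{\eta}\le e$; the paper rounds $d$ up to an integer and uses $\log n/\eta \ge 1$), and both hinge on $K\ge 2$ and $n\ge 3$ in the same way.
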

\begin{proof}
    Let $d$ be the smallest integer satisfying $d \ge \frac{2K\log n}{\eta}$, and fix any interval
    $I = \llangle i, d \rrangle$. The probability that all vertices in this interval are joined is
    \[
        \Pr{\bm{J}(I)=1} = (1-\eta)^{d-1}
        \le (1-\eta)^{\frac{2K\log n}{\eta} - 1}
        \le (1-\eta)^{\frac{(2K-1)\log n}{\eta}}
        \le e^{-(2K-1)\log n}
        = 1/n^{2K-1}
        \le 1/n^{K+1} \,,
    \]
    where we used the facts that $n \ge 3 \implies \frac{\log n}{\eta} \ge 1$ and that
    $K \ge 2 \implies 2K-1 \ge K+1$. Now, if any bucket has size at least $d$, then some interval
    $I = \llangle i,d \rrangle$ satisfies $J(I)=1$. Since there are at most
    $n$ such intervals, the probability of this event is at most $1/n^K$ by the union bound.
\end{proof}

\noindent
We will need the following tail bounds for the Poisson distribution, as stated in \cite{Can17}.

\begin{fact}
    \label{fact:poisson-concentration}
    Let $\bm{X} \sim \Poi(\lambda)$ for some $\lambda > 0$. Then for any $t > 0$,
    \[
        \Pr{\bm{X} \le \lambda - t}, \Pr{\bm{X} \ge \lambda + t}
        \le e^{-\frac{t^2}{2(\lambda+t)}} \,.
    \]
\end{fact}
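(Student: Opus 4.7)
}
The plan is to use the classical Chernoff/Cram\'er bound based on the moment generating function of a Poisson random variable. Recall that for $\bm{X} \sim \Poi(\lambda)$ and any $s \in \mathbb{R}$, we have $\Ex{e^{s\bm{X}}} = \exp(\lambda(e^s - 1))$. For the upper tail, I would apply Markov's inequality to $e^{s\bm{X}}$ for $s > 0$, obtaining
\[
    \Pr{\bm{X} \ge \lambda + t} \le \exp\bigl(-s(\lambda+t) + \lambda(e^s - 1)\bigr),
\]
and then optimize by choosing $s = \log(1 + t/\lambda)$, which yields $\Pr{\bm{X} \ge \lambda + t} \le \exp(-\lambda \phi(t/\lambda))$ where $\phi(u) \define (1+u)\log(1+u) - u$.

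The key analytic step, and the only mildly technical part, is the elementary inequality
\[
    \phi(u) \ge \frac{u^2}{2(1+u)} \qquad \text{for all } u \ge 0,
\]
since plugging $u = t/\lambda$ immediately gives $\lambda \phi(t/\lambda) \ge \frac{t^2}{2(\lambda+t)}$, which is exactly the claimed exponent. To verify this inequality, I would set $f(u) \define \phi(u) - \frac{u^2}{2(1+u)}$, observe $f(0) = f'(0) = 0$, and check by direct calculation that $f''(u) = \frac{1}{1+u} - \frac{1}{(1+u)^3} = \frac{u(u+2)}{(1+u)^3} \ge 0$, which forces $f \ge 0$ on $[0, \infty)$.

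For the lower tail, I would apply Markov's inequality to $e^{-s\bm{X}}$ for $s > 0$ (and assume $t < \lambda$, since otherwise the event $\{\bm{X} \le \lambda - t\}$ is contained in $\{\bm{X} < 0\}$ and has probability zero). Setting $s = -\log(1 - t/\lambda)$ yields $\Pr{\bm{X} \le \lambda - t} \le \exp(-\lambda \psi(t/\lambda))$ where $\psi(u) \define (1-u)\log(1-u) + u$ on $[0,1)$. Since $\psi(0) = 0$, $\psi'(u) = -\log(1-u)$, and $\psi''(u) = 1/(1-u) \ge 1$, integrating twice gives $\psi(u) \ge u^2/2$, so
\[
    \Pr{\bm{X} \le \lambda - t} \le \exp(-t^2/(2\lambda)) \le \exp(-t^2/(2(\lambda+t))),
\]
which is the claimed (symmetric, slightly weaker than sharp) lower-tail bound.

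I do not expect any serious obstacle here: both bounds reduce, after the standard MGF optimization, to elementary calculus inequalities for $\phi$ and $\psi$. The only choice is to state the lower-tail bound with $\lambda + t$ (rather than $\lambda$) in the denominator, which is weaker but keeps the statement symmetric with the upper tail and is the form the rest of the paper seems to want to invoke.
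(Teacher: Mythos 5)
Your proof is correct and is the standard Chernoff/Bennett-type argument; note that the paper does not actually prove this fact but only cites it (to Canonne's survey~\cite{Can17}), so your write-up is a valid self-contained substitute. The MGF optimizations, the inequality $\phi(u)\ge u^2/(2(1+u))$ via $f(0)=f'(0)=0$ and $f''(u)=u(u+2)/(1+u)^3\ge 0$, and the inequality $\psi(u)\ge u^2/2$ via $\psi''(u)=1/(1-u)\ge 1$ are all verified correctly.

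One tiny imprecision in the parenthetical handling of the lower tail: at $t=\lambda$ the event $\{\bm{X}\le\lambda-t\}=\{\bm{X}=0\}$ has probability $e^{-\lambda}$, not zero. The claimed bound still holds there, since $e^{-\lambda}\le e^{-\lambda/4}=e^{-t^2/(2(\lambda+t))}$ at $t=\lambda$; and for $t>\lambda$ the event is indeed empty because $\bm X\ge 0$ and $\lambda-t<0$. So the argument is fine, but you should either restrict your Chernoff derivation to $0<t<\lambda$ and dispose of $t\ge\lambda$ by the trivial remark that $\Pr{\bm X\le\lambda-t}\le\Pr{\bm X=0}=e^{-\lambda}\le e^{-t^2/(2(\lambda+t))}$ when $t\ge\lambda$, or drop the incorrect claim that the probability vanishes.
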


The result below makes the assumption that $m \le n \eta$, which simplifies the analysis
and intuitively corresponds to the sublinear sample complexity regime in the standard uniformity
testing model. This assumption turns out to hold for the range of parameters we consider,
but not necessarily in more extreme regimes (see \cref{remark:confused-collector-parameter-ranges}).

\begin{lemma}
    \label{lemma:concentration-test-eta}
    For sufficiently large constant $\alpha > 0$ and all sufficiently large $n$, the following holds.
    Suppose $m \le n \eta$. Then for any distribution $p$ over $\bZ_n$, we have:
    \begin{enumerate}
        \item If $p$ is uniform, the first step of the tester only rejects with probability
            at most $1/100$; and
        \item If $p$ is $4\alpha$-highly concentrated, the first step of the tester rejects
            with probability at least $99/100$.
    \end{enumerate}
\end{lemma}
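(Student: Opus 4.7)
The plan is to analyze the two parts of the lemma separately using the Poisson structure of $\bm{X}_i$ together with, respectively, the bucket-size bound of \cref{prop:bucket-sizes-small} for the completeness side, and the interval-finding guarantee of \cref{lemma:relative-concentration-structural} for the soundness side.

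For part 1 (uniform $p$), I would first invoke \cref{prop:bucket-sizes-small} with some constant $K$ (say $K=3$) so that, except on an event of probability $\le n^{-K}$, every bucket $\bm{\Gamma}_i$ has $|\bm{\Gamma}_i|\le \tfrac{2K\log n}{\eta}$. Conditioned on this event, since $p$ is uniform we have $p[\bm{\Gamma}_i] = |\bm{\Gamma}_i|/n$, so $m\cdot p[\bm{\Gamma}_i] \le \tfrac{2Km\log n}{n\eta} \le 2K\log n$ using the hypothesis $m\le n\eta$. Since $\bm{X}_i \sim \Poi(m\, p[\bm{\Gamma}_i])$, the upper Poisson tail (\cref{fact:poisson-concentration}) with $\lambda = 2K\log n$ and $t = (\alpha-2K)\log n$ gives $\Pr{\bm{X}_i \ge \alpha\log n} \le \exp\bigl(-\tfrac{(\alpha-2K)^2 \log n}{2\alpha}\bigr)$, which is at most $n^{-(K+1)}$ for $\alpha$ sufficiently large. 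A union bound over the $n$ buckets, combined with the bucket-size event, bounds the total rejection probability by $2n^{-K}$, which is less than $1/100$ for large $n$.

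For part 2 ($p$ is $4\alpha$-highly concentrated), the plan is to exhibit a single bucket whose total sample count is likely large. By hypothesis $\rho := \RelativeConcentrationT{p}{w}{t} \ge \tfrac{4\alpha \log^2 n}{m}$ with $t=1/\log n$, so \cref{lemma:relative-concentration-structural} produces a circular interval $I$ satisfying $w[I^*] \le 1/\log n$ and $p[I] \ge \tfrac{1}{2}t\rho \ge \tfrac{2\alpha\log n}{m}$. Since $w$ is the constant-$\eta$ weight vector, $w[I^*]=\eta(|I|-1)$, so $\eta(|I|-1)\le 1/\log n$. When $|I|\ge 2$, Bernoulli's inequality yields
\[
\Pr{\bm{J}(I)=1} = (1-\eta)^{|I|-1} \ge 1-\eta(|I|-1) \ge 1 - \tfrac{1}{\log n},
\]
and the case $|I|=1$ is trivial since then $\bm{J}(I)=1$ always. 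Condition on $\bm{J}(I)=1$: the entire interval $I$ lies in a single bucket $\bm{\Gamma}_{\bm{\gamma}(i)}$, so $\bm{X}_{\bm{\gamma}(i)} \ge \bm{T}[I] \sim \Poi(m\,p[I])$ with mean at least $2\alpha \log n$. The lower Poisson tail (\cref{fact:poisson-concentration}) with $\lambda \ge 2\alpha\log n$ and $t=\alpha\log n$ gives $\Pr{\bm{T}[I] < \alpha\log n} \le \exp(-\alpha\log n/6)$, which is at most $1/200$ for $\alpha$ large. Multiplying the conditional success with the probability of the conditioning event yields a rejection probability of at least $(1-1/\log n)(1-1/200) \ge 99/100$ for $n$ large.

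The steps are almost entirely a calibration of constants, and the only mild subtlety is handling the $|I|\ge 2$ versus $|I|=1$ distinction cleanly so that the Bernoulli bound is applicable; the hypothesis $m\le n\eta$ is used exclusively in the completeness step to ensure the Poisson means in the uniform case are $O(\log n)$. No deep new inequality is required beyond what is already in the section.
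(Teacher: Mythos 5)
Your proof is correct and follows essentially the same route as the paper's: completeness conditions on the bucket-size bound from \cref{prop:bucket-sizes-small}, uses $m \le n\eta$ to cap the Poisson means at $O(\log n)$, and applies \cref{fact:poisson-concentration} plus a union bound; soundness applies \cref{lemma:relative-concentration-structural} to extract an interval with small $w[I^*]$ and large $p[I]$, shows it is joined with probability $1-O(1/\log n)$, and then applies the lower Poisson tail. The only cosmetic difference is that you appeal to Bernoulli's inequality to bound $\Pr{\bm{J}(I)=1}$ whereas the paper uses a union bound over edges of $I^*$; these give the identical estimate $(|I|-1)\eta \le 1/\log n$.
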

\begin{proof}
    \textbf{Completeness.} Suppose $p$ is the uniform distribution over $\bZ_n$.
    From \cref{prop:bucket-sizes-small}, we obtain that every bucket has size at most
    $\frac{4\log n}{\eta}$ except with probability $o(1)$. Assume that this is the case, and fix some
    particular bucket $\Gamma_i$. The number of elements sampled from this bucket is distributed as
    $\bm{X}_i \sim \Poi\left(m p\left[\Gamma_i\right]\right) = \Poi\left( m|\Gamma_i|/n \right)$.
    Then, using \cref{fact:poisson-concentration} and for $\alpha > 16$, the probability that
    $\bm{X}_i$ is so large that the tester rejects is
    \begin{align*}
        \Pr{\bm{X}_i \ge \alpha \log n}
        &\le \Pr{\Poi\left( m \cdot \frac{4\log n}{\eta} \cdot \frac{1}{n} \right) \ge \alpha \log n} \\
        &\le \Pr{\Poi (4\log n) \ge \alpha \log n} \qquad & \text{(Since $m \le n \eta$)} \\
        &= \Pr{\Poi(4\log n) - 4\log n \ge (\alpha-4)\log n} \\
        &\le e^{-\frac{(\alpha-4)^2 \log^2 n}{2(4\log n + (\alpha-4)\log n)}}
        \le e^{-\frac{(\alpha/2)^2 \log n}{2\alpha}}
        \le e^{-2\log n}
        = 1/n^2 \,.
    \end{align*}
    Hence, the probability that this happens for any $\bm{X}_i$ is at most $1/n = o(1)$.

    \textbf{Soundness.} Suppose $p$ is $4\alpha$-highly concentrated.
    Using \cref{lemma:relative-concentration-structural} and the definition of high concentration,
    we may choose some interval $I=\llangle i,d \rrangle \in \cI$ satisfying
    \begin{enumerate}
        \item $(d-1)\eta = w[\llangle i,d-1 \rrangle] \le \frac{1}{\log n}$,
            and thus $|I| = d \le 1 + \frac{1}{\eta \log n}$; and
        \item $p[I] \ge \frac{1}{2} \cdot \frac{1}{\log n} \cdot \frac{4\alpha \log^2 n}{m}
            = \frac{2\alpha \log n}{m}$.
    \end{enumerate}
    We first claim that all the elements in $I$ will be joined with high probability, \ie every
    edge in $I^*$ will be sampled into $\bm{H}$. Indeed, by the union bound, we have
    \[
        \Pr{J(I) = 0}
        = \Pr{\exists e \in I^* : e \not\in \bm{H}} \le
        (|I|-1) \cdot \eta \le \frac{1}{\eta \log n} \cdot \eta = \frac{1}{\log n} = o(1) \,.
    \]
    Now, suppose every element in $I$ belongs to the same bucket, say $\Gamma_i$. Recall that the
    random variable $\bm{X}_i \sim \Poi\left(m p\left[\Gamma_i\right]\right)$ represents the number
    of elements drawn from this bucket, and by our assumption on $I$, we have
    \[
        m p\left[\Gamma_i\right]
        \ge m \cdot \frac{2\alpha \log n}{m}
        = 2\alpha \log n \,.
    \]
    We now claim that, with high probability, $\bm{X}_i > \alpha \log n$ and hence the tester will
    reject. Indeed, using \cref{fact:poisson-concentration}, the probability that this does not
    occur is
    \begin{align*}
        \Pr{\bm{X}_i \le \alpha \log n}
        &\le \Pr{\Poi\left( 2\alpha \log n \right) \le \alpha \log n}
        = \Pr{\Poi\left(2\alpha\log n\right) \le 2\alpha\log n - \alpha \log n} \\
        &\le e^{-\frac{\alpha^2 \log^2 n}
                       {2(2\alpha\log n + \alpha\log n)}}
        = e^{-\frac{\alpha \log n}{6}}
        = o(1) \,. \qedhere
    \end{align*}
\end{proof}

\noindent
We now proceed to the second step of the tester, and analyze the test statistic $\bm{Y}$.

\subsection{Expected Value of the Test Statistic}

\paragraph{Notation.} Let $\mu$ denote the uniform distribution
over $\bZ_n$. We will write $\bPhi$ for the random join matrix and $\phi$ for its expectation when
statements hold for both the path and the cycle. Otherwise, we will specify $\bPhi^\pth$ or
$\bPhi^\cyc$.

Start by recalling that, as shown in \cref{prop:expectation-y-quadratic-form}, we may write the
expected value of $\bm{Y}$ as
\begin{equation}
    \label{eq:expectation-y-quadratic-form-confused-collector}
    \Ex{\bm{Y}} = m p^\top \phi p \,.
\end{equation}
When $G$ is the path, the expected join matrix $\phi^\pth \define \Ex{\bPhi^\pth}$ has a simple
formulation in terms of $\eta$.
It will be useful to name the quantity $1 - \eta$, \ie the probability of including each edge
in $\bm{H}$:
\[
    \nu \define 1 - \eta \,.
\]

\begin{proposition}
    \label{prop:phi-pth}
    The matrix $\phi^\pth$ is given by
    \[
        \phi^\pth_{i,j} = \nu^{\abs{i-j}}
    \]
    for each $i, j \in \bZ_n$.
\end{proposition}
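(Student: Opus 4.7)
The plan is essentially to unwind the definitions. Recall that $\phi^\pth_{i,j} = \Pr{\bm\gamma(i) = \bm\gamma(j)}$, \ie the probability that vertices $i$ and $j$ end up in the same connected component of the random subgraph $\bm H$ of the path on $\bZ_n$. On the path, there is a \emph{unique} path between any two vertices $i$ and $j$, namely the one traversing consecutive edges labeled by the integers strictly between $\min(i,j)$ and $\max(i,j)$. Hence $i$ and $j$ are joined in $\bm H$ if and only if every edge on this path is present in $\bm H$.

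The number of such edges is exactly $|i-j|$, and each edge is included in $\bm H$ independently with probability $1-\eta = \nu$ (by the confused collector sampling process with constant weights $w(e) = \eta$). So by independence across edges,
\[
  \phi^\pth_{i,j} \;=\; \Pr{\text{every edge between } i \text{ and } j \text{ appears in } \bm H}
  \;=\; \nu^{|i-j|}.
\]
The degenerate case $i = j$ gives $\phi^\pth_{i,i} = 1 = \nu^0$, consistent with the formula. This is all that is needed; there is no real obstacle, as the structure of the path (unique connecting path, and hence a simple product of independent edge indicators) makes the computation immediate. The analogous statement for the cycle, handled separately, would be more delicate because there are two disjoint arcs between $i$ and $j$ and $\phi^\cyc_{i,j}$ involves an inclusion-exclusion over the two arcs.
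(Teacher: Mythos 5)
Your proposal is correct and follows the same argument as the paper: on the path there is a unique route between $i$ and $j$, so $\bm\gamma(i)=\bm\gamma(j)$ iff all $|i-j|$ intermediate edges appear in $\bm H$, each independently with probability $\nu$. One tiny slip: under the paper's convention (edge $k$ joins vertices $k$ and $k+1$) the relevant edges are labeled $\min(i,j),\dotsc,\max(i,j)-1$, not ``strictly between'' $i$ and $j$ --- but you state the count $|i-j|$ correctly, so the conclusion is unaffected.
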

\begin{proof}
    Here, the relevant intervals are $\cI = \cI^\pth$. Hence,
    for any $i < j$, we have that $i$ and $j$ are in the same bucket if and only if every edge
    between them is in $\bm{H}$:
    \[
        \phi^\pth_{i,j} = \Pr{\bPhi^\pth_{i,j}=1}
        = \Pr{\forall e \in \llangle i, j-i+1 \rrangle^* : e \in \bm{H}}
        = \nu^{j-i} \,. \qedhere
    \]
\end{proof}

When $G$ is the cycle, so that the expected join matrix is $\phi^\cyc \define \Ex{\bPhi^\cyc}$,
we need to account for the small and large intervals (in the notation of
\cref{section:circular-intervals}) connecting $i$ and $j$, as follows.

\begin{proposition}
    \label{prop:phi-cyc}
    The matrix $\phi^\cyc$ is given by
    \[
        \phi^\cyc_{i,j} = \nu^{\abs{i-j}} + \nu^{n-\abs{i-j}} - \nu^n \,.
    \]
\end{proposition}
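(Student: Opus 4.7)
The plan is to compute $\phi^\cyc_{i,j} = \Pr{\bm \Phi^\cyc_{i,j} = 1}$ directly via inclusion-exclusion, mirroring the structure of the proof of \cref{prop:phi-pth} but taking into account the two arcs of the cycle connecting $i$ and $j$.

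First, I would reduce to the case $i \neq j$, since $\phi^\cyc_{i,i} = 1 = \nu^0 + \nu^n - \nu^n$ agrees with the claimed formula. Fix $i \neq j$ and set $d = |i - j|$. On the cycle, the two vertices lie in the same connected component of $\bm H$ precisely when one of the two arcs between them has all of its edges present in $\bm H$. Using the notation of \cref{section:circular-intervals}, these correspond to the two circular intervals $I_1, I_2$ with $I_1 \cup I_2 = \bZ_n$, where $|I_1^*| = d$ and $|I_2^*| = n - d$. Thus
\[
    \bm \Phi^\cyc_{i,j} = \max\{ \bm J(I_1), \bm J(I_2)\}
    = \bm J(I_1) + \bm J(I_2) - \bm J(I_1)\bm J(I_2).
\]

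Next, I would take expectations and use the fact that each edge is included in $\bm H$ independently with probability $\nu = 1 - \eta$. This gives $\Ex{\bm J(I_1)} = \nu^d$ and $\Ex{\bm J(I_2)} = \nu^{n-d}$. For the joint event, $\bm J(I_1) \bm J(I_2) = 1$ requires every edge of the cycle to be in $\bm H$, which occurs with probability $\nu^n$. Putting this together yields
\[
    \phi^\cyc_{i,j} = \nu^d + \nu^{n-d} - \nu^n = \nu^{|i-j|} + \nu^{n-|i-j|} - \nu^n,
\]
as claimed. There is no real obstacle here: once the decomposition of $\bm \Phi^\cyc_{i,j}$ into the two arc events is written down, the result follows from a one-line inclusion-exclusion together with the independence of the edge inclusions in $\bm H$.
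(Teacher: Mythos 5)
Your proof is correct and follows essentially the same inclusion-exclusion argument as the paper's, with the two arcs corresponding to $\smallinterval(i,j)$ and $\largeinterval(i,j)$ and the joint event being "all edges present." The only (harmless) addition is the explicit check of the $i=j$ case.
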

\begin{proof}
    The sets $\smallinterval(i,j)^*$ and $\largeinterval(i,j)^*$ have sizes
    $\min\{\abs{i-j}, n-\abs{i-j}\}$ and $\max\{\abs{i-j}, n-\abs{i-j}\}$, respectively
    (recall they partition the edges of the cycle). By the principle of inclusion-exclusion,
    \begin{align*}
        \phi^\cyc_{i,j}
        &= \Pr{\bPhi^\cyc_{i,j} = 1} \\
        &= \Pr{\bm{J}(\smallinterval(i,j))=1} + \Pr{\bm{J}(\largeinterval(i,j))=1}
            - \Pr{\forall e \in E : e \in \bm{H}} \\
        &= \nu^{\abs{i-j}} + \nu^{n-\abs{i-j}} - \nu^n \,. \qedhere
    \end{align*}
\end{proof}

We would like to show that $\Ex{\bm{Y}}$ is large when $p$ is far from uniform.
Write $p = \mu + z$ where $z \in \bR^{\bZ_n}$. As shown in \cref{prop:quadratic-form-decomposition},
we may decompose the expectation as
\begin{equation}
    \label{eq:y-decomposition-confused-collector}
    \Ex{\bm{Y}} = m \mu^\top \phi \mu + 2m \mu^\top \phi z + m z^\top \phi z \,.
\end{equation}
Letting $\bm{Y}^{(\mu)}$ denote the test statistic when $p$ is the uniform
distribution, we notice that the first term above is precisely the baseline against which
\cref{alg:tester-confused-collector} thresholds the test statistic:

\begin{proposition}[Expectation of $\bm{Y}$ in the uniform case]
    \label{prop:expectation-y-uniform-confused-collector}
    When $p = \mu$, $\bm{Y} = \bm{Y}^{(\mu)}$ satisfies
    \[
        \Ex{\bm{Y}^{(\mu)}} = m \mu^\top \phi \mu = \frac{m}{n^2} \sum_{i,j} \phi_{i,j} \,.
    \]
\end{proposition}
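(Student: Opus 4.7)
The plan is to obtain this as an immediate specialization of \cref{prop:expectation-y-quadratic-form}. That proposition already establishes the general identity $\Ex{\bm{Y}} = m p^\top \phi p$ for any (partial) distribution $p$ over $\bZ_n$, and the present statement is just the case $p = \mu$.

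Concretely, I would first apply \cref{prop:expectation-y-quadratic-form} directly to conclude $\Ex{\bm{Y}^{(\mu)}} = m \mu^\top \phi \mu$. Then, since $\mu_i = 1/n$ for every $i \in \bZ_n$, I would expand the quadratic form coordinate-wise as
\[
  \mu^\top \phi \mu = \sum_{i,j \in \bZ_n} \mu_i \phi_{i,j} \mu_j = \frac{1}{n^2}\sum_{i,j \in \bZ_n} \phi_{i,j},
\]
which gives the displayed equality after multiplying through by $m$.

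There is no genuine obstacle here: the statement is a direct specialization, and the only ``step'' beyond invoking the earlier proposition is to substitute the constant value $1/n$ of the uniform distribution into the bilinear form and pull the $1/n^2$ factor outside the sum. The point of isolating this quantity as its own named proposition is not the difficulty of the derivation but rather that $\frac{m}{n^2}\sum_{i,j}\phi_{i,j}$ is the exact threshold used by \cref{alg:tester-confused-collector} against which the empirical test statistic $\bm Y$ is compared, so recording the equality makes the subsequent soundness analysis (which will study the excess $2m\mu^\top \phi z + m z^\top \phi z$ appearing in the decomposition \eqref{eq:y-decomposition-confused-collector}) cleaner to state.
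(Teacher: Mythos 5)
Your proof is correct and matches the paper's, which likewise just invokes \cref{prop:expectation-y-quadratic-form} (in the form of \eqref{eq:expectation-y-quadratic-form-confused-collector}) and substitutes $p = \mu = \vec 1/n$. The coordinate-wise expansion you write out is the same one-line computation the paper leaves implicit.
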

\begin{proof}
    The claim follows from \eqref{eq:expectation-y-quadratic-form-confused-collector} and the
    assumption that $p = \mu = \vec 1 / n$.
\end{proof}

Therefore, our strategy will be to show that 1)~the minimum eigenvalue of $\phi$ is large, and hence
so is $z^\top \phi z$ when $\|z\|_2^2$ is large; and 2)~the term $\mu^\top \phi z$ is small in
absolute value (in fact zero when $G$ is the cycle),
so it does not affect the sum too much.  We start with the first goal. Both $\phi^\pth$
and $\phi^\cyc$ enjoy nice properties (they are a Toeplitz and a circulant matrix, respectively),
and we bound the minimum eigenvalue of each in turn. Let
${\lambda_{\min}}(\cdot), {\lambda_{\max}}(\cdot)$
denote the minimum and maximum eigenvalues of a (real symmetric) matrix, respectively.

\begin{lemma}[Minimum eigenvalue of $\phi^\pth$]
    \label{lemma:min-eigenvalue-pth}
    Let $\eta \in (0,1]$. Then $\lambda_{\min}(\phi^\pth) > \eta/2$.
\end{lemma}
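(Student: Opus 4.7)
The plan is to exploit the fact that $\phi^{\pth}$ is a Kac--Murdock--Szegő (KMS) matrix with parameter $\nu = 1-\eta$, i.e.\ the symmetric Toeplitz matrix with entries $\phi^{\pth}_{i,j} = \nu^{|i-j|}$. These matrices have a well-known explicit tridiagonal inverse, and once this inverse is in hand the minimum eigenvalue of $\phi^{\pth}$ can be read off by applying Gershgorin's disk theorem to $(\phi^{\pth})^{-1}$.

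Concretely, I would first write down (and verify by direct multiplication, which is a routine but elementary computation) that
\[
  (\phi^{\pth})^{-1} = \frac{1}{1-\nu^2} \cdot M,
\]
where $M$ is the tridiagonal matrix whose diagonal is $(1, 1+\nu^2, 1+\nu^2, \dotsc, 1+\nu^2, 1)$ and whose two off-diagonals are all equal to $-\nu$. Verification reduces to checking the identities $\sum_k \nu^{|i-k|} M_{k,j} = (1-\nu^2)\,\ind{i=j}$ for each $i,j$, which follow from the telescoping $\nu^{|i-k|} - \nu \cdot \nu^{|i-(k-1)|} - \nu \cdot \nu^{|i-(k+1)|} + \nu^2 \cdot \nu^{|i-k|} = 0$ away from the boundary.

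Next, I would apply the Gershgorin disk theorem to the real symmetric matrix $(\phi^{\pth})^{-1}$. For a corner row the absolute row sum is
\[
  \frac{1 + \nu}{1-\nu^2} = \frac{1}{1-\nu},
\]
while for an interior row it is
\[
  \frac{(1+\nu^2) + 2\nu}{1-\nu^2} = \frac{(1+\nu)^2}{1-\nu^2} = \frac{1+\nu}{1-\nu}.
\]
Hence $\lambda_{\max}\bigl((\phi^{\pth})^{-1}\bigr) \le \tfrac{1+\nu}{1-\nu}$, and therefore
\[
  \lambda_{\min}(\phi^{\pth}) \;\ge\; \frac{1-\nu}{1+\nu} \;=\; \frac{\eta}{2-\eta}.
\]
Since $\eta \in (0,1]$ gives $2-\eta < 2$, we conclude $\lambda_{\min}(\phi^{\pth}) > \eta/2$, as desired.

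The only mildly nontrivial step is producing and verifying the tridiagonal inverse; however, this is a classical fact about KMS matrices and the verification is just a bookkeeping computation using the telescoping identity above, with small adjustments at the two boundary rows. Everything else (Gershgorin and the final algebraic comparison $\tfrac{\eta}{2-\eta} > \tfrac{\eta}{2}$) is immediate. An alternative path, if one wanted to avoid Gershgorin, would be to note that $M$ above is a discrete second-order difference operator and directly bound its spectral radius by its $\ell_\infty \to \ell_\infty$ operator norm, but this amounts to essentially the same calculation.
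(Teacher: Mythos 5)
Your proof is correct and follows essentially the same route as the paper: both write down the tridiagonal (KMS) inverse $(\phi^{\pth})^{-1} = \frac{1}{1-\nu^2} M$, apply Gershgorin to bound $\lambda_{\max}((\phi^{\pth})^{-1}) \le \frac{1+\nu}{1-\nu}$, and invert to get $\lambda_{\min}(\phi^{\pth}) \ge \frac{1-\nu}{1+\nu} > \eta/2$. The only cosmetic difference is that you sketch a direct verification of the inverse via the telescoping identity, whereas the paper cites a reference for it.
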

\begin{proof}
The matrix $\phi^\pth$ is a symmetric Toeplitz matrix,
and its inverse may be found as in \cite{Sra}.
Recall that $\nu = 1 - \eta$, so that $0 \le \nu < 1$ and $\phi^\pth_{i,j} =
\nu^{\abs{i-j}}$ by \cref{prop:phi-pth}.
Then the inverse of $\phi^\pth$ (written $\phi^{-1}$ for short)
is the following tridiagonal matrix:
    \[
        \phi^{-1} = \frac{1}{1-\nu^2} \cdot
            \begin{bmatrix}
                1       & -\nu         & 0            & 0            & \dotsm    & 0       \\
                -\nu    & 1 + \nu^2    & -\nu         & 0            & \dotsm    & 0       \\
                0       & -\nu         & 1 + \nu^2    & -\nu         & \dotsm    & 0       \\
                        & \ddots       & \ddots       & \ddots       & \ddots    &         \\
                0       & \dotsc       & \dotsc       & -\nu         & 1 + \nu^2 & -\nu    \\
                0       & \dotsc       & \dotsc       & 0            & -\nu      & 1       \\
            \end{bmatrix}
        \,.
    \]
Now, we may upper bound the maximum eigenvalue of $\phi^{-1}$
using the Gershgorin circle theorem:
    \[
        \lambda_{\max}\left(\phi^{-1}\right)
        \le \max_{i \in \bZ_n} \left\{ \phi^{-1}_{i,i}
            + \sum_{j \neq i} \abs*{\phi^{-1}_{i,j}} \right\}
        \le \left(\frac{1}{1-\nu^2}\right) \cdot \left(1 + \nu^2 + 2\nu\right)
        = \frac{1+\nu}{1-\nu} \,.
    \]
    Hence we obtain
    \[
        \lambda_{\min}(\phi^\pth)
        = \frac{1}{\lambda_{\max}\left(\phi^{-1}\right)}
        \ge \frac{1-\nu}{1+\nu}
        > \frac{1-\nu}{2}
        = \eta/2 \,. \qedhere
    \]
\end{proof}

When $G$ is the cycle, it is convenient to work with a simplified close approximation for
$\phi^\cyc$. Essentially, we wish to ignore the large intervals and instead work with the
matrix $\phi^{\smallinterval}$ given by
\[
    \phi^{\smallinterval}_{i,j} \define \nu^{\abs*{\smallinterval(i,j)}-1} \,.
\]
We will need the observation that $\zeta(\cI)$ is negligibly small in our range of parameter $\eta$.

\begin{proposition}
    \label{prop:zeta-is-small-confused-collector}
    Suppose $\eta \ge \Omega(n^{-1/5})$, and let $K > 0$ be any constant.
    Then for all sufficiently large $n$,
    \[
        \zeta(\cI) \le \nu^{n/2} = o(n^{-K}) \,.
    \]
\end{proposition}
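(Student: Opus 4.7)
The proof plan is to combine the bound from \cref{prop:zeta-cycle-bound} with a standard estimate of $(1-\eta)^{n/2}$.

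First, I would invoke \cref{prop:zeta-cycle-bound} directly, which already establishes that $\zeta(\cI^\cyc) \leq (1-\eta)^{n/2} = \nu^{n/2}$ in the case of constant weights $w(j) = \eta$, and that $\zeta(\cI^\pth) = 0 \leq \nu^{n/2}$. This immediately yields the first inequality $\zeta(\cI) \leq \nu^{n/2}$ for either choice $\cI \in \{\cI^\cyc, \cI^\pth\}$, with no further work required.

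The remaining task is to verify that $\nu^{n/2} = o(n^{-K})$ under the hypothesis $\eta \geq \Omega(n^{-1/5})$. I would fix a constant $c > 0$ such that $\eta \geq c \cdot n^{-1/5}$ for all sufficiently large $n$, and then use the inequality $1 - x \leq e^{-x}$ (valid for all real $x$) to write
\[
  \nu^{n/2} = (1-\eta)^{n/2} \leq e^{-\eta n/2} \leq e^{-(c/2) n^{4/5}}.
\]
Since $e^{-(c/2) n^{4/5}}$ decays faster than any polynomial in $n$, it is $o(n^{-K})$ for any constant $K > 0$, as desired.

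There is no significant obstacle here: the statement is essentially bookkeeping that consolidates the earlier \cref{prop:zeta-cycle-bound} with the assumed lower bound on $\eta$, and its purpose is to let later arguments discard the $\zeta(\cI)\|p\|_1^4$ error term appearing in \cref{prop:general-variance-first-component} whenever $\|p\|_1$ is bounded by a polynomial in $n$. The only mild subtlety is ensuring that the stated asymptotic notation is interpreted uniformly in $K$ (with the $o(\cdot)$ for each fixed $K$), which is handled automatically by the super-polynomial decay of $e^{-\Omega(n^{4/5})}$.
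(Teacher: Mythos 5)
Your proof is correct and follows the same route as the paper: cite \cref{prop:zeta-cycle-bound} for $\zeta(\cI) \le \nu^{n/2}$, then apply $1-\eta \le e^{-\eta}$ together with $\eta = \Omega(n^{-1/5})$ to get $\nu^{n/2} \le e^{-\Omega(n^{4/5})} = o(n^{-K})$. The only difference is cosmetic (you name the constant $c$ explicitly rather than carrying $\Omega(\cdot)$ through).
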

\begin{proof}
    The first inequality is \cref{prop:zeta-cycle-bound}. The second one is easy to check:
    \[
        \nu^{n/2} = (1-\eta)^{n/2} \le e^{-\eta \cdot n/2}
        \le e^{-\Omega(n^{-1/5} \cdot n)}
        = e^{-\Omega(n^{4/5})}
        = o(n^{-K}) \,. \qedhere
    \]
\end{proof}

We are now ready to lower bound the eigenvalues of $\phi^{\smallinterval}$ and
$\phi^\cyc$. We first lower bound $\lambda_{\min}(\phi^{\smallinterval})$, and then show that the
approximation error is negligible.

\begin{fact}[Eigenvalues of circulant matrices; see \cite{Gra06}]
    \label{fact:eigenvalues-circulant}
    Let $c_0, c_1, \dotsc, c_{n-1} \in \bR$. Then the matrix
    \[
        M = \begin{bmatrix}
            c_0      & c_1     & c_2     & \cdots & c_{n-1} \\
            c_{n-1}  & c_0     & c_1     & \cdots & c_{n-2} \\
            \vdots   & \vdots  & \vdots  & \ddots & \vdots \\
            c_2      & c_3     & c_4     & \cdots & c_1 \\
            c_1      & c_2     & c_3     & \cdots & c_0
        \end{bmatrix}
    \]
    given by $M_{j,k} = c_{(k-j) \mod n}$ has eigenvalues
    \[
        \lambda_\ell = \sum_{k=0}^{n-1} c_k \omega^{\ell k} \qquad \ell = 0, 1, \dotsc, n-1 \,,
    \]
    where $\omega = e^{-\frac{2\pi i}{n}}$ is a primitive $n$-th root of unity.
\end{fact}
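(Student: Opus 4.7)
The plan is to diagonalize $M$ using the standard algebra of cyclic shifts, since $M$ is built out of powers of the cyclic shift. First, I introduce the cyclic shift matrix $P \in \bR^{n \times n}$ defined by $P_{j,k} = 1$ iff $k \equiv j+1 \pmod n$ and $0$ otherwise, so that more generally $(P^r)_{j,k} = 1$ iff $k \equiv j+r \pmod n$. Reading the displayed matrix row by row, one sees that $M_{j,k} = c_{(k-j) \bmod n}$, so I can rewrite $M$ as the polynomial
\[
    M = \sum_{r=0}^{n-1} c_r P^r \,.
\]

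Next, I compute the eigenpairs of $P$ directly by guessing a Fourier basis. For each $\ell \in \{0,1,\dotsc,n-1\}$, let $v_\ell \in \bC^n$ have entries $(v_\ell)_j = \omega^{j\ell}$, where $\omega = e^{-2\pi i / n}$. Then
\[
    (P v_\ell)_j = (v_\ell)_{(j+1) \bmod n} = \omega^{(j+1)\ell} = \omega^{\ell} (v_\ell)_j \,,
\]
so $v_\ell$ is an eigenvector of $P$ with eigenvalue $\omega^\ell$. These $n$ eigenvectors are the columns of the (unnormalized) discrete Fourier transform matrix and are linearly independent, hence form a basis of $\bC^n$.

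Finally, since $M$ is a polynomial in $P$, the vectors $v_\ell$ are also eigenvectors of $M$, with eigenvalue equal to the same polynomial evaluated at $\omega^\ell$:
\[
    M v_\ell = \sum_{r=0}^{n-1} c_r P^r v_\ell = \left( \sum_{r=0}^{n-1} c_r \omega^{r\ell} \right) v_\ell \,,
\]
so $\lambda_\ell = \sum_{k=0}^{n-1} c_k \omega^{k\ell}$ as claimed. There is no genuine obstacle here, since this is a classical computation; the only care needed is in matching conventions, namely verifying that the displayed matrix gives $M_{j,k} = c_{(k-j) \bmod n}$ rather than $c_{(j-k) \bmod n}$, and that the sign of the exponent in $\omega = e^{-2\pi i/n}$ is consistent with this indexing. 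Swapping either convention simply relabels $\lambda_\ell \mapsto \lambda_{(-\ell) \bmod n}$ and produces the same multiset of eigenvalues.
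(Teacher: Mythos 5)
Your proof is correct and is the standard argument: express the circulant matrix as a polynomial in the cyclic shift $P$, identify the Fourier vectors $v_\ell$ as eigenvectors of $P$ with eigenvalue $\omega^\ell$, and conclude $\lambda_\ell = \sum_k c_k \omega^{k\ell}$. The paper does not prove this statement at all — it is cited as a known fact from Gray's survey — so there is no paper proof to compare against; your self-contained derivation is exactly the one found in standard references, and your remark about convention-matching (sign of $\omega$, orientation of $(k-j)$ vs.\ $(j-k)$) correctly explains why any mismatch only permutes the eigenvalues without changing the multiset.
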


\begin{lemma}[Minimum eigenvalue of $\phi^{\smallinterval}$]
    \label{lemma:min-eigenvalue-smallinterval}
    Let $\eta \in (0,1]$ satisfy $\eta \ge \Omega(n^{-1/5})$. Then for all sufficiently large $n$,
    $\lambda_{\min}(\phi^{\smallinterval}) > \eta/3$.
\end{lemma}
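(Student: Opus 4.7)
The plan is to exploit the circulant structure of $\phi^{\smallinterval}$ together with \cref{fact:eigenvalues-circulant}, then compare the resulting finite Fourier sum to a convergent infinite geometric series that admits a clean closed form.

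Specifically, $\phi^{\smallinterval}_{i,j} = \nu^{\min(|i-j|, n-|i-j|)}$ depends only on $(j-i) \bmod n$, so $\phi^{\smallinterval}$ is circulant with generators $c_k = \nu^{d_k}$, where $d_k = \min(k, n-k)$ for $k \in \{0, 1, \dotsc, n-1\}$. Since $d_k = d_{n-k}$, the matrix is symmetric and the eigenvalues are real. By \cref{fact:eigenvalues-circulant}, grouping the $k$ and $n-k$ terms and writing $\theta_\ell = 2\pi \ell/n$, I would express
\[
    \lambda_\ell
    = 1 + 2\sum_{k=1}^{\lfloor (n-1)/2 \rfloor} \nu^k \cos(\theta_\ell k)
        + \ind{n \text{ even}} \cdot \nu^{n/2} (-1)^\ell \,.
\]

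Next, I would compare $\lambda_\ell$ to the infinite sum
\[
    \tilde S_\ell \define 1 + 2 \sum_{k=1}^{\infty} \nu^k \cos(\theta_\ell k)
    = \mathrm{Re}\left( 1 + \frac{2 \nu e^{i \theta_\ell}}{1 - \nu e^{i\theta_\ell}} \right)
    = \frac{1 - \nu^2}{1 - 2\nu \cos \theta_\ell + \nu^2} \,,
\]
where the closed form comes from the standard geometric series. Since $1 - 2\nu \cos\theta_\ell + \nu^2 \le (1+\nu)^2$ for every $\ell$, we obtain the uniform lower bound
\[
    \tilde S_\ell \ge \frac{1-\nu^2}{(1+\nu)^2} = \frac{1-\nu}{1+\nu} = \frac{\eta}{2-\eta} \ge \eta/2 \,.
\]

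Finally, I would bound the truncation error. Using the trivial $|\cos| \le 1$ and the $\nu^{n/2}(-1)^\ell$ correction term (when $n$ is even), the tail is controlled by a geometric sum:
\[
    |\tilde S_\ell - \lambda_\ell| \le \nu^{\lfloor n/2 \rfloor} + 2 \sum_{k = \lfloor n/2 \rfloor + 1}^\infty \nu^k
    \le \nu^{\lfloor n/2 \rfloor} \cdot \frac{1+\nu}{1-\nu} \le \frac{2 \nu^{n/2}}{\eta} \,.
\]
By \cref{prop:zeta-is-small-confused-collector}, under the assumption $\eta \ge \Omega(n^{-1/5})$ we have $\nu^{n/2} = o(n^{-K})$ for any constant $K$, while $\eta \ge n^{-1/5}$, so $2 \nu^{n/2}/\eta = o(\eta)$ for all sufficiently large $n$. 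Combining the two bounds,
\[
    \lambda_\ell \ge \tilde S_\ell - \frac{2\nu^{n/2}}{\eta} \ge \frac{\eta}{2} - o(\eta) > \frac{\eta}{3}
\]
for all $\ell \in \{0, 1, \dotsc, n-1\}$, proving the claim.

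The main obstacle is really just bookkeeping: getting the Fourier expansion right (especially the middle term $k = n/2$ when $n$ is even) and verifying that the tail bound is strong enough to survive the $1/\eta$ blow-up. The key structural insight, that the circulant eigenvalues are close to the Poisson-kernel-like quantity $(1-\nu^2)/(1 - 2\nu \cos\theta + \nu^2)$, makes the computation essentially automatic once set up.
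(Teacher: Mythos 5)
Your proof is correct and follows essentially the same route as the paper: both apply \cref{fact:eigenvalues-circulant}, arrive at the Poisson-kernel form $(1-\nu^2)/(1-2\nu\cos\theta_\ell+\nu^2)$, lower bound it by $\eta/2$, and absorb the exponentially small correction via \cref{prop:zeta-is-small-confused-collector}. One tiny slip: the chain $\nu^{\lfloor n/2\rfloor}\cdot\frac{1+\nu}{1-\nu}\le\frac{2\nu^{n/2}}{\eta}$ has the wrong direction in the exponent when $n$ is odd (since $\nu^{(n-1)/2}>\nu^{n/2}$); keeping $\nu^{\lfloor n/2\rfloor}$ throughout fixes it and the conclusion is unaffected because $\nu^{\lfloor n/2\rfloor}$ is still $o(n^{-K})$ for every constant $K$.
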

\begin{proof}
    First assume $n$ is odd. For each $k = 0, 1, \dotsc, n-1$, let
    $c_k \define \phi^{\smallinterval}_{0,k}$, so that $\phi^{\smallinterval}$ is a symmetric
    circulant matrix of the form stated in \cref{fact:eigenvalues-circulant}. In particular,
    letting $h \define \lfloor n/2 \rfloor$ for convenience, we have
    \[
        c_k = \begin{cases}
            \nu^k & \text{if } k \le h \\
            \nu^{n-k} & \text{if } k > h \,.
        \end{cases}
    \]
    Therefore for each $\ell = 0, 1, \dotsc, n-1$, the eigenvalue $\lambda_\ell$ is
    \begin{align*}
        \lambda_\ell
        &= \sum_{k=0}^{n-1} c_k \omega^{\ell k}
        = c_0 + \sum_{k=1}^h \nu^k \omega^{\ell k} + \sum_{k=h+1}^{n-1} \nu^{n-k} \omega^{\ell k}
        = 1 + \sum_{k=1}^h \nu^k \omega^{\ell k} + \sum_{k=1}^h \nu^{n - (n-k)} \omega^{\ell (n-k)} \\
        &= 1 + \sum_{k=1}^h \nu^k \omega^{\ell k} + \sum_{k=1}^h \nu^k \omega^{-\ell k}
        = 1 + \frac{\nu \omega^\ell - \nu^{h+1} \omega^{\ell(h+1)}}{1 - \nu \omega^\ell}
            + \frac{\nu \omega^{-\ell} - \nu^{h+1} \omega^{-\ell (h+1)}}{1 - \nu \omega^{-\ell}} \\
        &= \frac{1 - \nu(\omega^\ell + \omega^{-\ell}) + \nu^2 
                +  \nu \omega^\ell - \nu^2
                        - \nu^{h+1} \omega^{\ell(h+1)} + \nu^{h+2} \omega^{\ell h} 
                +  \nu \omega^{-\ell} - \nu^2
                        - \nu^{h+1} \omega^{-\ell(h+1)} + \nu^{h+2} \omega^{-\ell h}}
                {1 - \nu(\omega^\ell + \omega^{-\ell}) + \nu^2} \\
        &= \frac{1  - \nu^2 -\nu^{h+1}(\omega^{\ell(h+1)} + \omega^{-\ell(h+1)})
                    + \nu^{h+2}(\omega^{\ell h} + \omega^{-\ell h})}
                {1 - \nu(\omega^\ell + \omega^{-\ell}) + \nu^2}
        = \frac{(1 - \nu)(1 + \nu) \pm O(\nu^{n/2})}{1 - 2 \nu \cos(2\pi\ell/n) + \nu^2} \,,
    \end{align*}
    where we used the identity $e^{i \theta} + e^{-i \theta} = 2\cos(\theta)$ in the last step.
    Thus, recalling that $\nu = 1-\eta \in [0,1)$, we conclude that $\lambda_\ell$ is lower bounded
    by
    \[
        \frac{(1 - \nu)(1 + \nu) \pm O(\nu^{n/2})}{1 - 2 \nu \cos(2\pi\ell/n) + \nu^2}
        \ge \frac{(1-\nu)(1+\nu)}{(1+\nu)^2} - \frac{O(\nu^{n/2})}{(1-\nu)^2}
        \ge \frac{\eta}{2} - \frac{O(\nu^{n/2})}{(1-\nu)^2} \,.
    \]
    Then, using \cref{prop:zeta-is-small-confused-collector},
    \[
        \frac{O(\nu^{n/2})}{(1-\nu)^2}
        \le \frac{o(n^{-3/5})}{\eta^2}
        \le o(n^{-3/5} \cdot n^{2/5})
        = o(n^{-1/5})
        = o(\eta) \,,
    \]
    and thus $\lambda_\ell > \eta/3$. When $n$ is even, the same argument applies with an extra term
    of order $O(\nu^{n/2})$, which leaves the asymptotic analysis unaffected.
\end{proof}

\newcommand{\err}{\mathsf{err}}

\begin{lemma}
    \label{lemma:min-eigenvalue-cyc}
    Let $\eta \in (0,1]$ satisfy $\eta \ge \Omega(n^{-1/5})$. Then for all sufficiently large $n$,
    $\lambda_{\min}(\phi^{\cyc}) > \eta/4$.
\end{lemma}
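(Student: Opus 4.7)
The plan is to bound the operator norm of the error matrix $\mathsf{err} \define \phi^{\cyc} - \phi^{\smallinterval}$ and invoke Weyl's inequality together with the already-established bound $\lambda_{\min}(\phi^{\smallinterval}) > \eta/3$ from \cref{lemma:min-eigenvalue-smallinterval}.

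First I would compute $\mathsf{err}$ entrywise. Using \cref{prop:phi-cyc} and the definition of $\phi^{\smallinterval}$, for any $i, j \in \bZ_n$ we have $|\smallinterval(i,j)| - 1 = \min(|i-j|, n - |i-j|)$, and so
\[
    \mathsf{err}_{i,j} = \nu^{|i-j|} + \nu^{n-|i-j|} - \nu^n - \nu^{\min(|i-j|, n-|i-j|)}
    = \nu^{\max(|i-j|, n-|i-j|)} - \nu^n \,.
\]
Since $\max(|i-j|, n-|i-j|) \ge n/2$, each entry satisfies $0 \le \mathsf{err}_{i,j} \le \nu^{n/2}$.

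Next I would bound $\|\mathsf{err}\|_{\mathrm{op}}$. The simplest bound is to use the maximum absolute row sum, which gives $\|\mathsf{err}\|_{\mathrm{op}} \le n \cdot \nu^{n/2}$. By \cref{prop:zeta-is-small-confused-collector} (using $\eta \ge \Omega(n^{-1/5})$), we have $\nu^{n/2} = o(n^{-K})$ for any constant $K$, so $n \cdot \nu^{n/2} = o(n^{-K+1}) = o(\eta)$, since $\eta \ge \Omega(n^{-1/5})$ is only polynomially small in $n$.

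Finally, by Weyl's inequality applied to the symmetric matrices $\phi^{\cyc} = \phi^{\smallinterval} + \mathsf{err}$,
\[
    \lambda_{\min}(\phi^{\cyc})
    \ge \lambda_{\min}(\phi^{\smallinterval}) - \|\mathsf{err}\|_{\mathrm{op}}
    > \eta/3 - o(\eta)
    > \eta/4
\]
for all sufficiently large $n$. There is no real obstacle here; the only step worth double-checking is the identification of $\max(|i-j|, n-|i-j|)$ as the exponent of $\nu$ in $\mathsf{err}_{i,j}$, which follows directly from algebraic manipulation together with the observation that the small and large intervals between $i$ and $j$ have sizes summing to $n+1$ (and hence their "edge sets" have sizes summing to $n$).
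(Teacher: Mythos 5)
Your proof is correct and takes essentially the same approach as the paper: decompose $\phi^{\cyc} = \phi^{\smallinterval} + \text{error}$, bound the error matrix entrywise by $\nu^{n/2}$, and combine with $\lambda_{\min}(\phi^{\smallinterval}) > \eta/3$. The paper applies Gershgorin's theorem directly to the error matrix rather than bounding its operator norm via the max-row-sum and invoking Weyl, but this is a cosmetic difference—both routes yield the same $-n\nu^{n/2} = -o(\eta)$ estimate.
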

\begin{proof}
    Let $\phi^\err \define \phi^\cyc - \phi^{\smallinterval}$. It is standard to check that
    $\lambda_{\min}(\phi^\cyc) \ge
    \lambda_{\min}(\phi^{\smallinterval}) + \lambda_{\min}(\phi^\err)$.
    Since $\lambda_{\min}(\phi^{\smallinterval}) > \eta / 3$ by
    \cref{lemma:min-eigenvalue-smallinterval}, it suffices to show that
    $\lambda_{\min}(\phi^\err) > -o(\eta)$.
    Since $\phi^\cyc_{i,j} =
    \nu^{\abs{\smallinterval(i,j)}-1} + \nu^{\abs{\largeinterval(i,j)}-1} - \nu^n$ by
    \cref{prop:phi-cyc}, we obtain
    \[
        \phi^\err_{i,j} = \nu^{\abs{\largeinterval(i,j)}-1} - \nu^n
    \]
    for all $i,j \in \bZ_n$. By definition of $\zeta(\cI)$ and recalling
    \cref{prop:zeta-cycle-bound}, we conclude that
    \[
        \|\phi^\err\|_\infty \le \zeta(\cI) \le \nu^{n/2} \,.
    \]
    By the Gershgorin circle theorem and \cref{prop:zeta-is-small-confused-collector},
    \begin{align*}
        \lambda_{\min}(\phi^\err)
        &\ge \min_{i \in \bZ_n} \Big\{ \phi^\err_{i,i} - \sum_{j \ne i} \abs*{\phi^\err_{i,j}} \Big\}
        > -n \cdot \nu^{n/2}
        \ge -n \cdot o(n^{-6/5})
        = -o(n^{-1/5})
        \ge -o(\eta) \,. \qedhere
    \end{align*}
\end{proof}

We use the minimum eigenvalue of $\phi$ to show that, if $\|p - \mu\|_2^2$ is large, then
$\Ex{\bm{Y}}$ is large. The following intermediate formulation of the expected value will
be useful.

\begin{proposition}
    \label{cor:separation-general-form}
    For all sufficiently large $n$, the following holds.
    Let $p$ be a distribution over $\bZ_n$ such that $\dist_\TV(p, \mu) > \epsilon$, and write
    $p = \mu + z$. Then
    \[
        \Ex{\bm{Y}} > \Ex{\bm{Y}^{(\mu)}} + m \frac{\eta}{8} \|z\|_2^2
                + \frac{m}{2n} \epsilon^2 \eta + 2m \mu^\top \phi z \,.
    \]
\end{proposition}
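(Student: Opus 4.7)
The plan is to combine the decomposition of $\Ex{\bm{Y}}$ from \cref{prop:quadratic-form-decomposition} with the eigenvalue lower bounds on $\phi$, and convert the TV-distance hypothesis into a lower bound on $\|z\|_2^2$ via Cauchy--Schwarz.

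First, I apply \cref{prop:quadratic-form-decomposition} with $p^* = \mu$ (which is a valid decomposition since $p = \mu + z$), yielding
\[
    \Ex{\bm{Y}} = m \mu^\top \phi \mu + 2m \mu^\top \phi z + m z^\top \phi z .
\]
By \cref{prop:expectation-y-uniform-confused-collector}, the first term is exactly $\Ex{\bm{Y}^{(\mu)}}$, and the middle term already appears on the right-hand side of the target inequality. So the whole task reduces to showing
\[
    m z^\top \phi z \;>\; m \tfrac{\eta}{8} \|z\|_2^2 + \tfrac{m}{2n} \epsilon^2 \eta .
\]

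Second, I lower bound the quadratic form by the minimum eigenvalue of $\phi$. By \cref{lemma:min-eigenvalue-pth} (for the path) or \cref{lemma:min-eigenvalue-cyc} (for the cycle, using the assumption $\eta \ge \Omega(n^{-1/5})$ from the algorithm's requirement), we have $\lambda_{\min}(\phi) > \eta/4$ in either case for sufficiently large $n$. Hence $z^\top \phi z \ge \tfrac{\eta}{4}\|z\|_2^2$.

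Third, I split this bound as $\tfrac{\eta}{4}\|z\|_2^2 = \tfrac{\eta}{8}\|z\|_2^2 + \tfrac{\eta}{8}\|z\|_2^2$ and argue that the second half absorbs the $\tfrac{\epsilon^2\eta}{2n}$ term. Since $p$ and $\mu$ are both probability distributions, $z = p-\mu$ satisfies $\sum_i z_i = 0$ and $\|z\|_1 = 2\dist_\TV(p,\mu) > 2\epsilon$. By Cauchy--Schwarz, $\|z\|_1^2 \le n\|z\|_2^2$, giving $\|z\|_2^2 > 4\epsilon^2/n$, and therefore $\tfrac{\eta}{8}\|z\|_2^2 > \tfrac{\epsilon^2\eta}{2n}$. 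Multiplying through by $m$ and adding back the first two terms from the decomposition concludes the proof.

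There is no real obstacle here: all the work is front-loaded into \cref{prop:quadratic-form-decomposition} and the eigenvalue estimates of \cref{lemma:min-eigenvalue-pth,lemma:min-eigenvalue-cyc}. The only subtlety is making sure we invoke the correct eigenvalue bound for each underlying graph and that the bound $\eta/4$ holds uniformly (which is why we use the weaker of the two).
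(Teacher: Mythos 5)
Your proof is correct and follows essentially the same approach as the paper: decompose $\Ex{\bm{Y}}$ via \cref{prop:quadratic-form-decomposition}, lower-bound $z^\top\phi z$ by $\lambda_{\min}(\phi)\|z\|_2^2 \ge \frac{\eta}{4}\|z\|_2^2$ using \cref{lemma:min-eigenvalue-pth,lemma:min-eigenvalue-cyc}, split off half of that term, and convert $\|z\|_1 > 2\epsilon$ into $\|z\|_2^2 > 4\epsilon^2/n$ by Cauchy--Schwarz. You also correctly flag that the cycle case implicitly relies on the $\eta \ge \Omega(n^{-1/5})$ hypothesis inherited from the algorithm's requirements, which the paper's statement leaves tacit.
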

\begin{proof}
    First, combine \eqref{eq:y-decomposition-confused-collector} and
    \cref{prop:expectation-y-uniform-confused-collector}, along with the fact that
    $x^\top M x \ge \lambda_{\min}(M) \|x\|_2^2$ for any symmetric matrix $M$ and vector $x$
    and the eigenvalue bounds \cref{lemma:min-eigenvalue-pth,lemma:min-eigenvalue-cyc}
    to obtain
    \begin{align*}
        \Ex{\bm{Y}}
        &= m \mu^\top \phi \mu + 2m \mu^\top \phi z + m z^\top \phi z \\
        &\ge \Ex{\bm{Y}^{(\mu)}} + m \lambda_{\min}(\phi) \|z\|_2^2 + 2m \mu^\top \phi z \\
        &\ge \Ex{\bm{Y}^{(\mu)}} + m \frac{\eta}{4} \|z\|_2^2 + 2m \mu^\top \phi z \\
        &= \Ex{\bm{Y}^{(\mu)}}
            + m \frac{\eta}{8} \|z\|_2^2
            + m \frac{\eta}{8} \|z\|_2^2
            + 2m \mu^\top \phi z \,.
    \end{align*}
    Then, since $\|z\|_1 = 2\dist_\TV(p,\mu) > 2\epsilon$, we have
    $\|z\|_2^2 > \left(\frac{2\epsilon}{n}\right)^2 \cdot n = 4\epsilon^2 / n$, concluding the
    proof.
\end{proof}

Now we show that the cross term $\mu^\top \phi z$ is small. When $G$ is the cycle, this term will
in fact be zero; when $G$ is the path, the cross term is relevant due to the asymmetry between the
vertices closer to the endpoints or to the middle. However, this will not be a problem as long as
$\|z\|_\infty$ is not too large, which indeed holds when $p$ is not highly concentrated.

\begin{proposition}
    \label{prop:z-delta-cross-term}
    Let $\eta \in (0,1]$ and let $\phi = \Ex{\bPhi}$ be the corresponding expected join matrix.
    Let $\delta > 0$ be any positive real number. Then for any $z \in \bR^n$ satisfying
    \begin{enumerate}
        \item $\sum_i z_i = 0$; and
        \item $\|z\|_\infty \le \delta$,
    \end{enumerate}
    it is the case that
    \[
        \abs*{\mu^\top \phi z} \le \frac{2\delta}{n \eta^2} \,.
    \]
\end{proposition}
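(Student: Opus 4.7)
My plan is to handle the cycle and path cases separately, using the structure of $\phi = \Ex{\bm\Phi}$ in each case. For the cycle, $\phi = \phi^\cyc$ is a symmetric circulant matrix, so $\mathbf 1^\top \phi^\cyc$ is a scalar multiple of $\mathbf 1^\top$ (every row sum is the same). Then $\mu^\top \phi^\cyc z$ is proportional to $\mathbf 1^\top z = \sum_i z_i = 0$, giving the bound trivially. So the whole content of the proposition is in the path case.

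For the path case, I would directly compute the row sums of $\phi^\pth$. Using $\phi^\pth_{i,j} = \nu^{|i-j|}$ (\cref{prop:phi-pth}), split the sum at $i = j$ and use the two geometric series to get
\begin{equation*}
  s_j \define \sum_{i=0}^{n-1} \nu^{|i-j|}
     = \frac{1+\nu}{1-\nu} \;-\; \frac{\nu^{j+1} + \nu^{n-j}}{1-\nu} \,.
\end{equation*}
The first term is the ``bulk'' contribution (independent of $j$), while the second term captures the ``boundary effect'' near the endpoints $j = 0$ and $j = n-1$.

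Then I would write $\mu^\top \phi^\pth z = \frac{1}{n}\sum_j s_j z_j$, and use the hypothesis $\sum_j z_j = 0$ to eliminate the constant bulk term, leaving
\begin{equation*}
    \mu^\top \phi^\pth z = -\frac{1}{n(1-\nu)} \sum_{j=0}^{n-1} \left(\nu^{j+1} + \nu^{n-j}\right) z_j \,.
\end{equation*}
Applying $|z_j| \leq \delta$ and summing two geometric series (each bounded by $\nu/(1-\nu) \leq 1/\eta$), the absolute value is at most $\frac{\delta}{n\eta} \cdot \frac{2}{\eta} = \frac{2\delta}{n\eta^2}$, as required.

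The only mildly delicate step is extracting and bounding the boundary correction $\frac{\nu^{j+1}+\nu^{n-j}}{1-\nu}$ cleanly; the cancellation of the constant bulk term via $\sum_j z_j = 0$ is what makes the argument work (otherwise the bound would be $\Theta(1/\eta)$ rather than $O(1/(n\eta^2))$). The cycle case is immediate from circulant symmetry and requires no new computation. Nothing more subtle is needed, so I would not expect any real obstacle beyond correctly bookkeeping the geometric sums.
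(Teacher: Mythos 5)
Your proof is correct and takes a genuinely simpler route than the paper. The cycle case is identical in both (circulant $\Rightarrow$ constant row sums $\Rightarrow$ kill by $\sum_i z_i = 0$). For the path, the paper uses a ``rearrangement and saturation'' extremal argument: it shows the worst-case feasible $z$ can be taken with all nonzero entries equal to $\pm\delta$ in equal numbers, then exploits the unimodality of the column sums $(S_j)$ to pair the largest and smallest columns, and finally pushes the resulting telescoping geometric sums through a separate (fairly tedious) appendix calculation (\cref{prop:geometric-sum-expression}). You instead write down the closed form of the column sums $s_j = \frac{1+\nu}{1-\nu} - \frac{\nu^{j+1}+\nu^{n-j}}{1-\nu}$ directly, observe that the $j$-independent ``bulk'' piece dies against $\sum_j z_j = 0$, and bound the remaining two geometric series by $\nu/(1-\nu) \le 1/\eta$ each, yielding $|\mu^\top\phi^\pth z| \le \frac{2\delta\nu}{n\eta^2} \le \frac{2\delta}{n\eta^2}$ in two lines. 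Your argument bypasses the extremality bookkeeping entirely (including the $\eta=1$ edge case, which the formula handles uniformly), is shorter, and even gives the slightly sharper constant $\nu$; the only thing the paper's saturation argument ``buys'' is that it works without ever computing the column sums explicitly, but here those sums are trivially available, so nothing is lost.
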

\begin{proof}
    When $\phi = \phi^\cyc$, we have that $\mu^\top \phi$ is a constant vector (this is true for any
    circulant matrix), and hence $\mu^\top \phi z = 0$ (since $\sum_i z_i = 0$). Therefore we may
    now assume that $\phi = \phi^\pth$.

    Note that, by symmetry between $z$ and $-z$ in the LHS, it suffices upper bound $\mu^\top \phi z$.
    We expand this expression as follows:
    \[
        \mu^\top \phi z
        = \sum_{i,j \in \bZ_n} \mu_i z_j \phi_{i,j}
        = \frac{1}{n} \sum_{j=0}^{n-1} z_j \left(\sum_{i=0}^{n-1} \phi_{i,j}\right) \,.
    \]
    Hence our goal is to show
    \begin{equation}
        \label{eq:z-j-sum}
        \sum_{j=0}^{n-1} z_j S_j \lequestion \frac{2\delta}{\eta^2} \,,
    \end{equation}
    where $S_j \define \left(\sum_{i=0}^{n-1} \phi_{i,j}\right)$ is the sum of the entries in the $j$-th
    column of $\phi$. Note that $(S_j)_{j=0,\dotsc,n-1}$ is a symmetric unimodal sequence (first
    increasing, then decreasing) with strict inequalities everywhere except for indices
    $\floor{(n-1)/2}$ and $\ceil{(n-1)/2}$ when $n$ is even.
    We will use a ``rearrangement and saturation'' argument
    to construct a vector $z^*$ that upper bounds the LHS of \eqref{eq:z-j-sum} (hereafter called
    the \emph{objective value}).

    Let $z'$ be a vector satisfying the conditions from the statement (hereafter called a
    \emph{feasible solution}) that maximizes the objective value. Let $\sigma$ be a permutation of
    $\{0, \dotsc, n-1\}$ that puts the sequence of column sums in non-decreasing order:
    $S_{\sigma(0)} \le \dotsm \le S_{\sigma(n-1)}$. Then we can also assume that $z'$ respects this
    order: $z'_{\sigma(0)} \le \dotsm \le z'_{\sigma(n-1)}$, since otherwise rearranging the entries
    of $z'$ would yield another feasible solution with equal or larger objective value.

    We now argue that we may assume that, among all nonzero entries of $z'$, all have absolute
    value equal to $\delta$ (which we call \emph{saturated entries}) except for at most one positive
    entry and one negative entry. Indeed, if two consecutive (under $\sigma)$ nonzero entries with
    the same sign are not saturated, \ie they satisfy
    $\abs*{z'_{\sigma(i)}}, \abs*{z'_{\sigma(i+1)}} < \delta$, then we can obtain another feasible
    solution with equal or larger objective value by ``saturating'' this pair of entries, \ie
    making $z'_{\sigma(i)}$ smaller and $z'_{\sigma(i+1)}$ larger until either of them reaches
    a value in $\{-\delta, 0, \delta\}$.

We claim that we may also assume that the multiset of values of the positive entries of $z'$ is
equal to the multiset of absolute values of the negative entries of $z'$.  Suppose $z'$ has $N^+$
entries equal to $\delta$, $N^-$ entries equal to $-\delta$, $M^+ \in \{0,1\}$ entries in the
interval $(0, \delta)$, and $M^- \in \{0,1\}$ entries in the interval $(-\delta, 0)$. If $N^+ =
N^-$, then since $\sum_j z'_j = 0$, we must also have $M^+ = M^-$ and, if this value is $1$, then
the corresponding entries of $z'$ must have the same absolute value so that they add to zero. On the
other hand, if $N^+ \ne N^-$, say $N^+ > N^-$ without loss of generality, then $\sum_i z'_i > 0$
since the sum of the saturated values of $z'$ is at least $\delta$ while the sum of the
non-saturated values must be in $(-\delta, \delta)$. This contradicts the fact that $z'$ is a
feasible solution.

Now we construct $z^*$ by saturating the remaining (zero or two) entries of $z'$:
\[
    z^*_i \define \begin{cases}
        \delta,  & \text{if $z'_i > 0$} \\
        -\delta, & \text{if $z'_i < 0$} \\
        0,       & \text{if $z'_i = 0$.}
    \end{cases}
\]
Then by the same arguments as above, $z^*$ has equal or larger objective value as $z'$.
We now upper bound this objective value by the RHS of \eqref{eq:z-j-sum}, which will
conclude the argument.

    Let $N$ be the number of positive entries of $z^*$. By construction, we have
    \[
        N = |\{i \in \bZ_n : z^*_i = \delta\}| = |\{i \in \bZ_n : z^*_i = -\delta\}| \,.
    \]
    Then our objective value is
    \begin{equation}
        \label{eq:obj-val}
        \sum_j z^*_j S_j
        = \delta \left[ -\sum_{i=0}^{N-1} S_{\sigma(i)} + \sum_{i=0}^{N-1} S_{\sigma(n-1-i)} \right] \,.
    \end{equation}

    Let $h \define (n-1)/2$.
    Since $(S_j)_{j=0,\dotsc,n-1}$ is a symmetric unimodal sequence attaining its maximum in the
    middle, we may say without loss of generality that the indices $\sigma(i)$ in the first
    summation term in the RHS of \eqref{eq:obj-val} are
    $\{0, \dotsc, \ceil{N/2}-1\} \cup \{n-1, \dotsc, n - \floor{N/2}\}$.
    As for the indices $\sigma(n-1-i)$, an exact account depends on the parity of $n$,
    but we can only make the objective value larger by simply using the indices
    $\{\ceil{h}, \dotsc, \ceil{h} + \ceil{N/2} - 1\} \cup
    \{\floor{h}, \dotsc, \floor{h} - \floor{N/2} + 1\}$. Note that when $n$ is odd, this choice
    slightly overestimates the objective value by using the maximum value $S_{h}$ twice,
    but this looser bound suffices for our purposes.

    Therefore, we may finally express and compute our upper bound on the objective value of
    any feasible vector $z$. Recall that $\phi_{i,j} = \nu^{\abs{i-j}}$.
    In the edge case when $\eta=1$ and thus $\nu=0$, we have that $\phi$ is the identity matrix
    and hence $S_j = 1$ for every $j \in \bZ_n$. Therefore we obtain
    \begin{align*}
        \sum_{j=0}^{n-1} z_j S_j = \sum_{j=0}^{n-1} z_j = 0 \,,
    \end{align*}
    which satisfies \eqref{eq:z-j-sum} and we are done. Now, suppose $0 < \nu < 1$.
    Then
    \begin{align*}
        \sum_{j=0}^{n-1} z_j S_j
        &\le \delta\left[ -\sum_{i=0}^{N-1} S_{\sigma(i)} + \sum_{i=0}^{N-1} S_{\sigma(n-1-i)} \right] \\
        &\le \delta\left[
            - \sum_{i=0}^{\ceil{N/2}-1} S_i
            - \sum_{i=0}^{\floor{N/2}-1} S_{n-1-i}
            + \sum_{i=0}^{\ceil{N/2}-1} S_{\ceil{h}+i}
            + \sum_{i=0}^{\floor{N/2}-1} S_{\floor{h}-i}
            \right]
        \le \frac{2\delta}{\eta^2} \,,
    \end{align*}
    where we defer the tedious geometric sum calculations for the last inequality to
    \cref{prop:geometric-sum-expression}.
\end{proof}

\begin{lemma}
    \label{lemma:cross-term-bound}
    Let $C > 0$, and let $p$ be a distribution over $\bZ_n$ that is not $C$-highly concentrated.
    Then as long as $\frac{8C \log n}{\eta^3 \epsilon^2} \le m \le C n \log n$,
    the following holds:
    \[
        \abs*{2m \mu^\top \phi z} \le \frac{m}{2n} \epsilon^2 \eta \,.
    \]
\end{lemma}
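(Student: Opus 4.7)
The plan is to invoke \cref{prop:z-delta-cross-term} with a carefully chosen $\delta$, since that proposition already delivers the right dependence on $\eta$ and $n$; only the $\ell_\infty$ bound on $z$ and the first hypothesis (zero sum) need to be checked. Both conditions follow essentially for free from the hypotheses of the lemma and the fact that $p$ is not $C$-highly concentrated.

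First I would observe that $\sum_i z_i = \sum_i p_i - \sum_i \mu_i = 0$, which verifies the first hypothesis of \cref{prop:z-delta-cross-term}. For the second, I would use \cref{remark:non-concentrated-l-infty-confused-collector} to obtain $\|p\|_\infty < \frac{C \log n}{m}$. Since $z_i = p_i - 1/n$, non-negativity of $p$ gives $z_i \ge -1/n$, while the previous bound yields $z_i < \frac{C \log n}{m}$. The upper-bound hypothesis $m \le C n \log n$ rearranges to $1/n \le \frac{C \log n}{m}$, so setting $\delta := \frac{C \log n}{m}$ we get $\|z\|_\infty \le \delta$.

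Plugging this into \cref{prop:z-delta-cross-term} yields
\[
  \abs*{2m \mu^\top \phi z} \;\le\; 2m \cdot \frac{2\delta}{n \eta^2}
  \;=\; \frac{4 C \log n}{n \eta^2}.
\]
Finally, the lower-bound hypothesis $m \ge \frac{8 C \log n}{\eta^3 \epsilon^2}$ is exactly the inequality
$\frac{4 C \log n}{n \eta^2} \le \frac{m}{2n}\epsilon^2 \eta$, which finishes the proof. There is no real obstacle here: the work of controlling the cross term already lives inside \cref{prop:z-delta-cross-term}, and the concentration test performed earlier by \cref{alg:tester-confused-collector} is precisely what supplies the $\|z\|_\infty$ bound needed to trigger that proposition. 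The only thing to verify carefully is that the two-sided range of $m$ in the hypothesis is tight enough to absorb the constants, which the computation above confirms.
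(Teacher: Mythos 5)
Your proof is correct and follows exactly the paper's argument: set $\delta = \frac{C\log n}{m}$, verify the zero-sum and $\ell_\infty$ hypotheses using $m \le Cn\log n$ and \cref{remark:non-concentrated-l-infty-confused-collector}, then apply \cref{prop:z-delta-cross-term} and use the lower bound on $m$ to close. Nothing substantive differs.
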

\begin{proof}
    Write $p = \mu + z$. Since $\|p\|_1 = \|\mu\|_1 = 1$, it follows that $\sum_i z_i = 0$,
    satisfying the first condition of \cref{prop:z-delta-cross-term}. We will show that $z$ also
    satisfies the second condition with $\delta \define \frac{C \log n}{m}$.

    Indeed, since $0 \le \frac{1}{n} + z_i \le \frac{C \log n}{m}$ (the second inequality
    by \cref{remark:non-concentrated-l-infty-confused-collector}) we get, on the one hand,
    \[
        z_i \ge -\frac{1}{n} \ge -\frac{C \log n}{m} \,,
    \]
    where we used the assumption that $m \le C n \log n$, and on the other hand,
    \[
        z_i \le \frac{C \log n}{m} - \frac{1}{n} \le \frac{C \log n}{m} \,,
    \]
    and thus $\|z\|_\infty \le \frac{C \log n}{m}$ as desired. \cref{prop:z-delta-cross-term}
    implies that
    \[
        \abs*{\mu^\top \phi z} \le \frac{2C \log n}{m n \eta^2} \,.
    \]
    Finally, it suffices to combine this inequality with our assumed lower bound
    on $m$. We obtain
    \[
        \abs*{2m \mu^\top \phi z}
        \le 2m \cdot \frac{2C \log n}{m n \eta^2}
        = \frac{8C \log n}{\eta^3 \epsilon^2} \cdot \frac{\epsilon^2 \eta}{2n}
        \le \frac{m}{2n} \epsilon^2 \eta \,. \qedhere
    \]
\end{proof}

\noindent
We combine the previous results to show the desired separation in the expected value of $\bm{Y}$:

\begin{lemma}[Separation in the expected value of $\bm{Y}$]
    \label{lemma:separation-expected-value-confused-collector}
    Let $C, c > 0$ be constants, and let
    $m = c \cdot \frac{\sqrt{n}}{\epsilon^2 \eta^{3/2}} \log^2 n$.
    Then for all sufficiently large $n$ and all $\epsilon, \eta \in (0, 1]$ satisfying
    $\eta \ge (c/C)^{2/3} \frac{\log^{2/3} n}{n^{1/3} \epsilon^{4/3}}$, the following holds.
    Suppose $p$ is a distribution over $\bZ_n$ that is not $C$-highly concentrated such that
    $\dist_\TV(p, \mu) > \epsilon$. Write $p = \mu + z$. Then the test statistic $\bm{Y}$ satisfies
    \[
        \Ex{\bm{Y}} > \Ex{\bm{Y}^{(\mu)}} + \frac{m \eta}{8} \|z\|_2^2 \,.
    \]
\end{lemma}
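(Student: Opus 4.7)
The plan is to derive the bound by directly combining the two immediately preceding results. Corollary \ref{cor:separation-general-form} already gives
\[
    \Ex{\bm{Y}} > \Ex{\bm{Y}^{(\mu)}} + \frac{m\eta}{8}\|z\|_2^2 + \frac{m}{2n}\epsilon^2 \eta + 2m\mu^\top \phi z,
\]
so the conclusion follows as soon as we can absorb the cross term $2m\mu^\top \phi z$ into the additive slack $\frac{m}{2n}\epsilon^2 \eta$. That absorption is exactly the content of Lemma \ref{lemma:cross-term-bound}, which applies provided $p$ is not $C$-highly concentrated (assumed) and $m$ lies in the window $\frac{8C\log n}{\eta^3 \epsilon^2} \le m \le Cn\log n$.

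The single substantive step is therefore to check that the chosen $m = c \sqrt{n}\,\log^2 n/(\epsilon^2 \eta^{3/2})$ satisfies these two inequalities. The upper bound $m \le Cn\log n$ rearranges to $\eta^{3/2} \ge (c/C)\log n/(\sqrt{n}\,\epsilon^2)$, equivalently $\eta \ge (c/C)^{2/3}\log^{2/3} n/(n^{1/3}\epsilon^{4/3})$, which is exactly the hypothesis on $\eta$ in the statement. The lower bound $m \ge 8C\log n/(\eta^3 \epsilon^2)$ rearranges to $\eta^{3/2} \ge 8C/(c\sqrt{n}\log n)$, i.e.\ $\eta \ge (8C/c)^{2/3}/(n^{1/3}\log^{2/3} n)$; comparing with the hypothesis, this is weaker by a factor of order $\log^2 n/\epsilon^2$ (using $\epsilon \le 1$) and hence holds automatically for all sufficiently large $n$.

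With both sides of the window verified, Lemma \ref{lemma:cross-term-bound} gives $|2m\mu^\top \phi z| \le \frac{m}{2n}\epsilon^2 \eta$, so the two correction terms cancel against each other in the corollary's bound, leaving the desired inequality. I don't anticipate any genuine obstacle here: this is essentially a calibration step that packages the two preceding lemmas by converting the algorithmic choice of $m$ into the parameter regime under which the asymmetric cross term (present only for the path case via $\phi^\pth$, and vanishing entirely for the cycle by \cref{prop:z-delta-cross-term}) is dominated by the main signal. The only mild care needed is that the several ``sufficiently large $n$'' quantifiers inherited from Corollary \ref{cor:separation-general-form} and from the eigenvalue bounds \cref{lemma:min-eigenvalue-pth,lemma:min-eigenvalue-cyc} be threaded through, but each is monotone and they compose without difficulty.
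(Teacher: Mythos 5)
Your proposal is correct and follows essentially the same route as the paper: apply \cref{cor:separation-general-form}, invoke \cref{lemma:cross-term-bound} to dominate the cross term by the $\frac{m}{2n}\epsilon^2\eta$ slack, and verify the two preconditions on $m$ by solving for $\eta$, with the upper bound exactly matching the hypothesis and the lower bound being strictly weaker for large $n$. The arithmetic checks out (including the $\log^2 n/\epsilon^2$ slack at the $\eta^{3/2}$ level), so there is nothing further to add.
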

\begin{proof}
    By \cref{cor:separation-general-form}, we have
    \[
        \Ex{\bm{Y}} > \Ex{\bm{Y}^{(\mu)}} + \frac{m \eta}{8} \|z\|_2^2
                + \frac{m}{2n} \epsilon^2 \eta + 2m \mu^\top \phi z \,.
    \]
    Hence, we will be done if we can show that
    $2m \mu^\top \phi z \ge -\frac{m}{2n} \epsilon^2 \eta$. This will follow immediately from
    \cref{lemma:cross-term-bound} as long as we can verify the preconditions on $m$. We first
    check the lower bound:
    \[
        m \ge \frac{8C \log n}{\eta^3 \epsilon^2}
        \iff c \frac{\sqrt{n}}{\epsilon^2} \cdot \frac{\log^2 n}{\eta^{3/2}}
            \ge \frac{8C \log n}{\eta^3 \epsilon^2}
        \iff \eta \ge (8C/c)^{2/3} \frac{1}{n^{1/3} \log^{2/3} n} \,,
    \]
    which holds for all sufficiently large $n$ by our assumption on $\eta$. As for the upper bound,
    \[
        m \le C n \log n
        \iff c \frac{\sqrt{n}}{\epsilon^2} \cdot \frac{\log^2 n}{\eta^{3/2}} \le C n \log n
        \iff \eta \ge (c/C)^{2/3} \frac{\log^{2/3} n}{n^{1/3} \epsilon^{4/3}} \,,
        \]
    which holds by assumption. Hence \cref{lemma:cross-term-bound} applies and we are done.
\end{proof}

\begin{remark}
    \label{remark:eta-range-ok}
    The condition $\eta = \Omega\left(\frac{\log^{2/3} n}{n^{1/3} \epsilon^{4/3}}\right)$ in
    the statement above will hold in the range of parameters considered by the present argument.
    Concretely, when $\eta = \Omega\left(\frac{\log^{4/5} n}{n^{1/5} \epsilon^{4/5}}\right)$ and
    $\epsilon = \Omega\left(\frac{1}{n^{1/4}}\right)$, the condition holds because
    \[
        \frac{\left(\frac{\log^{4/5} n}{n^{1/5} \epsilon^{4/5}}\right)}
             {\left(\frac{\log^{2/3} n}{n^{1/3} \epsilon^{4/3}}\right)}
        = (\log n)^{2/15} n^{2/15} \epsilon^{8/15}
        \ge (\log n)^{2/15} n^{2/15} \Omega(n^{-2/15})
        = \omega(1) \,.
    \]
\end{remark}

\subsection{Concentration of the Test Statistic}

We apply the general results presented in \cref{section:common-variance} to upper bound the
variance of $\bm{Y}$.

\begin{lemma}[First component of the variance]
    \label{lemma:var-first-component-confused-collector}
    Let $C > 0$ be a constant, let $n \in \bN$ be sufficiently large and suppose $p$ is a
    probability distribution over $\bZ_n$ that is not $C$-highly concentrated.
    Suppose that $m \le \poly(n)$ and $\eta \ge \Omega(n^{-1/5})$. Then
    \[
        \Varu{\bm{H}}{\Exuc{\bm{T}}{\bm{Y}}{\bm{H}}}
        \le \frac{\|p\|_2^2}{\eta} \cdot O\left( \log^4 n \right) \,.
    \]
\end{lemma}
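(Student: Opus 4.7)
The plan is to plug in the general bound from \cref{prop:general-variance-first-component}, which controls $\Varu{\bm{H}}{\Exuc{\bm{T}}{\bm{Y}}{\bm{H}}}$ by two contributions: the ``negligible-interval'' error $5m^2\zeta(\cI)\|p\|_1^4$, and the main sum $c m^2 \sum_{I=\llangle i,d\rrangle\in\cI,\,1\le d\le n} p_{i} p_{i+d-1} p[I]^2 \Ex{\bm{J}(I)}$. For the error term, since $\|p\|_1=1$, $m\le\poly(n)$, and $\eta\ge\Omega(n^{-1/5})$, \cref{prop:zeta-is-small-confused-collector} makes $m^2\zeta(\cI)$ super-polynomially small in $n$, whereas the target $\|p\|_2^2/\eta$ is at least $1/(n\eta)$ for any probability distribution, so this term is absorbed trivially.

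The main work is the interval sum. Because the weights are constant ($w(e)=\eta$), one has $w[I^*]=(d-1)\eta$ and $\Ex{\bm{J}(I)}=(1-\eta)^{d-1}$. Since $p$ is not $C$-highly concentrated, the definition of relative concentration gives, for every circular interval $I=\llangle i,d\rrangle$ with $d\le n$,
\[
p[I] \;\le\; \frac{C\log^2 n}{m}\cdot\max\!\left\{(d-1)\eta,\tfrac{1}{\log n}\right\}.
\]
The next key step is a Cauchy--Schwarz observation: for any fixed shift $s=d-1$, treating $i\mapsto i+s$ as a permutation (cyclic on $\cI^\cyc$, injective on $\cI^\pth$), we have
\[
\sum_{i} p_i\, p_{i+d-1} \;\le\; \sqrt{\textstyle\sum_i p_i^2}\cdot\sqrt{\textstyle\sum_i p_{i+d-1}^2} \;\le\; \|p\|_2^2.
\]
This is what lets us upgrade pointwise products to $\|p\|_2^2$, which is essential for matching the right-hand side of the lemma.

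With these two ingredients, I would split the sum over intervals according to whether $(d-1)\eta\le 1/\log n$ (``short'') or not (``long''). For short intervals, $p[I]^2\le (C\log n/m)^2$, and summing the geometric factor $(1-\eta)^{d-1}$ over $d\ge 1$ yields $1/\eta$; combined with the Cauchy--Schwarz bound this contributes $O(\log^2 n)\cdot\|p\|_2^2/\eta$ after multiplying by the prefactor $m^2$. For long intervals, $p[I]^2\le (C\log^2 n/m)^2(d-1)^2\eta^2$, and the identity $\sum_{d\ge 1}(d-1)^2(1-\eta)^{d-1}=O(1/\eta^3)$ gives a contribution of $O(\log^4 n)\cdot\|p\|_2^2/\eta$ after multiplying by $m^2$. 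Adding the two pieces yields the claimed $O(\log^4 n)\cdot\|p\|_2^2/\eta$ bound.

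The main obstacle, if any, is not in the individual estimates but in the bookkeeping: one must verify that the relative-concentration bound applies uniformly to \emph{all} intervals in $\cI$ used here (including the single-vertex case $d=1$, handled by the $1/\log n$ branch), and that the Cauchy--Schwarz decoupling of $\sum_i p_i p_{i+d-1}$ from $p[I]^2$ is legitimate per fixed $d$ before executing the geometric sum over $d$. Given the clean closed form $\Ex{\bm{J}(I)}=(1-\eta)^{d-1}$ in the constant-weight regime, everything else is a routine geometric-series computation.
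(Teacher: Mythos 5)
Your proposal is correct and follows essentially the same route as the paper: both start from \cref{prop:general-variance-first-component}, apply the relative-concentration bound $p[I]\le \frac{C\log^2 n}{m}\max\{(d-1)\eta,1/\log n\}$, decouple $\sum_i p_ip_{i+d-1}\le\|p\|_2^2$ by Cauchy--Schwarz at each fixed $d$, and finish with geometric-series estimates $\sum_d\nu^{d-1}=O(1/\eta)$ and $\sum_d d^2\nu^{d-1}=O(1/\eta^3)$. The only cosmetic difference is that you split the sum into ``short'' and ``long'' intervals according to the active branch of the $\max$, whereas the paper bounds $\max\{a,b\}^2\le a^2+b^2$ and keeps both contributions together; these are equivalent and yield identical terms.
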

\begin{proof}
    By \cref{prop:general-variance-first-component}, for some constant $c > 0$ we have
    \[
        \Varu{\bm{H}}{\Exuc{\bm{T}}{\bm{Y}}{\bm{H}}}
            \le 5m^2 \zeta(\cI)\|p\|_1^4
            + c m^2 \cdot \sum_{\substack{I=\llangle i,d \rrangle \in \cI \\ 1 \le d \le n}}
            p_{i} p_{i+d-1} p[I]^2 \Ex{\bm{J}(I)}  \,.
    \]
    We start with the second component of the RHS.
    Recall that for any $I = \llangle i,d \rrangle$, $\Ex{\bm{J}(I)} \le \nu^{d-1}$ (this value
    may be zero if $\cI = \cI^\pth$ and $I$ crosses the edge between vertices $0$ and $n-1$).
    We have
    \begin{align*}
        &\sum_{\substack{I=\llangle i,d \rrangle \in \cI \\ 1 \le d \le n}} p_{i} p_{i+d-1} p[I]^2 \Ex{\bm{J}(I)}
        \le \sum_{i=0}^{n-1} \sum_{d=1}^{n} p_i p_{i+d-1} p[\llangle i,d \rrangle]^2 \nu^{d-1} \\
        &\quad \le \sum_{i=0}^{n-1} \sum_{d=1}^{n} p_i p_{i+d-1} \nu^{d-1}
            \left[ \frac{C\log^2 n}{m} \cdot \max\left\{ \eta(d-1), \frac{1}{\log n} \right\} \right]^2
            & \text{($p$ is not $C$-highly-concentrated)} \\
        &\quad \le \sum_{i=0}^{n-1} \sum_{d=1}^n p_i p_{i+d-1} \nu^{d-1}
            \left[ \frac{C^2\log^4 n}{m^2} \cdot \left( \eta^2 d^2 + \frac{1}{\log^2 n} \right) \right] \\
        &\quad = \frac{C^2 \log^2 n}{m^2} \left(
            \eta^2 \log^2(n) \sum_{d=1}^n d^2 \nu^{d-1} \sum_{i=0}^{n-1} p_i p_{i+d-1} +
            \sum_{d=1}^n \nu^{d-1} \sum_{i=0}^{n-1} p_i p_{i+d-1} \right) \\
        &\quad \le \frac{C^2 \|p\|_2^2 \log^2 n}{m^2} \left(
            \eta^2 \log^2(n) \sum_{d=1}^n d^2 \nu^{d-1} + \sum_{d=1}^n \nu^{d-1} \right)
            & \text{(by Cauchy-Schwarz)} \\
        &\quad \le \frac{C^2 \|p\|_2^2 \log^2 n}{m^2} \left(
            \eta^2 \log^2(n) \frac{1+\nu}{\eta^3} + \frac{1}{\eta} \right)
            & \text{(since $\eta = 1-\nu$)} \\
        &\quad \le \frac{3 C^2 \|p\|_2^2 \log^4 n}{m^2 \eta}
            & \text{(since $\nu < 1$)} \\
        &\quad = \frac{1}{m^2} \cdot \frac{\|p\|_2^2}{\eta} \cdot O\left( \log^4 n \right) \,,
    \end{align*}
    as desired. Then, it suffices to show that the term $5m^2\zeta(\cI)\|p\|_1^4$ is $O(1/n)$,
    since $\|p\|_2^2 \ge \|\mu\|_2^2 = 1/n$. Let $K > 0$ be a constant such that
    $m \le n^K$ for sufficiently large $n$, as per the assumption that $m \le \poly(n)$.
    Then we have $\|p\|_1^4 = 1$ and
    $m^2 \zeta(\cI) = o(m^2/n^{2K+1}) \le o(1/n)$
    by \cref{prop:zeta-is-small-confused-collector}, as needed.
\end{proof}

\begin{lemma}[Second component of the variance]
    \label{lemma:var-second-component-confused-collector}
    Let $C > 0$ be a constant, let $n \in \bN$ be sufficiently large and suppose $p$ is a
    probability distribution over $\bZ_n$ that is not $C$-highly concentrated.
    Suppose $m \le \poly(n)$. Then
    \[
        \Exu{\bm{H}}{\Varuc{\bm{T}}{\bm{Y}}{\bm{H}}}
        \le \frac{\|p\|_2^2}{\eta} \cdot O(\log^4 n) \,.
    \]
\end{lemma}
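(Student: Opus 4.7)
The plan is to apply \cref{prop:general-variance-second-component} to rewrite the inner variance as $\Varuc{\bm{T}}{\bm{Y}}{\bm{H}=H} = 2\|p_{|\Gamma}\|_2^2 + 4m\|p_{|\Gamma}\|_3^3$, and then bound each norm in terms of $\|p\|_2^2$ and $\eta$ by exploiting (i) that the buckets produced by $\bm{H}$ are small with very high probability, and (ii) that $p$ is not $C$-highly concentrated so the mass assigned to each bucket is small.

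First, I will condition on the ``good'' event $E$ that every bucket $\bm\Gamma_i$ satisfies $|\bm\Gamma_i| \le D$ for $D \define \frac{2K\log n}{\eta}$, with $K$ a large constant. By \cref{prop:bucket-sizes-small}, $\Pr{\overline E} \le n^{-K}$. Under $E$, Cauchy-Schwarz gives $p[\bm\Gamma_i]^2 \le |\bm\Gamma_i| \sum_{j \in \bm\Gamma_i} p_j^2$, and summing over $i$ yields
\[
\|p_{|\bm\Gamma}\|_2^2 = \sum_i p[\bm\Gamma_i]^2 \le \max_i |\bm\Gamma_i| \cdot \|p\|_2^2 \le D \cdot \|p\|_2^2 = O(\log n/\eta)\cdot \|p\|_2^2.
\]
For the $3$-norm I will use the fact that $\sum_i p[\bm\Gamma_i]^3 \le \max_i p[\bm\Gamma_i] \cdot \|p_{|\bm\Gamma}\|_2^2$. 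To bound $\max_i p[\bm\Gamma_i]$, I apply the non-concentration hypothesis: since each bucket is a circular interval of size at most $D$, non-concentration gives $p[\bm\Gamma_i] \le \tfrac{C\log^2 n}{m}\cdot \max\{(|\bm\Gamma_i|-1)\eta,\, 1/\log n\} = O(\log^3 n / m)$ under $E$. Combining,
\[
4m\|p_{|\bm\Gamma}\|_3^3 \le 4m \cdot O(\log^3 n/m) \cdot O(\log n/\eta) \cdot \|p\|_2^2 = \tfrac{\|p\|_2^2}{\eta} \cdot O(\log^4 n),
\]
which dominates the $2\|p_{|\bm\Gamma}\|_2^2$ contribution and matches the target bound on $E$.

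It remains to handle the bad event $\overline E$. Here I use the trivial deterministic bound $\|p_{|\bm\Gamma}\|_2^2, \|p_{|\bm\Gamma}\|_3^3 \le \|p\|_1^3 = 1$, so the conditional variance is at most $O(m)$. Since $m \le \poly(n)$ and we may take the constant $K$ as large as we want in \cref{prop:bucket-sizes-small}, the contribution $\Pr{\overline E}\cdot O(m) \le O(m/n^K)$ is smaller than $1/n \le \|p\|_2^2$, so it is absorbed into $\tfrac{\|p\|_2^2}{\eta}\cdot O(\log^4 n)$ (using the assumption $\eta = \Omega(n^{-1/5})$ implicit in the surrounding analysis; even without it, $\eta \le 1$ suffices here). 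Putting the two cases together with the law of total expectation yields the claimed bound.

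The mildly delicate step is the choice of denominator in the non-concentration bound for $p[\bm\Gamma_i]$: one must realise that an entire bucket is itself a circular interval $I$ whose edge set $I^*$ has total $w$-mass $(|I|-1)\eta$, so the definition of non-concentration applies directly. Once this is observed, the rest of the argument is a mechanical combination of Cauchy-Schwarz and the bucket-size tail bound, in direct parallel with the Cauchy-Schwarz step used in the proof of \cref{lemma:var-first-component-confused-collector}.
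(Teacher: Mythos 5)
Your proposal is correct and follows essentially the same route as the paper: the same decomposition via \cref{prop:general-variance-second-component}, the same good/bad event split using \cref{prop:bucket-sizes-small}, the same Cauchy--Schwarz step for the $2$-norm, the same use of the non-concentration hypothesis applied to buckets viewed as circular intervals to bound $\|p_{|\bm\Gamma}\|_\infty$, and the same trivial $O(m)$ fallback on the bad event. The only cosmetic slip is the line bounding both $\|p_{|\bm\Gamma}\|_2^2$ and $\|p_{|\bm\Gamma}\|_3^3$ by ``$\|p\|_1^3$'' — the correct intermediary is $\|p_{|\bm\Gamma}\|_1^2$ and $\|p_{|\bm\Gamma}\|_1^3$ respectively, both equal to $1$, so the conclusion is unaffected.
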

\begin{proof}
    Let $K \ge 2$ be some constant such that $m \le O(n^{K-1})$.
    Recall that, from \cref{prop:bucket-sizes-small}, the buckets
    $\bm{\Gamma} = (\bm{\Gamma}_1, \dotsc, \bm{\Gamma}_n)$ induced by $\bm{H}$ are such that
    $\abs*{\bm{\Gamma}_i} \le \frac{2K\log n}{\eta}$ for all $i$,
    except with probability at most $1/n^K$. We show that the variance is small when this condition
    holds, and that the low-probability case where the condition fails does not contribute too much
    to the expectation.

    \textbf{Case 1.} Suppose $H$ is such that its induced buckets
    $\Gamma = (\Gamma_1, \dotsc, \Gamma_b)$ satisfy
    $\abs*{\Gamma_i} \le \frac{2K\log n}{\eta}$
    for every $i \in [n]$. We wish to show that $\Varuc{\bm{T}}{\bm{Y}}{\bm{H} = H}$
    satisfies the upper bound from the statement.
    We start with the result from
    \cref{prop:general-variance-second-component}:
    \[
        \Varuc{\bm{T}}{\bm{Y}}{\bm{H} = H} = 2\|p_{|\Gamma}\|_2^2 + 4m \|p_{|\Gamma}\|_3^3 \,.
    \]
    We start by bounding the first term in the RHS. For each bucket $\Gamma_i$, we have
    \[
        \left(p_{|\Gamma}\right)_i^2
        = \left( \sum_{j \in \Gamma_i} p_j \right)^2
        = \sum_{j,k \in \Gamma_i} p_j p_k
        \le \sum_{j,k \in \Gamma_i} \frac{p_j^2 + p_k^2}{2}
        = \abs*{\Gamma_i} \sum_{j \in \Gamma_i} p_j^2
        \le O\left(\frac{\log n}{\eta}\right) \sum_{j \in \Gamma_i} p_j^2 \,.
    \]
    Hence, we obtain
    \[
        \|p_{|\Gamma}\|_2^2
        = \sum_{i=1}^b \left(p_{|\Gamma}\right)_i^2
        \le O\left(\frac{\log n}{\eta}\right) \|p\|_2^2 \,,
    \]
    as desired. Moving on to the second term, first note that
    $\|p_{|\Gamma}\|_3^3 = \sum_i p[\Gamma_i]^3
    \le \sum_i (\max_j p[\Gamma_j]) p[\Gamma_i]^2
    = \|p_{|\Gamma}\|_\infty \|p_{|\Gamma}\|_2^2$. We claim that
    $\|p_{|\Gamma}\|_\infty \le O\left(\frac{\log^3 n}{m}\right)$.
    Fix any $i \in [b]$ and consider entry
    $\left(p_{|\Gamma}\right)_i = p\left[\Gamma_i\right]$. The anticoncentration of $p$ yields
    \[
        p\left[\Gamma_i\right] \le \frac{C \log^2 n}{m} \cdot \max\left\{
            \eta\left( \abs*{\Gamma_i}-1 \right), \frac{1}{\log n} \right\} \,.
    \]
    Combining with the assumption that $\abs*{\Gamma_i} \le O\left(\frac{\log n}{\eta}\right)$,
    we get
    \[
        p\left[\Gamma_i\right] \le \frac{C \log^2 n}{m} \cdot O(\log n) \,,
    \]
    which establishes the claim. We have already shown that
    $\|p_{|\Gamma}\|_2^2 \le \frac{\|p\|_2^2}{\eta} \cdot O(\log n)$, and thus
    \[
        m \|p_{|\Gamma}\|_3^3
        \le m \cdot O\left(\frac{\log^3 n}{m} \right) \cdot \frac{\|p\|_2^2}{\eta} \cdot O(\log n)
        = \frac{\|p\|_2^2}{\eta} \cdot O(\log^4 n) \,,
    \]
    which concludes Case 1.

    \textbf{Case 2.} In the rare event that $\bm{H} = H$ is a subgraph that
    fails the small-buckets condition of Case 1, we will fall back to a looser upper bound for
    the conditional variance that holds for every $H$. We once again start with the result
    from \cref{prop:general-variance-second-component}:
    \[
        \Varuc{\bm{T}}{\bm{Y}}{\bm{H} = H}
        = 2\|p_{|\Gamma}\|_2^2 + 4m \|p_{|\Gamma}\|_3^3 \,.
    \]
    Using $\|p_{|\Gamma}\|_1 = 1$ along with the monotonicity of $\ell^p$ norms gives
    \[
        \Varuc{\bm{T}}{\bm{Y}}{\bm{H} = H}
        \le 2\|p_{|\Gamma}\|_1^2 + 4m \|p_{|\Gamma}\|_1^3
        \le 6m
        = O(n^{K-1}) \,.
    \]

    \textbf{Concluding the argument.} We now combine both cases to upper bound the expected
    variance. Using \cref{prop:bucket-sizes-small}, we have
    \begin{align*}
        \Exu{\bm{H}}{\Varuc{\bm{T}}{\bm{Y}}{\bm{H}}}
        &\le \left[ \begin{array}{l}
            \Pr{\bm{H} \text{ satisfies } \|\bm{\Gamma}\|_\infty \le \frac{2K\log n}{\eta}}
                \Exuc{\bm{H}}
                     {\Varuc{\bm{T}}{\bm{Y}}{\bm{H}}}
                     {\bm{H} \text{ satisfies } \|\bm{\Gamma}\|_\infty \le \frac{2K\log n}{\eta}} \\
            + \Pr{\bm{H} \text{ does not satisfy } \|\bm{\Gamma}\|_\infty \le \frac{2K\log n}{\eta}}
                \max_{H} \left\{ \Varuc{\bm{T}}{\bm{Y}}{\bm{H} = H} \right\}
        \end{array} \right] \\
        &\le \frac{\|p\|_2^2}{\eta} \cdot O(\log^4 n) + \frac{1}{n^K} \cdot O(n^{K-1}) \,.
    \end{align*}
    Since $\|p\|_2^2 \ge \|\mu\|_2^2 = 1/n$, the first term dominates the second, concluding the
    proof.
\end{proof}

\noindent
We conclude that $\bm{Y}$ satisfies the following concentration bound:

\begin{lemma}[Concentration of the Test Statistic]
    \label{lemma:confused-collector-concentration}
    Let $C > 0$ be a constant, let $n \in \bN$ be sufficiently large and suppose $p$ is a
    probability distribution over $\bZ_n$ that is not $C$-highly concentrated.
    Suppose $\eta \ge \Omega(n^{-1/5})$ and $m \le \poly(n)$. Then for all $t > 0$,
    \[
        \Pr{\abs*{\bm{Y} - \Ex{\bm{Y}}} \ge t}
        \le \frac{\|p\|_2^2}{\eta t^2} \cdot O(\log^4 n) \,.
    \]
\end{lemma}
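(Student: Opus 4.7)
The plan is to combine the two variance bounds we have already proved with the law of total variance, and then apply Chebyshev's inequality. Specifically, recall from the discussion at the start of \cref{section:shared-analysis-variance-1} that
\[
    \Var{\bm{Y}} = \Varu{\bm{H}}{\Exuc{\bm{T}}{\bm{Y}}{\bm{H}}}
                 + \Exu{\bm{H}}{\Varuc{\bm{T}}{\bm{Y}}{\bm{H}}} \,.
\]
Under the hypotheses of the lemma ($p$ not $C$-highly concentrated, $\eta \ge \Omega(n^{-1/5})$, $m \le \poly(n)$, $n$ sufficiently large), the first term is bounded by $\tfrac{\|p\|_2^2}{\eta} \cdot O(\log^4 n)$ by \cref{lemma:var-first-component-confused-collector}, and the second term is bounded by the same quantity by \cref{lemma:var-second-component-confused-collector}. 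Adding the two bounds yields
\[
    \Var{\bm{Y}} \le \frac{\|p\|_2^2}{\eta} \cdot O(\log^4 n) \,.
\]

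The conclusion then follows immediately from Chebyshev's inequality: for any $t > 0$,
\[
    \Pr{\abs*{\bm{Y} - \Ex{\bm{Y}}} \ge t} \le \frac{\Var{\bm{Y}}}{t^2}
    \le \frac{\|p\|_2^2}{\eta t^2} \cdot O(\log^4 n) \,.
\]

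There is no real obstacle here, since the two variance-component lemmas have already done all of the heavy lifting; this lemma is essentially a wrapper that packages those bounds into a concentration inequality in the form used by the subsequent analysis of \cref{alg:tester-confused-collector}. The only thing to double-check is that the hypotheses of the two component lemmas are implied by the hypotheses here, which is clear by inspection (both require that $p$ not be $C$-highly concentrated and that $m \le \poly(n)$, and the first additionally requires $\eta \ge \Omega(n^{-1/5})$, all of which are assumed).
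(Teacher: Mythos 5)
Your proof is correct and follows exactly the same route as the paper: apply the law of total variance to split $\Var{\bm{Y}}$ into the two components, bound each by \cref{lemma:var-first-component-confused-collector} and \cref{lemma:var-second-component-confused-collector}, and finish with Chebyshev's inequality. The hypothesis-check you perform at the end matches what the paper implicitly relies on.
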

\begin{proof}
    Combine
    \cref{lemma:var-first-component-confused-collector,lemma:var-second-component-confused-collector},
    via the law of total variance, and Chebyshev's inequality.
\end{proof}

\subsection{Correctness of the Tester}

Combining our separation and concentration results above, we can show that $\bm{Y}$ is concentrated
on the correct side of the tester's threshold.

\begin{lemma}
    \label{lemma:tester-second-step-confused-collector}
    Let $\alpha, L > 0$ be constants. Then there exist constants $\beta > 0$, and
    $c = c_{\alpha,\beta} > 0$ such that 
    the following holds for all sufficiently large $n$.
    Let $\epsilon, \eta \in (0, 1]$ satisfy $\eta \ge \frac{L \log^{4/5} n}{n^{1/5} \epsilon^{4/5}}$.
    Suppose $p$ is a probability distribution over $\bZ_n$ that is not $4\alpha$-highly concentrated
    with respect to $m = c \cdot \frac{\sqrt{n}}{\epsilon^2 \eta^{3/2}} \log^2 n$.
    Let $T \define \frac{m}{n^2} \sum_{i,j} \phi_{i,j} + \beta \frac{m}{n} \epsilon^2 \eta$ be the
    threshold used by the second step of \cref{alg:tester-confused-collector}.
    Then the test statistic $\bm{Y}$ satisfies the following:
    \begin{enumerate}
        \item (Completeness) If $p = \mu$, then $\bm{Y} < T$ with probability at least $99/100$;
        \item (Soundness) If $\dist_\TV(p, \mu) > \epsilon$, then $\bm{Y} > T$ with probability at
            least $99/100$.
    \end{enumerate}
\end{lemma}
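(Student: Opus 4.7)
The plan is to reduce both parts of the lemma to Chebyshev's inequality, using \cref{lemma:confused-collector-concentration} (concentration of $\bm{Y}$) and \cref{lemma:separation-expected-value-confused-collector} (separation of $\Ex{\bm{Y}}$ from the uniform baseline). By \cref{prop:expectation-y-uniform-confused-collector}, the threshold decomposes as $T = \Ex{\bm{Y}^{(\mu)}} + \beta \frac{m}{n}\epsilon^2\eta$, so the ``slack'' between the threshold and the uniform baseline is exactly $\beta \frac{m}{n}\epsilon^2\eta$; the constants $\beta$ and $c$ will be fixed at the end.

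For completeness, $p = \mu$ gives $\|p\|_2^2 = 1/n$ and $\Ex{\bm{Y}} = \Ex{\bm{Y}^{(\mu)}}$, so the event $\{\bm{Y} \ge T\}$ is contained in $\{|\bm{Y} - \Ex{\bm{Y}}| \ge \beta\frac{m}{n}\epsilon^2\eta\}$. A direct check from the definitions (using $\eta \ge \Omega(n^{-1/5})$ and our choice of $m$) shows that $\mu$ is not $4\alpha$-highly concentrated for all sufficiently large $n$, so \cref{lemma:confused-collector-concentration} applies. Substituting $m^2\epsilon^4\eta^3 = c^2 n \log^4 n$ yields $\|p\|_2^2/(\eta t^2) = 1/(\beta^2 c^2 \log^4 n)$ for $t = \beta\frac{m}{n}\epsilon^2\eta$, so the Chebyshev bound is $O(1/(\beta^2 c^2))$, which is at most $1/100$ for $c$ large in terms of $\beta$.

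For soundness, I apply \cref{lemma:separation-expected-value-confused-collector} (whose precondition on $\eta$ follows from the hypothesis via \cref{remark:eta-range-ok}) to obtain $\Ex{\bm{Y}} - \Ex{\bm{Y}^{(\mu)}} > \frac{m\eta}{8}\|z\|_2^2$ where $z \define p - \mu$. Cauchy-Schwarz applied to $\|z\|_1 > 2\epsilon$ gives $\|z\|_2^2 > 4\epsilon^2/n$; fixing $\beta = 1/4$ then yields $\frac{m\eta}{16}\|z\|_2^2 \ge \beta\frac{m}{n}\epsilon^2\eta$, so $\Ex{\bm{Y}} - T > \frac{m\eta}{16}\|z\|_2^2$ and $\{\bm{Y} \le T\} \subseteq \{|\bm{Y} - \Ex{\bm{Y}}| \ge \frac{m\eta}{16}\|z\|_2^2\}$. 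Applying \cref{lemma:confused-collector-concentration} with $\|p\|_2^2 = 1/n + \|z\|_2^2$ (using $\sum_i z_i = 0$) yields a Chebyshev bound of order $O(\log^4(n)) \cdot (1/n + \|z\|_2^2)/(m^2 \eta^3 \|z\|_2^4)$. Using $\|z\|_2^2 \ge 4\epsilon^2/n$ to upper bound $1/\|z\|_2^4 \le n^2/(16\epsilon^4)$ and $1/\|z\|_2^2 \le n/(4\epsilon^2)$, and substituting $m^2\eta^3\epsilon^4 = c^2 n \log^4 n$, the bound simplifies to $O(1/c^2)$, at most $1/100$ for $c$ large enough.

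The argument is essentially a direct application of the two earlier lemmas combined with the substitution $m = c\sqrt{n}\log^2(n)/(\epsilon^2\eta^{3/2})$. The main technical point is retaining the dependence on $\|z\|_2^2$ in both the variance bound and the squared gap, and then verifying that the coarse lower bound $\|z\|_2^2 \ge 4\epsilon^2/n$ is exactly what is needed to make both terms in $(1/n + \|z\|_2^2)/\|z\|_2^4$ collapse to a constant multiple of $1/(c^2 n \log^4 n)$ after substituting the chosen $m$.
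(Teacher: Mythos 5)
Your proposal is correct and follows essentially the same route as the paper's own proof: apply \cref{prop:expectation-y-uniform-confused-collector} to write $T = \Ex{\bm{Y}^{(\mu)}} + \beta\frac{m}{n}\epsilon^2\eta$, then use \cref{lemma:confused-collector-concentration} for completeness, and combine \cref{lemma:separation-expected-value-confused-collector} with \cref{lemma:confused-collector-concentration} for soundness, in both cases substituting $m^2\epsilon^4\eta^3 = c^2 n\log^4 n$ to collapse the Chebyshev bound to $O(1/(\beta^2 c^2))$. The only cosmetic differences are that you fix $\beta = 1/4$ explicitly while the paper keeps $\beta \le 1/3$ free until the end, and that your aside verifying $\mu$ is not $4\alpha$-highly concentrated is redundant since that is already a hypothesis of the lemma applied to $p$; you should also note that the hidden constant in $O(\log^4 n)$ depends on $\alpha$ (through $C = 4\alpha$), so $c$ must be taken large as a function of both $\alpha$ and $\beta$, which is what the lemma statement allows.
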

\begin{proof}
    \textbf{Completeness.} Suppose $p = \mu$.
    By \cref{prop:expectation-y-uniform-confused-collector}, $\bm{Y}$ satisfies
    $\Ex{\bm{Y}} = \frac{m}{n^2} \sum_{i,j} \phi_{i,j}$.
    Hence for any fixed $\beta$ (to be chosen below),
    it suffices to show that $\bm{Y} < \Ex{\bm{Y}} + \beta \frac{m}{n} \epsilon^2 \eta$ with good
    probability. By \cref{lemma:confused-collector-concentration} and using the fact that
    $\|\mu\|_2^2 = 1/n$,
    \[
        \Pr{\bm{Y} \ge \Ex{\bm{Y}} + \beta \frac{m}{n} \epsilon^2 \eta}
        \le \Pr{\abs*{\bm{Y} - \Ex{\bm{Y}}} \ge \beta \frac{m}{n} \epsilon^2 \eta}
        \le \frac{\|p\|_2^2}{\eta \left(\beta \frac{m}{n} \epsilon^2 \eta\right)^2}
            \cdot O(\log^4 n)
        = \frac{O(n \log^4 n)}{\beta^2 m^2 \epsilon^4 \eta^3} \,,
    \]
    and we have
    \begin{equation}
        \label{eq:m-requirement-1}
        \frac{O(n \log^4 n)}{\beta^2 m^2 \epsilon^4 \eta^3} \le 1/100
        \iff m \ge \frac{1}{\beta} \cdot
                O\left( \frac{\sqrt{n}}{\epsilon^2 \eta^{3/2}} \log^2 n\right) \,,
    \end{equation}
    as desired. Thus, there exists constant
    $c^{(1)} = c^{(1)}_{\alpha,\beta}$ such that if
    $m \ge c^{(1)} \frac{\sqrt{n}}{\epsilon^2 \eta^{3/2}} \log^2 n$
    then, for all sufficiently large $n$, $\bm{Y} < T$ with probability at least $99/100$.

    \textbf{Soundness.} We proceed similarly. By
    \cref{lemma:separation-expected-value-confused-collector,prop:expectation-y-uniform-confused-collector},
    we have
    \[
        \Ex{\bm{Y}} > \Ex{\bm{Y}^{(\mu)}} + \frac{m \eta}{8} \|z\|_2^2
        = \frac{m}{n^2}\sum_{i,j} \phi_{i,j} + \frac{m \eta}{8} \|z\|_2^2 \,.
    \]
    Therefore it suffices to show that, for appropriately chosen $\beta$, we have
    \[
        \bm{Y} \gtquestion \Ex{\bm{Y}} - \frac{m \eta}{8} \|z\|_2^2 + \beta \frac{m}{n} \epsilon^2 \eta \,.
    \]
    Recall that, when $\dist_\TV(p, \mu) > \epsilon$, we have $\|z\|_1 > 2\epsilon$, which implies
    $\|z\|_2^2 > 4\epsilon^2 / n$ and hence
    \[
        \frac{m \eta}{8} \|z\|_2^2 > \frac{m}{2n} \epsilon^2 \eta \,,
    \]
    so that
    \[
        \Ex{\bm{Y}} - \frac{m \eta}{8} \|z\|_2^2 + \beta \frac{m}{n} \epsilon^2 \eta
        < \Ex{\bm{Y}} - \frac{m \eta}{8} \|z\|_2^2 + 2\beta \frac{m \eta}{8} \|z\|_2^2
        = \Ex{\bm{Y}} - (1-2\beta) \frac{m \eta}{8} \|z\|_2^2 \,.
    \]
    Thus, for $\beta \le 1/3$, we have $1-2\beta \ge \beta$ and it suffices to show that the
    following holds with probability at least $99/100$:
    \[
        \bm{Y} \gtquestion \Ex{\bm{Y}} - \beta \frac{m \eta}{8} \|z\|_2^2 \,.
    \]
    We apply \cref{lemma:confused-collector-concentration} again, along with
    $\|p\|_2^2 = \|\mu\|_2^2 + \|z\|_2^2 = \frac{1}{n} + \|z\|_2^2$ and
    $\|z\|_2^2 \ge \frac{4\epsilon^2}{n}$.
    \begin{align*}
        \Pr{\bm{Y} \le \Ex{\bm{Y}} - \beta \frac{m \eta}{8} \|z\|_2^2}
        &\le \Pr{\abs*{\bm{Y} - \Ex{\bm{Y}}} \ge \beta \frac{m \eta}{8} \|z\|_2^2}
        \le \frac{\|p\|_2^2}{\eta \left( \beta \frac{m \eta}{8} \|z\|_2^2 \right)^2}
            \cdot O(\log^4 n) \\
        &= \frac{(\frac{1}{n} + \|z\|_2^2) O(\log^4 n)}{\beta^2 m^2 \eta^3 \|z\|_2^4}
        \le \frac{\frac{1}{n} O(\log^4 n)}{\beta^2 m^2 \eta^3 (\epsilon^2 / n)^2}
            + \frac{O(\log^4 n)}{\beta^2 m^2 \eta^3 (\epsilon^2 / n)} \\
        &= \frac{O(n \log^4 n)}{\beta^2 m^2 \eta^3 \epsilon^4} \,.
    \end{align*}
    This failure probability is asymptotically the same as that obtained in the completeness case.
    Thus there exists a constant $c^{(2)} = c^{(2)}_{\alpha,\beta} > 0$ such that, for
    $m \ge c^{(2)} \frac{\sqrt{n}}{\epsilon^2 \eta^{3/2}} \log^2 n$
    and all sufficiently large $n$, $\bm{Y} > T$ with probability at least $99/100$.
    Setting $c = \max\left\{ c^{(1)}_{\alpha,\beta}, c^{(2)}_{\alpha,\beta} \right\}$
    concludes the proof.
\end{proof}

\noindent
Finally, we establish correctness by combining our results for the two steps of the tester:

\begin{theorem}[Refinement of \cref{thm:intro-confused-collector-main}]
\label{thm:confused-collector-main}
    There exist constants $\alpha > 0$, $\beta > 0$, $c = c_{\alpha,\beta} > 0$,
    and $L = L_c > 0$ such that
    the following holds for all sufficiently large $n$. Suppose $\epsilon, \eta \in (0, 1]$ satisfy
    $\eta \ge \frac{L \log^{4/5} n}{n^{1/5} \epsilon^{4/5}}$. Let $G$ be either the cycle or the
    path on vertices $V = \bZ_n$, and let $p$ be a probability distribution
    over $\bZ_n$.
    Then \cref{alg:tester-confused-collector} instantiated with constants
    $\alpha, \beta, L \text{ and } c$ has sample complexity
    $\Theta\left(\frac{\sqrt{n}}{\epsilon^2 \eta^{3/2}} \log^2 n\right)$
    and its output on $p$ satisfies
    \begin{enumerate}
        \item (Completeness) If $p$ is the uniform distribution over $\bZ_n$, then the algorithm
            accepts with probability at least $9/10$;
        \item (Soundness) If $p$ is $\epsilon$-far from the uniform distribution in TV distance,
            then the algorithm rejects with probability at least $9/10$.
    \end{enumerate}
\end{theorem}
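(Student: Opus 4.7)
The plan is to prove the theorem essentially as a bookkeeping exercise combining the two-step analysis already carried out: \cref{lemma:concentration-test-eta} handles the first step, and \cref{lemma:tester-second-step-confused-collector} handles the second step. The sample complexity is immediate from the definition of $m$ in \cref{alg:tester-confused-collector} (with standard Poissonization adding only a lower-order term to the number of actual samples drawn).

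I would first fix the constants in the following order to avoid circular dependencies: choose $\alpha > 0$ large enough so that \cref{lemma:concentration-test-eta} applies; then choose $\beta \in (0, 1/3]$ and set $c = c_{\alpha,\beta}$ as produced by \cref{lemma:tester-second-step-confused-collector}; finally pick $L = L_c$ large enough that the hypothesis $\eta \geq \frac{L \log^{4/5} n}{n^{1/5} \epsilon^{4/5}}$ forces all downstream preconditions. In particular, one needs $m \leq n\eta$ so that \cref{lemma:concentration-test-eta} applies, which unpacks to $c \log^2 n \leq \sqrt{n}\,\epsilon^2 \eta^{5/2}$; plugging in the lower bound on $\eta$ shows this holds as long as $L \geq c^{2/5}$. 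The same calibration confirms $\eta = \Omega(n^{-1/5})$, needed for the eigenvalue bounds used by \cref{lemma:tester-second-step-confused-collector}, and the range $m \leq \poly(n)$ needed for the concentration lemma.

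For completeness, assume $p = \mu$. By \cref{lemma:concentration-test-eta}(1), step 1 rejects with probability at most $1/100$. By \cref{lemma:tester-second-step-confused-collector}(1), step 2 rejects with probability at most $1/100$. Union bounding, the tester accepts with probability at least $98/100 \geq 9/10$. For soundness, assume $\dist_\TV(p,\mu) > \epsilon$ and split on concentration: if $p$ is $4\alpha$-highly concentrated, then \cref{lemma:concentration-test-eta}(2) says step 1 rejects with probability at least $99/100$; otherwise \cref{lemma:tester-second-step-confused-collector}(2) applies (its precondition that $p$ is not $4\alpha$-highly concentrated is exactly the negation of the first case) and step 2 rejects with probability at least $99/100$. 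In either case the tester rejects with probability at least $9/10$.

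The main obstacle, such as it is, is purely bookkeeping: checking that the single range $\eta \geq \frac{L \log^{4/5} n}{n^{1/5} \epsilon^{4/5}}$ and the single choice $m = c \sqrt{n}/(\epsilon^2 \eta^{3/2}) \log^2 n$ simultaneously satisfy every precondition of the intermediate lemmas (the $m \leq n\eta$ condition for \cref{lemma:concentration-test-eta}, the interval $\tfrac{8C \log n}{\eta^3 \epsilon^2} \leq m \leq Cn\log n$ that \cref{lemma:cross-term-bound} inherits from \cref{lemma:tester-second-step-confused-collector}, the $\eta \geq \Omega(n^{-1/5})$ condition from the eigenvalue analysis, and \cref{remark:eta-range-ok}). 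Each of these reduces to a polynomial inequality in $n, \epsilon, \eta$ that the stated lower bound on $\eta$ satisfies once $L$ is chosen large enough relative to $c$ and $\alpha$; no new ideas are required beyond the ones already developed in the section.
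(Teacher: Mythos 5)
Your proposal is correct and follows essentially the same route as the paper: fix $\alpha$ for the concentration test, verify $m \le n\eta$ to determine $L \ge c^{2/5}$, invoke \cref{lemma:tester-second-step-confused-collector} to fix $\beta$ and $c$, then union-bound for completeness and split on the $4\alpha$-high-concentration case for soundness. The extra bookkeeping you describe (checking the various preconditions like $\eta = \Omega(n^{-1/5})$ and $m \le \poly(n)$) is accurate but is already absorbed into the hypotheses of the intermediate lemmas as the paper presents them.
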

\begin{proof}
    We start by instantiating $\alpha > 0$ large enough as per
    \cref{lemma:concentration-test-eta}. That lemma also requires that $m \le n \eta$, which we
    now verify. Fix any constant $c > 0$ and suppose
    $m = c \cdot \frac{\sqrt{n}}{\epsilon^2 \eta^{3/2}} \log^2 n$. Then
    \[
        m \le n \eta
        \iff c \cdot \frac{\sqrt{n}}{\epsilon^2} \cdot \frac{\log^2 n}{\eta^{3/2}} \le n \eta
        \iff \eta^{5/2} \ge c \cdot \frac{\log^2 n}{n^{1/2} \epsilon^2}
        \iff \eta \ge \frac{c^{2/5} \log^{4/5} n}{n^{1/5} \epsilon^{4/5}} \,.
    \]
    Therefore, for any choice of $c$, setting $L = L_c \ge c^{2/5}$ ensures that $m \le n \eta$.

    Thus, instantiate $\beta, c > 0$ as provided by
    \cref{lemma:tester-second-step-confused-collector}, and the corresponding $L_c$ as above.
    Now, we can use \cref{lemma:concentration-test-eta,lemma:tester-second-step-confused-collector}
    to establish overall correctness of the tester. (Note that the sample complexity claim follows
    from the specification of the algorithm.)

    \textbf{Completeness.} By \cref{lemma:concentration-test-eta}, the first step of the tester
    rejects only with probability at most $1/100$. Likewise, by
    \cref{lemma:tester-second-step-confused-collector}, the second step of the tester rejects
    only with probability at most $1/100.$ Hence the total rejection probability is at most
    $2/100 < 1/10$.

    \textbf{Soundness.} There are two cases depending on the concentration of $p$. First, suppose
    $p$ is $4\alpha$-highly concentrated. Then the first step of the tester rejects with
    probability at least $99/100$ by \cref{lemma:concentration-test-eta}. On the other hand, if
    $p$ is not $4\alpha$-highly concentrated, then the second step of the tester rejects with
    probability at least $99/100$ by \cref{lemma:tester-second-step-confused-collector}. Either
    way, the tester rejects with probability at least $99/100 > 9/10$.
\end{proof}

\begin{remark}
    \label{remark:confused-collector-parameter-ranges}
    In the introduction (see \cref{thm:intro-confused-collector-main}), we stated that our sample
    complexity upper bound would apply to the regime where
    $\epsilon \ge \widetilde \Theta\left( n^{-1/4} \right)$ and
    $\eta \ge \widetilde \Theta\left( n^{-1/5} \epsilon^{-4/5} \right)$. Although the condition on
    $\epsilon$ is not explicitly stated above, it is a consequence of the condition on $\eta$
    and the fact that $\eta \le 1$ in our definition of the problem:
    \[
        \frac{L \log^{4/5} n}{n^{1/5} \epsilon^{4/5}} \le \eta \le 1
        \implies \epsilon \ge \frac{L^{5/4} \log n}{n^{1/4}} \,.
    \]
    An interesting question is whether it is possible to handle an ever wider range of parameters;
    in particular, our analysis uses the inequality $m \le n \eta$, but if we allow arbitrarily
    small $\epsilon$, then necessarily $m \gg n$. Also note that it is not possible to
    handle the \emph{full} range of parameters: for sufficiently small $\epsilon$ and (say)
    $\eta = 1/10$, one may place the deviation from uniformity on two adjacent vertices,
    and with probability at least $9/10$ this deviation will be imperceptible to the tester.
\end{remark}

\section{Testing Uniformity in the Parity Trace Model}
\label{sec:general-upper-bound}

In this section we state the upper bound portion of our main \cref{thm:intro-main-informal},
stated formally here:

\begin{theorem}
\label{thm:intro-main}
Fix domain $[2n]$. Let $\Pi$ contain only the uniform distribution.  Then the sample complexity
of $(\Pi, \far^\TV_\epsilon(\Pi))$-distribution testing under the parity trace is $\widetilde
\Theta\left( \left(\frac{n}{\epsilon}\right)^{4/5} + \frac{\sqrt n}{\epsilon^2}\right)$.
\end{theorem}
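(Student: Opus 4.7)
The plan is to build on the collision-based framework of \cref{sec:collision-based-testing-general} and the confused-collector analysis of \cref{sec:upper-bound-confused-collector}. Given input $\pi = \pi(p,q)$ on $[2n]$, we analyze the $1$-runs and $0$-runs of the parity trace separately; by symmetry we focus on the $1$-runs. Following \cref{section:path-cycle-structured-rvs}, we view $p$ as a partial distribution on the $n$ vertices of a cycle and set the edge weight $w(j) = 1 - e^{-mq_j}$, so that the probability vertices $j$ and $j+1$ belong to the same bucket equals $\Pr{\Poi(mq_j)=0}$, exactly capturing the event that no even sample separates them in sorted order. The test statistic is $\bm{Y} = \frac{1}{m}\sum_i \bm{X}_i(\bm{X}_i-1)$, the collision count within run-lengths.

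First, the tester performs a preliminary ``high relative concentration'' test along the lines of \cref{subsec:confused-collector-highly-concentrated}: invoking \cref{lemma:relative-concentration-structural} with the weight vector induced by $q$, we reject whenever some $\bm{X}_i$ exceeds a $\poly\log n$ threshold. This catches inputs where $p$ concentrates much more than $q$ on some interval (with the symmetric test on $0$-runs handling the opposite direction), while uniform $\pi$ passes since bucket sizes are small with high probability. Surviving this test yields the $\ell_\infty$-type control on $p$ relative to the local $q$-mass that the variance analysis needs.

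Second, the main new ingredient is the choice of baseline for $\Ex{\bm{Y}}$. Because the expected bucket masses depend on $q$, the uniform distribution on $[n]$ is no longer the right reference. We define the \emph{uniform conjugate} $\tilde{p}$ of $q$ to be the partial distribution for which every connected component of $\bm{H}$ has the same expected $p$-mass $\tau$; this is the worst-case minimizer of $\Ex{\bm{Y}}$ among partial distributions with a fixed total. Approximating the clockwise sampling of components by a Markov process on the cycle, we derive the closed form $\tau = \frac{1-\|q\|_1}{\sum_{i=1}^n \tanh(mq_i/2)}$ up to a controlled error. Writing $p = \tilde{p} + z$ and applying \cref{prop:quadratic-form-decomposition} gives $\Ex{\bm{Y}} = m\tilde{p}^\top\phi\tilde{p} + 2m\tilde{p}^\top\phi z + m z^\top\phi z$, where the baseline term is computable from $q$ alone, the linear cross term is small by the defining property of $\tilde{p}$, and the quadratic term is lower-bounded by a minimum-eigenvalue argument analogous to \cref{lemma:min-eigenvalue-pth,lemma:min-eigenvalue-cyc}. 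When $\pi$ is $\epsilon$-far from uniform in TV, the combined deviation $\|p-\tilde{p}\|_2^2 + \|q - \tilde{q}\|_2^2$ (with $\tilde{q}$ the analog on the $0$-side) must be $\gtrsim \epsilon^2/n$, so at least one of the two statistics separates.

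Third, for concentration we feed into \cref{prop:general-variance-first-component,prop:general-variance-second-component} and bound both variance components by $(1/m + z^\top\phi z)\cdot \poly\log n$ under the no-high-concentration assumption, obtaining $\Pr{|\bm{Y} - \Ex{\bm{Y}}| \geq t} \leq \poly\log n \cdot (1/m + z^\top\phi z)/t^2$. Chebyshev with threshold $t \asymp m z^\top\phi z$ then yields correctness at sample size $m = \widetilde\Theta((n/\epsilon)^{4/5} + \sqrt n/\epsilon^2)$. The main obstacle will be the derivation and use of $\tilde{p}$: the cross term $\tilde{p}^\top\phi z$ must be shown negligible for every $(p,q)$ surviving the pre-test, which requires controlling simultaneously a Markov-chain approximation error, a minimum-eigenvalue bound for the now non-Toeplitz matrix $\phi$ whose entries depend on partial sums of $q$, and the near-orthogonality of $z$ to the ``flat-bucket'' structure induced by $\tilde{p}$.
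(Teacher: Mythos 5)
Your decomposition, the uniform conjugate $\tilde{p}$ with $\tau = (1-\|q\|_1)/\sum_i\tanh(mq_i/2)$, the Markov-chain derivation, and the variance bound of the form $(1/m + z^\top\phi z)\poly\log n$ all match the paper's approach. But there is a genuine gap in your separation argument for $\Ex{\bm{Y}}$, and it is precisely where the paper's main insight lives.

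You propose to detect far-from-uniformity through a minimum-eigenvalue lower bound on $z^\top\phi z$, leaning on the claim that $\|p-\tilde{p}\|_2^2 + \|q-\tilde{q}\|_2^2 \gtrsim \epsilon^2/n$. Neither piece holds up. First, the claim is not established and can fail: the adversary can set $p = \tilde{p}(q)$ exactly, making $z^{(1)} = 0$, and there is no a priori reason the $0$-side deviation $\|q-\tilde{q}(p)\|_2^2$ is then large — this is a fixed-point condition with no obvious lower bound. Second, even granting $\|z\|_2^2 \gtrsim \epsilon^2/n$, a min-eigenvalue bound for $\phi = \phi(q)$ would have to be uniform over all admissible $q$, and $\phi$ here is neither Toeplitz nor circulant; its spectrum depends delicately on $q$, and the paper never proves (or needs) such a bound. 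The paper only uses that $\phi$ is PSD (\cref{claim:join-semidefinite}), so $z^\top\phi z \geq 0$ is carried along as a bonus term and nothing more.

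The actual separation mechanism in \cref{lemma:sep-expected-value} is different and sharper: the baseline $\tau$ \emph{itself} increases when $q$ is far from uniform. Since $\tanh$ is concave on $\bR_{\geq 0}$, a quantitative Jensen inequality (\cref{prop: tanh quadratic upper bound,lemma:qjensen}) shows $\sum_i \tanh(mq_i/2) \leq n\tanh\bigl(\tfrac{m}{2n}\|q\|_1\bigr)\bigl(1 - \Omega(\tfrac{m^2}{n}\|(q-q')^+\|_2^2)\bigr)$, hence $\tau$ exceeds its uniform value by $\Omega(\tfrac{m\epsilon^2}{n^2})$ whenever $q$ is bumpy. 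This gives a separation on the $1$-side \emph{regardless of $p$}, even when $z = 0$; symmetrically, a bumpy $p$ yields a $0$-side separation. This also explains why the algorithm must include a bias test (\cref*{line: reject bias}), which your sketch omits: a shift in $\|q\|_1$ away from $1/2$ could \emph{decrease} $\tau$ and wash out the Jensen gap, so the tester must first reject if the run counts are unbalanced. Finally, the theorem asserts $\widetilde\Theta(\cdot)$, so you also owe a lower bound; the paper's matching $\widetilde\Omega((n/\epsilon)^{4/5})$ comes from an information-theoretic ``domino'' construction (\cref{section:lower-bound}) that your proposal does not address.
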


Following the setup from \cref{sec:collision-based-testing-general}, we consider the task of testing
uniformity of an unknown distribution $\pi = \pi(p, q)$ in the parity trace model. Recall that we
stitch the ends of the trace into a necklace and study the resulting circular trace.
Therefore our base graph $G = (\bZ_n, E)$ is the cycle and we think of $p$ as a partial distribution
over the vertices, whereas $q$ will determine the weights of the edges:
if $e \in E$ connects vertices $i$ and $i+1$ (mod $n$), then $w(e) = 1 - e^{-m q_i}$,
and this edge is sampled into $\bm{H}$ with probability $1 - w(e) = \Pr{\Poi(mq_i) = 0}$.

The testing algorithm has two cases: when $\epsilon$ is very small, in which case we may reduce to
the standard uniformity testing algorithm (handled in \cref{section:gen-small-eps}); and when
$\epsilon$ is not too small, in which case our main analysis applies. In our main analysis, the
tester performs 3 steps:
\begin{enumerate}
\item Bias test: check whether the counts of 1- and 0-valued symbols in the trace are too
    unbalanced, in which case the distribution must be far from uniform;
\item Concentration test: check whether any run-length (of either 1- or 0-valued symbols)
    is too large. We will show that this case can also be safely rejected.
\item Collision-based test: accept or reject depending on whether the test statistic $\bm{Y}$ is
    below a certain threshold. This step will require the most technical work.
\end{enumerate}
Formally, the algorithm is \cref{alg:uniformity-tester-linear-trace}.
It is parameterized by absolute constants $\alpha, \beta, \gamma, K$, which will be
defined later, and requires that $\epsilon \ge \frac{K\log^{3} n}{n^{1/4}}$; note that this
condition, combined with the fact that the algorithm sets $m = O\left(
\left(\frac{n}{\epsilon}\right)^{4/5} \log^{7/5} n\right)$, implies $m = O\left(\frac{n}{\log
n}\right) = o(n)$. We will use this fact throughout the analysis.

\begin{algorithm}[th]
    \caption{Uniformity tester for the case when
        $\epsilon \ge \frac{K_{\alpha,\beta,\gamma} \log^{3}}{n^{1/4}}$.}
    \label{alg:uniformity-tester-linear-trace}
    \hspace*{\algorithmicindent}
    Set $m \gets \Theta_{\alpha, \beta, \gamma} \left(
        \left(\frac{n}{\epsilon}\right)^{4/5} \log^{7/5} n\right)$. \\
    \hspace*{\algorithmicindent} \textbf{Constants:}
        $\alpha, \beta, \gamma > 0$ and $K = K_{\alpha,\beta,\gamma} > 1$, to be defined later.\\
    \hspace*{\algorithmicindent}
    \textbf{Input:} For $\pi = \pi(p,q)$ on domain $\bZ_n$,
        receive $\trace(S)$ for sample $S \gets \samp(\pi, m)$ \\
    \hspace*{\algorithmicindent} \textbf{Requires:} $\epsilon \ge \frac{K\log^{3} n}{n^{1/4}}$. \\
    \begin{algorithmic}[1]
        \Procedure{UniformityTester-ParityTrace}{$\trace(S)$}
            \State Construct the circular trace from $\trace(S)$.
            \For {$b \in \{0,1\}$}
                \State Let $X_1, \dotsc, X_n$ be the ``$b$'' run-lengths defined in
                    \cref{section:parity-trace}.
                \State $N \gets \sum_i X_i$.
                \State $Y \gets \frac{1}{m} \sum_i X_i(X_i-1)$.
                \State \textbf{If} {$\frac{N}{m} \ge \frac{1}{2} + \frac{\gamma}{\sqrt{m}}$}
                    then \textbf{reject}.
                    \label{line: reject bias}
                \State \textbf{If} {$\max_i X_i \ge \alpha \log n$} then \textbf{reject}.
                    \label{line: reject concentration}
                \State \textbf{If} {$Y \ge \frac{m}{4n^2}\sum_{i,j} \phi^{(\mu)}_{i,j}
                        + \beta \frac{\epsilon^2 m^2}{n^2}$} then \textbf{reject}.
            \EndFor
            \State \textbf{Accept}.
        \EndProcedure
    \end{algorithmic}
\end{algorithm}

We will write $\mu$ for the partial distribution such that $\pi(\mu,\mu)$ is the uniform distribution,
\ie $\mu_i = 1/2n$ for each $i \in \bZ_n$. As discussed in 
\cref{section:parity-trace}, we analyze only the statistics for the 1s in the trace (\ie the case
$b=1$ in \cref{alg:uniformity-tester-linear-trace}), as the case for the 0s is symmetric.
\noindent
\cref{section:parity-easy} will show that the first two steps of
\cref{alg:uniformity-tester-linear-trace} are correct. 
\cref{section:parity-expectation} will show that $\Ex{\bm{Y}}$ is small when the input distribution
is uniform, and large when it is far from uniform.
\cref{section:parity-variance} will give a bound on the variance of $\bm Y$.
\cref{section:parity-correctness} combines these results to prove correctness of
\cref{alg:uniformity-tester-linear-trace}. Together with the algorithm for the small $\epsilon$ case
in \cref{section:gen-small-eps}, this will prove the upper bound of \cref{thm:intro-main}.

\subsection{The Small $\epsilon$ Case}
\label{section:gen-small-eps}

Fix any constant $K > 0$.  In the case $\epsilon < \frac{K \log^{3} n}{n^{1/4}}$, the tester will
simulate the standard uniformity tester (see \eg \cite{VV17,DGPP19}). The tester under the parity
trace will use a sample of size $O\left(\frac{\sqrt n}{\epsilon^2} + n \log n\right) = \widetilde
O\left(\frac{\sqrt n}{\epsilon^2}\right)$. The simulation is possible because, using $O(n \log n)$
samples, the tester either receives a sample from every domain element (therefore gaining the
ability to correctly distinguish all elements of the support), or it can safely reject. This proof
is not particularly insightful and we defer it to \cref{section:parity-trace-small-eps}.

\begin{lemma}
\label{lemma:upper-bound-linear-trace-small-epsilon}
Suppose that $\epsilon < \frac{K \log^3 n}{n^{1/4}}$. Then there is a distribution tester under the
parity trace, with sample complexity $\widetilde O\left(\frac{\sqrt n}{\epsilon^2}\right)$,
such that on input distribution $\pi = \pi(p,q)$:
\begin{enumerate}
\item If $p = q = \mu$, the algorithm will accept with probability at least $2/3$; and,
\item If $\pi$ is $\epsilon$-far from uniform, then the algorithm will reject with probability at
least $2/3$.
\end{enumerate}
\end{lemma}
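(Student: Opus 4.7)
The plan is to combine a coupon-collector-style coverage check with any off-the-shelf $\Theta(\sqrt{N}/\epsilon^2)$ uniformity tester for distributions over $[N]$, with $N = 2n$. The key observation is that whenever every domain element appears at least once in a (Poissonized) sample $\bm S \sim \samp(\pi, m)$, the \emph{linear} trace recovers the histogram \emph{exactly}: since consecutive domain elements have opposite parities, $\trace(\bm S)$ must read $1^{\bm X_1} 0^{\bm X_2} 1^{\bm X_3} 0^{\bm X_4} \dotsm 1^{\bm X_{2n-1}} 0^{\bm X_{2n}}$, where $\bm X_i \ge 1$ is the multiplicity of element $i \in [2n]$. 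Thus the tester can read $\bm X_i$ off as the length of the $i$-th maximal run.

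First I would draw a Poissonized sample of size $m = C_1\, n \log n + C_2 \sqrt{n}/\epsilon^2$ for sufficiently large constants $C_1, C_2$. Under the standing hypothesis $\epsilon < K \log^3 n / n^{1/4}$, we have $\sqrt{n}/\epsilon^2 \ge n / (K^2 \log^6 n)$, so $n \log n = O\bigl((\sqrt{n}/\epsilon^2) \log^7 n\bigr)$ and hence $m = \widetilde O(\sqrt{n}/\epsilon^2)$, as required. The tester then inspects the linear trace, counts the number $b$ of maximal runs, and checks that these runs are consistent with full coverage (i.e., $b = 2n$ and the parities alternate starting with $1$, or the analogous suffix/prefix cases when $\bm X_1 = 0$ or $\bm X_{2n} = 0$ do not arise). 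If coverage fails, the tester rejects. Otherwise it reads off the vector $(\bm X_1, \dotsc, \bm X_{2n})$ and feeds it into any standard Poissonized uniformity tester for distributions on $[2n]$, such as the collision tester of \cite{DGPP19}, using error parameter $\epsilon$.

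For completeness, assume $\pi = \pi(\mu,\mu)$ is uniform on $[2n]$. By a standard coupon-collector bound, the probability that some element is missing from the sample is at most $2n \cdot e^{-m/(2n)} = o(1)$ once $C_1$ is a sufficiently large constant. On the coverage event, the reconstructed histogram equals the true Poissonized histogram of the sample, which is exactly what the downstream tester expects as input from the uniform distribution; it therefore accepts with probability at least $2/3$ by its own guarantee, provided $C_2$ is sufficiently large relative to the standard uniformity-testing constants. A union bound yields overall acceptance probability $\ge 2/3 - o(1)$, which can be boosted to $2/3$ by constants. For soundness, assume $\pi$ is $\epsilon$-far from uniform in TV. Either the coverage check fails (and we reject deterministically), or else the reconstructed histogram is a faithful Poissonized sample from $\pi$ of size $\Omega(\sqrt{n}/\epsilon^2)$, in which case the downstream standard tester rejects with probability at least $2/3$.

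There is no real technical obstacle; this is essentially bookkeeping around the deterministic decoding step and the choice of constants, with the only thing to check being that the hypothesis on $\epsilon$ absorbs the $n \log n$ term into $\widetilde{O}(\sqrt{n}/\epsilon^2)$, which is immediate from the calculation above.
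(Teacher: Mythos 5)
Your proof is correct and takes essentially the same route as the paper's: a coupon-collector coverage check followed by feeding the exactly-recovered histogram into a standard $\Theta(\sqrt n/\epsilon^2)$ uniformity tester, with the $n\log n$ coverage cost absorbed into $\widetilde O(\sqrt n/\epsilon^2)$ by the hypothesis $\epsilon < K\log^3 n / n^{1/4}$. The only difference is cosmetic: where you invoke a union bound (the acceptance probability degrades by at most $\Pr[\neg\text{coverage}] = o(1)$, and the soundness failure probability is bounded because rejection on non-coverage can only help), the paper instead introduces the conditional sample distribution and bounds its TV distance to the unconditional one; both are valid and your argument is, if anything, slightly more direct.
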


\subsection{Easy Cases: Unbalanced and Highly-Concentrated Distributions}
\label{section:parity-easy}

We now proceed to the main analysis, where $\epsilon \geq \frac{K \log^3 n }{n^{1/4}}$, where $K$ is
some constant.  In this section we handle the two ``easy'' rejection cases of the tester, which
detect whether the total probability masses on the 1-valued elements and 0-valued elements are
significantly unbalanced, or whether one of the partial distributions $p$ or $q$ is highly
concentrated relative to the other.
First, we consider the case where the partial distributions are unbalanced:

\begin{proposition}
    \label{prop:easy-case-bias}
    For sufficiently large absolute constant $\gamma > 0$ and sufficiently large $n$, the tester
    satisfies the following:
    \begin{itemize}
        \item When $\pi(p, q)$ is uniform, \cref*{line: reject bias} rejects with probability
            at most $1/100$.
        \item When $\|p\|_1 \not\in \frac{1}{2} \pm \frac{2\gamma}{\sqrt{m}}$
            (equivalently, $\|q\|_1 \not\in \frac{1}{2} \pm \frac{2\gamma}{\sqrt{m}}$),
            \cref*{line: reject bias} rejects with probability at least $99/100$.
    \end{itemize}
\end{proposition}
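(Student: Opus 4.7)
The plan is to reduce the proposition to a single, standard concentration bound on the total 1-count $N = \sum_{i=1}^n X_i$. Observe that each of the $m$ independent samples drawn from $\pi(p,q)$ is 1-valued (\ie odd) with probability exactly $\|p\|_1$, so $N \sim \Bin(m, \|p\|_1)$ with $\Ex{N} = m \|p\|_1$ and $\Var{N} \leq m/4$. Chebyshev's inequality therefore gives
\[
  \Pr{ |N - m\|p\|_1| \geq \gamma \sqrt{m} } \;\leq\; \frac{1}{4\gamma^2} \,,
\]
which is at most $1/100$ for any $\gamma \geq 5$; this single bound will drive both parts of the proposition.

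For completeness, when $\pi(p,q)$ is uniform we have $\|p\|_1 = 1/2$, so the displayed bound directly yields $N/m < 1/2 + \gamma/\sqrt{m}$ with probability at least $99/100$, and \cref{line: reject bias} rejects with probability at most $1/100$. For soundness, suppose first that $\|p\|_1 > 1/2 + 2\gamma/\sqrt{m}$. Then $\Ex{N} > m/2 + 2\gamma \sqrt{m}$, and the same Chebyshev bound gives $N \geq \Ex{N} - \gamma \sqrt{m} > m/2 + \gamma \sqrt{m}$ with probability at least $99/100$, triggering rejection. If instead $\|p\|_1 < 1/2 - 2\gamma/\sqrt{m}$, then, since $\pi$ is a probability distribution, $\|q\|_1 = 1 - \|p\|_1 > 1/2 + 2\gamma/\sqrt{m}$. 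The outer loop over $b \in \{0,1\}$ in \cref{alg:uniformity-tester-linear-trace} runs the identical test on the 0-counts, with $N_0 \sim \Bin(m, \|q\|_1)$ playing the role of $N$, and the previous argument (applied in the iteration $b=0$) forces a rejection with probability at least $99/100$.

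There is no real technical obstacle here; the only care required is to fix a single constant $\gamma$ (\eg $\gamma = 5$) that simultaneously handles completeness and soundness, which is possible because of the factor-of-$2$ gap between the hypothesis $\|p\|_1 \notin 1/2 \pm 2\gamma/\sqrt{m}$ and the algorithm's threshold $1/2 + \gamma/\sqrt{m}$. Sharper Poisson tail bounds (\eg \cref{fact:poisson-concentration}) would yield a smaller $\gamma$, but Chebyshev already suffices for the statement.
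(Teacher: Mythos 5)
Your argument is correct in outline and tracks the paper's strategy (concentrate the total ``1''-count, then exploit the factor-of-$2$ slack between the rejection threshold $\gamma/\sqrt m$ and the hypothesis gap $2\gamma/\sqrt m$). The one substantive mismatch is the distributional model. You write $N \sim \Bin(m, \|p\|_1)$ with $\Var{N} \le m/4$, \ie you are implicitly analyzing a non-Poissonized tester that sees exactly $m$ samples. The paper, however, analyzes the Poissonized version throughout: as set up in \cref{section:path-cycle-structured-rvs} it takes $\bm T_j \sim \Poi(m p_j)$ independently, so $\bm N = \sum_j \bm T_j \sim \Poi(m\|p\|_1)$, and the rest of the algorithm's analysis (the run-lengths $\bm X_i$, the test statistic $\bm Y$, \cref{prop:expectation-y-quadratic-form}, etc.) leans on this independence. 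With the Poisson model the variance is $\Var{\bm N} = m\|p\|_1 \le m$, not $m/4$, so your Chebyshev constant needs to grow: for completeness $\Var{\bm N}=m/2$ gives failure probability $\le \tfrac{1}{2\gamma^2}$, and for soundness $\Var{\bm N} \le m$ gives $\le \tfrac{1}{\gamma^2}$, so something like $\gamma \ge 11$ (rather than $\gamma = 5$) is needed to keep everything below $1/100$. This is still fine for the proposition, which only asks for ``sufficiently large $\gamma$,'' but it matters for the record that the tester being analyzed is the Poissonized one. Also, your soundness step uses ``$N \ge \Ex{N} - \gamma\sqrt m$ with probability at least $99/100$'' as if the one-sided deviation is what Chebyshev gives; Chebyshev bounds the two-sided deviation, which is the version you already stated, so the inference is sound, just make sure the direction you cite is the one you proved. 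Beyond these points, your proof is a valid (if slightly cruder) alternative: the paper uses the Poisson tail bound (\cref{fact:poisson-concentration}) to get sub-Gaussian concentration with a small explicit $\gamma$, whereas Chebyshev gives a polynomial tail but is simpler and still suffices for the constant failure probability required here.
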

\begin{proof}
    Fix the iteration of the tester with $b=1$, so that $\bm{N} \sim \Poi(m\|p\|_1)$ is the number
    of ``1'' symbols observed in the trace. First suppose $\pi(p,q)$ is uniform, so that in
    particular $\|p\|_1 = 1/2$. Then by \cref{fact:poisson-concentration}, the probability that
    the test in this iteration rejects is
    \[
        \Pr{\frac{\bm{N}}{m} \ge \frac{1}{2} + \frac{\gamma}{\sqrt{m}}}
        = \Pr{\bm{N} \ge m/2 + \gamma\sqrt{m}}
        \le e^{-\frac{m \gamma^2}{2(m/2 + \gamma\sqrt{m})}}
        \le e^{-\frac{\gamma^2}{2}}
        \le 1/200 \,.
    \]
    Now suppose $\|p\|_1 > \frac{1}{2} + \frac{2\gamma}{\sqrt{m}}$. Then the probability that the
    test in this iteration \emph{fails} to reject is
    \[
        \Pr{\frac{\bm{N}}{m} < \frac{1}{2} + \frac{\gamma}{\sqrt{m}}}
        \le \Pr{\bm{N} \le m\|p\|_1 - \gamma\sqrt{m}}
        \le e^{-\frac{m \gamma^2}{2(m\|p\|_1 + \gamma\sqrt{m})}}
        \le e^{-\frac{\gamma^2}{4}}
        \le 1/200 \,.
    \]
    By symmetry, the same holds for the iteration $b=0$ with respect to $\|q\|_1$. Hence this
    step correctly accepts/rejects except with probability at most $1/100$.
\end{proof}

Next, we handle the case where $p$ or $q$ is highly concentrated relative to the other, which we
define as follows.

\begin{definition}[Highly concentrated partial distributions]
Given a constant $C > 0$, positive integer $m$, and partial distribution $\pi = \pi(p,q)$, we say
that $p$ is \emph{$C$-highly concentrated}\footnote{We would adjust this definition appropriately to
test the statistic for the 0s.} \emph{relative to $q$ (with respect to $m$)} if
$\RelativeConcentrationT{p}{q}{t} \ge C \log^2 n$, where $t = \frac{1}{m\log n}$.
\end{definition}

\begin{remark}
    \label{remark:non-concentrated-l-infty-linear-trace}
    If $p$ is not $C$-highly concentrated relative to $q$, then in particular
    $\|p\|_\infty < \frac{C\log n}{m}$,
    as can be seen by taking intervals $I = \llangle i, 1 \rrangle$ for each $i \in \bZ_n$.
\end{remark}

We can now combine this definition with \cref{lemma:relative-concentration-structural} to show
that the second step of the tester behaves as intended:

\begin{proposition}
    \label{prop:easy-case-concentrated}
    For sufficiently large absolute constant $\alpha > 0$ and sufficiently large $n$, the tester
    satisfies the following:
    \begin{enumerate}
        \item When $\pi(p, q)$ is uniform, \cref*{line: reject concentration} rejects with
            probability at most $1/100$.
        \item When at least one of $p, q$ is $4\alpha$-highly concentrated relative to the other
            with respect to $m$, \cref*{line: reject concentration} rejects with probability at
            least $99/100$.
    \end{enumerate}
\end{proposition}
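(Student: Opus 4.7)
The plan is to follow the same completeness/soundness split as in \cref{lemma:concentration-test-eta} (the analogous concentration test for the confused collector), with the constant resolution $\eta$ replaced by the edge weights $w(j) = 1 - e^{-m q_j}$. Throughout, two parameter facts follow from the tester's preconditions and will be used repeatedly: $\epsilon \ge K \log^3 n / n^{1/4}$ combined with $m = \Theta((n/\epsilon)^{4/5} \log^{7/5} n)$ gives $m = O(n/\log n)$, and in particular $m = o(n)$.

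\textbf{Completeness.} When $\pi(p,q)$ is uniform, each edge is independently included in $\bm{H}$ with probability $e^{-m/(2n)}$, so the gap probability is $p_{\mathrm{gap}} = 1 - e^{-m/(2n)} = \Theta(m/n)$. First I would show that the length of the bucket containing any fixed vertex is stochastically dominated by a geometric random variable with parameter $p_{\mathrm{gap}}$, so a union bound over the $n$ starting vertices gives that the maximum bucket size is $O(\log n / p_{\mathrm{gap}}) = O(n \log n / m)$ except with probability $O(1/n^2)$. On this event every bucket $\Gamma_i$ satisfies $p[\Gamma_i] = O(\log n / m)$, so each $\bm{X}_i \sim \Poi(m \cdot p[\Gamma_i])$ has Poisson parameter $O(\log n)$. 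For $\alpha$ large enough, \cref{fact:poisson-concentration} combined with a union bound over at most $n$ buckets and over both iterations $b \in \{0,1\}$ then keeps the probability of any $\bm{X}_i \ge \alpha \log n$ below $1/100$ for large $n$.

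\textbf{Soundness.} Suppose without loss of generality that $p$ is $4\alpha$-highly concentrated relative to $q$ (the other direction is handled symmetrically by the iteration $b=0$). I would apply \cref{lemma:relative-concentration-structural} with $t = 1/(m \log n)$ to obtain a circular interval $I \in \cI$ with $|I| \le n$, $q[I^*] \le 1/(m \log n)$, and $p[I] \ge 2\alpha \log n / m$. The event $E$ that every edge of $I^*$ lies in $\bm{H}$ has probability $\prod_{j \in I^*} e^{-m q_j} = e^{-m q[I^*]} \ge e^{-1/\log n} = 1 - o(1)$. On $E$, all vertices of $I$ share a single bucket $\Gamma \supseteq I$, so the corresponding run-length is distributed as $\Poi(m \cdot p[\Gamma])$ with parameter at least $m p[I] \ge 2\alpha \log n$; a Poisson lower-tail bound from \cref{fact:poisson-concentration} then gives $\Pr{\Poi(2\alpha \log n) < \alpha \log n} = o(1)$ for $\alpha \ge 6$, so the rejection fires with probability at least $99/100$.

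\textbf{Main obstacle.} The main point to verify is that this parameter regime makes both tails behave: the geometric tail on bucket sizes needs $n p_{\mathrm{gap}} \gg \log n$, which holds because $m = \Omega(n^{4/5} \log^{7/5} n)$; and the edge-activation bound $e^{-m q[I^*]} = 1 - o(1)$ is what motivates the specific choice of threshold $t = 1/(m \log n)$ in the definition of $4\alpha$-highly concentrated. Once these parameter checks are in place, the remaining Poisson-tail and union-bound manipulations are routine.
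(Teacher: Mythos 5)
Your proposal is correct and matches the paper's argument essentially step for step: both the soundness argument (apply \cref{lemma:relative-concentration-structural} with $t = 1/(m\log n)$, bound $\Pr{\bm{J}(I)=0} \le 1 - e^{-1/\log n} = o(1)$, then apply a Poisson lower-tail bound) and the parameter check that $m = O(n/\log n)$ are the same. The only stylistic difference is in the completeness step: you control the vertex count of each bucket via a geometric tail with parameter $\Theta(m/n)$, whereas the paper directly bounds $\max_i q[\bm{\Gamma}_i^*]$ by the threshold $2\log n / m$ using a minimal-interval union bound, and then uses $p=q$ to deduce $\max_i p[\bm{\Gamma}_i] \le 4\log n/m$; in the uniform case these are equivalent (vertex count translates linearly to mass), so both yield the same Poisson parameter $O(\log n)$ for each $\bm{X}_i$.
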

\begin{proof}
    \textbf{Completeness.}
    Suppose $\pi(p, q)$ is uniform and fix iteration $b=1$,
    so that the tester is looking for long runs of ``1'' symbols. Therefore, we have
    buckets $\bm{\Gamma} = (\bm{\Gamma}_1, \dotsc, \bm{\Gamma}_{\bm{b}})$
    corresponding to the connected components of $\bm{H}$, and each bucket $\bm{\Gamma}_i$
    contributes a run length $\bm{X}_i \sim \Poi\left(m p\left[ \bm{\Gamma}_i \right]\right)$.

    First, we claim that $\max_i q\left[\bm{\Gamma}_i^*\right] \le \frac{2\log n}{m}$ with high
    probability. For each $i \in \bZ_n$, let $I_i$ be the minimal circular interval
    $I_i = \llangle i, d \rrangle$ satisfying $q[I^*_i] > \frac{2\log n}{m}$.
    If $\max_i q\left[\bm{\Gamma}_i^*\right] > \frac{2\log n}{m}$, then at least one of these
    intervals was joined, \ie $\bm{J}(I_i) = 1$ for some $i \in \bZ_n$. Recall that each edge
    $e = (i, i+1)$ appears in $\bm{H}$ with probability
    $1 - w(e) = e^{-m q_i} = \Pr{\Poi(mq_i) = 0}$. Thus
    \begin{align*}
        \Pr{\bm{J}(I_i) = 1}
        &= \Pr{\forall e \in I_i^* : e \in \bm{H}}
        = \Pr{\Poi(m q[I^*_i]) = 0} \\
        &< \Pr{\Poi\left(m \cdot \frac{2\log n}{m}\right) = 0}
        = e^{-2\log n} = 1/n^2 \,.
    \end{align*}
    By the union bound,
    $\Pr{\max_i q\left[\bm{\Gamma}_i^*\right] > \frac{2\log n}{m}} < 1/n = o(1)$.

    Now, suppose $\max_i q\left[\bm{\Gamma}^*_i\right] \le \frac{2\log n}{m}$. Since $\pi$ is
    uniform and therefore $p=q$, it follows that
    $\max_i p\left[\bm{\Gamma}_i\right] \le \frac{4\log n}{m}$ (the factor of $2$ accounts for
    the fact that $\Gamma_i$ is in general one element larger than $\Gamma_i^*$, and if
    the latter is empty and $\Gamma_i$ has size 1, then the claim follows from the fact that
    $\pi$ is the uniform distribution).
    Thus, we use \cref{fact:poisson-concentration} to upper bound the probability that any fixed
    bucket $\Gamma_i$ produces a run length that \cref*{line: reject concentration} would
    reject: for sufficiently large $\alpha$,
    \begin{align*}
        \Pruc{}{\bm{X}_i \ge \alpha \log n}{\bm{\Gamma}_i = \Gamma_i}
        &= \Pr{\Poi\left( m p[\Gamma_i] \right) \ge \alpha \log n}
        \le \Pr{\Poi(4\log n) \ge \alpha \log n} \\
        &\le \Pr{\Poi(4\log n) - 4\log n \ge (\alpha - 4)\log n}
        \le e^{-\frac{(\alpha - 4)^2 \log^2 n}{2\left( (\alpha-4)\log n + 4\log n \right)}} \\
        &\le e^{-\frac{(\alpha/2)^2 \log n}{2 \alpha}}
        \le e^{-2 \log n}
        \le 1 / n^2 \,.
    \end{align*}
    Hence, the probability that this event occurs for any bucket is at most $1/n = o(1)$,
    and by symmetry the same is true for the iteration of the tester with $b=0$.
    Therefore, when $\pi(p, q)$ is uniform \cref*{line: reject concentration} rejects only
    with $o(1)$ probability.

    \textbf{Soundness.}
    Suppose without loss of generality that $p$ is $4\alpha$-highly concentrated relative to $q$.
    Using \cref{lemma:relative-concentration-structural}, let $I$ be a circular
    interval satisfying
    \begin{enumerate}
        \item $q[I^*] \le \frac{1}{m \log n}$; and
        \item $p[I] \ge 2\alpha \frac{\log n}{m}$.
    \end{enumerate}

    First, we claim that with high probability $\bm{J}(I) = 1$. Indeed we have
    \begin{align*}
        \Pr{\bm{J}(I) = 0}
        &= 1- \Pr{\forall e \in I^* : e \in \bm{H}}
        = 1 - \Pr{\Poi(mq[I^*]) = 0} \\
        &= \Pr{\Poi(mq[I^*]) > 0}
        \le \Pr{\Poi\left( \frac{1}{\log n} \right) > 0}
        = 1 - e^{-\frac{1}{\log n}}
        = o(1) \,.
    \end{align*}
    Therefore any ``1'' symbols sampled from the vertices in $I$ will belong to the
    same run, and this run will contain at least $\Poi(mp[I])$ symbols where
    $mp[I] \ge 2\alpha \log n$. Thus the probability that this run length fails to exceed the
    rejection threshold is at most
    \begin{align*}
        \Pr{\Poi(mp[I]) < \alpha \log n}
        &\le \Pr{\Poi(2\alpha \log n) \le \alpha \log n}
        = \Pr{\Poi(2\alpha \log n) \le 2\alpha \log n - \alpha \log n} \\
        &\le e^{-\frac{\alpha^2 \log^2 n}{2(\alpha \log n + 2\alpha \log n)}}
        = e^{-\frac{\alpha \log n}{6}}
        = o(1)\,.
    \end{align*}
    Hence \cref*{line: reject concentration} rejects except with $o(1)$ probability, completing
    the proof.
\end{proof}

These two steps will allow us to assume, when useful, that
1) $\|p\|_1, \|q\|_1 \in \frac{1}{2} \pm \frac{2\gamma}{\sqrt{m}}$; and 2) neither $p$ nor $q$ is
$4\alpha$-highly concentrated relative to the other with respect to $m$.
We are now ready to analyze our main test statistic $\bm{Y}$.

\subsection{Expected Value of the Test Statistic}
\label{section:parity-expectation}

Our first goal is to show that $\Ex{\bm{Y}}$ is well-separated between the case when $\pi(p,q) =
\pi(\mu,\mu)$ and when $\dist_\TV(\pi(p,q), \pi(\mu,\mu)) > \epsilon$.  Note that the latter case
implies that $\|p - \mu\|_1 > \epsilon$ or $\|q - \mu\|_1 > \epsilon$ and that our tester is
symmetric with respect to the 0- and 1-valued symbols, so it is safe to assume without loss of
generality that $\|q - \mu\|_1 > \epsilon$.

Recall our formulation of the expected value of $\bm{Y}$ from \cref{prop:expectation-y-quadratic-form}:
\begin{equation}
    \label{eq:expectation-y-quadratic-form-linear-trace}
    \Ex{\bm{Y}} = m p^\top \phi p \,.
\end{equation}
We first show that, when $\pi(p,q) = (\mu,\mu)$ this value is precisely the baseline against which
\cref{alg:uniformity-tester-linear-trace} thresholds the test statistic. Let
$\phi^{(\mu)} \define \Ex{\bPhi^{(\mu)}}$, where $\bPhi^{(\mu)} = \bPhi(\mu)$ is the random
join matrix produced by the uniform partial distribution $q = \mu$.
Let random variable $\bm{Y}^{(\mu)}$ denote the value of the test statistic when $\pi(p,q)$
is the uniform distribution. Then we have:

% TODO we should probably write out the explicit form of the entries of $\phi^{(\mu)}$ just to
% give the reader something concrete to work with.

\begin{proposition}[Expectation of $\bm{Y}$ in the uniform case]
    \label{prop:expectation-y-uniform}
    When $\pi(p,q)$ is the uniform distribution, the statistic $\bm{Y} = \bm{Y}^{(\mu)}$ satisfies
    \[
        \Ex{\bm{Y}^{(\mu)}} = \frac{m}{4n^2} \sum_{i,j} \phi^{(\mu)}_{i,j} \,.
    \]
\end{proposition}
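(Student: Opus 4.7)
The plan is to directly invoke \cref{prop:expectation-y-quadratic-form}, which already gave the general formula $\Ex{\bm{Y}} = m p^\top \phi p$ for the expectation of the test statistic in terms of the partial distribution $p$ over the odd elements (viewed as vertices of the cycle) and the expected join matrix $\phi$ (which is determined by the weight vector $w$, hence by $q$). Since this general formula is already proved, no new probabilistic argument is needed; the remaining task is simply to specialize it to the uniform case $\pi(p,q) = \pi(\mu,\mu)$ and carry out the substitution.

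Concretely, when $\pi(p,q)$ is uniform we have $p = \mu$ with $\mu_i = 1/(2n)$ for every $i \in \bZ_n$, and the edge weights become $w(e) = 1 - e^{-m\mu_i}$ (for the edge joining vertex $i$ to $i+1$), so that the expected join matrix is precisely $\phi^{(\mu)} = \Ex{\bm\Phi(\mu)}$ as defined just before the statement. Plugging $p = \mu$ and $\phi = \phi^{(\mu)}$ into \cref{prop:expectation-y-quadratic-form} gives
\[
\Ex{\bm{Y}^{(\mu)}} \;=\; m\,\mu^\top \phi^{(\mu)} \mu \;=\; m \sum_{i,j \in \bZ_n} \mu_i\, \mu_j\, \phi^{(\mu)}_{i,j} \;=\; \frac{m}{4n^2} \sum_{i,j} \phi^{(\mu)}_{i,j},
\]
using $\mu_i \mu_j = 1/(4n^2)$ throughout the sum. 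This is the desired expression.

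There is no real obstacle here: the statement is essentially a bookkeeping corollary of the quadratic-form identity for $\Ex{\bm Y}$, and the only thing to double-check is that ``$\phi$'' in the general formula correctly instantiates to ``$\phi^{(\mu)}$'' when $q = \mu$, which is immediate from the definitions in \cref{section:path-cycle-structured-rvs} (the law of $\bm H$, and hence of $\bm\Phi$, depends on the weights $w$, which depend only on $q$).
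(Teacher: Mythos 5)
Your proposal is correct and matches the paper's own proof: both simply specialize the quadratic-form identity $\Ex{\bm Y} = m\,p^\top \phi p$ from \cref{prop:expectation-y-quadratic-form} to $p = q = \mu = \vec 1/(2n)$, note that $\phi$ becomes $\phi^{(\mu)}$ since the weights depend only on $q$, and expand $\mu^\top \phi^{(\mu)} \mu = \frac{1}{4n^2}\sum_{i,j}\phi^{(\mu)}_{i,j}$. No gap.
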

\begin{proof}
    Since $p = q = \mu = \vec 1 / 2n$, \eqref{eq:expectation-y-quadratic-form-linear-trace} yields
    \[
        \Ex{\bm{Y}^{(\mu})} = m \mu^\top \phi^{(\mu)} \mu
        = \frac{m}{4n^2} \sum_{i,j} \phi^{(\mu)}_{i,j} \,. \qedhere
    \]
\end{proof}

We wish to show that when $q$ is far from $\mu$, the quadratic form $p^\top \phi p$ is large
regardless of the choice of $p$ (assuming our conditions on the relative concentration and the
bias). Our strategy is to show that even a ``worst-case'' partial distribution $\overline p =
\overline p(q)$, tailored to make $\bm{Y}$ as small as possible, would still incur a large gap
compared to the uniform case; and then argue that if $p$ deviates from $\overline p$, this can only
make the testing task easier.

We call this worst-case partial distribution the \emph{uniform conjugate} of $q$, and denote it by
$\overline p$ (since it takes the role of $p$). Formally, we say a partial distribution $\overline
p$ is a $\tau$-uniform conjugate of $q$ if
\[
    \left( \sum_{i=0}^{n-1} \overline p_i \right) + \left( \sum_{i=0}^{n-1} q_i \right) = 1
    \qquad\text{ and }\qquad
    \phi \overline p = \tau \cdot \vec 1 \,.
\]
Note that the $i$-th entry of $\phi p$ is
\[
    \left( \phi p \right)_i = \sum_{j=0}^{n-1} \phi_{i,j} p_j
    = \Exu{\bPhi}{\sum_{j=0}^{n-1} \bPhi_{i,j} p_j} \,,
\]
which is the expected sum $p[\bm{\Gamma}_{\bm{\gamma}(i)}]$ of the bucket containing element $i \in
\bZ_n$.  Therefore, when $\overline p$ is a $\tau$-uniform conjugate of $q$, this expected sum is the
same for every bucket. This is the ``worst-case'' because, in expectation, the distribution over the
odd elements sampled from $\overline p$ will be uniform.

If we allow $\overline p$ to have negative entries, one could show that every $q$ has a
uniform conjugate for some value $\tau$, because $\phi$ is positive semidefinite\footnote{If all
entries of $q$ are non-zero, then as observed in \cref{claim:join-semidefinite} $\phi$ is positive
definite and therefore invertible, so $\overline p$ exists for appropriate $\tau$.  If $q$ has some
zero entries, one can first reduce the domain by eliminating such entries, solve the inverse
problem, and then arbitrarily distribute the density in each range of $\overline p$ separated by
zero $q$-density.} (\cref{claim:join-semidefinite}). But we require an explicit $\tau$ (and
non-negative entries). We will give a closed-form solution for an \emph{approximate} uniform
conjugate:

\begin{definition}[Approximate uniform conjugate]
\label{def:approximate-uniform-conjugate}
For a partial distribution $q$ over $\bZ_n$ such that $\|q\|_1 > 0$
and sample-size parameter $m$, with expected join matrix
$\phi = \phi(q)$, let $\xi(m,q) \define \frac{e^{-m\|q\|_1}}{(1-e^{-m\|q\|_1})^2}$.
We say that $\widetilde p \in \bR^{\bZ_n}_{\ge 0}$ is an \emph{approximate uniform
conjugate} of $q$ if $\|\widetilde p\|_1 = 1 - \|q\|_1$, and, for $\tau = \tau(m,q) \define
\frac{1-\|q\|_1}{\sum_{i=0}^{n-1} \tanh\left(\frac{mq_i}{2}\right)}$, it holds that
\[
  \max_{i \in \bZ_n} \abs*{(\phi \widetilde p)_i - \tau} \leq 4 n \cdot \xi(m,q) \,.
\]
\end{definition}

\noindent
Going forward, write $\xi \define \xi(m, q)$ for convenience.
We write the expected value of our test statistic in terms of an approximate uniform conjugate:

\begin{proposition}
    \label{prop:conjugate-mean}
    Suppose $\widetilde p$ is an approximate uniform conjugate of $q$, and
    write $p = \widetilde p + z$. Then
    \[
        \Ex{\bm{Y}} = m \widetilde p^\top \phi \widetilde p + m z^\top \phi z \pm 8mn^2 \xi
    \]
    and
    \[
        \widetilde p^\top \phi \widetilde p = \|p\|_1 (\tau \pm 4n\xi) \,.
    \]
    Moreover, if $z = \vec 0$, then we simply have
    \[
        \Ex{\bm{Y}} = m \widetilde p^\top \phi \widetilde p \,.
    \]
\end{proposition}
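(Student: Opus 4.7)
}
The plan is a direct calculation built on two key facts: \cref{prop:expectation-y-quadratic-form}, which gives $\Ex{\bm Y} = m p^\top \phi p$; and the defining property of an approximate uniform conjugate, namely that $\phi \widetilde p = \tau \vec 1 + e$ for some error vector $e \in \bR^{\bZ_n}$ with $\|e\|_\infty \le 4n\xi$, together with $\|\widetilde p\|_1 = 1 - \|q\|_1$. The one piece of set-up worth highlighting is that, since $\pi(p,q)$ is a probability distribution, we have $\|p\|_1 = 1 - \|q\|_1 = \|\widetilde p\|_1$, and hence $\sum_i z_i = 0$. This observation is what will kill the dominant linear term in the cross product.

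For the first identity, I would write $p = \widetilde p + z$, expand using symmetry of $\phi$, and obtain
\[
  p^\top \phi p = \widetilde p^\top \phi \widetilde p + 2 \widetilde p^\top \phi z + z^\top \phi z \,.
\]
Then I would rewrite the cross term using $\phi \widetilde p = \tau \vec 1 + e$:
\[
  \widetilde p^\top \phi z = (\phi \widetilde p)^\top z = \tau (\vec 1^\top z) + e^\top z = e^\top z \,,
\]
using $\vec 1^\top z = 0$. A direct H\"older bound gives $|e^\top z| \le \|e\|_\infty \|z\|_1 \le 4n\xi \cdot (\|p\|_1 + \|\widetilde p\|_1) \le 8 n \xi$, so $m \cdot 2 \widetilde p^\top \phi z = \pm 16 m n \xi \subseteq \pm 8 m n^2 \xi$ (using $n \ge 2$). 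Substituting back yields $\Ex{\bm Y} = m\widetilde p^\top \phi \widetilde p + m z^\top \phi z \pm 8 m n^2 \xi$, as desired.

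For the second identity, I would similarly expand
\[
  \widetilde p^\top \phi \widetilde p = \widetilde p^\top (\tau \vec 1 + e) = \tau \|\widetilde p\|_1 + \widetilde p^\top e \,,
\]
and bound $|\widetilde p^\top e| \le \|\widetilde p\|_1 \cdot \|e\|_\infty \le \|p\|_1 \cdot 4 n \xi$, using $\|\widetilde p\|_1 = \|p\|_1$. Factoring out $\|p\|_1$ gives $\widetilde p^\top \phi \widetilde p = \|p\|_1(\tau \pm 4 n \xi)$.

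The moreover statement is immediate: if $z = \vec 0$ then $p = \widetilde p$ exactly, so \cref{prop:expectation-y-quadratic-form} gives $\Ex{\bm Y} = m p^\top \phi p = m \widetilde p^\top \phi \widetilde p$ with no approximation. There isn't a genuine obstacle here; the only thing to be careful about is to invoke the partial-distribution mass constraint early so that the $\vec 1^\top z$ term vanishes, as otherwise the error would scale with $|\tau|$ rather than with $\|e\|_\infty$, which would be far too weak for the downstream variance analysis.
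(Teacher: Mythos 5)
Your proof is correct and takes essentially the same approach as the paper: expand the quadratic form, use $\vec 1^\top z = 0$ (from $\|p\|_1 = \|\widetilde p\|_1 = 1-\|q\|_1$) to cancel the $\tau$-contribution of the cross term, and bound the residual via the $\max_i |(\phi\widetilde p)_i - \tau| \le 4n\xi$ guarantee. Your phrasing in terms of an explicit error vector $e$ with $\|e\|_\infty \le 4n\xi$ is a clean formalization of the same idea and yields a marginally tighter intermediate bound ($16mn\xi$ rather than $8mn^2\xi$ directly) before relaxing to match the stated constant.
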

\begin{proof}
    We have
    \begin{align*}
        \Ex{\bm{Y}}
        &= m p^\top \phi p
         = m (\widetilde p + z)^\top \phi (\widetilde p + z)
         = m \left( \widetilde p^\top \phi \widetilde p + 2 z^\top \phi \widetilde p + z^\top \phi z \right) \\
        &= m \widetilde p^\top \phi \widetilde p + m z^\top \phi z + 2m \sum_{i}  z_i (\phi \widetilde p)_i
        = m \widetilde p^\top \phi \widetilde p + m z^\top \phi z + 2m \sum_{i}  z_i (\tau \pm 4n\xi) \,.
    \end{align*}
    Since $\sum_j z_j = 0$ (because $\|p\|_1 = \|\widetilde p\|_1 = 1 - \|q\|_1$)
    and $|z_i| \leq 1$ for each $i \in \bZ_n$, the term $2m \sum_i z_i (\tau \pm 4n\xi)$
    is bounded by $8mn^2 \xi$ in absolute value, which gives the first conclusion.
    Inspecting the case when $z = \vec 0$ also gives the last conclusion.
    For the second statement, observe that
    \begin{align*}
        \widetilde p^\top \phi \widetilde p
        = \widetilde p^\top (\phi \widetilde p)
        = \sum_{i=0}^{n-1} \widetilde p_i (\tau \pm 4n\xi) \,.
    \end{align*}
    Since $\sum_{i=0}^{n-1} \widetilde p_i = \|\widetilde p\|_1 = \|p\|_1 = 1 - \|q\|_1$, this value
    is bounded from above by $(\tau + 4n\xi) \|p\|_1$, and from below by $(\tau - 4n\xi) \|p\|_1$.
\end{proof}

The positive semidefiniteness of $\phi$ (shown below) already gives that $z^\top \phi z$ is
non-negative, and we will also show that the approximation error term $8mn^2\xi$ is negligible.
Therefore, our task is to show that $\widetilde p^\top \phi \widetilde p = \|p\|_1 (\tau \pm 4n\xi)$
is large, \ie to show that $\tau$ is large when $q$ is far from uniform.

\begin{claim}
    \label{claim:join-semidefinite}
    For any partial distribution $q \in \bZ_n$, the matrix $\phi = \phi(q)$ is positive semidefinite.
    If $q_i > 0$ for every $i \in \bZ_n$, then $\phi$ is positive definite.
\end{claim}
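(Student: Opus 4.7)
My plan is to exhibit $\phi$ as a convex combination of positive semidefinite matrices, and then leverage a single ``good'' realization of $\bm{H}$ to get strict positivity.

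For the first part, I would write $\phi = \Ex{\bPhi} = \sum_H \Pr{\bm{H} = H}\, \Phi(H)$, and argue that each $\Phi(H)$ is PSD. The key observation is that $\Phi(H)$ is, up to a simultaneous row/column permutation, block diagonal with blocks given by the connected components of $H$, and each block is an all-ones matrix $\mathbf{1}\mathbf{1}^\top$. Equivalently, for any $x \in \bR^{\bZ_n}$,
\[
  x^\top \Phi(H) x \;=\; \sum_{i=1}^{b(H)} x[\Gamma_i(H)]^2 \;\ge\; 0,
\]
where $\Gamma_1(H), \dots, \Gamma_{b(H)}(H)$ are the buckets induced by $H$. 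Since every $\Phi(H)$ is PSD, the expectation $\phi$ is PSD as well.

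For the second part, when $q_i > 0$ for every $i \in \bZ_n$, each edge is absent from $\bm{H}$ with positive probability $1 - e^{-m q_i} > 0$ independently, so the probability that $\bm{H}$ is the empty graph is
\[
  p_0 \;\define\; \prod_{i \in \bZ_n} \bigl(1 - e^{-m q_i}\bigr) \;>\; 0.
\]
On this event, each vertex is its own component, so $\Phi(\emptyset) = I$. Combined with the PSD bound for all other realizations, this gives $\phi \succeq p_0 \cdot I \succ 0$, establishing positive definiteness.

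I do not anticipate any substantial obstacle here: the argument is essentially a one-line observation that connected-component indicator matrices are block-rank-one. The only care needed is to confirm that the weights $w(j) = 1 - e^{-m q_j}$ translate to strictly positive edge-absence probabilities precisely when $q_j > 0$, which is immediate from the definitions in \cref{section:path-cycle-structured-rvs}.
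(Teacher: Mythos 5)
Your proposal is correct and takes essentially the same approach as the paper: both expand the quadratic form $u^\top \phi u$ as an expectation over realizations of $\bm{H}$, observe that each realization contributes $\sum_i u[\Gamma_i]^2 \ge 0$, and obtain positive definiteness from the strictly positive probability of the empty subgraph realization (where $\Phi = I$) when all $q_i > 0$. Your framing as an explicit convex combination $\phi = \sum_H \Pr{\bm{H}=H}\,\Phi(H)$ is just a slight repackaging of the paper's direct expectation computation, and your identification of the edge-absence probability $w(j) = 1 - e^{-mq_j} > 0 \iff q_j > 0$ is the same observation the paper uses.
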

\begin{proof}
    For any vector $u \in \bR^{\bZ_n}$, we have
    \[
        u^\top \phi u = \Ex{ u^\top \bPhi u } = \Ex{ \sum_{i,j \in \bZ_n} \bPhi_{i,j} u_iu_j }
        = \Ex{ \sum_{i = 1}^n \sum_{j,j' \in \bm{\Gamma}_i} u_j u_{j'} }
        = \Ex{ \sum_{i = 1}^n \left(\sum_{j \in \bm{\Gamma}_i} u_j \right)^2 } \,.
    \]
    This is non-negative, so we conclude that $\phi$ is positive semi-definite. If $q_i > 0$
    for every $i \in \bZ_n$, then $w(e) > 0$ for every edge $e$, so with positive probability
    $\kappa>0$ the subgraph $\bm{H}$ will be an independent set, in which case
    $\bPhi = \Phi$ will have singleton buckets $\Gamma_i = (i,1)$ for each $i \in \bZ_n$, and
    \[
        u^\top \phi u \geq \kappa \cdot u^\top \Phi u
        = \kappa \cdot \sum_{i=1}^{n} \left( \sum_{j,j' \in \Gamma_i} u_ju_{j'} \right)^2
        = \kappa \cdot \sum_{i=0}^{n-1} u_i^2
        > 0 \,,
    \]
    whenever $u \neq \vec 0$. So $\phi$ is positive definite.
\end{proof}

Before continuing to make use of the nice properties of approximate uniform conjugates, we show
that such an object does exist:

\begin{lemma}
\label{lemma:approx-conjugate-existence}
For any partial distribution $q$ such that $\|q\|_1 > 0$
and sample-size parameter $m$, there exists an approximate uniform
conjugate $\widetilde p$ of $q$. If $q = \mu$, we may take $\widetilde p = \mu$.
\end{lemma}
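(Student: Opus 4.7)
The plan is to construct $\widetilde p$ explicitly. Writing $a_i \define e^{-m q_i}$ and $u \define \prod_i a_i = e^{-m\|q\|_1}$, I will take
\[
  \widetilde p_i \define \tau \cdot \frac{1 - a_{i-1} a_i}{(1+a_{i-1})(1+a_i)}, \qquad i \in \bZ_n,
\]
with indices mod $n$. Nonnegativity is immediate from $a_i \in [0,1]$. For normalization, the partial-fraction identity $(1-xy)/[(1+x)(1+y)] = 1/(1+x) + 1/(1+y) - 1$ combined with $\tanh(mq_i/2) = 2/(1+a_i) - 1$ shows that both $(\sum_i \widetilde p_i)/\tau$ and $\sum_i\tanh(mq_i/2)$ reduce to $2\sum_i 1/(1+a_i) - n$, so $\|\widetilde p\|_1 = \tau \sum_i \tanh(mq_i/2) = 1-\|q\|_1$. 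The uniform case $q = \mu$ then follows by plugging in $a_i = e^{-m/(2n)}$, which collapses the formula to $\widetilde p_i = \mu_i$.

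The heart of the proof is computing $(\phi \widetilde p)_i$ on the cycle. Using $\phi_{i,i+k}^{\cyc} = \prod_{l=0}^{k-1} a_{i+l} + \prod_{l=k}^{n-1} a_{i+l} - u$, I split $(\phi\widetilde p)_i = \widetilde p_i + \widetilde R_i + \widetilde L_i - u(1-\|q\|_1 - \widetilde p_i)$, where $\widetilde R_i, \widetilde L_i$ are the truncated arc sums over $k \in [1, n-1]$. The key technical step is to pass to the infinite cyclic sum $R^\infty_i \define \sum_{k\geq 1}\widetilde p_{(i+k)\bmod n}\prod_{l=0}^{k-1}a_{(i+l)\bmod n}$, related to $\widetilde R_i$ by the geometric identity $\widetilde R_i = (1-u)R^\infty_i - u\widetilde p_i$ (and symmetrically for $\widetilde L_i$). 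I will then verify by direct substitution that the ansatz $R^\infty_i + \widetilde p_i = \tau/(1+a_{i-1})$ solves the recurrence $R^\infty_i = a_i(\widetilde p_{i+1} + R^\infty_{i+1})$, and likewise $L^\infty_i + \widetilde p_i = \tau/(1+a_i)$. Combining these with the algebraic identity $\tau[1/(1+a_{i-1}) + 1/(1+a_i)] - \widetilde p_i = \tau$ collapses the expression to $(\phi\widetilde p)_i = (1-u)\tau - u(1-\|q\|_1)$, which is remarkably constant in $i$.

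The resulting error is $|(\phi\widetilde p)_i - \tau| = u(\tau + 1 - \|q\|_1)$. To bound this by $4n\xi$, I will use the elementary inequalities $1-a_i \leq 2\tanh(mq_i/2)$ (equivalent to $a_i \leq 1$) and $1-u \leq \sum_i(1-a_i)$ (Bernoulli) to get $\sum_i\tanh(mq_i/2) \geq (1-u)/2$, hence $\tau \leq 2(1-\|q\|_1)/(1-u)$. Then $u\tau \leq 2u/(1-u) = 2(1-u)\xi \leq 2\xi$ and $u(1-\|q\|_1) \leq u \leq \xi$, yielding a total error of at most $3\xi \leq 4n\xi$. The main obstacles will be verifying the two algebraic identities used to collapse $(\phi\widetilde p)_i$ to the clean constant form; once these are set up, the remainder is routine bookkeeping.
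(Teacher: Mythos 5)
Your proof is correct and, while it constructs the \emph{same} vector $\widetilde p_i = \tau\left(\frac{1}{1+e^{-mq_i}} + \frac{1}{1+e^{-mq_{i-1}}} - 1\right)$ as the paper, the verification that it is an approximate uniform conjugate proceeds by a genuinely different route. The paper couples $(\phi\widetilde p)_i$ to the expectation of a quantity $\bm D_i$ determined by the random subgraph, introduces a relaxed Markov process with variables $\bm R'_i, \bm L'_i, \bm D'_i$ allowing unbounded looping, shows $\Ex{\bm D'_i} = \tau$ exactly, and then bounds $|\Ex{\bm D_i} - \Ex{\bm D'_i}| \le 4n\xi$ by a rather crude counting argument on the number of loops $\bm K$. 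You instead expand $(\phi\widetilde p)_i$ directly via the inclusion-exclusion formula $\phi^\cyc_{i,i+k} = \prod_{l<k}a_{i+l} + \prod_{l\ge k}a_{i+l} - u$, pass to the infinite cyclic sums $R^\infty_i, L^\infty_i$, observe the \emph{exact} relations $\widetilde R_i = (1-u)R^\infty_i - u\widetilde p_i$ and $\widetilde L_i = (1-u)L^\infty_i - u\widetilde p_i$, solve the one-step linear recurrence (whose solution on the cycle is unique since the homogeneous system forces $D_i = uD_i$, hence $D_i = 0$), and thereby obtain the clean closed form $(\phi\widetilde p)_i = (1-u)\tau - u(1-\|q\|_1)$, constant in $i$. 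The resulting error $u(\tau + 1-\|q\|_1) \le 3\xi$ is both dimensionally cleaner (no spurious factor of $n$) and sharper than the paper's $4n\xi$. In effect, you noticed that the ``approximation error'' the paper bounds is not an error at all but a precisely computable geometric tail, which eliminates the need for the coupling argument. Two points worth spelling out explicitly in a full write-up: convergence of $R^\infty_i, L^\infty_i$ (immediate, since $\prod_{l=0}^{k-1}a_{(i+l)\bmod n}$ decays like $u^{\lfloor k/n\rfloor}$ and $u<1$ because $\|q\|_1 > 0$), and the uniqueness argument just sketched, without which verifying that the ansatz satisfies the recurrence is not quite the same as showing it equals $R^\infty_i$.
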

\begin{proof}
    Recall that we identify the vertices of the cycle with the integers modulo $n$, \ie $\bZ_n$.

    Let $E$ be the event that $\bm{H} = G$, \ie that every edge was sampled into $\bm{H}$.
    Define $\varepsilon \define \Pr{E}$ and note that $\varepsilon = e^{-m\|q\|_1}$, since each
    edge $e = (i,i+1)$ is sampled with probability $1 - w(e) = e^{-mq_i}$.

    Let $u \in \bR^{\bZ_n}$, which we view as a candidate for $\widetilde p$.
    For each $i \in \bZ_n$, define three random variables:
    \begin{itemize}
        \item $\bm{N^R}_i$ is the number of vertices joined with $i$ in $\bm{H}$ in the clockwise
            direction (including $i$ itself). Formally,
            $\bm{N^R} \define \sum_{d=1}^n \bm{J}(\llangle i, d \rrangle)$.

            Then define $\bm{R}_i \define u[ \llangle i, \bm{N^R}_i \rrangle ] =
            \sum_{t=0}^{\bm{N^R}_i-1} u_{i+t}$.

        \item $\bm{N^L}_i$ is the number of vertices joined with $i$ in $\bm{H}$ in the
            counterclockwise direction (including $i$ itself), unless all edges were sampled into
            $\bm{H}$ (\ie event $E$ occurs), in which case we define $\bm{N^L}_i = 1$.
            Formally, $\bm{N^L}_i \define \ind{E}
            + (1-\ind{E}) \sum_{d=1}^n \bm{J}(\llangle i, -d \rrangle)$.

            Then define $\bm{L}_i \define u[ \llangle i, - \bm{N^L}_i \rrangle ] =
            \sum_{t=0}^{\bm{N^L}_i-1} u_{i-t}$.

        \item Define $\bm{D}_i \define u[ \bm{\Gamma}_{\bm{\gamma}(i)} ]
            = u[ \llangle i - \bm{N^L} + 1, \bm{N^L} + \bm{N^R} - 1 \rrangle]$.

    \end{itemize}

    Note that $\bm{D}_i = \bm{L}_i + \bm{R}_i - u_i$ for every $i \in \bZ_n$; when $E$ does not occur,
    this is true because $\bm{N^L}_i$ and $\bm{N^R}_i$ count the number of joined elements to the left
    and right of $i$ and, since both encounter a non-joined element somewhere, only $i$ itself is
    counted twice. On the other hand, when $E$ does occur, then this is true by construction, since
    we get $\bm{D}_i = \bm{R}_i = \vec 1^\top u$, whereas $\bm{L}_i = u_i$.

A perfect uniform conjugate $u$ would make $\Ex{\bm{D}_i}$ equal everywhere. Observe that the random
variables $\bm{R}_i$ and $\bm{L}_i$ are not mutually independent, and $\bm{N^R}_i, \bm{N^L}_i$ are
bounded, which complicates the analysis. We relax the problem by defining random variables that are
independent and asking for an \emph{approximate} uniform conjugate.

We introduce independent random variables $\bm{N^{R'}}_i$ and $\bm{N^{L'}}_i$ which are generated by
a Markov process, and define new variables $\bm{L}'_i$, $\bm{R}'_i$, and $\bm{D}'_i$ that depend on
$\bm{N^{R'}}_i$ and $\bm{N^{L'}}_i$ in the same way as before:
\begin{itemize}
\item $\bm{N^{R'}}_i$ is generated as follows. Initialize $\bm{N^{R'}}_i$ to 1. For each $t \geq 0$
in increasing order, sample $\bm{P}_t \sim \Poi(mq_{i+t})$. If $\bm{P}_t > 0$, stop; otherwise
increment $\bm{N^{R'}}_i$.

Define $\bm{R}'_i \define u[ \llangle i, \bm{N}^{\bm{R}'}_i \rrangle ] =
\sum_{t=0}^{\bm{N}^{\bm{R}'}_i - 1} u_{i+t}$.

\item $\bm{N^{L'}}_i$ is generated as follows. Initialize $\bm{N^{L'}}_i$ to 1. For each $t \geq 1$
in increasing order, sample $\bm{P}_t \sim \Poi(mq_{i-t})$. If $\bm{P}_t > 0$, stop; otherwise
increment $\bm{N^{L'}}_i$.

Define $\bm{L}'_i \define u[ \llangle i, - \bm{N}^{\bm{L}'}_i \rrangle ] =
\sum_{t=0}^{\bm{N}^{\bm{L}'}_i - 1} u_{i-t}$.

\item Define $\bm{D}'_i \define \bm{L}'_i + \bm{R}'_i - u_i$, as in the original process above.
\end{itemize}

Recall that $\llangle i, k \rrangle$ is a multiset, so that $u[ \llangle i, k \rrangle]$ can count
an element $u_t$ more than once.  Let $f \define 1 - \|q\|_1$.
Recalling \cref{def:approximate-uniform-conjugate}, we would like $u$ to satisfy three
requirements: 1) $u_i \ge 0$ for all $i \in \bZ_n$; 2) $\sum_i u_i = f$; and 3) $\Ex{\bm{D}'_i} =
\tau = \frac{1-\|q\|_1}{\sum_{i=0}^{n-1} \tanh\left(\frac{mq_i}{2}\right)}$
for all $i \in \bZ_n$. If we obtain such $u$ and show that $\Ex{\bm{D}_i} =
\Ex{\bm{D}'_i} \pm 4n\xi$, we will have found our approximate uniform conjugate $\widetilde p$.

    We give an explicit solution and then verify it. Set
    \[
        u_i \define \tau \left( \frac{1}{1 + e^{-mq_i}} + \frac{1}{1 + e^{-mq_{i-1}}} - 1 \right)
    \]
    for every $i \in \bZ_n$. It is clear that $u_i \ge 0$, satisfying the first requirement.
    The second requirement is also satisfied:
    \begin{align*}
        \sum_{i=0}^{n-1} u_i
        &= \tau \sum_{i=0}^{n-1} \left[ \frac{1}{1 + e^{-mq_i}} + \frac{1}{1 + e^{-mq_{i-1}}} - 1 \right]
        = \tau \sum_{i=0}^{n-1} \left[ \frac{2}{1 + e^{-mq_i}} - 1 \right]
        = \tau \sum_{i=0}^{n-1} \left[ \frac{1 - e^{-mq_i}}{1 + e^{-mq_i}} \right] \\
        &= \frac{f}{\sum_{i=0}^{n-1} \tanh(mq_i/2)} \cdot \sum_{i=0}^{n-1} \tanh(mq_i/2)
        = f \,.
    \end{align*}

    We now verify the third requirement. For convenience of notation, define
    $r'_i \define \Ex{\bm{R}'_i}$, $l'_i \define \Ex{\bm{L}'_i}$, and
    $d'_i \define \Ex{\bm{D}'_i}$. Let $\overline R$ be the $n \times n$ matrix given by
    \[
        \overline R_{i, i+d} = \Pr{ \bm{N^{R'}}_i > d} = e^{-m q[\llangle i,d \rrangle]}
    \]
    for all $i \in \bZ_n$ and $0 \le d \le n-1$.
    Then $R_{i, i+d}$ is the probability that the Markov process
    generating $\bm{R}'_i$ counts $u_{i+d}$ at least once.
    Note that $\varepsilon = \Pr{E} = e^{-m \|q\|_1}$ is the probability that the process
    loops back to the same element $i$ once.
    Then using the Markov property, the expectation of $\bm{R}'_i$ is
    \begin{align*}
        r'_i 
        &= \sum_{t=0}^{\infty} \sum_{d=0}^{n-1} \varepsilon^t \overline R_{i,i+d} u_{i+d}
        = \sum_{d=0}^{n-1} \overline R_{i,i+d} u_{i+d}
            + \sum_{t=1}^\infty \sum_{d=0}^{n-1} \varepsilon^t \overline R_{i,i+d} u_{i+d} \\
        &= \sum_{d=0}^{n-1} \overline R_{i,i+d} u_{i+d}
            + \varepsilon \sum_{t=0}^\infty \sum_{d=0}^{n-1} \varepsilon^t \overline R_{i,i+d} u_{i+d}
        = \sum_{d=0}^{n-1} \overline R_{i,i+d} u_{i+d} + \varepsilon r'_i \,,
    \end{align*}
    and thus
    \begin{align*}
        (1 - \varepsilon) r'_i
        &= \sum_{d=0}^{n-1} \overline R_{i,i+d} u_{i+d}
        = \sum_{d=0}^{n-1} \overline R_{i,i+d} \cdot \tau \left(
            \frac{1}{1 + e^{-mq_{i+d}}} + \frac{1}{1 + e^{-mq_{i+d-1}}} - 1 \right) \\
        &= \tau \left( \sum_{d=0}^{n-1} \frac{\overline R_{i,i+d}}{1 + e^{-mq_{i+d}}}
            - \overline R_{i,i+d} \right)
            + \tau \left( \sum_{d=0}^{n-1} \frac{\overline R_{i,i+d}}{1 + e^{-mq_{i+d-1}}} \right)
            \\
        &= -\tau \left( \sum_{d=0}^{n-1} \overline R_{i,i+d} \frac{e^{-mq_{i+d}}}{1 + e^{-mq_{i+d}}} \right)
            + \tau \left( \sum_{d=0}^{n-1} \overline R_{i,i+d}\frac{1}{1 + e^{-mq_{i+d-1}}} \right) \,.
    \end{align*}
    We now observe that, for $1 \le d \le n-1$,
    $\overline R_{i,i+d} = e^{-mq_{i+d-1}} \overline R_{i,i+d-1}$. Also note that
    $\overline R_{i,i} = 1$. Along with a change of variables in the second sum above, we obtain
    \begin{align*}
        \frac{1 - \varepsilon}{\tau} r'_i
        &= -\left( \sum_{d=0}^{n-2} \overline R_{i,i+d} \frac{e^{-mq_{i+d}}}{1 + e^{-mq_{i+d}}} \right)
            - \overline R_{i,i+n-1} \frac{e^{-mq_{i+n-1}}}{1 + e^{-mq_{i+n-1}}}
            \\
            &\qquad \quad
            + \overline R_{i,i} \frac{1}{1 + e^{-mq_{i-1}}}
            + \left( \sum_{d=1}^{n-1} e^{-mq_{i+d-1}} R_{i,i+d-1}\frac{1}{1 + e^{-mq_{i+d-1}}} \right)
            \\
        &= -\left( \sum_{d=0}^{n-2} \overline R_{i,i+d} \frac{e^{-mq_{i+d}}}{1 + e^{-mq_{i+d}}} \right)
            - \overline R_{i,i+n-1} \frac{e^{-mq_{i+n-1}}}{1 + e^{-mq_{i+n-1}}}
            \\
            &\qquad \quad
            + \frac{1}{1 + e^{-mq_{i-1}}}
            + \left( \sum_{d=0}^{n-2} \overline R_{i,i+d} \frac{e^{-mq_{i+d}}}{1 + e^{-mq_{i+d}}} \right)
            \\
        &= - \overline R_{i,i+n-1} \frac{e^{-mq_{i+n-1}}}{1 + e^{-mq_{i+n-1}}}
            + \frac{1}{1 + e^{-mq_{i-1}}} \,.
    \end{align*}
    Also note that $\varepsilon = e^{-mq_{i+n-1}} \overline R_{i,i+n-1}$, and
    $q_{i-1} = q_{i+n-1}$, and therefore
    \[
        \frac{1 - \varepsilon}{\tau} r'_i
        = -\frac{\varepsilon}{1 + e^{-mq_{i+n-1}}} + \frac{1}{1 + e^{-mq_{i-1}}}
        = (1 - \varepsilon) \frac{1}{1 + e^{-mq_{i-1}}} \,.
    \]
    Since $\varepsilon < 1$ (because $\|q\| > 0$),
    We conclude that
    \[
        r'_i = \frac{\tau}{1 + e^{-mq_{i-1}}} \,.
    \]
    An identical analysis for the symmetrical process determining $\bm{L}'_i$ yields
    \[
        l'_i = \frac{\tau}{1 + e^{-mq_i}} \,.
    \]
    We now verify the third requirement: for every $i \in \bZ_n$,
    \[
        \Ex{\bm{D}'_i}
        = d'_i = l'_i + r'_i - u_i
        = \frac{\tau}{1 + e^{-mq_i}} + \frac{\tau}{1 + e^{-mq_{i-1}}}
            - \tau \left( \frac{1}{1 + e^{-mq_i}} + \frac{1}{1 + e^{-mq_{i-1}}} - 1 \right)
        = \tau \,,
    \]
    as needed.

    It remains to show that $\Ex{\bm{D}_i}$ does not differ from $\Ex{\bm{D}'_i} = \tau$ by
    more than $4n\xi$. 
    Write $l_i \define \Ex{\bm{L}_i}$ and $r_i \define \Ex{\bm{R}_i}$. We will show that
    $\abs{r_i - r'_i}$ and $\abs{l_i - l'_i}$ are small.

    Fix some $i \in \bZ_n$. Recall that $\bm{N}^{\bm{R}'}_i$ counts how many states the Markov process
    for $\bm{R}'_i$ visited before stopping, meaning that the process sampled $\bm{P}_t = 0$ and
    advanced to the next state (vertex) exactly $\bm{N}^{\bm{R}'}_i - 1$ consecutive times before
    stopping. Let $\bm{K} \define \lfloor \frac{\bm{N}^{\bm{R}'}_i - 1}{n} \rfloor$, so that
    $\bm{K}$ is the number of times the process ``looped back'' and reached vertex $i$ again.

    Then, recalling the definition of event $E$, note that
    $\Pr{\bm{K} \ge 1} = \Pr{E} = \varepsilon = e^{-m \|q\|_1}$.
    More generally, we have $\Pr{\bm{K} \ge k} \leq e^{-k m \|q\|_1}$ for every non-negative
    integer $k$ by the Markov property.

    Now, we may bound $\abs{r_i - r'_i}$ as follows. First, note that $r_i \le r'_i$, since
    $u$ is a non-negative vector and, although the join (or transition) probabilities are the same
    for both processes, the Markov process may continue even after visiting $n$ elements.
    In fact, we have $\Pr{\bm{N^R}_i = t} = \Pr{\bm{N}^{\bm{R}'}_i = t}$ for every
    $1 \le t \le n-1$, and thus $\Pr{\bm{N^R}_i = n} \ge \Pr{\bm{N}^{\bm{R}'}_i = n}$.
    Now, it suffices to upper bound $r'_i - r_i$, which we do as follows:
    \begin{align*}
        r'_i - r_i
        &= \sum_{t=1}^{\infty} \Pr{\bm{N}^{\bm{R}'}_i = t} \sum_{d=0}^{t-1} u_{i+d}
            - \sum_{t=1}^n \Pr{\bm{N^R}_i = t} \sum_{d=0}^{t-1} u_{i+d} \\
        &= \sum_{t=1}^{n} \left( \Pr{\bm{N}^{\bm{R}'}_i=t} - \Pr{\bm{N^R}_i = t} \right)
                                \sum_{d=0}^{t-1} u_{i+d}
            + \sum_{t=n+1}^{\infty} \Pr{\bm{N}^{\bm{R}'}_i = t} \sum_{d=0}^{t-1} u_{i+d} \\
        &\le \sum_{k=1}^{\infty} \sum_{t=1}^{n} \Pr{\bm{N}^{\bm{R}'}_i = kn+t} \sum_{d=0}^{kn+t-1} u_{i+d} \\
        &\le \sum_{k=1}^\infty n \Pr{\bm{N}^{\bm{R}'}_i \ge kn+1} (k+1) f
        = n f \sum_{k=1}^\infty \Pr{\bm{K} \ge k} (k+1) \\
        &\le n f \sum_{k=1}^\infty (k+1) e^{-km\|q\|_1}
        \le n f \cdot 2 \sum_{k=1}^\infty k e^{-km\|q\|_1}
        \le 2 n f \cdot \frac{e^{-m\|q\|_1}}{(1 - e^{-m\|q\|_1})^2} \,,
    \end{align*}
which is bounded by $2n \xi(m,q)$ where $\xi$ is defined as in
\cref{def:approximate-uniform-conjugate}. A similar analysis shows that $\abs{l'_i - l_i} \le 2 n
\xi(m,q)$. Therefore $\abs{d'_i - d_i} \le 4 n \xi(m,q)$, and hence $\Ex{\bm{D}_i} = \Ex{\bm{D}'_i}
\pm 4 n \xi(m,q) = \tau \pm 4 n \xi(m,q)$.  Hence $\widetilde p = u$ is an approximate uniform
conjugate with the desired parameters.  Moreover, one can check that the solution $\widetilde p$ we
obtained yields $\mu$ when $q=\mu$.
\end{proof}

\noindent
To analyze $\tau = \frac{1-\|q\|_1}{\sum_{i=0}^{n-1} \tanh\left(\frac{mq_i}{2}\right)}$,
we will require the following bounds on $\tanh(x)$.

\begin{fact}
\label{fact:tanh-bounds}
For sufficiently small $x > 0$,
\[
  \frac{x}{2} \leq \tanh(x) \leq 2x \,.
\]
\end{fact}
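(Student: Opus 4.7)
The plan is to observe that both bounds follow directly from the fact that $\tanh$ is differentiable at $0$ with $\tanh(0) = 0$ and $\tanh'(0) = 1$, so in particular $\lim_{x \to 0^+} \tanh(x)/x = 1$. This limit immediately implies that for all sufficiently small $x > 0$ we have $1/2 \leq \tanh(x)/x \leq 2$, which upon multiplying by $x > 0$ gives the desired double inequality. The main ``obstacle'' is really just packaging this elementary calculus remark cleanly; there is no substantive difficulty.

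Alternatively, one can give quantitative proofs. For the upper bound, a direct argument avoids limits entirely: since $\tanh'(x) = 1 - \tanh^2(x) \leq 1$ and $\tanh(0) = 0$, the mean value theorem gives $\tanh(x) \leq x \leq 2x$ for all $x \geq 0$, so the upper bound holds without any smallness assumption on $x$. For the lower bound, one can use the Taylor expansion $\tanh(x) = x - \tfrac{x^3}{3} + \tfrac{2 x^5}{15} - \cdots$ and note that the series is alternating with decreasing terms for small $x$, so $\tanh(x) \geq x - x^3/3$. Then $x - x^3/3 \geq x/2$ whenever $x^2 \leq 3/2$, i.e., for all $x \in (0, \sqrt{3/2}]$, which gives an explicit range of validity.

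Either approach suffices, and I would present the first (limit-based) version, since the word ``sufficiently small'' in the statement already signals that no explicit threshold is needed for the applications in \cref{lemma:approx-conjugate-existence}. The proof would be two or three lines in total.
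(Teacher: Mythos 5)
Your proposal is correct, and both of your approaches work. The paper's proof is a one-liner citing the Taylor expansion $\tanh(x) = x - \tfrac{x^3}{3} + \tfrac{2x^5}{15} + O(x^7)$, which is exactly your second (quantitative) route; your primary (limit-based) route is a lighter packaging of the same underlying fact that $\tanh'(0)=1$. Either version is fine and matches the paper's level of detail.
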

\begin{proof}
This follows from the Taylor expansion $\tanh(x) = x - \frac{x^3}{3} + \frac{2x^5}{15} + O(x^7)$.
\end{proof}

\paragraph{Notation} For vector $u \in \bR^{\bZ_n}$, we denote by $u^+$ the vector given by
$u^+_i = \max(0, u_i)$ for every $i \in \bZ_n$, and by $u^-$ the vector given by
$u^-_i = -\min(0, u_i)$ for every $i \in \bZ_n$.

\begin{proposition}[Quadratic upper bound to $\tanh$ from near zero to the right]
    \label{prop: tanh quadratic upper bound}

    For all sufficiently small real numbers $r > 0$ and all $0 \le x \le \frac{1}{2 \tanh(r)}$,
    we have
    \[
        \tanh(r+x) \le \tanh(r) + (1 - \tanh^2(r))x - \tanh(r)(1 - \tanh^2(r))x^2 \,.
    \]
\end{proposition}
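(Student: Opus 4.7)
Write $t \define \tanh(r)$ and let $h(x) \define t + (1-t^2)x - t(1-t^2)x^2$ be the claimed upper bound; observe that $h$ is exactly the degree-$2$ Taylor polynomial of $y \mapsto \tanh(r+y)$ at $y = 0$. The plan is to define $F(x) \define h(x) - \tanh(r+x)$ and show $F(x) \geq 0$ on $[0, 1/(2t)]$ by splitting this interval into a convex piece and a concave piece of $F$, each handled by a short calculus argument.

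By Taylor matching, $F(0) = F'(0) = F''(0) = 0$, and a direct computation gives
\[
    F''(x) = 2\left[\tanh(r+x)\bigl(1-\tanh^2(r+x)\bigr) - t(1-t^2)\right].
\]
Since the map $u \mapsto u(1-u^2)$ is unimodal on $[0,1]$ with maximum at $u = 1/\sqrt{3}$, the equation $u(1-u^2) = t(1-t^2)$ has exactly two roots in $[0,1]$: the obvious one $u = t$, and $u = u^* \define (-t + \sqrt{4-3t^2})/2$. Hence $F''$ vanishes on $[0, \infty)$ only at $x = 0$ and $x = x^* \define \tanh^{-1}(u^*) - r$, with $F'' > 0$ on $(0, x^*)$ and $F'' < 0$ on $(x^*, \infty)$. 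The expansion $u^* = 1 - t/2 + O(t^2)$ yields $x^* = \tfrac{1}{2}\log(4/t) + O(1)$, so $x^* < 1/(2t)$ for all $r$ sufficiently small.

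For the main step, split $[0, 1/(2t)]$ into $[0, x^*]$ and $[x^*, 1/(2t)]$. On the convex piece $[0, x^*]$, $F'(0) = 0$ together with $F''>0$ on $(0, x^*)$ gives $F'(x) > 0$ on $(0, x^*]$, so $F$ is strictly increasing and $F(x^*) > 0$. On the concave piece $[x^*, 1/(2t)]$, $F$ lies above its chord through the two endpoints, so it suffices to verify non-negativity at both of them. Since $h(1/(2t)) = (1+3t^2)/(4t)$ and $\tanh(r + 1/(2t)) < 1$,
\[
    F\bigl(1/(2t)\bigr) \;\geq\; \frac{1+3t^2}{4t} - 1 \;=\; \frac{(1-t)(1-3t)}{4t} \;>\; 0
\]
whenever $t < 1/3$. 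The main obstacle is confirming $x^* < 1/(2t)$, which is exactly where the hypothesis ``sufficiently small $r$'' enters decisively: without it, the concave piece would fail to cover the full relevant interval and the chord argument would break down.
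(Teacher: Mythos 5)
Your proof is correct. Structurally it parallels the paper's: both arguments split $[0, 1/(2\tanh r)]$ at a breakpoint of order $\log(1/r)$, and both hinge on the unimodal function $u \mapsto u(1-u^2)$ (the paper studies it as $\tanh(x)(1-\tanh^2(x))$ with critical point $\nu = \tfrac{1}{2}\cosh^{-1}(2)$; you work directly in $u = \tanh$ with critical point $1/\sqrt{3}$, and these agree). The difference lies in how the breakpoint is chosen and how each piece is handled. The paper picks an arbitrary threshold $c\log(1/r)$, applies Taylor's theorem with Lagrange remainder on the near piece (reducing the claim to $\tanh(y)(1-\tanh^2 y) \geq \tanh(r)(1-\tanh^2 r)$ there), and on the far piece simply observes that the parabola exceeds $1$ while $\tanh < 1$. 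You instead locate the exact inflection point $x^*$ of the discrepancy $F = g - \tanh(r+\cdot)$ via the second root $u^*$ of $u(1-u^2) = t(1-t^2)$, deduce $F \geq 0$ on $[0, x^*]$ from $F(0) = F'(0) = 0$ and convexity, and on $[x^*, 1/(2t)]$ invoke concavity of $F$ together with non-negativity at both endpoints. Your decomposition is arguably cleaner, since it identifies the natural breakpoint rather than an ad hoc one and the closed form for $u^*$ is a nice touch; the paper's far-piece argument is slightly more elementary than your chord argument. Both routes rely on $r$ being small mainly to guarantee $x^* < 1/(2t)$ (equivalently that $\tfrac{1}{2}\log(4/t)$ is below $1/(2t)$), with the weaker condition $t < 1/3$ also needed in your endpoint computation.
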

\begin{proof}
    Define functions $f, g : \bR_{\ge 0} \to \bR$ as the quantities on the two sides
    of the desired inequality:
    \begin{align*}
        f(x) &\define \tanh(r+x) \,, \\
        g(x) &\define \tanh(r) + (1 - \tanh^2(r))x - \tanh(r)(1 - \tanh^2(r))x^2 \,.
    \end{align*}
    Thus we wish to show that, for sufficiently small $r$, $f(x) \le g(x)$ for all
    $0 \le x \le \frac{1}{2\tanh(r)}$.

    Recall that $\tanh$ is bounded between $0$ and $1$ in its non-negative domain, with
    $\tanh(r) = r \pm O(r^3)$ as $r \to 0$ (this follows from its Taylor series) and
    $\tanh(y) \to 1$ as $y \to \infty$.
    Since $g$ is a downward-facing parabola, we start by determining the point $x^*$ at which
    $g$ attains its maximum. We can determine this point by setting the derivative $g'$ to zero:
    \[
        g'(x^*) = 0 \implies (1 - \tanh^2(r)) - 2\tanh(r)(1 - \tanh^2(r))x^* = 0
        \implies x^* = \frac{1}{2\tanh(r)} \,.
    \]

    Now, our strategy will be to define a ``breakpoint'' $c \log(1/r)$ (for a sufficiently
    small constant $c$ to be specified) and show that $g(c\log(1/r)) \to \infty$, thus dividing
    the argument in two parts: $c\log(1/r) \le x \le x^*$, where $g$ is increasing and hence the
    result will follow immediately, and $0 \le x \le c\log(1/r)$, which will require some more work.

    We first show that for every $c > 0$, $g(c\log(1/r)) \to \infty$ as $r \to 0$:
    \begin{align*}
        g(c\log(1/r)) &\ge (1 - \tanh^2(r)) c\log(1/r) (1 - \tanh(r) c\log(1/r)) \\
        &\ge \frac{1}{2} c\log(1/r) (1 - 2c r\log(1/r)) \\
        &\ge \frac{c\log(1/r)}{4} \\
        &= \omega(1) \,,
    \end{align*}
    where we have used the fact that $r \log(1/r) \to 0$ in the last inequality.

    Note that $c\log\left(\frac{1}{r}\right) \le \frac{1}{4r} \le \frac{1}{2\tanh(r)}$ for all
    sufficiently small $r$. This means that $g(x)$ is increasing on
    $\left[ c\log(1/r), x^* \right]$ and hence $g(x) > 1$ in this range.
    Since $f(x) < 1$ for all $x$, we have shown that $f(x) \le g(x)$ when $c\log(1/r) \le x \le x^*$.

    We now proceed to the range $0 \le x \le c\log(1/r)$. By the mean-value form of Taylor's
    theorem, we have that
    \[
        f(x) = f(0) + f'(0) x + \frac{1}{2} f''(\eta) x^2
    \]
    for some $0 \le \eta \le x$. Substituting the definition of $f$, we obtain:
    \[
        f(x) = \tanh(r) + (1 - \tanh^2(r)) x - \tanh(r+\eta)(1 - \tanh^2(r+\eta)) x^2 \,.
    \]
    Hence, to show that $f(x) \le g(x)$, it suffices to show that
    \[
        \tanh(y)(1 - \tanh^2(y)) \ge \tanh(r)(1 - \tanh^2(r))
    \]
    for all $r \le y \le r + x \le r + c\log(1/r)$. We will show this for the larger interval
    $r \le y \le 2c\log(1/r)$.

    Define $h : \bR_{\ge 0} \to \bR$ by $h(x) \define \tanh(x)(1 - \tanh^2(x))$, so that we wish
    to show $h(y) \ge h(r)$ for $r \le y \le 2c\log(1/r)$. The derivative $h'$ satisfies the
    following properties:
    \begin{enumerate}
        \item $h'(x) = (1 - 3\tanh^2(x))(1 - \tanh^2(x))$;
        \item $h'(0) = 1$;
        \item $h'$ has its only positive real root at $\nu \define \frac{1}{2} \cosh^{-1}(2)
            = \frac{1}{2}\log(2 + \sqrt{3})$; and
        \item $h'$ is positive on $[0, \nu)$ and negative on $(\nu, \infty)$.
    \end{enumerate}

    It follows that $h$ is increasing on $[r, \nu]$ and decreasing on $[\nu, 2c\log(1/r)]$.
    Since $h(r) \ge h(r)$ trivially, we obtain that $h(y) \ge h(r)$ for $r \le y \le \nu$.
    Therefore it suffices to show that $h(2c\log(1/r)) \ge h(r)$ as long as $r$ is sufficiently
    small. Indeed, we have
    \begin{align*}
        h(2c\log(1/r)) &\ge h(r) \\
        \iff \tanh(2c\log(1/r))(1 - \tanh^2(2c\log(1/r))) &\ge \tanh(r)(1 - \tanh^2(r)) \\
        \iff \frac{\tanh(2c\log(1/r))}{\tanh(r)} &\ge \frac{1 - \tanh^2(r)}{1 - \tanh^2(2c\log(1/r))} \\
        \impliedby \frac{1/2}{2r} &\ge \frac{1}{1 -
            \left(\frac{1 - e^{-4c\log(1/r)}}{1 + e^{-4c\log(1/r)}}\right)^2} \\
        \impliedby \frac{1}{4r} &\ge \frac{1}{1 - (1 - e^{-4c\log(1/r)})} \\
        \iff \frac{1}{4r} &\ge \frac{1}{r^{4c}} \\
        \iff r^{1 - 4c} &\le \frac{1}{4} \,,
    \end{align*}
    which holds for all sufficiently small $r$ as long as $c < 1/4$, since then
    $r^{1 - 4c} \to 0$. This concludes the proof.
\end{proof}

\begin{lemma}[Quantitative Jensen's inequality for $\tanh$ near zero]
\label{lemma:qjensen}
For all sufficiently small $r > 0$, the following holds.  Suppose $u \in \bR^{\bZ_n}$ is a vector
satisfying $u_i \in \left[0, r + \frac{1}{2\tanh(r)}\right]$ for every $i \in \bZ_n$, and whose
average is $\frac{1}{n} \sum_i u_i = r$. Then we have
\[
\frac{1}{n} \sum_{i=0}^{n-1} \tanh(u_i) \le
\tanh(r)\left( 1 - \frac{1}{n} \left(1 - O(r^2)\right) \| (u - r \cdot \vec 1)^+ \|_2^2 \right) \,.
\]
\end{lemma}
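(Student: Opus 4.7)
The plan is to decompose $u_i = r + x_i$ where $x_i \define u_i - r$, so that $\sum_i x_i = 0$ (from the average condition) and $x_i \in \left[-r, \frac{1}{2\tanh(r)}\right]$ (from the range hypothesis on $u_i$). The positive part $x_i^+$ is exactly the $i$-th entry of $(u - r \cdot \vec 1)^+$, so our goal is to show
\[
\sum_{i=0}^{n-1} \tanh(r + x_i) \le n\tanh(r) - \tanh(r)(1 - O(r^2)) \sum_i (x_i^+)^2 \,.
\]

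The key idea is to bound $\tanh(r + x_i)$ differently depending on the sign of $x_i$, and then exploit the cancellation $\sum_i x_i = 0$ to kill the first-order terms. For indices with $x_i \ge 0$, the range hypothesis gives $0 \le x_i \le \frac{1}{2\tanh(r)}$, so \cref{prop: tanh quadratic upper bound} applies directly and yields
\[
\tanh(r + x_i) \le \tanh(r) + (1 - \tanh^2(r)) x_i - \tanh(r)(1 - \tanh^2(r)) x_i^2 \,.
\]
For indices with $x_i < 0$, we instead use that $\tanh$ is concave on $[0, \infty)$, so (since $r + x_i \ge 0$ by the range hypothesis) the tangent-line bound at $r$ gives
\[
\tanh(r + x_i) \le \tanh(r) + \tanh'(r) x_i = \tanh(r) + (1 - \tanh^2(r)) x_i \,,
\]
which is the same linear bound but without the quadratic correction.

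Summing these two bounds over all $i$, the constant terms contribute $n \tanh(r)$, the linear terms contribute $(1 - \tanh^2(r)) \sum_i x_i = 0$, and the quadratic correction survives only on the indices where $x_i \ge 0$, yielding
\[
\sum_{i=0}^{n-1} \tanh(u_i) \le n \tanh(r) - \tanh(r)(1 - \tanh^2(r)) \| (u - r\vec 1)^+ \|_2^2 \,.
\]
Dividing by $n$ and using the Taylor expansion $\tanh(r) = r - \tfrac{r^3}{3} + O(r^5)$ to get $1 - \tanh^2(r) = 1 - r^2 + O(r^4) = 1 - O(r^2)$ finishes the proof.

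The only real obstacle is checking that the regimes in which each bound applies fit together cleanly with the hypothesis: the quadratic bound requires $x_i \le \frac{1}{2\tanh(r)}$ (satisfied by the upper endpoint of the range), and the concavity-based linear bound requires $r + x_i \ge 0$ (satisfied by the lower endpoint $x_i \ge -r$). Both are arranged for precisely by the choice of interval $u_i \in [0, r + \frac{1}{2\tanh(r)}]$ in the hypothesis, so no additional case analysis or approximation is needed.
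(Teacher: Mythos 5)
Your proof is correct and takes essentially the same route as the paper's: decompose $u_i = r + x_i$ so the linear terms cancel via $\sum_i x_i = 0$, apply \cref{prop: tanh quadratic upper bound} on the indices with $x_i \ge 0$, use the concavity tangent-line bound (valid since $u_i \ge 0$) on the rest, and absorb $1 - \tanh^2(r)$ into $1 - O(r^2)$. No material difference from the paper's argument.
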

\begin{proof}
    Write $u_i = r + x_i$, so that $x_i \le \frac{1}{2\tanh(r)}$ for every $i \in \bZ_n$
    and $\sum_i x_i = \sum_i (u_i - r) = 0$.
    Since $\tanh$ is a concave function on its non-negative domain, the first-degree Taylor series
    around $r$, namely $\tanh(r + y) \approx \tanh(r) + (1 - \tanh^2(r)) y$,
    upper bounds $\tanh(u_i)$ for every $i \in \bZ_n$.

    Therefore, our strategy will be to upper bound the entries with $x_i < 0$ via the first-degree
    series, and the entries with $x_i \ge 0$ via \cref{prop: tanh quadratic upper bound}. Then,
    the first degree terms will cancel out and the second-degree terms will yield the desired bound.
    Concretely, we have:
    \begin{align*}
        \sum_{i=0}^{n-1} \tanh(u_i) &= \sum_{i=0}^{n-1} \tanh(r + x_i) \\
        &\le \sum_{i=0}^{n-1} \Big[ \tanh(r) + (1 - \tanh^2(r)) x_i \Big]
            - \sum_{i \in \bZ_n : x_i \ge 0} \tanh(r)(1 - \tanh^2(r)) x_i^2 \\
        &= n\tanh(r) + (1 - \tanh^2(r))\sum_{i=1}^n x_i
            - \tanh(r)(1 - \tanh^2(r)) \sum_{i : x_i \ge 0} x_i^2 \\
        &\le \tanh(r) \Big( n - \left(1 - O\left(r^2\right)\right) \|(u-r)^+\|_2^2 \Big) \,.
        \qedhere
    \end{align*}
\end{proof}

We may now combine the results above to show a separation in $\Ex{\bm{Y}}$ as long as $q$
is not highly concentrated relative to $p$:

\begin{lemma}[Separation in the expected value of the test statistic]
    \label{lemma:sep-expected-value}
    Let $C, \gamma > 0$ be constants, let $n \in \bN$ be sufficiently large,
    let $\epsilon \geq \frac{1}{n^{1/4}}$, and let $m = m(n,\epsilon)$ satisfy
    \[
        \left( 2^8 \gamma^2 + (2^{18} \gamma)^{2/5} \right)
            \left(\frac{n}{\epsilon}\right)^{4/5}
        \le m
        \le \frac{4n}{3C\log n} \,.
    \]
    Let $\pi = \pi(p,q)$, where $p,q$ are partial distributions
    satisfying $\|p\|_1, \|q\|_1 = \frac{1}{2} \pm \frac{2\gamma}{\sqrt m}$, $\|q-\mu\|_1 >
    \epsilon$, and such that $q$ is not $C$-highly concentrated with respect to $p$.
    Write $p = \widetilde p + z$ where $\widetilde p$ is an approximate uniform conjugate of $q$.
    Then the expected value of the test statistic $\bm{Y}$ satisfies
    \[
        \Ex{\bm{Y}} \ge \Ex{\bm{Y}^{(\mu)}}
            + \Omega\left( \frac{\epsilon^2 m^2}{n^2} \right) + m z^\top \phi z \,.
    \]
\end{lemma}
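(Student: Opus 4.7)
The plan is to apply \cref{prop:conjugate-mean} to express $\Ex{\bm{Y}}$ in terms of the approximate uniform conjugate $\widetilde p$, then use the quantitative Jensen's inequality (\cref{lemma:qjensen}) to extract a lower bound on $\tau$, and finally compare to $\Ex{\bm{Y}^{(\mu)}}$ carefully enough that the signal from $\|q-\mu\|_1 > \epsilon$ dominates the bias error coming from $\|p\|_1,\|q\|_1$ possibly deviating from $1/2$. \cref{prop:conjugate-mean} immediately yields $\Ex{\bm{Y}} \ge m\|p\|_1 \tau + m z^\top \phi z - O(mn^2 \xi)$. Because $\|q\|_1 \approx 1/2$, the quantity $\xi = e^{-m\|q\|_1}/(1-e^{-m\|q\|_1})^2$ is exponentially small in $m$, and since $m = \Omega((n/\epsilon)^{4/5}) = \Omega(n^{4/5})$ under our hypotheses, the error $O(mn^2\xi)$ will be negligible compared to $\epsilon^2 m^2/n^2$.

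To lower bound $\tau$, I will apply \cref{lemma:qjensen} to the vector $u_i = mq_i/2$, whose average is $r = m\|q\|_1/(2n)$. The non-concentration hypothesis on $q$ gives $\|q\|_\infty \le C\log n / m$, so $u_i \le C\log n / 2$; together with $1/(2\tanh(r)) = \Omega(n/m) \gg C\log n / 2$ in the regime $m \le 4n/(3C\log n)$, this verifies the preconditions of the lemma. Setting $q^* \define (\|q\|_1/n)\vec 1$, I have $(u - r\vec 1)^+_i = (m/2)(q - q^*)_i^+$, and the triangle inequality gives $\|q - q^*\|_1 \ge \|q - \mu\|_1 - |\|q\|_1 - 1/2| > \epsilon - 2\gamma/\sqrt m$. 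Because $\sum_i (q - q^*)_i = 0$, the positive part carries half the mass, and Cauchy--Schwarz then yields $\|(u - r\vec 1)^+\|_2^2 = \Omega(m^2\epsilon^2/n)$ once $m \ge 64\gamma^2/\epsilon^2$. Inverting the conclusion of \cref{lemma:qjensen} gives
\[
\tau \ge \frac{1 - \|q\|_1}{n\tanh(r)}\Bigl(1 + \Omega(m^2\epsilon^2/n^2)\Bigr).
\]

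Next, I compare to the uniform baseline $\Ex{\bm{Y}^{(\mu)}} \le (m/2)\tau^{(\mu)} + O(mn\xi(m,\mu))$, where $\tau^{(\mu)} = (1/2)/(n\tanh(m/(4n)))$ (using that $\widetilde p = \mu$ when $q = \mu$, by \cref{lemma:approx-conjugate-existence}). Writing $\delta \define \|q\|_1 - 1/2$ so that $|\delta| \le 2\gamma/\sqrt m$ and $\|p\|_1 = 1/2 - \delta$, I compute the ratio of baseline factors:
\[
\frac{m\|p\|_1(1-\|q\|_1)/(n\tanh(r))}{(m/2)\tau^{(\mu)}} = 4(1-\|q\|_1)^2 \cdot \frac{\tanh(m/(4n))}{\tanh(r)}.
\]
Concavity of $\tanh$ bounds $\tanh(r) \le \tanh(m/(4n)) + (1-\tanh^2(m/(4n)))\,m\delta/(2n)$, so this ratio is at least $(1 - 2\delta)^3 = 1 - 6\delta + O(\delta^2)$. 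Combining with the lower bound on $\tau$ and using $(m/2)\tau^{(\mu)} = \Theta(1)$ gives
\[
m\|p\|_1 \tau \ge (m/2)\tau^{(\mu)} + \Omega(m^2\epsilon^2/n^2) - O(|\delta|) - O(\delta^2).
\]

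The main obstacle is this final bookkeeping step: the bias error $O(|\delta|) = O(\gamma/\sqrt m)$ must be absorbed by the signal $\Omega(m^2\epsilon^2/n^2)$. This is precisely what dictates the lower bound on $m$: the condition $\gamma/\sqrt m \ll m^2\epsilon^2/n^2$ rearranges to $m^{5/2} \gtrsim \gamma n^2/\epsilon^2$, \ie $m \gtrsim \gamma^{2/5}(n/\epsilon)^{4/5}$, matching our hypothesis. Once this is absorbed, together with the exponentially small $\xi$-terms and the non-negative contribution $m z^\top \phi z \ge 0$ from \cref{claim:join-semidefinite}, we recover the claimed inequality.
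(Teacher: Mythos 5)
Your proposal is correct and follows essentially the same approach as the paper's proof: both start from \cref{prop:conjugate-mean}, apply \cref{lemma:qjensen} to $u_i = mq_i/2$, and then carefully absorb the bias error from $\|p\|_1, \|q\|_1$ deviating from $1/2$ into the Jensen signal, while noting the $\xi$-term is exponentially negligible. The only presentational difference is bookkeeping: the paper decomposes $\Ex{\bm{Y}} - \Ex{\bm{Y}^{(\mu)}}$ into three explicit terms $F$, $G$, $H$ (the $\tanh$-argument bias, the Jensen signal, and the $\|p\|_1$-bias, respectively) and bounds each with the specific constants appearing in the hypothesis on $m$, whereas you fold the two bias terms together into a single ratio computation $(1-2\delta)^3$ via concavity of $\tanh$ and argue asymptotically; the underlying estimates, including the observation that the constraint $\gamma/\sqrt m \ll m^2\epsilon^2/n^2$ is what forces $m \gtrsim \gamma^{2/5}(n/\epsilon)^{4/5}$, match the paper's.
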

\begin{proof}
    From \cref{lemma:approx-conjugate-existence}, $\mu$ is its own approximate uniform conjugate.
    Since we will reason about both $\mu$ as its own approximate uniform conjugate and about
    $\widetilde p$ as the approximate uniform conjugate of $q$, let
    $\xi \define \max(\xi(m,\mu), \xi(m,q))$. First, using \cref{prop:conjugate-mean},
    \begin{equation}
        \label{eq:sep-unif}
        \Ex{\bm{Y}^{(\mu)}} = m \mu^\top \phi^{(\mu)} \mu
        = m \|\mu\|_1 (\tau(m,\mu) \pm 4n\xi)
        = \frac{m \|\mu\|_1^2}{\sum_{i=0}^{n-1} \tanh\left(\frac{m}{4n}\right)} \pm 4mn \|\mu\|_1 \xi
        = \frac{m}{4 n \tanh\left(\frac{m}{4n}\right)} \pm 2mn \xi \,.
    \end{equation}
    We now consider $\Ex{\bm{Y}}$.
    By the assumption that $\|p\|_1 \geq \tfrac{1}{2} - \tfrac{2\gamma}{\sqrt m}$, we obtain
    \[
        \|p\|_1^2 \geq \left(\frac{1}{2} - \frac{2\gamma}{\sqrt m}\right)^2
        \geq \frac{1}{4} - \frac{2\gamma}{\sqrt m} \,.
    \]
    Let $\tau \define \tau(m,q)$. Using \cref{prop:conjugate-mean},
    \begin{equation}
        \label{eq:sep-non-unif}
        \begin{aligned}
            \Ex{\bm{Y}}
            &= m \|p\|_1(\tau \pm 4n\xi) + mz^\top\phi z \pm 8mn^2\xi \\
            &\geq \frac{m \|p\|_1^2}{\sum_{i=0}^{n-1} \tanh\left(\frac{mq_i}{2}\right)}
                + mz^\top \phi z - 4mn\xi\|p\|_1 - 8mn^2\xi \\
            &\geq \frac{m - 8\gamma\sqrt m}{4\sum_{i=0}^{n-1} \tanh\left(\frac{mq_i}{2}\right)}
                + mz^\top \phi z - 12mn^2\xi \,.
        \end{aligned}
    \end{equation}
    We now write $q_i = \tfrac{1}{n}\|q\|_1 + x_i$ for each $i \in \bZ_n$, so
    \[
        \sum_{i=0}^{n-1} \tanh\left(\frac{mq_i}{2}\right)
        = \sum_{i=0}^{n-1} \tanh\left(\frac{m}{2n}\|q\|_1 + \frac{mx_i}{2}\right) \,,
    \]
    and $\sum_{i=0}^{n-1} x_i = 0$.  Writing $r \define \frac{m}{2n}\|q\|_1$ and
    $u_i \define \tfrac{mq_i}{2}$, we have $u = \frac{m}{2} q$ and
    \[
        0 \leq u_i = r + \frac{mx_i}{2} \,.
    \]
    Since $q$ is not $C$-highly concentrated relative to $p$, then as observed in
    \cref{remark:non-concentrated-l-infty-linear-trace} it holds that
    $\|q\|_\infty < \frac{C \log n}{m}$, so we have
    \[
        x_i < \frac{C \log n}{m} - \frac{1}{n}\|q\|_1 < \frac{C \log n}{m} \,.
    \]
    Moreover, since $\|q\|_1 \le \frac{1}{2} + \frac{2\gamma}{\sqrt{m}}$, we have that $r$ satisfies
    \[
        r \le \frac{m}{4n} + \frac{\gamma \sqrt{m}}{n}
        \le \frac{3m}{8n}
        \implies \tanh(r) \le \frac{3m}{4n} \,,
    \]
    where in the second inequality we used the fact that $\sqrt{m}/n = o(m/n)$, and in the
    last inequality we used \cref{fact:tanh-bounds} and the assumption that $m/n \le \frac{4}{3C\log n}$
    and that $n$ is sufficiently large. Then we obtain
    \[
        x_i < \frac{C \log n}{m}
        = \frac{2}{m} \cdot \frac{C \log n}{2}
        \le \frac{2}{m} \cdot \frac{2n}{3m}
        = \frac{2}{m} \cdot \frac{1}{2 \cdot 3m / (4n)}
        \le \frac{2}{m} \cdot \frac{1}{2 \tanh(r)} \,,
    \]
    where in the second inequality we used the assumption $m \le \frac{4n}{3C\log n}$.
    Thus $u, r$ satisfy the conditions
    \[
        0 \le u_i = r + \frac{m x_i}{2} \le r + \frac{1}{2\tanh(r)} \,,
        \qquad
        \frac{1}{n} \sum_{i=0}^{n-1} u_i = \frac{m}{2n} \|q\|_1 = r \,.
    \]
    Let $q' \define \tfrac{1}{n}\|q\|_1 \cdot \vec 1$, which is the partial distribution that is
    uniform with total mass equal to the total mass of $q$. Then observing that
    $r \cdot \vec 1 = m\|q\|_1 \cdot \mu = (m/2) \cdot q'$, we apply \cref{lemma:qjensen}, yielding
    \begin{align*}
        \frac{1}{n} \sum_{i=0}^{n-1} \tanh\left(\frac{mq_i}{2}\right)
        = \frac{1}{n} \sum_{i=0}^{n-1} \tanh(u_i)
        &\leq \tanh\left(\frac{m}{2n}\|q\|_1\right)
            \left(1 - \frac{1}{n} (1 - O(r^2)) \| ((m/2)q - (m/2)q')^+ \|_2^2 \right) \\
        &\le \tanh\left(\frac{m}{2n}\|q\|_1\right)
            \left(1 - \frac{m^2}{8n} \| (q - q')^+ \|_2^2 \right) \,,
    \end{align*}
    where the last inequality used the fact that $r^2 = \left(\frac{m}{2n}\|q\|_1\right)^2 = o(1)$.
    We will also use the following upper bound on $\|(q-q')^+\|_2^2$:
    \[
        \|(q-q')^+\|_2^2
        \le \|q\|_2^2
        < \left( \frac{C\log n}{m} \right)^2 \cdot \frac{1}{\left(\frac{C\log n}{m}\right)}
        = \frac{C\log n}{m} \,,
    \]
    where we used the fact that $\|q\|_\infty < \frac{C\log n}{m}$ by the
    anticoncentration assumption, so that the maximum $\ell^2$-norm is achieved by concentrating
    the partial distribution as much as possible given this constraint. We conclude that
    \begin{equation}
        \label{eq:upper-bound-q-q'-term}
        \frac{m^2}{8n} \|(q-q')^+\|_2^2
        < \frac{C m \log n}{8n}
        \le \frac{1}{6} \,,
    \end{equation}
    the latter since $m \le \frac{4n}{3C\log n}$. Thus we use the inequality
    $\frac{1}{1-x} \geq 1+x$, valid for $x < 1$, as follows:
    \begin{equation}
        \label{eq:sep-tanh}
        \begin{aligned}
            \frac{m - 8\gamma\sqrt m}{4 \sum_{i=0}^{n-1} \tanh\left(\frac{mq_i}{2}\right)}
            &\geq \frac{m - 8\gamma\sqrt m}{4 n \tanh\left(\frac{m}{2n}\|q\|_1\right)
            \left( 1 - \frac{m^2}{8n} \|(q - q')^+\|_2^2\right)} \\
            &\geq \frac{m - 8\gamma\sqrt m}{4n \tanh\left(\frac{m}{2n}\|q\|_1\right)}
                \left( 1 + \frac{m^2}{8n} \|(q - q')^+\|_2^2\right) \,.
        \end{aligned}
    \end{equation}
    From \eqref{eq:sep-unif}, \eqref{eq:sep-non-unif}, and \eqref{eq:sep-tanh}, we now have
    \begin{equation}
        \label{eq:separation-formula}
        \Ex{\bm{Y}} - \Ex{\bm{Y}^{(\mu)}}
        \ge \frac{m}{4n} \cdot F + G - H + mz^\top \phi z - 14mn^2\xi \,,
    \end{equation}
    where
    \begin{align*}
        F &=
            \frac{1}{\tanh\left(\frac{m}{2n}\|q\|_1\right)}
            - \frac{1}{\tanh\left(\frac{m}{4n}\right)} \,, \\
        G &= \frac{m}{4n \tanh\left(\frac{m}{2n}\|q\|_1\right)} \cdot \frac{m^2}{8n}
            \|(q - q')^+\|_2^2 \,, \\
        H &=
            \frac{8\gamma\sqrt m}{4n \tanh\left(\frac{m}{2n}\|q\|_1\right)}
                \left(1 + \frac{m^2}{8n} \|(q - q')^+\|_2^2 \right) \,.
    \end{align*}
    We will show that $G$ is large enough to give the desired separation
    $\Omega(\epsilon^2 m^2 / n^2)$, while $F$ and $H$ are asymptotically small enough.
    We first lower bound $F$. Using the fact that
    $\frac{m}{2n}\|q\|_1 \le \frac{m}{2n} \left(\frac{1}{2} + \frac{2\gamma}{\sqrt{m}}\right)
    = \frac{m}{4n} + \frac{\gamma\sqrt{m}}{n}$ and
    the upper bound $\tanh(r+x) \le \tanh(r) + x(1 - \tanh^2(r))$, which holds from the Taylor
    expansion of $\tanh$ when the arguments are all non-negative, we obtain
    \[
        F
        \ge \frac{1}{\tanh\left(\frac{m}{4n} + \frac{\gamma\sqrt{m}}{n}\right)}
            - \frac{1}{\tanh\left(\frac{m}{4n}\right)}
        = \frac{\tanh\left(\frac{m}{4n}\right) - \tanh\left(\frac{m}{4n} + \frac{\gamma\sqrt{m}}{n}\right)}
                {\tanh\left(\frac{m}{4n}\right)\tanh\left(\frac{m}{4n} + \frac{\gamma\sqrt{m}}{n}\right)}
        \ge -\frac{\frac{\gamma\sqrt{m}}{n} \left( 1 - \tanh^2\left(\frac{m}{4n}\right) \right)}
                {\tanh\left(\frac{m}{4n}\right)\tanh\left(\frac{m}{4n} + \frac{\gamma\sqrt{m}}{n}\right)} \,.
    \]
    For sufficiently large $n$ and therefore sufficiently small $m/n$, we have
    $\tanh(m/4n + \gamma\sqrt{m}/n) > \tanh(m/4n) \ge m/8n$ from \cref{fact:tanh-bounds}. We obtain
    \[
        \frac{m}{4n} \cdot F \ge -\frac{m}{4n} \cdot \frac{\gamma\sqrt{m} / n} {(m/8n)^2}
        = -\frac{16 \cdot \gamma}{\sqrt{m}} \,.
    \]
    We verify that this negative factor does not overwhelm the desired separation
    $\frac{\epsilon^2 m^2}{n^2}$ as follows:
    \[
        \frac{16 \gamma}{\sqrt{m}} \le \frac{\epsilon^2 m^2}{2^{12} \cdot n^2}
        \iff m^{5/2} \ge \frac{2^{16} \cdot \gamma \cdot n^2}{\epsilon^2}
        \iff m \ge (2^{16} \cdot \gamma)^{2/5} (n/\epsilon)^{4/5} \,,
    \]
    which holds by assumption.

    As for $G$, we use the bound $\tanh\left(\frac{m}{2n}\|q\|_1\right) \le \tanh(m/2n) \le m/n$
    and the Cauchy-Schwarz inequality to obtain
    \[
        G \ge \frac{m^2}{32 n} \|(q - q')^+\|_2^2
        \ge \frac{m^2}{32 n^2} \|(q - q')^+\|_1^2
        = \frac{m^2}{2^7 \cdot n^2} \|q - q'\|_1^2 \,,
    \]
    where the equality is because, since $\|q\|_1 = \|q'\|_1$, we have
    $\|(q - q')^+\|_1 = \|(q - q')^-\|_1 = \frac{1}{2} \|q - q'\|_1$.

    Therefore, our goal is to show that $\|q - q'\|_1^2$ is not much smaller than $\epsilon^2$.
    Using the triangle inequality, we have
    \[
        \epsilon < \|q - \mu\|_1 \le \|q - q'\|_1 + \|q' - \mu\|_1
        = \|q - q'\|_1 + \sum_{i=0}^{n-1} \abs*{\frac{\|q\|_1}{n} - \frac{1}{2n}}
        = \|q - q'\|_1 + \abs*{\|q\|_1 - \frac{1}{2}}
        \le \|q - q'\|_1 + \frac{2\gamma}{\sqrt{m}} \,,
    \]
    so that, using $\epsilon \le 2$ which always holds,
    \[
        G
        \ge \frac{m^2}{2^7 \cdot n^2} \left( \epsilon - \frac{2\gamma}{\sqrt{m}} \right)^2
        > \frac{m^2}{2^7 \cdot n^2} \left( \epsilon^2 - \frac{8\gamma}{\sqrt{m}} \right)
        \ge \frac{\epsilon^2 m^2}{2^8 \cdot n^2} \,,
    \]
    where the last inequality is obtained as follows, using the assumption that
    $m \ge 2^8 \cdot \gamma^2 (n/\epsilon)^{4/5}$:
    \begin{align*}
        \frac{8\gamma}{\sqrt{m}} \le \frac{\epsilon^2}{2}
        &\iff m \ge \frac{2^8 \gamma^2}{\epsilon^4} \\
        &\impliedby 2^8 \gamma^2 (n/\epsilon)^{4/5} \ge \frac{2^8 \gamma^2}{\epsilon^4}
        \iff \epsilon^{16/5} \ge \frac{1}{n^{4/5}}
        \iff \epsilon \ge \frac{1}{n^{1/4}} \,,
    \end{align*}
    which is true by assumption.

    We also show that $H$ does not overwhelm this term. For sufficiently large $n$ and therefore
    $m$, we have the inequality
    $\tanh\left(\frac{m}{2n}\|q\|_1\right)
    \ge \tanh\left(\frac{m}{2n}\left(\frac{1}{2} - \frac{2\gamma}{\sqrt{m}}\right)\right)
    \ge \tanh\left(\frac{m}{8n}\right) \ge m/16n$.
    Along with, \eqref{eq:upper-bound-q-q'-term}, we conclude
    \[
        H < \frac{8\gamma\sqrt m}{4n \cdot (m/16n)} \left(1 + \frac{1}{6} \right)
        < \frac{2^6 \cdot \gamma}{\sqrt{m}}
        \le \frac{\epsilon^2 m^2}{2^{12} \cdot n^2} \,,
    \]
    where the last inequality holds since
    \[
        \frac{2^6 \cdot \gamma}{\sqrt{m}} \le \frac{\epsilon^2 m^2}{2^{12} \cdot n^2}
        \iff m^{5/2} \ge \frac{2^{18} \gamma n^2}{\epsilon^2}
        \iff m \ge (2^{18} \gamma)^{2/5} (n/\epsilon)^{4/5} \,.
    \]
    Finally, we inspect the error term $-14mn^2\xi$. Recall that
    $\xi = \max(\xi(m,\mu), \xi(m,q))$, where
    $\xi(m,\mu) = \frac{e^{-m \|\mu\|_1}}{(1-e^{-m \|\mu\|_1})^2}
    = \frac{e^{-m/2}}{(1 - e^{-m/2})^2}$ and
    $\xi(m,q) = \frac{e^{-m \|q\|_1}}{(1 - e^{-m \|q\|_1})^2}$. Using the bound
    $\|q\|_1 \ge \frac{1}{2} - \frac{2\gamma}{\sqrt{m}} \ge \frac{1}{4}$ as $n$ and $m$ grow,
    we conclude that $\xi \le 2 e^{-m/4}$ and therefore, using the (simplified) assumptions
    $\Omega(n^{4/5}) \le m \le n$, we conclude that
    \[
        mn^2\xi \le n^3 e^{-\Omega(n^{4/5})}
        = o(n^{-5/2})
        = o\left(\frac{\epsilon^2}{n^2}\right)
        = o\left(\frac{\epsilon^2 m^2}{n^2}\right) \,,
    \]
    where we used the assumption $\epsilon \ge n^{-1/4}$ in the penultimate step.
    Returning to \eqref{eq:separation-formula}, we obtain
    \begin{align*}
        \Ex{\bm{Y}} - \Ex{\bm{Y}^{(\mu)}}
        &\ge \frac{m}{4n} \cdot F + G - H + mz^\top \phi z - 14mn^2\xi
        \ge -3 \cdot \frac{\epsilon^2 m^2}{2^{12} n^2} + \frac{\epsilon^2 m^2}{2^8 \cdot n^2}
            + m z^\top \phi z \\
        &= \Omega\left(\frac{\epsilon^2 m^2}{n^2}\right) + m z^\top \phi z \,. \qedhere
    \end{align*}
\end{proof}

\subsection{Concentration of the Test Statistic}
\label{section:parity-variance}

In this section, we start from the general results established in \cref{section:common-variance}
and conclude specific bounds for the variance of $\bm{Y}$ in the current setting.

\begin{lemma}[First component of the variance]
\label{lemma:var-first-component}
Let $C > 0 $ be a constant, let $n \in \bN$ be sufficiently large, and let $m$ satisfy
$m \le \poly(n)$.  Let $\pi = \pi(p,q)$, where $p,q$ are partial distributions
satisfying $\|p\|_1, \|q\|_1 \geq 1/4$ such that $p$ is not $C$-highly concentrated relative
to $q$.  Then
    \[
        \Varu{\bm{H}}{\Exuc{\bm{T}}{\bm{Y}}{\bm{H}}}
        \le O\left( \log^6 n \right) \cdot p^\top \phi p + O(m^2 e^{-m/8}) \,.
    \]
\end{lemma}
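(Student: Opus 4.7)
I start from Proposition~\ref{prop:general-variance-first-component}, which with $\cI = \cI^\cyc$ and weights $w(j) = 1-e^{-mq_j}$ yields
\[
\Varu{\bm H}{\Exuc{\bm T}{\bm Y}{\bm H}} \leq 5m^2 \zeta(\cI^\cyc)\|p\|_1^4 + cm^2 \sum_{\substack{I = \llangle i,d \rrangle \in \cI \\ 1 \leq d \leq n}} p_i p_{i+d-1} p[I]^2 e^{-mq[I^*]}.
\]
The first term is handled immediately: by Proposition~\ref{prop:zeta-cycle-bound} and $\|q\|_1 \geq 1/4$, $\zeta(\cI^\cyc) \leq e^{-m\|q\|_1/2} \leq e^{-m/8}$ and $\|p\|_1 \leq 1$, so this contributes at most $O(m^2 e^{-m/8})$. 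The bulk of the argument is to show the second sum is at most $O(\log^6 n)\, p^\top \phi p$ plus negligible error.

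To bound the sum, I would use the anticoncentration of $p$: since $p$ is not $C$-highly concentrated relative to $q$, for every circular interval $I$ of size at most $n$, $p[I] \leq C \log^2 n \cdot (q[I^*] + 1/(m \log n))$, hence $p[I]^2 \leq 2C^2 \log^4 n \cdot (q[I^*]^2 + 1/(m^2 \log^2 n))$. This splits the sum as $2C^2 \log^4 n\, (S_1 + S_2)$, where $S_1$ carries the factor $q[I^*]^2 e^{-m q[I^*]}$ and $S_2$ carries $\frac{1}{m^2 \log^2 n} e^{-mq[I^*]}$.

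The key structural identity is that, summing over clockwise intervals $I = \llangle i, d \rrangle$ with $1 \leq d \leq n$, each ordered pair of vertices is covered exactly once, with the $d=1$ diagonal giving $\|p\|_2^2$ and the off-diagonal pairs covering $\smallinterval$ and $\largeinterval$ between both orientations. Using the cycle formula $\phi_{i,j} = e^{-m q[\smallinterval(i,j)^*]} + e^{-mq[\largeinterval(i,j)^*]} - e^{-m\|q\|_1}$ and the fact that $\phi_{i,i} = 1$ combined with $\phi_{i,j} \geq 0$ gives $\|p\|_2^2 \leq p^\top \phi p$, one obtains $\sum_I p_i p_{i+d-1} e^{-m q[I^*]} \leq p^\top \phi p + O(e^{-m/4})$. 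This immediately bounds $S_2 \leq \frac{1}{m^2 \log^2 n}(p^\top \phi p + O(e^{-m/4}))$. For $S_1$ I apply the elementary two-case inequality $q[I^*]^2 e^{-mq[I^*]} \leq \frac{\log^2 n}{m^2} e^{-mq[I^*]} + \frac{\log^2 n}{m^2 n}$, obtained by splitting on whether $q[I^*] \leq \log n / m$ and using that $x^2 e^{-mx}$ is decreasing past $x = 2/m$. Combined with $\sum_I p_i p_{i+d-1} = \|p\|_1^2 \leq 1$, this gives $S_1 \leq \frac{\log^2 n}{m^2}(p^\top \phi p + O(e^{-m/4})) + O(\log^2 n / (m^2 n))$.

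Gathering the pieces, $cm^2 \cdot 2C^2 \log^4 n\,(S_1 + S_2) \leq O(\log^6 n)\, p^\top \phi p + O(\log^6 n / n) + O(\log^6 n \cdot e^{-m/4})$. The $O(\log^6 n / n)$ term is absorbed using $p^\top \phi p \geq \|p\|_2^2 \geq \|p\|_1^2 / n \geq 1/(16n)$, while $\log^6 n \cdot e^{-m/4} \leq m^2 e^{-m/8}$ once $m$ is moderately large (trivially true in our regime since $m$ grows polynomially with $n$). The main technical obstacle is the interplay between the two $\log^2 n$ factors: one comes from the anticoncentration bound on $p[I]$, and the other from the case split controlling $q[I^*]^2 e^{-m q[I^*]}$, which together produce the $\log^6 n$ exponent; a tighter bound would require a less wasteful handling of either step, and in particular would need to avoid breaking $p[I] \leq C \log^2 n \cdot \max\{q[I^*], 1/(m \log n)\}$ into an additive split.
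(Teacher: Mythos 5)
Your proof follows essentially the same route as the paper: start from Proposition~\ref{prop:general-variance-first-component}, use the anticoncentration of $p$ relative to $q$ to split the sum into an $S_1$ (with $q[I^*]^2 e^{-mq[I^*]}$) and an $S_2$ (with $\tfrac{1}{m^2\log^2 n}e^{-mq[I^*]}$), apply a case split on the size of $q[I^*]$ in $S_1$, bound the remaining $\sum_I p_ip_{i+d-1}e^{-mq[I^*]}$ by $p^\top\phi p$, and absorb the tails using $p^\top\phi p \geq \|p\|_2^2 \geq 1/(16n)$. The differences (converting $\max$ to sum, threshold $\log n/m$ instead of $2(K{+}2)\log n/m$) are cosmetic. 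The one piece that doesn't quite close is the spurious $O(e^{-m/4})$ term you introduce when bounding $\sum_I p_ip_{i+d-1}e^{-mq[I^*]}$, and the step that tries to absorb it.

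Two things to point out. First, the $O(e^{-m/4})$ term is avoidable and you should avoid it. For each circular interval $I = \llangle i,d\rrangle$, the event $\bm{J}(I)=1$ forces $\bm\Phi_{i,i+d-1}=1$, so $e^{-mq[I^*]} = \Ex{\bm{J}(I)} \leq \phi_{i,i+d-1}$ directly. Summing over all ordered pairs gives $\sum_I p_ip_{i+d-1}e^{-mq[I^*]} \leq p^\top\phi p$, with no error. You instead went through the exact circulant identity for $\phi_{i,j}$ and collected terms over unordered pairs, which accounts for both arcs and drags in the $+\,e^{-m\|q\|_1}$ correction; that is where the $e^{-m/4}$ comes from. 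The direct inequality is both simpler and tighter (this is what the paper uses).

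Second, the absorption $\log^6 n\cdot e^{-m/4} \leq m^2 e^{-m/8}$ is wrong as a general statement: it is equivalent to $\log^6 n \cdot e^{-m/8} \leq m^2$, which fails whenever $m = O(\log\log n)$ (say $m$ constant). You appeal to ``our regime'' where $m$ grows polynomially in $n$, but the lemma statement only assumes $m \leq \poly(n)$ — it places no lower bound on $m$, and in fact the first part of Lemma~\ref{lemma:concentration-parity-trace} relies on the statement with only that assumption. Within the lemma as stated, your absorption step is not justified. Once you use the direct inequality $e^{-mq[I^*]} \leq \phi_{i,i+d-1}$ the offending term vanishes, so this is an easy fix, but as written the argument does not hold for all $m$ the lemma permits.
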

\begin{proof}
    By \cref{prop:general-variance-first-component} we have, for some absolute constant $c > 0$,
    \begin{equation}
        \label{eq:var-first-formula}
        \Varu{\bm{H}}{\Exuc{\bm{T}}{\bm{Y}}{\bm{H}}}
        \le 5m^2 \zeta(\cI)\|p\|_1^4
            + c m^2 \cdot \sum_{\substack{I=\llangle i,d \rrangle \in \cI \\ 1 \le d \le n}}
                p_{i} p_{i+d-1} p[I]^2 \Ex{\bm{J}(I)} \,.
    \end{equation}
    Our assumption that $p$ is not highly concentrated relative to $q$ gives the inequality
    \[
        p[\llangle i,d+1 \rrangle]^2
        \le \left[ C\log^2(n) \cdot \max\left\{ q[\llangle i,d \rrangle], \frac{1}{m\log n} \right\} \right]^2
        \le C^2\log^4(n) q[\llangle i,d \rrangle]^2 + \frac{C^2 \log^2 n}{m^2} \,,
    \]
    and therefore, with a small change of variables in $d$,
    \begin{align*}
        &\sum_{\substack{I=\llangle i,d \rrangle \in \cI \\ 1 \le d \le n}}
            p_{i} p_{i+d-1} p[I]^2 \Ex{\bm{J}[I]} \\
        &\qquad= \sum_{i=0}^{n-1} \sum_{d=0}^{n-1}
            p_i p_{i+d} \Ex{\bm{J}(\llangle i,d+1 \rrangle)} p[ \llangle i,d+1 \rrangle]^2 \\
        &\qquad \le
            C^2 \log^4(n)
                \sum_{i=0}^{n-1} \sum_{d=0}^{n-1} p_i p_{i+d} \Ex{\bm{J}(\llangle i,d+1 \rrangle)}
                    q[\llangle i,d \rrangle]^2
            + \frac{C^2\log^2 n}{m^2}
                \sum_{i=0}^{n-1} \sum_{d=0}^{n-1} p_i p_{i+d} \Ex{\bm{J}(\llangle i,d+1 \rrangle)}
            \,.
    \end{align*}
    We show that both terms above satisfy our desired asymptotic bound. For the second term, note
    that $\bm{J}(\llangle i,d+1 \rrangle) = 1 \implies \bPhi_{i,i+d} = 1$, and therefore
    $\Ex{\bm{J}(\llangle i,d+1 \rrangle)} \le \phi_{i,i+d}$. Thus
    \begin{equation}
        \label{eq:var-first-1}
        \frac{C^2\log^2 n}{m^2}
            \sum_{i=0}^{n-1} \sum_{d=0}^{n-1} p_i p_{i+d} \Ex{\bm{J}(\llangle i,d+1 \rrangle)}
        \le \frac{C^2\log^2 n}{m^2}
            \sum_{i=0}^{n-1} \sum_{d=0}^{n-1} p_i p_{i+d} \phi_{i,i+d}
        = O\left(\frac{\log^2 n}{m^2}\right) \cdot p^\top \phi p \,,
    \end{equation}
    as desired. As for the first term, recall that
    \[
        \Ex{\bm{J}(\llangle i,d+1 \rrangle)}
        = \Pr{\forall e \in \llangle i,d \rrangle : e \in \bm{H}}
        = \prod_{e \in \llangle i,d \rrangle} (1 - w(e))
        = e^{-mq[\llangle i,d \rrangle]} \,.
    \]
    Let $K \ge 1$ be a constant such that $m \le n^K$ for all sufficiently large $n$,
    as per the assumption that $m \le \poly(n)$.
    Now, letting $x \define q[\llangle i,d \rrangle]$, which is bounded between $0$ and $1$,
    we consider two cases. First, suppose $x \ge \frac{2(K+2)\log n}{m}$. Then we obtain
    \[
        x^2 e^{-mx} \le 1 \cdot e^{-2(K+2)\log n} = n^{-2K-4} \,.
    \]
    On the other hand, if $x \le \frac{2(K+2)\log n}{m}$, then
    \[
        x^2 e^{-mx} \le O\left( \frac{\log^2 n}{m^2} \right) e^{-mx} \,.
    \]
    Therefore the first term is
    \begin{align*}
        &C^2 \log^4(n)
            \sum_{i=0}^{n-1} \sum_{d=0}^{n-1} p_i p_{i+d} e^{-mq[\llangle i,d \rrangle]} q[\llangle
i,d \rrangle ]^2 \\
        &\qquad \le O\left( \log^4 n \right)
            \sum_{i=0}^{n-1} \sum_{d=0}^{n-1} p_i p_{i+d}
                \left( n^{-2K-4} + O\left( \frac{\log^2 n}{m^2} \right) e^{-mq[\llangle i,d
\rrangle]} \right) \\
        &\qquad \le O(n^{-2K-1}) +
            O\left( \frac{\log^6 n}{m^2} \right) \cdot p^\top \phi p \,,
    \end{align*}
    where again we used the inequality
    $e^{-mq[\llangle i,d \rrangle]} = \Ex{\bm{J}(\llangle i,d+1 \rrangle)} \le \phi_{i,i+d}$
    in the last step.

    To upper bound the term $O(n^{-2K-1})$, we recall that $m \le n^K$ and observe that,
    since $\phi$ is $1$ on the diagonal and $\|p\|_1 \ge 1/4$, we have
    $p^\top \phi p \ge \|p\|_2^2 \ge \Omega(1/n)$. Therefore
    \[
        n^{-2K-1} = \frac{1}{n^{2K}} \cdot \frac{1}{n} \le \frac{1}{m^2} \cdot O(p^\top \phi p) \,.
    \]
    It follows that
    \begin{equation}
        \label{eq:var-first-2}
        C^2 \log^4(n)
            \sum_{i=0}^{n-1} \sum_{d=0}^{n-1} p_i p_{i+d} e^{-mq[\llangle i,d \rrangle]}
                q[\llangle i,d \rrangle ]^2
        \le O\left( \frac{\log^6 n}{m^2} \right) \cdot p^\top \phi p \,.
    \end{equation}

    As for the error term $5m^2 \zeta(\cI)\|p\|_1^4$, we upper bound $\|p\|_1$ by $1$
    and recall that $\zeta(\cI) \le e^{-\frac{m \|q\|_1}{2}}$ by \cref{prop:zeta-cycle-bound}.
    Along with the assumption that $\|q\|_1 \ge 1/4$, we obtain
    \begin{equation}
        \label{eq:var-first-3}
        5m^2 \zeta(\cI)\|p\|_1^4 \le O(m^2 e^{-m/8}) \,,
    \end{equation}
    as needed.
    Putting together \eqref{eq:var-first-1},\eqref{eq:var-first-2} and \eqref{eq:var-first-3}
    into \eqref{eq:var-first-formula}, we conclude that
    \[
        \Varu{\bm{H}}{\Exuc{\bm{T}}{\bm{Y}}{\bm{H}}}
        \le O(\log^6 n) p^\top \phi p + O(m^2 e^{-m/8}) \,. \qedhere
    \]
\end{proof}

To make the result above useful, we need to upper bound the quadratic form $p^\top \phi p$
by some quantity comparable to the separation shown in \cref{lemma:sep-expected-value}.
Recalling the breakdown in terms of an approximate uniform conjugate, $p = \widetilde p + z$,
our first task is to upper bound $\widetilde p^\top \phi \widetilde p$.

\begin{proposition}
\label{prop:quadratic-form-upper-bound}
Let $C > 0$ be a constant, let $n \in \bN$ be sufficiently large, and suppose $m$ satisfies
$m = \omega(\log n)$, $m = o(n)$. Let $\pi = \pi(p,q)$, where $p$, $q$ are partial distributions
satisfying $\|q\|_1 \geq 1/4$ such that $q$ is not $C$-highly concentrated relative to $p$.
Let $\widetilde p$ be an approximate uniform conjugate of $q$.  Then
\[
  {\widetilde p}^\top \phi \widetilde p = O\left(\frac{\log n}{m}\right) \,.
\]
\end{proposition}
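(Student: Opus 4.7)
The plan is to apply the second identity of \cref{prop:conjugate-mean}, which asserts that $\widetilde p^\top \phi \widetilde p = \|p\|_1 (\tau \pm 4n\xi)$, and to bound each of $\tau$ and $n\xi$ separately. Since $\|p\|_1 = 1 - \|q\|_1 \leq 1$, it suffices to show that $\tau = O(\log n/m)$ and that $4n\xi$ is of the same order (or smaller).

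The main step is the lower bound $\sum_{i=0}^{n-1} \tanh(mq_i/2) = \Omega(m/\log n)$, from which $\tau = (1-\|q\|_1)/\sum_i \tanh(mq_i/2) = O(\log n/m)$ immediately follows. The proof will use the elementary inequality $\tanh(x) \geq \min(x/2, 1/2)$ for $x \geq 0$ (which follows from $\tanh(x) \geq x - x^3/3$ near $0$ together with the fact that $\tanh$ is increasing and $\tanh(1) > 1/2$). Splitting the indices into $S = \{i : mq_i \leq 2\}$ and $T = \bZ_n \setminus S$, the hypothesis $\|q\|_1 \geq 1/4$ implies that either $\sum_{i \in S} q_i \geq 1/8$ or $\sum_{i \in T} q_i \geq 1/8$. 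In the first subcase, $\sum_i \tanh(mq_i/2) \geq \sum_{i \in S} mq_i/4 \geq m/32$. In the second subcase, the anticoncentration bound $\|q\|_\infty < C\log n/m$ from \cref{remark:non-concentrated-l-infty-linear-trace} forces $|T| \geq (1/8)/(C\log n/m) = m/(8C\log n)$, and hence $\sum_i \tanh(mq_i/2) \geq |T|/2 = \Omega(m/\log n)$.

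For the error term, the assumption $\|q\|_1 \geq 1/4$ gives $m\|q\|_1 \geq m/4$, and combining this with $m = \omega(\log n)$ yields $e^{-m\|q\|_1} = o(n^{-c})$ for every constant $c > 0$. Consequently $\xi = e^{-m\|q\|_1}/(1 - e^{-m\|q\|_1})^2 \leq 2 e^{-m/4}$ for large $n$, so that $4n\xi \leq 8n e^{-m/4} = o(1/n)$, which, together with $m = o(n)$, is $o(\log n/m)$. Substituting these bounds into $\widetilde p^\top \phi \widetilde p \leq \|p\|_1(\tau + 4n\xi)$ gives the desired $O(\log n/m)$ bound.

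The main obstacle is the lower bound $\sum_i \tanh(mq_i/2) = \Omega(m/\log n)$, and specifically the handling of the ``large-$q_i$'' regime: without the anticoncentration hypothesis, a point-mass $q$ could make $\sum_i \tanh(mq_i/2)$ as small as $O(1)$, forcing $\tau$ to be a constant and invalidating the claim. Thus the two-case split driven by the $\|q\|_\infty$ upper bound is essential to the argument, not merely convenient.
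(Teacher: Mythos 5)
Your proof is correct, and it takes a genuinely different and arguably cleaner route to the key inequality $\sum_i \tanh(mq_i/2) = \Omega(m/\log n)$. The paper decomposes $q = \mu + y$, parameterizes each $q_i$ as a convex combination of the extremal values $\frac{1}{2n}$ and $\frac{C\log n}{m}$ (for $i$ with $y_i \geq 0$) and of $0$ and $\frac{1}{2n}$ (for $y_i < 0$), applies Jensen's inequality to the concave function $\tanh$, and then splits on whether $\|y^+\|_1 \geq 1/8$, requiring a secondary bound $\tanh(m/4n) \geq m/8n$ that implicitly uses $m = o(n)$. Your argument replaces all of that with the elementary pointwise bound $\tanh(x) \geq \min(x/2, 1/2)$ and a direct case split on whether the $q$-mass is concentrated on small or large coordinates: the first case gives a lower bound of $m/32$ outright, and the second case uses the anticoncentration hypothesis $\|q\|_\infty < C\log n/m$ to force $|T|$ to be large. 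This avoids Jensen's inequality entirely, does not need $m = o(n)$ for the main estimate (it is still used, harmlessly, in the $n\xi$ error term), and correctly isolates the one place where the anticoncentration hypothesis is essential — precisely the point your final paragraph highlights. Both proofs yield the same $O(\log n/m)$ bound; yours requires one fewer auxiliary estimate.
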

\begin{proof}
    By \cref{prop:conjugate-mean}, ${\widetilde p}^\top \phi \widetilde p =\|p\|(\tau \pm 4n\xi)$.
    Our main task is to show that $\tau = O\left(\frac{\log n}{m}\right)$, but we first check
    that $4n\xi$ is small enough.
    Indeed, from \cref{def:approximate-uniform-conjugate} and since $\|q\|_1 \ge 1/4$, we have
    \[
        n\xi = n \cdot \frac{e^{-m \|q\|_1}}{(1 - e^{-m \|q\|_1)^2}}
        \le \frac{n \cdot e^{-\Omega(m)}}{(1 - e^{-\Omega(m)})^2}
        = o(1/n)
        \le o(1/m) \,,
    \]
    the last two steps since $m = \omega(\log n)$, $m = o(n)$.
    We now study $\tau$. Recall that
\[
    \tau = \frac{\|p\|_1}{\sum_{i=0}^{n-1} \tanh\left(\frac{mq_i}{2}\right)} \,.
\]
Since $\tanh$ is concave on the non-negative domain, our goal will be to upper bound $\tau$ using
Jensen's inequality. Write $q = \mu + y$. Let $S \define \{ i \in \bZ_n : y_i \geq 0 \}$ and
$\overline S \define \{ i \in \bZ_n : y_i < 0 \} = \bZ_n \setminus S$. Note that the vector $y^+$
only takes non-zero entries in $S$, and $y^-$ only takes non-zero entries in $\overline S$.
For each $i \in S$,
we have $\frac{1}{2n} \leq q_i \leq \frac{C \log n}{m}$, the upper bound since
$\|q\|_\infty \leq \frac{C \log n}{m}$ due to the anticoncentration assumption
(\cref{remark:non-concentrated-l-infty-linear-trace}). Then we may write
\[
  q_i = \lambda_i \cdot \frac{C \log n}{m} + (1-\lambda_i) \cdot \frac{1}{2n}\,,
  \qquad\text{ where }\qquad
\lambda_i = \frac{y_i}{\frac{C \log n}{m} - \frac{1}{2n}} \in [0,1] \,.
\]
For $i \in \overline S$, we have $0 \leq q_i < \frac{1}{2n}$, so we may
write
\[
  q_i = \lambda_i \cdot \frac{1}{2n} \,,
  \qquad\text{ where }\qquad
  \lambda_i = 2nq_i \,.
\]
Applying Jensen's inequality, and using $\tanh\left(\frac{m}{4n}\right) \geq \frac{m}{8n}$ which
holds for sufficiently small $m/4n$ (\cref{fact:tanh-bounds}),
\begin{align*}
\sum_{i=0}^{n-1} \tanh\left(\frac{mq_i}{2}\right)
&= \sum_{i \in S} \tanh\left(\lambda_i \cdot \frac{C \log n}{2}
    + (1-\lambda_i) \cdot \frac{m}{4n}\right)
  + \sum_{i \in \overline S} \tanh\left(\lambda_i \cdot \frac{m}{4n} \right) \\
&\geq \sum_{i \in S} \left[\lambda_i \tanh\left(\frac{C \log n}{2}\right)
    + (1-\lambda_i) \tanh\left(\frac{m}{4n}\right)\right]
  + \sum_{i \in \overline S} \lambda_i \tanh\left(\frac{m}{4n}\right) \\
&\geq \frac{1}{2} \sum_{i \in S} \lambda_i + \sum_{i \in S} (1-\lambda_i) \cdot \frac{m}{8n}
  + \sum_{i \in \overline S} \lambda_i \cdot \frac{m}{8n} \\
&= \frac{1}{2} \sum_{i \in S} \lambda_i + \frac{m}{8n} \left(
        \sum_{i \in S} (1-\lambda_i) + \sum_{i \in \overline S} \lambda_i \right) \,,
\end{align*}
where we have used the fact that $\tanh\left(\frac{C \log n}{2}\right) \geq \frac12$ for
sufficiently large $n$.  We consider two cases. First assume that $\|y^+\|_1 \geq
\frac{1}{8}$. Then
\[
    \frac{1}{2}\sum_{i \in S} \lambda_i
    = \frac{1}{2} \cdot \frac{m}{C \log n - \frac{m}{2n}} \|y^+\|_1
    \geq \frac{m}{16 C \log n} \,,
\]
and therefore
\[
\tau = \frac{\|p\|_1}{\sum_{i=0}^{n-1} \tanh\left(\frac{mq_i}{2}\right)} \leq \frac{16C\log n}{m} \,,
\]
as desired. Next assume that $\|y^+\|_1 < \frac{1}{8}$.  Observe that for $i \in S$,
\[
  \lambda_i = \frac{y_i}{ \frac{C \log n}{m} - \frac{1}{2n} }
    = \frac{my_i}{C\log n - \frac{m}{2n}} \leq \frac{2my_i}{C \log n} \,,
\]
since $\frac{m}{2n} \leq \frac12 C\log n$. 
So
\begin{align*}
\sum_{i \in S} (1-\lambda_i) + \sum_{i \in \overline S} \lambda_i
  &\geq \sum_{i \in S} \left( 1-\frac{2my_i}{C \log n} \right)
      + 2n \sum_{i \in \overline S} q_i
  = \sum_{i \in S} \left( 1-\frac{2my_i}{C \log n} \right)
      + \sum_{i \in \overline S} ( 1 + 2ny_i ) \\
  &= |S| - \frac{2m}{C \log n} \|y^+\|_1 + |\overline S| - 2n \|y^-\|_1 \\
  &= n - \frac{2m}{C \log n} \|y^+\|_1 - 2n \|y^-\|_1 \,.
\end{align*}
Recalling the assumption that $\|q\|_1 \ge 1/4$, we now observe that
\begin{align*}
    \|q\|_1 = \sum_{i \in S} \left(\frac{1}{2n} + |y_i|\right)
        + \sum_{i \in \overline S} \left(\frac{1}{2n} - |y_i|\right)
    = \frac{1}{2} + \|y^+\|_1 - \|y^-\|_1 \geq \frac{1}{4} \,,
\end{align*}
so $\|y^-\|_1 \leq \|y^+\|_1 + \frac{1}{4}$. Then
\begin{align*}
\sum_{i \in S} (1-\lambda_i) + \sum_{i \in \overline S} \lambda_i
  &\geq n - \frac{2m}{C \log n} \|y^+\|_1 - 2n \|y^-\|_1 \\
  &\geq n - \frac{2m}{C \log n} \|y^+\|_1 - 2n \left(\|y^+\|_1 + \frac{1}{4}\right) \\
  &= n - \|y^+\|_1 \left( \frac{2m}{C \log n} + 2n \right) - \frac{n}{2} \\
  &\geq \frac{n}{2} - \frac{1}{8} \left( \frac{2m}{C \log n} + 2n \right)
  = \left(\frac14 - o(1)\right)n \,.
\end{align*}
We conclude that
\[
  \sum_{i=0}^{n-1} \tanh\left(\frac{mq_i}{2}\right)
  \geq \frac{m}{8n} \left( \sum_{i \in S} (1-\lambda_i) + \sum_{i \in
\overline S} \lambda_i \right)
  \geq \frac{m}{8n} \left(\frac{1}{4} - o(1)\right) n \geq \frac{m}{33} \,,
\]
for sufficiently large $n$, in which case $\tau \leq \frac{33}{m}$. This concludes the proof.
\end{proof}

\begin{corollary}
    \label{cor:variance-first-component}
    Under the assumptions of \cref{lemma:var-first-component} and
    \cref{prop:quadratic-form-upper-bound}, and writing $p = \widetilde p + z$,
    the first component of the variance satisfies
    \[
        \Varu{\bm{H}}{\Exuc{\bm{T}}{\bm{Y}}{\bm{H}}}
        \le O\left( \frac{\log^7 n}{m} \right) + O\left(\log^6 n\right) z^\top \phi z \,.
    \]
\end{corollary}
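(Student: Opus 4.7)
My plan is to start from the bound in \cref{lemma:var-first-component}, namely
\[
    \Varu{\bm{H}}{\Exuc{\bm{T}}{\bm{Y}}{\bm{H}}}
    \le O(\log^6 n) \cdot p^\top \phi p + O(m^2 e^{-m/8}) \,,
\]
and reduce the problem to bounding $p^\top \phi p$. Expanding via $p = \widetilde p + z$ gives
\[
    p^\top \phi p
    = \widetilde p^\top \phi \widetilde p + 2 \widetilde p^\top \phi z + z^\top \phi z \,,
\]
so I will handle the three resulting terms in turn. The first term is directly controlled by \cref{prop:quadratic-form-upper-bound}, which gives $\widetilde p^\top \phi \widetilde p = O(\log n / m)$. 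Multiplying by $O(\log^6 n)$ yields the $O(\log^7 n / m)$ contribution. The third term gives the $O(\log^6 n) z^\top \phi z$ contribution as desired.

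The cross term $\widetilde p^\top \phi z$ is where the argument is slightly more subtle, but still short. By symmetry of $\phi$, it equals $z^\top (\phi \widetilde p) = \sum_i z_i (\phi \widetilde p)_i$. The approximate uniform conjugate property from \cref{def:approximate-uniform-conjugate} says $|(\phi \widetilde p)_i - \tau| \le 4n \xi$ for every $i$. Since $\|\widetilde p\|_1 = 1 - \|q\|_1 = \|p\|_1$, we have $\sum_i z_i = 0$, so the $\tau$-contribution cancels and we obtain
\[
    |\widetilde p^\top \phi z| = \abs*{\sum_i z_i ((\phi \widetilde p)_i - \tau)} \le 4n \xi \cdot \|z\|_1 \le 8n\xi \,.
\]
Under $\|q\|_1 \ge 1/4$ and $m = \omega(\log n)$, we have $\xi = O(e^{-m/4})$, so $n \xi$ is negligible compared to $1/m$; multiplying by $O(\log^6 n)$ still gives something absorbed into the $O(\log^7 n / m)$ term. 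The same kind of reasoning handles the $O(m^2 e^{-m/8})$ error from \cref{lemma:var-first-component}: under $m = \omega(\log n)$ and $m = o(n)$, this is $o(n^{-K})$ for any constant $K$, and in particular $o(\log^7 n / m)$.

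I expect no real obstacle here; the only care needed is ensuring the assumptions of \cref{lemma:var-first-component,prop:quadratic-form-upper-bound} are compatible and that the cross term is controlled using the \emph{approximate} rather than exact uniform conjugate property. Combining the three contributions completes the proof.
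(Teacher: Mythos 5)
Your proof is correct and follows essentially the same route as the paper: expand $p^\top\phi p$ via $p=\widetilde p + z$, bound $\widetilde p^\top\phi\widetilde p$ with \cref{prop:quadratic-form-upper-bound}, absorb the Poisson-tail error term, and show the cross term is negligibly small using the approximate-uniform-conjugate property and $\sum_i z_i = 0$. The only (harmless) divergence is cosmetic: you bound the cross term by $4n\xi\cdot\|z\|_1 \le 8n\xi$ using $\|z\|_1 \le \|p\|_1 + \|\widetilde p\|_1 \le 2$, whereas the paper uses the coarser $|z_i|\le 1$ termwise to get $4n^2\xi$; both are dwarfed by $1/m$ under $\|q\|_1\ge 1/4$, $m=\omega(\log n)$, $m=o(n)$.
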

\begin{proof}
    Note that $m^2 e^{-m/8} = o(1/m)$ since $m = \omega(\log n) = \omega(1)$.
    Thus \cref{lemma:var-first-component}, along with the breakdown $p = \widetilde p + z$, yields
    \begin{align*}
        \Varu{\bm{H}}{\Exuc{\bm{T}}{\bm{Y}}{\bm{H}}}
        &\le O\left( \log^6 n \right) \cdot (\widetilde p + z)^\top \phi (\widetilde p + z)
            + O(m^2 e^{-m/8}) \\
        &= O\left( \log^6 n \right) \left(
            \widetilde p^\top \phi \widetilde p + z^\top \phi z + 2 z^\top \phi \widetilde p
            \right)
            + o(1/m) \,.
    \end{align*}
    By a similar argument as in the proof of \cref{prop:conjugate-mean}, we have
    $z^\top \phi \widetilde p \le 4 n^2 \xi = o(1/m)$, the last step as in
    the proof of \cref{prop:quadratic-form-upper-bound}.
    Applying \cref{prop:quadratic-form-upper-bound} to the term
    $\widetilde p^\top \phi \widetilde p$, we obtain
    \[
        \Varu{\bm{H}}{\Exuc{\bm{T}}{\bm{Y}}{\bm{H}}}
        \le O\left(\frac{\log^7 n}{m}\right) + O\left(\log^6 n\right) z^\top \phi z \,.
        \qedhere
    \]
\end{proof}

We now upper bound the second component of the variance. The key step is to show that, with
high probability, no bucket contains too much probability mass:

\begin{proposition}
    \label{prop:var-q-bin}
    Let $C, K > 0$ be constants.
    Let $\pi = \pi(p,q)$ where $p,q$ are partial distributions such that $p$ is not $C$-highly
    concentrated relative to $q$. Then the random bucketing
    $\bm{\Gamma} = (\bm{\Gamma}_1, \dotsc, \bm{\Gamma}_{\bm{b}})$ induced by $\bm{H}$ satisfies
    \[
        \Pr{ \max_j p[\bm{\Gamma}_j] \geq 2(K+1)\frac{C \log^3 n}{m} } < 2/n^K \,.
    \]
\end{proposition}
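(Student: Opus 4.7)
The plan is to show that the event $\max_j p[\bm{\Gamma}_j] \geq T$ (where $T \define 2(K+1)\frac{C\log^3 n}{m}$) forces the existence of a ``witness'' circular interval $I$ that is both heavy under $p$ and fully joined in $\bm{H}$; then apply anticoncentration to see that such an $I$ must be heavy under $q$ as well, and union-bound.

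First, I observe that any bucket $\bm{\Gamma}_j$ is itself a circular interval with $|\bm{\Gamma}_j| \le n$, and that $\bm{J}(\bm{\Gamma}_j) = 1$ by definition. Therefore, if $\max_j p[\bm{\Gamma}_j] \ge T$, then there exists at least one circular interval $I$ with $|I| \le n$, $p[I] \ge T$, and $\bm{J}(I) = 1$. Taking a union bound,
\[
  \Pr{\max_j p[\bm{\Gamma}_j] \ge T} \;\le\; \sum_{\substack{I \in \cI^\cyc \\ |I|\le n,\ p[I]\ge T}} \Pr{\bm{J}(I)=1} \;=\; \sum_{\substack{I \in \cI^\cyc \\ |I|\le n,\ p[I]\ge T}} e^{-m q[I^*]} \,,
\]
where the equality uses that the edges of the cycle are included in $\bm{H}$ independently with probabilities $e^{-m q_i}$.

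Next, I use the hypothesis that $p$ is not $C$-highly concentrated relative to $q$: for every such $I$,
\[
  p[I] \le C\log^2(n) \cdot \max\!\left\{q[I^*],\; \tfrac{1}{m\log n}\right\} \,.
\]
If $p[I] \ge T = 2(K+1) C \log^3(n)/m$, then the right-hand maximum is at least $\frac{2(K+1)\log n}{m}$, and since $\frac{2(K+1)\log n}{m} > \frac{1}{m\log n}$ for all sufficiently large $n$, the maximum is realized by the $q[I^*]$ term. Hence $q[I^*] \ge \frac{2(K+1)\log n}{m}$, which gives $e^{-m q[I^*]} \le n^{-2(K+1)}$.

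Finally, the number of circular intervals of length at most $n$ is at most $n^2$ (one for each choice of starting vertex and length in $\{1,\dots,n\}$), so the union bound yields
\[
  \Pr{\max_j p[\bm{\Gamma}_j] \ge T} \;\le\; n^2 \cdot n^{-2(K+1)} \;=\; n^{-2K} \;<\; 2/n^K \,,
\]
as required. I do not anticipate any real obstacle here; the proof is essentially bookkeeping, and the only subtlety is checking that the ``$\frac{1}{m\log n}$'' branch of the anticoncentration definition is inactive in the regime of interest, which is immediate.
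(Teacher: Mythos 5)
Your proof is correct, and it takes a genuinely different route from the paper's. The paper fixes each vertex $i$ and defines the minimal clockwise and counterclockwise circular intervals $R$ and $L$ starting at $i$ that accumulate \emph{half} the threshold, namely $(K+1)C\log^3(n)/m$ of $p$-mass. It then observes that a bucket containing $i$ with mass at least $2(K+1)C\log^3(n)/m$ must contain $L$ or $R$, so one of them is fully joined; the per-witness probability is then at most $n^{-(K+1)}$, and a union bound over the $2n$ witnesses gives $2n^{-K}$. You instead use the big bucket itself as the witness interval, which carries the full threshold $p[I]\ge 2(K+1)C\log^3(n)/m$, and you union-bound over all $n^2$ candidate intervals. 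The extra $n$ factor in your union bound is more than offset by the squared per-witness probability $n^{-2(K+1)}$, so your final bound $n^{-2K}$ is actually tighter than the paper's $2n^{-K}$. Your version is also arguably simpler, since it avoids the two-sided splitting argument and the definition of the minimal intervals $L,R$. One small bookkeeping point: both arguments implicitly need $(K+1)\log^2 n > \Theta(1)$ so that the $\frac{1}{m\log n}$ branch of the anticoncentration max is inactive; you state the ``sufficiently large $n$'' caveat explicitly, whereas the paper leaves it implicit, but this is the same assumption used throughout the section.
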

\begin{proof}
Fix any $i \in \bZ_n$, and let $\bm{\Gamma}_j$ be the (random) bucket containing $i$. We wish to
bound the probability that $p[ \bm{\Gamma}_j ] > 2(K+1) C \frac{\log^3 n}{m}$.
Let $R = \llangle i, d_R \rrangle$ be the minimal circular interval in the clockwise direction
starting at $i$ satisfying $p[R] \ge (K+1) C\frac{\log^3 n}{m}$. Likewise, let
$L = \llangle i, -d_L \rrangle$ be the minimal circular interval in the counterclockwise direction
starting at $i$ satisfying $p[L] \ge (K+1) C\frac{\log^3 n}{m}$. Observe that, if
$p[\bm{\Gamma}_j] \ge 2(K+1) C\frac{\log^3 n}{m}$, then the bucket contains at least one of these
intervals: $\cE(L) \subseteq \cE(\bm{\Gamma}_j)$ or $\cE(R) \subseteq \cE(\bm{\Gamma}_j)$,
and therefore $\bm{J}(L) = 1$ or $\bm{J}(R) = 1$.

    Since $p$ is not $C$-highly concentrated relative to $q$, we have
    $(K+1) C\frac{\log^3 n}{m} \le p[R] < C\log^2(n) \cdot \max\{ q[R^*], \frac{1}{m\log n} \}$.
    Therefore, it must be the case that
    \[
        q[R^*] > \frac{p[R]}{C\log^2 n} \ge (K+1)\frac{\log n}{m} \,.
    \]
    Now
    \[
        \Pr{ \bm{J}(R) = 1 }
        = \Pr{ \forall z \in R^* : z \in \bm{H} }
        = \prod_{z \in R^*} (1 - w(z))
        = \prod_{z \in R^*} e^{-mq_z}
        = e^{-m q[ R^* ]}
        < e^{-(K+1) \log n}
        = \frac{1}{n^{K+1}} \,.
    \]
    The same holds for $\Pr{ \bm{J}(L) = 1 }$. Then, by the union bound over $i \in \bZ_n$,
    \[
        \Pr{ \max_j p[\bm{\Gamma}_j] \geq 2(K+1)\frac{C \log^3 n}{m} } < 2/n^K \,. \qedhere
    \]
\end{proof}

\begin{lemma}[Second component of the variance]
    \label{lemma:var-second-component}
    Let $C > 0$ be a constant.
    Let $n$ be sufficiently large and suppose $m$ satisfies $m \le \poly(n)$.
    Suppose $\pi = \pi(p,q)$ where $p,q$ are partial distributions such that $p$ is not $C$-highly
    concentrated relative to $q$. Then we have
    \[
        \Exu{\bm{H}}{\Varuc{\bm{T}}{\bm{Y}}{\bm{H}}} \le O\left( \frac{\log^6 n}{m} \right) \,.
    \]
\end{lemma}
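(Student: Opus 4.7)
The plan is to apply \cref{prop:general-variance-second-component} for each fixed subgraph $H$, which gives
\[
    \Varuc{\bm{T}}{\bm{Y}}{\bm{H}=H} = 2\|p_{|\Gamma}\|_2^2 + 4m \|p_{|\Gamma}\|_3^3 \,,
\]
and then to split the expectation over $\bm H$ into two cases based on the ``bucket mass'' event controlled by \cref{prop:var-q-bin}.

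First, fix a constant $K$ to be chosen, and let $E$ be the event that $\max_j p[\bm\Gamma_j] \le B$ with $B \define 2(K+1)\frac{C\log^3 n}{m}$. By \cref{prop:var-q-bin}, $\Pr{\overline E} < 2/n^K$. Conditional on $E$, I would bound the two norms by peeling off the bucketed $\ell^\infty$ norm: $\|p_{|\Gamma}\|_\infty \le B$, so using $\|p_{|\Gamma}\|_1 \le \|p\|_1 \le 1$,
\[
  \|p_{|\Gamma}\|_2^2 \le \|p_{|\Gamma}\|_\infty \|p_{|\Gamma}\|_1 \le B,
  \qquad
  \|p_{|\Gamma}\|_3^3 \le \|p_{|\Gamma}\|_\infty^2 \|p_{|\Gamma}\|_1 \le B^2 \,.
\]
This gives, under $E$,
\[
  \Varuc{\bm{T}}{\bm{Y}}{\bm{H}} \le 2B + 4m B^2 = O\!\left(\frac{\log^3 n}{m}\right) + O\!\left(\frac{\log^6 n}{m}\right) = O\!\left(\frac{\log^6 n}{m}\right).
\]

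Next, for the failure case, I would use a crude deterministic bound valid for any $H$: since $\|p_{|\Gamma}\|_1 \le 1$, monotonicity of $\ell^p$ norms gives $\|p_{|\Gamma}\|_2^2, \|p_{|\Gamma}\|_3^3 \le 1$, so $\Varuc{\bm{T}}{\bm{Y}}{\bm{H}} \le 2 + 4m = O(m)$. Combining via the law of total expectation,
\[
  \Exu{\bm H}{\Varuc{\bm T}{\bm Y}{\bm H}} \le O\!\left(\frac{\log^6 n}{m}\right) + \frac{2}{n^K} \cdot O(m).
\]
Since by assumption $m \le \poly(n)$, say $m \le n^{K_0}$ for some constant $K_0$, choosing $K$ large enough (for instance $K \ge K_0 + 2$) makes the second term $o(1/m)$, absorbing it into the first term.

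There is no real obstacle here beyond the careful accounting: all the structural work has already been done in \cref{prop:general-variance-second-component} (which converts the conditional variance into a clean expression in bucketed norms) and in \cref{prop:var-q-bin} (which rules out heavy buckets with high probability). The only subtlety is making sure that the fallback bound $O(m)$ in the rare event $\overline E$ is killed by choosing $K$ sufficiently large relative to the implicit polynomial exponent in $m \le \poly(n)$, which is why it matters that we have a \emph{polynomial} tail bound in $n$ rather than just an $o(1)$ one.
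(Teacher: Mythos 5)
Your proof takes essentially the same route as the paper: apply \cref{prop:general-variance-second-component} to express the conditional variance in terms of bucketed 2- and 3-norms, split on the bucket-mass event from \cref{prop:var-q-bin}, bound the norms in the good case via $\|p_{|\Gamma}\|_\infty$ together with $\|p_{|\Gamma}\|_1 \le 1$, use the crude $O(m)$ bound otherwise, and kill the rare case by choosing $K$ large. One small arithmetic slip in your parenthetical: to get $\frac{2m}{n^K} = o(1/m)$ you need $n^K \gg m^2$, so with $m \le n^{K_0}$ you need $K > 2K_0$ (the paper simply picks $K$ with $m \le n^{K/2}$), not merely $K \ge K_0 + 2$; the fix is trivial and your surrounding phrase ``choosing $K$ large enough'' already covers it.
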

\begin{proof}
    We start from the general result from \cref{prop:general-variance-second-component}:
    for some absolute constant $c > 0$, for every $H$ in the support of $\bm{H}$ with induced
    buckets $\Gamma = (\Gamma_1, \dotsc, \Gamma_b)$,
    \[
        \Varuc{\bm{T}}{\bm{Y}}{\bm{H} = H} \le c \|p_{|\Gamma}\|_2^2 + cm \|p_{|\Gamma}\|_3^3 \,.
    \]

    Let $K > 0$ be a constant such that $m \le n^{K/2}$ for all sufficiently large $n$,
    which exists by the assumption that $m \le \poly(n)$.
    First, suppose the subgraph $H$ induces bucketing $\Gamma = (\Gamma_1, \dotsc, \Gamma_b)$ satisfying
    $\max_j p[\Gamma_j] \le 2(K+1)C\frac{\log^3 n}{m}$. Since $\|p_{|\Gamma}\|_1 = \|p\|_1 \le 1$,
    we can upper bound the values that $\|p_{|\Gamma}\|_2^2$ and $\|p_{|\Gamma}\|_3^3$ can take
    by distributing $1$ total weight in a maximally concentrated way, \ie meeting the per-bucket
    upper bound we have just assumed. Therefore, we obtain
    \[
        \|p_{|\Gamma}\|_2^2
        \le \left(2(K+1)C\frac{\log^3 n}{m}\right)^2 \cdot \frac{1}{\left(2(K+1)C\frac{\log^3 n}{m}\right)}
        = O\left(\frac{\log^3 n}{m}\right)
    \]
    and
    \[
        \|p_{|\Gamma}\|_3^3
        \le \left(2(K+1)C\frac{\log^3 n}{m}\right)^3 \cdot \frac{1}{\left(2(K+1)C\frac{\log^3 n}{m}\right)}
        = O\left(\frac{\log^6 n}{m^2}\right) \,.
    \]
    Therefore, in this case, we have
    \[
        \Varuc{\bm{T}}{\bm{Y}}{\bm{H} = H} \le O\left(\frac{\log^6 n}{m}\right) \,.
    \]
    On the other hand, since $\|p_{|\Gamma}\|_1 \le 1$, every $H$ satisfies the simpler bound
    \[
        \Varuc{\bm{T}}{\bm{Y}}{\bm{H} = H}
        \le c \|p_{|\Gamma}\|_2^2 + cm \|p_{|\Gamma}\|_3^3
        = O(m) \,.
    \]
    Using \cref{prop:var-q-bin}, we write
    \begin{align*}
        &\Exu{\bm{H}}{\Varuc{\bm{T}}{\bm{Y}}{\bm{H}}} \\
        &\qquad \leq \Pr{ \max_j p[\bm{\Gamma}_j] \leq \frac{2(K+1)C\log^3 n}{m}} \cdot
            \Exuc{\bm{H}}{\Varuc{\bm{T}}{\bm{Y}}{\bm{H}}}
                {\max_j p[\bm{\Gamma}_j] \leq \frac{2(K+1)C \log^3 n}{m}} \\
        &\qquad\quad + \Pr{ \max_j p[\bm{\Gamma}_j] > \frac{2(K+1)C \log n}{m}} \cdot
            \Exuc{\bm{H}}{\Varuc{\bm{T}}{\bm{Y}}{\bm{H}}}
                {\max_j p[\bm{\Gamma}_j] > \frac{2(K+1)C\log^3 n}{m}} \\
        &\qquad \leq O\left(\frac{\log^6 n}{m}\right) + O(m) \cdot \frac{2}{n^K} \,,
    \end{align*}
    and since $m \le n^{K/2}$, we have $O(m) \cdot \frac{2}{n^K} \le O(1/n^{K/2}) \le O(1/m)$,
    as needed.
\end{proof}

We can now use the law of total variance to combine these results into a concentration bound for
the test statistic:

\begin{lemma}[Concentration of the Test Statistic]
    \label{lemma:concentration-parity-trace}
    Let $C > 0$ be a constant and $n \in \bN$ be sufficiently large. Suppose $m = m(n,\epsilon)$
    satisfies $m \le \poly(n)$. Let $\pi = \pi(p,q)$, where $p,q$ are partial distributions
    satisfying $\|p\|_1, \|q\|_1 \ge 1/4$ such that $p$ is not $C$-highly concentrated
    relative to $q$. Then for all $t > 0$,
    \[
        \Pr{\abs*{\bm{Y}-\Ex{Y}} \ge t}
        \le \frac{\frac{1}{m} + p^\top \phi p}{t^2} \cdot O(\log^6 n) \,.
    \]
    Moreover, suppose $q$ is not $C$-highly concentrated relative to $p$ and $m$ satisfies
    $m = \omega(\log n)$, $m = o(n)$. Then writing $p = \widetilde p + z$ where $\widetilde p$
    is an approximate uniform conjugate of $q$, we also have
    \[
        \Pr{\abs*{\bm{Y}-\Ex{Y}} \ge t}
        \le \frac{\frac{1}{m} + z^\top \phi z}{t^2} \cdot O(\log^7 n) \,.
    \]
\end{lemma}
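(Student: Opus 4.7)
The plan is to combine the bounds on the two components of the variance established in Lemmas \ref{lemma:var-first-component} and \ref{lemma:var-second-component} (for the first statement) and Corollary \ref{cor:variance-first-component} (for the second), via the law of total variance $\Var{\bm{Y}} = \Varu{\bm{H}}{\Exuc{\bm{T}}{\bm{Y}}{\bm{H}}} + \Exu{\bm{H}}{\Varuc{\bm{T}}{\bm{Y}}{\bm{H}}}$, and then apply Chebyshev's inequality $\Pr{\abs{\bm{Y}-\Ex{\bm{Y}}} \ge t} \le \Var{\bm{Y}}/t^2$.

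For the first statement, I would first observe that the assumptions of Lemmas \ref{lemma:var-first-component} and \ref{lemma:var-second-component} are both satisfied (each only requires $\|p\|_1, \|q\|_1 \ge 1/4$, the non-concentration of $p$ relative to $q$, and $m \le \poly(n)$). Summing their bounds yields
\[
    \Var{\bm{Y}} \le O(\log^6 n) \cdot p^\top \phi p + O\!\left(\frac{\log^6 n}{m}\right) + O(m^2 e^{-m/8}) \,.
\]
The main little subtlety is absorbing the residual $O(m^2 e^{-m/8})$ into $O(\log^6 n)(\tfrac{1}{m}+p^\top\phi p)$. I would handle this by noting that $m^3 e^{-m/8}$ is maximized at a constant value of $m$ and hence $m^3 e^{-m/8} = O(1) \le O(\log^6 n)$ for sufficiently large $n$; equivalently $m^2 e^{-m/8} \le O(\log^6 n / m)$. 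Dividing by $t^2$ then gives the first claim.

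For the second statement, the additional hypotheses ($q$ not $C$-highly concentrated relative to $p$, $m = \omega(\log n)$, $m = o(n)$) are exactly those needed to invoke Corollary \ref{cor:variance-first-component}, which gives the sharper first-component bound $O(\log^7 n / m) + O(\log^6 n) z^\top \phi z$ in terms of the deviation $z = p - \widetilde p$ from the approximate uniform conjugate. Combined with the second-component bound $O(\log^6 n / m)$ from Lemma \ref{lemma:var-second-component} (whose hypotheses remain in force), we obtain
\[
    \Var{\bm{Y}} \le O\!\left(\frac{\log^7 n}{m}\right) + O(\log^6 n) \cdot z^\top \phi z \le O(\log^7 n) \cdot \left(\frac{1}{m} + z^\top \phi z\right),
\]
and Chebyshev's inequality again yields the claim.

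The main ``obstacle'' is really just bookkeeping: there is no new probabilistic content beyond what is already packaged into Lemmas \ref{lemma:var-first-component} and \ref{lemma:var-second-component} and Corollary \ref{cor:variance-first-component}, since all of the effort in the section goes into controlling the conditional mean and conditional variance of $\bm{Y}$ given $\bm{H}$. The only care needed is to verify that the exponentially small error term $m^2 e^{-m/8}$ appearing in the first-component bound is dominated by the $1/m$ term in the final estimate, which follows from the fact that $m^2 e^{-m/8}$ is uniformly $O(1)$ in $m$.
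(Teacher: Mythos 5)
Your proposal is correct and follows essentially the same route as the paper: invoke the law of total variance, plug in Lemma~\ref{lemma:var-first-component} and Lemma~\ref{lemma:var-second-component} (resp.~Corollary~\ref{cor:variance-first-component} for the second statement), absorb the $O(m^2 e^{-m/8})$ residual by observing that $m^3 e^{-m/8}$ is uniformly bounded, and finish with Chebyshev's inequality. The paper states the absorption as $m^2 e^{-m/8} = O(1/m)$ rather than $O(\log^6 n / m)$, but this is an immaterial difference.
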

\begin{proof}
    By the law of total variance,
    \[
        \Var{\bm{Y}} = \Varu{\bm{H}}{ \Exuc{\bm{T}}{ \bm{Y} }{ \bm{H} } }
        + \Exu{\bm{H}}{ \Varuc{\bm{T}}{ \bm{Y} }{ \bm{H} } } \,.
    \]
    The first term is bounded by
    \[
        \Varu{\bm{H}}{\Exuc{\bm{T}}{\bm{Y}}{\bm{H}}}
        \le O\left( \log^6 n \right) \cdot p^\top \phi p + O(m^2 e^{-m/8})
    \]
    by \cref{lemma:var-first-component}, and the second term is bounded by
    \[
        \Exu{\bm{H}}{\Varuc{\bm{T}}{\bm{Y}}{\bm{H}}} \le O\left( \frac{\log^6 n}{m} \right)
    \]
    by \cref{lemma:var-second-component}. Moreover, for any constant $c > 0$,
    the function $m \mapsto m^3 e^{-cm}$ has a global maximum of
    $\frac{3^3}{e^3 c^3}$, and therefore
    $m^2 e^{-m/8} = O(1/m)$. The first statement follows from Chebyshev's inequality.

    Making also the second set of assumptions, \cref{cor:variance-first-component} implies that
    \[
        \Varu{\bm{H}}{\Exuc{\bm{T}}{\bm{Y}}{\bm{H}}}
        \le O\left( \frac{1}{m} + z^\top \phi z \right) \cdot O(\log^7 n) \,,
    \]
    so the second statement follows again from Chebyshev's inequality.
\end{proof}

\subsection{Correctness of the Tester for Large \texorpdfstring{$\epsilon$}{Epsilon}}
\label{section:parity-correctness}

We can use our separation and concentration results above to show that $\bm{Y}$ is concentrated
on the correct side of the tester's threshold. Combining this with the easy cases of biased and
highly concentrated distributions will yield the correctness result.

\begin{lemma}
    \label{lemma:test-statistic-concentration}

    Let $\alpha, \gamma > 0 $ be constants. There exist constants
    $\beta = \beta_{\alpha,\gamma} > 0$ and $K = K_{\alpha,\beta,\gamma} > 1$
    such that the following holds for all sufficiently large $n$.
    Suppose $\epsilon \ge \frac{K \log^{3} n}{n^{1/4}}$.
    Let $\pi = \pi(p,q)$, where $p,q$ are partial
    distributions satisfying $\|p\|_1, \|q\|_1 = \frac{1}{2} \pm \frac{2\gamma}{\sqrt m}$
    and suppose that $p, q$ are not $4\alpha$-highly concentrated relative to the other.

    Let $T \define \frac{m}{4n^2}\sum_{i,j} \phi^{(\mu)}_{i,j}
    + \beta \frac{\epsilon^2 m^2}{n^2}$ be the threshold used by
    \cref{alg:uniformity-tester-linear-trace}.
    Let $\bm{Y}^{(0)}$ and $\bm{Y}^{(1)}$ be random variables denoting the value of the test
    statistic $\bm{Y}$ in the iterations $b=0$ and $b=1$ of the algorithm, respectively.
    Then when
    $m = \Theta_{\alpha,\beta,\gamma}\left( \left(\frac{n}{\epsilon}\right)^{4/5} \log^{7/5} n \right)$,
    the following statements hold:
    \begin{enumerate}
        \item (Completeness) If $\pi(p, q)$ is the uniform distribution over $[2n]$, then
            $\max\{ \bm{Y}^{(0)}, \bm{Y}^{(1)} \} < T$ with probability at least $99/100$; and
        \item (Soundness) If $\dist_\TV(\pi(p, q), \pi(\mu, \mu)) > \epsilon$,
            then $\max\{ \bm{Y}^{(0)}, \bm{Y}^{(1)} \} > T$ with probability at least $99/100$.
    \end{enumerate}
\end{lemma}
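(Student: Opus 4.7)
I will prove both claims via Chebyshev's inequality, using the concentration bound of Lemma~\ref{lemma:concentration-parity-trace} together with Proposition~\ref{prop:expectation-y-uniform} for completeness and Lemma~\ref{lemma:sep-expected-value} for soundness. Since the two iterations of the tester are symmetric with the roles of $p$ and $q$ swapped, I analyze only iteration $b=1$ in detail; the $b=0$ case follows by the same argument. The bias hypothesis ensures $\|p\|_1, \|q\|_1 \geq 1/4$ for large $m$, and the non-concentration hypothesis supplies exactly what both invoked lemmas require.

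\textbf{Completeness.} When $p=q=\mu$, Lemma~\ref{lemma:approx-conjugate-existence} tells us that $\mu$ is its own approximate uniform conjugate, so the deviation $z$ in the decomposition $p=\widetilde p+z$ is zero. Proposition~\ref{prop:expectation-y-uniform} then yields $\Ex{\bm Y^{(1)}}=\tfrac{m}{4n^2}\sum_{i,j}\phi^{(\mu)}_{i,j}$, so $T - \Ex{\bm Y^{(1)}} = \beta\epsilon^2 m^2/n^2$. Applying the second form of Lemma~\ref{lemma:concentration-parity-trace} with $z=0$,
\[
    \Pr{\bm Y^{(1)} \geq T}
    \leq \frac{O(\log^7 n)/m}{(\beta\epsilon^2 m^2/n^2)^2}
    = \frac{O(\log^7 n)\,n^4}{\beta^2 m^5 \epsilon^4},
\]
which is at most $1/200$ once $m \geq C_\beta \log^{7/5}(n)(n/\epsilon)^{4/5}$ for a sufficiently large constant $C_\beta$. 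A union bound over $b \in \{0,1\}$ then gives a total failure probability of at most $1/100$.

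\textbf{Soundness: setup.} Since $\dist_\TV(\pi(p,q),\pi(\mu,\mu))>\epsilon$ implies $\|p-\mu\|_1 + \|q-\mu\|_1 > 2\epsilon$, I may assume WLOG $\|q-\mu\|_1>\epsilon$; the complementary case is detected by iteration $b=0$. Let $\widetilde p$ be an approximate uniform conjugate of $q$ (guaranteed by Lemma~\ref{lemma:approx-conjugate-existence}) and write $p=\widetilde p+z$. Lemma~\ref{lemma:sep-expected-value} then provides a constant $c_1>0$ such that
\[
    \Ex{\bm Y^{(1)}} \geq \Ex{\bm Y^{(\mu)}} + c_1\,\epsilon^2 m^2/n^2 + m\, z^\top \phi z.
\]
Fix $\beta \define c_1/2$; then $\Ex{\bm Y^{(1)}} - T \geq A + B$, where $A \define (c_1/2)\epsilon^2 m^2/n^2$ and $B \define m\, z^\top\phi z \geq 0$ (non-negative by Claim~\ref{claim:join-semidefinite}).

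\textbf{Soundness: the main technical step.} By the second form of Lemma~\ref{lemma:concentration-parity-trace},
\[
    \Pr{\bm Y^{(1)} \leq T}
    \leq \Pr{|\bm Y^{(1)} - \Ex{\bm Y^{(1)}}| \geq A+B}
    \leq \frac{(1/m + z^\top \phi z)\,O(\log^7 n)}{(A+B)^2}.
\]
The main obstacle is that $z^\top\phi z$ scales both the separation term $B$ and the variance in the numerator, so larger $z^\top\phi z$ might in principle worsen the tail bound. I will resolve this with a case split: if $B \leq A$, I use $(A+B)^2 \geq A^2$ together with $z^\top\phi z \leq A/m$; if $B > A$, I use $(A+B)^2 \geq B^2 = m^2(z^\top\phi z)^2$ together with $z^\top\phi z \geq A/m$. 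Either way the expression collapses to $O(\log^7 n)\bigl(\tfrac{1}{mA^2} + \tfrac{1}{mA}\bigr) = O(\log^7 n)\bigl(\tfrac{n^4}{m^5\epsilon^4} + \tfrac{n^2}{m^3\epsilon^2}\bigr)$. With $m = \Theta_{c_1}\bigl(\log^{7/5}(n)(n/\epsilon)^{4/5}\bigr)$ and constants taken large enough, the first summand is at most $1/400$; the second, equal to $O(\log^{14/5}(n)\,\epsilon^{2/5}/n^{2/5})$, is also at most $1/400$ for sufficiently large $n$ since $\epsilon \leq 1$. Since the WLOG assumption makes iteration $b=1$ the ``correct'' one, we conclude $\Pr{\max\{\bm Y^{(0)},\bm Y^{(1)}\} \leq T} \leq \Pr{\bm Y^{(1)} \leq T} \leq 1/200 < 1/100$, as required.
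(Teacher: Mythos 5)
Your proof is correct and follows essentially the same route as the paper's: completeness via the $z=\vec 0$ case of \cref{lemma:concentration-parity-trace}, and soundness via \cref{lemma:sep-expected-value} followed by the same two-case Chebyshev argument (your split on $B \le A$ versus $B > A$ is equivalent to the paper's split on $z^\top \phi z \le 1/m$ versus $z^\top \phi z \ge 1/m$ and yields the same bounds). The one step you gloss over is verifying the precondition $m \le \frac{4n}{3C\log n}$ of \cref{lemma:sep-expected-value}: this is exactly where the hypothesis $\epsilon \ge \frac{K\log^3 n}{n^{1/4}}$ and the choice of $K = K_{\alpha,\beta,\gamma}$ enter (one must take $K$ large enough, depending on the implicit constant in $m = \Theta_{\alpha,\beta,\gamma}(\cdot)$, to make $m/(n/\log n)$ small), and your proof never explains the role of $K$ even though it appears in the statement.
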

\begin{proof}
    Note that we can simply write $T = \Ex{\bm{Y}^{(\mu)}} + \beta \frac{\epsilon^2 m^2}{n^2}$
    by \cref{prop:expectation-y-uniform}.

    \textbf{Completeness.}
    In this case, $\Ex{\bm{Y}^{(0)}} = \Ex{\bm{Y}^{(1)}} = \Ex{\bm{Y}^{(\mu)}}$.
    Moreover, in this case we can write $p = \widetilde p + z$ for $\widetilde p = \mu$ and
    $z = \vec 0$ since $\mu$ is its own uniform conjugate by \cref{lemma:approx-conjugate-existence}.
    Hence \cref{lemma:concentration-parity-trace} gives
    \[
        \Pr{\bm{Y}^{(1)} \ge T}
        \le \Pr{\abs*{\bm{Y}^{(1)} - \Ex{\bm{Y}^{(1)}}} \ge \beta \frac{\epsilon^2 m^2}{n^2}}
        \le \frac{\frac{1}{m} + z^\top \phi z}{\left( \beta \frac{\epsilon^2 m^2}{n^2} \right)^2}
            \cdot O(\log^7 n)
        = \frac{n^4}{\beta^2 \epsilon^4 m^5} \cdot O(\log^7 n) \,.
    \]
    Thus for any constant $\beta$ (to be chosen below), this probability is at most (say) $1/200$
    when $m = \Omega((n/\epsilon)^{4/5} \log^{7/5} n)$.
    By symmetry, the same is true for $\bm{Y}^{(0)}$, and hence the probability that
    $\max\{ \bm{Y}^{(0)}, \bm{Y}^{(1)} \} < T$ fails to hold is at most $1/100$, as desired.

    \textbf{Soundness.}
    Without loss of generality, it suffices to consider the case when $\|q-\mu\|_1 > \epsilon$
    and show that $\bm{Y}^{(1)} > T$ with probability at least $99/100$.

    Since $\epsilon \ge \frac{K \log^{3} n}{n^{1/4}}$ and
    $m = \Theta_{\alpha,\beta,\gamma}( (n/\epsilon)^{4/5} \log^{7/5} n )$,
    for any value of $\beta$
    we can ensure that $\frac{m}{n / \log n}$ is smaller than any constant by making
    $K = K_{\alpha,\beta,\gamma}$ sufficiently large. Indeed, for some constant
    $A = A_{\alpha,\beta,\gamma} > 0$ and sufficiently large $n$, we have
    \[
        m \le A \left(\frac{n}{\epsilon}\right)^{4/5} \log^{7/5} n
        \le \frac{A n \log^{7/5} n}{K^{4/5} \log^{12/5} n}
        = \frac{A}{K^{4/5}} \cdot \frac{n}{\log n} \,,
    \]
    which can be made sufficiently small by making $K$ sufficiently large.
    Therefore the conditions of \cref{lemma:sep-expected-value} are satisfied and we obtain
    \[
        \Ex{\bm{Y}^{(1)}} \ge \Ex{\bm{Y}^{(\mu)}}
            + \Omega\left( \frac{\epsilon^2 m^2}{n^2} \right) + m z^\top \phi z \,.
    \]
    For concreteness, let $L = L_{\alpha,\gamma} > 0$ be a constant such that,
    for sufficiently large $n$, we have
    \[
        \Ex{\bm{Y}^{(1)}} \ge \Ex{\bm{Y}^{(\mu)}} + L\frac{\epsilon^2 m^2}{n^2} + m z^\top \phi z \,.
    \]
    Then as long as $\beta < L/2$, \cref{lemma:concentration-parity-trace} yields
    \begin{align*}
        \Pr{\bm{Y}^{(1)} \le T}
        &\le \Pr{\bm{Y}^{(1)} - \Ex{\bm{Y}^{(1)}} \le \beta \frac{\epsilon^2 m^2}{n^2}
            - L\frac{\epsilon^2 m^2}{n^2} - m z^\top \phi z} \\
        &\le \Pr{\abs*{\bm{Y}^{(1)} - \Ex{\bm{Y}^{(1)}}} \ge
            (L-\beta) \frac{\epsilon^2 m^2}{n^2} + m z^\top \phi z} \\
        &\le \Pr{\abs*{\bm{Y}^{(1)} - \Ex{\bm{Y}^{(1)}}} \ge
            \frac{L}{2} \cdot \frac{\epsilon^2 m^2}{n^2} + m z^\top \phi z} \\
        &\le \frac{\frac{1}{m} + z^\top \phi z}
                  {\left( \frac{L}{2} \cdot \frac{\epsilon^2 m^2}{n^2} + m z^\top \phi z \right)^2}
                  \cdot O(\log^7 n) \\
        &\le \frac{\frac{1}{m} + z^\top \phi z}
                  {\frac{L^2}{4} \cdot \frac{\epsilon^4 m^4}{n^4} + m^2 \left(z^\top \phi z\right)^2}
                  \cdot O(\log^7 n) \,,
    \end{align*}
    the last step since $z^\top \phi z \ge 0$ due to the positive semidefiniteness of $\phi$.
    We now consider two cases. First, suppose $z^\top \phi z \le 1/m$. Then
    \[
        \Pr{\bm{Y}^{(1)} \le T}
        \le \frac{2/m}{\frac{L^2}{4} \cdot \frac{\epsilon^4 m^4}{n^4}} \cdot O(\log^7 n)
        = \frac{n^4}{L^2 \epsilon^4 m^5} \cdot O(\log^7 n) \,,
    \]
    which is again at most $1/200$. On the other hand, suppose $z^\top \phi z \ge 1/m$. Then
    \[
        \Pr{\bm{Y}^{(1)} \le T}
        \le \frac{1/m}{\frac{L^2}{4} \cdot \frac{\epsilon^4 m^4}{n^4}} \cdot O(\log^7 n)
            + \frac{z^\top \phi z}{m^2 (z^\top \phi z)^2} \cdot O(\log^7 n)
        \le \frac{n^4}{L^2 \epsilon^4 m^5} \cdot O(\log^7 n) + \frac{1}{m} \cdot O(\log^7 n) \,.
    \]
    We have already seen that the first term is at most $1/200$, and the second term is clearly
    $o(1)$. Hence $\Pr{\bm{Y}^{(1)} \le T} \le 1/100$, concluding the proof.
\end{proof}

We may now combine the previous results to conclude the correctness of the tester:

\begin{theorem}
    \label{thm:upper-bound-linear-trace-main}
    There exist constants $\alpha, \beta, \gamma > 0$ and $K > 1$ such that the following holds
    for all sufficiently large $n$.
    Suppose $\epsilon \ge \frac{K \log^{3} n}{n^{1/4}}$.
    Let $\pi = \pi(p,q)$, where $p,q$ are partial distributions.

    Then \cref{alg:uniformity-tester-linear-trace} instantiated with constants
    $\alpha, \beta, \text{and } \gamma$
    has sample complexity $\Theta\left( \left(\frac{n}{\epsilon}\right)^{4/5} \log^{7/5} n\right)$
    and satisfies the following:
    \begin{enumerate}
        \item (Completeness) If $\pi(p, q)$ is the uniform distribution over $[2n]$, the algorithm
            accepts with probability at least $9/10$; and
        \item (Soundness) If $\dist_\TV(\pi(p, q), \pi(\mu, \mu)) > \epsilon$, the algorithm
            rejects with probability at least $9/10$.
    \end{enumerate}
\end{theorem}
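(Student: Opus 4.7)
The plan is to simply combine the three-step correctness analyses already developed: the bias test of Proposition~\ref{prop:easy-case-bias}, the concentration test of Proposition~\ref{prop:easy-case-concentrated}, and the collision-based test of Lemma~\ref{lemma:test-statistic-concentration}. The sample complexity bound is immediate from the algorithm's specification of $m = \Theta_{\alpha,\beta,\gamma}((n/\epsilon)^{4/5} \log^{7/5} n)$, so the only substantive task is verifying completeness and soundness.

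First I would instantiate the constants in the correct dependency order. Choose $\alpha$ large enough for Proposition~\ref{prop:easy-case-concentrated} to hold, then $\gamma$ large enough for Proposition~\ref{prop:easy-case-bias} to hold. With $\alpha$ and $\gamma$ fixed, invoke Lemma~\ref{lemma:test-statistic-concentration} to obtain $\beta = \beta_{\alpha,\gamma}$ and $K = K_{\alpha,\beta,\gamma}$; in particular, $K$ must be large enough that the assumption $\epsilon \ge \frac{K \log^3 n}{n^{1/4}}$ forces $m$ to fall in the regime where the collision-based analysis applies (in particular $m = o(n/\log n)$, which is used in Lemma~\ref{lemma:sep-expected-value} via the bound $m \le 4n/(3C\log n)$).

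For completeness, suppose $\pi(p,q)$ is uniform, so $\|p\|_1 = \|q\|_1 = 1/2$ and $p = q = \mu$ is not highly concentrated. By Proposition~\ref{prop:easy-case-bias}, the bias test rejects with probability at most $1/100$. By Proposition~\ref{prop:easy-case-concentrated}, the concentration test rejects with probability at most $1/100$. Finally, by the completeness part of Lemma~\ref{lemma:test-statistic-concentration}, the collision-based test rejects with probability at most $1/100$. A union bound then gives total rejection probability at most $3/100 < 1/10$.

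For soundness, assume $\dist_\TV(\pi(p,q), \pi(\mu,\mu)) > \epsilon$, and split into cases depending on which ``easy condition'' (if any) triggers. If $\|p\|_1 \notin \frac{1}{2} \pm \frac{2\gamma}{\sqrt m}$ (equivalently for $\|q\|_1$), then the bias test rejects with probability at least $99/100$ by Proposition~\ref{prop:easy-case-bias}. Otherwise we have $\|p\|_1, \|q\|_1 \in \frac{1}{2} \pm \frac{2\gamma}{\sqrt m}$ (in particular both $\ge 1/4$); if additionally one of $p,q$ is $4\alpha$-highly concentrated relative to the other, then the concentration test rejects with probability at least $99/100$ by Proposition~\ref{prop:easy-case-concentrated}. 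Otherwise, all the preconditions of the soundness part of Lemma~\ref{lemma:test-statistic-concentration} are in place, and since $\dist_\TV(\pi(p,q),\pi(\mu,\mu)) > \epsilon$ implies $\|p-\mu\|_1 > \epsilon$ or $\|q-\mu\|_1 > \epsilon$ (and the algorithm symmetrically tests both the $b=0$ and $b=1$ statistics), the collision-based test rejects with probability at least $99/100$. In every case the algorithm rejects with probability at least $99/100 > 9/10$.

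There is no genuine obstacle at this stage; the only thing to be careful about is bookkeeping the order in which constants are chosen, so that the $K$ in the final statement is compatible with the $K$ required by Lemma~\ref{lemma:test-statistic-concentration} and that the assumption $\epsilon \ge K\log^3(n)/n^{1/4}$ is strong enough to place $m$ in the valid regime for both the separation lemma (Lemma~\ref{lemma:sep-expected-value}) and the concentration lemma (Lemma~\ref{lemma:concentration-parity-trace}).
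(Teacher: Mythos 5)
Your proposal is correct and follows essentially the same three-step combination as the paper's own proof: instantiate the constants in the same dependency order, bound the completeness failure probability via a union bound over Proposition~\ref{prop:easy-case-bias}, Proposition~\ref{prop:easy-case-concentrated}, and Lemma~\ref{lemma:test-statistic-concentration}, and handle soundness by the same three-way case split on which precondition fails. The bookkeeping of $\alpha,\gamma,\beta,K$ matches the paper's, so there is no gap.
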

\begin{proof}
    We first instantiate sufficiently large $\alpha, \gamma > 0$, sufficiently small $\beta > 0$ and
    sufficiently large $K > 1$ (in this order) to satisfy the conditions of
    \cref{prop:easy-case-bias,prop:easy-case-concentrated,lemma:test-statistic-concentration}.
    The sample complexity follows from the definition of the algorithm; we now show that it
    correctly accepts/rejects.

    \textbf{Completeness.}
    By \cref{prop:easy-case-bias,prop:easy-case-concentrated,lemma:test-statistic-concentration},
    the algorithm rejects with probability at most
    $1/100 + 1/100 + 1/100 < 1/10$.

    \textbf{Soundness.}
    We consider three cases. First, suppose $\|p\|_1 \not\in \frac{1}{2} \pm \frac{2\gamma}{\sqrt{m}}$.
    Then by \cref{prop:easy-case-bias}, the algorithm rejects with probability at least $99/100$.

    Second, suppose $p$ or $q$ is $4\alpha$-highly concentrated relative to the other. Then
    by \cref{prop:easy-case-concentrated}, the algorithm rejects with probability at least $99/100$.

    Finally, suppose $\|p\|_1, \|q\|_1 = \frac{1}{2} \pm \frac{2\gamma}{\sqrt{m}}$ and neither $p$
    nor $q$ is $4\alpha$-highly concentrated. Then by \cref{lemma:test-statistic-concentration},
    the algorithm rejects with probability at least $99/100$, as desired.
\end{proof}

\noindent
Combining \cref{thm:upper-bound-linear-trace-main,lemma:upper-bound-linear-trace-small-epsilon}
establishes the upper bound portion of \cref{thm:intro-main}.

\section{Lower Bound for Testing Uniformity in the Parity Trace Model}
\label{section:lower-bound}

\paragraph{Notation} In this section, let $\mu$ denote the partial distribution for domain $[2n]$
with total mass $1/2$ uniformly distributed over its support, so that $\pi(\mu,\mu)$ is the uniform
distribution over $[2n]$.

We wish to prove the following result:

\begin{theorem}[Lower bound portion of \cref{thm:intro-main}]
    \label{thm: lower bound}

    Let $\Pi_1$ contain only the uniform distribution over $[2n]$, and let $\Pi_2$ be the set of
    distributions over $[2n]$ that are $\epsilon$-far from uniform in total variation distance. Then
    $(\Pi_1, \Pi_2, 51/100)$-testing under the parity trace requires sample complexity at least
    $\widetilde \Omega\left(
        \left( \frac{n}{\epsilon} \right)^{4/5} +\frac{\sqrt{n}}{\epsilon^2} \right)$,
    where the $\widetilde \Omega$ notation only hides polylogarithmic factors in $n$.
    Furthermore, this bound holds even if the input distribution
    $\pi$ is guaranteed to have $1/2$ mass uniformly distributed over the zero-valued (\ie even)
    coordinates.
\end{theorem}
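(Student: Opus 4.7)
The plan is to prove the two terms of the lower bound separately and take the maximum. The $\Omega(\sqrt n/\epsilon^2)$ term is straightforward: any parity trace tester can be simulated by a standard distribution tester that reads off parities of the sample identities, so the classical $\Omega(\sqrt n/\epsilon^2)$ lower bound for standard uniformity testing transfers directly to the parity trace model. To ensure the ``furthermore'' clause (that the even coordinates keep mass $1/(2n)$ each), I restrict a Paninski-style hard construction to paired \emph{odd} coordinates, leaving the even coordinates uniform; this constraint is compatible with the usual Paninski argument.

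For the main term $\widetilde \Omega((n/\epsilon)^{4/5})$, I follow the information-theoretic outline from \cref{section:proof-overview}. Let $\bm Z$ be a uniform random bit. The meta-distribution $\cD_0$ is the point mass on $\pi(\mu,\mu)$. The meta-distribution $\cD_1$ is obtained by independently labelling each of the $n/2$ consecutive dominoes $(p_i, q_i, p_{i+1}, q_{i+1})$ as ``ascending'' $(\tfrac{1-\epsilon'}{2n}, \tfrac{1}{2n}, \tfrac{1+\epsilon'}{2n}, \tfrac{1}{2n})$ or ``descending'' $(\tfrac{1+\epsilon'}{2n}, \tfrac{1}{2n}, \tfrac{1-\epsilon'}{2n}, \tfrac{1}{2n})$ with probability $1/2$ each, where $\epsilon' = \Theta(\epsilon)$ is chosen so that $\cD_1$ is supported on distributions that are $\epsilon$-far from uniform in TV. By construction, the even coordinates carry mass exactly $1/(2n)$ under both meta-distributions, satisfying the ``furthermore'' clause. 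Draw $\bm \pi \sim \cD_{\bm Z}$ and a Poissonized sample $\bm S \sim \samp(\bm \pi, m)$; by Le Cam's two-point method, any tester distinguishing $\bm Z = 0$ from $\bm Z = 1$ with probability at least $51/100$ requires $I(\bm Z; \trace(\bm S)) = \Omega(1)$, so it suffices to show that $I(\bm Z; \trace(\bm S)) = o(1)$ whenever $m$ is below $(n/\epsilon)^{4/5}$ by a polylogarithmic factor.

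To bound the mutual information, I first establish an invariance lemma: conditional on the $i$-th domino receiving at most two samples in total, its contribution to the trace is identically distributed under both the ascending and descending labellings. This follows from a short case analysis on how the $\leq 2$ odd-valued samples interleave with the samples of $q_i$ and $q_{i+1}$, whose distributions do not depend on the direction bit. Information therefore accrues only from ``informative'' dominoes with at least $3$ samples; under Poissonization the expected number of such dominoes is $\Theta(m^3/n^2)$. To convert this into a mutual-information bound, I partition the $n/2$ dominoes into disjoint blocks of a carefully tuned size $B$ and apply the chain rule of mutual information along this decomposition. For each block, I condition on the positions of the samples of the block-boundary $q$-coordinates and on the number of $3$+-sample dominoes it contains, then bound the conditional mutual information by a $\chi^2$-divergence calculation applied to the hidden direction bits of the informative dominoes. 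Summing the per-block contributions and taking the expectation over the block-level statistics should yield a bound of the form $I(\bm Z; \trace(\bm S)) \leq \widetilde O(m^5 \epsilon^4/n^4)$, which is $o(1)$ exactly when $m$ is below $(n/\epsilon)^{4/5}$ by a polylogarithmic factor.

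The main obstacle is handling the correlations between adjacent dominoes' contributions to the trace: a single $1$-run can span multiple dominoes whenever the intervening $q_i$-coordinates happen to receive no samples, so the trace does not decompose cleanly into per-domino pieces. The blocking argument together with the $\leq 2$-sample invariance resolves this at the cost of a polylogarithmic inefficiency (hence the $\widetilde \Omega$ rather than $\Omega$ in the bound). Tuning the block size $B$ to balance the chain-rule loss against a tail bound on the number of $3$+-sample dominoes per block is the technical heart of the argument, and a delicate $\chi^2$ calculation that isolates the direction dependence from the identically-distributed boundary statistics is what ultimately produces the $(n/\epsilon)^{4/5}$ exponent.
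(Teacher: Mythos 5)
Your proposal follows essentially the same route as the paper: the $\Omega(\sqrt n/\epsilon^2)$ term comes from a reduction that keeps the even coordinates uniform (the paper embeds a hard instance over $[n]$ into the odd coordinates of $[2n]$, which automatically satisfies the "furthermore" clause), and the $\widetilde\Omega((n/\epsilon)^{4/5})$ term comes from Fano's inequality applied to a random-domino construction, a lemma showing dominoes producing subtraces with at most one $0$ or at most one $1$ contribute no information, a partial-fingerprint / $\chi^2$-type bound on how far the YES/NO trace probabilities can deviate, and the chain rule of mutual information over $\Theta(m)$ contiguous ranges of $\Theta(n/m)$ dominoes each.

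One organizational point worth flagging: the paper does not need to "condition on the positions of the samples of the block-boundary $q$-coordinates." Under Poissonization, the subtrace of each range of dominoes is a well-defined, conditionally independent (given $\bm{Z}$) random variable, and the observed trace is a \emph{deterministic function} (concatenation) of these subtraces. The data-processing inequality then immediately gives $I(\bm{Z}:\bm{\cT}) \le \sum_i I(\bm{Z}:\bm{T}_i)$ with no conditioning argument required; the worry that a $1$-run may span a block boundary is moot because the bound is applied to the tuple of subtraces, not the observed trace. Your conditioning on the number of $3^+$-sample dominoes is likewise not used in the paper's argument (and though it would be harmless, since that count is independent of $\bm{Z}$, it complicates the decomposition). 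The remaining technical content — the fingerprint probability bounds (\cref{prop: collisions unlikely}), the ratio bound $P_1(T)/P_0(T) = 1 \pm O(\epsilon^2|T|^6/r^2)$ for $|T|^4 \lesssim r$ (\cref{lemma: range probs}), and a Poisson tail bound to dismiss overly long subtraces — is exactly the hard part you acknowledge leaving undone.
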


We divide the analysis into two parts: a reduction from the standard uniformity testing model,
which establishes an $\Omega(\sqrt{n}/\epsilon^2)$ lower bound, and a more sophisticated argument
that applies when $\epsilon \ge n^{-1/4}$; fortunately, this is precisely the regime where
$(n/\epsilon)^{4/5} \ge \sqrt{n}/\epsilon^2$. First, the easier bound:

\begin{proposition}
    \label{prop:lb-reduction-from-standard}
    Let $\Pi_1$ contain only the uniform distribution over $[2n]$, and let $\Pi_2$ be the set of
    distributions over $[2n]$ that are $\epsilon$-far from uniform in total variation distance. Then
    $(\Pi_1, \Pi_2, 51/100)$-testing under the parity trace requires sample complexity at least
    $\Omega(\sqrt{n}/\epsilon^2)$.
    Furthermore, this bound holds even if the input distribution
    $\pi$ is guaranteed to have $1/2$ mass uniformly distributed over the zero-valued (\ie even)
    coordinates.
\end{proposition}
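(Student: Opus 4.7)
The plan is to reduce from standard uniformity testing on $[N]$, for which Paninski's $\Omega(\sqrt{N}/\epsilon^2)$ lower bound applies even for any success probability bounded away from $1/2$ (in particular, for $51/100$). I will show that any $(\Pi_1, \Pi_2, 51/100)$-tester in the parity trace model on $[2N]$ with parameter $\epsilon$ and sample complexity $m$ yields a standard uniformity tester on $[N]$ with parameter $2\epsilon$, the same success probability, and sample complexity at most $m$; the lower bound on the parity trace side then follows by setting $n = 2N$.

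The embedding is the natural one: given an unknown distribution $\rho$ over $[N]$, define $\pi = \pi(p,q)$ on $[2N]$ by $p = \rho/2$ (so odd element $2i-1$ carries mass $\rho_i/2$) and $q = \mu$ (so exactly $1/2$ of the mass of $\pi$ sits uniformly on the even coordinates, which is precisely the ``furthermore'' hypothesis in the statement). A short computation gives $\dist_\TV(\pi, \pi(\mu,\mu)) = \tfrac{1}{2}\|p - \mu\|_1 + \tfrac{1}{2}\|q - \mu\|_1 = \tfrac{1}{2}\|p - \mu\|_1 = \tfrac{1}{2}\dist_\TV(\rho, \mathrm{unif}_{[N]})$, so $\pi$ is the uniform distribution on $[2N]$ iff $\rho$ is uniform on $[N]$, and $\rho$ being $2\epsilon$-far in TV corresponds to $\pi$ being $\epsilon$-far.

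To simulate an $m$-sample input to the parity trace tester using only $m$ samples from $\rho$, I would process positions $i = 1, \dots, m$ sequentially: flip an independent fair coin, and on outcome ``odd'' consume the next sample $x$ from $\rho$ and record $2x-1 \in [2N]$, while on ``even'' record an independent uniform draw over the even coordinates of $[2N]$. Since $\|p\|_1 = \|q\|_1 = 1/2$, the resulting multiset is distributed exactly as $\samp(\pi, m)$. The total number of $\rho$-samples consumed is $\Bin(m, 1/2) \le m$ almost surely, so $m$ standard samples suffice with zero failure probability; the simulator then computes $\trace(\bm{S})$ and feeds it to the assumed parity trace tester.

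Composing the simulation with the hypothetical parity trace tester produces a standard uniformity tester with parameter $2\epsilon$ and success probability at least $51/100$, using at most $m$ samples from $\rho$. Setting $n = 2N$ and applying Paninski's bound gives $m \ge \Omega(\sqrt{N}/(2\epsilon)^2) = \Omega(\sqrt{n}/\epsilon^2)$, as desired. There is no real obstacle here: the only thing to verify is that padding an unknown distribution with a known, independent uniform half can only weaken the testing task, so known lower bounds transfer directly. The ``furthermore'' clause is automatic since the reduction only produces instances $\pi$ with $q = \mu$ by construction.
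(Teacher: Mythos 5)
Your proof is correct and matches the paper's proof: both embed a distribution $\rho$ over $[N]$ into $\pi = \pi(\rho/2, \mu)$ on $[2N]$ (so $q$ is the uniform partial distribution, satisfying the ``furthermore'' clause by construction), simulate each of the $m$ trace positions with a fair coin (odd: draw from $\rho$ and map to $2x-1$; even: draw a uniform even coordinate), observe that $\pi$ is uniform iff $\rho$ is and that the TV distance halves, and conclude via Paninski's $\Omega(\sqrt{N}/\epsilon^2)$ bound. The only quibble is the final substitution ``setting $n = 2N$'': matching the statement's convention that the domain is $[2n]$, you should have $n = N$, though since $\sqrt{2N} = \Theta(\sqrt{N})$ this does not affect the asymptotic conclusion.
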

\begin{proof}
We reduce from testing uniformity of a distribution over $[n]$, for which there is a lower bound of
$\Omega(\sqrt n / \epsilon^2)$ \cite{Pan08}.  For input distribution $\pi$ over $[n]$, let $\pi'$ be
the distribution on $[2n]$ defined by setting $\pi'(2i-1) = \pi(i)/2$ for each $i \in [n]$ and
$\pi(2i) = \tfrac{1}{2n}$ for $i \in [n]$, so that $\pi'$ is uniform over the even elements.
Observe that we may simulate a sample from $\pi'$ by sampling $\bm x \sim \pi$ and taking
$2\bm x - 1$ with probability $1/2$, and otherwise taking a uniformly random even element of $[2n]$.
Then the following hold:
\begin{enumerate}
\item If $\pi$ is uniform over $[n]$ then $\pi'$ is uniform over $[2n]$; and
\item If $\pi$ is $\epsilon$-far from uniform then $\pi'$ is $\epsilon/2$-far from uniform (with
respect to TV distance).
\end{enumerate}
Therefore the tester for uniformity may simulate the parity trace tester with parameter
$\epsilon/2$.
\end{proof}

We now give our main technical argument to show the
$\widetilde \Omega\left((n/\epsilon)^{4/5}\right)$ bound for the case $\epsilon \ge n^{-1/4}$.

\subsection{Outline of the Argument}

Our approach, inspired by \cite{DK16}, is to construct distributions over YES and NO inputs
such that, when $\bm{Z}$ is a random variable indicating the YES/NO case and $\bm{\cT}$ is the input
to the algorithm (a parity trace drawn from a YES or NO distribution),
the mutual information $I(\bm{Z} : \bm{\cT})$
is small, so that no algorithm can predict $\bm{Z}$ from $\bm{\cT}$ with good probability.
Concretely, we follow \cite{DK16} and use the following simple consequence of Fano's inequality:

\begin{fact}[Fano's inequality]
    \label{fact:fanos}
    Suppose $\bm{Z}$ is a uniform random bit, $\bm{\cT}$ is a random variable, and there exists
    a function $f$ such that $f(\bm{\cT}) = \bm{Z}$ with probability at least $51\%$.
    Then $I(\bm{Z} : \bm{\cT}) \ge 2 \cdot 10^{-4}$.
\end{fact}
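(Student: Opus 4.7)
The $\Omega(\sqrt n/\epsilon^2)$ portion of the bound follows from \cref{prop:lb-reduction-from-standard}, so the real task is the $\widetilde\Omega((n/\epsilon)^{4/5})$ part, which dominates when $\epsilon \gtrsim n^{-1/4}$. I apply Fano's inequality (\cref{fact:fanos}) to the domino construction sketched in \cref{section:proof-overview}: take $\bm Z$ uniform in $\{0,1\}$; let $\cD_0$ be the point mass on the uniform distribution $\pi(\mu,\mu)$ (built from $n/2$ flat dominoes $(\tfrac{1}{2n})^{\otimes 4}$); and let $\cD_1$ independently choose each of the $n/2$ dominoes uniformly between the shifted-plus and shifted-minus dominoes $(\tfrac{1 \pm c\epsilon}{2n}, \tfrac{1}{2n}, \tfrac{1 \mp c\epsilon}{2n}, \tfrac{1}{2n})$, scaling the absolute constant $c$ so that every realization of $\cD_1$ has TV distance exactly $\epsilon$ from uniform. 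All such realizations place mass $1/(2n)$ on each even coordinate, satisfying the theorem's strengthened hypothesis. A $51/100$-successful algorithm then forces $I(\bm Z : \bm \cT) \geq 2 \cdot 10^{-4}$ by \cref{fact:fanos}, so it suffices to show $I(\bm Z : \bm \cT) = o(1)$ whenever $m$ lies below the claimed threshold.

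The structural heart of the argument is a per-domino cancellation lemma: conditioned on receiving at most $2$ samples, the parity trace contribution of a single domino has the same distribution under the flat, shifted-plus, and shifted-minus dominoes. The proof is a direct enumeration of the (very few) parity trace patterns of length $\leq 2$ achievable inside a domino; since the shifted-plus and shifted-minus perturbations are equal and opposite at the per-position level, the $O(\epsilon)$ corrections to each pattern probability cancel in the average, and the first non-cancelling contribution is of order $\epsilon^2$ and requires two ``odd'' positions to be sampled simultaneously---necessitating at least $3$ samples per domino. Hence the only dominoes that can leak information about $\bm Z$ are those receiving $\geq 3$ samples.

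Under Poissonization, the per-domino sample counts are independent $\Poi(2m/n)$ and, conditional on $\bm Z$, the per-domino parity traces $\bm \cT_i$ are i.i.d.\ across $i$; since positions in domino $i$ precede those of $i+1$, the full trace is the concatenation $\bm \cT = \bm \cT_1 \circ \cdots \circ \bm \cT_{n/2}$, giving $I(\bm Z : \bm \cT) \leq I(\bm Z : (\bm \cT_i)_i)$. Combined with the cancellation lemma and a Poisson tail bound, only an $\approx (m/n)^3$ fraction of dominoes is informative, and each such domino contributes divergence $O(\epsilon^4)$ (the exponent $4$ reflecting the linear-order cancellation). I would then partition the dominoes into appropriately sized blocks $\bm B_k$, apply the chain rule $I(\bm Z : (\bm \cT_i)_i) = \sum_k I(\bm Z : \bm B_k \mid \bm B_{<k})$, and bound each conditional term in terms of the block's random count of $\geq 3$-sample dominoes together with the per-informative-domino divergence.

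The main technical obstacle is carrying out this block-level chain-rule computation tightly enough to obtain the $(n/\epsilon)^{4/5}$ exponent: a naive per-domino $\chi^2$ tensorization yields only $m \geq \widetilde\Omega(n^{2/3}/\epsilon^{4/3})$, which agrees with the target at the boundary $\epsilon = n^{-1/4}$ but is strictly weaker for larger $\epsilon$. Closing this polynomial gap requires the block size and the Poisson-tail bounds on the number of informative dominoes per block to be tuned so that the chain rule captures the sharp tail behaviour rather than treating each domino in isolation; I expect this optimization---balancing the per-block $O(\epsilon^4)$ contributions against the tail mass on atypical blocks---to constitute the bulk of the technical work.
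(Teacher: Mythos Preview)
You have proved the wrong statement. The statement in question is \cref{fact:fanos} (Fano's inequality), which is a one-line information-theoretic fact: if $\Pr[f(\bm\cT)=\bm Z]\ge 0.51$ then the error probability $p_e\le 0.49$, so by the binary Fano inequality $H(\bm Z\mid\bm\cT)\le h(p_e)\le h(0.49)$, whence $I(\bm Z:\bm\cT)=H(\bm Z)-H(\bm Z\mid\bm\cT)\ge \log 2 - h(0.49)\approx 2\cdot 10^{-4}$. The paper does not even prove this fact; it is cited as a standard consequence of Fano's inequality.

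What you have written is instead a sketch of the proof of \cref{thm: lower bound} (or more precisely \cref{claim: YES vs NO}), which \emph{uses} \cref{fact:fanos} as a black box. Your sketch of that argument is broadly on the right track and matches the paper's approach (domino construction, per-domino cancellation for $\le 2$ samples, chain rule over blocks), and you have correctly identified the genuine difficulty: naive per-domino tensorization gives the wrong exponent, and the fix is to group dominoes into blocks of size $\Theta(n/m)$ so that each block receives $\Theta(1)$ samples in expectation. The paper carries this out via a pointwise ratio bound (\cref{lemma: range probs}) showing $P_1(T)/P_0(T)=1\pm O(\epsilon^2|T|^6/r^2)$ for block subtraces $T$, combined with a partial-fingerprint analysis that controls the probability of many $\ge 3$-way collisions within a block. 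But none of this is a proof of \cref{fact:fanos}.
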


Therefore, our goal is to construct ``distributions over distributions''
(hereby called \emph{distributions})
$\cD_0$ (YES case) and $\cD_1$ (NO case), which are supported on distributions $\pi$ over $[2n]$,
satisfying the following: let $m = m(n, \epsilon)$ be the sample complexity of the tester, and
assume the
Poissonized setting (which will be convenient later). Then we want to satisfy the following:
\begin{enumerate}
    \item $\cD_0$ is supported on a single element $\pi(\mu, \mu)$,
        the uniform distribution over $[2n]$;
    \item Every $\pi$ in the support of $\cD_1$ satisfies
        $\dist_\TV(\pi, \pi(\mu,\mu)) \ge \Omega(\epsilon)$; and
    \item Let $\bm{Z} \sim \Ber(1/2)$, and $\bm{\pi} \sim \cD_{\bm{Z}}$.
        Let $\bm{\cT}$ be distributed as follows: draw $\bm{S} \sim \samp(\bm{\pi}, \Poi(m))$
        and let $\bm{\cT} = \trace(\bm{S})$.
        Then when $m = o\left( \left(\frac{n}{\epsilon}\right)^{4/5} \frac{1}{\log^4 n} \right)$,
        we have $I(\bm{Z} : \bm{\cT}) = o(1)$.
\end{enumerate}

We now outline the main ingredients of our proof, and then present the full argument.
For simplicity, we will assume that $n$ is even.

\paragraph{YES and NO distributions.}
Recall that a probability distribution $\pi = \pi(p,q)$ over $[2n]$ consists of partial
distributions $p$ over the 1-valued elements (odd indices) and $q$ over the 0-valued elements
(even indices). We will partition the domain $[2n]$ into $n/2$ consecutive
length-4 intervals, called \emph{dominoes}, such that the $i$-th domino determines the entries
$(p_j, q_j, p_{j+1}, q_{j+1})$, where $j = 2i-1$, and contributes to the trace a string (called
a \emph{subtrace}) distributed as
\[
    1^{\bm{A}_j} 0^{\bm{B}_j} 1^{\bm{A}_{j+1}} 0^{\bm{B}_{j+1}} \,,
\]
where $\bm{A}_k \sim \Poi(m p_k), \bm{B}_k \sim \Poi(m q_k)$ independently for each $k \in [n]$.

We will always set $q = \mu$, \ie the partial distribution over the 0-valued elements is
uniform with total mass $1/2$. In the YES distribution $\cD_0$, $p = \mu$ as well.
In the NO distribution $\cD_1$, we will set either
$(p_j, p_{j+1}) = \left(\frac{1+\epsilon}{2n}, \frac{1-\epsilon}{2n}\right)$ or
$(p_j, p_{j+1}) = \left(\frac{1-\epsilon}{2n}, \frac{1+\epsilon}{2n}\right)$,
with equal probability and independently for each domino. Hence each domino is ``balanced''
and the subtraces produced by different dominoes are independent conditional on $\bm{Z}$.
Moreover, we will show that sampling at most 2 symbols from a domino reveals no information about
$\bm{Z}$, \ie only 3-way or larger collisions are informative.

\paragraph{Partial fingerprints.}
Since each domino is uninformative if at most 2 symbols are drawn from it, we will study the
distributional properties of those dominoes from which a larger number of symbols was
sampled---this is where information about $\bm{Z}$ may be revealed to the algorithm.
Drawing inspiration from standard distribution testing theory, we will study the \emph{partial
fingerprint} over the dominoes, which essentially measures how many information-revealing symbols
were sampled.

Roughly speaking, we will show that the probability of a partial fingerprint decreases exponentially
in the number of information-revealing sample elements (namely, those coming from dominoes
from which 3 or more symbols were drawn), which places an
upper bound on how much the algorithm can learn from these elements. We remark that the lower
bound argument of~\cite{DKN15a} for testing closeness of structured distributions uses the similar
idea of constructing a gadget from which up to two samples are distributed identically under YES
and NO conditions.

\paragraph{Partition of the domain and chain rule of mutual information.}
Given the observations above, one might hope to conclude the argument by 1) upper bounding the
mutual information between $\bm{Z}$ and the subtrace from each domino; and 2) adding up, by
the chain rule of mutual information, this quantity over all the dominoes.
(If random variables $\bm{T}_1, \dotsc, \bm{T}_k$ are independent conditional on $\bm{Z}$,
the chain rule of mutual information implies that
$I(\bm{T}_1, \dotsc, \bm{T}_k : \bm{Z}) \le \sum_{i=1}^k I(\bm{T}_i : \bm{Z})$.)
Unfortunately,
this strategy does not give a good bound; intuitively, it assumes that the
algorithm ``knows'' too much---namely the boundaries of all the dominoes in the trace it sees,
which, in reality, should be very difficult to predict.\footnote{Another interesting attempt is to
condition the analysis on the identities of the 0-valued symbols seen in the trace, and
then consider the distribution of the 1-valued symbols inside each range delimited by the zeroes.
This also seems to fail for a similar reason: by the birthday paradox, when one
draws $n^{4/5}$ samples from $[2n]$, many of the intervals delimited by the 0-valued symbols
will be very small, which also amounts to ``revealing'' too much information.}

As it turns out, one solution is to consider $\Theta(m)$ contiguous ranges, each consisting
of $\Theta(n/m)$ dominoes. Since we sample $\Poi(m)$ symbols in total, the expected number of
symbols sampled from each such range is $\Theta(1)$, which makes the analysis tractable, and adding
up the contribution from each of these ranges to the mutual information gives the desired bound.

\subsection{Construction of YES and NO Distributions}
We now formally define dominoes, subtraces, and the YES and NO distributions.

\begin{definition}[Dominoes]
    For any integer $i \in [n/2]$, let $j = 2i-1$ and $j' = 4i-3$.
    We call the range $\{j',j'+1,j'+2,j'+3\}$ of the domain $[2n]$, along with the probability
    masses of $p$ and $q$ associated with these positions (namely $p_{j}, q_{j}, p_{j+1}, q_{j+1}$)
    the \emph{$i$-th domino}.

    In particular, we categorize dominoes as one of three kinds according to the probability
    masses of its $p$ entries (which will be chosen differently under the YES and NO distributions):
    \begin{enumerate}
        \item \emph{Unbiased}: when $p_{j} = p_{j+1} = \frac{1}{2n}$.
        \item \emph{Left $\epsilon$-biased}: when $p_{j} = \frac{1}{2n}(1 + \epsilon)$ and
            $p_{j+1} = \frac{1}{2n}(1 - \epsilon)$.
        \item \emph{Right $\epsilon$-biased}: when $p_{j} = \frac{1}{2n}(1 - \epsilon)$ and
            $p_{j+1} = \frac{1}{2n}(1 + \epsilon)$.
    \end{enumerate}
\end{definition}

\begin{definition}[Subtraces]
    \label{def:subtrace}

    Given a probability distribution $\pi(p,q)$ over $[2n]$, and for each $i \in [n/2]$, we say
    that the \emph{subtrace produced by the $i$-th domino} is the random binary string
    \[
        \bm{t}_i \define 1^{\bm{A}_j} 0^{\bm{B}_j} 1^{\bm{A}_{j+1}} 0^{\bm{B}_{j+1}} \,,
    \]
    where $j = 2i-1$ and $\bm{A}_k \sim \Poi(m p_k), \bm{B}_k \sim \Poi(m q_k)$ independently.

    Given a contiguous range of $r$ dominoes indexed by $\{i, i+1, \dotsc, i+r-1\}$, the subtrace
    produced by this range of dominoes is
    \[
        \bm{T}_{i,r} \define \bm{t}_i \circ \dotsc \circ \bm{t}_{i+r-1} \,,
    \]
    where $\circ$ stands for concatenation.
\end{definition}

\begin{observation}
    \label{observation:domino-subtrace-length}
    Recall that any domino has $q_{j} = q_{j+1} = 1/2n$, \ie the partial distribution over the
    0-valued elements is uniform with total mass $1/2$. Therefore each domino satisfies
    \[
        p_{j} + q_{j} + p_{j+1} + q_{j+1} = 2/n \,,
    \]
    and therefore the length of the subtrace produced by each domino is independently distributed
    as $\Poi(2m/n)$ regardless of the value of $\bm{Z}$.
\end{observation}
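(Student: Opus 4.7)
The proof is essentially immediate, so the plan is short. First I would verify the displayed identity $p_{j} + q_{j} + p_{j+1} + q_{j+1} = 2/n$ by a direct case analysis over the three domino types. Since we always take $q = \mu$, the even-indexed masses contribute $q_j + q_{j+1} = 1/(2n) + 1/(2n) = 1/n$. For the odd-indexed masses, I would check each case: an unbiased domino has $p_j + p_{j+1} = 1/(2n) + 1/(2n) = 1/n$; a left $\epsilon$-biased domino has $p_j + p_{j+1} = (1+\epsilon)/(2n) + (1-\epsilon)/(2n) = 1/n$; and a right $\epsilon$-biased domino is symmetric. Summing yields $2/n$ regardless of which type the domino is.

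Next I would observe that, by \cref{def:subtrace}, the length of the $i$-th subtrace is $\bm{A}_j + \bm{B}_j + \bm{A}_{j+1} + \bm{B}_{j+1}$, where these four variables are independent Poissons with parameters $mp_j, mq_j, mp_{j+1}, mq_{j+1}$. By the standard fact that independent Poissons sum to a Poisson with the summed rate, this length is distributed as $\Poi\!\left(m(p_j + q_j + p_{j+1} + q_{j+1})\right) = \Poi(2m/n)$, using the identity just verified. Independence of the subtrace lengths across dominoes follows because the underlying $\{\bm{A}_k, \bm{B}_k\}_k$ are mutually independent by construction, and distinct dominoes depend on disjoint index sets.

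Finally, since the parameter $2m/n$ does not depend on the domino type, and since under both $\cD_0$ and $\cD_1$ each domino is one of the three listed types, the distribution of the vector of subtrace lengths is the same regardless of $\bm{Z}$ and of the random biasing choices made under $\cD_1$. There is no real obstacle here; the observation is a bookkeeping step meant to be invoked later when we need to argue that the algorithm cannot extract information about $\bm{Z}$ from the subtrace lengths alone and must instead rely on the internal structure (parity pattern within each subtrace) to distinguish the two meta-distributions.
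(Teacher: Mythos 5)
Your proof is correct and is exactly the intended (and only reasonable) argument; the paper states this as an unproven observation, and your writeup simply fills in the obvious details: the case check that $p_j+p_{j+1}=1/n$ for all three domino types, additivity of independent Poissons, and independence across dominoes from disjoint index sets. The remark that the Poisson parameter is constant across domino types (hence the mixture over random biasing choices is still $\Poi(2m/n)$) correctly handles the "regardless of $\bm{Z}$" clause.
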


Using the definitions above, we can see that the full trace $\cT$ is distributed as
\[
    \bm{\cT} = \bm{t}_1 \circ \dotsm \circ \bm{t}_{n/2} \,.
\]
Alternatively, if we partition the set of all dominoes into contiguous ranges
$\{i_1, \dotsc, i_1 + r_1 - 1\}, \dotsc, \{i_k, \dotsc, i_k + r_k - 1\}$, then
\[
    \bm{\cT} = \bm{T}_{i_1,r_1} \circ \dotsm \circ \bm{T}_{i_k,r_k} \,.
\]

\noindent
We now define the YES and NO distributions.

\begin{definition}[YES and NO distributions]
    \label{def:yes-no-distributions}
    Let $\epsilon > 0$.
    The \emph{{\normalfont YES} distribution} $\cD_0$ is a distribution supported on a single element
    $\pi(\mu,\mu)$, the uniform distribution over $[2n]$. The \emph{{\normalfont NO} distribution}
    $\cD_1$ is a distribution supported on distributions over $[2n]$ drawn as follows: for each
    $i \in [n/2]$, make the $i$-th domino left $\epsilon$-biased or right $\epsilon$-biased, with
    equal probability independently for each domino.

    When we are thinking of the distribution $\pi(p,q)$ as a random variable drawn from these
    distributions, we will accordingly write $\bm{\pi} = \pi(\bm{p}, \bm{q})$.
\end{definition}

\begin{observation}
    Every $\pi(p,q)$ in the support of $\cD_1$ satisfies
    $\dist_\TV(\pi(p,q), \pi(\mu,\mu)) = \epsilon/4$.
\end{observation}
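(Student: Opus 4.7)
The plan is a short direct calculation from the definitions, with no conceptual obstacles. Recall that $\dist_\TV(\pi, \pi') = \tfrac{1}{2}\|\pi - \pi'\|_1$, so the task reduces to computing the $\ell_1$ distance between an arbitrary $\pi(p,q)$ in the support of $\cD_1$ and the uniform distribution $\pi(\mu,\mu)$, both viewed as vectors on $[2n]$.

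First, I would observe that by the construction in \cref{def:yes-no-distributions}, every element in the support of $\cD_1$ has $q = \mu$, so the even-indexed (0-valued) coordinates contribute $0$ to the $\ell_1$ distance. Therefore the entire $\ell_1$ distance comes from the $p$ coordinates, and it suffices to sum the contributions domino by domino over the $n/2$ dominoes.

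Next, I would compute the contribution of a single domino. For the $i$-th domino with indices $j = 2i-1$ and $j+1$, the coordinates $p_j, p_{j+1}$ are either $((1+\epsilon)/(2n), (1-\epsilon)/(2n))$ or $((1-\epsilon)/(2n), (1+\epsilon)/(2n))$, while $\mu_j = \mu_{j+1} = 1/(2n)$. In both cases,
\[
|p_j - \mu_j| + |p_{j+1} - \mu_{j+1}| = \frac{\epsilon}{2n} + \frac{\epsilon}{2n} = \frac{\epsilon}{n} \,,
\]
independent of the bias direction. Summing over the $n/2$ dominoes yields $\|\pi(p,q) - \pi(\mu,\mu)\|_1 = (n/2) \cdot \epsilon/n = \epsilon/2$, so the TV distance equals $\epsilon/4$.

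There is no real obstacle here; the observation holds because the construction makes every single domino contribute the same $\ell_1$ mass to uniform regardless of the random choice of bias direction, which is precisely the reason the NO distribution is well-defined as a far-from-uniform ensemble.
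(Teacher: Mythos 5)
Your proof is correct; the paper states this observation without proof, and your direct domino-by-domino $\ell_1$ computation is exactly the argument implicitly intended. No issues.
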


Therefore, we seek to show the following result:

\begin{claim}
    \label{claim: YES vs NO}
    Let $\bm{Z} \sim \Ber(1/2)$, and let $\bm{\cT}$ be a parity trace of size $\Poi(m)$ sampled from
    distribution $\bm{\pi}$ over $[2n]$, where $\bm{\pi} \sim \cD_{\bm{Z}}$. Then if
    $m = o\left( \left(\frac{n}{\epsilon}\right)^{4/5} \frac{1}{\log^4 n} \right)$, it follows
    that $I(\bm{Z} : \bm{\cT}) = o(1)$.
\end{claim}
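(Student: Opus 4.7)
The plan is to partition the $n/2$ dominoes into $K = \Theta(m)$ contiguous ranges, each consisting of $r = \Theta(n/m)$ dominoes. By \cref{observation:domino-subtrace-length}, each range receives $\Poi(2mr/n) = \Poi(\Theta(1))$ sample points. Writing $\bm T_\ell$ for the subtrace produced by the $\ell$th range, the parity trace decomposes as $\bm\cT = \bm T_1 \circ \cdots \circ \bm T_K$. Since the biases $\sigma_i$ in $\cD_1$ are drawn independently across dominoes, the $\bm T_\ell$'s are conditionally independent given $\bm Z$, so by data processing (noting $\bm\cT$ is a function of $(\bm T_1,\ldots,\bm T_K)$) and the chain rule for mutual information,
\[
I(\bm Z : \bm\cT) \;\le\; \sum_{\ell=1}^{K} I(\bm Z : \bm T_\ell)\,.
\]

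The key structural ingredient is the ``$\le 2$ samples per domino reveal nothing'' principle alluded to in the proof overview: for any single domino, the mixture-averaged subtrace distribution (over $\sigma_i \sim \unif\{\pm 1\}$) on length-$\le 2$ outcomes coincides with the $\cD_0$ subtrace distribution. I would prove this by a short case analysis on the seven possible length-$\le 2$ subtraces, noting that under Poissonization each such subtrace probability depends on the $1$-position means $(\lambda_1, \lambda_3)$ only via $\lambda_1 + \lambda_3$ (which is bias-invariant since $p_j + p_{j+1} = 1/n$) or via linear terms in $\lambda_1, \lambda_3$ individually whose coefficients under $\sigma = \pm 1$ average back to the $\cD_0$ coefficient (e.g., the ``11'' probability is $e^{-2m/n}(\lambda_1+\lambda_3)^2/2$ by direct computation, while the ``10'' probability is a linear combination of $\lambda_1, \lambda_3$ whose $\sigma$-average recovers the $\cD_0$ value). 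Since $\Pr{\Poi(2m/n) \ge 3} = O((m/n)^3)$, the heavy dominoes that actually carry information are rare.

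To bound $I(\bm Z : \bm T_\ell)$ per range I would condition on the profile $(\bm k_1, \ldots, \bm k_r)$ of per-domino sample counts---a profile whose distribution is identical under $\cD_0$ and $\cD_1$ and hence carries no information about $\bm Z$. Given this profile, the per-domino subtraces are conditionally independent given $\bm Z$, and the structural ingredient above lets me discard all dominoes with $\bm k_i \le 2$. For each heavy domino a short $\chi^2$ computation (direct enumeration over length-$k$ subtraces and their multinomial probabilities under $\pm$-biased position masses) gives $I(\bm Z : \bm t_i \mid \bm k_i = k) = O(k^2\epsilon^4) \cdot \ind{k \ge 3}$, which I would then sum over the heavy dominoes in each range.

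The main obstacle---and where I expect the $\log^4 n$ slack in the hypothesis $m = o((n/\epsilon)^{4/5}/\log^4 n)$ to be used---is assembling these per-range bounds tightly enough to reach the $(n/\epsilon)^{4/5}$ rate. A blunt domino-level chain rule across all ranges yields only $I(\bm Z : \bm\cT) = O(m^3\epsilon^4/n^2)$, which is $o(1)$ only for $m \ll n^{2/3}/\epsilon^{4/3}$, weaker than the target whenever $\epsilon \gtrsim n^{-1/4}$. Closing this gap requires exploiting that within a range the tester cannot identify which of the $r$ dominoes any particular sample came from, so the information from a heavy domino is further ``diluted'' by its $r - 1$ uninformative neighbours in the same range. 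I would formalize this via a direct Ingster--Suslina computation of the range-level $\chi^2$---expanding $\chi^2 + 1 = \bE_{\bm\sigma, \bm\sigma'}[\cdot]$ over the independent Rademacher signs of all $r$ dominoes---rather than the naive domino-level chain rule, exploiting that $r = \Theta(n/m)$ is chosen precisely so that each range contains only $\Theta(1)$ heavy dominoes in expectation and the polylogarithmic concentration on this count absorbs the target's $\log^4 n$ factor.
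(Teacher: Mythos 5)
Your proposal takes essentially the same route as the paper: partition the $n/2$ dominoes into $\Theta(m)$ ranges of $\Theta(n/m)$ dominoes each, use data processing plus the conditional chain rule, observe that at most 2 samples per domino reveal nothing (the paper's \cref{lemma:domino-subtrace} and \cref{prop: epsilon squared bound}), and exploit that the tester cannot locate heavy dominoes within a range so the per-range contribution is diluted by a factor $r^{-2}$. You also correctly diagnose that the naive per-domino chain rule (which your profile-conditioning step would re-derive) only gives $O(m^3\epsilon^4/n^2)$ and therefore cannot be the right way to assemble the per-domino facts.

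Where you differ is in the mechanism for the per-range bound: you propose an Ingster--Suslina $\chi^2$ expansion over the independent Rademacher signs, whereas the paper instead bounds the pointwise ratio $P_1(T)/P_0(T) = 1 \pm O(\epsilon^2 |T|^6/r^2)$ directly (\cref{lemma: range probs}) by decomposing the set of count-profiles via $(h,k,s)$-collisions (\cref{def:collisions}, \cref{prop: collisions unlikely}), and then inserts this ratio into a $\chi^2$-type upper bound on mutual information with a separate Poisson-tail argument for long traces (\cref{lemma:range-info}). These are different computations targeting the same quantity: the Ingster--Suslina route would factorize the second moment over the $2r$ signs after handling the sum over unobserved domino boundaries, whereas the paper handles the boundary uncertainty once at the level of the ratio and then never needs the second-moment expansion. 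The Ingster--Suslina computation is plausible but is precisely where the paper does its most delicate bookkeeping (controlling $2^{k}2^{s}$ factors against $(|T|^3/r)^s$ collision probabilities), and your sketch leaves it unverified; in particular you would still need to control the sum over count-profiles inside the mixture integral before the sign-expectation factorizes, which is where the $(h,k,s)$-collision machinery earns its keep.
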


\subsection{Partial Fingerprints and their Probabilities}

In the standard model of distribution testing, the \emph{fingerprint} of a sample is a complete
description of the relevant information for testing symmetric properties of discrete
distributions~\cite{Bat01}. The fingerprint is the ``histogram of the histogram'': for each positive
integer $k$, the number of elements that occurred exactly $k$ times in the sample.

In our construction, we would like to analyze the fingerprint over the \emph{dominoes}, as follows.
For each $i$, let $d_i$ be the number of trace symbols produced from the $i$-th domino.
Then $d$ is our histogram and the corresponding fingerprint counts, for each positive integer $k$,
how many trace symbols came from dominoes satisfying $d_i=k$.

Studying the fingerprint over the dominoes is useful because, as our analysis will show,
each domino is uninformative about $\bm{Z}$ when at most 2 symbols are sampled it, and
when 3 or more symbols are sampled, the amount of information revealed grows according to the number
of symbols. This phenomenon suggests that we consider a \emph{partial fingerprint}, which is
obtained from the fingerprint by collapsing the counts corresponding to all integers $k \ge 3$
into a single category ``$3^+$''. We give the following equivalent formulation, which is more
convenient for our analysis:

\begin{definition}[$(h,k,s)$-collisions]
    \label{def:collisions}
    Consider the process of throwing $b$ identical balls into $r$ bins, each ball at a bin
    selected independently uniformly at random. A vector $d = (d_1, \dotsc, d_r) \in \bZ_{\ge 0}^r$
    such that $d_i$ is the number of balls in the $i$-th bin, for each $i \in [r]$, is called
    an \emph{outcome} of this process.

    We say that outcome $d$ is an \emph{$(h,k,s)$-collision} if, among the $r$ bins,
    exactly $h$ of them contain exactly two balls,
    exactly $k$ of them contain at least three balls and, moreover, the total number of balls in
    those $k$ bins is $k+s$ (in other words, $s$ is the number of ``surplus'' balls in the bins
    with at least 3 balls). We define $\cC_{r,b}(h,k,s)$, the
    \emph{set of $(h,k,s)$-collision outcomes}, as
    \[
        \cC_{r,b}(h, k, s) \define \left\{
            \begin{aligned}
                &(d_1, \dotsc, d_r) \in \bZ_{\ge 0}^r : \\
                &\qquad
                \sum_{i=1}^r d_i = b,
                \sum_{i=1}^r \ind{d_i = 2} = h,
                \sum_{i=1}^r \ind{d_i \ge 3} = k,
                \sum_{i=1}^r \ind{d_i \ge 3} \cdot (d_i-1) = s
            \end{aligned}
            \right\} \,.
    \]
    Note that $\cC_{r,b}(h, k, s) = \emptyset$ whenever $s < 2k$,
    and similarly when $b < 2h$ or $b < k+s$.
\end{definition}

Note that the random vector $\bm{d}$ expressing the outcome of the random process described above
is distributed as $\bm{d} \sim \Multinomial(b, r, (1/r, \dotsc, 1/r))$.
We show that, for appropriate range of values, $(h,k,s)$-collisions are exponentially
unlikely in $h$ and $s$:

\begin{proposition}
    \label{prop: collisions unlikely}
    Let $\bm{d} = (\bm{d}_1, \dotsc, \bm{d}_r) \sim \Multinomial(b, r, (1/r, \dotsc, 1/r))$.
    Then for each $h \ge 0$ and $k, s > 0$, we have
    \[
        \Pr{\bm{d} \in C_{r,b}(h, k, s)} \le
        \left( \frac{(bh)^2}{r} \right)^h
        \left( \frac{(bk)^{3/2}}{r} \right)^s
        \,,
    \]
    where $0^0$ is interpreted as $1$.
\end{proposition}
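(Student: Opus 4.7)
The plan is to compute the probability directly by enumerating the outcomes in $\cC_{r,b}(h,k,s)$ and carefully collecting exponents. Expanding the multinomial,
\[
\Pr[\bm{d} \in \cC_{r,b}(h,k,s)] = \sum_{d \in \cC_{r,b}(h,k,s)} \frac{b!}{d_1! \cdots d_r!} \, r^{-b},
\]
I would parameterize each outcome by: (i) the $h$ bins with exactly $2$ balls (contributing a factor $\binom{r}{h}$); (ii) the $k$ bins with at least $3$ balls (contributing $\binom{r-h}{k}$); (iii) an ordered tuple of counts $(c_1, \ldots, c_k)$ with $c_i \geq 3$ and $\sum_i c_i = k+s$; and (iv) the $b - 2h - k - s$ of the remaining bins that receive a single ball (contributing $\binom{r-h-k}{b-2h-k-s}$). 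For any such outcome the multinomial coefficient factors as $b!\,/\,(2^h \, c_1! \cdots c_k!)$, since all other bins have factorial $1$.

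Next, I would apply the standard bounds $\binom{n}{i} \leq n^i/i!$ and $b!/(b-m)! \leq b^m$, and observe that
\[
\sum_{\substack{c_1+\cdots+c_k = k+s \\ c_i \geq 3}} \binom{k+s}{c_1, \ldots, c_k} \leq k^{k+s},
\]
since the left-hand side counts a subset of the functions $[k+s] \to [k]$. Assembling the pieces and cancelling $r^{-b}$ against the $r^{h+k+b-2h-k-s} = r^{b-h-s}$ contributed by the binomial bounds, I expect to arrive at
\[
\Pr[\bm{d} \in \cC_{r,b}(h,k,s)] \leq \frac{b^{2h}}{2^h \, h! \, r^h} \cdot \frac{(bk)^{k+s}}{k! \, (k+s)! \, r^s}.
\]

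Finally, I would match this to the target bound factor by factor. For the first factor, $2^h \, h! \, h^{2h} \geq 1$ for every $h \geq 0$ (under the convention $0^0 = 1$), so the first factor is at most $(bh)^{2h}/r^h = ((bh)^2/r)^h$. The essential trick for the second factor is the constraint that $\cC_{r,b}(h,k,s) = \emptyset$ unless $s \geq 2k$ (each of the $k$ large bins contributes a surplus of at least $2$). Assuming the nontrivial case $s \geq 2k$, we have $k - s/2 \leq 0$, so
\[
(bk)^{k+s} = (bk)^{3s/2} \cdot (bk)^{k - s/2} \leq (bk)^{3s/2}
\]
whenever $bk \geq 1$. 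This yields the second factor as $((bk)^{3/2}/r)^s$, completing the bound. The main obstacle is the bookkeeping of exponents; the critical structural input is that the $3/2$ exponent in the target comes precisely from the minimum bin occupancy of $3$, which forces $s \geq 2k$ and hence allows us to absorb the $(bk)^{k-s/2}$ factor into $1$.
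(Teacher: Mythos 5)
Your proof is correct, and it arrives at the same bound via a genuinely parallel but slightly different bookkeeping. The paper enumerates ball-to-bin assignments: it sums over a choice of bins $I_2$ (for 2-collisions), bins $I_3$ (for $3^+$-collisions), and balls $J_2, J_3$ assigned to them, then replaces the probability that those balls form collisions by the cruder probability that they merely land inside those bins, yielding
\[
\binom{r}{h}\binom{r}{k}\binom{b}{2h}\binom{b}{k+s}\left(\frac{h}{r}\right)^{2h}\left(\frac{k}{r}\right)^{k+s}.
\]
You instead expand the multinomial pmf exactly, partition the outcomes $d \in \cC_{r,b}(h,k,s)$ by their support structure, and compute the probability directly before applying $\binom{n}{i} \le n^i/i!$, $b!/(b-m)! \le b^m$, and your multinomial-sum bound $\sum\binom{k+s}{c_1,\dotsc,c_k} \le k^{k+s}$. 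This is slightly sharper: you retain the factorials $2^h\,h!$ and $k!\,(k+s)!$ in the denominator, which you then discard to match the stated form. Both routes rest on the same two essentials: the crude combinatorial upper bound on the number of collision configurations, and the structural fact that $\cC_{r,b}(h,k,s) = \emptyset$ unless $s \ge 2k$, which lets you absorb $(bk)^{k-s/2} \le 1$ (provided $bk \ge 1$, which holds whenever the set is nonempty since $k,s>0$ force $b \ge k+s \ge 1$). Your intermediate bound $\frac{b^{2h}}{2^h h!\, r^h}\cdot\frac{(bk)^{k+s}}{k!\,(k+s)!\,r^s}$ is correct, and the final simplifications (including $2^h\,h!\,h^{2h}\ge 1$ under $0^0=1$) check out.
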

\begin{proof}
    We can upper bound this probability by ranging over which bins will contain exactly two
    balls, if any---call these ``$2$-collisions''---and three or more balls---call
    these ``$3^+$-collisions''---, as well as which balls fall into those bins,
    and then roughly upper bounding the combinatorial quantities determining each.
    Let notation $\binom{[n]}{m}$ denote the set of subsets of $[n]$ of size $m$. We have
    \begin{align*}
        &\Pr{\bm{d} \in C_{r,b}(h, k, s)} \\
        &= \sum_{I_2 \in \binom{[r]}{h}} \sum_{I_3 \in \binom{[r] \setminus I_2}{k}}
            \sum_{J_2 \in \binom{[b]}{2h}} \sum_{J_3 \in \binom{[b] \setminus J_2}{k+s}}
            \left[
                \begin{array}{l}
                    \Pr{\text{balls $J_2$ form $2$-collisions on bins $I_2$}} \\
                    \cdot \Pr{\text{balls $J_3$ form $3^+$-collisions on bins $I_3$}} \\
                    \cdot \Pr{\text{balls $[b] \setminus (J_2 \cup J_3)$ fall on bins
                        $[r] \setminus (I_2 \cup I_3)$ without collisions}}
                \end{array}
                \right]
                    \\
        &\le \sum_{I_2 \in \binom{[r]}{h}} \sum_{I_3 \in \binom{[r] \setminus I_2}{k}}
            \sum_{J_2 \in \binom{[b]}{2h}} \sum_{J_3 \in \binom{[b] \setminus J_2}{k+s}}
                \Pr{\text{balls $J_2$ fall within bins $I_2$}}
                \Pr{\text{balls $J_3$ fall within bins $I_3$}} \\
        &\le \binom{r}{h} \binom{r}{k} \binom{b}{2h} \binom{b}{k+s}
                \left(\frac{h}{r}\right)^{2h} \left(\frac{k}{r}\right)^{k+s} \\
        &\le r^{h+k} b^{2h+k+s} h^{2h} k^{k+s} r^{-2h-k-s} \\
        &\le (bh)^{2h} r^{-h} \cdot (bk)^{\frac{3}{2}s} r^{-s}
            \hspace{25em} \text{(Since $k \le s/2$)} \\
        &= \left( \frac{(bh)^2}{r} \right)^h \left( \frac{(bk)^{3/2}}{r} \right)^s \,,
    \end{align*}
    where we used $k \le s/2$ which holds unless the probability is zero, in which case the
    conclusion follows trivially.
\end{proof}

We will also need the following simple ``birthday problem'' bound:

\begin{proposition}
    \label{prop: collision free}
    Let $\bm{d} = (\bm{d}_1, \dotsc, \bm{d}_r) \sim \Multinomial(b, r, (1/r, \dotsc, 1/r))$.
    Then the probability of seeing no collisions satisfies
    \[
        \Pr{\bm{d} \in \cC_{r,b}(0,0,0)} \ge 1 - \frac{b^2}{r} \,.
    \]
\end{proposition}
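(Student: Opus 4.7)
The plan is to apply a simple union bound over pairs of balls, which is the standard birthday-paradox style argument. First I would observe that the event $\{\bm d \in \cC_{r,b}(0,0,0)\}$ is precisely the event that every bin contains at most one ball, i.e., no two balls share a bin. Its complement is the union, over unordered pairs $\{i,j\} \subseteq [b]$ of distinct balls, of the events $E_{ij}$ that balls $i$ and $j$ land in the same bin.

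Next, I would note that since each ball is placed independently and uniformly among the $r$ bins, for any fixed pair $\{i,j\}$ we have $\Pr{E_{ij}} = 1/r$ (condition on the location of ball $i$; ball $j$ matches with probability $1/r$). Applying the union bound gives
\[
    \Pr{\bm d \notin \cC_{r,b}(0,0,0)}
    \leq \sum_{\{i,j\}} \Pr{E_{ij}}
    = \binom{b}{2} \cdot \frac{1}{r}
    \leq \frac{b^2}{2r}
    \leq \frac{b^2}{r} \,,
\]
from which the claim follows by taking complements.

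There is essentially no obstacle here; the bound is weaker than the textbook birthday estimate $\binom{b}{2}/r$ by a factor of $2$, so the slack is comfortable. The only minor care point is to confirm that the multinomial sampling really is equivalent to $b$ independent uniform placements (it is, by definition), so that the marginal collision probability for each pair is exactly $1/r$ and the union bound applies without any independence assumptions on the $E_{ij}$.
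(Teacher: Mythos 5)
Your proof is correct. It takes a mildly different route from the paper: you union-bound over pairs of balls, getting $\Pr{\bm d \notin \cC_{r,b}(0,0,0)} \le \binom{b}{2}/r$, whereas the paper writes the no-collision probability exactly as the falling-factorial ratio $\frac{r(r-1)\dotsm(r-b+1)}{r^b}$ and then lower-bounds it by $\left(1-\frac{b}{r}\right)^b \ge 1-\frac{b^2}{r}$. Both are standard birthday-paradox estimates; your version actually gives the slightly tighter constant $b^2/2r$ before relaxing, and arguably needs fewer inequalities. Either is fine here since only the crude $1-b^2/r$ bound is used downstream.
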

\begin{proof}
    This probability is
    \[
        \Pr{\bm{d} \in \cC_{r,b}(0,0,0)}
        = \frac{r \cdot (r - 1) \dotsm (r - b + 1)}{r^b}
        \ge \left( \frac{r - b}{r} \right)^b
        = \left( 1 - \frac{b}{r} \right)^b
        \ge 1 - \frac{b^2}{r} \,.
    \]
\end{proof}

\subsection{YES and NO Dominoes Behave Similarly}

We now show that each subtrace that is not too long must have similar probabilities of being
produced by a domino under the YES and NO distributions.

We first need the following simple bound, which informally encapsulates the property that the
``information'' revealed by a domino decays as $O(\epsilon^2)$ even though its relative probability
masses are $\Theta(\epsilon)$-biased.

\begin{proposition}
    \label{prop: epsilon squared bound}
    For all non-negative integers $x$ and $y$, and all $0 < \epsilon < 1$,
    \[
        \frac{1}{2}(1+\epsilon)^x(1-\epsilon)^y + \frac{1}{2}(1-\epsilon)^x(1+\epsilon)^y
        = 1 \pm \epsilon^2 \cdot 2^{x+y} \,.
    \]
\end{proposition}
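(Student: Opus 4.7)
The plan is a direct computation via the binomial theorem. First, I would expand
\[
  (1+\epsilon)^x(1-\epsilon)^y = \sum_{i=0}^{x}\sum_{j=0}^{y} \binom{x}{i}\binom{y}{j}(-1)^j \epsilon^{i+j},
\]
and symmetrically for $(1-\epsilon)^x(1+\epsilon)^y$, which swaps the sign factor to $(-1)^i$. Averaging the two expansions, each coefficient $\binom{x}{i}\binom{y}{j}\epsilon^{i+j}$ is weighted by $\tfrac{1}{2}\bigl((-1)^i + (-1)^j\bigr)$, which equals $+1$ when $i,j$ are both even, $-1$ when both odd, and $0$ when they have opposite parity.

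Next I would peel off the $(i,j)=(0,0)$ term, which contributes exactly $1$, so the task reduces to bounding the absolute value of the remaining sum, in which every term has $i+j \geq 1$ and $i \equiv j \pmod 2$; the parity constraint forces $i+j \geq 2$ on every surviving term. Using $|\tfrac{1}{2}((-1)^i+(-1)^j)| \leq 1$ and the fact that $\epsilon < 1$ implies $\epsilon^{i+j} \leq \epsilon^2$ whenever $i+j \geq 2$, I bound the tail by
\[
  \epsilon^2 \sum_{\substack{0 \leq i \leq x \\ 0 \leq j \leq y}} \binom{x}{i}\binom{y}{j} \;=\; \epsilon^2 \cdot 2^x \cdot 2^y \;=\; \epsilon^2 \cdot 2^{x+y},
\]
which is exactly the claimed error term.

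There is essentially no obstacle here; the only mild subtlety is remembering that the $(0,0)$ term survives the averaging (giving the leading $1$) and that the parity-based cancellation is what eliminates all linear-in-$\epsilon$ contributions, so that the error genuinely starts at order $\epsilon^2$ rather than $\epsilon$. The factor $2^{x+y}$ is unavoidable in this crude bound, but it is exactly the bound needed by the rest of the lower bound argument, so no sharper estimate is required.
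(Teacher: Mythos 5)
Your proof is correct and follows essentially the same route as the paper's: both expand via the binomial theorem, observe that the averaging factor $\tfrac{1}{2}((-1)^i+(-1)^j)$ kills every term with $i+j=1$ by parity, isolate the $(0,0)$ term as the leading $1$, and bound the remainder by $\epsilon^2 \cdot 2^{x+y}$ using $\epsilon^{i+j}\leq\epsilon^2$ once $i+j\geq 2$. No meaningful divergence from the paper's argument.
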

\begin{proof}
    By the binomial theorem, we have
    \begin{align*}
        &\frac{1}{2}(1+\epsilon)^x(1-\epsilon)^y + \frac{1}{2}(1-\epsilon)^x(1+\epsilon)^y \\
        &\qquad = \frac{1}{2}
            \left(\sum_{i=0}^x \binom{x}{i} \epsilon^i\right)
            \left(\sum_{j=0}^y \binom{y}{j} (-1)^j \epsilon^j\right)
         +
            \frac{1}{2}
            \left(\sum_{i=0}^x \binom{x}{i} (-1)^i \epsilon^i\right)
            \left(\sum_{j=0}^y \binom{y}{j} \epsilon^j\right) \\
        &\qquad = \sum_{i=0}^x \sum_{j=0}^y \binom{x}{i} \binom{y}{j} \epsilon^{i+j}
            \left( \frac{(-1)^i + (-1)^j}{2} \right) \\
        &\qquad = 1 \pm \epsilon^2
            \left(\sum_{i=0}^x \binom{x}{i}\right) \left(\sum_{j=0}^y \binom{y}{j}\right) \\
        &\qquad = 1 \pm \epsilon^2 \cdot 2^{x+y} \,,
    \end{align*}
    where the third (in)equality holds because when $i=j=0$ the entire inner expression is equal to
    $1$, when $i+j=1$ it is zero since $i$ and $j$ have different parities, and otherwise we have
    $\epsilon^{i+j} \le \epsilon^2$.
\end{proof}

\begin{lemma}
    \label{lemma:domino-subtrace}
    Let $\epsilon \in (0,1)$.
    Let $\bm{t} = \bm{t}_i$ denote the random variable corresponding to the subtrace produced by
    a domino. Then for any binary string $t$, we have the following two cases:
    \begin{enumerate}
        \item If $t$ contains at least one ``0'' symbol and two ``1'' symbols, then
            \[
                \Pruc{}{\bm{t}=t}{|\bm{t}|=|t|, \bm{Z}=1} =
                \left( \Pruc{}{\bm{t}=t}{|\bm{t}|=|t|, \bm{Z}=0} \right)
                \left( 1 \pm \epsilon^2 \cdot 2^{|t|-1} \right) \,.
            \]
        \item Otherwise,
            \[
                \Pruc{}{\bm{t}=t}{|\bm{t}|=|t|, \bm{Z}=1}
                = \Pruc{}{\bm{t}=t}{|\bm{t}|=|t|, \bm{Z}=0} \,.
            \]
    \end{enumerate}
    Note that the probabilities are taken over the choice of distribution
    $\bm{\pi} \sim \cD_{\bm{Z}}$ and random vectors $\bm{A}, \bm{B}$ corresponding to the
    (Poissonized) trace from $\bm{\pi}$.
\end{lemma}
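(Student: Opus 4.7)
The plan is to expand the conditional probability $\Pruc{}{\bm{t}=t}{|\bm t|=|t|,\bm Z}$ as a sum over all compatible decompositions $(a_1,b_1,a_2,b_2)$ of the string $t$ into the pattern $1^{a_1}0^{b_1}1^{a_2}0^{b_2}$, exploit the Poisson factorization, and then apply \cref{prop: epsilon squared bound} term-by-term. First, by \cref{observation:domino-subtrace-length}, $|\bm t|\sim\Poi(2m/n)$ regardless of $\bm Z$, so conditioning on $|\bm t|=|t|$ only scales both sides by the same factor, and it suffices to compare $\Pr{\bm t=t\mid \bm Z}$ directly.

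Setting $\lambda\define m/(2n)$ and letting $\ell=|t|$, independence of the four Poissons gives
\[
  \Pr{\bm t=t\mid \bm Z=0} \;=\; \lambda^{\ell}e^{-4\lambda}\sum_{(a_1,b_1,a_2,b_2)\to t}\frac{1}{a_1!\,b_1!\,a_2!\,b_2!}\,,
\]
while averaging over the left/right orientation in the NO case yields
\[
  \Pr{\bm t=t\mid \bm Z=1} \;=\; \lambda^{\ell}e^{-4\lambda}\sum_{(a_1,b_1,a_2,b_2)\to t}\frac{1}{a_1!\,b_1!\,a_2!\,b_2!}\cdot\frac{(1+\epsilon)^{a_1}(1-\epsilon)^{a_2}+(1-\epsilon)^{a_1}(1+\epsilon)^{a_2}}{2}\,,
\]
using that $\lambda(1+\epsilon)+\lambda(1-\epsilon)=2\lambda$ so the exponential prefactor does not change. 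The per-decomposition ``correction factor'' is precisely what \cref{prop: epsilon squared bound} controls, giving $1\pm\epsilon^2\cdot 2^{a_1+a_2}$.

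For Case 1, observe that $a_1+a_2$ equals the number of $1$'s in $t$ and $b_1+b_2$ equals the number of $0$'s, independently of the decomposition. The hypothesis (at least one $0$ and at least two $1$'s) gives $a_1+a_2\ge 2$ and $a_1+a_2\le \ell-1$, so the correction factor lies in $[1-\epsilon^2 2^{\ell-1},\,1+\epsilon^2 2^{\ell-1}]$ uniformly across all decompositions. Pulling this uniform bound out of the sum yields the desired multiplicative comparison $(1\pm\epsilon^2\cdot 2^{|t|-1})$.

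For Case 2 I split into two subcases. If $t$ has at most one $1$, then $a_1+a_2\le 1$ in every decomposition, and one checks by inspection that $\tfrac12[(1+\epsilon)^{a_1}(1-\epsilon)^{a_2}+(1-\epsilon)^{a_1}(1+\epsilon)^{a_2}]=1$ exactly, so the two probabilities coincide term-by-term. If $t$ has no $0$'s, then every decomposition has $b_1=b_2=0$ and $a_1+a_2=\ell$, and the binomial theorem gives
\[
  \sum_{a_1+a_2=\ell}\frac{(1+\epsilon)^{a_1}(1-\epsilon)^{a_2}}{a_1!\,a_2!}=\frac{((1+\epsilon)+(1-\epsilon))^{\ell}}{\ell!}=\frac{2^{\ell}}{\ell!}\,,
\]
which equals $\sum_{a_1+a_2=\ell}\frac{1}{a_1!a_2!}$, so after symmetrizing over the orientation the NO sum matches the YES sum exactly. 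The main (minor) obstacle is just bookkeeping the decomposition set carefully in edge cases (empty runs, $t$ starting or ending with the ``wrong'' symbol), but each decomposition contributes only through $(a_1+a_2,b_1+b_2)$ in the correction factor, so the casework collapses cleanly.
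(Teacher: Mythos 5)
Your proof is correct and follows essentially the same route as the paper's: decompose $t$ into tuples $(a_1,b_1,a_2,b_2)$, factor the joint probability, and apply \cref{prop: epsilon squared bound} term-by-term using $a_1+a_2 \le |t|-1$. The small differences are cosmetic but arguably cleaner. The paper conditions on $|\bm t|=|t|$ and passes to a multinomial distribution over four bins; you instead observe that \cref{observation:domino-subtrace-length} makes the conditioning a common normalizing factor under both values of $\bm Z$, which lets you work with the Poisson densities directly and avoid the multinomial bookkeeping. For Case~2 with at most one ``1'', the paper explicitly enumerates the decompositions of $0^x 1 0^y$ and checks cancellation by hand, whereas you note the cleaner fact that $a_1+a_2$ equals the number of ``1''s in $t$ regardless of decomposition, so the correction factor is exactly $1$ whenever $a_1+a_2\le 1$; this collapses the casework. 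For the no-``0''s subcase, the paper uses that the total weight on the ``1'' bins is $1/2$ under both $\bm Z$ values, while you sum via the binomial theorem to get $2^\ell/\ell!$ on both sides; both are valid one-line computations. One minor note: the hypothesis $a_1+a_2\ge 2$ that you extract from ``at least two 1's'' is not actually used in the Case~1 bound (only $a_1+a_2\le |t|-1$ matters), so that remark is superfluous, but it does no harm.
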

\begin{proof}
    Note that if $t$ is not in the regular language $1^*0^*1^*0^*$, all probabilities above are
    zero (since such trace cannot be produced by a domino) and the claim holds.
    Suppose $t$ has form $1^*0^*1^*0^*$.

    Without loss of generality, say $\bm{t}$ is the trace produced by the first domino, so that
    $\bm{t} = 1^{\bm{A}_1} 0^{\bm{B}_1} 1^{\bm{A}_2} 0^{\bm{B}_2}$ for
    $\bm{A}_1 \sim \Poi(m \bm{p}_1), \bm{A}_2 \sim \Poi(m \bm{p}_2),
    \bm{B}_1 \sim \Poi(m \bm{q}_1), \bm{B}_2 \sim \Poi(m \bm{q}_2)$, where $\bm{p},\bm{q}$ are
    the partial distributions of $\bm{\pi}$ and therefore $\bm{q}_1 = \bm{q}_2 = 1/2n$.
    Note that $|\bm{t}| = \bm{A}_1 + \bm{B}_1 + \bm{A}_2 + \bm{B}_2$.

    By standard arguments, once we condition on $|\bm{t}|= |t|$, the random variables
    $\bm{A}_i$ and $\bm{B}_i$ are distributed according to a multinomial distribution given by
    $|t|$ trials and $4$ bins with probabilities corresponding to the relative weights of the
    probability masses on each position:
    \begin{gather*}
        \Pruc{}{\bm{A}_1=a_1, \bm{B}_1=b_1, \bm{A}_2=a_2, \bm{B}_2=b_2}
            {\bm{A}_1 + \bm{B}_1 + \bm{A}_2 + \bm{B}_2 = |t|} \\
        = \Pr{(\bm{X}_1, \bm{Y}_1, \bm{X}_2, \bm{Y}_2) = (a_1, b_1, a_2, b_2)} \,, \\
        (\bm{X}_1, \bm{Y}_1, \bm{X}_2, \bm{Y}_2) \sim \Multinomial\left(
        |t|, \left(\frac{n}{2} \bm{p}_1, \frac{1}{4}, \frac{n}{2} \bm{p}_2, \frac{1}{4} \right)
        \right)
        \,.
    \end{gather*}
    For convenience,
    let $\bm{p}'_1 \define \frac{n}{2} \bm{p}_1$ and $\bm{p}'_2 \define \frac{n}{2} \bm{p}_2$.
    Note that, in the YES case ($\bm{Z}=0$), we have $\bm{p}'_1 = \bm{p}'_2 = \frac{1}{4}$,
    and in the NO case ($\bm{Z}=1$), we have one of the following with equal probability:
    \begin{enumerate}
        \item $\bm{p}'_1 = \frac{1}{4}(1 + \epsilon)$ and $\bm{p}'_2 = \frac{1}{4}(1 - \epsilon)$;
        \item $\bm{p}'_1 = \frac{1}{4}(1 - \epsilon)$ and $\bm{p}'_2 = \frac{1}{4}(1 + \epsilon)$.
    \end{enumerate}
    We now prove the claim. We start with the second case, which is simpler.
    First, suppose $t$ contains no ``0'' symbols. Then since the total weight of the 1-valued
    positions is $\bm{p}'_1 + \bm{p}'_2 = 1/2$ regardless of the value of $\bm{Z}$, we have
    \[
        \Pruc{}{\bm{t}=t}{|\bm{t}|=|t|,\bm{Z}=1} = \Pruc{}{\bm{t}=t}{|\bm{t}|=|t|,\bm{Z}=0}
        = \left(\frac{1}{2}\right)^{|t|} \,.
    \]
    On the other hand, suppose $t$ contains at most one ``1'' symbol. If it contains no ``1'' symbols,
    the same logic applies, so we can assume that $t$ contains exactly one ``1'' symbol.
    We may write the probability of $t$ as the sum of the probabilities of all
    $(a_1, b_1, a_2, b_2)$ that produce $t$ as a binary string, \ie
    $t = 1^{a_1} 0^{b_1} 1^{a_2} 0^{b_2}$. Let $\cS$ denote the set of such tuples that produce $t$.
    Using the multinomial formulation, this yields, for any possible values
    $p'_1, p'_2$ of $\bm{p}'_1, \bm{p}'_2$,
    \[
        \Pruc{}{\bm{t}=t}{|\bm{t}|=|t|, \bm{p}'_1 = p'_1, \bm{p}'_2 = p'_2}
        = \sum_{(a_1, b_1, a_2, b_2) \in \cS}
            \frac{|t|!}{a_1! b_1! a_2! b_2!}
            \left(p'_1\right)^{a_1}
            \left(\frac{1}{4}\right)^{b_1}
            \left(p'_2\right)^{a_2}
            \left(\frac{1}{4}\right)^{b_2} \,.
    \]
    Since $t$ contains exactly one ``1'' symbol---say $t = 0^x 1 0^y$ for some $x, y \ge 0$---,
    we have that $1^{a_1} 0^{b_1} 1^{a_2} 0^{b_2} = t$ if and only if
    \begin{enumerate}
        \item $a_1 = 0$, $a_2 = 1$, $b_1 = x$, and $b_2 = y$; or, mutually exclusively,
        \item $a_1 = 1$, $a_2 = 0$, $b_1 + b_2 = y$, and $x = 0$.
    \end{enumerate}
    Thus we may write the probability of $t = 0^x 1 0^y$ as
    \[
        \Pruc{}{\bm{t}=t}{|\bm{t}|=|t|, \bm{p}'_1 = p'_1, \bm{p}'_2 = p'_2}
        = \frac{(x+y+1)!}{x!y!} \left(\frac{1}{4}\right)^{x+y} p'_2
            + \ind{x=0} \frac{(y+1)!}{y!} \left(\frac{1}{2}\right)^y p'_1 \,.
    \]
    We verify that, when $\bm{Z}=1$, the cases where $\bm{p}'_1$ and $\bm{p}'_2$ are positively
    and negatively biased cancel out and we obtain the same probability as when $\bm{Z}=0$:
    \begin{align*}
        &\Pruc{}{\bm{t}=t}{|\bm{t}|=|t|, \bm{Z}=1} \\
        &\qquad =
            \frac{1}{2} \Pruc{}{\bm{t}=t}{|\bm{t}|=|t|,
                \bm{p}'_1 = \frac{1}{4}(1-\epsilon), \bm{p}'_2 = \frac{1}{4}(1+\epsilon)} \\
            &\qquad \qquad +
                \frac{1}{2} \Pruc{}{\bm{t}=t}{|\bm{t}|=|t|,
                    \bm{p}'_1 = \frac{1}{4}(1+\epsilon), \bm{p}'_2 = \frac{1}{4}(1-\epsilon)} \\
        &\qquad =
            \frac{1}{2} \Bigg[
                \frac{(x+y+1)!}{x!y!} \left(\frac{1}{4}\right)^{x+y} \frac{1}{4}(1+\epsilon)
                + \ind{x=0} \frac{(y+1)!}{y!} \left(\frac{1}{2}\right)^y \frac{1}{4}(1-\epsilon)
                \Bigg] \\
            &\qquad \qquad +
            \frac{1}{2} \Bigg[
                \frac{(x+y+1)!}{x!y!} \left(\frac{1}{4}\right)^{x+y} \frac{1}{4}(1-\epsilon)
                    + \ind{x=0} \frac{(y+1)!}{y!} \left(\frac{1}{2}\right)^y \frac{1}{4}(1+\epsilon)
            \Bigg] \\
        &\qquad =
            \frac{(x+y+1)!}{x!y!} \left(\frac{1}{4}\right)^{x+y} \left(\frac{1}{4}\right)
                + \ind{x=0} \frac{(y+1)!}{y!} \left(\frac{1}{2}\right)^y \left(\frac{1}{4}\right) \\
        &\qquad = \Pruc{}{\bm{t}=t}{|\bm{t}|=|t|, \bm{Z}=0} \,,
    \end{align*}
    completing the proof of the second case of the claim.

    Let us return to the first case. Suppose $t$ contains at least one ``0'' symbol and two ``1''
    symbols; say $t = 1^x 0^z 1^y 0^w$ with $x+y \ge 2$ and $z+w \ge 1$.
    We start with the general multinomial formulation again: let $\cS$ be the set of tuples
    $(a_1, b_1, a_2, b_2)$ satisfying $t = 1^{a_1} 0^{b_1} 1^{a_2} 0^{b_2}$. We have
    \[
        \Pruc{}{\bm{t}=t}{|\bm{t}|=|t|, \bm{p}'_1 = p'_1, \bm{p}'_2 = p'_2}
        = \sum_{(a_1, b_1, a_2, b_2) \in \cS}
            \frac{|t|!}{a_1! b_1! a_2! b_2!}
            \left(\frac{1}{4}\right)^{b_1}
            \left(\frac{1}{4}\right)^{b_2}
            \cdot
            \left(p'_1\right)^{a_1}
            \left(p'_2\right)^{a_2} \,.
    \]
    Define $F_{a_1,b_1,a_2,b_2} \define 
            \frac{|t|!}{a_1! b_1! a_2! b_2!}
            \left(\frac{1}{4}\right)^{a_1}
            \left(\frac{1}{4}\right)^{b_1}
            \left(\frac{1}{4}\right)^{a_2}
            \left(\frac{1}{4}\right)^{b_2}
            $,
    so that
    \[
        \Pruc{}{\bm{t}=t}{|\bm{t}|=|t|, \bm{Z}=0}
        = \sum_{(a_1, b_1, a_2, b_2) \in \cS} F_{a_1,b_1,a_2,b_2}
    \]
    and
    \begin{align*}
        \Pruc{}{\bm{t}=t}{|\bm{t}|=|t|, \bm{Z}=1}
        &=
            &\frac{1}{2}
                \sum_{(a_1, b_1, a_2, b_2) \in \cS}
                \frac{|t|!}{a_1! b_1! a_2! b_2!}
                \left(\frac{1}{4}\right)^{b_1}
                \left(\frac{1}{4}\right)^{b_2}
                \cdot
                \left(\frac{1}{4}(1 + \epsilon)\right)^{a_1}
                \left(\frac{1}{4}(1 - \epsilon)\right)^{a_2} \\
            &\quad + &\frac{1}{2}
                \sum_{(a_1, b_1, a_2, b_2) \in \cS}
                \frac{|t|!}{a_1! b_1! a_2! b_2!}
                \left(\frac{1}{4}\right)^{b_1}
                \left(\frac{1}{4}\right)^{b_2}
                \cdot
                \left(\frac{1}{4}(1 - \epsilon)\right)^{a_1}
                \left(\frac{1}{4}(1 + \epsilon)\right)^{a_2} \\
        &=
            &\sum_{(a_1,b_1,a_2,b_2) \in \cS}
            F_{a_1,b_1,a_2,b_2} \left(
                \frac{1}{2} (1+\epsilon)^{a_1} (1-\epsilon)^{a_2} +
                \frac{1}{2} (1-\epsilon)^{a_1} (1+\epsilon)^{a_2}
                \right)
            \,.
    \end{align*}
    Thus it suffices to show that for every $(a_1, b_1, a_2, b_2) \in \cS$,
    \[
        \frac{1}{2} (1+\epsilon)^{a_1} (1-\epsilon)^{a_2}
            + \frac{1}{2} (1-\epsilon)^{a_1} (1+\epsilon)^{a_2}
        \eqquestion 1 \pm \epsilon^2 \cdot 2^{|t|-1} \,,
    \]
    and since $a_1 + a_2 \le |t|-1$ (because $t$ contains at least one ``0'' symbol), this follows
    from \cref{prop: epsilon squared bound}, completing the proof.
\end{proof}

\subsection{Information Bound}

Recall that we wish to upper bound the mutual information $I(\bm{Z} : \bm{\cT})$, which we will do,
using the chain rule of mutual information, by summing over the quantities
$I(\bm{Z} : \bm{T}_{i_j,r_j})$ where each $\{i_j, i_j+1, \dotsc, i_j+r_j-1\}$ is a contiguous range
of dominoes (forming a partition) and $\bm{T}_{i_j,r_j}$ is the subtrace produced by such a range.
For simplicity, let $\bm{T} = \bm{T}_{i_j,r_j}$ denote one such variable.
Let $P_0$ and $P_1$ be the conditional distributions of $\bm{T}$ under each value of $\bm{Z}$:
for each binary string $T$ and $z \in \zo$, $P_z(T) \define \Pruc{}{\bm{T}=T}{\bm{Z}=z}$.

The following fact states that, if the pointwise ratios between $P_1$ and $P_0$ are close to $1$,
then the mutual information $I(\bm{Z} : \bm{T})$ is small.
Since the full argument will require a refined version
that also handles low-probability outcomes (for which the ratio bound may fail), we state this
fact without proof for intuition only. The formulation is inspired by \cite{DK16}.

\begin{fact}
    \label{fact: discrepancy to information}
    Let $P_0$ and $P_1$ be discrete probability distributions over some domain $\cX$.
    Let $\xi > 0$ and suppose that, for every $T \in \cX$, it holds that
    \[
        \frac{P_1(T)}{P_0(T)} = 1 \pm \xi \,.
    \]
    Then we have
    \[
        \chi^2(P_1 \| P_0) \le \xi^2 \,,
    \]
    where
    $\chi^2(P_1 \| P_0) = \Exu{\bm{T} \sim P_0}{\left(\frac{P_1(\bm{T})}{P_0(\bm{T})} - 1\right)^2}$
    is the Pearson $\chi^2$-divergence.
    Moreover, if $\bm{Z}$ is a uniform random bit and $\bm{T}$ is distributed according to
    $P_{\bm{Z}}$, then
    \[
        I(\bm{Z} : \bm{T}) \le \frac{1}{2} \chi^2(P_1 \| P_0) \le \frac{1}{2} \xi^2 \,.
    \]
\end{fact}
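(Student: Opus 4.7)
The plan is to prove the two inequalities separately. The first is essentially immediate from the ratio assumption, and the second is a standard relationship between mutual information and $\chi^2$-divergence that does not actually require the ratio assumption.

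For the first inequality, I would just expand the definition of $\chi^2$ and apply the pointwise bound $|P_1(T)/P_0(T) - 1| \le \xi$ directly:
\[
    \chi^2(P_1 \| P_0) = \sum_{T \in \cX} P_0(T) \left(\frac{P_1(T)}{P_0(T)} - 1\right)^2
    \le \xi^2 \sum_{T \in \cX} P_0(T) = \xi^2 \,.
\]

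For the second inequality $I(\bm{Z} : \bm{T}) \le \tfrac{1}{2} \chi^2(P_1 \| P_0)$, I would introduce the marginal $M \define \tfrac{1}{2}(P_0 + P_1)$, which is the distribution of $\bm{T}$ since $\bm{Z} \sim \Ber(1/2)$, and use the identity
\[
    I(\bm{Z} : \bm{T}) = \tfrac{1}{2} D_{\mathrm{KL}}(P_0 \| M) + \tfrac{1}{2} D_{\mathrm{KL}}(P_1 \| M) \,,
\]
which follows from expanding $I(\bm{Z}:\bm{T}) = H(\bm{T}) - H(\bm{T} \mid \bm{Z})$. Combining this with the standard inequality $D_{\mathrm{KL}}(P \| Q) \le \chi^2(P \| Q)$ (a consequence of $\ln x \le x - 1$) reduces the task to bounding $\chi^2(P_0 \| M) + \chi^2(P_1 \| M)$ by $\chi^2(P_1 \| P_0)$.

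Then I would use the symmetry $P_0 - M = (P_0 - P_1)/2 = -(P_1 - M)$ to compute
\[
    \chi^2(P_0 \| M) + \chi^2(P_1 \| M)
    = \sum_{T \in \cX} \frac{(P_0(T)-M(T))^2 + (P_1(T)-M(T))^2}{M(T)}
    = \sum_{T \in \cX} \frac{(P_0(T) - P_1(T))^2}{2 M(T)} \,,
\]
and finish using the pointwise bound $M(T) \ge \tfrac{1}{2} P_0(T)$:
\[
    \sum_{T \in \cX} \frac{(P_0(T) - P_1(T))^2}{2 M(T)}
    \le \sum_{T \in \cX} \frac{(P_0(T) - P_1(T))^2}{P_0(T)}
    = \chi^2(P_1 \| P_0) \,.
\]
Chaining these bounds gives $I(\bm{Z} : \bm{T}) \le \tfrac{1}{2} \chi^2(P_1 \| P_0) \le \tfrac{1}{2} \xi^2$, which is the final conclusion. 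No step in the plan presents a real obstacle; the only computational observation worth checking is that the symmetry of $P_0, P_1$ around $M$ makes the sum of $\chi^2$-divergences collapse cleanly, and the rest is routine invocation of standard divergence inequalities.
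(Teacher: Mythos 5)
The paper states this fact ``without proof for intuition only,'' so there is no proof of record to compare against; your argument correctly supplies one. Moreover, your route is essentially the same computation the paper carries out when it actually needs this bound, in the proof of \cref{lemma:range-info}: there the authors write $I(\bm{Z}:\bm{T}) = \tfrac12 D_{\mathrm{KL}}(P_0\|Q) + \tfrac12 D_{\mathrm{KL}}(P_1\|Q)$ with $Q=(P_0+P_1)/2$, apply $\log x\le x-1$, and simplify to $\tfrac12\sum_T (P_1(T)-P_0(T))^2/(P_0(T)+P_1(T))$, which is then bounded by $\tfrac12\sum_T (P_1(T)-P_0(T))^2/P_0(T) = \tfrac12\chi^2(P_1\|P_0)$. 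You present the same chain, just packaged through the intermediate inequality $D_{\mathrm{KL}}(P\|Q)\le\chi^2(P\|Q)$ and the observation that $\chi^2(P_0\|M)+\chi^2(P_1\|M)$ collapses by the symmetry $P_0-M=-(P_1-M)$. All the steps check out; the one implicit requirement worth flagging is that $P_0(T)>0$ on the support (equivalently $P_0$ dominates $P_1$), which is guaranteed by the hypothesis $P_1(T)/P_0(T)=1\pm\xi$ and is needed both for $\chi^2(P_1\|P_0)$ to be finite and for the step $M(T)\ge P_0(T)/2$ to give a useful bound.
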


Therefore, our task is to upper bound $\abs*{\frac{P_1(T)}{P_0(T)} - 1}$.
The following result accomplishes this for any range of $r$ dominoes and string $T$ that is not
too long compared to $r$. Later, we will see that strings $T$ that are too long are so unlikely
that they have little effect on the mutual information.

\begin{lemma}
    \label{lemma: range probs}
    There exists a universal constant $c > 0$ such that the following holds.
    Let $\epsilon \in (0,1)$, and let $r \ge 1$ be an integer.
    Let $\bm{T}$ denote the subtrace produced by a range of $r$ consecutive dominoes, and let
    $P_z$ be the probability distribution of $\bm{T}$ conditional on $\bm{Z}=z$ as above.
    Then for any binary string $T$ satisfying $|T|^4 \le r / 100$, we have
    \[
        \frac{P_1(T)}{P_0(T)} = 1 \pm c \cdot \epsilon^2 \cdot \frac{|T|^6}{r^2} \,.
    \]
\end{lemma}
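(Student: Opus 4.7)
The plan is to condition on the random decomposition $\bm{d} = (\bm{d}_1, \ldots, \bm{d}_r)$ of the trace into per-domino subtraces. By \cref{observation:domino-subtrace-length} and Poissonization, the $\bm{d}_i$ are i.i.d.\ $\Poi(2m/n)$ regardless of $\bm{Z}$, so writing $T_i(d)$ for the substring of $T$ assigned to bin $i$ under the decomposition $d$ (when the total length matches $|T|$),
\[
P_z(T) = \sum_{d : \sum d_i = |T|} \Pr[\bm{d} = d] \prod_{i=1}^{r} Q_z(T_i(d) \mid d_i),
\qquad
Q_z(t \mid d) \define \Pruc{}{\bm{t}_i = t}{|\bm{t}_i| = d,\ \bm{Z} = z}.
\]
The factor $\Pr[\bm{d}=d]$ is identical under $z=0$ and $z=1$, so the discrepancy between $P_1$ and $P_0$ is entirely captured by the per-bin ratios $R_i(d) \define Q_1(T_i(d)\mid d_i)/Q_0(T_i(d)\mid d_i)$.

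The first substantive step is to use \cref{lemma:domino-subtrace} to classify these ratios. Whenever $d_i \le 2$ (which includes all $d_i \in \{0,1,2\}$), the subtrace cannot contain both two $1$'s and a $0$, so $R_i(d) = 1$. For $d_i \ge 3$, we have $|R_i(d) - 1| \le \epsilon^2 \cdot 2^{d_i - 1}$. Thus $R(d,T) \define \prod_i R_i(d) = 1$ unless $d$ is an $(h,k,s)$-collision (in the sense of \cref{def:collisions}) with $k \ge 1$, and in that case $|R(d,T)-1| \le \prod_{i : d_i \ge 3}(1 + \epsilon^2 2^{d_i-1}) - 1$, which is bounded by $2\epsilon^2 \sum_{i : d_i\ge 3} 2^{d_i-1}$ in the regime where the product is close to $1$.

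Next I would lower bound $P_0(T)$ by restricting to collision-free decompositions $d^*$ (each $d^*_i \in \{0,1\}$): there are $\binom{r}{|T|}$ such $d^*$, each contributing $\Pr[\bm{d}=d^*] Q_0(T \mid d^*) = e^{-2mr/n}(m/n)^{|T|}$ (since $Q_0(\cdot \mid 0) = 1$ and $Q_0(\cdot \mid 1) = 1/2$). Using $|T|^4 \le r/100$ to approximate $\binom{r}{|T|} \ge (1-o(1)) r^{|T|}/|T|!$, this yields $P_0(T) \gtrsim r^{|T|}(m/n)^{|T|} e^{-2mr/n}/|T|!$. For the numerator, I would partition $\sum_{d : k \ge 1} \Pr[\bm{d}=d] Q_0(T\mid d) |R(d,T)-1|$ by the $(h,k,s)$ collision class, bound the per-class probability via \cref{prop: collisions unlikely} (which gives a factor $((|T|h)^2/r)^h \cdot ((|T|k)^{3/2}/r)^s$), and bound $Q_0(T\mid d)$ trivially by $\prod_i Q_0(T_i \mid d_i)$. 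The condition $|T|^4 \le r/100$ ensures $(|T|k)^{3/2}/r$ is small enough for the resulting geometric-type series over $(h,k,s)$ with $k \ge 1$ to converge, the leading term being $(0,1,2)$ (a single bin receiving exactly three symbols, contributing $\epsilon^2 \cdot |T|^3/r^2$ times combinatorial overhead).

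The main obstacle is carefully managing the combinatorial bookkeeping at two points: (i) converting the per-decomposition probabilities $\Pr[\bm{d}=d] Q_0(T\mid d)$ into the multinomial probabilities bounded by \cref{prop: collisions unlikely}, which requires tracking the ratio $\binom{r}{|T|-t}/\binom{r}{|T|} \lesssim (|T|/r)^t$ coming from the number of ways to place the ``heavy'' bins among the $r$ dominoes; and (ii) ensuring that the looseness incurred by the trivial $Q_0(T_i \mid d_i) \le 1$ bound is absorbed into the final $|T|^6/r^2$ rate. I expect the cleanest route is to show that the sum of $\Pr[\bm{d}=d] Q_0(T\mid d) \cdot 2^{d_i-1}$ over bins $i$ with $d_i \ge 3$ and over all $d$ compatible with $T$ factors through the Poisson moment $\Ex{2^{\bm{d}_i-1}\mathds{1}[\bm{d}_i \ge 3]}$, which decays polynomially in $m/n$, and to then combine this with the binomial-ratio overhead and the multinomial bounds to yield the claimed $c \epsilon^2 |T|^6/r^2$ after a telescoping geometric estimate.
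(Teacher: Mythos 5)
Your proposal follows exactly the same structural path as the paper's proof: decompose $\bm{T}$ across dominoes, use \cref{lemma:domino-subtrace} to classify per-bin ratios, frame the heavy decompositions via $(h,k,s)$-collisions (\cref{def:collisions}), lower-bound $P_0(T)$ via collision-free outcomes, and control the discrepancy with \cref{prop: collisions unlikely}. The framework is right, but there are two gaps, one of which would cause the proof to fail as written.

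The major gap is your plan to bound $Q_0(T_i \mid d_i) \le 1$ across the board (your obstacle~(ii)). For a decomposition $d$ with collision pattern $(h,k,s)$, there are $|T| - (2h+k+s)$ singleton bins, each of which actually contributes a factor of exactly $1/2$ to $Q_0(T \mid d)$. If you throw those factors away, the numerator terms scale like $\Pr[\bm{d}\in\cC_{r,|T|}(h,k,s)]$, while the denominator is $\gtrsim \Pr[\bm{d}\in\cC_{r,|T|}(0,0,0)]\cdot(1/2)^{|T|}$, and the ratio picks up a spurious $2^{|T|}$ factor. Since $|T|$ can be as large as $(r/100)^{1/4}$ under the hypothesis, this factor is not absorbed by the polynomial-in-$|T|/r$ decay coming from the multinomial bound (nor does the Poisson moment $\Ex{2^{\bm{d}_i-1}\ind{\bm{d}_i\ge 3}}$ help, since that addresses a single bin's tail, not the accumulated $(1/2)^{|T|}$ from the singletons). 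The correct move, used in the paper, is to keep the singleton $1/2$'s: this gives $Q_0(T\mid d)\le (1/2)^{|T|}\cdot 2^{2h+k+s}$, so the $(1/2)^{|T|}$ cancels against the denominator and only the $2^{2h+k+s}$ survives, which is then dominated by the $\left((|T|h)^2/r\right)^h\left((|T|k)^{3/2}/r\right)^s$ decay.

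A secondary gap is the claim $|R(d,T)-1|\le 2\epsilon^2\sum_{i:d_i\ge 3}2^{d_i-1}$ ``in the regime where the product is close to~$1$.'' This linearization requires $\sum_i\epsilon^2 2^{d_i-1}\lesssim 1$, which can fail for the decompositions being summed over (the lemma makes no restriction of the form $|T|=O(\log n)$; that restriction appears only later, in \cref{lemma:range-info}). The paper avoids this by expanding $\prod_i(1\pm z\ind{d_i\ge 3}\epsilon^2 2^{d_i-1})$ exactly into $1\pm z\epsilon^2(2^k-1)2^s$, a bound valid for every $d$, which then plugs into the $(h,k,s)$ bookkeeping cleanly. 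You should replace the linearization with this exact expansion (or restrict and justify the regime).

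Finally, a stylistic note: you keep the Poisson factors $\Pr[\bm{d}=d]$ in place rather than factoring out $\frac{e^{-r\lambda}(r\lambda)^{|T|}}{|T|!}$ to expose the multinomial probability $\frac{|T|!}{d_1!\cdots d_r!}r^{-|T|}$. Both are equivalent, but factoring out makes \cref{prop: collisions unlikely} (stated for the multinomial) directly applicable, which removes the need to separately estimate the $\binom{r}{|T|-t}/\binom{r}{|T|}$ binomial-ratio overhead you mention in~(i).
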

\begin{proof}
    Let us denote by $g_z(t)$ the probability, as in
    \cref{lemma:domino-subtrace}, that any given domino produces subtrace $t$ conditional on
    $\bm{Z}=z$ and the length of the subtrace:
    $g_z(t) \define \Pruc{}{\bm{t}=t}{|\bm{t}|=|t|, \bm{Z}=z}$ where $\bm{t}$ is the random variable
    corresponding the subtrace from the domino under consideration. Recall that, by definition of
    the dominoes, the probabilities $g_z(t)$ are the same for every domino.

    For each $i \in [r]$, let $\bm{D}_i$ be the random variable corresponding to the length of the
    subtrace produced by the $i$-th domino in the range. As noted in
    \cref{observation:domino-subtrace-length}, the additive property of Poisson random variables
    and the construction of dominoes implies that $\bm{D}_i \sim \Poi(2m/n)$ for all $i$
    independently. For convenience, let $\lambda \define 2m/n$.

    For $T$ to be the trace produced by the range under consideration, each domino in this range
    must produce a subtrace
    in such a way that 1) the total length of all subtraces is $|T|$; and 2) the subtrace from each
    domino is equal to the appropriate substring of $T$. Toward this end, let $\cM$ denote the set
    of vectors of subtrace lengths that add up to $|T|$:
    \[
        \cM \define \left\{ d \in \bZ_{\ge 0}^r : \sum_{i=1}^r d_i = |T| \right\} \,.
    \]
    Recalling \cref{def:collisions}, we may write $\cM$ as the disjoint union
    \[
        \cM = \biguplus_{h, k, s \ge 0} \cC_{r,|T|}(h,k,s) \,.
    \]
    We will use the following notation to refer to substrings of $T$. For indices
    $1 \le a, b \le |T|$, let $T[a..b]$ denote the substring of $T$ between indices $a$ and $b$
    (inclusive) when $a \le b$, and set $T[a..b] \define \varnothing$ when $a>b$.
    For a histogram $d \in \cM$ and for each $i \in [r]$, set
    \[
        T(d, i) \define T\left[ \left(1+\sum_{j=1}^{i-1} d_j\right)
            \, .. \, \left(1+\sum_{j=1}^{i-1} d_j \right) + d_i - 1 \right] \,,
    \]
    Then for all $d \in \cM$, $T$ is equal to the concatenation $T(d,1) \circ \dotsm \circ T(d,r)$.

    We now have, for each $z \in \{0,1\}$,
    \begin{align*}
        P_z(T)
        &= \sum_{d \in \cM} \prod_{i=1}^r \Pr{\bm{D}_i = d_i} g_z(T(d, i))
        = \sum_{d \in \cM} \prod_{i=1}^r
            \frac{e^{-\lambda} \lambda^{d_i}}{d_i!}
            g_z(T(d, i)) \\
        &= \frac{e^{-r\lambda} (r\lambda)^{|T|}}{|T|!} \sum_{d \in \cM} \left[
            \left( \frac{|T|!}{d_1! \dotsm d_r!} \left(\frac{1}{r}\right)^{|T|} \right)
            \left( \prod_{i=1}^r g_z(T(d, i)) \right)
            \right] \,. \\
    \end{align*}
    Notice that the first factor inside the summation is a multinomial probability: letting
    $\bm{d} = (\bm{d}_1, \dotsc, \bm{d}_r) \sim \Multinomial(|T|, r, (1/r, \dotsc, 1/r))$,
    the first factor is precisely $\Pr{\bm{d} = d}$. This is the ``balls and bins'' process
    introduced in \cref{def:collisions}.\footnote{We have essentially ``factored out'' the
    Poissonization for this part of the analysis.}

    As for the second factor, for each $d$ and $i$ define
    \[
        \delta(d, i) \define \frac{g_1(T(d, i))}{g_0(T(d, i))} - 1 \,,
    \]
    so that $g_1(T(d, i)) = (1 + \delta(d, i)) g_0(T(d, i))$.
    By slightly loosening \cref{lemma:domino-subtrace} for simplicity, we may bound each
    $\delta(d, i)$ as follows:
    \begin{align*}
        d_i \le 2 &\implies \delta(d, i) = 0 \,, \text{ and} \\
        d_i \ge 3 &\implies \abs*{\delta(d, i)} \le \epsilon^2 \cdot 2^{d_i-1} \,.
    \end{align*}
    We then obtain
    \begin{align*}
        P_z(T)
        &= \frac{e^{-r\lambda} (r\lambda)^{|T|}}{|T|!} \sum_{d \in \cM}
            \Pr{\bm{d} = d} \prod_{i=1}^r g_0(T(d, i)) (1 + z\delta(d, i)) \\
        &= \frac{e^{-r\lambda} (r\lambda)^{|T|}}{|T|!}
            \sum_{h, k, s \ge 0} \sum_{d \in \cC_{r, |T|}(h, k, s)}
                \Pr{\bm{d}=d}
                \left( \prod_{i=1}^r g_0(T(d, i)) \right)
                \left( \prod_{i=1}^r \left( 1 \pm z \ind{d_i \ge 3} \epsilon^2 2^{d_i-1} \right) \right)
                \,.
    \end{align*}
    For any $d \in \cC_{r, |T|}(h, k, s)$, the term
    $\prod_{i=1}^r \left( 1 \pm z \ind{d_i \ge 3} \epsilon^2 2^{d_i-1} \right)$ is a product in
    which all but $k$ terms are simply $1$, since only $k$ entries $d_i$ may satisfy
    $d_i \ge 3$ by definition of $(h,k,s)$-collision.
    Therefore, upon expanding this product, we obtain $2^k$ terms; one of them is $1$, and
    $2^k-1$ of them each contain at least one $\epsilon^2$ factor and a $2^x$ factor for some
    $x \le \sum_{i=1}^r \ind{d_i \ge 3} (d_i-1)$.
    Thus, using the identity $\sum_{i=1}^r \ind{d_i \ge 3}(d_i - 1) = s$ from the definition of
    $(h,k,s)$-collision, we obtain
    \[
        \prod_{i=1}^r \left( 1 \pm z \ind{d_i \ge 3} \epsilon^2 2^{d_i-1} \right)
        = 1 \pm z \epsilon^2 (2^k - 1) 2^s \,.
    \]
    Therefore we can write
    \begin{align*}
        P_z(T)
        &= \frac{e^{-r\lambda} (r\lambda)^{|T|}}{|T|!}
            \sum_{h, k, s \ge 0}
                \left( 1 \pm z \epsilon^2 (2^k-1) 2^s \right)
                \sum_{d \in \cC_{r, |T|}(h, k, s)}
                    \Pr{\bm{d}=d}
                    \left( \prod_{i=1}^r g_0(T(d, i)) \right) \,.
    \end{align*}
    Recall that we want to show that
    $P_1(T) / P_0(T) = 1 \pm \epsilon^2 \cdot O\left(|T|^6 / r^2\right)$. Substituting the
    formulation above, we obtain
    \begin{align*}
        \frac{P_1(T)}{P_0(T)}
        &=
        \frac{
            \sum_{h, k, s \ge 0}
                \left( 1 \pm \epsilon^2 (2^k-1) 2^s \right)
                \sum_{d \in \cC_{r, |T|}(h, k, s)}
                    \Pr{\bm{d}=d}
                    \left( \prod_{i=1}^r g_0(T(d, i)) \right)
                }{
            \sum_{h, k, s \ge 0}
                \sum_{d \in \cC_{r, |T|}(h, k, s)}
                    \Pr{\bm{d}=d}
                    \left( \prod_{i=1}^r g_0(T(d, i)) \right)
                } \\
        &=
        1 \pm
        \sum_{h, k, s \ge 0} \left[ \epsilon^2 (2^k-1) 2^s \cdot \frac{
                \sum_{d \in \cC_{r, |T|}(h, k, s)}
                    \Pr{\bm{d}=d}
                    \left( \prod_{i=1}^r g_0(T(d, i)) \right)
                }{
            \sum_{d \in \cM}
                    \Pr{\bm{d}=d}
                    \left( \prod_{i=1}^r g_0(T(d, i)) \right)
                } \right] \,.
    \end{align*}
    Therefore, our goal is to show the following:
    \begin{equation}
        \label{eq:lb-desired-ineq}
        \sum_{h \ge 0, s \ge 2k \ge 2} \epsilon^2 (2^k-1) 2^s \cdot
            \frac{\sum_{d \in \cC_{r,|T|}(h,k,s)} \Pr{\bm{d}=d} \prod_{i=1}^r g_0(T(d, i))}{
                \sum_{d \in \cM} \Pr{\bm{d}=d} \prod_{i=1}^r g_0(T(d, i))}
        \lequestion c \cdot \epsilon^2 \cdot \frac{|T|^6}{r^2} \,,
    \end{equation}
    where we used the fact that $\epsilon^2 (2^k-1) 2^s = 0$ when $k=0$ to limit the range
    of $k$ in the summation to $k \ge 1$, and then used the fact that $s \ge 2k$ for any
    nonempty $\cC_{r,|T|}(h,k,s)$ to limit the range of $s$.

    First, note that for any single-character binary string $t$ (\ie strings ``0'' and ``1''),
    we have $g_0(t) = 1/2$. We may hence lower bound the denominator of \eqref{eq:lb-desired-ineq}
    by counting only those $d \in \cM$ that have no collisions at all
    (\ie $d \in \cC_{r,|T|}(0,0,0)$), whose total probability is lower bounded by
    \cref{prop: collision free}:
    \[
        \sum_{d \in \cM} \Pr{\bm{d}=d} \prod_{i=1}^r g_0(T(d, i))
        \ge \sum_{d \in \cC_{r,|T|}(0,0,0)} \Pr{\bm{d}=d} \left(\frac{1}{2}\right)^{|T|}
        \ge \left(\frac{1}{2}\right)^{|T|} \left( 1 - \frac{|T|^2}{r} \right)
        > \left(\frac{1}{2}\right)^{|T|+1} \,,
    \]
    where we used the assumption that $|T|^4 \le r/100$ in the last inequality.

    We proceed similarly to upper bound the numerator of \eqref{eq:lb-desired-ineq}. For any
    $d \in \cC_{r,|T|}(h,k,s)$, the terms in $\prod_{i=1}^r g_0(T(d,i))$ satisfying $d_i=1$
    are again equal to $1/2$, while all other terms are trivially at most $1$.
    Moreover, by definition of $(h,k,s)$-collisions we have
    $\sum_{i=1}^r \ind{d_i=1} = |T| - (2h + k + s)$.
    Hence, for any $h \ge 0$ and $s \ge 2k \ge 2$, we have
    \[
        \prod_{i=1}^r g_0(T(d, i))
        \le \left(\frac{1}{2}\right)^{|T|-(2h+k+s)}
        \le \left(\frac{1}{2}\right)^{|T|} \cdot 2^{2h + \frac{3}{2}s} \,,
    \]
    and therefore, using \cref{prop: collisions unlikely},
    \begin{align*}
        \sum_{d \in \cC_{r,|T|}(h,k,s)} \Pr{\bm{d}=d} \prod_{i=1}^r g_0(T(d, i))
        &\le \left(\frac{1}{2}\right)^{|T|} \cdot 2^{2h + \frac{3}{2}s}
            \cdot \sum_{d \in \cC_{r,|T|}(h,k,s)} \Pr{\bm{d}=d} \\
        &\le \left(\frac{1}{2}\right)^{|T|}
            \cdot \left( \frac{(2h|T|)^2}{r} \right)^h
            \cdot \left( \frac{(2k|T|)^{3/2}}{r} \right)^s \,.
    \end{align*}
    Combining the results above, along with the observation that $h, k, s \le |T|$ for any nonzero
    terms in the numerator of \eqref{eq:lb-desired-ineq}, and using the notation $\lesssim$ to
    absorb constant factors, we obtain
    \begin{align*}
        &\sum_{h \ge 0, s \ge 2k \ge 2} (2^k-1) 2^s \cdot
            \frac{\sum_{d \in \cC_{r,|T|}(h,k,s)} \Pr{\bm{d}=d} \prod_{i=1}^r g_0(T(d, i))}{
                \sum_{d \in \cM} \Pr{\bm{d}=d} \prod_{i=1}^r g_0(T(d, i))} \\
        &\qquad < \sum_{\substack{|T| \ge h \ge 0,\\ |T| \ge s \ge 2k \ge 2}}
        2^{\frac{3}{2} s} \cdot
        \frac{
            \left(\frac{1}{2}\right)^{|T|}
            \cdot \left( \frac{(2h|T|)^2}{r} \right)^h
            \cdot \left( \frac{(2k|T|)^{3/2}}{r} \right)^s }{
            \left( 1/2 \right)^{|T|+1}} \\
        &\qquad = 2 \cdot \sum_{\substack{|T| \ge h \ge 0,\\ |T| \ge s \ge 2k \ge 2}}
            \left( \frac{(2h|T|)^2}{r} \right)^h
            \cdot \left( \frac{(4k|T|)^{3/2}}{r} \right)^s \\
        &\qquad \le 2 \cdot \sum_{\substack{|T| \ge h \ge 0,\\ |T| \ge s \ge 2k \ge 2}}
            \left( \frac{4 |T|^4}{r} \right)^h
            \left( \frac{8 |T|^3}{r} \right)^s \\
        &\qquad < 2 \cdot
            \sum_{h \ge 0} \left( \frac{4 |T|^4}{r} \right)^h \left[
            \sum_{s \ge 2} \sum_{1 \le k \le s/2} \cdot \left( \frac{8 |T|^3}{r} \right)^s \right] \\
        &\qquad \le 2 \cdot
            \sum_{h \ge 0} \left( \frac{4 |T|^4}{r} \right)^h \left[
            \sum_{s \ge 2} \frac{s}{2} \cdot \left( \frac{8 |T|^3}{r} \right)^s \right] \\
        &\qquad \lesssim
            \left( \frac{|T|^3}{r} \right)^2
            \sum_{h \ge 0} \left( \frac{4 |T|^4}{r} \right)^h \\
        &\qquad \lesssim
            \frac{|T|^6}{r^2} \,,
    \end{align*}
    where we used the assumption that $|T|^4 \le r/100$ to establish the convergence of the two
    geometric series,\footnote{Namely, we used the formulas
    $\sum_{i \ge 0} x^i = \frac{1}{1-x}$ and $\sum_{i \ge 2} i x^i = \frac{(2-x)x^2}{(1-x)^2}$
    for $|x| < 1$.} thus concluding the proof.
\end{proof}

We now use this result to upper bound the mutual information between $\bm{Z}$ and the subtrace
produced by a range of $\Theta(n/m)$ consecutive dominoes.

\begin{lemma}
    \label{lemma:range-info}
    Suppose $n^{-1/4} < \epsilon < 1$, and let
    $\frac{n}{m} \le r \le 2\frac{n}{m}$ be a positive integer.
    Suppose $\bm{Z} \sim \Ber(1/2)$ and
    let $\bm{T}$ denote the subtrace generated by a range of $r$ consecutive dominoes,
    according to distribution $\cD_{\bm{Z}}$.
    Suppose $m = o\left(\left(\frac{n}{\epsilon}\right)^{4/5} \frac{1}{\log^4 n}\right)$.
    Then as $n \to \infty$ we have
    \[
        I(\bm{Z} : \bm{T}) = O\left(\frac{\epsilon^4 \log^{12} n}{r^4}\right) \,.
    \]
\end{lemma}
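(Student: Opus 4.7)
The approach is to condition on the length $\bm L \define |\bm T|$, apply the pointwise ratio bound of \cref{lemma: range probs} to obtain a strong mutual-information bound on short traces, and use a trivial entropy bound on the (very unlikely) long traces. The hypothesis that \cref{lemma: range probs} is already available does essentially all the real work; the present lemma is mostly a packaging step.

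\textbf{Independence of the trace length.} By \cref{observation:domino-subtrace-length}, each of the $r$ dominoes independently contributes a subtrace of length $\Poi(2m/n)$ regardless of the value of $\bm Z$. Setting $\lambda \define 2m/n$, we therefore have $\bm L \sim \Poi(r\lambda)$ with $r\lambda \in [2,4]$, and $\bm L$ is independent of $\bm Z$. Since $\bm L$ is a deterministic function of $\bm T$,
\[
I(\bm Z : \bm T) \;=\; I(\bm Z : \bm T,\bm L) \;=\; I(\bm Z : \bm L) + I(\bm Z : \bm T \mid \bm L) \;=\; I(\bm Z : \bm T \mid \bm L) \,.
\]

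\textbf{Conditional bound via pointwise ratios.} Fix any $L$ and let $P_z^{(L)}$ denote the law of $\bm T$ under $\bm Z = z$ conditioned on $\bm L = L$. Independence of $\bm L$ from $\bm Z$ means that for every $T$ with $|T| = L$ the ratio $P_1^{(L)}(T)/P_0^{(L)}(T)$ equals the unconditional ratio $P_1(T)/P_0(T)$ treated in \cref{lemma: range probs}. Choose a large constant $C > 0$. For $L \le C \log n$ I first verify the hypothesis $L^4 \le r/100$ of \cref{lemma: range probs}: using $r \ge n/m$, the assumption $m = o((n/\epsilon)^{4/5}/\log^4 n)$, and $\epsilon > n^{-1/4}$ (which gives $n^{1/5}\epsilon^{4/5} > 1$), one gets $r = \omega(\log^4 n)$, so $(r/100)^{1/4} \gg C\log n$ for all sufficiently large $n$. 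Applying \cref{lemma: range probs} then \cref{fact: discrepancy to information} yields, uniformly in $T$ with $|T| = L \le C\log n$,
\[
I(\bm Z : \bm T \mid \bm L = L) \;\le\; \tfrac{1}{2}\!\left( c \epsilon^2 L^6 / r^2 \right)^2 \;=\; O\!\left(\frac{\epsilon^4 L^{12}}{r^4}\right) \,.
\]
For $L > C\log n$, I will just use the trivial bound $I(\bm Z : \bm T \mid \bm L = L) \le H(\bm Z) = \log 2$.

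\textbf{Averaging over $\bm L$.} Since $\bm L$ is Poisson with parameter at most $4$, Stirling's approximation gives $\Pr{\bm L > C\log n} \le (4e/(C\log n))^{C\log n}$, which for a sufficiently large constant $C$ is smaller than any fixed inverse polynomial in $n$. Because the target bound $\epsilon^4 \log^{12} n/r^4$ is only polynomially small in $n$ under the parameter assumptions (as $r \le 2n/m$ and $\epsilon^4 \ge n^{-1}$), this tail contribution is dominated by the main term. Combining,
\[
I(\bm Z : \bm T) \;=\; \sum_{L \ge 0} \Pr{\bm L = L}\, I(\bm Z : \bm T \mid \bm L = L) \;\le\; O\!\left(\frac{\epsilon^4 (C\log n)^{12}}{r^4}\right) + o\!\left(\frac{\epsilon^4 \log^{12} n}{r^4}\right) \,,
\]
which is $O(\epsilon^4 \log^{12} n/r^4)$, as required.

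\textbf{Main obstacle.} There is no deep technical obstacle left at this stage: \cref{lemma: range probs} is the heart of the argument and is already available. The only care needed is to pick a truncation threshold for $\bm L$ that simultaneously (i) lies below $(r/100)^{1/4}$ so that \cref{lemma: range probs} applies, and (ii) is large enough that $\Poi(O(1))$ tail mass beyond it is negligible compared to the target bound. The assumptions $\epsilon \ge n^{-1/4}$ and $m = o((n/\epsilon)^{4/5}/\log^4 n)$ are exactly what makes the choice $L_0 = C\log n$ satisfy both constraints, and they also supply the $\log^{12} n$ factor in the final bound through the maximal value of $L^{12}$ in the typical regime.
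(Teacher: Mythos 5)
Your proof is correct, and it is a genuinely different (and arguably cleaner) packaging of the same inputs. The paper proves the lemma by computing $I(\bm Z:\bm T)$ directly from the definition, applying $\log u\le u-1$ to obtain the Jensen--Shannon--type bound $\frac12\sum_T\frac{(P_1(T)-P_0(T))^2}{P_0(T)+P_1(T)}$, and then splitting this sum at $|T|\le 20\log n$: the short part is bounded using \cref{lemma: range probs} with $P_0+P_1\ge P_0$, and the long part is bounded by $\sum_{|T|>20\log n}(P_0(T)+P_1(T))$ and the Poisson tail (\cref{fact:poisson-concentration}). You instead use the chain rule $I(\bm Z:\bm T)=I(\bm Z:\bm T\mid\bm L)$ (valid because $\bm L$ is determined by $\bm T$ and $\bm L\perp\bm Z$), and then apply \cref{fact: discrepancy to information} to each conditional slice $\bm L=L$, where the ratio bound of \cref{lemma: range probs} holds uniformly when $L\le C\log n$. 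This neatly sidesteps the issue that the ratio bound fails for long traces: by conditioning on $\bm L$ first, you never need to apply \cref{fact: discrepancy to information} on a domain containing long strings. The paper explicitly flags \cref{fact: discrepancy to information} as stated ``without proof for intuition only'' and instead rederives the needed inequality in its own calculation precisely because it did not want to condition; your route makes the Fact literally applicable, which is a legitimate and somewhat more modular alternative. To make your argument fully self-contained you would still need to supply the short proof of \cref{fact: discrepancy to information} (it follows from $D(P\|Q)\le\chi^2(P\|Q)$ applied to $Q=(P_0+P_1)/2$), but that is standard. Your verification that $(C\log n)^4\le r/100$ holds and your tail estimate (Poisson concentration dominated by the main term, using $\epsilon^4\log^{12}n/r^4=\Omega(n^{-5})$) both check out and parallel the paper's.
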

\begin{proof}
    As before, let $P_z$ denote the conditional probabilities of $\bm{T}$ given $\bm{Z}=z$.
    Let $Q \define (P_0 + P_1)/2$ denote the (marginal) distribution of $\bm{T}$.

    Our strategy will be to decompose the set of possible subtraces $T$ according to whether
    $|T| \lesssim \log n$ (the typical case) or $|T| \gtrsim \log n$. In the former case,
    \cref{lemma: range probs} will give that the ratio $P_1(T) / P_0(T)$ is close to $1$, while
    in the latter case, we will use Poisson concentration bounds to argue that such long traces
    cannot contribute too much to the mutual information.

    Concretely, we start by upper bounding $I(\bm{Z} : \bm{T})$ by the sum a $\chi^2$-type
    expression for $|T| \le 20\log n$, and tail probabilities for $|T| > 20\log n$:
    \begin{align*}
        I(\bm{Z} : \bm{T})
        &= H(\bm{T}) - H(\bm{T}|\bm{Z}) \\
        &= -\sum_{T} Q(T) \log(Q(T))
            + \sum_{z \in \{0,1\}} \Pr{Z=z} \sum_{T} P_z(T) \log(P_z(T)) \\
        &= -\sum_{T} \frac{P_0(T) + P_1(T)}{2} \log(Q(T))
            + \sum_{T} \left[ \frac{P_0(T)}{2} \log(P_0(T)) + \frac{P_1(T)}{2} \log(P_1(T)) \right] \\
        &= \frac{1}{2} \sum_{T} \left[
            P_0(T) \log\left( \frac{P_0(T)}{Q(T)} \right) + P_1(T) \log\left( \frac{P_1(T)}{Q(T)} \right)
            \right] \\
        &\le \frac{1}{2} \sum_{T} \left[
            P_0(T) \left( \frac{P_0(T)}{Q(T)} - 1 \right) +
            P_1(T) \left( \frac{P_1(T)}{Q(T)} - 1 \right)
            \right] \\
        &= \frac{1}{2} \sum_{T} \left[
            P_0(T) \left( \frac{P_0(T) - P_1(T)}{P_0(T) + P_1(T)} \right) +
            P_1(T) \left( \frac{P_1(T) - P_0(T)}{P_0(T) + P_1(T)} \right)
            \right] \\
        &= \frac{1}{2} \sum_{T} \left[
            \frac{\left(P_1(T) - P_0(T)\right)^2}{P_0(T) + P_1(T)}
            \right] \\
        &< \frac{1}{2} \sum_{T : |T| \le 20\log n} \frac{\left(P_1(T)-P_0(T)\right)^2}{P_0(T)}
            + \frac{1}{2} \sum_{T : |T| > 20\log n} \left( P_0(T) + P_1(T) \right) \\
        &= \frac{1}{2} \sum_{T : |T| \le 20\log n} P_0(T) \left( \frac{P_1(T)}{P_0(T)} - 1 \right)^2 \\
            &\qquad + \frac{1}{2}\Pruc{}{|\bm{T}| > 20\log n}{\bm{Z}=0}
                    + \frac{1}{2}\Pruc{}{|\bm{T}| > 20\log n}{\bm{Z}=1} \,.
    \end{align*}

    We start with the first term in the last expression above. We want to show that, when
    $|T| \le 20 \log n$, we have $|T|^4 = o(r)$, which is sufficient for satisfying the condition of
    \cref{lemma: range probs}. Recalling the assumptions $\frac{n}{m} \le r \le 2\frac{n}{m}$,
    $m = o\left( \left(\frac{n}{\epsilon}\right)^{4/5} \frac{1}{\log^4 n} \right)$ and
    $\epsilon \ge \frac{1}{n^{1/4}}$, we have
    \begin{align*}
        \frac{|T|^4 / 20^4}{r} &\le \frac{\log^4 n}{r}
        \le \frac{m \log^4 n}{n}
        \ll \frac{\left(\frac{n}{\epsilon}\right)^{4/5} \frac{1}{\log^4 n} \log^4 n}{n}
        \le 1 \,,
    \end{align*}
    and hence the condition $|T|^4 \le r/100$ holds for sufficiently large $n$.
    Therefore \cref{lemma: range probs} yields
    \[
        \frac{1}{2} \sum_{T : |T| \le 20\log n} P_0(T) \left( \frac{P_1(T)}{P_0(T)} - 1 \right)^2
        \le \frac{1}{2} \sum_{T : |T| \le 20\log n} P_0(T)
            \left( c \cdot \epsilon^2 \cdot \frac{|T|^6}{r^2} \right)^2
        = O\left( \frac{\epsilon^4 \log^{12} n}{r^4} \right) \,.
    \]

    We now deal with the second component. Recall (see \cref{observation:domino-subtrace-length})
    that $|\bm{T}|$ is distributed according to a
    Poisson distribution completely determined by the number of dominoes in the range:
    \[
        |\bm{T}| \sim \Poi\left( r \cdot \frac{2m}{n} \right) \,,
    \]
    independently of $\bm{Z}$. Let $\lambda \define r \cdot \frac{2m}{n}$ and note that
    $2 \le \lambda \le 4 \le \log n$. \cref{fact:poisson-concentration} gives, for $z \in \{0,1\}$,
    \[
        \Pruc{}{|\bm{T}| > 20\log n}{\bm{Z}=z}
        \le \Pr{|\bm{T}| - \lambda > 19\log n}
        \le e^{-\frac{(19\log n)^2}{2(\lambda + 19\log n)}}
        \le e^{-\frac{361\log n}{40}}
        \le \frac{1}{n^9} \,.
    \]
    Finally, we have
    \begin{align*}
        \frac{1}{n^9} \le \frac{\epsilon^4}{r^4}
        \impliedby \frac{1}{n^9} \le \frac{1 / n}{16 n^4 / m^4}
        \iff \frac{16}{n^4} \le m^4
        \iff \frac{2}{n} \le m \,,
    \end{align*}
    which holds trivially, completing the proof.
\end{proof}

Since subtraces produced by disjoint ranges are conditionally independent given $\bm{Z}$, applying
the chain rule along with the data processing inequality concludes the proof.

\begin{lemma}[Refinement of \cref{claim: YES vs NO}]
    \label{lemma:refined-lower-bound}
    Suppose $n^{-1/4} < \epsilon < 1$.
    Let $\bm{Z} \sim \Ber(1/2)$, $\bm{\pi} \sim \cD_{\bm{\pi}}$, and let $\bm{\cT}$ be a trace of
    size $\Poi(m)$ drawn from $\pi$. Then if
    $m = o\left( \left(\frac{n}{\epsilon}\right)^{4/5} \frac{1}{\log^4 n} \right)$, we have
    \[
        I(\bm{Z} : \bm{\cT}) = O\left( \frac{\epsilon^4 m^5}{n^4} \log^{12}n \right) = o(1) \,,
    \]
    and hence any algorithm that succeeds in distinguishing the YES and NO cases with probability
    at least $51\%$ (over $\bm{Z}$ and $\bm{\cT}$) requires sample complexity at least
    \[
        \Omega\left( \left(\frac{n}{\epsilon}\right)^{4/5} \cdot \frac{1}{\log^4 n} \right) \,.
    \]
\end{lemma}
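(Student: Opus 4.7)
\medskip

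\noindent\textbf{Proof plan for \cref{lemma:refined-lower-bound}.}
The plan is to obtain the mutual information bound by partitioning the $n/2$ dominoes into $\Theta(m)$ contiguous ranges of size $r$ with $\lceil n/m \rceil \le r \le 2\lceil n/m \rceil$, applying \cref{lemma:range-info} to bound the mutual information carried by each range, and then adding these contributions via the chain rule for conditionally independent variables. The sample-complexity lower bound then follows from \cref{fact:fanos}.

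First I would fix a partition of $[n/2]$ into consecutive ranges of dominoes, say $R_1, \dotsc, R_k$ with $k = \Theta(m)$ and $|R_j| = r_j$ satisfying $\lceil n/m \rceil \le r_j \le 2\lceil n/m \rceil$ for each $j$ (one range may be slightly shorter, which does not affect the asymptotic bounds). Let $\bm T^{(j)}$ denote the subtrace produced by the dominoes in $R_j$, as in \cref{def:subtrace}, so that by construction $\bm{\cT} = \bm T^{(1)} \circ \dotsm \circ \bm T^{(k)}$.

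The key structural observation is that $\bm T^{(1)}, \dotsc, \bm T^{(k)}$ are mutually independent conditional on $\bm Z$. Indeed, under $\cD_0$ the partial distribution is deterministic and the Poisson counts $\bm A_j, \bm B_j$ across dominoes are independent; under $\cD_1$ the biasing choices for distinct dominoes are drawn independently by \cref{def:yes-no-distributions}, and given those choices the Poisson counts are again independent. From this conditional independence, the chain rule of mutual information gives
\[
  I(\bm Z : \bm T^{(1)}, \dotsc, \bm T^{(k)}) = \sum_{j=1}^{k} I\bigl(\bm Z : \bm T^{(j)} \,\big|\, \bm T^{(1)}, \dotsc, \bm T^{(j-1)}\bigr) \le \sum_{j=1}^{k} I(\bm Z : \bm T^{(j)}) \,,
\]
where the inequality uses conditional independence in the standard way. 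Since $\bm{\cT}$ is a deterministic function of the $\bm T^{(j)}$'s, the data processing inequality yields $I(\bm Z : \bm{\cT}) \le \sum_{j=1}^{k} I(\bm Z : \bm T^{(j)})$. Applying \cref{lemma:range-info} to each term (whose hypotheses on $\epsilon$ and $m$ are exactly those assumed here, and whose $r$ is within the prescribed range) bounds each summand by $O(\epsilon^4 \log^{12} n / r_j^4)$, and summing over $k = \Theta(m)$ ranges of size $r_j = \Theta(n/m)$ gives
\[
  I(\bm Z : \bm{\cT}) \le k \cdot O\!\left(\frac{\epsilon^4 \log^{12} n}{(n/m)^4}\right) = O\!\left(\frac{\epsilon^4 m^5}{n^4}\,\log^{12} n\right) \,.
\]
Under the assumption $m = o\bigl((n/\epsilon)^{4/5} / \log^{4} n\bigr)$, we have $\epsilon^4 m^5 / n^4 = o(\log^{-20} n)$, so the right-hand side is $o(\log^{-8} n) = o(1)$. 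Finally, combining this with \cref{fact:fanos} produces the claimed sample-complexity lower bound: any tester that distinguishes YES from NO with probability at least $51/100$ must induce $I(\bm Z : \bm{\cT}) \ge 2 \cdot 10^{-4}$, which contradicts our upper bound unless $m = \Omega\bigl((n/\epsilon)^{4/5} / \log^{4} n\bigr)$.

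The only genuinely delicate step is verifying the conditional independence of the $\bm T^{(j)}$'s; everything else is bookkeeping. I expect no substantial obstacle here because the construction of $\cD_1$ was explicitly designed to make the domino biases independent, and the Poissonized sampling process then makes the subtraces factor cleanly. The remainder of the argument is a clean application of the chain rule, data processing, and \cref{fact:fanos}, with the summation over ranges chosen precisely so that $k/r^4 = \Theta(m^5/n^4)$ produces the target exponent of $4/5$ in the sample-complexity bound.
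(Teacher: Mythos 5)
Your proposal is correct and follows essentially the same route as the paper's proof: partition the dominoes into $\Theta(m)$ contiguous ranges of size $\Theta(n/m)$, observe that conditional on $\bm Z$ the subtraces of disjoint ranges are mutually independent (by the independent domino-biasing in \cref{def:yes-no-distributions} together with the Poissonized sampling), apply the chain rule and data processing inequality, invoke \cref{lemma:range-info} per range, and finish with \cref{fact:fanos}. The only cosmetic difference is that you index ranges by dominoes while the paper indexes by positions in $[2n]$, and you note explicitly that a final shorter range needs handling, which is easily arranged since the allowed range sizes span a factor of two.
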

\begin{proof}
    Fix an arbitrary partition of the domain $[2n]$ into consecutive ranges $R_1, \dotsc, R_{m'}$
    such that 1) each $R_i$ is a contiguous range with multiple of $4$ length, and hence
    consists of $|R_i|/4$ consecutive dominoes (recall that we assume even $n$ for simplicity);
    and 2) each $R_i$ satisfies $\frac{n}{m} \le |R_i|/4 \le 2\frac{n}{m}$.
    It follows that $m' = \Theta(m)$ and, letting $r_i \define |R_i|/4$ for each $i \in [m']$, each
    $r_i$ satisfies the requirements of \cref{lemma:range-info}.

    Let $\bm{T}_i$ be the subtrace generated by range $R_i$, so that the final trace is obtained by
    concatenation of all subtraces:
    \[
        \bm{\cT} = \bm{T}_1 \circ \bm{T}_2 \dotsm \circ \bm{T}_{m'} \,.
    \]
    The data processing inequality yields
    \[
        I(\bm{Z} : \bm{\cT}) \le I(\bm{Z} : \bm{T}_1, \dotsc, \bm{T}_{m'}) \,.
    \]

    Note that, conditional on $\bm{Z}$, the entries of $\bm{\pi}$ in different dominoes are
    mutually independent as per the process described in \cref{def:yes-no-distributions}.
    Then, recalling the distribution of subtraces described in \cref{def:subtrace}, it follows that
    the $\bm{T}_i$ are mutually independent conditional on $\bm{Z}$.
    Thus the chain rule of mutual information and \cref{lemma:range-info} give
    \[
        I(\bm{Z} : \bm{\cT}) \le \sum_{i=1}^{m'} I(\bm{Z} : \bm{T}_i)
        = \Theta(m) \cdot O\left( \frac{\epsilon^4 m^4 \log^{12} n}{n^4} \right)
        = O\left( m^5 \cdot \frac{\epsilon^4}{n^4} \log^{12} n \right)
        = o(1) \,,
    \]
    as desired. Finally, applying \cref{fact:fanos} establishes the second conclusion.
\end{proof}

Putting together \cref{prop:lb-reduction-from-standard} and \cref{lemma:refined-lower-bound}
establishes \cref{thm: lower bound}:

\begin{corollary}[Refinement of \cref{thm: lower bound}]
    Let $\Pi_1$ contain only the uniform distribution over $[2n]$, and let $\Pi_2$ be the set of
    distributions over $[2n]$ that are $\epsilon$-far from uniform in total variation distance. Then
    $(\Pi_1, \Pi_2, 51/100)$-testing under the parity trace requires sample complexity at least
    $\Omega\left(
        \left( \frac{n}{\epsilon} \right)^{4/5} \frac{1}{\log^4 n}
        + \frac{\sqrt{n}}{\epsilon^2} \right)$
    samples.
    Furthermore, this bound holds even if the input distribution
    $\pi$ is guaranteed to have $1/2$ mass uniformly distributed over the zero-valued (\ie even)
    coordinates.
\end{corollary}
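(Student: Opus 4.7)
The plan is to derive the corollary by combining the two independently-established lower bounds: \cref{prop:lb-reduction-from-standard}, which gives $\Omega(\sqrt n/\epsilon^2)$ via reduction from standard uniformity testing, and \cref{lemma:refined-lower-bound}, which gives $\Omega\left((n/\epsilon)^{4/5}/\log^4 n\right)$ in the regime $\epsilon \geq n^{-1/4}$. Since a sum of two non-negative quantities is within a factor of $2$ of their maximum, it suffices to prove that the sample complexity is $\Omega$ of the larger of the two terms.

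First I would observe that these two regimes are exactly complementary at the threshold $\epsilon = n^{-1/4}$. A quick comparison shows $(n/\epsilon)^{4/5} \geq \sqrt{n}/\epsilon^2 \iff n^{3/10} \geq \epsilon^{-6/5} \iff \epsilon \geq n^{-1/4}$. Therefore the proof splits cleanly into two cases. If $\epsilon \geq n^{-1/4}$, apply \cref{lemma:refined-lower-bound} to obtain the lower bound $\Omega\left((n/\epsilon)^{4/5}/\log^4 n\right)$, which dominates $\sqrt n/\epsilon^2$ in this regime (up to the polylogarithmic factor that is absorbed into the claimed bound). If $\epsilon < n^{-1/4}$, \cref{prop:lb-reduction-from-standard} gives the required $\Omega(\sqrt n/\epsilon^2)$ bound, which dominates $(n/\epsilon)^{4/5}/\log^4 n$.

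For the ``furthermore'' claim, I would verify that both constructions produce input distributions with the even-coordinate-uniform property. For \cref{prop:lb-reduction-from-standard}, the claim is already stated explicitly in the proposition, since the reduction constructs $\pi'$ by placing $\frac{1}{2n}$ mass on each even coordinate. For \cref{lemma:refined-lower-bound}, this follows directly from \cref{def:yes-no-distributions}, where both $\cD_0$ and $\cD_1$ always set $q = \mu$, i.e., uniform half-mass on the even coordinates, regardless of $\bm Z$ or the per-domino biases.

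I do not expect any real obstacle here, since the corollary is essentially bookkeeping that unifies the two lower bound arguments under a single expression. The only mild point to verify is that the regime split at $\epsilon = n^{-1/4}$ lines up with the hypothesis of \cref{lemma:refined-lower-bound} (namely $n^{-1/4} < \epsilon < 1$), which it does by construction.
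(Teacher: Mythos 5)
Your approach is the same as the paper's: combine \cref{prop:lb-reduction-from-standard} (which gives $\Omega(\sqrt n/\epsilon^2)$) with \cref{lemma:refined-lower-bound} (which gives $\Omega((n/\epsilon)^{4/5}/\log^4 n)$ for $\epsilon > n^{-1/4}$), split by regime, and take the larger bound. The ``furthermore'' verification is also right: both lower-bound constructions place exactly uniform $1/2$ mass on the even coordinates.

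However, your regime split has a small but real error. You claim that for $\epsilon \ge n^{-1/4}$ the term $(n/\epsilon)^{4/5}/\log^4 n$ dominates $\sqrt{n}/\epsilon^2$, but this is false in the transition window $n^{-1/4} \le \epsilon < \log^{10/3} n / n^{1/4}$: there, $(n/\epsilon)^{4/5} \ge \sqrt n/\epsilon^2$ holds without the $\log^4 n$, but after dividing by $\log^4 n$ the inequality flips, so the larger of the two claimed terms is actually $\sqrt n/\epsilon^2$, and \cref{lemma:refined-lower-bound} alone is short by a $\poly\log n$ factor. The parenthetical ``absorbed into the claimed bound'' does not save this because the claimed bound is (essentially) a max, and the term you need to beat in that window is the one \emph{without} the log factor. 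The paper avoids this by choosing its threshold to be exactly $\epsilon = \log^{10/3} n/n^{1/4}$, i.e.\ the point where $(n/\epsilon)^{4/5}/\log^4 n$ and $\sqrt n/\epsilon^2$ are equal, so that \cref{lemma:refined-lower-bound} is invoked only when its conclusion is the genuine max. Your argument is trivially repaired by noting that \cref{prop:lb-reduction-from-standard} holds unconditionally (not just for $\epsilon < n^{-1/4}$), so that $\Omega(\sqrt n/\epsilon^2)$ is always available and covers the transition window.
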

\begin{proof}
    The lower bound of $\Omega(\sqrt{n}/\epsilon^2)$ holds by \cref{prop:lb-reduction-from-standard}.
    Moreover, we have
    \[
        \left(\frac{n}{\epsilon}\right)^{4/5} \frac{1}{\log^4 n} \ge \frac{\sqrt{n}}{\epsilon^2}
        \iff \epsilon \ge \frac{\log^{10/3} n}{n^{1/4}} \,,
    \]
    in which case \cref{lemma:refined-lower-bound} establishes the bound.
\end{proof}

\section{Distribution-Free Sample-Based Property Testing}
\label{section:sample-based-testing}

In this section, we relate distribution testing under the parity trace to distribution-free
sample-based property testing. The main ideas of this section are:
\begin{enumerate}[itemsep=0pt]
\item We define \emph{labeled-distribution testing} as a generalized reformulation of
distribution-free sample-based property testing that makes the connection to distribution testing
more explicit.
\item There is a natural type of labeled distribution properties, which we call \emph{density
properties}, that includes some property testing problems, and some more challenging versions of
standard distribution testing problems. We use the edit distance and Ramsey theory to show that
testing these properties is equivalent to testing distributions under the parity trace.
\item Using this equivalence, we get new tight positive results for distribution-free sample-based
testing (in the more general labeled-distribution definition) by applying \cref{thm:intro-main-informal}.
\item There is a \emph{testing-by-learning} reduction for labeled-distribution testing, similar to
the standard testing-by-learning reduction of \cite{GGR98}, that allows non-constructive upper
bounds on distribution testing under the parity trace. This will be used in
\cref{section:trace-testing} to get upper bounds for some testing problems in the trace
reconstruction model.
\end{enumerate}
The section is organized as follows:
\begin{description}[itemsep=0pt]
\item[\cref{section:labeled-distributions}:] The definition of labeled distributions.
\item[\cref{section:edit-distance}:] The definition of edit distance, which is closely related to
labeled distributions and will be necessary for all of our applications in the remainder of the
paper.
\item[\cref{section:labeled-to-standard}:] The definition of labeled distribution testing, and how
it generalizes the conventional distribution testing and distribution-free sample-based property
testing models.
\item[\cref{section:testing-by-learning}:] A testing-by-learning reduction for labeled-distribution
testing.
\item[\cref{section:density-properties}:] The definition of \emph{density properties}, and the
equivalence of testing density properties to distribution testing under the parity trace.
\item[\cref{section:uniform-k-alternating-theorem}:] The proof of
our main result on labeled distribution testing, \cref{thm:intro-main-testing-informal}, which is an
application of \cref{thm:intro-main-informal}.
\item[\cref{section:testing-uniform-nopromise}:] An upper bound on testing uniform distributions
against unrestricted distributions under the parity trace
(\cref{thm:testing-uniform-k-alternating-no-promise}).
\item[\cref{section:testing-support-size}:] The equivalence between testing support size $k$ under
the parity trace, and testing $k$-alternating functions in the distribution-free sample-based model
(\cref{thm:testing-support-k}), and an alternate proof of the lower bounds of \cite{BFH21} for
testing halfspaces, among others.
\end{description}

\subsection{Labeled Distributions}
\label{section:labeled-distributions}

We shall now define labeled distributions and edit distance, which are closely related.

\begin{definition}[Labeled Distribution]
\label{def:labeled-distributions}
On any fixed domain $\cX$, a \emph{labeled distribution} is a pair $(f, \cD)$ of a function $f : \cX
\to \zo$ and a probability distribution $\cD$ over $\cX$. We write $\cD_f$ for the probability
distribution over $\cX \times \zo$, where the density of any $(x, b) \in \cX \times \zo$ is defined
as
\[
  \cD_f(x,b) \define \begin{cases}
    \cD(x) &\text{ if } b = f(x) \\
    0      &\text{ otherwise.}
  \end{cases}
\]
In other words, a sample from $\cD_f$ is obtained by sampling $x \sim \cD$ and taking $(x, f(x))$.
\end{definition}

We study the case $\cX = \bZ$. For a labeled distribution $(f, \cD)$ over
$\bZ$, it may be the case that $f$ ``alternates'' an infinite number of times. We restrict our
attention to the ``proper'' labeled distributions, where $f$ has a finite number of alternations
``on the left''\!\!, defined as follows.

\begin{definition}[Proper Labeled Distributions]
A labeled distribution $(f,\cD)$ is \emph{1-proper} if there exists $t \in \bZ$ such that $f(x) = 1$
for all $x < t$. It is \emph{0-proper} if, instead, $f(x) = 0$ for all $x < t$.  If $(f,\cD)$
is either 0- or 1-proper, we call it \emph{proper}.
\end{definition}

\begin{remark}
When studying labeled distribution testing, it suffices to consider proper labeled distributions.
This is because, for every labeled distribution $\cD_f$ and every $\delta > 0$, there exists a
\emph{proper} labeled distribution $\cD_g$ such that $\dist_\TV(\cD_f, \cD_g) < \delta$. So every
distribution is indistinguishable (to any algorithm with bounded sample size) from some proper
distribution.
\end{remark}

\begin{definition}[Alternation Sequence]
For any proper labeled distribution $(f,\cD)$, the \emph{alternation sequence} is the unique
sequence $a_1 < a_2 < a_3 < \dotsm$ such that $f$ is constant on the intervals $(-\infty, a_1]$,
$(a_1, a_2]$, $\dotsc$, and $f(a_{i-1}) \neq f(a_i)$. Note that if $(f,\cD)$ is 1-proper, then $f$
takes value $\parity(i)$ on the interval $(a_{i-1}, a_i]$, and value 1 on $(-\infty, a_1]$. If
$(f,\cD)$ is 0-proper, it takes the opposite values. Note that this sequence always exists when
$(f,\cD)$ is proper, and it may be an infinite sequence.
\end{definition}

\begin{definition}[Density Sequence]
For any proper labeled distribution $(f, \cD)$ with alternation sequence $a_1 < a_2 < a_3 <
\dotsm$, we define the \emph{density sequence} $\pi_{f,\cD} : \bN \to \bR_{\geq 0}$ as follows. If
$(f,\cD)$ is 1-proper, we define
\[
  \pi_{f,\cD}(i) \define \begin{cases}
    \cD(-\infty, a_1] &\text{ if } i = 1 \\
    \cD(a_{i-1}, a_i] &\text{ if } i > 1 \,.
  \end{cases}
\]
If $(f,\cD)$ is 0-proper, we define
\[
  \pi_{f,\cD}(i) \define \begin{cases}
    0 &\text{ if } i = 1 \\
    \cD(-\infty, a_1] &\text{ if } i = 2 \\
    \cD(a_{i-2}, a_{i-1}] &\text{ if } i > 2 \,.
  \end{cases}
\]
Note that $\pi_{f,\cD}$ is a probability distribution, since $\sum_{i=1}^\infty \pi_{f,\cD}(i) =
\sum_{x \in \bZ} \cD(x) = 1$. For any set $\Xi$ of proper labeled distributions, we write
\[
  \Pi(\Xi) \define \{ \pi_{f,\cD} : (f,\cD) \in \Xi \}
\]
for the set of density sequences (probability distributions) associated with the proper labeled
distributions in $\Xi$.
\end{definition}

The following simple formula for TV distance for labeled distributions is often useful.
\begin{proposition}
\label{prop:tv-distance-labeled}
Let $(f,\cD)$ and $(g,\cE)$ be labeled distributions over $\bZ$. Then
\[
  \dist_\TV(\cD_f, \cE_g)
  = \frac{1}{2} \sum_{i \in \bZ} \ind{f(i) \neq g(i)}( \cD(i) + \cE(i) )
                  +  \ind{f(i) = g(i)} | \cD(i) - \cE(i) | \,.
\]
\end{proposition}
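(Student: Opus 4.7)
The plan is to expand the standard formula $\dist_\TV(P,Q) = \frac{1}{2}\sum_z |P(z)-Q(z)|$ for the total variation distance applied to $P = \cD_f$ and $Q = \cE_g$, where the sum ranges over the ground set $\bZ \times \zo$. I would rewrite this sum by grouping the two bits $b \in \zo$ associated with each $i \in \bZ$, then split the analysis into cases according to whether $f(i) = g(i)$ or not.

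First I would observe the following from \cref{def:labeled-distributions}: for each $i$, the pair $(\cD_f(i,0), \cD_f(i,1))$ equals $(\cD(i), 0)$ if $f(i)=0$ and $(0,\cD(i))$ if $f(i)=1$, and analogously for $\cE_g$ with $g$. Thus
\[
  |\cD_f(i,0) - \cE_g(i,0)| + |\cD_f(i,1) - \cE_g(i,1)|
\]
is the quantity I need to evaluate for each fixed $i$.

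In the case $f(i) = g(i) = b$, both distributions place mass on the same bit so the sum collapses to $|\cD(i) - \cE(i)|$, with the other bit contributing $0$ on both sides. In the case $f(i) \neq g(i)$, one distribution puts all its $i$-mass on one bit and the other puts its $i$-mass on the opposite bit, so the two absolute-value terms become $\cD(i)$ and $\cE(i)$, summing to $\cD(i) + \cE(i)$. Combining these two cases via indicator functions and summing over $i$, then multiplying by $\tfrac{1}{2}$, yields the claimed identity.

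There is no genuine obstacle here: the proof is just a bookkeeping exercise on the two bits per domain element, and the only thing to be careful about is handling the case where $\cD(i) = 0$ or $\cE(i) = 0$ (so that $f(i)$ or $g(i)$ is immaterial at $i$), but both formulas handle this correctly since $|\cD(i)-\cE(i)|$ and $\cD(i)+\cE(i)$ coincide up to a harmless sign issue when one of them vanishes.
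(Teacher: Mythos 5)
Your proposal is correct and takes essentially the same approach as the paper: the paper also expands $\dist_\TV(\cD_f,\cE_g)=\tfrac12\sum_{(i,b)\in\bZ\times\zo}|\cD_f(i,b)-\cE_g(i,b)|$ and partitions the terms according to whether $f(i)=g(i)=b$, $f(i)=b\neq g(i)$, or $f(i)\neq b=g(i)$, which is the same per-$i$ two-bit bookkeeping you describe. The only cosmetic difference is that the paper indexes the three cases over $(i,b)$ pairs while you first collapse the two bits for each fixed $i$; both yield the identity immediately.
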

\begin{proof}
By definition,
\begin{align*}
  \dist_\TV(\cD_f, \cE_g) 
  &= \frac{1}{2} \sum_{(i, b) \in \bZ \times \zo} | \cD_f(i,b) - \cE_g(i,b) | \\
  &= \frac{1}{2}
    \left(\sum_{\substack{(i, b) \in \bZ \times \zo \\ f(i) = g(i) = b}} | \cD(i) - \cE(i) |
    + \sum_{\substack{(i,b) \in \bZ \times \zo \\ f(i) = b, g(i) \neq b }} \cD(i) 
    + \sum_{\substack{(i,b) \in \bZ \times \zo \\ f(i) \neq b, g(i) =  b }} \cE(i) \right) \\
  &= \frac{1}{2} \sum_{i \in \bZ} \ind{f(i) \neq g(i)}( \cD(i) + \cE(i) )
                  +  \ind{f(i) = g(i)} | \cD(i) - \cE(i) | \,. \qedhere
\end{align*}
\end{proof}

\subsection{Edit Distance}
\label{section:edit-distance}

We define two notions of edit distance: one for labeled distributions on domain $\bZ$, and one for
distributions over $\bN$.

\begin{definition}[Edit Distance for Labeled Distributions]
\label{def:labeled-distribution-edit-distance}
For any two proper labeled distributions $(f, \cD)$ and $(g, \cE)$ on domain $\bZ$, define
\[
  \dist_\edit( (f, \cD), (g, \cE) ) \define \inf \dist_\TV( \cD'_{f'}, \cE'_{g'} ) \,,
\]
where the infimum is taken over all proper labeled distributions $(f', \cD')$ and $(g', \cE')$ that
have the same density sequences as the original distributions, \ie that satisfy $\pi_{f', \cD'} =
\pi_{f,\cD}$ and $\pi_{g', \cE'} = \pi_{g, \cE}$.
\end{definition}

Next, we will define the edit distance for distributions over $\bN$. Recall that the TV distance is
not the natural distance metric for distribution testing under the parity trace, because
distributions may have maximum TV distance 1 while being indistinguishable under the parity trace.
Edit distance replaces the TV distance as the natural (pseudo-)metric for the parity trace.
We begin by introducing the notion of a \emph{fractional string}.

\begin{definition}[Fractional String]
A \emph{fractional string} is a finite sequence $\sigma_1^{p_1} \sigma_2^{p_2} \dots \sigma_n^{p_n}$
where each \emph{fractional character} $\sigma_i^{p_i}$ consists of a symbol $\sigma_i \in \zo$ and
a value $p_i \in \bR_{\geq 0}$.
\end{definition}

We now define the edit distance for fractional strings, which is an analog of the standard edit
distance for strings.

\begin{definition}[Edit Distance for Fractional Strings]
\label{def:fractional-string-edit-distance}
Let $a = a_1^{p_1} a_2^{p_2} \dotsm a_n^{p_n}$ be a fractional string. We define the following
permitted edit operations on $a$, with associated cost:
\begin{description}
\item[Insert:] For $i \in [n+1]$ and $b \in \zo$, $\mathsf{ins}_{i,b}(a)$ is the fractional string
obtained by inserting the fractional character $b^0$ immediately before $a_i^{p_i}$. The cost of
this operation is 0.
\item[Delete:] For $i \in [n]$ such that $p_i = 0$, $\mathsf{del}_{i}(a)$ is the fractional string
obtained by deleting the fractional character $a_i^{p_i} = a_i^0$. The cost of this operation is 0.
\item[Rearrange:] For $i \in [n-1]$ such that $a_i = a_{i+1}$, and $-p_i \leq \delta \leq p_{i+1}$,
$\mathsf{rearr}_{i,\delta}(a)$ is the fractional string obtained by replacing
$a_i^{p_i}a_{i+1}^{p_{i+1}}$ with $a_i^{p_i+\delta} a_{i+1}^{p_{i+1}-\delta}$. The cost of this
operation is 0.
\item[Adjust:] For $i \in [n]$ and $\delta \geq -p_i$, $\mathsf{adj}_{i,\delta}(a)$ is the
fractional string obtained by replacing $p_i$ with $p_i + \delta$, so that the fractional character
$a_i^{p_i}$ becomes $a_i^{p_i+\delta}$. The cost of this operation is $|\delta|/2$.
\end{description}
For a fractional string $a$, we say that a sequence of operations $O_1, \dotsc, O_k$ is permitted if
for each $i \in [k]$, $O_i$ is a permitted operation on the fractional string $(O_{i-1} \circ
O_{i-2} \circ \dotsm \circ O_1)(a)$.

For two fractional strings $a$ and $b$, we define the \emph{edit distance}
$\dist_{\mathsf{fr-edit}}(a,b)$ as the minimum $c$ such that there exists a sequence of permitted
operations $O_1, \dots, O_k$ such that $O_k \circ O_{k-1} \circ \dotsm \circ O_1 (a) = b$ and the
sum of costs of operations $O_i$ is $c$.
\end{definition}

Let $\pi : \bN \to \bR_{\geq 0}$ be any finitely-supported probability distribution. We define the
fractional string $\mathsf{str}(\pi)$ as follows. Since $\pi$ is finitely-supported, there is some
$k \in \bN$ such that $\pi(i) = 0$ for all $i > k$. Then we define
\[
\mathsf{str}(\pi) \define 1^{\pi(1)} 0^{\pi(2)} 1^{\pi(3)} 0^{\pi(4)} \dotsm (\parity(k))^{\pi(k)} \,.
\]

Finally, we may define the edit distance for distributions.

\begin{definition}[Edit Distance for Distributions]
\label{def:distribution-edit-distance}
For two finitely-supported distributions $\pi, \pi'$, we define
\[
  \dist_\edit(\pi, \pi') \define \dist_{\mathsf{fr-edit}}(\mathsf{str}(\pi), \mathsf{str}(\pi')) \,.
\]
\end{definition}
The following alternate characterization of edit distance is helpful. We defer the proof to
\cref{section:edit-distance-definition-equivalence}.

\begin{restatable}{lemma}{lemmaeditdistancealternate}
\label{lemma:edit-distance-alternate}
Let $\pi$ and $\pi'$ be finitely-supported distributions over $\bN$. Then
\[
  \dist_\edit(\pi, \pi') = \inf \dist_\TV(\cD_f, \cE_g) \,,
\]
where the infimum is taken over labeled distributions $(f,\cD)$ and $(g,\cE)$ such that $\pi =
\pi_{f,\cD}$ and $\pi' = \pi_{g,\cE}$.
\end{restatable}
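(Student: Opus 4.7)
I prove the two inequalities separately.

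\emph{Upper bound $\dist_\edit(\pi,\pi')\le \inf\dist_\TV(\cD_f,\cE_g)$.} Given any proper labeled distributions $(f,\cD),(g,\cE)$ realizing $\pi,\pi'$, the plan is to construct an edit sequence from $\str(\pi)$ to $\str(\pi')$ of total cost $\dist_\TV(\cD_f,\cE_g)$. Enumerate the finite set $S \coloneqq \supp(\cD)\cup\supp(\cE)$ together with the (finitely many relevant) alternation points of $f$ and $g$ in increasing order $i_1<\dotsm<i_N$. Between consecutive elements of $S$ neither $f$ nor $g$ alternates, so free operations (insert $b^0$, rearrange same-symbol neighbors) refine $\str(\pi)$ into $A=f(i_1)^{\cD(i_1)}\cdots f(i_N)^{\cD(i_N)}$ and $\str(\pi')$ into $B=g(i_1)^{\cE(i_1)}\cdots g(i_N)^{\cE(i_N)}$; these refinements have canonical forms $\str(\pi)$ and $\str(\pi')$ respectively because grouping adjacent same-symbol entries recovers the original block masses. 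I then transform $A$ to $B$ position by position: when $f(i_j)=g(i_j)$, a single adjust changes the mass from $\cD(i_j)$ to $\cE(i_j)$ at cost $\tfrac{1}{2}|\cD(i_j)-\cE(i_j)|$; when $f(i_j)\ne g(i_j)$, I insert a zero-mass $g(i_j)$-character adjacent to $f(i_j)^{\cD(i_j)}$, adjust it up to $g(i_j)^{\cE(i_j)}$, then adjust the original down to $f(i_j)^{0}$ and delete it, at total cost $\tfrac{1}{2}(\cD(i_j)+\cE(i_j))$. Summing the per-position costs and comparing with \cref{prop:tv-distance-labeled} yields exactly $\dist_\TV(\cD_f,\cE_g)$.

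\emph{Lower bound $\inf\dist_\TV(\cD_f,\cE_g)\le\dist_\edit(\pi,\pi')$.} I prove the stronger statement: for any edit sequence of cost $c$ from $\str(\pi)$ to $\str(\pi')$ and any proper labeled distribution $(f,\cD)$ with $\pi_{f,\cD}=\pi$, there exists $(g,\cE)$ with $\pi_{g,\cE}=\pi'$ and $\dist_\TV(\cD_f,\cE_g)\le c$. Applying this with any starting $(f,\cD)$ to an optimal edit sequence gives the inequality. The proof proceeds by induction on the length of the edit sequence. The base case is trivial, and a free final operation does not change the canonical form so the inductive hypothesis applies directly. If the final operation is an adjust of cost $\tfrac{|\delta|}{2}$, the induction hypothesis applied to the shorter sequence (of cost $c-\tfrac{|\delta|}{2}$) yields $(\tilde g,\tilde\cE)$ for the intermediate density $\tilde\pi$ with $\dist_\TV(\cD_f,\tilde\cE_{\tilde g})\le c-\tfrac{|\delta|}{2}$, where $\pi'$ and $\tilde\pi$ differ by exactly $\delta$ at a single position $i$ of the density sequence. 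I then construct $(g,\cE)$ by locally modifying $\tilde\cE$ within the $i$-th alternation block of $\tilde g$: if $\delta>0$, add mass $\delta$ at a fresh integer $i^*\in\bZ$ in that block lying outside $\supp(\cD)\cup\supp(\tilde\cE)$, which contributes exactly $\tfrac{\delta}{2}$ to the TV distance; if $\delta<0$, decrease $\tilde\cE$ by $|\delta|$ total, spread over integers in the $i$-th block (which carries mass $\tilde\pi(i)\ge|\delta|$ by validity of the adjust), with each unit decrement raising the TV distance by at most half its magnitude. Hence $\dist_\TV(\cD_f,\cE_g)\le c$.

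\emph{Main obstacle.} The delicate point is the local modification in the inductive step of the lower bound: it must (i) change only the $i$-th entry of the density sequence, (ii) preserve properness of $(g,\cE)$, and (iii) increase the TV distance by at most $\tfrac{|\delta|}{2}$. Conditions (i) and (ii) are handled by confining the modification to integers strictly inside the $i$-th alternation block of $\tilde g$, which is a contiguous range, and by exploiting the countable infinity of $\bZ$ to supply fresh integers when needed. Condition (iii) follows from a short case analysis on whether the modified integer $i^*$ has $f(i^*)=g(i^*)$ and on the value of $\cD(i^*)$; in each case the observation that changing one probability mass by $\eta$ changes a single $|\cD(i^*)-\cE(i^*)|$-style term by at most $\eta$ and thus $\dist_\TV$ by at most $\tfrac{\eta}{2}$ closes the bound.
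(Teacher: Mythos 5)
Your upper bound is essentially the same argument as the paper's, just organized slightly differently: the paper first proves $\dist_\mathsf{fr-edit}(\str(\pi),\str(\cD_f))=0$ (by refining each block of $\str(\pi)$ into per-element fractional characters using free operations) and then separately bounds $\dist_\mathsf{fr-edit}(\str(\cD_f),\str(\cE_g))\le\dist_\TV(\cD_f,\cE_g)$ by an element-by-element transformation, where you merge these two steps into one pass over the enumeration $i_1<\dotsm<i_N$. Both are correct.

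The lower bound, however, has a genuine gap. Your inductive hypothesis is stated for \emph{probability} distributions $\pi,\pi'$, but the intermediate fractional strings of an edit sequence do not, in general, represent probability distributions. An \textbf{Adjust} of cost $|\delta|/2>0$ changes the total mass by $\delta$, so if the last operation is such an adjust, the preceding string $a^{(k-1)}$ has total mass $1-\delta\ne 1$. The density sequence $\tilde\pi$ it represents sums to $1-\delta$, hence there is no probability distribution $\tilde\cE$ on $\bZ$ with $\pi_{\tilde g,\tilde\cE}=\tilde\pi$ (since $\pi_{\tilde g,\tilde\cE}$ must sum to $1$), and your inductive hypothesis cannot be invoked on the pair $(\pi,\tilde\pi)$. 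The issue shows up concretely in your modification step: if you did obtain a probability distribution $\tilde\cE$, then adding mass $\delta>0$ at a fresh point gives $\cE$ with total mass $1+\delta$, which is not a probability distribution either; if you instead took $\tilde\cE$ to have mass $1-\delta$, you would be silently operating outside the category of labeled distributions the lemma is stated for. One could repair this by generalizing the inductive claim to finite non-negative measures (with the half-$\ell^1$ distance in place of $\dist_\TV$), and noting that the final output has mass $1$, but that requires re-deriving the basic facts (e.g.\ \cref{prop:tv-distance-labeled}) in the measure setting. The paper's proof sidesteps the issue entirely: it normalizes the edit sequence via \cref{fact:edit-standard-form} into an Insert/Rearrange prefix, an Adjust middle, and a Delete/Rearrange suffix, then constructs $(f,\cD)$ from the prefix and $(g,\cE)$ from the (reversed) suffix. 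Both of these fractional strings have total mass $1$, so no labeled distribution is ever attached to a non-normalized intermediate string, and the cost of the Adjust phase bounds $\dist_\TV(\cD_f,\cE_g)$ directly.

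A secondary, smaller issue: your modification step requires a fresh integer $i^*$ inside the $i$-th alternation block of $\tilde g$ that avoids $\supp(\cD)\cup\supp(\tilde\cE)$. But $\tilde g$ comes from the inductive hypothesis, which does not promise that its alternation blocks have any spare room; a block could be a single integer already in $\supp(\cD)$. You would need to strengthen the inductive claim to carry this extra property along, and verify it is preserved through each step. The paper again avoids this because it builds $f$ and $g$ directly from the (Insert/Rearrange)-processed strings, where by construction there is an explicit one-to-one correspondence between fractional characters and integers.
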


From this lemma, we can see that the edit distance for distributions and labeled distributions are
essentially equivalent: for two proper labeled distributions $(f, \cD)$ and $(g,\cE)$, the lemma
implies
\[
  \dist_\edit((f,\cD), (g,\cE)) = \dist_\edit(\pi_{f,\cD}, \pi_{g,\cE}) \,.
\]

It is easy to see that the following inequality holds in general:
\begin{equation}
\label{eq:edit-to-tv-inequality}
  \dist_\edit(\pi, \pi') \leq \dist_\TV(\pi, \pi') \,.
\end{equation}
This can be verified by taking $(f,\cD)$ where $f(i) = \parity(i)$ for $i \in \bN$ and $f(i) = 1$
for $i \leq 0$, and $\cD(i) = \pi(i)$ for $i \in \bN$ and $\cD(i) = 0$ for $i \leq 0$. Define
$(g,\cE)$ similarly for $\pi'$. This satisfies $\pi_{f,\cD} = \pi$ and $\pi_{g,\cE} = \pi'$, and
$\dist_\TV(\cD_f, \cE_g) = \dist_\TV(\pi, \pi')$.

\subsection{Labeled Distribution Testing}
\label{section:labeled-to-standard}

We now introduce \emph{labeled distribution testing}. For labeled distributions $(f,\cD)$ and
$(g,\cE)$ over a fixed domain $\cX$, we abuse notation and write
\[
  \dist_\TV( (f,\cD), (g, \cE)) \define \dist_\TV( \cD_f, \cE_g ) \,,
\]
so that, for a property $\Xi$ of labeled distributions, we have
\[
\far^\TV_\epsilon(\Xi) \define \{ (f, \cD) : \forall (g,\cE) \in \Xi, \dist_\TV(\cD_f, \cE_g) >
\epsilon \} \,.
\]

\begin{definition}[Labeled Distribution Testing]
\label{def:labeled-distribution-testing}
Let $\Xi_1$ and $\Xi_2$ be properties of labeled distributions over a fixed domain $\cX$.
A $(\Xi_1, \Xi_2, \alpha)$-labeled distribution tester, with sample complexity $m$, is an algorithm
$A$ that satisfies the following, for every labeled distribution $(f,\cD)$ over $\cX$:
\begin{enumerate}
\item If $(f,\cD) \in \Xi_1$, then $\Pru{S_f \sim \samp(\cD_f, m)}{ A(S_f) \text{ accepts }} \geq
\alpha$; and
\item If $(f,\cD) \in \Xi_2$, then $\Pru{S_f \sim \samp(\cD_f, m)}{ A(S_f) \text{ rejects }} \geq
\alpha$.
\end{enumerate}
\noindent
The canonical form of this problem has $\Xi_2 \define \far^\TV_\epsilon(\Xi_1)$ for some $\epsilon >
0$.
\end{definition}

We prove here that one can obtain the standard distribution testing and distribution-free
sample-based property testing models from our more general labeled distribution testing model.  To
obtain, from the labeled distribution testing model, the standard distribution testing model, where
the goal is to test a property $\Pi$ of distributions over domain $\cX$, it suffices to choose the
property $\Xi = \Lambda \times \Pi$, where $\Lambda$ contains only the constant 0 function over
domain $\cX$.

It is slightly less obvious how to obtain distribution-free sample-based property testing from the
labeled distribution testing model.  Distribution-free sample-based property testing is defined as
follows.

\begin{definition}[Distribution-Free Sample-Based Property Testing]
Let $\Lambda$ be a property of functions $\cX \to \zo$ for some fixed domain $\cX$, and let
$\epsilon > 0$. A $(\Lambda, \epsilon, \alpha)$-distribution-free sample-based property tester, with
sample complexity $m$, is an algorithm $A$ that satisfies the following, for every function $f : \cX
\to \zo$ and distribution $\cD$ over $\cX$:
\begin{enumerate}
\item If $f \in \Lambda$, then $\Pru{S_f \sim \samp(\cD_f, m)}{ A(S_f) \text{ accepts }} \geq
\alpha$; and
\item If $\Pru{x \sim \cD}{ f(x) \neq g(x) } > \epsilon$ for all $g \in \Lambda$, then $\Pru{S_f
\sim \samp(\cD_f, m)}{ A(S_f) \text{ rejects }} \geq \alpha$.
\end{enumerate}
\end{definition}
This problem cannot be expressed neatly as the problem of distinguishing between properties
$\Lambda_1$ and $\Lambda_2$ of functions $\cX \to \zo$, because the set of functions that should be
rejected depends on the distribution $\cD$. But we can express it as distinguishing two properties
$\Xi_1$, $\Xi_2$ of labeled distributions, as shown in the next two propositions.

\begin{proposition}
\label{prop:app-equivalence-tv}
Fix a domain $\cX$ and let $(f,\cD)$, $(g,\cE)$ be labeled distributions. Then
\[
  \Pru{x \sim \cD}{f(x) \neq g(x)} \leq 2 \cdot \dist_\TV(\cD_f, \cE_G) \,.
\]
\end{proposition}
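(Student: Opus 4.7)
The plan is to apply the same formula for $\dist_\TV$ of labeled distributions that appears in \cref{prop:tv-distance-labeled}, generalized from domain $\bZ$ to the fixed domain $\cX$. The generalization is immediate: by the same case analysis (splitting the sum defining TV distance according to whether $f(x)=g(x)$ or not, and noting that when $f(x)\neq g(x)$ the measures $\cD_f(x,\cdot)$ and $\cE_g(x,\cdot)$ are supported on disjoint labels, so their $\ell_1$-difference is $\cD(x)+\cE(x)$), we get
\[
    \dist_\TV(\cD_f, \cE_g)
    = \frac{1}{2} \sum_{x \in \cX} \Bigl( \ind{f(x) \neq g(x)}\bigl(\cD(x) + \cE(x)\bigr)
      + \ind{f(x) = g(x)}\bigl|\cD(x) - \cE(x)\bigr| \Bigr) \,.
\]

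From here the inequality is a one-line bound. First I would drop the nonnegative $\ind{f(x)=g(x)}|\cD(x)-\cE(x)|$ term and multiply through by $2$, yielding
\[
    2 \cdot \dist_\TV(\cD_f, \cE_g) \geq \sum_{x : f(x) \neq g(x)} \bigl(\cD(x) + \cE(x)\bigr) \,.
\]
Then, since $\cE(x) \geq 0$, this is at least $\sum_{x : f(x)\neq g(x)} \cD(x) = \Pr_{x\sim\cD}[f(x)\neq g(x)]$, which is exactly the desired inequality.

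There is no real obstacle here: the only step that requires any care is verifying the TV-distance formula on the general domain $\cX$, which is a routine extension of the proof of \cref{prop:tv-distance-labeled}. I expect the write-up to be just a few lines.
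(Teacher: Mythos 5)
Your proof is correct. The inequality does follow cleanly from the $\ell_1$ formula for $\dist_\TV(\cD_f,\cE_g)$ once it is extended from $\bZ$ to a general (countable) domain $\cX$, and that extension is indeed routine.

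The paper takes a different route: it uses the supremum-over-events characterization of TV distance rather than the $\ell_1$ formula. It defines $\Delta = \{x : f(x) \neq g(x)\}$, first looks at the event $E = \{(x,b) : x \in \Delta,\ b = g(x)\}$ to conclude $\cE(\Delta) = \cE_g(E) \leq \dist_\TV(\cD_f,\cE_g)$ (since $\cD_f(E) = 0$), and then looks at $F = \{(x,b) : x \in \Delta\}$ to conclude $|\cD(\Delta) - \cE(\Delta)| \leq \dist_\TV(\cD_f,\cE_g)$; combining the two gives $\cD(\Delta) \leq 2\dist_\TV(\cD_f,\cE_g)$. Your approach is more direct and arguably cleaner, since the pointwise $\ell_1$ formula already exposes the term $\sum_{x:f(x)\neq g(x)}(\cD(x)+\cE(x))$, making the factor of $2$ transparent. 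The paper's event-based argument has the minor advantage of not needing to re-derive or cite the pointwise formula and of working without explicit reference to countability of $\cX$, but either proof is perfectly acceptable here.
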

\begin{proof}
For any event $X \subseteq \cX \times \zo$, write $\cD_f(X) \define \Pru{x \sim \cD}{ (x, f(x)) \in
X }$.  Write $\epsilon \define \dist_\TV( \cD_f , \cE_g )$, so that for any event $X \subseteq \cX
\times \zo$, we have $\left| \cD_f(X) - \cE_g(X) \right| \leq \epsilon$.

Let $\Delta \define \{ x \in \cX : f(x) \neq g(x) \}$ and define the event $E \define \{ (x,b) : x
\in \Delta, b = g(x) \}$. Then $\cD_f(E) = 0$ by definition, and $\left| \cD_f(E) - \cE_g(E)
\right| \leq \epsilon$, so $\cE_g(E) \leq \epsilon$. Then also $\cE(\Delta) = \cE_g(E) \leq
\epsilon$.

Define the event $F \define \{ (x,b) : x \in \Delta, b \in \zo \}$, which satisfies $\cD_f(F) =
\cD(\Delta)$ and $\cE_g(F) = \cE(\Delta)$. Then $\left| \cD(\Delta) - \cE(\Delta) \right| \leq
\epsilon$, so $\cD(\Delta) \leq \cE(\Delta) + \epsilon \leq 2\epsilon$.
We conclude that
\[
  \Pru{x \sim \cD}{ f(x) \neq g(x) } = \cD(\Delta) \leq 2\epsilon \,. \qedhere
\]
\end{proof}

\begin{proposition}
\label{prop:tester-equivalence}
Fix any domain $\cX$. Let $\Lambda$ be any property of functions $\cX \to \zo$, and let $\Pi$ be the
set of all distributions over $\cX$. Then, for $\Xi = \Lambda \times \Pi$,
\begin{enumerate}
\item If there is a $(\Xi, \far^\TV_{\epsilon/2}(\Xi),
\alpha)$-labeled distribution tester with sample complexity $m$, then there is a
$(\Lambda, \epsilon, \alpha)$-distribution-free sample-based tester with sample
complexity $m$.
\item If there is a $(\Lambda, \epsilon, \alpha)$-distribution-free sample-based tester with sample
complexity $m$, then there is a $(\Xi, \far^\TV_\epsilon(\Xi))$-labeled distribution tester with
sample complexity $m$.
\end{enumerate}
\end{proposition}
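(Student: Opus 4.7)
Both reductions are direct simulations: the constructed tester runs the given tester verbatim on the sample $S_f$, so the entire content lies in verifying that the completeness/soundness conditions translate correctly between the two settings. The product structure $\Xi = \Lambda \times \Pi$ is what makes this translation work: acceptance under the labeled-distribution model depends only on $f$, since the distribution component is unconstrained.

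For direction~(1), let $A$ be a $(\Xi, \far^\TV_{\epsilon/2}(\Xi), \alpha)$-labeled distribution tester, and define the sample-based tester by running $A$ on $S_f \sim \samp(\cD_f, m)$. Completeness is immediate: if $f \in \Lambda$ then $(f,\cD) \in \Lambda \times \Pi = \Xi$, so $A$ accepts with probability at least $\alpha$. For soundness, I will argue the contrapositive using \cref{prop:app-equivalence-tv}: if $(f,\cD) \notin \far^\TV_{\epsilon/2}(\Xi)$, then there exists $(g, \cE) \in \Xi$ with $\dist_\TV(\cD_f, \cE_g) \leq \epsilon/2$, and then $\Pru{x \sim \cD}{f(x) \neq g(x)} \leq 2\dist_\TV(\cD_f, \cE_g) \leq \epsilon$ with $g \in \Lambda$, contradicting that $f$ is $\epsilon$-far from $\Lambda$ under $\cD$. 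So the $\epsilon$-far inputs lie in $\far^\TV_{\epsilon/2}(\Xi)$ and $A$ rejects with probability at least $\alpha$.

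For direction~(2), let $B$ be a $(\Lambda, \epsilon, \alpha)$-distribution-free sample-based tester and define the labeled-distribution tester by running $B$ on $S_f$. Completeness is again immediate. For soundness, the key observation is an elementary computation analogous to \cref{prop:tv-distance-labeled}: when both labeled distributions share the same distribution $\cD$,
\[
\dist_\TV(\cD_f, \cD_g) \;=\; \Pru{x \sim \cD}{f(x) \neq g(x)} \,,
\]
because the marginals on $\cX$ coincide and only the labels can disagree. Thus if $(f,\cD) \in \far^\TV_\epsilon(\Xi)$, then specializing to $\cE = \cD$ and any $g \in \Lambda$ yields $\Pru{x \sim \cD}{f(x) \neq g(x)} > \epsilon$, so $f$ is $\epsilon$-far from $\Lambda$ under $\cD$ and $B$ rejects with probability at least $\alpha$.

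I do not expect any serious obstacle; the statement is definitional plumbing certifying that labeled-distribution testing strictly generalizes distribution-free sample-based testing. The only subtlety worth flagging is the asymmetric factor of~$2$ between the two directions: direction~(1) inherits the loss from \cref{prop:app-equivalence-tv}, which is why the labeled-distribution tester is required to handle $(\epsilon/2)$-far inputs, whereas direction~(2) loses nothing because the identity $\dist_\TV(\cD_f, \cD_g) = \Pr_{\cD}[f \neq g]$ is exact.
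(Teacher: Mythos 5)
Your proof is correct. Direction~(1) is essentially the paper's argument, restated as a proof by contradiction rather than a direct derivation, but relying on the same tool (\cref{prop:app-equivalence-tv}) and with the same (glossed-over) open/closed boundary technicality present in the paper's version.

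Direction~(2) is where you genuinely diverge, and your route is cleaner. The paper establishes $\Pru{x \sim \cD}{f(x) \neq g(x)} > \epsilon$ by taking a witness event $E$ with $|\cD_f(E) - \cD_g(E)| > \epsilon$, pruning it to points where $f$ and $g$ disagree, partitioning $E = E_f \cup E_g$, and case-splitting on which part carries the mass. You instead specialize to $\cE = \cD$ and invoke the exact identity $\dist_\TV(\cD_f, \cD_g) = \Pru{x \sim \cD}{f(x) \neq g(x)}$, which follows immediately from \cref{prop:tv-distance-labeled} when both labeled distributions share the same marginal on $\cX$ (and which the paper in fact records separately as Fact~\ref{fact:tv-to-ham} in the appendix, proved exactly this way). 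Your approach avoids the event partition altogether and makes it transparent why no factor of $2$ is lost in this direction, which is the asymmetry you correctly flag. The paper's more elaborate argument buys nothing here; it appears to be a relic of not yet having stated the clean identity at the point of this proof.
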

\begin{proof}
For the first conclusion, the input to the distribution-free sample-based property tester is a
function $f : \cX \to \zo$ and a distribution $\cD$ over $\cX$. The algorithm will take a labeled
sample $S_f$ where $S \sim \samp(\cD, m)$, and simulate the labeled distribution tester on $S_f$.
Suppose that $f \in \Lambda$. Then $\cD_f \in \Lambda \times \Pi = \Xi$, so the labeled distribution
tester will accept with probability at least $\alpha$.

Now suppose that $f$ is $\epsilon$-far from $\Lambda$ with respect to $\cD$. Then for all $g : \cX
\to \zo$ and all distributions $\cE$ over $\cX$, we have $\dist_\TV(\cD_f, \cE_g) \geq \frac{1}{2}
\Pru{x \sim \cD}{ f(x) \neq g(x) } > \epsilon/2$, due to \cref{prop:app-equivalence-tv}. Therefore
$(f,\cD) \in \far^\TV_{\epsilon/2}(\Xi)$, so the tester will reject with probability at least
$\alpha$.

For the second conclusion, we obtain a labeled distribution tester for $\Xi$ by taking, on input
$(f,\cD)$, the labeled sample $S_f \sim \samp(\cD_f, m)$, and running the distribution-free
sample-based property tester on $S_f$. 

If $(f,\cD) \in \Xi$ then $f \in \Lambda$, so the property tester will accept with probability at
least $\alpha$.

If $\dist_\TV(\cD_f, \cE_g) > \epsilon$ for all $(g,\cE) \in \Xi$, then in particular
$\dist_\TV(\cD_f, \cD_g) > \epsilon$ for all $g \in \Lambda$. Then there is an event $E \subseteq
\cX \times \zo$ such that $\left|\cD_f(E) - \cD_g(E)\right| > \epsilon$.
We may assume without loss of generality that $f(x) \neq g(x)$ for each $(x,b) \in E$ (since we may
remove any set of pairs $(x,b)$ where $f(x) = g(x)$ without changing this difference). Then we can
partition $E = E_f \cup E_g$ where $E_f \define \{ (x,b) \in E : b = f(x) \}$ and $E_g \define \{
(x,b) \in E : b = g(x) \}$. Then $\cD_f(E) = \cD_f(E_f)$ and $\cD_g(E) = \cD_g(E_g)$.

It must be the case that either $\cD_f(E_f) > \epsilon$ or $\cD_g(E_g) > \epsilon$. If $\cD_f(E_f) >
\epsilon$, then we can choose $\Delta \define \{ x \in \cX : (x, f(x)) \in E_f \}$, so
\[
  \Pru{x \sim \cD}{ f(x) \neq g(x) } \geq \cD(\Delta) = \cD_f(E_f) > \epsilon \,.
\]
If $\cD_g(E_g) > \epsilon$, a similar conclusion holds. Since this holds for all $g \in \Lambda$, we
see that the property tester should reject with probability at least $\alpha$.
\end{proof}

\subsection{Testing-by-Learning}
\label{section:testing-by-learning}

\cite{GGR98} observed that a proper learning algorithm for a hypothesis class $\Pi$ can be used as a
property tester, by including an extra ``verification'' step. It is convenient for us to adapt the
same technique to a different type of learning algorithm that works for classes of labeled
distributions, where the ``hypothesis class'' is not just a set of functions $\cX \to \zo$, but a
joint set of function-distribution pairs.

For fixed domain $\cX$ and property $\Pi$ of distributions over $\cX$, write
$\close^\TV_\epsilon(\Pi)$ for the set of distributions $\pi$ over $\cX$ satisfying
$\dist_\TV(\pi, \Pi) \le \epsilon$.

\begin{definition}[Labeled Distribution Learning]
Let $\Xi$ be a property of labeled distributions on some fixed domain $\cX$. A \emph{labeled
distribution learning algorithm} for $\Xi$, with success probability $\alpha$, error $\epsilon > 0$,
and sample complexity $m$, is an algorithm $A$ that, on any input $(f, \cD)$, receives a labeled
sample $S_f \sim \samp(\cD_f, m)$ and outputs a function $g : \cX \to \zo$, and \emph{succeeds} with
probability at least $\alpha$ over $S_f$ and the randomness of the algorithm. The \emph{success}
event is defined as follows:
\begin{description}
\item[Success:] If $(f, \cD) \in \Xi$, then $\dist_\TV(\cD_f, \cD_g) < \epsilon$.
\end{description}
We call the algorithm \emph{proper} if the success event also has the following additional
conditions:
\begin{description}
\item[Success$^*$:]\;\\
If $(f, \cD) \notin \Xi$, then there exists $\cE$ such that $\cE_g \in \Xi$.

If $(f, \cD) \in \Xi$, then there exists $\cE$ such that $\cE_g \in \Xi$ and
$\dist_\TV(\cD, \cE) < \epsilon$.
\end{description}
\end{definition}

\begin{definition}[Learner-Verifier Pair]
\label{def:learner-verifier-pair}
Let $\Xi$ be a property of labeled distributions on some fixed domain $\cX$,
let $\epsilon, \delta > 0$ and let $A$ be a \emph{proper} labeled distribution learning
algorithm for $\Xi$ with success probability $1-\delta/3$, error $\epsilon/4$, and sample
complexity $m_A$. Let $\cG_A \subseteq \zo^{\cX}$ be the range of $A$. For every
$g \in \cG_A$, let $\Pi_g$ be the property
\[
\Pi_g \define \{ \cE : (g, \cE) \in \Xi \} \,.
\]
Suppose $B = \{B_g\}_{g \in \cG_A}$ is a family of algorithms such that, for every $g \in \cG_A$,
algorithm $B_g$ is a
$(\close^\TV_{\epsilon/4}(\Pi_g), \far^\TV_{\epsilon/2}(\Pi_g), 1-\delta/3)$-distribution tester
with sample complexity $m_B$. We call $(A, B)$ a \emph{learner-verifier pair} for $\Xi$
with success probability $1-\delta$, error $\epsilon$, and sample complexity $m_A + m_B$.
\end{definition}

\begin{proposition}
\label{prop:testing-by-learning}
Let $\Xi$ be a property of labeled distributions such that there exists a learner-verifier pair
$(A,B)$ with success probability $1-\delta$, error $\epsilon > 0$, and sample complexity $m = m_A +
m_B$. Then there is a $(\Xi, \far^\TV_\epsilon(\Xi))$-labeled distribution tester with sample
complexity $m + O(1/\epsilon)$.
\end{proposition}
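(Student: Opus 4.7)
The plan is to follow the classical testing-by-learning template of \cite{GGR98}, adapted to labeled distributions: first run the learner $A$ to produce a hypothesis $g$, then use the verifier $B_g$ to certify that the marginal $\cD$ is close to some distribution in $\Pi_g$, and finally use a small extra batch of samples to check that $f$ itself agrees with $g$ on most of the mass of $\cD$. The first two steps consume $m_A + m_B$ samples, and the last step will consume $O(1/\epsilon)$ samples, giving the stated total.

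\textbf{Algorithm.} Draw $m_A$ labeled samples $S_1 \sim \samp(\cD_f, m_A)$ and run $A$ to obtain $g$. Draw $m_B$ labeled samples $S_2 \sim \samp(\cD_f, m_B)$ and feed their first coordinates (which are i.i.d.\ from $\cD$) into $B_g$; reject if $B_g$ rejects. Finally, draw $m_C = C/\epsilon$ additional labeled samples $S_3 \sim \samp(\cD_f, m_C)$ for a sufficiently large constant $C$, count the number $N$ of pairs $(x, f(x)) \in S_3$ with $f(x) \neq g(x)$, and reject if $N \geq C/(3\epsilon) \cdot \epsilon = C/3$. Otherwise, accept. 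The key identity I will use (from \cref{prop:tv-distance-labeled} applied with the same marginal $\cD$) is $\dist_\TV(\cD_f, \cD_g) = \Pr_{x\sim\cD}[f(x) \neq g(x)]$, so the final check is precisely a Bernoulli-distinction between $\Pr_{x\sim\cD}[f(x) \neq g(x)] \leq \epsilon/4$ and $\geq \epsilon/2$, achievable with $O(1/\epsilon)$ samples by a multiplicative Chernoff bound.

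\textbf{Completeness.} If $(f,\cD) \in \Xi$, then with probability $\geq 1-\delta/3$ the learner succeeds: $\dist_\TV(\cD_f, \cD_g) < \epsilon/4$ (so $\Pr_{x\sim\cD}[f(x) \neq g(x)] < \epsilon/4$), and there exists $\cE$ with $\cE_g \in \Xi$ and $\dist_\TV(\cD, \cE) < \epsilon/4$. The latter puts $\cD \in \close^\TV_{\epsilon/4}(\Pi_g)$, so $B_g$ accepts with probability $\geq 1-\delta/3$. The former, by Chernoff, makes the final check accept with probability $\geq 1-\delta/3$. A union bound gives total success probability $\geq 1-\delta$.

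\textbf{Soundness.} If $(f,\cD) \in \far^\TV_\epsilon(\Xi)$, then with probability $\geq 1-\delta/3$ the Success$^*$ clause forces $\Pi_g$ to be nonempty. For any $\cE \in \Pi_g$ we have $(g, \cE) \in \Xi$, so $\dist_\TV(\cD_f, \cE_g) > \epsilon$; the triangle inequality
\[
  \dist_\TV(\cD_f, \cE_g) \leq \dist_\TV(\cD_f, \cD_g) + \dist_\TV(\cD_g, \cE_g) = \Pr_{x\sim\cD}[f(x) \neq g(x)] + \dist_\TV(\cD, \cE)
\]
then yields a dichotomy: either $\Pr_{x\sim\cD}[f(x) \neq g(x)] > \epsilon/2$, in which case the final Chernoff-based check rejects with probability $\geq 1-\delta/3$, or else $\dist_\TV(\cD, \cE) > \epsilon/2$ holds for \emph{every} $\cE \in \Pi_g$, putting $\cD \in \far^\TV_{\epsilon/2}(\Pi_g)$ so that $B_g$ rejects with probability $\geq 1-\delta/3$. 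Either way the overall algorithm rejects with probability $\geq 1-\delta$ after a union bound.

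\textbf{Main point of care.} The only subtle step is the soundness dichotomy: the triangle inequality must be quantified over \emph{all} $\cE \in \Pi_g$ simultaneously in order for the conclusion $\cD \in \far^\TV_{\epsilon/2}(\Pi_g)$ to hold in the second case. This is exactly why the verifier's ``close'' threshold is set to $\epsilon/4$ and the ``far'' threshold to $\epsilon/2$, matching the $\epsilon/4$ error budget of the proper learner and leaving room for the $\epsilon/2$ slack used by the $f$ vs.\ $g$ check. The remainder is routine bookkeeping.
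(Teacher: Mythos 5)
Your proposal follows the paper's proof essentially verbatim: the same three steps (learner, verifier $B_g$, and an $O(1/\epsilon)$-sample Chernoff estimate of $\Pr_{x\sim\cD}[f(x)\neq g(x)]$), the same use of the proper-learner guarantee in completeness, and the same $\epsilon/4$-vs.-$\epsilon/2$ triangle-inequality dichotomy in soundness. The only cosmetic deviations are the order of the verifier and the mismatch check, the particular threshold constant for the mismatch estimate, and your unnecessary conditioning on the learner succeeding in the soundness case (the dichotomy argument works for any output $g$: when $\Pi_g=\emptyset$, the verifier rejects vacuously, which is how the paper handles it).
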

\begin{proof}
On input $(f, \cD)$, the tester performs the following.
\begin{enumerate}
\item Use $m_A$ samples to run the proper learner $A$, and obtain an output function $g$. 
\item Use $O(1/\epsilon)$ samples from $\cD$ to compute an estimate $z$ of $\Pru{x \sim \cD}{f(x)
\neq g(x)}$, and reject if this is greater than $\tfrac{3}{8} \epsilon$.
\item Use $m_B$ samples to run the distribution tester $B_g$.
\end{enumerate}
Suppose that $(f, \cD) \in \Xi$. Suppose that the algorithms $A$ succeeds, which occurs with
probability at least $1-\delta/3$. Then there exists $\cE$ such that $\cE_g \in \Xi$ and
\[
  \dist_\TV(\cD_f, \cD_g) \leq \epsilon/4 \,,\qquad\text{ and }\qquad \dist_\TV(\cD,\cE) \leq
\epsilon/4 \,,
\]
by the conditions on the algorithm $A$, which is a proper learner. By the multiplicative Chernoff
bound, we have $z < \tfrac{3}{8} \epsilon$ with probability at least $1-\delta/3$, after using
$O(1/\epsilon)$ samples, so the second step passes. Finally, assume the algorithm $B_g$ succeeds,
which occurs with probability $1-\delta/3$. Since $\cE_g \in \Xi$, and $\dist_\TV(\cD,\cE) \leq
\epsilon/4$, we have 
\[
  \dist_\TV(\cD, \Pi_g) \leq \dist_\TV(\cD, \cE) \leq \epsilon/4 \,.
\]
So $\cD \in \close^\TV_{\epsilon/4}(\Pi_g)$, and the algorithm $B_g$ will accept. The probability of any of
these steps failing is at most $\delta$, by the union bound.

Now suppose that $(f, \cD) \in \far^{\TV}_\epsilon(\Xi)$. Suppose for contradiction that 
\[
  \dist_\TV(\cD, \Pi_g) < \epsilon/2
      \qquad \text{ and } \qquad
  \dist_\TV(\cD_f, \cD_g) < \epsilon/2 \,.
\]
Then there exists $\cE \in \Pi_g$ such that $\cE_g \in \Xi$ and $\dist_\TV(\cD, \cE) < \epsilon/2$.
Then
\[
  \dist_\TV(\cD_f, \cE_g) \leq \dist_\TV(\cD_f, \cD_g) + \dist_\TV(\cD_g, \cE_g)
                          =    \dist_\TV(\cD_f, \cD_g) + \dist_\TV(\cD, \cE)
                          < \epsilon \,,
\]
which is a contradiction. So it must be that either $\dist_\TV(\cD, \Pi_g) \geq \epsilon/2$, in
which case the third step rejects with probability at least $1-\delta/3$, or that $\dist_\TV(\cD_f,
\cD_g) \geq \epsilon/2$, in which case the second step rejects with probability at least
$1-\delta/3$, again using the multiplicative Chernoff bound.
\end{proof}

\begin{remark}
This formalization captures the standard testing-by-learning reduction, when $\Xi$ is a property of
labeled distributions obtained by choosing a property $\Lambda_1$ of functions $\cX \to \zo$,
setting $\Lambda_2$ to be the set of all distributions over $\cX$, and letting $\Xi = \Lambda_1
\times \Lambda_2$. In this case, the learner $A$ is the standard PAC learning algorithm (see the
proof of \cref{thm:uniform-k-alternating-learning-bound}), and the
verifier $B$ simply accepts everything.
\end{remark}

\subsection{Density Properties and Distribution Testing Under the Parity Trace}
\label{subsection:labeled-testing-to-parity-testing}
\label{section:density-properties}
\newcommand{\LD}{\textsf{LabeledDist}}
\newcommand{\PT}{\textsf{ParityTrace}}

Labeled-distribution testing is a more general reformulation of distribution-free sample-based
property testing, which allows a richer class of properties to be defined. For the remainder of
\cref{section:sample-based-testing}, we are interested in a certain family of
labeled-distribution properties that we call \emph{density properties}.

\begin{definition}[Density Property]
\label{def:density-property}
A property $\Xi$ of proper labeled distributions is a \emph{density property} if there exists a set
$\Pi$ of probability distributions over $\bN$ such that $\Pi = \Pi(\Xi)$; \ie for any proper labeled
distribution $(f,\cD)$, it holds that $(f,\cD) \in \Xi$ if and only if $\pi_{f,\cD} \in \Pi$. 
\end{definition}

In this subsection, our goal is to establish the relationships between testing density properties
and distribution testing under the parity trace, which are illustrated in
\cref{fig:relative-strength}. Here we prove the $\to$ relations; examples showing the $\not \to$
relations are discussed in \cref{section:relative-strength}.

\begin{figure}[h]
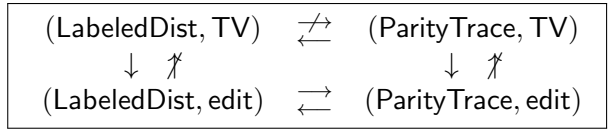

    \centering
\fbox{
    \begin{tabular}{ccc}
        $(\LD, \TV)$   & $\substack{\centernot\longrightarrow \\ \longleftarrow}$ & $(\PT, \TV)$ \\
        $\downarrow \quad  \centernot\uparrow$ &      & $\downarrow \quad \centernot\uparrow$ \\
        $(\LD, \edit)$ & $\substack{\longrightarrow \\ \longleftarrow}$           & $(\PT, \edit)$ \\
    \end{tabular}}
    \caption{Summary of the relative strengths of testing models and distance metrics. An arrow
$(\cM_1, d_1) \to (\cM_2, d_2)$ means that a tester in model $\cM_1$ with respect to distance $d_1$
implies a tester in model $\cM_2$ with respect to distance $d_2$, while $\not \to$ means that the
implication does not hold in general.}
    \label{fig:relative-strength}
\end{figure}

\begin{lemma}
\label{lemma:tv-density-property}
Let $\Xi$ be any density property, and let $\epsilon > 0$. Then $\Pi(\far^\TV_\epsilon(\Xi))
\subseteq \far^\TV_\epsilon(\Pi(\Xi))$.
\end{lemma}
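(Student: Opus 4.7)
The plan is a straightforward contradiction argument exploiting the defining property that membership in $\Xi$ is determined entirely by the density sequence. Fix $\pi \in \Pi(\far^\TV_\epsilon(\Xi))$, witnessed by some $(f,\cD) \in \far^\TV_\epsilon(\Xi)$ with $\pi_{f,\cD} = \pi$. Suppose for contradiction that $\pi \notin \far^\TV_\epsilon(\Pi(\Xi))$, so there exists $\pi' \in \Pi(\Xi)$ with $\dist_\TV(\pi,\pi') \le \epsilon$. The goal is then to manufacture a labeled distribution $(g,\cE)$ satisfying (i) $\pi_{g,\cE} = \pi'$ and (ii) $\dist_\TV(\cD_f,\cE_g) \le \dist_\TV(\pi,\pi')$. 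Once we have this, the density-property hypothesis $\Pi(\Xi) = \Pi(\Xi)$ gives $(g,\cE) \in \Xi$, contradicting $(f,\cD) \in \far^\TV_\epsilon(\Xi)$.

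The construction of $(g,\cE)$ piggybacks on the alternation structure of $f$. Let $a_1 < a_2 < \dotsm$ be the alternation sequence of $f$, and choose $g$ to be proper of the same parity as $f$ with the \emph{same} alternation sequence extended, if necessary, by additional alternation points placed arbitrarily in the tail where $\cD$ has no mass (this is needed only when $\pi'$ is supported on more indices than $\pi$). With $g$ fixed, we define $\cE$ interval by interval: on each interval $(a_{i-1}, a_i]$ we place total mass $\pi'(i)$, coupled to $\cD$ restricted to that interval in the standard way (scale $\cD|_{(a_{i-1},a_i]}$ down when $\pi'(i) \le \pi(i)$, or keep $\cD|_{(a_{i-1},a_i]}$ and add $\pi'(i)-\pi(i)$ extra mass at an arbitrary point when $\pi'(i) > \pi(i)$). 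On any extra intervals introduced in the tail, simply assign mass $\pi'(i)$. By construction $\pi_{g,\cE} = \pi'$.

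To verify (ii), I would apply \cref{prop:tv-distance-labeled}. Since $g = f$ on the union of the ``original'' alternation intervals of $f$, the $\ind{f \neq g}$ contribution vanishes there, and within each such interval the coupling gives exactly $\sum_{x \in (a_{i-1},a_i]} |\cD(x)-\cE(x)| = |\pi(i)-\pi'(i)|$. On any additional intervals appended to the tail, $\cD$ has no mass while $\cE$ has mass $\pi'(i)$, contributing $\pi'(i) = |\pi(i)-\pi'(i)|$ regardless of whether the labels agree. Summing all contributions yields $\dist_\TV(\cD_f,\cE_g) \le \tfrac{1}{2}\sum_i |\pi(i)-\pi'(i)| = \dist_\TV(\pi,\pi')$, as required.

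The only real subtlety is the bookkeeping when $\pi$ and $\pi'$ do not have compatible supports (in particular, when $f$ has fewer alternation points than are needed to realize $\pi'$); this is why the construction allows us to inject further alternation points into the tail of $g$. Since $\cD$ has no mass past the last alternation point of $f$ in that case, these extra intervals carry only $\cE$-mass and contribute to the TV distance exactly the deficit $\sum_{i > K} \pi'(i)$, which is precisely $\sum_{i > K} |\pi(i)-\pi'(i)|$. I do not foresee a genuine obstacle; the argument is almost entirely bookkeeping and a direct application of \cref{def:density-property}.
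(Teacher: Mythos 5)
Your proposal is correct and takes essentially the same contradiction argument as the paper: construct $(g,\cE) \in \Xi$ with $\pi_{g,\cE} = \pi'$ by redistributing mass inside each alternation interval so that the TV cost equals $\dist_\TV(\pi,\pi')$, then invoke the density-property hypothesis to obtain a contradiction with $(f,\cD) \in \far^\TV_\epsilon(\Xi)$. The paper keeps $g = f$ and silently assumes $f$'s alternation sequence is long enough to realize $\pi'$; your version extends $g$'s alternation sequence in the tail to handle $\pi'$ of wider support, which is a sensible refinement, though your assertion that $\cD$ has no mass past $f$'s last alternation point is not automatic when $\cD$ has unbounded support and would need to be repaired by pushing the extra alternations into a far tail carrying arbitrarily small $\cD$-mass and taking a limit.
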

\begin{proof}
Suppose that $\pi \in \Pi( \far^\TV_\epsilon(\Xi) )$, so that $\pi = \pi_{f,\cD}$ for some $\cD_f
\in \far^\TV_\epsilon(\Xi)$. Suppose for contradiction that $\dist_\TV(\pi, \pi') \leq \epsilon$ for
some $\pi' = \pi_{g,\cE} \in \Pi(\Xi)$. Let $a_1 < a_2 < a_3 < \dotsm$ be the alternation sequence
of $f$, and define $a_0 = -\infty$. Then we define the labeled distribution $\cF_f$ as follows.
Below, we assume without loss of generality that $\cD_f$ is 1-proper; if it is 0-proper, we require
to adjust some of the indices by 1.

For each $i \in \bN$, let $\delta_i \define \pi'(i) - \sum_{x \in (a_{i-1}, a_i]} \cD(x)$.
If $\delta_i \geq 0$, for $x \in (a_{i-1}, a_i]$, we may obtain $\cF$ on interval $(a_{i-1}, a_i]$
by setting $\cF(z) = \cD(z)+\delta$ for an arbitrarily chosen element $z \in (a_{i-1}, a_i]$, and
then choosing $\cF(x) = \cD(x)$ for the remaining $x \neq z$ in this interval.
 
If $\delta_i < 0$, we may obtain $\cF$ on interval $(a_{i-1}, a_i]$ by subtracting a total of
$|\delta_i|$ from the densities $\cD(x)$ inside the interval; this is possible, since we must have
$\sum_{a_{i-1} < x \leq a_i} \cD(x) \geq |\delta_i|$.

It is easy to verify that this construction satisfies $\pi_{f,\cF} = \pi' = \pi_{g,\cE}$. Since
$\Xi$ is a density property, it holds that $\cF_f \in \Xi$.  We can see that
\begin{align*}
  \dist_\TV(\cD_f, \cF_f)
    &= \frac{1}{2} \sum_x |\cD(x) - \cF(x)| 
    = \frac{1}{2} \sum_i \left( \sum_{a_{i-1} < x \leq a_i} |\cD(x) - \cF(x)| \right)  \\
    &= \frac{1}{2} \sum_i |\delta_i| 
    = \dist_\TV(\pi, \pi') \leq \epsilon \,.
\end{align*}
But this contradicts the assumption that $\cD_f \in \far^\TV_\epsilon(\Xi)$.  So
$\Pi(\far^\TV_\epsilon(\Xi)) \subseteq \far^\TV_\epsilon(\Pi(\Xi))$.
\end{proof}

Density properties are closed under Boolean operations; in particular, we require the next fact,
which follows by definition.

\begin{fact}
\label{fact:density-intersection}
Suppose $\Xi_1$ and $\Xi_2$ are density properties. Then $\Xi_1 \cap \Xi_2$ is a density property.
\end{fact}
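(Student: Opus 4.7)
The plan is to show that $\Xi_1 \cap \Xi_2$ is a density property by exhibiting an explicit witness set of distributions, namely $\Pi \define \Pi(\Xi_1) \cap \Pi(\Xi_2)$, and verifying the membership equivalence required by \cref{def:density-property}.

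First I would unwrap the definition: since $\Xi_1$ is a density property, writing $\Pi_1 = \Pi(\Xi_1)$, for every proper labeled distribution $(f,\cD)$ it holds that $(f,\cD) \in \Xi_1$ iff $\pi_{f,\cD} \in \Pi_1$. The analogous statement holds for $\Xi_2$ with $\Pi_2 = \Pi(\Xi_2)$. From this, for any proper $(f,\cD)$ we immediately get the chain of equivalences $(f,\cD) \in \Xi_1 \cap \Xi_2$ iff $\pi_{f,\cD} \in \Pi_1$ and $\pi_{f,\cD} \in \Pi_2$, iff $\pi_{f,\cD} \in \Pi_1 \cap \Pi_2 = \Pi$. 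So $\Pi$ is a valid witness that $\Xi_1 \cap \Xi_2$ depends only on the density sequence.

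To finish the argument, I should check the consistency condition $\Pi(\Xi_1 \cap \Xi_2) = \Pi$ required in the definition (which asks that the witness set actually equals $\Pi(\Xi_1 \cap \Xi_2)$). One inclusion is immediate: if $\pi \in \Pi(\Xi_1 \cap \Xi_2)$ then $\pi = \pi_{f,\cD}$ for some $(f,\cD) \in \Xi_1 \cap \Xi_2$, hence $\pi \in \Pi_1 \cap \Pi_2$. For the reverse, suppose $\pi \in \Pi_1 \cap \Pi_2$. Then there exists some proper $(f,\cD)$ with $\pi_{f,\cD} = \pi$ (for instance, any witness that $\pi \in \Pi_1$), and by the density property of $\Xi_1$ we obtain $(f,\cD) \in \Xi_1$, while by the density property of $\Xi_2$ we obtain $(f,\cD) \in \Xi_2$. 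Thus $(f,\cD) \in \Xi_1 \cap \Xi_2$ and $\pi \in \Pi(\Xi_1 \cap \Xi_2)$.

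There is no real obstacle here; the only subtlety worth flagging is that the second inclusion uses the fact that membership in $\Xi_1$ and $\Xi_2$ depends \emph{only} on the density sequence $\pi_{f,\cD}$, so it does not matter which proper representative of $\pi$ we pick. This is exactly the content of being a density property, so the proof will amount to little more than unwrapping definitions.
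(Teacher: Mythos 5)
Your proof is correct and is exactly the canonical unwrapping of the definition; the paper offers no proof beyond "follows by definition," so there is no different route to compare against. The witness $\Pi(\Xi_1) \cap \Pi(\Xi_2)$ is the right choice, and your observation that the second inclusion hinges on membership in each $\Xi_i$ depending only on the density sequence is precisely the point worth making explicit.
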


We will show that labeled distribution testing, and distribution testing under the parity trace,
are essentially equivalent for density properties. The first step is to show that labeled
distribution testers for density properties can always be transformed into a restricted type of
tester that ignores the absolute position of the sample points, and keeps only their labels and
their order. This proof is inspired by one in \cite{DKN15a}. We require some notation.

For any multiset $S \subseteq \bZ$ of size $m$, write $S = \{ x_1, x_2, \dotsm, x_m \}$ where we put
$x_1 \leq x_2 \leq \dotsm \leq x_m$. Then for any $f : \bZ \to \zo$, we will write the ordered
sequence of points in $S$ labeled by $f$ as
\[
  S_f \define \left( (x_1, f(x_1)), (x_2, f(x_2)), \dotsc, (x_m, f(x_m)) \right) \,.
\]
Then we define
\[
  \trace^*(S_f) \define ( f(x_1), f(x_2), \dotsc, f(x_k) ) \,.
\]

\begin{fact}
\label{fact:trace-distribution-equiv}
Let $(f, \cD)$ be any proper labeled distribution. For any $m$, let $T \sim \samp(\pi_{f,\cD}, m)$
and $S \sim \samp(\cD,m)$. Then the random variables $\trace^*(S_f)$ and $\trace(T)$ are identically
distributed.
\end{fact}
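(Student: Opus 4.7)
The plan is to exhibit an explicit coupling between the two sampling processes. Given a sample $S = \{x_1 \leq \dotsb \leq x_m\}$ from $\cD$, I will map each point $x_j$ to the index $\phi(x_j) \in \bN$ of the alternation interval it belongs to. Concretely, for a 1-proper labeled distribution with alternation sequence $a_1 < a_2 < \dotsb$ (setting $a_0 \define -\infty$), define $\phi(x) = i$ when $x \in (a_{i-1}, a_i]$; the 0-proper case uses the same idea with an index shift of $1$ to match the definition of $\pi_{f,\cD}$.

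The key observations are then: (i) since the intervals $(a_{i-1}, a_i]$ are disjoint and monotonically ordered, the map $\phi$ is order-preserving, so $\phi(x_1) \leq \phi(x_2) \leq \dotsb \leq \phi(x_m)$; (ii) when $x$ is drawn from $\cD$, the pushforward distribution of $\phi(x)$ is exactly $\pi_{f,\cD}$ by the definition of the density sequence, since $\Pr{\phi(x) = i} = \cD(a_{i-1}, a_i] = \pi_{f,\cD}(i)$; and (iii) by the construction of the density sequence, $f(x) = \parity(\phi(x))$ for every $x$ (in the 1-proper case $f$ takes value $\parity(i)$ on the $i$-th interval, and similarly for 0-proper after the index shift).

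Combining these, the coupled random multiset $T' \define \{\phi(x_1), \dotsc, \phi(x_m)\}$ is distributed exactly as $T \sim \samp(\pi_{f,\cD}, m)$ (the $x_j$ are independent from $\cD$, hence the $\phi(x_j)$ are independent from $\pi_{f,\cD}$). Moreover, because $\phi$ preserves order, the sorted listing of $T'$ is $(\phi(x_1), \dotsc, \phi(x_m))$, so
\[
  \trace(T') = (\parity(\phi(x_1)), \dotsc, \parity(\phi(x_m))) = (f(x_1), \dotsc, f(x_m)) = \trace^*(S_f) \,.
\]
Therefore $\trace^*(S_f)$ and $\trace(T)$ have the same distribution.

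There is no real technical obstacle here; the only point requiring mild care is bookkeeping for the 0-proper case (where the first nonempty interval corresponds to index $i = 2$ in the density sequence) and the boundary behavior at $a_0 = -\infty$, but in both cases the crucial identity $f(x) = \parity(\phi(x))$ follows immediately from the definition of $\pi_{f,\cD}$.
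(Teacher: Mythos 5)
The paper states this as a \textbf{Fact} with no proof at all, treating it as immediate from the definitions of $\pi_{f,\cD}$ and $\trace^*$. Your proof is a correct spelling-out of exactly the reasoning the paper takes for granted: push each sample point forward to the index of its alternation interval via the monotone map $\phi$, observe that the pushforward of $\cD$ under $\phi$ is precisely $\pi_{f,\cD}$ by the definition of the density sequence, that $f = \parity \circ \phi$, and that monotonicity of $\phi$ makes the two traces coincide pointwise under the coupling. The argument is sound, including the index shift for the 0-proper case. One small housekeeping point you wave at but don't spell out: when the alternation sequence is finite of length $t$ (which the paper's definition permits), you need the convention $a_{t+1} = \infty$ so that $\phi$ is defined on $(a_t, \infty)$ and the last entry of the density sequence is $\cD(a_t, \infty)$; with that convention nothing else changes.
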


\begin{lemma}
\label{lemma:density-property-ramsey}
Let $\Xi_1$ and $\Xi_2$ be any density properties, let $\alpha \in (0,1)$, and suppose there is
an algorithm $A$ and a number $m$ such that: 
\begin{enumerate}
\item If $\cD_f \in \Xi_1$
  then $\Pru{S \sim \samp(\cD,m)}{ A(S_f) \text{ accepts }} \geq \alpha$; and,
\item If $\cD_f \in \Xi_2$
  then $\Pru{S \sim \samp(\cD,m)}{ A(S_f) \text{ rejects }} \geq \alpha$.
\end{enumerate}
Then for any $\delta > 0$, there is an algorithm $A'$ satisfying
\begin{enumerate}
\item If $\cD_f \in \Xi_1$
  then $\Pru{S \sim \samp(\cD,m)}{ A'(\trace^*(S_f)) \text{ accepts }} \geq \alpha - \delta$; and,
\item If $\cD_f \in \Xi_2$
  then $\Pru{S \sim \samp(\cD,m)}{ A'(\trace^*(S_f)) \text{ rejects }} \geq \alpha - \delta$.
\end{enumerate}
\end{lemma}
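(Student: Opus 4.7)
The plan is to apply the infinite Ramsey theorem to show that $A$'s acceptance probability on an ordered labeled $m$-sample essentially depends only on the labels (the trace pattern), not on the absolute positions, when the positions are drawn from a suitable monochromatic set. Then $A'$ is obtained by replacing the underlying positions with ``canonical'' ones lying in this monochromatic set, using the fact that $\Xi_1$ and $\Xi_2$ are density properties and therefore insensitive to the absolute positions.

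First, I will view $A$ as inducing, for each ordered $m$-subset $x_1 < \dotsm < x_m$ of $\bZ$ and each label vector $b \in \zo^m$, the acceptance probability $p(x_1, \dotsc, x_m; b) \in [0,1]$. Fix a rounding parameter $\eta \define \delta/2$ and round each $p(\cdot\,; b)$ to the nearest multiple of $\eta$, obtaining a coloring of the $m$-subsets of $\bZ$ by $\left(1/\eta + 1\right)^{2^m}$ colors (one entry per $b \in \zo^m$). By the infinite Ramsey theorem for $m$-subsets, there is an infinite set $I \subseteq \bZ$ on which this coloring is constant. Consequently, there is a function $\phi : \zo^m \to [0,1]$ (the common rounded value) such that for every ordered $m$-tuple $x_1 < \dotsm < x_m$ from $I$ and every $b \in \zo^m$,
\[
  \left| \,p(x_1, \dotsc, x_m;\, b) - \phi(b) \,\right| \leq \eta = \delta/2.
\]
I then define $A'$ as: on input $t \in \zo^m$, accept with probability $\phi(t)$.

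Next, to prove correctness of $A'$, fix any proper labeled distribution $(f, \cD)$ with $\cD_f \in \Xi_1$ (the $\Xi_2$ case is symmetric). Using that $\Xi_1$ is a density property, construct $(f', \cD')$ with $\pi_{f',\cD'} = \pi_{f,\cD}$ and $\supp(\cD') \subseteq I$: since $I$ is infinite, one can place one atom of $\cD'$ inside $I$ for each nonzero entry of $\pi_{f,\cD}$, with matching masses, and insert alternation points of $f'$ between consecutive atoms (and, if needed, below the minimum atom to fix the initial parity). Then $\pi_{f',\cD'} = \pi_{f,\cD}$, so $\cD'_{f'} \in \Xi_1$, and hence $\Pr_{S' \sim \samp(\cD', m)}[A(S'_{f'}) \text{ accepts}] \geq \alpha$. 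Because $\supp(\cD') \subseteq I$, every sample $S'$ yields an ordered tuple from $I$, so by the Ramsey property,
\[
  \Pru{S'}{A(S'_{f'}) \text{ accepts}} = \Exu{S'}{p(S'; \trace^*(S'_{f'}))}
  = \Exu{S'}{\phi(\trace^*(S'_{f'}))} \pm \delta/2.
\]
Finally, \cref{fact:trace-distribution-equiv} (applied to both $(f,\cD)$ and $(f',\cD')$, which share a density sequence) implies that $\trace^*(S_f)$ and $\trace^*(S'_{f'})$ are identically distributed, so the rightmost expectation equals $\Pr[A'(\trace^*(S_f)) \text{ accepts}]$ by the definition of $A'$. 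Combining these estimates gives the desired $\alpha - \delta$ bound; the symmetric argument handles $\Xi_2$ after possibly taking $\eta = \delta/2$ throughout.

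The main technical obstacle is ensuring the Ramsey coloring has only finitely many colors despite the acceptance probabilities being real numbers, which is handled by the rounding to multiples of $\eta$; a minor bookkeeping issue is constructing $(f', \cD')$ with $\supp(\cD') \subseteq I$ when $\pi_{f,\cD}$ has infinite support, which uses infinitude of $I$ together with the freedom to choose alternation points of $f'$ anywhere in $\bZ$ (not just in $I$) between the atoms of $\cD'$.
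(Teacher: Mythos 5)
Your proposal follows the same structural route as the paper: Ramsey's theorem produces an infinite set $I$ on which the tester's behavior is position-independent, the density property lets you transplant $(f,\cD)$ to a labeled distribution $(f',\cD')$ supported inside $I$ without changing membership in $\Xi_1$ or $\Xi_2$, and \cref{fact:trace-distribution-equiv} transfers the guarantee to $A'$ acting only on $\trace^*(S_f)$. Your rounding of acceptance probabilities to multiples of $\eta$ is a nice refinement that cleanly accommodates a randomized $A$, whereas the paper's coloring (by $2^{2^m}$ decision functions $A_S : \zo^m \to \zo$) implicitly assumes a deterministic tester.

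However, there is a genuine gap: you do not handle samples with repeated elements. Ramsey's theorem, as invoked, colors $m$-\emph{subsets} of $I$ (strictly ordered tuples); it says nothing about multisets with collisions. After you "place one atom of $\cD'$ inside $I$ for each nonzero entry of $\pi_{f,\cD}$," a sample $\bm{S'} \sim \samp(\cD', m)$ will typically contain repeats — for instance, if $\pi_{f,\cD} = (1/2, 1/2, 0, \dotsc)$, then $\cD'$ has only two atoms of mass $1/2$ each, and any $m \geq 2$ sample collides with overwhelming probability. On the collision event, the equality $\Pr[A \text{ accepts } S'_{f'}] = \phi(\trace^*(S'_{f'})) \pm \delta/2$ is not justified by Ramsey, so the chain of estimates breaks. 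The paper's proof addresses precisely this: it spreads each density $\cD(a_{i-1}, a_i]$ uniformly over $C = \lceil m^2/\delta \rceil$ consecutive points of the monochromatic set, so every atom of $\cD'$ has mass at most $1/C$, and a union bound gives $\Pr[\bm{S'} \text{ has a collision}] < m^2/C \leq \delta$. You would need the same spreading step (and hence an additional $\delta$-budget for the collision probability, on top of your $\delta/2$ rounding loss) to make the argument go through. Once you add that step, your proof would be correct and, thanks to the rounding idea, slightly more general.
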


To prove this lemma, we require the infinite Ramsey theorem.  For any set $X$ and $n \in \bN$, let
${X \choose n}$ denote the set of size $n$ subsets of $X$.

\begin{theorem}[Infinite Ramsey Theorem~\cite{Ram09}]
\label{thm:ramsey}
Fix any $c,n \in \bN$ and let $X$ be any countably infinite set. For any coloring of ${ X \choose
n}$ by $c$ colors, there exists an infinite set $Y \subseteq X$ such that sets in ${ Y \choose n }$
have the same color.
\end{theorem}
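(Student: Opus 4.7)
The plan is to prove the theorem by induction on $n$. The base case $n=1$ is simply the pigeonhole principle: any $c$-coloring of a countably infinite set must have an infinite monochromatic class, since a finite union of finite sets is finite.

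For the inductive step, assuming the statement for $n-1$, I would construct $Y$ by a greedy iterated extraction. Starting with $X_0 = X$, pick any $x_1 \in X_0$ and define an induced coloring $\chi_1$ of $\binom{X_0 \setminus \{x_1\}}{n-1}$ by $\chi_1(S) \define \chi(\{x_1\} \cup S)$. The inductive hypothesis produces an infinite $X_1 \subseteq X_0 \setminus \{x_1\}$ on which $\chi_1$ is constant, with some value $c_1 \in [c]$. Iterating this construction yields a nested chain $X_0 \supseteq X_1 \supseteq X_2 \supseteq \cdots$, a sequence of chosen elements $x_1, x_2, \ldots$ with $x_{k+1} \in X_k$, and a sequence of colors $c_1, c_2, \ldots \in [c]$.

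The key structural invariant is that for any indices $i_1 < i_2 < \cdots < i_n$, the subset $\{x_{i_2}, \ldots, x_{i_n}\}$ lies entirely within $X_{i_1}$, so by the defining property of $c_{i_1}$ we have $\chi(\{x_{i_1}, x_{i_2}, \ldots, x_{i_n}\}) = c_{i_1}$. One final application of pigeonhole to the sequence $(c_k)_k$ yields a color $c^*$ and an infinite index set $I \subseteq \bN$ with $c_i = c^*$ for every $i \in I$; then $Y \define \{x_i : i \in I\}$ is the desired infinite subset on which $\binom{Y}{n}$ is monochromatic in color $c^*$.

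No substantial obstacle arises: the whole argument is a clean nested induction, and the only bookkeeping point to verify is that each $X_k$ output by the inductive hypothesis is infinite, which is immediate because the hypothesis is invoked on an infinite set at every stage of the recursion.
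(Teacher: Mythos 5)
Your proof is correct: this is the standard argument for the infinite Ramsey theorem, inducting on the uniformity $n$ and repeatedly refining to nested infinite sets $X_0 \supseteq X_1 \supseteq \cdots$ with distinguished points $x_k$, then applying pigeonhole to the resulting color sequence. The paper does not prove this statement itself — it is a classical result cited from Ramsey's original paper — so there is nothing to compare against, but your inductive extraction argument is exactly the textbook proof and all the steps (the nesting invariant $\{x_{i_2},\dotsc,x_{i_n}\} \subseteq X_{i_1}$, the distinctness of the $x_k$, and the final pigeonhole) check out.
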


We may now prove our \cref{lemma:density-property-ramsey}.

\begin{proof}[Proof of \cref{lemma:density-property-ramsey}] Since $\Xi_1, \Xi_2$ are density
properties, they have associated sets of density sequences $\Pi(\Xi_1)$ and $\Pi(\Xi_2)$.

On input $(f,\cD)$, the algorithm $A$ receives $S_f$, where $S \sim \samp(\cD,m)$. For each multiset
$S \subseteq \bN$ of size $m$, the algorithm's decision on $S_f$ can be written as $A(S_f) =
A_S(\trace^*(S_f))$, where $A_S : \zo^m \to \zo$. There are at most $b = 2^{2^m}$ possible decision
functions. We identify each possible decision function $\Delta : \zo^m \to \zo$ with an element of
$[b]$.

For each subset $S \subset \bN$ of size $m$, we color $S$ with the function $A_S$, which we have
identified with an element of $[b]$. By \cref{thm:ramsey}, there exists an infinite set $N \subset
\bN$ such that all $S \subset \bN$ of size $m$ have the same color. Then there exists a decision
function $\Delta : \zo^m \to \zo$ such that, for each $S \in {N \choose m}$, $A_S = \Delta$.

We now define the algorithm $A'$ as follows. On input $\trace^*(S_f) = (f(x_1), f(x_2), \dotsc,
f(x_m))$, $A'$ will simply output $\Delta(f(x_1), f(x_2), \dotsc, f(x_m))$. It remains to show that
this algorithm will be correct.

Fix any input $(f,\cD)$; without loss of generality, we assume $(f,\cD) \in \Xi_1$, since the
analogous argument will hold for $\Xi_2$.  Let $a_1 < a_2 < a_3 < \dotsm$ be the alternation
sequence for $f$.  We will also assume that $(f,\cD)$ is 1-proper, since a similar argument will
hold when $(f,\cD)$ is 0-proper (the difference being that we would have $\pi_{f,\cD}(1) = 0$ and
$\pi_{f,\cD}(i+1) = \cD(a_{i-1}, a_i]$ instead of $\pi_{f,\cD}(i) = \cD(a_{i-1}, a_i]$).

Choose $C = \lceil m^2 / \delta \rceil$.  Since $N$ is infinite, we may choose a one-to-one mapping
$\gamma : \bN \to N \cup \{\infty\}$, satisfying
\[
  1 < \gamma(a_1) < \gamma(a_2) < \dotsm < \gamma(a_k) < \dotsm \,,
\]
and define $\gamma(a_0) = 1$ for ease of notation. We may choose $\gamma$ to satisfy $|N \cap
(\gamma(a_{i-1}), \gamma(a_i)] | \geq C$ for each $i \geq 1$. Define a distribution $\cD'$ by
assigning value $\cD'(x) = \frac{1}{C} \cdot \cD( a_{i-1}, a_i ]$ for the first $C$ elements $x \in
N \cap (\gamma(a_{i-1}), \gamma(a_i)]$, and define the function $f' : \bZ \to \zo$ as the unique
function with alternation sequence $\gamma(a_1) < \gamma(a_2) < \dotsm$.  Observe that, for each $i
\in \bN$,
\[
  \pi_{f',\cD'}(i) = \cD(\gamma(a_{i-1}), \gamma(a_i)] = C \cdot \frac{1}{C} \cD(a_{i-1},a_i]
                  = \pi_{f,\cD}(i) \,,
\]
so $(f',\cD') \in \Xi_1$ because $\Xi_1$ is a density property. So
\[
  \alpha \leq \Pru{S' \sim \samp(\cD',m)}{ A_{S'}( \trace^*(S'_{f'}) ) \text{ accepts } } \,.
\]
We have $\supp(\cD') \subseteq N$, so $S' \subseteq N$ with probability 1. So, if $S'$ is a set of
size $m$ (\ie each element of the multiset $S$ occurs with multiplicity 1), then $A_{S'} = \Delta$.
Let $F(S')$ be the event that $S'$ is a set of size $m$. Since each element of $N$ has density at
most $1/C$, the union bound gives
\[
  \Pru{S'}{ \neg F(S') } < \frac{m^2}{C} \leq \delta \,.
\]
Then
\begin{align*}
  \alpha \leq \Pru{S'}{ A_{S'}( \trace^*(S'_{f'}) ) \text{ accepts } }
         &\leq \Pru{S'}{ \Delta( \trace^*(S'_{f'}) ) \text{ accepts } } 
                + \Pru{S'}{ \neg F(S') } \\
         &< \Pru{S'}{ \Delta( \trace^*(S'_{f'}) ) \text{ accepts } } 
                + \delta \,.
\end{align*}
Now observe that, for $S \sim \samp(\cD,m)$, the variables
$\trace^*(S'_{f'})$ and $\trace^*(S_f)$ are identically distributed. So we have
\[
  \alpha-\delta < \Pru{S'}{ \Delta( \trace^*(S'_{f'}) ) \text{ accepts } } 
  = \Pru{S}{ \Delta( \trace^*(S_{f}) ) \text{ accepts } } \,.
\]
This concludes the proof, since we may repeat the analogous argument for $\Xi_2$.
\end{proof}

We may now establish the general equivalence between labeled distribution testing and distribution
testing under the parity trace. (Note that the second part of the lemma below does not require that
$\Xi_1$ and $\Xi_2$ are density properties.)

\begin{lemma}
\label{lemma:labeled-to-linear-trace-reductions}
Let $\Xi_1$ and $\Xi_2$ be properties of labeled distributions. Then:
\begin{enumerate}
\item Suppose that $\Xi_1$ and $\Xi_2$ are density properties.
If there is a $(\Xi_1, \Xi_2)$-labeled distribution tester with sample complexity $m$ and
success probability $\alpha$, then for any $\delta > 0$ there is a $(\Pi(\Xi_1),
\Pi(\Xi_2))$-distribution tester under the parity trace, with sample complexity $m$ and success
probability $\alpha-\delta$.
\item If there is a $(\Pi(\Xi_1), \Pi(\Xi_2))$-distribution tester under the parity
trace with sample complexity $m$ and success probability $\alpha$, then there is a $(\Xi_1,
\Xi_2)$-labeled distribution tester with sample complexity $m$ and success probability $\alpha$.
\end{enumerate}
\end{lemma}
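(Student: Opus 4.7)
The proof has two directions. I will handle each in turn, with the first direction being essentially a reformulation of \cref{lemma:density-property-ramsey}, while the second direction is a direct simulation.

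\textbf{Direction 2 (easier).} Suppose we have a $(\Pi(\Xi_1), \Pi(\Xi_2))$-distribution tester $B$ under the parity trace with sample complexity $m$ and success probability $\alpha$. On input $(f,\cD)$, we simulate $B$ as follows: draw $S \sim \samp(\cD, m)$, compute the ordered sequence of labels $\trace^*(S_f)$, and feed this string to $B$. By \cref{fact:trace-distribution-equiv}, $\trace^*(S_f)$ has the same distribution as $\trace(T)$ for $T \sim \samp(\pi_{f,\cD}, m)$. If $(f,\cD) \in \Xi_1$, then $\pi_{f,\cD} \in \Pi(\Xi_1)$ (by definition of $\Pi(\Xi_1)$), so $B$ accepts with probability at least $\alpha$; symmetrically for $\Xi_2$. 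This direction does not require $\Xi_1$, $\Xi_2$ to be density properties.

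\textbf{Direction 1 (harder).} Suppose we have a $(\Xi_1, \Xi_2)$-labeled distribution tester $A$ with sample complexity $m$ and success probability $\alpha$, and fix any $\delta > 0$. The obstacle is that $A$ may consult the absolute positions of sample points in $\bZ$, whereas a parity trace tester only sees the label sequence. To bridge this gap, I apply \cref{lemma:density-property-ramsey} directly to $A$: since $\Xi_1$ and $\Xi_2$ are density properties, the lemma yields an algorithm $A'$ whose decision depends only on $\trace^*(S_f)$ and which is correct with probability at least $\alpha - \delta$ on both $\Xi_1$ and $\Xi_2$ inputs.

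\textbf{Constructing the parity trace tester.} I define the parity trace tester $B$ on input $\trace(T)$, where $T \sim \samp(\pi, m)$, by simply running $A'(\trace(T))$. To verify correctness, fix any $\pi \in \Pi(\Xi_1)$; by definition of $\Pi(\Xi_1)$, there is some proper labeled distribution $(f, \cD) \in \Xi_1$ with $\pi_{f,\cD} = \pi$. For $S \sim \samp(\cD, m)$, \cref{fact:trace-distribution-equiv} gives that $\trace^*(S_f)$ and $\trace(T)$ are identically distributed, so
\[
\Pru{T}{B(\trace(T)) \text{ accepts}} = \Pru{S}{A'(\trace^*(S_f)) \text{ accepts}} \geq \alpha - \delta \,.
\]
The symmetric argument for $\pi \in \Pi(\Xi_2)$ completes the proof. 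The main conceptual point is that the density property assumption is exactly what allows us to replace $(f,\cD)$ by the labeled distribution $(f', \cD')$ constructed on a Ramsey-monochromatic set in the proof of \cref{lemma:density-property-ramsey}, without leaving the property; this is what powers the whole reduction.
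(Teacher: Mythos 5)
Your proof is correct and follows essentially the same approach as the paper: Direction~1 invokes \cref{lemma:density-property-ramsey} to reduce $A$ to a tester $A'$ depending only on $\trace^*(S_f)$, and both directions then rely on \cref{fact:trace-distribution-equiv} to identify the distributions of $\trace^*(S_f)$ and $\trace(T)$.
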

\begin{proof}
Suppose there is a $(\Xi_1, \Xi_2, \alpha)$-labeled distribution tester $A$, with sample
complexity $m$. By \cref{lemma:density-property-ramsey}, for any $\delta > 0$, there is a tester
$A'$ such that
\begin{align*}
  \cD_f \in \Xi_1
    &\implies \Pru{S \sim \samp(\cD,m)}{ A'(\trace^*(S_f)) \text{ accepts }} > \alpha-\delta \\
  \cD_f \in \Xi_2
    &\implies \Pru{S \sim \samp(\cD,m)}{ A'(\trace^*(S_f)) \text{ rejects }} > \alpha-\delta \,.
\end{align*}
Suppose $\pi \in \Pi(\Xi_1)$. Then there exists $\cD_f \in \Xi_1$ such that $\pi =
\pi_{f,\cD}$. Using the fact that $\trace(T)$ and $\trace^*(S_f)$ are identically distributed when
$T \sim \samp(\pi,m)$ and $S \sim \samp(\cD,m)$ (\cref{fact:trace-distribution-equiv}), we have
\[
    \Pru{T \sim \samp(\pi,m)}{ A'(\trace(T)) \text{ accepts }} =
    \Pru{S \sim \samp(\cD,m)}{ A'(\trace^*(S_f)) \text{ accepts }} > \alpha-\delta \,.
\]
The analogous argument holds when $\pi \in \Pi( \Xi_2 )$, which concludes the first part of the
proof.

Now suppose there is a $(\Pi(\Xi_1), \Pi(\Xi_2))$-distribution tester under the parity
trace, with sample complexity $m$ and success probability $\alpha$, so there is an algorithm $A$
such that
\begin{align*}
  \pi \in \Pi(\Xi_1)
    &\implies \Pru{T \sim \samp(\pi,m)}{ A(T) \text{ accepts }} > \alpha \\
  \pi \in \Pi(\Xi_2)
    &\implies \Pru{T \sim \samp(\pi,m)}{ A(T) \text{ rejects }} > \alpha \,.
\end{align*}
Suppose that $\cD_f \in \Xi_1$. Then $\pi_{f,\cD} \in \Pi(\Xi_1)$. Again using the fact that
$\trace^*(S_f)$ and $\trace(T)$ are identically distributed when $S \sim \samp(\cD,m)$ and $T \sim
\samp(\pi_{f,\cD},m)$ (\cref{fact:trace-distribution-equiv}), we have
\[
    \Pru{S \sim \samp(\cD,m)}{ A(\trace^*(S_f)) \text{ accepts }} =
    \Pru{T \sim \samp(\pi,m)}{ A(\trace(T)) \text{ accepts }} > \alpha \,.
\]
The analogous argument holds when $\cD_f \in \Xi_2$, which concludes the proof.
\end{proof}

To introduce the distance metrics into the equivalence, we require:

\begin{proposition}
\label{prop:density-property-edit}
Let $\Xi$ be any density property and let $\epsilon > 0$. Then $\far^\edit_\epsilon(\Xi)$ is a
density property, and
\[
  \far^\edit_\epsilon(\Pi(\Xi)) = \Pi(\far^\edit_\epsilon(\Xi)) \,.
\]
\end{proposition}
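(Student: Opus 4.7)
\medskip

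\noindent\textbf{Proof proposal.} The plan hinges on the observation that for any two proper labeled distributions $(f,\cD)$ and $(g,\cE)$, the labeled-distribution edit distance depends only on the density sequences, and in fact
\[
  \dist_\edit((f,\cD),(g,\cE)) = \dist_\edit(\pi_{f,\cD},\pi_{g,\cE}) \,.
\]
This identity would be my first step: \cref{def:labeled-distribution-edit-distance} writes the left-hand side as the infimum of $\dist_\TV(\cD'_{f'},\cE'_{g'})$ over all proper labeled distributions with $\pi_{f',\cD'}=\pi_{f,\cD}$ and $\pi_{g',\cE'}=\pi_{g,\cE}$, while \cref{lemma:edit-distance-alternate} expresses the right-hand side as exactly the same infimum. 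So the two infima coincide.

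For the first claim, that $\far^\edit_\epsilon(\Xi)$ is a density property, I would show that whether $(f,\cD) \in \far^\edit_\epsilon(\Xi)$ depends only on $\pi_{f,\cD}$. Since $\Xi$ is a density property, the set of density sequences of members of $\Xi$ is exactly $\Pi(\Xi)$, so applying the preliminary identity termwise and taking an infimum gives
\[
  \inf_{(g,\cE) \in \Xi} \dist_\edit((f,\cD),(g,\cE))
    = \inf_{\pi' \in \Pi(\Xi)} \dist_\edit(\pi_{f,\cD},\pi') \,.
\]
Hence $(f,\cD) \in \far^\edit_\epsilon(\Xi)$ iff $\pi_{f,\cD} \in \far^\edit_\epsilon(\Pi(\Xi))$, which shows both that membership depends only on the density sequence (so $\far^\edit_\epsilon(\Xi)$ is a density property), and that its associated set of density sequences is precisely $\far^\edit_\epsilon(\Pi(\Xi))$.

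To finish the equality $\far^\edit_\epsilon(\Pi(\Xi)) = \Pi(\far^\edit_\epsilon(\Xi))$, one direction ($\supseteq$) is immediate from the equivalence above: if $\pi = \pi_{f,\cD}$ with $(f,\cD) \in \far^\edit_\epsilon(\Xi)$, then $\pi \in \far^\edit_\epsilon(\Pi(\Xi))$. For the converse ($\subseteq$), given $\pi \in \far^\edit_\epsilon(\Pi(\Xi))$, I would exhibit the canonical realization of $\pi$ as a labeled distribution used to justify \eqref{eq:edit-to-tv-inequality} — namely, place the mass $\pi(i)$ at a single point $a_i \in \bZ$ and set $f(a_i) = \parity(i)$ — yielding some $(f,\cD)$ with $\pi_{f,\cD} = \pi$. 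By the equivalence already established, $(f,\cD) \in \far^\edit_\epsilon(\Xi)$, hence $\pi \in \Pi(\far^\edit_\epsilon(\Xi))$.

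The only delicate point is the initial identity between the two edit distances, but this is essentially bookkeeping given \cref{lemma:edit-distance-alternate}; the rest is formal manipulation of infima over density sequences. I expect no substantive obstacle.
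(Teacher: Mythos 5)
Your proposal is correct and follows essentially the same approach as the paper: the crux in both is the identity $\dist_\edit((f,\cD),(g,\cE)) = \dist_\edit(\pi_{f,\cD},\pi_{g,\cE})$, which the paper also extracts from \cref{lemma:edit-distance-alternate} immediately after that lemma. The only difference is presentational — you derive both inclusions from a single two-way equivalence, whereas the paper proves each inclusion separately by contradiction — but the argument is the same.
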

\begin{proof}
It is evident that $\far^\edit_\epsilon(\Xi)$ is a density property, because $\dist_\edit((f,\cD),
(g,\cE)) = \dist_\edit(\pi_{f,\cD}, \pi_{g,\cE})$ for any labeled distributions $\cD_f$ and
$\cE_g$, so that $\far^\edit_\epsilon(\Xi)$ is defined entirely by the density sequences.

We first prove $\far^\edit_\epsilon(\Pi(\Xi)) \subseteq \Pi(\far^\edit_\epsilon(\Xi))$.
Let $\pi \in \far^\edit_\epsilon(\Pi(\Xi))$, so that $\dist_\edit(\pi, \pi') > \epsilon$ for all
$\pi' \in \Pi(\Xi)$. Suppose for contradiction that $\pi \notin \Pi(\far^\edit_\epsilon(\Xi))$. Let
$(f,\cD)$ be any labeled distribution such that $\pi_{f,\cD} = \pi$. Then $\dist_\edit((f,\cD), \Xi)
\leq \epsilon$, so there exists $(g,\cE) \in \Xi$, and $\cD'_{f'}$ with $\pi_{f,\cD} =
\pi_{f',\cD'}$, such that $\dist_\TV(\cD'_{f'}, \cE_g) \leq \epsilon$. But then
\[
\dist_\edit(\pi, \Pi(\Xi)) \leq \dist_\edit(\pi, \pi_{g,\cE}) = \dist_\edit(\pi_{f',\cD'},
\pi_{g,\cE}) \leq \dist_\TV(\cD'_{f'}, \cE_g) \leq \epsilon \,,
\]
which is a contradiction. This establishes $\far^\edit_\epsilon(\Pi(\Xi)) \subseteq
\Pi(\far^\edit_\epsilon(\Xi))$.

Next, we prove $\Pi(\far^\edit_\epsilon(\Xi)) \subseteq \far^\edit_\epsilon(\Pi(\Xi))$. Let $\pi \in
\Pi(\far^\edit_\epsilon(\Xi))$, so that $\pi = \pi_{f,\cD}$ for some $(f,\cD)$ that satisfies
$\dist_\edit((f,\cD), \Xi) > \epsilon$. Suppose for contradiction that $\pi \notin
\far^\edit_\epsilon(\Pi(\Xi))$, so that there exists $\pi' \in \Pi(\Xi)$ such that
$\dist_\edit(\pi,\pi') \leq \epsilon$. Then $\pi' = \pi_{g,\cE}$ for some $(g,\cE) \in \Xi$, so
\[
  \dist_\edit((f,\cD), \Xi)
    \leq \dist_\edit((f,\cD), (g,\cE))
    = \dist_\edit(\pi, \pi') \leq \epsilon \,,
\]
which is a contradiction. This concludes the proof.
\end{proof}

We may now establish the arrow $(\LD, \edit) \to (\PT, \edit)$ from \cref{fig:relative-strength}.

\begin{lemma}
    \label{lemma:labeled-edit-tester-to-parity-tester}
    Let $\Xi$ be any density property and suppose there is a $(\Xi, \far^\edit_\epsilon(\Xi),
    \alpha)$-labeled distribution tester with sample complexity $m$. Then for any $\delta > 0$,
    there is a $(\Pi(\Xi), \far^\edit_\epsilon(\Pi(\Xi)), \alpha-\delta)$-distribution tester under
    the parity trace, with sample complexity $m$.
\end{lemma}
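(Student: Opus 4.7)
The plan is to deduce this lemma as a direct corollary of the general equivalence \cref{lemma:labeled-to-linear-trace-reductions} (part~1), combined with the structural results about edit distance on density properties established in \cref{prop:density-property-edit}. The key point is that, in the statement, both the YES property $\Xi$ and the NO property $\far^\edit_\epsilon(\Xi)$ are density properties, so part~1 of \cref{lemma:labeled-to-linear-trace-reductions} (which requires both properties to be density properties) applies directly.

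Concretely, I would proceed as follows. First, invoke \cref{prop:density-property-edit} to conclude that $\far^\edit_\epsilon(\Xi)$ is a density property, and moreover that
\[
\Pi(\far^\edit_\epsilon(\Xi)) = \far^\edit_\epsilon(\Pi(\Xi)).
\]
Next, apply part~1 of \cref{lemma:labeled-to-linear-trace-reductions} with $\Xi_1 \define \Xi$ and $\Xi_2 \define \far^\edit_\epsilon(\Xi)$: the hypothesis that a $(\Xi, \far^\edit_\epsilon(\Xi), \alpha)$-labeled distribution tester with sample complexity $m$ exists yields, for any $\delta > 0$, a distribution tester under the parity trace for $(\Pi(\Xi), \Pi(\far^\edit_\epsilon(\Xi)))$ with sample complexity $m$ and success probability $\alpha - \delta$. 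Finally, substituting the identity $\Pi(\far^\edit_\epsilon(\Xi)) = \far^\edit_\epsilon(\Pi(\Xi))$ transforms this into the desired $(\Pi(\Xi), \far^\edit_\epsilon(\Pi(\Xi)), \alpha-\delta)$-tester under the parity trace.

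There is essentially no obstacle here, since all the nontrivial work has been done in the earlier Ramsey-theoretic argument (\cref{lemma:density-property-ramsey}) and in the identity between density sequences of edit-far labeled distributions and edit-far density sequences (\cref{prop:density-property-edit}). The only thing to verify is that both properties involved are density properties, which is immediate for $\Xi$ by assumption and for $\far^\edit_\epsilon(\Xi)$ by \cref{prop:density-property-edit}. Thus the proof is a two-line composition and should be written as such.
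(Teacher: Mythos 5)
Your proposal is correct and matches the paper's own proof exactly: both apply \cref{prop:density-property-edit} to establish that $\far^\edit_\epsilon(\Xi)$ is a density property, then invoke part~1 of \cref{lemma:labeled-to-linear-trace-reductions} with $\Xi_1 = \Xi$, $\Xi_2 = \far^\edit_\epsilon(\Xi)$, and finally substitute the identity $\Pi(\far^\edit_\epsilon(\Xi)) = \far^\edit_\epsilon(\Pi(\Xi))$. Nothing to add.
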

\begin{proof}
    By \cref{prop:density-property-edit}, $\far^\edit_\epsilon(\Xi)$ is a density property.
    Therefore \cref{lemma:labeled-to-linear-trace-reductions} yields a
    $(\Pi(\Xi), \Pi(\far^\edit_\epsilon(\Xi)), \alpha-\delta)$-distribution tester under the parity
    trace, for any $\delta>0$, with sample complexity $m$. By \cref{prop:density-property-edit}, we
    obtain a $(\Pi(\Xi), \far_\epsilon^\edit(\Pi(\Xi)), \alpha-\delta)$-distribution tester under
    the parity trace.
\end{proof}

The following simple fact establishes $(\LD, \TV) \to (\LD, \edit)$ from
\cref{fig:relative-strength}.

\begin{fact}
\label{fact:edit-tv-subset}
Let $\Xi$ be any property of (proper) labeled distributions, and let $\epsilon > 0$. Then
$\far^\edit_\epsilon(\Xi) \subseteq \far^\TV_\epsilon(\Xi)$.
\end{fact}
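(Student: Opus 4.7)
The plan is to prove the statement via its contrapositive: if $(f,\cD) \notin \far^\TV_\epsilon(\Xi)$, then $(f,\cD) \notin \far^\edit_\epsilon(\Xi)$. This reduces the problem to showing the pointwise inequality $\dist_\edit((f,\cD),(g,\cE)) \leq \dist_\TV(\cD_f, \cE_g)$ for every pair of proper labeled distributions.

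To establish the pointwise inequality, I would unfold \cref{def:labeled-distribution-edit-distance}, which defines $\dist_\edit((f,\cD),(g,\cE))$ as the infimum of $\dist_\TV(\cD'_{f'}, \cE'_{g'})$ taken over all proper labeled distributions $(f',\cD')$ and $(g',\cE')$ satisfying $\pi_{f',\cD'} = \pi_{f,\cD}$ and $\pi_{g',\cE'} = \pi_{g,\cE}$. The key observation is that the trivial choice $(f',\cD') = (f,\cD)$ and $(g',\cE') = (g,\cE)$ is itself a feasible pair in this infimum, so the infimum is at most $\dist_\TV(\cD_f, \cE_g)$.

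Combining the two steps: assume $(f,\cD) \notin \far^\TV_\epsilon(\Xi)$, so there exists $(g,\cE) \in \Xi$ with $\dist_\TV(\cD_f, \cE_g) \leq \epsilon$. By the pointwise inequality, $\dist_\edit((f,\cD),(g,\cE)) \leq \epsilon$, which witnesses that $(f,\cD) \notin \far^\edit_\epsilon(\Xi)$, completing the contrapositive.

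There is no real obstacle here; the result is essentially immediate from the definition of edit distance as an infimum that is always witnessed by the trivial ``do nothing'' choice. The only mild subtlety worth flagging in the write-up is that \cref{def:labeled-distribution-edit-distance} requires the witnesses $(f',\cD')$ and $(g',\cE')$ to be proper, but since the statement of the fact restricts to proper $\Xi$, the trivial choice is admissible.
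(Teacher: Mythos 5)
Your proposal is correct and takes essentially the same route as the paper: the paper's proof is a one-liner invoking the pointwise inequality $\dist_\edit((f,\cD),(g,\cE)) \leq \dist_\TV(\cD_f,\cE_g)$, and your derivation of that inequality from \cref{def:labeled-distribution-edit-distance} (the identity reparameterization is admissible in the infimum) is exactly the justification the paper leaves implicit.

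One small technical slip in the contrapositive step: from $(f,\cD)\notin\far^\TV_\epsilon(\Xi)$ you only get $\inf_{(g,\cE)\in\Xi}\dist_\TV(\cD_f,\cE_g)\leq\epsilon$; you cannot in general conclude that a single $(g,\cE)\in\Xi$ attains $\dist_\TV(\cD_f,\cE_g)\leq\epsilon$, since the infimum need not be achieved. The fix is either to take a $(g,\cE)$ with $\dist_\TV(\cD_f,\cE_g)\leq\epsilon+\delta$ for arbitrary $\delta>0$ and let $\delta\to0$, or more cleanly to take infima over $(g,\cE)\in\Xi$ on both sides of the pointwise inequality, giving $\dist_\edit((f,\cD),\Xi)\leq\dist_\TV((f,\cD),\Xi)$ directly, from which the containment follows without any contrapositive or attainment issue.
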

\begin{proof}
This follows from the inequality $\dist_\edit((f,\cD), (g,\cE)) \leq \dist_\TV(\cD_f, \cE_g)$ for
any two proper labeled distributions $\cD_f$ and $\cE_g$.
\end{proof}

We now state a convenient lemma for later use.

\begin{restatable}{lemma}{lemmatestertoparity}
\label{lemma:labeled-tester-to-parity-tester}
Let $\Xi$ be any density property and suppose there is a $(\Xi, \far^\TV_\epsilon(\Xi),
\alpha)$-labeled distribution tester with sample complexity $m$. Then for any $\delta > 0$,
there is a $(\Pi(\Xi), \far^\edit_\epsilon(\Pi(\Xi)), \alpha-\delta)$-distribution tester under the
parity trace, with sample complexity $m$.
\end{restatable}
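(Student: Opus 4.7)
The plan is to chain together two results already proved in the excerpt. The assumed tester distinguishes $\Xi$ from $\far^\TV_\epsilon(\Xi)$, but we want a parity-trace tester that distinguishes $\Pi(\Xi)$ from $\far^\edit_\epsilon(\Pi(\Xi))$. Edit distance is weaker than TV distance (for labeled distributions on $\bZ$), so passing from TV to edit on the labeled side only shrinks the ``far'' set, making the testing task easier. After that transition, the parity-trace reduction we already have in hand applies directly.

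Concretely, the first step is to invoke \cref{fact:edit-tv-subset}, which gives the containment $\far^\edit_\epsilon(\Xi) \subseteq \far^\TV_\epsilon(\Xi)$. Any algorithm $A$ that correctly rejects every $(f,\cD) \in \far^\TV_\epsilon(\Xi)$ with probability at least $\alpha$ therefore correctly rejects every $(f,\cD) \in \far^\edit_\epsilon(\Xi)$ with probability at least $\alpha$ as well, while its completeness guarantee on $\Xi$ is unchanged. In other words, $A$ is automatically a $(\Xi,\far^\edit_\epsilon(\Xi),\alpha)$-labeled distribution tester with sample complexity $m$.

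The second step is to apply \cref{lemma:labeled-edit-tester-to-parity-tester} to this edit-distance labeled-distribution tester. That lemma gives, for any $\delta > 0$, a $(\Pi(\Xi),\far^\edit_\epsilon(\Pi(\Xi)),\alpha-\delta)$-distribution tester under the parity trace with sample complexity $m$, which is exactly the conclusion we want. There is no real obstacle here: all of the work (the Ramsey-theoretic reduction in \cref{lemma:density-property-ramsey}, the transfer between TV on labeled distributions and edit distance on density sequences via \cref{prop:density-property-edit}) has already been carried out. The present statement is simply a convenient packaging that trades a (stronger) TV-distance hypothesis on the labeled-distribution side for an (also stronger) edit-distance conclusion on the parity-trace side, which is what downstream applications in \cref{section:trace-testing} will need.
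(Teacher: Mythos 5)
Your proof is correct and matches the paper's intended argument: the lemma is stated right after \cref{fact:edit-tv-subset} and \cref{lemma:labeled-edit-tester-to-parity-tester} precisely so that it can be read off as their composition, exactly as you do. Both steps check out (in particular, a density property is by definition a property of proper labeled distributions, so \cref{fact:edit-tv-subset} applies), and no further justification is needed.
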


The arrow $(\PT, \edit) \rightarrow (\LD, \edit)$ is proved as follows. Suppose we have
a $(\Pi, \far^\edit_{\epsilon}(\Pi), \alpha)$-distribution tester under the parity trace.
Let $\Xi = \Xi(\Pi)$ be the corresponding density property, so that $\Pi = \Pi(\Xi)$.
By \cref{prop:density-property-edit},
$\far^\edit_\epsilon(\Pi(\Xi)) = \Pi(\far^\edit_\epsilon(\Xi))$, and thus by
\cref{lemma:labeled-to-linear-trace-reductions} we have a
$(\Xi, \far^\edit_\epsilon(\Xi), \alpha)$-labeled distribution tester.

The arrow $(\PT, \TV) \rightarrow (\LD, \TV)$ is similar. Suppose we have
a $(\Pi, \far^\TV_\epsilon(\Pi), \alpha)$-distribution tester under the parity trace.
Let $\Xi = \Xi(\Pi)$ be the corresponding density property, so that $\Pi = \Pi(\Xi)$.
By \cref{lemma:tv-density-property},
$\Pi(\far^\TV_\epsilon(\Xi)) \subseteq \far^\TV_\epsilon(\Pi(\Xi))$, so we get a
$(\Pi(\Xi), \Pi(\far^\TV_\epsilon(\Xi)), \alpha)$-distribution tester under the parity trace.
By \cref{lemma:labeled-to-linear-trace-reductions} we have a
$(\Xi, \far^\TV_\epsilon(\Xi), \alpha)$-labeled distribution tester.

\subsection{Testing Uniformly \texorpdfstring{$k$}{k}-Alternating Functions}
\label{section:uniform-k-alternating-theorem}
\label{section:edit-tv-equivalence}

We now prove our main result for labeled distribution testing, restated below for convenience,
which is an application of our main \cref{thm:intro-main-informal}. First, we observe that the edit and TV
distances coincide when one of the distributions is uniform; we defer the proof to
\cref{section:edit-distance-uniform-distribution}.

\begin{restatable}{lemma}{lemmaedittotvuniform}
\label{lemma:edit-to-tv-uniform}
    There exists an absolute constant $c > 0$ such that the following holds.
    Let $\pi$ be the distribution over $\bN$ that is uniformly supported on $[k]$, and
    $\pi'$ be another probability distribution over $\bN$ supported within $[k]$.
    Then $\dist_\edit(\pi, \pi') \ge c \cdot \dist_\TV(\pi, \pi')$.
\end{restatable}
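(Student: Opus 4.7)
The plan is to reduce to a combinatorial problem via the fractional-string characterization of edit distance, then bound the resulting combinatorial quantity. By \cref{lemma:edit-distance-alternate}, $\dist_\edit(\pi,\pi')$ equals the infimum of $\dist_\TV(\cD_f,\cE_g)$ over proper labeled distributions with density sequences $\pi,\pi'$; equivalently, by \cref{def:distribution-edit-distance}, $\dist_\edit(\pi,\pi') = \dist_{\mathsf{fr\text{-}edit}}(\str(\pi),\str(\pi'))$. The key observation I would use is that each fractional string $s = \sigma_1^{p_1}\dotsm\sigma_n^{p_n}$ determines a $\{0,1\}$-valued function $L_s$ on $[0,\sum_i p_i]$ taking value $\sigma_i$ on the $i$-th interval of length $p_i$, and this function is \emph{invariant} under the three free operations (insert, delete, rearrange), while the adjust operation $\sigma_i^{p_i}\to\sigma_i^{p_i+\delta}$ inserts (if $\delta>0$) or deletes (if $\delta<0$) an interval of value $\sigma_i$ and length $|\delta|$ at cost $|\delta|/2$. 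Since $\pi,\pi'$ are probability distributions, $L_\pi$ and $L_{\pi'}$ both live on $[0,1]$, and standard ``Levenshtein-with-only-insertions-and-deletions'' reasoning yields $\dist_\edit(\pi,\pi') = 1 - L^*$, where $L^*$ is the maximum length of a function obtainable as a concatenation of ordered sub-intervals of both $L_\pi$ and $L_{\pi'}$.

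Next I would exploit the rigid structure of $L_\pi$: for $\pi$ uniform on $[k]$, $L_\pi$ consists of $k$ alternating blocks of length $1/k$, so any common interval subsequence corresponds to an order-preserving partial assignment $\tau$ from source blocks $j \in [k]$ to target positions $i$ with $\pi'(i)>0$, subject to the parity constraint $\parity(\tau(j)) = \parity(j)$. Writing $n_i = |\tau^{-1}(i)|$ and letting $N_0$ be the number of source blocks with $\tau$ undefined, the matched mass at target $i$ equals $\min(n_i/k,\pi'(i))$, which gives the two expressions
\[
  1 - L^* = \sum_i (\pi'(i) - n_i/k)^+ = \frac{N_0}{k} + \sum_i (n_i/k - \pi'(i))^+ .
\]

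Finally I would establish $1 - L^* \ge \tfrac{1}{2}\dist_\TV(\pi,\pi')$, yielding the lemma with $c = 1/2$. The core step is the triangle inequality $\sum_i |\pi'(i) - 1/k| \le \sum_i |\pi'(i) - n_i/k| + \sum_i |n_i/k - 1/k|$, combined with the identity $\sum_i |\pi'(i) - n_i/k| = 2(1-L^*) - N_0/k$ obtained by adding the two expressions above. Rearranging gives $2\dist_\TV(\pi,\pi') \le 2(1-L^*) + 2E/k$, where $E$ collects the ``excess source usage'' contributions $(n_i-1)^+$ over matched targets together with the indicator contributions $\ind{n_i=0}$ from unused target positions of zero $\pi'$-mass. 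The hardest step will be to bound $E/k \le 1-L^*$: the idea is that each source block counted in $E$ either lands on a target with capacity $\pi'(i) \le 1/k$ and therefore contributes at least $1/k$ to the source-side sum $\sum_i (n_i/k - \pi'(i))^+$, or can be charged against $N_0$ via the order/parity constraints that force some source blocks to be idle. Combining these two bounds yields $\dist_\TV(\pi,\pi') \le 2(1-L^*) = 2\dist_\edit(\pi,\pi')$, which is the desired inequality.
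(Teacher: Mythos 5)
Your approach is genuinely different from the paper's and, if completed, gives a better constant. The paper works directly with a pair of labeled distributions $(f,\cD)$, $(g,\cE)$ realizing $\pi$ and $\pi'$, fixes the alternation sequences, and does a delicate case analysis (on a parameter $h$ roughly equal to $k\cdot\pi'(t)$) to show that each "heavy" target interval forces $\Omega(z_t)$ TV-contribution, ending up with $c$ on the order of $1/100$. You instead reduce to the fractional analogue of the LCS formula, $\dist_\edit(\pi,\pi')=1-L^*$, and exploit the rigid block structure of the uniform source, obtaining $c=1/2$, which is optimal: taking $k$ odd and $\pi'$ the point mass at $k$ gives $\dist_\edit/\dist_\TV = \tfrac{(k-1)/2}{k} \big/ \tfrac{k-1}{k} = 1/2$. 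This is a cleaner route and yields a stronger result.

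That said, you should close three gaps.
First, the identity $\dist_\edit(\pi,\pi') = 1-L^*$ is the core reduction and is asserted rather than proven. It is the fractional analogue of the classical fact that indel distance equals $|x|+|y|-2\,\mathrm{LCS}(x,y)$; the upper bound is immediate from the construction, but the lower bound needs the normalization from \cref{fact:edit-standard-form} together with an argument that the total "preserved mass" under an adjust-only middle phase is exactly a common interval subsequence.

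Second, the claim that the optimal common interval subsequence can be encoded by a map $\tau$ with matched mass at target $i$ equal to $\min(n_i/k,\pi'(i))$ is not obviously without loss of generality, since a single source block can contribute to several targets. You can sidestep this entirely by working with $m_i$, the number of distinct source blocks contributing positive mass to target $i$, and $d_i$, the matched mass at $i$, directly: then $d_i\le\min(m_i/k,\pi'(i))$ and the algebra $2\dist_\TV \le 2(1-L^*) + 2\sum_i(d_i-1/k)^+$ still goes through.

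Third, the "hardest step" is salvageable but not for the reason you sketch. The clean argument is as follows: if target $i$ receives positive mass from source blocks $j_1<\dotsm<j_{m_i}$ (all of parity $\parity(i)$), then every opposite-parity source block strictly between $j_1$ and $j_{m_i}$ is forced to be idle by the order- and parity-preservation constraints, giving at least $m_i-1$ idle blocks; these ranges are disjoint across distinct targets. Hence the total idle source mass is at least $\tfrac1k\sum_i(m_i-1)^+ \ge \sum_i(d_i-1/k)^+$, and since idle source mass is at most $1-L^*$, this completes the bound $\dist_\TV \le 2(1-L^*) = 2\dist_\edit(\pi,\pi')$.
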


\begin{remark}
\label{remark:intro-main-edit}
The statement of our main \cref{thm:intro-main-informal} leaves open the possibility of a uniformity tester
under the parity trace, with respect to the \emph{edit distance}, that beats the lower bound of that
theorem. This is because an edit distance tester is weaker than a TV distance tester, due to 
inequality \eqref{eq:edit-to-tv-inequality}. The above lemma shows that we may strengthen the
lower bound in \cref{thm:intro-main-informal} to hold for testers in the edit distance as well.
\end{remark}

\begin{restatable}{theorem}{thmintromaintesting}
\label{thm:intro-main-testing}
Let $\Xi_1$ be the uniformly $2k$-alternating labeled distributions, and let $\Xi_2$ be the
$2k$-alternating labeled distributions that are $\epsilon$-far in total variation distance from
$\Xi_1$. Then the sample complexity of $(\Xi_1, \Xi_2, 2/3)$-labeled distribution testing is
$\widetilde \Theta\left( \frac{k^{4/5}}{\epsilon^{4/5}} + \frac{\sqrt k}{\epsilon^2}\right)$.
\end{restatable}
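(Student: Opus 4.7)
The plan is to reduce this testing problem to the problem of testing uniformity of distributions on $[2k]$ under the parity trace, and then apply our main \cref{thm:intro-main} in combination with the general equivalence between testing density properties and parity trace testing (\cref{lemma:labeled-to-linear-trace-reductions}). My first step is to identify the structure of $\Xi_1$ and $\Xi_2$. Being uniformly $2k$-alternating depends only on the density sequence (the sequence of probability masses in each monochromatic region), so $\Xi_1$ is a density property with $\Pi(\Xi_1) = \{u\}$, where $u$ is the uniform distribution on $[2k]$. Being $2k$-alternating is also a density property (it corresponds to density sequences supported on at most $2k$ consecutive positive entries), and by \cref{fact:density-intersection,lemma:tv-density-property}, $\Xi_2$ is a density property with $\Pi(\Xi_2) \subseteq \far^\TV_\epsilon(u)$.

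For the upper bound, I would invoke \cref{thm:intro-main} with $n = k$ to obtain a $(\{u\}, \far^\TV_\epsilon(u))$-distribution tester under the parity trace, with sample complexity $\widetilde{O}((k/\epsilon)^{4/5} + \sqrt{k}/\epsilon^2)$. Since $\Pi(\Xi_1) = \{u\}$ and $\Pi(\Xi_2) \subseteq \far^\TV_\epsilon(u)$, this same algorithm serves as a $(\Pi(\Xi_1), \Pi(\Xi_2))$-parity trace tester. Applying the second part of \cref{lemma:labeled-to-linear-trace-reductions} converts it into a $(\Xi_1, \Xi_2)$-labeled distribution tester with the same sample complexity.

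For the lower bound, the plan is to apply the first part of \cref{lemma:labeled-to-linear-trace-reductions} in the contrapositive direction: since $\Xi_1$ and $\Xi_2$ are density properties, any $(\Xi_1, \Xi_2, 2/3)$-labeled distribution tester with sample complexity $m$ yields a $(\Pi(\Xi_1), \Pi(\Xi_2), 2/3 - \delta)$-parity trace tester with the same complexity for any $\delta > 0$, which can be taken small enough that the success probability exceeds $51/100$ as required by \cref{thm: lower bound}. To match the parity trace lower bound, I need to show that the hard NO distributions constructed in its proof lie in $\Pi(\Xi_2)$. That construction produces distributions $\pi'$ supported on $[2k]$ with all $2k$ entries strictly positive and $\dist_\TV(\pi', u) = \Omega(\epsilon)$. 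For any such $\pi'$, I would pick an arbitrary proper labeled distribution $(f', \cD')$ with density sequence $\pi'$; this $(f', \cD')$ is automatically $2k$-alternating. To confirm $(f', \cD') \in \far^\TV_{c\epsilon}(\Xi_1)$ for a universal constant $c > 0$, observe that for every $(f, \cD) \in \Xi_1$ we have $\dist_\TV(\cD'_{f'}, \cD_f) \geq \dist_\edit((f', \cD'), (f, \cD)) = \dist_\edit(\pi', u)$, and the latter is $\Omega(\dist_\TV(\pi', u)) = \Omega(\epsilon)$ by \cref{lemma:edit-to-tv-uniform}. After rescaling $\epsilon$ by a constant, the parity trace lower bound of $\widetilde\Omega((k/\epsilon)^{4/5} + \sqrt{k}/\epsilon^2)$ transfers to the labeled distribution setting.

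The hard part will be this last verification that the NO distributions of the parity trace lower bound lie in $\Pi(\Xi_2)$, which hinges on the equivalence of edit and TV distances around the uniform distribution provided by \cref{lemma:edit-to-tv-uniform}. Everything else is a direct composition of the density-property reductions and the known tight bounds for parity trace uniformity testing.
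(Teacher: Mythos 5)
Your upper bound is correct and essentially matches the paper: apply \cref{lemma:tv-density-property} to get $\Pi(\Xi_2) \subseteq \far^\TV_\epsilon(\Pi(\Xi_1))$, invoke the parity-trace uniformity tester of \cref{thm:intro-main}, and convert back via the second part of \cref{lemma:labeled-to-linear-trace-reductions}, which requires no density-property hypothesis.

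The lower bound has a genuine gap. You assert that $\Xi_2 = \far^\TV_\epsilon(\Xi_1)\cap K$ is a density property in order to apply the first part of \cref{lemma:labeled-to-linear-trace-reductions}, but neither \cref{fact:density-intersection} nor \cref{lemma:tv-density-property} gives you this: \cref{lemma:tv-density-property} yields only the one-sided inclusion $\Pi(\far^\TV_\epsilon(\Xi)) \subseteq \far^\TV_\epsilon(\Pi(\Xi))$, and membership in $\far^\TV_\epsilon(\Xi_1)$ depends \emph{a priori} on the particular labeled distribution $(f,\cD)$ rather than on its density sequence alone, since the infimum $\inf_{(g,\cE)\in\Xi_1}\dist_\TV(\cD_f,\cE_g)$ ranges only over the $\Xi_1$-side and need not equal $\dist_\edit(\pi_{f,\cD},\mu)$ without a separate argument. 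The paper sidesteps exactly this issue: it observes that by \cref{fact:edit-tv-subset}, $\far^\edit_\epsilon(\Xi_1)\cap K \subseteq \Xi_2$, so any $(\Xi_1,\Xi_2)$-tester is automatically a $(\Xi_1,\far^\edit_\epsilon(\Xi_1)\cap K)$-tester; and the set $\far^\edit_\epsilon(\Xi_1)\cap K$ \emph{is} a density property by \cref{prop:density-property-edit} and \cref{fact:density-intersection}. Only after this downgrade does it apply \cref{lemma:labeled-to-linear-trace-reductions} (landing on $\far^\edit_\epsilon(\Pi(\Xi_1))\cap\Pi(K)$), and then \cref{lemma:edit-to-tv-uniform} to re-enter TV distance and invoke \cref{thm: lower bound}. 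You correctly identify \cref{lemma:edit-to-tv-uniform} as the constant-factor bridge, but you apply it only to classify the hard instances, whereas its more load-bearing role is precisely to let the density-property machinery fire at all. To repair your argument, move the pass-to-edit-distance step to the front (replace $\Xi_2$ by its subset $\far^\edit_\epsilon(\Xi_1)\cap K$) before invoking the reduction lemma.
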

%\thmintromaintesting*

\begin{proof}[Proof of upper bound]
Let $K$ be the property of $2k$-alternating labeled distributions, which is a density property,
with $\Pi(K)$ being the property of density sequences supported on $[2k]$. Then $\Xi_2 =
\far^\TV_\epsilon(\Xi_1) \cap K$ and $\Pi(\Xi_2) = \Pi(\far^\TV_\epsilon(\Xi_1) \cap K) =
\Pi(\far^\TV_\epsilon(\Xi_1)) \cap \Pi(K)$, while $\Pi(\Xi_1)$ contains only the uniform
distribution $\mu$ supported on $[2k]$. By \cref{lemma:tv-density-property},
$\Pi(\far^\TV_\epsilon(\Xi_1)) \subseteq \far^\TV_\epsilon(\Pi(\Xi_1))$.  Therefore, a $(\Pi(\Xi_1),
\far^\TV_\epsilon(\Pi(\Xi_1)) \cap \Pi(K), \alpha)$-distribution tester under the parity trace, with
sample complexity $m$, is also a $(\Pi(\Xi_1), \Pi(\far^\TV_\epsilon(\Xi_1) \cap K),
\alpha)$-distribution tester under the parity trace, with sample complexity $m$.  The conclusion now
follows from \cref{lemma:labeled-to-linear-trace-reductions}.
\end{proof}

\begin{proof}[Proof of lower bound]
We begin with a specialized variant of the argument from
\cref{lemma:labeled-edit-tester-to-parity-tester}.
Suppose there is a $(\Xi_1, \Xi_2, 2/3)$-labeled distribution tester with sample complexity $m$,
and recall that $\Xi_2 = \far^\TV_\epsilon(\Xi_1) \cap K$. By \cref{fact:edit-tv-subset}, we have
$\far^\edit(\Xi_1) \subseteq \far^\TV(\Xi_1)$, so this is also a $(\Xi_1, \far^\edit_\epsilon(\Xi_1)
\cap K, 2/3)$-labeled distribution tester. By \cref{prop:density-property-edit},
$\far^\edit_\epsilon(\Xi_1)$ is a density property, so by \cref{fact:density-intersection},
$\far^\edit_\epsilon(\Xi_1) \cap K$ is a density property. From
\cref{lemma:labeled-to-linear-trace-reductions}, we then obtain a $(\Pi(\Xi_1),
\Pi(\far^\edit_\epsilon(\Xi_1) \cap K), 2/3-\delta)$-distribution tester under the parity trace, for
any $\delta > 0$, with sample complexity $m$. Observe that $\Pi(\far^\edit_\epsilon(\Xi_1) \cap K) =
\Pi(\far^\edit_\epsilon(\Xi_1)) \cap \Pi(K)$. By \cref{prop:density-property-edit}, we have a
$(\Pi(\Xi_1), \far^\edit_\epsilon(\Pi(\Xi_1)) \cap \Pi(K), 2/3-\delta)$-distribution tester under
the parity trace.

Note that $\Pi(\Xi_1) = \{ \mu \}$, where $\mu$ is the uniform distribution supported on $[2k]$.
Let $0 < c < 1$ be the constant from \cref{lemma:edit-to-tv-uniform}, and consider any distribution
$\pi \in \far^\TV_{\epsilon/c}(\Pi(\Xi_1))$. Then by \cref{lemma:edit-to-tv-uniform}, we have
\[
  \dist_\edit(\pi, \mu) \geq c \cdot \dist_\TV(\pi, \mu) > c \cdot (\epsilon/c) = \epsilon \,,
\]
so $\pi \in \far^\edit_\epsilon(\Pi(\Xi_1))$. Then $\far^\TV_{\epsilon/c}(\Pi(X_1)) \subseteq
\far^\edit_\epsilon(\Pi(\Xi_1))$.

Therefore, our tester is also a $(\Pi(\Xi_1), \far^\TV_{\epsilon/c}(\Pi(\Xi_1)) \cap \Pi(K),
2/3-\delta)$-distribution tester under the parity trace, with sample complexity $m$. By \cref{thm:
lower bound}, we must have the desired lower bound of
\[
  m = \widetilde \Omega\left( \left(\frac{n}{\epsilon}\right)^{4/5}
                              + \frac{\sqrt n}{\epsilon^2} \right) \,. \qedhere
\]
\end{proof}

\subsection{Promise-Free Testing $k$-Alternating and Uniformly \texorpdfstring{$k$}{k}-Alternating
Functions}
\label{section:testing-uniform-nopromise}

\cref{thm:intro-main-informal} proves a tight bound on testing whether a distribution supported on $[k]$ is
uniform, under the parity trace. We use testing-by-learning to prove a bound on the harder problem
of testing whether a distribution is uniform on $[k]$, \emph{without} the promise that the input is
supported on $[k]$.

\begin{restatable}{theorem}{thmtestinguniformnopromise}
\label{thm:testing-uniform-k-alternating-no-promise}
Fix domain $\bN$.  Let $\Pi$ contain only the uniform distribution over $[k]$. There is a $(\Pi,
\far^\edit_\epsilon(\Pi), 2/3)$-distribution tester under the parity trace, with sample complexity
$O\left(\frac{k}{\epsilon} + \frac{k}{\epsilon^2 \log k}\right)$.
\end{restatable}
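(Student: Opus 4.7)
The plan is to invoke the testing-by-learning reduction of Proposition~\ref{prop:testing-by-learning} at the level of labeled distributions, then pull the resulting tester down to the parity trace via Lemma~\ref{lemma:labeled-tester-to-parity-tester}. Concretely, let $\Xi$ be the density property of ``uniformly $k$-alternating'' labeled distributions $(f,\cD)$: $f$ partitions $\bZ$ into $k$ maximal constant intervals $I_1^f,\dots,I_k^f$, and $\cD(I_j^f)=1/k$ for each $j$. By construction $\Pi(\Xi)=\{\mu_k\}$, so any $(\Xi, \far^\TV_\epsilon(\Xi))$-labeled distribution tester yields, via Lemma~\ref{lemma:labeled-tester-to-parity-tester}, a $(\{\mu_k\}, \far^\edit_\epsilon(\{\mu_k\}))$-parity-trace tester with the same sample complexity, which is exactly what the theorem asks for.

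For the proper learner $A$, I would use the standard ERM procedure for the class of $k$-alternating Boolean functions on $\bZ$. This class has VC dimension $O(k)$, so there is a proper PAC learner using $m_A = O(k/\epsilon)$ samples that outputs a $k$-alternating hypothesis $g$ with $\dist_\TV(\cD_f,\cD_g) = \Pru{x\sim\cD}{f(x)\neq g(x)} < \epsilon/16$. To establish the Success$^*$ condition of Definition~\ref{def:learner-verifier-pair} when $(f,\cD)\in\Xi$, I would exhibit a witness $\cE$ obtained by renormalizing $\cD$ inside each interval $I_j^g$ of $g$ so that $\cE(I_j^g) = 1/k$ (with mass placed arbitrarily on intervals where $\cD(I_j^g)=0$). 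Then $(g,\cE)\in\Xi$ by construction, and matching the $k$ intervals of $f$ with the $k$ intervals of $g$ through the learner's accuracy guarantee yields $\dist_\TV(\cD,\cE)=\frac{1}{2}\sum_j |\cD(I_j^g)-1/k| = O(\dist_\TV(\cD_f,\cD_g)) < \epsilon$.

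For the verifier $B$, observe that for each $k$-alternating $g$, the class $\Pi_g = \{\cE : (g,\cE)\in\Xi\}$ is exactly the set of distributions placing mass $1/k$ on each interval of $g$. Therefore $\dist_\TV(\cD,\Pi_g) = \dist_\TV(\nu^g,\mu_k)$, where $\nu^g$ is the distribution on $[k]$ given by $\nu^g(j) = \cD(I_j^g)$. I would aggregate a fresh sample from $\cD$ according to the partition induced by $g$ into a sample from $\nu^g$, and invoke the tolerant uniformity tester of \cite{VV17}, which distinguishes $\close^\TV_{\epsilon/4}(\mu_k)$ from $\far^\TV_{\epsilon/2}(\mu_k)$ with $m_B = O(k/(\epsilon^2\log k))$ samples.

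Feeding $(A,B)$ into Proposition~\ref{prop:testing-by-learning} produces a $(\Xi, \far^\TV_\epsilon(\Xi))$-labeled distribution tester of sample complexity $m_A + m_B + O(1/\epsilon) = O(k/\epsilon + k/(\epsilon^2\log k))$, and the transfer to the parity-trace model through Lemma~\ref{lemma:labeled-tester-to-parity-tester} yields the theorem. The main technical step to discharge is the Success$^*$ verification of the learner: one must align the interval partitions of $f$ and $g$ and argue that the per-interval mass discrepancies of $\cD$ are controlled by the labeled-sample TV error, which should follow from a short matching argument since $\cD(I_j^f)=1/k$ holds exactly on the YES side. The remainder is bookkeeping on constants and success probabilities to match Definition~\ref{def:learner-verifier-pair}.
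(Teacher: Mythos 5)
Your overall architecture is the same as the paper's: set $\Xi$ to be the uniformly $k$-alternating labeled distributions (so $\Pi(\Xi)$ is the singleton $\{\mu\}$), build a learner-verifier pair with an ERM/PAC proper learner for $k$-alternating functions and the tolerant uniformity tester of \cite{VV17}, feed it to \cref{prop:testing-by-learning}, then transfer to the parity-trace model via \cref{lemma:labeled-tester-to-parity-tester}. All of that matches the proof of \cref{thm:uniform-k-alternating-learning-bound}, including the choice of $m_A = O(k/\epsilon)$ and $m_B = O(k/(\epsilon^2\log k))$.

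The gap is the step you call ``a short matching argument.'' What you need to show is that if $(f,\cD)\in\Xi$ (so $\pi_{f,\cD}$ is uniform) and the learner returns a $k$-alternating $g$ with $\dist_\TV(\cD_f,\cD_g)<\delta$, then $\dist_\TV(\pi_{g,\cD},\mu) = \tfrac12\sum_j|\cD(I_j^g)-1/k| \le O(\delta)$. This is not a bookkeeping exercise: a priori $g$ could have \emph{fewer} alternation points than $f$ (so one interval of $g$ spans several intervals of $f$), or a $g$-alternation could land far from the corresponding $f$-alternation, and in either case there is no canonical one-to-one matching of intervals to compare masses against. The paper discharges exactly this step through \cref{lemma:edit-to-tv-uniform}, which says $\dist_\edit(\pi',\mu) \ge c\cdot\dist_\TV(\pi',\mu)$ for $\pi'$ supported on $[k]$, combined with the observation $\dist_\edit(\pi_{g,\cD},\pi_{f,\cD}) \le \dist_\TV(\cD_g,\cD_f)$ which holds by the definition of edit distance. \cref{lemma:edit-to-tv-uniform} itself is a multi-case combinatorial argument (the full proof in \cref{section:edit-distance-uniform-distribution} runs over several pages handling the regimes $h=1$, $2\le h\le 13$, and $h\ge 14$ according to how many uniform blocks a single $g$-interval absorbs). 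So while your plan is structurally complete and correct, the piece you defer as ``a short matching argument'' is in fact the paper's key auxiliary lemma, and you should prove (or cite) it explicitly rather than treat it as routine. Your renormalization construction of $\cE$ is correct and gives the same formula for $\dist_\TV(\cD,\cE)$ as the paper obtains via \cref{prop:labeled-distribution-construction}; the only missing ingredient is that inequality.
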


%\thmtestinguniformnopromise*

This will follow from the next lemma, using fact that $\Pi = \Pi(\Xi)$, where $\Xi$ is the property
of uniformly $k$-alternating labeled distributions, together with
\cref{lemma:labeled-tester-to-parity-tester}.

\begin{lemma}
\label{thm:uniform-k-alternating-learning-bound}
Let $\Xi$ be the uniformly $k$-alternating functions. Then there is a $(\Xi,
\far^\TV_\epsilon(\Xi),2/3)$-labeled distribution tester with sample complexity
$O\left(\frac{k}{\epsilon} + \frac{k}{\epsilon^2 \log k} \right)$.
\end{lemma}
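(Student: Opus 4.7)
The plan is to invoke the testing-by-learning reduction of \cref{prop:testing-by-learning} by constructing an appropriate learner-verifier pair $(A, B)$ for $\Xi$ in the sense of \cref{def:learner-verifier-pair}. The learner $A$ will be a standard proper PAC learner for $k$-alternating functions, using $m_A = O(k/\epsilon)$ samples, and for each output function $g$ the verifier $B_g$ will be a tolerant uniformity tester on the $k$ monochromatic intervals of $g$, using $m_B = O(k/(\epsilon^2 \log k))$ samples via \cite{VV17}. Summing gives the desired bound, and \cref{prop:testing-by-learning} produces the labeled-distribution tester.

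The first step is the learner. The class of $k$-alternating functions on $\bR$ has VC dimension $O(k)$ and admits a proper distribution-free PAC learner (using only the relative order of the labeled sample, one picks the $k$-alternating function that minimizes empirical error on a sample of size $O(k/\epsilon)$, which is proper). On input $(f,\cD) \in \Xi$, this learner returns a $k$-alternating $g$ satisfying $\Pr_{x\sim\cD}[f(x)\ne g(x)] < \epsilon/8$ with probability $\ge 1-\delta/3$, and hence $\dist_\TV(\cD_f,\cD_g) < \epsilon/8 < \epsilon/4$ by \cref{prop:tv-distance-labeled}. To verify the ``$^*$'' part of the success condition, note that since $g$ is $k$-alternating, the density sequence property $\Pi_g = \{\cE : (g,\cE)\in\Xi\}$ is nonempty (take any $\cE$ placing mass exactly $1/k$ on each of $g$'s $k$ monochromatic intervals), so when $(f,\cD)\notin\Xi$ there is some $\cE$ with $\cE_g\in\Xi$. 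When $(f,\cD)\in\Xi$, the key additional claim is that $\cD$ itself is close to $\Pi_g$ in TV distance: letting $q_i$ denote $\cD$'s mass on $g$'s $i$-th interval, the minimum TV distance from $\cD$ to any $\cE\in\Pi_g$ equals $\tfrac12\sum_i|q_i - 1/k|$ (redistribute within each interval freely to match $\cD$, then rescale to mass $1/k$), so it suffices to show $\sum_i |q_i - 1/k| < \epsilon/2$; this follows from a short argument that the symmetric differences between $f$'s and $g$'s intervals have total $\cD$-mass at most $2\Pr_\cD[f\ne g] < \epsilon/4$.

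For the verifier, fix any $k$-alternating $g$ in the range of $A$. Given a sample $S\sim \samp(\cD, m_B)$, the algorithm $B_g$ can deterministically bucket each sample point into one of $g$'s $k$ intervals (since $g$ is known), thus simulating a sample from the induced distribution $\mathbf{q}$ on $[k]$. By the computation above, $\dist_\TV(\cD,\Pi_g) = \dist_\TV(\mathbf{q}, U_{[k]})$, so $B_g$ reduces to a tolerant uniformity tester distinguishing $\dist_\TV(\mathbf{q},U_{[k]}) \le \epsilon/4$ from $\dist_\TV(\mathbf{q},U_{[k]}) \ge \epsilon/2$, solvable with $O(k/(\epsilon^2\log k))$ samples by \cite{VV17}. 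Combining $A$ and $B$ via \cref{prop:testing-by-learning} with success probability $1-\delta$ (choosing $\delta$ so the final tester has success probability $\ge 2/3$) yields the stated sample complexity.

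The main technical point is the second bullet of the ``Success$^*$'' condition, i.e.\ bounding $\sum_i|q_i-1/k|$ by the PAC error $\Pr_\cD[f\ne g]$, which requires some care because $f$ and $g$ may have different combinatorial interval structures; the rest of the argument is a clean plug-in of known tools (proper PAC learning of unions of intervals and the Valiant--Valiant tolerant uniformity tester) into the learner-verifier framework already developed in the paper.
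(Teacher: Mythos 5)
Your high-level strategy is the same as the paper's: invoke the testing-by-learning reduction of \cref{prop:testing-by-learning} with a proper PAC learner $A$ for $k$-alternating functions and a tolerant uniformity tester $B_g$ on the induced density sequence $\pi_{g,\cD}$, and this is indeed the right structure. However, there is a real gap in how you verify the ``Success$^*$'' condition, namely the claim that $\sum_i|q_i - 1/k|$ is bounded by a constant times $\Pr_{\cD}[f\ne g]$ via a ``short argument that the symmetric differences between $f$'s and $g$'s intervals have total $\cD$-mass at most $2\Pr_\cD[f\ne g]$.''

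This claim is false, and the intuition behind it breaks down in precisely the way the paper's \cref{lemma:edit-to-tv-uniform} is designed to handle. The issue is interval \emph{misalignment}: a point $x$ may lie in $f$'s $i$-th interval $A_i$ and $g$'s $j$-th interval $B_j$ with $i\ne j$ yet $f(x)=g(x)$, because $i$ and $j$ can differ by an even amount. So $\sum_i\cD(A_i\triangle B_i) = 2\Pr_x[i(x)\ne j(x)]$ is \emph{not} controlled by $\Pr_\cD[f\ne g]$. Concretely, take $(f,\cD)\in\Xi$ with intervals $A_1,\dots,A_k$ of mass $1/k$ each, and let $g$ merge $A_1\cup A_2\cup A_3$ into a single interval of value~$1$. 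Then $\Pr_\cD[f\ne g]=1/k$ (just the mass of $A_2$), but every point in $A_2,\dots,A_k$ has $i(x)\ne j(x)$, so $\sum_i\cD(A_i\triangle B_i)=2(1-1/k)$, which does not vanish with $\Pr_\cD[f\ne g]$. Moreover the actual quantity of interest, $\sum_i|q_i-1/k|=4/k=4\Pr_\cD[f\ne g]$, already violates the factor-$2$ bound you claim (though this particular violation is only a constant and could be absorbed by tuning the PAC error). The fundamental problem is that your proposed \emph{proof} — via the indexed symmetric differences — cannot yield any $O(1)\cdot\Pr_\cD[f\ne g]$ bound, because the proxy quantity $\sum_i\cD(A_i\triangle B_i)$ genuinely blows up.

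The paper circumvents this by passing through edit distance. It uses the definitional inequality $\dist_\edit(\pi_{g,\cD},\pi_{f,\cD})\le\dist_\TV(\cD_g,\cD_f)$ to control the edit distance by the PAC error, and then invokes \cref{lemma:edit-to-tv-uniform}, which shows $\dist_\TV(\pi_{g,\cD},\mu)\le\tfrac{1}{c}\dist_\edit(\pi_{g,\cD},\mu)$ for some universal $c>0$ whenever $\mu$ is uniform. That lemma is where the real work lives — its proof (\cref{section:edit-distance-uniform-distribution}) is a multi-case argument that carefully handles the phase-shift phenomenon above, and the resulting constant is not small. If you want to complete your proof without edit distance, you would need to reprove the content of that lemma: showing that whenever $g$'s interval structure is badly misaligned with $f$'s (so that $\sum_i|q_i-1/k|$ is large), $f$ and $g$ must actually disagree on $\Omega(1)$ fraction of the out-of-phase mass. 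There is no short way around this step.
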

\begin{proof}
Let $c \in (0,1)$ be the universal constant in \cref{lemma:edit-to-tv-uniform}.  We will construct a
learner-verifier pair. Let $A$ be the standard PAC learning algorithm for the class of
$k$-alternating functions, with error $c\epsilon/4$. This algorithm, using a sample of size $m_A =
O(k/\epsilon)$, outputs a $k$-alternating function $g : \bZ \to \zo$, such that with probability at
least $8/9$,
\[
  \dist_\TV(\cD_f, \cD_g) = \Pru{x \sim \cD}{ f(x) \neq g(x) } < c\epsilon/4 \,,
\]
where the equality is due to \cref{fact:tv-to-ham}.  It is clear that there exists a distribution
$\cE$ such that $\cE_g \in \Xi$.

Suppose that $(f, \cD) \in \Xi$.  We must show that there exists a distribution $\cE$ such that
$\cE_g \in \Xi$ and $\dist_\TV(\cD, \cE) \leq \epsilon/4$. Let $\mu$ be the uniform distribution
over $[k+1]$. Using \cref{prop:labeled-distribution-construction}, we obtain $\cE$ such that
\[
  \dist_\TV(\cD_g, \cE_g) = \dist_\TV(\pi_{g,\cD}, \pi_{g,\cE}) = \dist_\TV(\pi_{g,\cD}, \mu) \,.
\]
Using $\pi_{g,\cE} = \mu = \pi_{f,\cD}$ and \cref{lemma:edit-to-tv-uniform}, we have
\[
  \dist_\TV(\pi_{g,\cD}, \pi_{g,\cE})
  = \dist_\TV(\pi_{g,\cD}, \pi_{f,\cD})
  \leq \frac{1}{c} \cdot \dist_\edit(\pi_{g,\cD}, \pi_{f,\cD})
  \leq \frac{1}{c} \cdot \dist_\TV(\cD_g , \cD_f) \,,
\]
where the last inequality holds by definition. Then
\[
  \dist_\TV(\cD, \cE) = \dist_\TV(\cD_g, \cE_g)
    \leq \frac{1}{c} \cdot \dist_\TV(\cD_g, \cD_f) \leq \epsilon/4 \,.
\]
So algorithm $A$ satisfies the conditions for the learner-verifier pair. It remains to construct the
verifier $B$.  Let $g : \bZ \to \zo$ be any possible output of the learner, which must be a
$k$-alternating function. Then $\Pi_g$ is the set of all distributions $\cE$ such that $\pi_{g,\cE}
= \mu$, where $\mu$ is the uniform distribution over $[k+1]$. Define the algorithm $B_g$ as follows:
\begin{enumerate}
\item Sample $S \sim \samp(\cD,m_B)$ and construct the multiset $S'$ by taking each $x \in S$ and
including the number $i \in [k+1]$ in $S'$, where $i$ is the unique interval $(a_{i-1}, a_i]$ that
contains $x$. Then $S'$ is distributed as $\samp(\pi_{g,\cD}, m_B)$.
\item Use an $(\epsilon/4, \epsilon/2)$-tolerant uniformity tester on sample $S'$ to test if
$\dist_\TV(\pi_{g,\cD}, \mu) < \epsilon/4$ or $\dist_\TV(\pi_{g,\cD}, \mu) > \epsilon/2$.
This step requires $m_B = O\left(\frac{k}{\epsilon^2 \log k}\right)$ samples \cite{VV17a}.
\end{enumerate}
Suppose that $\cE \in \close^\TV_{\epsilon/4}(\Pi_g)$, so there exists $\cF$ such that
$\pi_{g,\cF} = \mu$ and $\dist_\TV(\cE_g, \cF_g) < \epsilon/4$. Then
\[
  \dist_\TV(\pi_{g,\cE}, \mu)
  = \dist_\TV(\pi_{g,\cE}, \pi_{g,\cF})
  \leq \dist_\TV(\cE_g, \cF_g) < \epsilon/4 \,,
\]
so the tolerant uniformity tester will accept.

Now suppose that $\cE \in \far^\TV_{\epsilon/2}(\Pi_g)$. For contradiction, suppose that
\[
  \dist_\TV(\pi_{g,\cE}, \mu) \leq \epsilon/2 \,.
\]
Using \cref{prop:labeled-distribution-construction}, we obtain $\cF$ such that $\pi_{g,\cF} = \mu$
(so $\cF \in \Pi_g$), and
\[
  \dist_\TV(\cE_g, \cF_g) = \dist_\TV(\pi_{g,\cE}, \pi_{g,\cF}) = \dist_\TV(\pi_{g,\cE}, \mu) \leq
    \epsilon/2 \,.
\]
This contradicts $\cE \in \far^\TV_{\epsilon/2}(\Pi_g)$, so it must be the case that
$\dist_\TV(\pi_{g,\cE}, \mu) > \epsilon/2$. Then the tolerant uniformity tester will correctly
reject.
\end{proof}

We also note that the no-promise problem of testing the $k$-alternating labeled distributions
inherits an upper bound of $O(k/\epsilon)$ from the equivalence to distribution-free sample-based
testing:

\begin{lemma}
    \label{lemma:labeled-distribution-testing-k-alternating}
    Let $\Xi$ be the set of labeled distributions $(f, \cD)$ such that $f$ is a $k$-alternating
    function and $\cD$ is any distribution over $\bZ$.
    Let $m(k, \epsilon)$ be the optimal sample complexity of a distribution-free sample-based
    tester for $k$-alternating functions.
    Then the optimal sample complexity of
    $(\Xi, \far^\TV_\epsilon(\Xi))$-labeled distribution testing is $\Theta(m(k, \epsilon))$.
    In particular, there is such a tester with sample complexity $O(k/\epsilon)$.
\end{lemma}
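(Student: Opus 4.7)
The plan is to handle the two claims separately. The $\Theta(m(k,\epsilon))$ equivalence is essentially a direct invocation of the general correspondence between labeled distribution testing and distribution-free sample-based property testing that was established earlier, while the $O(k/\epsilon)$ bound follows from the standard testing-by-learning reduction applied to the class of $k$-alternating functions, which has VC dimension $\Theta(k)$.

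For the equivalence, observe that our property $\Xi$ is of the form $\Lambda \times \cP$, where $\Lambda$ is the set of $k$-alternating functions $\bZ \to \zo$ and $\cP$ is the set of all distributions over $\bZ$. This is exactly the setup of \cref{prop:tester-equivalence}. Applying the second part of that proposition (distribution-free sample-based tester $\Rightarrow$ labeled distribution tester) with parameter $\epsilon$ gives an upper bound $O(m(k,\epsilon))$ on the sample complexity of $(\Xi, \far^\TV_\epsilon(\Xi))$-testing. Applying the first part with parameter $2\epsilon$ shows that a $(\Xi, \far^\TV_\epsilon(\Xi))$-labeled distribution tester yields a distribution-free sample-based tester for $k$-alternating functions at distance $2\epsilon$, giving a lower bound $\Omega(m(k, 2\epsilon))$ on the sample complexity. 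Since $m(k, 2\epsilon) = \Theta(m(k, \epsilon))$ for all natural sample complexities (up to absorbing the factor of 2 into the hidden constant, which is standard for VC-type bounds), this establishes the $\Theta(m(k,\epsilon))$ equivalence.

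For the $O(k/\epsilon)$ upper bound, it suffices to exhibit a distribution-free sample-based tester for $k$-alternating functions with this sample complexity. The class of $k$-alternating functions on $\bZ$ (equivalently, unions of at most $\lceil k/2 \rceil$ intervals) has VC dimension $\Theta(k)$, so the standard distribution-free PAC learner for this class uses $O(k/\epsilon)$ labeled examples to output, with high probability, a hypothesis $g$ that is $k$-alternating and satisfies $\Pru{x \sim \cD}{f(x) \neq g(x)} \leq \epsilon/2$ whenever $f$ is $k$-alternating. By the classical testing-by-learning reduction of \cite{GGR98}, composing this proper learner with an additional $O(1/\epsilon)$ samples used to estimate the disagreement between $f$ and the learned hypothesis $g$ yields a distribution-free sample-based property tester for $\Lambda$ with sample complexity $O(k/\epsilon)$. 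Plugging this into the equivalence above gives the promised $O(k/\epsilon)$ labeled distribution tester.

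The only subtlety worth mentioning is that our formulation of testing-by-learning in \cref{prop:testing-by-learning} is tailored to labeled distributions and requires a ``proper learner-and-verifier'' pair; however, for the special case $\Xi = \Lambda \times \cP$ where $\cP$ is the set of all distributions, the verifier is trivial (it always accepts, since $\Pi_g = \cP$ contains every distribution), so we recover the classical \cite{GGR98} reduction with no further work. This closes the argument.
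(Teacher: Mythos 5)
Your proposal is correct and takes essentially the same route as the paper: part one via \cref{prop:tester-equivalence}, part two via the VC dimension of $k$-alternating functions combined with the \cite{GGR98} testing-by-learning reduction. You are more explicit than the paper about the $\epsilon \leftrightarrow 2\epsilon$ mismatch that \cref{prop:tester-equivalence} introduces (the paper silently absorbs it into the $\Theta(\cdot)$), and your remark that the verifier in \cref{prop:testing-by-learning} is trivial when $\cP$ is the class of all distributions is a correct observation that matches the paper's remark following that proposition; neither of these changes the substance of the argument.
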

\begin{proof}
    The first part of the statement follows from \cref{prop:tester-equivalence}.
    The second part follows from standard PAC learning theory, since the class of $k$-alternating
    functions has VC dimension $k+1$, along with the testing by learning reduction \cite{GGR98}.
\end{proof}

As with the previous result, this lemma implies a bound for testing the \emph{support size}
of distributions under the parity trace; this is the starting point for our next discussion, on the
connections between distribution testing under the parity trace and distribution-free sample based
testing.

\subsection{Distribution-Free Sample-Based Property Testing}
\label{section:testing-support-size}

We now prove that testing support size $k$ under the parity trace is equivalent to testing
$k$-alternating functions in the standard distribution-free sample-based model (whose optimal sample
complexity is unknown \cite{BFH21}). This has the interesting consequence, in
\cref{lemma:support-size-to-halfspace}, that an improved lower bound on testing support size under the
parity trace could give a better lower bound for testing halfspaces in the distribution-free
sample-based model.

We require the following proposition about edit distance, which is proved in
\cref{section:edit-distance-support-size}.

\begin{restatable}{proposition}{propedittvdistancestosupportsizek}
\label{lemma:edit-tv-distances-to-support-size-k}
Let $\Xi$ be the property of proper labeled distributions $(g,\cE)$ where $\pi_{g,\cE}$ has support
size at most $k$. Then for any proper labeled distribution $(f,\cD)$, $\dist_\TV((f,\cD), \Xi) \leq
\dist_\edit((f,\cD), \Xi)$.
\end{restatable}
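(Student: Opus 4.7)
\emph{Plan.} I will prove the stronger pointwise inequality that $\dist_\TV((f,\cD), \Xi_{g,\cE}) \leq \dist_\edit((f,\cD), (g,\cE))$ for each fixed $(g,\cE) \in \Xi$, where $\Xi_{g,\cE} := \{(h,\cF) : \pi_{h,\cF} = \pi_{g,\cE}\}$. Since $\Xi$ is a density property, $\Xi_{g,\cE} \subseteq \Xi$, so taking the infimum over $(g,\cE) \in \Xi$ on both sides yields the lemma. The reverse inequality $\dist_\edit \leq \dist_\TV$ is automatic, since fixing $(f',\cD') = (f,\cD)$ in the infimum defining edit distance recovers the TV infimum as a restricted case; hence the content of the lemma is this non-obvious direction.

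Fix $(g,\cE) \in \Xi$ and $\eta > 0$. By \cref{lemma:edit-distance-alternate}, I choose proper labeled distributions $(f^\ast, \cD^\ast)$ and $(g^\ast, \cE^\ast)$ with $\pi_{f^\ast,\cD^\ast} = \pi_{f,\cD}$, $\pi_{g^\ast,\cE^\ast} = \pi_{g,\cE}$, and $\dist_\TV(\cD^\ast_{f^\ast}, \cE^\ast_{g^\ast}) < \dist_\edit((f,\cD), (g,\cE)) + \eta$. Note that $(g^\ast, \cE^\ast) \in \Xi_{g, \cE}$ by construction. The crux is to construct some $(h, \cF) \in \Xi_{g, \cE}$ satisfying $\dist_\TV(\cD_f, \cF_h) \leq \dist_\TV(\cD^\ast_{f^\ast}, \cE^\ast_{g^\ast})$; passing $\eta \to 0$ then finishes the proof.

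The construction transports $(g^\ast, \cE^\ast)$ along an order-preserving map $\phi : \bZ \to \bZ$ that witnesses the density-sequence equivalence of $(f^\ast, \cD^\ast)$ and $(f, \cD)$. Because these two distributions share the density sequence $\pi := \pi_{f, \cD}$, their alternation sequences partition $\bZ$ into intervals $\{J_i\}$ (for $f$) and $\{I_i\}$ (for $f^\ast$) with matching masses $\cD(J_i) = \cD^\ast(I_i) = \pi(i)$ for every $i$. I choose $\phi$ so that (i) $\phi$ sends each $I_i$ into $J_i$, (ii) $\phi_\ast \cD^\ast = \cD$ exactly, and (iii) $\phi$ is injective on $\supp(\cD^\ast) \cup \supp(\cE^\ast)$. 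Defining $\cF := \phi_\ast \cE^\ast$ and $h := g^\ast \circ \phi^{-1}$ on the image of $\phi$ (extending $h$ off the image to any proper labeling consistent with the alternation pattern) gives a labeled distribution whose density sequence equals $\pi_{g^\ast, \cE^\ast} = \pi_{g, \cE}$, placing it in $\Xi_{g, \cE}$. The TV bound follows by pushing the optimal coupling of $(\cD^\ast_{f^\ast}, \cE^\ast_{g^\ast})$ forward along $(\phi, \phi)$, which yields a coupling of $(\cD_f, \cF_h)$ of no greater discrepancy.

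The main obstacle is satisfying property (iii) in the discrete domain $\bZ$: a given $J_i$ may contain too few integers to accommodate the union $\supp(\cD^\ast|_{I_i}) \cup \supp(\cE^\ast|_{I_i})$ injectively. I would address this by first replacing $(f^\ast, \cD^\ast)$ and $(g^\ast, \cE^\ast)$ with equivalent representatives obtained by composing both with a common order-preserving injection $\iota : \bZ \hookrightarrow \bZ$ that inserts enough dummy integer positions between the alternation points of $f^\ast$ to ``thin out'' the supports of $\cD^\ast$ and $\cE^\ast$ within each interval. This reparameterization preserves both density sequences, membership in $\Xi$, and the TV distance $\dist_\TV(\cD^\ast_{f^\ast}, \cE^\ast_{g^\ast})$, so the value of the edit-distance infimum is unaffected; but after the reduction, the injectivity constraint on $\phi$ is easily met and the construction proceeds as sketched.
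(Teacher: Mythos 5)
Your overall plan -- reduce to the pointwise claim that $\dist_\TV((f,\cD), \Xi_{g,\cE}) \leq \dist_\edit((f,\cD), (g,\cE))$ for each fixed $(g,\cE) \in \Xi$, then take the infimum -- is valid and mirrors the paper's strategy of taking a pair $(f',\cD'),(g,\cE)$ that (approximately) realizes the edit distance and transferring the $g$-side to the $f$-side without increasing TV distance.

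However, the construction of $(h,\cF)$ via a deterministic pushforward $\phi$ has a gap that the ``thinning'' fix does not close. Conditions (ii) ($\phi_\ast\cD^\ast = \cD$ exactly) and (iii) ($\phi$ injective on $\supp(\cD^\ast)\cup\supp(\cE^\ast)$) are generally incompatible: if $\phi$ is injective on $\supp(\cD^\ast)$, then $\phi_\ast\cD^\ast$ has the \emph{same multiset of nonzero point masses} as $\cD^\ast$, so $\phi_\ast\cD^\ast = \cD$ forces $\cD$ and $\cD^\ast$ to have identical point-mass profiles. But $\cD$ and $\cD^\ast$ are constrained only to produce the same density sequence $\pi$ relative to $f$ and $f^\ast$ respectively, which says nothing about the within-interval profiles -- e.g.\ $\cD^\ast|_{I_i}$ may be a single point mass $\pi(i)$ while $\cD|_{J_i}$ is spread over several points. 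Even dropping (iii), a deterministic map can merge masses but never split them, so $\phi_\ast\cD^\ast=\cD$ already fails whenever $\cD^\ast$ concentrates more than $\cD$ does inside some interval. The proposed preprocessing only inserts zero-mass integer positions (to relieve crowding); it leaves the multiset of nonzero masses of $\cD^\ast$ unchanged, so it addresses neither problem. The paper sidesteps this by not looking for a pointwise mass-preserving map at all: after normalizing via \cref{fact:edit-zero-interval} and \cref{fact:edit-condense} so that $\cD',\cE$ are supported on alternation points and $g$'s alternation points are a subset of $f'$'s, it builds $\cH$ directly on the $f$-side, interval by interval, using \cref{fact:edit-match-mass} to redistribute the aggregate mass $\cE(B'_j)$ within $A_j$ so that the per-interval TV contribution never exceeds that of $(f',\cD')$ vs.\ $(g,\cE)$ on $B'_j$. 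That interval-level matching of aggregate masses (rather than point-level pushforward of distributions) is the missing idea you would need to make your argument go through.
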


\begin{restatable}{theorem}{thmtestingsupportk}
\label{thm:testing-support-k}
Let $\Pi$ be the class of distributions on domain $\bN$ with support size at most $k$. Let
$m_1(k,\epsilon)$ be the optimal sample complexity of a $(\Pi, \far^\edit_\epsilon(\Pi),
2/3)$-distribution tester under the parity trace, and let $m_2(k,\epsilon)$ be the optimal sample
complexity of a distribution-free sample-based tester for $k$-alternating functions. Then
$m_1(k,\epsilon) = \Theta(m_2(k,\epsilon))$.
\end{restatable}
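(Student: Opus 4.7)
The plan is to route both directions through the intermediate model of labeled distribution testing developed in \cref{section:density-properties}. Specifically, let $\Xi$ denote the density property consisting of labeled distributions $(f,\cD)$ with $f$ being $k$-alternating; by the definition of density sequences, $\Pi(\Xi) = \Pi$ (up to a harmless off-by-one shift between ``$k$ alternations'' and ``support of size $k+1$''\!\!, which is absorbed by the $\Theta$).

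For the inequality $m_1 = O(m_2)$, I would start from a distribution-free sample-based tester for $k$-alternating functions with sample complexity $m_2$ and success probability boosted close to $1$. Since $k$-alternating is a pure function property, \cref{prop:tester-equivalence} (part 2) transports this into a $(\Xi, \far^\TV_\epsilon(\Xi))$-labeled distribution tester of the same sample complexity. Then \cref{fact:edit-tv-subset} gives $\far^\edit_\epsilon(\Xi) \subseteq \far^\TV_\epsilon(\Xi)$, so the same algorithm solves the $(\Xi, \far^\edit_\epsilon(\Xi))$-labeled distribution problem. Finally, \cref{lemma:labeled-edit-tester-to-parity-tester} transforms this into a $(\Pi, \far^\edit_\epsilon(\Pi))$-distribution tester under the parity trace of sample complexity $m_2$, absorbing the $-\delta$ loss in success probability into the pre-boosted margin.

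For the reverse inequality $m_2 = O(m_1)$, I would start from a $(\Pi, \far^\edit_\epsilon(\Pi))$-parity trace tester of sample complexity $m_1$. By \cref{prop:density-property-edit}, both $\Xi$ and $\far^\edit_\epsilon(\Xi)$ are density properties, and $\Pi(\far^\edit_\epsilon(\Xi)) = \far^\edit_\epsilon(\Pi(\Xi)) = \far^\edit_\epsilon(\Pi)$. So the second part of \cref{lemma:labeled-to-linear-trace-reductions} yields a $(\Xi, \far^\edit_\epsilon(\Xi))$-labeled distribution tester with the same sample complexity. The key step is then upgrading this to a $(\Xi, \far^\TV_\epsilon(\Xi))$-labeled tester, after which \cref{prop:tester-equivalence} (part 1) delivers a distribution-free sample-based tester for $k$-alternating functions (with a factor-$2$ loss in the TV parameter that is absorbed into $\Theta$).

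The heart of the argument is therefore verifying that $\far^\edit_\epsilon(\Xi) = \far^\TV_\epsilon(\Xi)$ for this specific $\Xi$. One inclusion is the general inequality \eqref{eq:edit-to-tv-inequality}, which gives $\dist_\edit((f,\cD), \Xi) \leq \dist_\TV((f,\cD), \Xi)$ and hence $\far^\TV_\epsilon(\Xi) \subseteq \far^\edit_\epsilon(\Xi)$. The reverse inclusion is exactly the content of the separately-proven \cref{lemma:edit-tv-distances-to-support-size-k}, which states $\dist_\TV((f,\cD), \Xi) \leq \dist_\edit((f,\cD), \Xi)$; intuitively, any labeled distribution with $f$ being $k$-alternating is approximable in TV by a labeled distribution with the same density sequence (by freely rearranging mass within each monochromatic interval), so the ``free'' rearrangement operations in the edit distance cost nothing in TV when the target has support size at most $k+1$. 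This is the main obstacle, but it has been factored out into the referenced lemma, and the rest of the argument is just assembling the pieces of the framework already built in \cref{section:density-properties}.
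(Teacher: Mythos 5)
Your high-level plan is the paper's plan — route through labeled distribution testing, lean on the Ramsey-based reductions between labeled-distribution and parity-trace testers, and invoke \cref{lemma:edit-tv-distances-to-support-size-k} as the bridge between edit and TV distance — but there is a genuine gap in the setup that breaks the machinery you invoke.

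You assert that $\Xi$, the property of labeled distributions $(f,\cD)$ with $f$ being $k$-alternating, is a density property with $\Pi(\Xi) = \Pi$. Neither claim holds. The set of $(f,\cD)$ with $f$ $k$-alternating is not a density property: membership is not determined by the density sequence $\pi_{f,\cD}$, because a function $g$ with the identical density sequence as $f$ may have extra alternations on zero-$\cD$-mass regions, giving $(g,\cD)$ the same $\pi_{g,\cD}$ but $g$ not $k$-alternating. Separately, $\Pi(\Xi)$ consists of distributions on $\bN$ supported on the first $\sim k+1$ positions (contiguously near the origin), whereas $\Pi$ in the theorem is the set of distributions whose support, \emph{as a set}, has size at most $k$ with no positional constraint; these are different, and the gap is not an off-by-one. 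Since the reductions you cite — \cref{lemma:labeled-edit-tester-to-parity-tester}, \cref{lemma:labeled-tester-to-parity-tester}, \cref{prop:density-property-edit} — all require the input to be a density property, they simply cannot be applied to your $\Xi$, and the step ``$\Pi(\far^\edit_\epsilon(\Xi)) = \far^\edit_\epsilon(\Pi)$'' does not type-check.

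The fix — which is exactly what the paper does — is to introduce the genuine density property $\Xi'$ consisting of $(f,\cD)$ for which $\pi_{f,\cD}$ has support size at most $k$, so that $\Pi(\Xi') = \Pi$ by construction. One then checks (i) $\Xi \subset \Xi'$, (ii) every $(f,\cD) \in \Xi'$ has a distributionally-indistinguishable $(g,\cD) \in \Xi$ obtained by flipping $g$ on zero-mass intervals, so samples from $\Xi$ and $\Xi'$ look the same, and (iii) $\far^\TV_\epsilon(\Xi) = \far^\TV_\epsilon(\Xi')$. With this equivalence in hand, a labeled-distribution tester for the non-density property $\Xi$ transfers to one for the density property $\Xi'$, and the rest of your argument — passing through \cref{lemma:labeled-tester-to-parity-tester} in one direction, and \cref{prop:density-property-edit} plus \cref{lemma:edit-tv-distances-to-support-size-k} (correctly applied to $\Xi'$, not $\Xi$) in the other — goes through as you sketched.
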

%\thmtestingsupportk*

\begin{proof}[Proof of first direction]
We wish to construct a $(\Pi, \far^\edit_\epsilon(\Pi), 2/3)$-distribution tester under the parity
trace, with sample complexity $O(m_2(k,\epsilon))$.
Let $\Xi$ be the set of labeled distributions $(f,\cD)$ such that $f$ is a $(k-1)$-alternating
function and $\cD$ is any distribution over $\bZ$.
Let $\Xi'$ be the set of labeled distributions
$(f,\cD)$ on domain $\bZ$ such that $\pi_{f,\cD}$ has support size at most $k$, so that $\Pi(\Xi') =
\Pi$. \cref{lemma:labeled-distribution-testing-k-alternating} gives a
$(\Xi, \far^\TV_\epsilon(\Xi), 3/4)$-labeled distribution tester with sample complexity
$O(m_2(k, \epsilon))$.
We will construct a $(\Xi', \far^\TV_\epsilon(\Xi'), 3/4)$-labeled distribution tester with
sample complexity $O(m_2(k,\epsilon))$, from which the conclusion will hold by
\cref{lemma:labeled-tester-to-parity-tester}.

Observe that $\Xi \subset \Xi'$ since for any $(k-1)$ alternating function $f$ and any distribution
$\cD$, $\pi_{f,\cD}$ has support size at most $k$.  We show that for any labeled distribution $(f,
\cD) \in \Xi'$, there exists a labeled distribution $(g, \cD) \in \Xi$ such that $\dist_\TV(\cD_f,
\cD_g) = 0$, so that $\cD_f = \cD_g$. Let $a_1 < a_2 < \dotsm$ be the alternation sequence for $f$,
and use the convention $a_0 = -\infty$. Since $\pi_{f,\cD}$ has support size at most $k$, there are
at most $k$ intervals $(a_{i-1}, a_i]$ such that $\cD(a_{i-1}, a_i] > 0$. Construct $g$ by assigning
$g(x) = 1- f(x)$ for all $x$ belonging to any interval $(a_{i-1}, a_i]$ that satisfies $\cD(a_{i-1},
a_i] = 0$. By \cref{fact:edit-zero-mass}, we have $\dist_\TV(\cD_f, \cD_g) = 0$. The resulting
function has at most $k-1$ alternation points, so $(g,\cD) \in \Xi$.

Suppose $(f, \cD) \in \far^\TV_\epsilon(\Xi)$ and suppose for contradiction that there is $(g,\cE)
\in \Xi'$ such that $\dist_\TV(\cD_f, \cE_g) \leq \epsilon$. Then there is $(g', \cE') \in \Xi$ such
that $\cE'_{g'} = \cE_g$, so $\dist_\TV(\cD_f, \cE'_{g'}) = \dist_\TV(\cD_f, \cE_g) \leq \epsilon$,
and $(f, \cD) \notin \far^\TV_\epsilon(\Xi)$, a contradiction. So $(f, \cD) \in
\far^\TV_\epsilon(\Xi')$. Then $\far^\TV_\epsilon(\Xi) = \far^\TV_\epsilon(\Xi')$.

Then any $(\Xi, \far^\TV_\epsilon(\Xi), 3/4)$-labeled distribution tester is also a $(\Xi',
\far^\TV_\epsilon(\Xi'), 3/4)$-labeled distribution tester, since samples from elements of $\Xi$ are
indistinguishable from samples from elements of $\Xi'$.
\end{proof}

\begin{proof}[Proof of second direction]
We wish to construct a $(\Xi, \far^\TV_\epsilon(\Xi), 2/3)$-labeled distribution tester; then the
conclusion will follow from \cref{prop:tester-equivalence}.

As shown in the upper bound argument, this is equivalent to a $(\Xi', \far^\TV_\epsilon(\Xi'),
2/3)$-labeled distribution tester, where $\Xi'$ is the class of labeled distributions $(f,\cD)$
where $\pi_{f,\cD}$ has support size at most $k$. Note that $\Pi = \Pi(\Xi')$.

By \cref{lemma:edit-tv-distances-to-support-size-k}, $\far^\TV_\epsilon(\Xi') =
\far^\edit_\epsilon(\Xi')$, so this is equivalent to a $(\Xi', \far^\edit_\epsilon(\Xi'),
2/3)$-labeled distribution tester. Since $\Xi'$ and $\far^\edit_\epsilon(\Xi')$ are density
properties, it suffices to obtain a $(\Pi(\Xi'), \Pi(\far^\edit_\epsilon(\Xi')), 2/3)$-distribution
tester under the parity trace, due to \cref{lemma:labeled-to-linear-trace-reductions}.  Finally,
apply \cref{prop:density-property-edit}.
\end{proof}

The above theorem relates the sample complexity of testing $k$-alternating functions to the
complexity of testing support size under the parity trace, with respect to the edit distance.
From here, we will reproduce the result of \cite{BFH21}, that testing $k$-alternating functions
requires $\Omega(k / \log k)$ samples, which will follow from the $\Omega(n / \log n)$ lower bound
for estimating support size, due to \cite{VV11,WY19}.
We use the following formulation of the result of \cite{VV11}:

\begin{theorem}[\cite{VV11}]
    For any sufficiently small constant $\delta > 0$, there exists a pair of distributions
    $\pi^+, \pi^-$ whose non-zero densities are at least $\frac{1}{n}$,
    such that $\pi^+$ has support size at least $(1-\delta)n$, $\pi^-$ has support size at most
    $(1+\delta) \frac{n}{2}$, and distinguishing between them requires
    $\Omega\left(\frac{n}{\log n}\right)$ samples.
\end{theorem}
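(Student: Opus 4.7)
The plan is to derive this directly from the lower bound construction of \cite{VV11}, which establishes that estimating the support size of an unknown distribution up to an additive $\Omega(n)$ term requires $\Omega(n/\log n)$ samples, subject to the requirement that all non-zero probabilities are bounded below by $1/n$. The statement we want is a specific instance of such an estimation lower bound, where the two support-size values are $(1-\delta)n$ and $(1+\delta)n/2$, with an additive gap of $\left(\tfrac{1}{2} - 2\delta\right)n = \Omega(n)$ between them for all sufficiently small $\delta$.

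First I would recall the structure of the \cite{VV11} construction. Their hard instance is a pair of ``prior'' distributions $\mathcal{P}^+, \mathcal{P}^-$ over probability masses, supported on a discrete grid in $[1/n,1]$ and carefully matched in their low-order Poissonized moments. Drawing a ``histogram profile'' from $\mathcal{P}^\pm$ and normalizing produces two random distributions $\boldsymbol{\pi}^\pm$ on $[n]$ whose expected support sizes differ by $\Omega(n)$, but whose Poissonized fingerprints under $m = o(n/\log n)$ samples are statistically indistinguishable. The key technical input is a polynomial-approximation argument showing that the total variation distance between the fingerprint distributions is $o(1)$ in this regime.

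To obtain our specific formulation, I would take $\pi^+$ to be (a slight perturbation of) the uniform distribution on $[n]$: this trivially has all non-zero densities equal to $1/n$ and support size exactly $n \geq (1-\delta)n$. For $\pi^-$, I would instantiate the \cite{VV11} prior on the smaller-support side so that with high probability its support size lies in $[(1-\delta)n/2, (1+\delta)n/2]$ and its masses are all at least $1/n$; a standard concentration argument on the histogram profile, plus a small restriction to the $1/n$-supported elements, will enforce both conditions simultaneously without disturbing the indistinguishability analysis by more than $o(1)$. Finally, the sample-complexity lower bound transfers verbatim: any tester that distinguishes $\pi^+$ from $\pi^-$ with constant advantage induces a distinguisher between the two fingerprint distributions of \cite{VV11}, contradicting their bound unless $m = \Omega(n/\log n)$.

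The main obstacle I expect is the bookkeeping around the lower bound $1/n$ on densities while preserving the indistinguishability: the \cite{VV11} prior places some mass on tiny probabilities below $1/n$, so one must truncate or rescale carefully so that no density falls below $1/n$ while simultaneously maintaining both the expected support-size gap and the matching-moments property that drives the lower bound. Provided the truncation only affects an $o(1)$-fraction of the total variation between fingerprint distributions, the rest of the argument is a direct transcription of \cite{VV11}.
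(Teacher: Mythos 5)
The paper does not prove this theorem: it is stated as a citation of~\cite{VV11} with no accompanying proof, so there is no internal argument to compare against. Your sketch must therefore be judged on its own merits as a reconstruction of the~\cite{VV11} argument.

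The high-level outline you give (Chebyshev/polynomial moment-matching, Poissonized fingerprints, a CLT/total-variation bound on fingerprint distributions) is the right toolkit. But the specific construction you propose --- take $\pi^+$ to be (essentially) the uniform distribution over $[n]$ ``with all non-zero densities equal to $1/n$'', and let $\pi^-$ come from the small-support side of the~\cite{VV11} prior --- cannot work. If $\pi^+$ is uniform then $\|\pi^+\|_2^2 = 1/n$, while any $\pi^-$ with support size at most $(1+\delta)n/2$ and all densities at least $1/n$ satisfies, by Cauchy--Schwarz, $\|\pi^-\|_2^2 \ge 1/|\supp(\pi^-)| \ge \frac{2}{(1+\delta)n}$, a constant-factor gap. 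A simple collision counter on $O(\sqrt{n})$ samples separates these with high probability, which would contradict the claimed $\Omega(n/\log n)$ lower bound. The moment-matching machinery of~\cite{VV11} requires \emph{both} ensembles to be tailored: $\pi^+$ cannot be uniform (nor any ``slight perturbation'' of it), because to match even the second fingerprint moment with $\pi^-$ it must place a non-negligible fraction of mass on probability values that are $\Omega(1/\delta)$ or even $\Theta(\log n)$ times larger than $1/n$. Both sides of the hard pair are drawn from carefully paired discrete priors over probability values in roughly $[1/n,\, O(\log n / n)]$ whose first $\Theta(\log n)$ moments agree but whose expected support counts differ by $\Theta(n)$.

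So the gap is concrete: the claim that one side of the hard pair can be the uniform distribution is false, and the rest of the plan (conditioning the~\cite{VV11} prior on the small-support side, truncating, etc.) is built on that false starting point. A correct reconstruction would need to invoke the two-sided moment-matching construction of~\cite{VV11} (or its tight version in~\cite{WY19}), verify that it can be carried out with all probability values bounded below by $1/n$, and then read off the support-size gap $(1-\delta)n$ vs.\ $(1+\delta)n/2$; none of this is ``a slight perturbation of uniform'' on either side.
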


Their result also applies to estimating the \emph{entropy} of distributions, in which
case they obtain an $\Omega(n/\epsilon \log n)$ lower bound by constructing distributions
$\pi^+_\epsilon, \pi^-_\epsilon$ that with probability $\epsilon$ draw from $\pi^+,\pi^-$
respectively, and otherwise draw an element $\bot$; this shrinks the entropy gap to an $\epsilon$
fraction of the original gap, and distinguishing between $\pi^+_\epsilon$ and $\pi^-_\epsilon$
requires an $\frac{1}{\epsilon}$ fraction more samples.
While this argument does not apply to the support size estimation problem, which requires that
densities be lower bounded by $1/n$, it does apply to testing support size against TV
distance:

\begin{corollary}
    \label{thm:supp-size-testing-lower-bound}
    Let $\Pi$ be the set of distributions over $\bN$ with support size at most $n$, and let
    $\epsilon > 0$. Then any $(\Pi, \far^\TV_\epsilon(\Pi))$-distribution tester requires sample
    size at least $\Omega\left(\frac{n}{\epsilon \log n}\right)$.
\end{corollary}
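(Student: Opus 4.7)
The plan is to adapt the Valiant--Valiant lower bound via the standard ``dilution trick'' hinted at in the paragraph just before the corollary, using a common element $\bot$ to soften the support-size gap to an $\epsilon$-scale TV gap. Let $N = 2n$ and apply the cited \cite{VV11} theorem at scale $N$ with a small constant $\delta > 0$. This yields distributions $\pi^+, \pi^-$ whose non-zero densities are at least $1/N$, with $|\supp(\pi^+)| \geq (1-\delta)N$ and $|\supp(\pi^-)| \leq (1+\delta)N/2$, and which cannot be distinguished with probability $2/3$ using fewer than $\Omega(N/\log N) = \Omega(n/\log n)$ samples. Choose $\delta$ small enough (e.g.\ $\delta < 1/4$) that $(1+\delta)N/2 \leq n$ and $(1-\delta)N \geq \tfrac{3}{2}n$.

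Fix a symbol $\bot \in \bN$ not in the supports of $\pi^\pm$ and define the ``diluted'' distributions
\[
  \pi^+_\epsilon \define (1-\epsilon)\, \delta_{\bot} + \epsilon\, \pi^+ \,, \qquad
  \pi^-_\epsilon \define (1-\epsilon)\, \delta_{\bot} + \epsilon\, \pi^- \,.
\]
I would verify two things. First, $\pi^-_\epsilon \in \Pi$: its support is $\supp(\pi^-) \cup \{\bot\}$, of size at most $(1+\delta)N/2 + 1 \leq n$. Second, $\pi^+_\epsilon \in \far^\TV_{\epsilon'}(\Pi)$ for some $\epsilon' = \Omega(\epsilon)$. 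For any $\pi' \in \Pi$ (\ie $|\supp(\pi')| \leq n$), the set $A \define \supp(\pi^+) \setminus \supp(\pi')$ has size at least $(1-\delta)N - n \geq n/2$. Since each element of $\supp(\pi^+)$ has density $\geq 1/N$ under $\pi^+$, we get $\pi^+(A) \geq (n/2)(1/N) = 1/4$, and hence
\[
  \dist_\TV(\pi^+_\epsilon, \pi') \;\geq\; \pi^+_\epsilon(A) - \pi'(A) \;=\; \epsilon\, \pi^+(A) \;\geq\; \epsilon/4 \,.
\]
So $\pi^+_\epsilon$ is $(\epsilon/4)$-far from $\Pi$ in TV distance, which is enough up to the constant in the $\Omega(\cdot)$.

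The reduction then proceeds by the usual Poissonization: given any $(\Pi, \far^\TV_{\epsilon/4}(\Pi), 2/3)$-tester $A$ using $m$ samples, we simulate a distinguisher between $\pi^+$ and $\pi^-$. On input access to $\pi^b \in \{\pi^+, \pi^-\}$, draw a fresh sample of size $\bm{m}' \sim \Poi(m)$, independently mark each sample point as $\bot$ with probability $1-\epsilon$ and otherwise draw it from $\pi^b$, then feed the resulting multiset to $A$. The induced input distribution is exactly $\pi^b_\epsilon$ (Poissonized), so $A$ succeeds with probability $\geq 2/3$, and the number of $\pi^b$-queries is $\Poi(m\epsilon)$, which is $O(m\epsilon)$ with high probability. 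The Valiant--Valiant lower bound forces $m\epsilon = \Omega(n/\log n)$, giving $m = \Omega(n/(\epsilon\log n))$, and rescaling $\epsilon \mapsto 4\epsilon$ gives the stated bound.

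The main ``obstacle''---really just a bookkeeping worry---is making sure the dilution does not destroy the lower bound in any subtle way: one must check that $\pi^+_\epsilon$ and $\pi^-_\epsilon$ really are indistinguishable with $o(n/(\epsilon\log n))$ samples. This is standard (the Poissonized reduction above), but it does rely on the fact that the VV hard instance is symmetric and thus the lower bound is against the \emph{fingerprint}, so adding an arbitrary amount of $\bot$-mass (which is just one more fingerprint entry, identical under the two cases) cannot help the tester. No new ideas beyond the dilution trick and Poissonization are needed.
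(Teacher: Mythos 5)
Your approach is exactly what the paper intends: the paper states this corollary without a separate proof, treating it as the direct support-size analogue of the Valiant--Valiant entropy dilution argument described in the preceding paragraph, and your proposal fleshes out that dilution trick in the natural way (dilute with $\bot$, reduce back to VV via Poissonization, using the key fact that the lower bound is against the fingerprint so adding identical $\bot$-mass cannot help). The reduction and the far-from argument are sound.

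There is, however, a bookkeeping slip in the choice of scale. You set $N = 2n$ and then assert that $\delta < 1/4$ gives $(1+\delta)N/2 \le n$; but $(1+\delta)N/2 = (1+\delta)n > n$ for every $\delta > 0$, and once you add the extra $\bot$ element the support of $\pi^-_\epsilon$ has size $(1+\delta)n + 1 > n$, so $\pi^-_\epsilon \notin \Pi$ and the completeness side of your reduction fails. The fix is routine: apply VV at a slightly smaller scale, e.g.\ $N = \lfloor 2(n-1)/(1+\delta) \rfloor$, so that $(1+\delta)N/2 + 1 \le n$; one still has $N = \Theta(n)$ so the $\Omega(N/\log N) = \Omega(n/\log n)$ bound survives, and $(1-\delta)N - n = \Theta(n)$ for $\delta < 1/3$, so $\pi^+(A) = \Omega(1)$ and hence $\dist_\TV(\pi^+_\epsilon, \Pi) = \Omega(\epsilon)$ still holds. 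With that correction (and the $\epsilon \mapsto \Theta(\epsilon)$ rescaling you already noted to move from ``$\Omega(\epsilon)$-far'' to ``$\epsilon$-far''), the argument is complete and matches the paper's intended proof.
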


To apply this lower bound, we reduce from testing with respect to TV distance, to testing with
respect to the edit distance. We require the following lemma, whose proof we defer to
\cref{section:edit-distance-distribution-support-size}.

\begin{restatable}{lemma}{lemmadistfreereductionedit}
\label{lemma:distr-free-reduction-edit}
    Let $k \in \bN$.
    Let $\Pi_k$ be the set of distributions over $\bN$ supported on at most $k$ elements, and
    let $\Pi_{2k}$ be the set of distributions over $\bN$ supported on at most $2k$ elements.
    Let $\pi$ be a finitely-supported probability distribution over $\bN$, and let $\pi'$ be the
    probability distribution over $\bN$ given by $\pi'(2i-1) = \pi'(2i) \define \pi(i)/2$ for each
    $i \in \supp(\pi)$.
    Then $\dist_\edit(\pi', \Pi_{2k}) \ge \frac{1}{4} \cdot \dist_\TV(\pi, \Pi_k)$.
\end{restatable}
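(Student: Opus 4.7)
The plan is to reduce the problem to the labeled-distribution formulation via \cref{lemma:edit-distance-alternate}, and then construct $\tau \in \Pi_k$ by a carefully chosen pair-projection of the witness distribution.

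First, fix $\delta>0$ and take $\sigma \in \Pi_{2k}$ with $\dist_\edit(\pi',\sigma) \le \dist_\edit(\pi',\Pi_{2k})+\delta =: \epsilon$. Applying \cref{lemma:edit-distance-alternate}, choose proper labeled distributions $(f,\cD)$ and $(g,\cE)$ on $\bZ$ with density sequences $\pi_{f,\cD}=\pi'$ and $\pi_{g,\cE}=\sigma$, satisfying $\dist_\TV(\cD_f,\cE_g)\le \epsilon + \delta$. I take $(f,\cD)$ to be the canonical labeled distribution for $\pi'$: set $\cD(2i-1)=\cD(2i)=\pi(i)/2$ for all $i\in\supp(\pi)$, and $f(2i-1)=1$, $f(2i)=0$ on $\bN$ (extended by $f(x)=1$ on $\bZ_{\le 0}$ for properness). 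A direct check using the definition of density sequence gives $\pi_{f,\cD}=\pi'$.

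Next, define the pair-projection map $\phi:\bZ\to\bN$ by $\phi(x)=\max(1,\lceil x/2\rceil)$. Because the pair $P_i=\{2i-1,2i\}$ carries $\cD$-mass $\pi(i)$, we have $\phi_*\cD=\pi$. Setting $\tau^*:=\phi_*\cE$, the data processing inequality gives $\dist_\TV(\pi,\tau^*)\le \dist_\TV(\cD,\cE)\le \dist_\TV(\cD_f,\cE_g)\le \epsilon+\delta$. The remaining task is to reduce $\tau^*$ to a distribution $\tau\in\Pi_k$ whose TV distance from $\pi$ grows by at most a factor of four. The structural lever is that $\sigma$ has at most $2k$ nonzero entries, so $\cE$ is supported on at most $2k$ alternation intervals of $g$; within each such interval, we may reshape $\cE$ arbitrarily without changing $\sigma$.

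The central observation is that each pair $P_i$ is either \emph{good} (meaning $g(2i-1)=1$ and $g(2i)=0$, so $g$ agrees with $f$ on $P_i$ and $P_i$ straddles a 1-to-0 transition of $g$) or \emph{bad} (meaning $g$ disagrees with $f$ on at least one element of $P_i$). Using \cref{prop:tv-distance-labeled}, each bad pair contributes at least $\cD(\text{disagreement})/2 \ge \pi(i)/4$ to $\dist_\TV(\cD_f,\cE_g)$, so $\sum_{i\text{ bad}}\pi(i)\le 4(\epsilon+\delta)$. On good pairs, the 1-interval of $g$ containing $2i-1$ ends precisely at $2i-1$ and the next (0-interval) begins at $2i$; by placing the mass of each active interval at its boundary endpoint adjacent to a good pair, two consecutive active intervals can be absorbed into a single element of $\supp(\tau)$. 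Since there are at most $2k$ active intervals of $g$, this consolidation yields $|\supp(\tau)|\le k$, where $\tau$ is obtained from the modified $\phi_*\cE'$ after discarding the bad-pair mass (of size $\le 4(\epsilon+\delta)$). Taking $\delta\to 0$ and $\sigma$ approaching the infimum, we conclude $\dist_\TV(\pi,\Pi_k)\le 4\dist_\edit(\pi',\Pi_{2k})$.

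The hard part will be rigorously implementing the consolidation step: active intervals of $g$ need not be arranged so that every consecutive active pair straddles a good pair $P_i$, since this depends on the parities of the alternation points of $g$ (an interval shares a pair with the next only when its right endpoint is odd). The accounting will absorb each un-consolidated active interval either into the bad-pair budget $4(\epsilon+\delta)$ (when the misalignment at its endpoint forces a label mismatch, contributing to $\dist_\TV(\cD_f,\cE_g)$) or into the TV slack $\epsilon+\delta$ from the projection step. The factor of $4$ in the statement precisely accommodates this bookkeeping, and the key technical challenge is verifying that every isolated active interval can indeed be charged against the bad-pair mass without double-counting.
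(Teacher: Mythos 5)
The proposal takes a genuinely different route from the paper, but has two substantial gaps and the constant bookkeeping does not clearly land on $4$.

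The paper's proof passes to strings via the $\psi$-correspondence and \cref{prop:dist-to-n-strings-distributions}, then establishes a key combinatorial fact: the near-optimal edit sequence transforming $x' = \psi(\pi')$ into a $2k$-block string can be taken to consist of \emph{deletions only}. From that, it zeroes out each $\pi(i)$ whose associated 1-block or 0-block was entirely deleted; since $x'$ has $2K$ blocks and the output has $\le 2k$, at least $2(K-k)$ blocks are deleted, forcing $\ge K-k$ zeros, and a disjointness argument bounds the discarded mass by $2\delta$. Your approach stays in the labeled-distribution world and tries to extract a support-$k$ witness directly from $(g,\cE)$ via the pair projection $\phi$.

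Two concrete problems. First, when you write ``Applying \cref{lemma:edit-distance-alternate}, choose $(f,\cD)$ and $(g,\cE)$ \ldots satisfying $\dist_\TV(\cD_f,\cE_g)\le\epsilon+\delta$. I take $(f,\cD)$ to be the canonical labeled distribution,'' you are silently substituting a particular $(f,\cD)$ into an infimum that ranges over \emph{both} pairs. Nothing in \cref{lemma:edit-distance-alternate} says the near-optimal pair can be chosen with $(f,\cD)$ canonical; that would need its own lemma. (One could instead define $\phi$ relative to the alternation points of whatever $(f,\cD)$ the lemma gives; but then your bound ``each bad pair contributes $\ge\pi(i)/4$'' fails, because the disagreement between $f$ and $g$ might fall on a low-$\cD$-mass part of the pair interval rather than on one of the two mass-$\pi(i)/2$ halves.) Second, the consolidation step is the heart of the argument, and you yourself identify it as unfinished: active intervals of $g$ need not alternate in the parity pattern required to pair them up across good pairs, and the ``charge isolated active intervals against the bad-pair budget'' idea is exactly the part you say needs verification. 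The paper's deletion trick sidesteps this entirely, because a deletion sequence simultaneously witnesses the support bound (block count) and the mass bound (total characters deleted), with no interleaving issues.

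Finally, even granting both steps, your bookkeeping adds $\epsilon+\delta$ from the projection/data-processing step and $4(\epsilon+\delta)$ from discarding bad-pair mass; without an argument that these are not additive you land at a factor of $5$, not the claimed $4$.
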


We may now establish the simple reduction.

\begin{lemma}
\label{lemma:support-size-lower-bound}
Let $\Pi_{2k}$ be the set of distributions on domain $\bN$ with support size at most $2k$, and let
$\epsilon > 0$. Then the sample complexity of a $(\Pi_{2k},
\far^\edit_\epsilon(\Pi_{2k}))$-distribution tester is at least
$\Omega\left(\frac{k}{\epsilon \log k}\right)$.
\end{lemma}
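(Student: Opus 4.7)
The plan is to reduce from the TV-distance support-size testing problem (\cref{thm:supp-size-testing-lower-bound}) to the edit-distance support-size testing problem on a doubled domain, leveraging the ``doubling construction'' that was analyzed in \cref{lemma:distr-free-reduction-edit}. Suppose toward contradiction that there exists a $(\Pi_{2k}, \far^\edit_\epsilon(\Pi_{2k}))$-distribution tester $A$ with sample complexity $m = o(k/(\epsilon \log k))$.

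First, I would set up the reduction. Given an input distribution $\pi$ over $\bN$ for the TV-distance support-size-$k$ testing problem with tolerance parameter $4\epsilon$ (so we want to distinguish $\pi \in \Pi_k$ from $\pi \in \far^\TV_{4\epsilon}(\Pi_k)$), define $\pi'$ over $\bN$ by setting $\pi'(2i-1) = \pi'(2i) = \pi(i)/2$ for every $i \in \supp(\pi)$. A sample from $\pi'$ may be simulated from a sample from $\pi$: draw $\bm x \sim \pi$, flip a fair coin $\bm b \in \{0,1\}$, and output $2\bm x - \bm b$. Thus any tester that takes $m$ samples from $\pi'$ can be executed using $m$ samples from $\pi$.

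Next, I would verify the completeness and soundness of the reduction. For completeness, if $\pi \in \Pi_k$, then $|\supp(\pi')| \leq 2|\supp(\pi)| \leq 2k$, so $\pi' \in \Pi_{2k}$, and $A$ accepts with probability at least $2/3$. For soundness, if $\pi \in \far^\TV_{4\epsilon}(\Pi_k)$, then by \cref{lemma:distr-free-reduction-edit} we have $\dist_\edit(\pi', \Pi_{2k}) \geq \tfrac{1}{4} \cdot \dist_\TV(\pi, \Pi_k) > \epsilon$, so $\pi' \in \far^\edit_\epsilon(\Pi_{2k})$, and $A$ rejects with probability at least $2/3$.

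Combining, we obtain a $(\Pi_k, \far^\TV_{4\epsilon}(\Pi_k))$-distribution tester with sample complexity $m$. By \cref{thm:supp-size-testing-lower-bound} applied with parameters $n = k$ and tolerance $4\epsilon$, we must have $m = \Omega(k / (\epsilon \log k))$, contradicting the assumption. The reduction itself is routine; essentially all the technical content is already in the doubling-construction bound \cref{lemma:distr-free-reduction-edit} and the TV-distance lower bound \cref{thm:supp-size-testing-lower-bound}, so no further obstacle arises.
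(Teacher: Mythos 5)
Your proof is correct and follows essentially the same route as the paper: reduce from TV-distance support-size testing via the doubling construction and invoke \cref{lemma:distr-free-reduction-edit} and \cref{thm:supp-size-testing-lower-bound}. The only cosmetic difference is the parametrization (you fix the edit tolerance at $\epsilon$ and use $4\epsilon$ for the TV problem, while the paper fixes the TV tolerance at $\epsilon$ and uses $\epsilon/4$ for the edit problem), which is immaterial asymptotically.
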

\begin{proof}
Let $\Pi_k$ be the set of distributions on domain $\bN$ with support size at most $k$. We will
reduce $(\Pi_k, \far^\TV_\epsilon(\Pi_k))$-distribution testing to $(\Pi_{2k},
\far^\edit_{\epsilon/4}(\Pi_k))$-distribution testing, from which the conclusion follows, due to
\cref{thm:supp-size-testing-lower-bound}.

On input distribution $\pi$, the algorithm proceeds as follows. We define the distribution $\pi'$
where for each $i \in \bN$, $\pi'(2i) \define \pi(i)/2$ and $\pi'(2i-1) \define \pi(i)/2$. The
algorithm may simulate a sample from $\pi'$ by sampling $\bm i \sim \pi$ and then taking $2\bm i$ or
$2 \bm i - 1$ with equal probability. The algorithm then simulates the $(\Pi_{2k},
\far^\edit_{\epsilon/4}(\Pi_{2k}))$-distribution tester on input $\pi'$. If $\pi \in \Pi_k$, then it
is clear that $\pi' \in \Pi_{2k}$, so the algorithm will correctly accept (with probability at least
$2/3$). If $\pi \in \far^\TV_\epsilon(\Pi_k)$, then by \cref{lemma:distr-free-reduction-edit}, $\pi'
\in \far^\edit_{\epsilon/4}(\Pi_{2k})$, so the algorithm will correctly reject (with probability at
least $2/3$).
\end{proof}

We are now prepared to recover a number of the results of \cite{BFH21} for testing properties with
domain $\bR^n$, including halfspaces, intersections of halfspaces, and decision
trees. The idea is to reduce from testing $k$-alternating functions to testing the property in
question, by taking the one-dimensional space and embedding it into $\bR^n$ in an appropriate way.
This technique was also used in \cite{ES20,BFH21}. We provide a formal proof for halfspaces, and
refer to \cite{BFH21} for the details on intersections of halfspaces and decision trees.

\begin{lemma}
\label{lemma:support-size-to-halfspace}
Let $\Pi$ be the property of distributions on $\bN$ with support size at most $k$. For any $d$ and
$\epsilon > 0$, let $h(d,\epsilon)$ be the sample complexity of testing \emph{halfspaces} on domain
$\bR^d$, in the distribution-free sample-based model.Then there is a $(\Pi,
\far^\edit_\epsilon(\Pi), 2/3)$-distribution tester under the parity trace, with sample complexity
$O(h(k+1,\epsilon))$.
\end{lemma}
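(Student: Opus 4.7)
The plan is to chain together Theorem~\ref{thm:testing-support-k} with a classical moment-curve embedding that realizes $k$-alternating functions on $\bR$ as halfspaces on $\bR^{k+1}$. By Theorem~\ref{thm:testing-support-k}, a distribution-free sample-based tester for $k$-alternating functions on $\bR$ with sample complexity $M$ yields a $(\Pi,\far^\edit_\epsilon(\Pi),2/3)$-distribution tester under the parity trace with sample complexity $O(M)$. Hence it suffices to construct, from a halfspace tester on $\bR^{k+1}$ using $h(k+1,\epsilon)$ samples, a $k$-alternating tester on $\bR$ using $O(h(k+1,\epsilon))$ samples.

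The reduction uses the moment curve $\phi:\bR\to\bR^{k+1}$ defined by $\phi(x)=(1,x,x^2,\ldots,x^k)$. Given a labeled sample $S_f=\{(x_i,f(x_i))\}_{i=1}^m$ drawn from $(f,\cD)$, apply $\phi$ pointwise to obtain $S'=\{(\phi(x_i),f(x_i))\}_{i=1}^m$, which is a labeled sample from the pushforward $\cD'=\phi_*\cD$ on $\bR^{k+1}$, where the labels of points on the image of $\phi$ are given by the function $F$ with $F(\phi(x))=f(x)$. The simulated tester then runs the halfspace tester on $S'$ with parameter $\epsilon$.

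For the analysis, observe that any halfspace $H(y)=\ind{\langle w,y\rangle\geq 0}$ on $\bR^{k+1}$ satisfies $H(\phi(x))=\ind{p_w(x)\geq 0}$, where $p_w(x)=\sum_{i=0}^{k} w_i x^i$ is a univariate polynomial of degree at most $k$. Since $p_w$ has at most $k$ real roots, $H\circ\phi$ changes sign at most $k$ times and is therefore a $k$-alternating function. Conversely, any $k$-alternating function on $\bR$ can be realized as the sign pattern of some degree-$k$ polynomial (by Lagrange interpolation through suitable test points between the alternation points), which in turn corresponds to a halfspace on $\bR^{k+1}$ via $\phi$. Thus the correspondence $H\leftrightarrow H\circ\phi$ is a bijection between (sign patterns of) halfspaces on $\bR^{k+1}$ and $k$-alternating functions on $\bR$, and since $\phi$ is deterministic and injective, $\Pru{y\sim\cD'}{H(y)\neq F(y)}=\Pru{x\sim\cD}{(H\circ\phi)(x)\neq f(x)}$. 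Therefore $f$ is $k$-alternating under $\cD$ iff $F$ agrees with some halfspace on the support of $\cD'$, and $f$ is $\epsilon$-far from $k$-alternating under $\cD$ iff $F$ is $\epsilon$-far from every halfspace under $\cD'$, so the halfspace tester inherits the correct completeness and soundness guarantees.

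The main subtlety is reconciling the exact definition of ``$k$-alternating'' (blocks vs.\ sign changes vs.\ intervals) with the degree of polynomials arising from halfspaces in $\bR^{k+1}$ so that the dimension in the halfspace tester is exactly $k+1$ as claimed. The rest of the argument is routine: verifying that Lagrange interpolation produces a polynomial realizing any prescribed sign pattern with at most $k$ changes, and checking that the pushforward preserves the distribution-free distance to the property.
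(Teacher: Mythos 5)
Your proposal is correct and follows essentially the same route as the paper's proof: first invoke \cref{thm:testing-support-k} to pass from the parity-trace problem to distribution-free sample-based testing of $k$-alternating functions, then use the moment curve $x \mapsto (1,x,x^2,\dots,x^k)$ into $\bR^{k+1}$ to identify $k$-alternating functions with restrictions of halfspaces and simulate the halfspace tester on the pushforward labeled distribution. The paper's version spells out the soundness direction slightly more carefully by considering an arbitrary extension $f'$ of the labeled function off the moment curve, but your argument captures the same content.
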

\begin{proof}
This follows from \cref{thm:testing-support-k} and the following reduction from testing
$k$-alternating functions to testing halfspaces.

On input $f : \bZ \to \zo$ and distribution $\cD$ over $\bZ$, consider the one-to-one function
$\psi(x) \define (1, x, x^2, \dotsc, x^k)$ and the distribution $\psi\cD$ defined as the
distribution over $\psi(x)$ where $x \sim \cD$. The tester will simulate the halfspace tester on
samples $(\psi(x), f(x))$ where $(x, f(x)) \sim \cD_f$. 

Note that a function $f : \bZ \to \zo$ is $k$-alternating if and only if there exists a degree $k$
polynomial $p : \bZ \to \bR$ such that $f(x) = \sign(p(x))$, where we define $\sign(z) = 0$ of $z
\leq 0$ and $\sign(z) = 1$ if $z > 0$.  Then $f$ is $k$-alternating if and only if there exists $w =
(w_0, w_1, \dotsc, w_k) \in \bR^{k+1}$ such that $f(x) = \sign( \inn{w, \psi(x)} )$. So $f$ is
$k$-alternating if and only if there exists a halfspace $g : \bR^{k+1} \to \zo$ such that $f(x) =
g(\phi(x))$ on all $x$.

Write $\psi(\cD_f)$ for the distribution of $(\psi(x), f(x))$ when $x \sim \cD$.
So for any $k$-alternating function $f$, there exists a halfspace $g$ such that $\psi(\cD_f) =
(\psi\cD)_g$. On the other hand, for any halfspace $g$, there exists a $k$-alternating function $f$
such that $\psi(\cD_f) = (\psi\cD)_g$.

Then, for any $k$-alternating function $f$ and distribution $\cD$, samples from $\psi(\cD_f)$ are
indistinguishable from samples from $(\psi\cD)_g$, where $g$ is a halfspace, so a halfspace tester
will accept. On the other hand, for any function $f$ that is $\epsilon$-far from $k$-alternating
under distribution $\cD$, consider an arbitrary function $f' : \bR^{k+1} \to \zo$ such that
$f'(\psi(x)) = f(x)$ on all $x \in \bZ$, so $(\psi\cD)_{f'} = \psi(\cD_f)$, so halfspace tester
will perform identically on the simulated samples $(\psi(x), f(x))$ as on the samples $(z, f'(z))
\sim (\psi\cD)_{f'}$, so the tester performs as if it was given input $f'$ and $\psi\cD$.

If there exists a halfspace $g$ such that $\Pru{x \sim \cD}{ f'(\psi(x)) = g(\psi(x)) } \leq
\epsilon$, then $\Pru{x \sim \cD}{ f(x) = g(\psi(x)) } \leq \epsilon$, where $g(\psi( \cdot ))$ is
$k$-alternating, which is a contradiction. So it must be that $f'$ is $\epsilon$-far from being a
halfspace with respect to $\psi\cD$, so the halfspace tester rejects $f'$.
\end{proof}

With this reduction, together with the lower bound provided by 
\cref{lemma:support-size-lower-bound}, we recover the following bounds. Note that \cite{BFH21}
only stated their bounds for constant $\epsilon$, but the amplification argument above could also
be applied directly to their results.

\begin{corollary}[See \cite{BFH21}]
\label{cor:bfh-recovery}
The following lower bounds hold for the sample complexity of testing in the distribution-free
sample-based model:
\begin{enumerate}[topsep=0pt,itemsep=0pt]
\item $k$-Alternating functions over $\bR$: $\Omega\left(\frac{k}{\epsilon \log k}\right)$;
\item Halfspaces over $\bR^n$: $\Omega\left(\frac{n}{\epsilon \log n}\right)$;
\item Intersections of $k$ halfspaces over $\bR^n$: $\Omega\left(\frac{nk}{\epsilon \log(nk)}\right)$;
\item Size $k$ decision trees over $\bR^n$: $\Omega\left(\frac{k}{\epsilon \log k}\right)$.
\end{enumerate}
\end{corollary}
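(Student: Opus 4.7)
The plan is to chain together the reductions developed in this section. For the $k$-alternating case, I would combine Theorem~\ref{thm:testing-support-k}, which equates the sample complexity $m_2(k,\epsilon)$ of distribution-free sample-based testing of $k$-alternating functions with the sample complexity $m_1(k,\epsilon)$ of testing support-size-$k$ distributions under the parity trace with respect to edit distance, together with Lemma~\ref{lemma:support-size-lower-bound}, which supplies the lower bound $m_1(k,\epsilon) \geq \Omega(k/(\epsilon \log k))$. For halfspaces, I would then apply Lemma~\ref{lemma:support-size-to-halfspace}: since a halfspace tester on $\bR^{k+1}$ with $h(k+1,\epsilon)$ samples yields a parity-trace tester for support size at most $k$ in edit distance with the same sample complexity, the support-size lower bound of Lemma~\ref{lemma:support-size-lower-bound} transfers to give $h(n,\epsilon) \geq \Omega(n/(\epsilon \log n))$, after relabeling $k+1 \mapsto n$.

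For intersections of $k$ halfspaces on $\bR^n$ and for size-$k$ decision trees on $\bR^n$, the plan is to produce analogues of Lemma~\ref{lemma:support-size-to-halfspace} using essentially the same embedding trick: show that one can simulate a sample from a labeled distribution $(f,\cD)$ over $\bZ$, where $f$ is $t$-alternating for an appropriate $t = t(n,k)$, by a sample from an embedded labeled distribution on $\bR^n$ whose label function lies in the target class. Concretely, for intersections of $k$ halfspaces one can embed $\bZ$ via a Veronese-type map into $\bR^{n}$ so that a $t$-alternating function with $t = \Theta(nk)$ is realized as an intersection of $k$ halfspaces (by factoring a degree-$t$ polynomial into $k$ degree-$n$ pieces and taking the sign of each); for decision trees the alternation points of a $t$-alternating function with $t = \Theta(k)$ can be matched by the threshold nodes of a size-$k$ tree along a single coordinate. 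In each case the reduction turns a distribution-free sample-based tester for the target class into a parity-trace support-size tester, so Lemma~\ref{lemma:support-size-lower-bound} (instantiated with the appropriate $t$) yields the stated bound.

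The technically routine part is to verify that the embeddings preserve the testing promise: if $f$ is $t$-alternating (so $(\psi\cD)_{g}$ with $g$ in the target class produces indistinguishable samples), then one gets acceptance; and conversely, if $f$ is $\epsilon$-far from being $t$-alternating under $\cD$, then for every $g$ in the target class the embedded labeled distribution $(\psi\cD)_g$ must have been $\epsilon$-far, so the tester rejects. This proceeds exactly as in the proof of Lemma~\ref{lemma:support-size-to-halfspace}, using the one-to-one-ness of the embedding.

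The main obstacle, as in \cite{BFH21}, is identifying the right embedding so that the alternation count $t$ scales correctly with the class parameters ($nk$ for intersections, $k$ for decision trees) while remaining realizable in the class; once the embedding is fixed, the reduction to Lemma~\ref{lemma:support-size-lower-bound} is mechanical. Since these embeddings are already worked out in \cite{BFH21}, I would invoke them and simply substitute the improved $\Omega(k/(\epsilon\log k))$ support-size lower bound (which carries the $1/\epsilon$ dependence through the construction of $\pi^\pm_\epsilon$ from Theorem~\ref{thm:supp-size-testing-lower-bound}) in place of their $\Omega(k/\log k)$ bound to obtain the stated $\epsilon$-dependent inequalities.
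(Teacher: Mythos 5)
Your proposal matches the paper's approach exactly: items 1 and 2 follow by chaining Theorem~\ref{thm:testing-support-k}, Lemma~\ref{lemma:support-size-lower-bound}, and Lemma~\ref{lemma:support-size-to-halfspace}, and for items 3 and 4 both you and the paper defer to \cite{BFH21} for the embeddings while substituting the improved $\epsilon$-dependent support-size lower bound. One small caution about your informal sketch of the intersection-of-halfspaces embedding: factoring a degree-$t$ polynomial and taking the sign of each factor computes a \emph{parity} of the sign indicators, not the AND, so this as stated would not realize an intersection of $k$ halfspaces --- but since you (like the paper) invoke \cite{BFH21}'s actual construction rather than relying on this sketch, the argument is not affected.
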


\section{Property Testing in the Trace Reconstruction Model}
\label{section:trace-testing}

We begin by formally defining property testing in the trace reconstruction model. We then discuss
the connection between the (relative) edit distance on strings and the edit distance on
distributions that we introduced, which will be a crucial component of our results.

For a string $x \in \zo^N$ and retention rate $\rho \in (0,1)$, $\del(x,\rho)$ is the distribution
of substrings of $x$ obtained by deleting each character of $x$ independently with probability
$1-\rho$.  A sample $\bm{T} \sim \del(x,\rho)$ is called a \emph{trace} from $x$ with deletion rate
$1-\rho$.

\begin{definition}[Trace Testing]
Let $\Psi_1$ and $\Psi_2$ be properties of strings in $\zo^N$, and let $\alpha, \rho \in (0,1)$,
which we call the \emph{success probability} and \emph{retention rate}, respectively. A
\emph{$(\Psi_1, \Psi_2, \rho, \alpha)$-trace testing algorithm using $m$ traces} is an algorithm $A$
such that, for $m$ independent traces $\bm{T}_1(x), \dotsc, \bm{T}_m(x)$ obtained from $x$ with
deletion rate $1-\rho$,
\begin{enumerate}
\item If $x \in \Psi_1$ then
    $\Pr{ A(\bm{T}_1(x), \dotsc, \bm{T}_m(x)) \text{ accepts } } \geq \alpha$; and,
\item If $x \in \Psi_2$ then
    $\Pr{ A(\bm{T}_1(x), \dotsc, \bm{T}_m(x)) \text{ rejects } } \geq \alpha$.
\end{enumerate}
\end{definition}

Many of our results refer to $n$-block strings and uniform $n$-block strings, which we now define.

\begin{definition}
    Fix $N \in \bN$.
    We say $x \in \{0,1\}^N$ is an \emph{$n$-block string} if $x$ consists of at most $n$ blocks,
    where a block is an all-1s string or all-0s string. For integer $n$ that divides $N$, the
    \emph{1-uniform $n$-block string} is
    $1^{N/n} 0^{N/n} 1^{N/n} \dotsm \parity(n)^{N/n}$ and the \emph{0-uniform $n$-block string}
    is $0^{N/n} 1^{N/n} 0^{N/n} \dotsm (1-\parity(n))^{N/n}$. We say that $x$ is
    a \emph{uniform $n$-block string} if it is the 1-uniform or 0-uniform $n$-block string.
\end{definition}

\newcommand{\stringedit}{\mathsf{string-edit}}

\begin{definition}[Relative Edit Distance]
    Writing $\dist_\stringedit : \zo^* \times \zo^* \to \bZ$ for the edit distance on
    strings, we define the \emph{relative} string edit distance
    on strings $x \in \zo^N$ and $y \in \zo^M$ as
    \[
        \dist_\reledit(x,y) = \frac{2}{N+M} \dist_\stringedit(x,y) \,.
    \]
\end{definition}

We define a correspondence between strings and probability distributions, which allow us to
relate property testing for trace reconstruction to distribution testing under the parity trace.

\begin{definition}[String to Distribution Correspondence]
\label{def:string-to-distribution}
For any fixed $N \in \bN$ and probability distribution $\pi$ over $\bN$, whose densities are integer
multiples of $1/N$, we define the string $\psi(\pi) \in \zo^N$ as
\[
  \psi(\pi) \define 1^{\pi(1) \cdot N} 0^{\pi(2) \cdot N} 1^{\pi(3) \cdot N} 0^{\pi(3) \cdot N}
    \dotsm \,,
\]
where $b^k$ denotes the character $b$ repeated $k$ times.  This map is not one-to-one. But, for
strings $x \in \zo^N$, we define a probability distribution $\psi^{-1}(x)$ as follows. Define the
function $f_x : \bN \to \zo$ as $f_x(i) = x_i$ for each $i \in [N]$ (and 1 elsewhere), and let $\cD$
be the uniform distribution over $[N]$. Then
\[
  \psi^{-1}(x) \define \pi_{f_x,\cD} \,.
\]
One may verify that $\psi(\psi^{-1}(x)) = x$ for any string $x \in \zo^N$.
To each property $\Psi$ of strings in $\zo^N$, we associate the property of probability
distributions $\Pi = \Pi(\Psi) \define \{ \psi^{-1}(x) : x \in \Psi \}$, with $\psi^{-1}$ as
defined in \cref{def:string-to-distribution}.
For any such $\Psi$, let $\far^\reledit_\epsilon(\Psi)$ denote the set of strings $x \in \zo^N$
such that $\dist_\reledit(x, \Psi) > \epsilon$, where
\[
    \dist_\reledit(x, \Psi) \define \min_{y \in \Psi} \dist_\reledit(x, y) \,.
\]
\end{definition}

\begin{observation}
    \label{obs:n-block-support}
    If $x$ is an $n$-block string, then $\psi^{-1}(x)$ is supported on at most $n$ elements.
    If $x$ is the 1-uniform $n$-block string, then $\psi^{-1}(x)$ is the uniform distribution over
    $\{1,2,\dotsc,n\}$.
    If $x$ is the 0-uniform $n$-block string, then $\psi^{-1}(x)$ is the uniform distribution over
    $\{2,3,\dotsc,n+1\}$.
\end{observation}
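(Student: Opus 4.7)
The statement follows essentially by unpacking \cref{def:string-to-distribution} and tracking how the block structure of $x$ determines the alternation points of $f_x$, and hence the density sequence of $\pi_{f_x,\cD}$.

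The plan is first to observe that $\psi^{-1}(x)=\pi_{f_x,\cD}$ is always a \emph{1-proper} labeled distribution: since $f_x(i)=1$ for every $i\notin[N]$ (in particular for all $i\le 0$), the 1-proper condition holds, and likewise $f_x(i)=1$ for all $i>N$. Next I would identify the alternation sequence of $f_x$ in terms of the blocks of $x$. Write $x = \sigma_1^{b_1}\sigma_2^{b_2}\cdots\sigma_n^{b_n}$ where $\sigma_j\in\{0,1\}$ alternate and $b_j\ge 1$. The alternation sequence of $f_x$ is then, in order: the point $a=0$ if and only if $\sigma_1=0$ (because $f_x(0)=1\ne 0=f_x(1)$), followed by the interior block boundaries $b_1,\;b_1+b_2,\;\dots,\;b_1+\cdots+b_{n-1}$, followed by $a=N$ if and only if $\sigma_n=0$ (because $f_x(N)=0\ne 1=f_x(N+1)$). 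A short case check on the parities of $n$ and of $\sigma_1$ confirms this.

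I would then plug this alternation sequence into the 1-proper formula for $\pi_{f_x,\cD}$ with $\cD$ the uniform distribution on $[N]$. Each consecutive pair of alternation points delimits an interval whose $\cD$-mass equals the length of the corresponding block of $x$ divided by $N$, while any ``phantom'' interval created by appending $a_1=0$ or the terminal alternation at $N$ contributes mass $\cD(-\infty,0]=0$ or $\cD(N,\infty)=0$. Consequently the nonzero entries of the density sequence are exactly $b_1/N,b_2/N,\dots,b_n/N$, so $\pi_{f_x,\cD}$ has support of size at most $n$. This proves the first claim.

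For the 1-uniform $n$-block string we have $\sigma_1=1$ and $b_j=N/n$ for all $j$, so the nonzero entries $b_1/N,\dots,b_n/N$ occur at density-sequence indices $1,2,\dots,n$ (there is no $a_1=0$ shift), and each equals $1/n$; hence $\pi_{f_x,\cD}$ is uniform on $\{1,\dots,n\}$. For the 0-uniform case, $\sigma_1=0$ introduces $a_1=0$ and thus shifts every nonzero block mass by one index in the density sequence, giving the uniform distribution on $\{2,\dots,n+1\}$. The only ``obstacle'' is bookkeeping: being careful that the four combinations of $(\sigma_1,\sigma_n)\in\{0,1\}^2$ all produce the same conclusion, which is handled uniformly by the shift-by-one rule above.
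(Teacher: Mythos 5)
Your proof is correct, and since the paper presents this as an \emph{observation} without any accompanying argument, your direct verification is exactly the kind of reasoning the authors intend the reader to fill in. You correctly unpack \cref{def:string-to-distribution}, identify the alternation sequence of $f_x$ in terms of the block boundaries of $x$ (including the case distinction on whether $0$ or $N$ is an alternation point depending on $\sigma_1$ and $\sigma_n$), observe that the two possible ``phantom'' intervals carry zero $\cD$-mass, and conclude that the nonzero density-sequence entries are $b_1/N,\dots,b_n/N$, possibly shifted by one index. The two uniform specializations then follow immediately. One minor point: for the first claim you implicitly assume $x$ has exactly $n$ blocks; if $x$ has $n' < n$ blocks the same argument gives support at most $n' \le n$, so the conclusion still holds, but it is worth stating explicitly.
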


Recall that we have defined the edit distance on distributions as the natural metric for
distribution testing under the parity trace. The next lemma shows that the edit distance for
distributions is essentially equivalent to the relative edit distance on strings, under the
string-to-distribution correspondence. This will allow us to obtain equivalences between
distribution testing under the parity trace, and property testing for trace reconstruction.
We defer the proof to \cref{section:equivalence-edit-distance-strings}.

\begin{restatable}{lemma}{lemmarelativeeditdistance}
\label{lemma:relative-edit-distance}
Fix any $N$ and let $\pi, \pi'$ be probability distributions over $\bN$ whose densities are integer
multiples of $1/N$. Then
$\frac{1}{2} \cdot \dist_\reledit(\psi(\pi), \psi(\pi'))
\leq \dist_\edit(\pi, \pi')
\leq \dist_\mathsf{rel-edit}(\psi(\pi), \psi(\pi'))$.
\end{restatable}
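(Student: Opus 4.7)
The plan is to prove each inequality separately: the upper bound comes directly from the operational definition of $\dist_\edit$ via fractional edit operations, while the lower bound will use the variational characterization of $\dist_\edit$ via TV distance from \cref{lemma:edit-distance-alternate}.

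For the upper bound $\dist_\edit(\pi, \pi') \leq \dist_\reledit(\psi(\pi), \psi(\pi'))$, I take an optimal string edit sequence from $\psi(\pi)$ to $\psi(\pi')$ consisting of $s$ substitutions, $i$ insertions, and $d$ deletions (so $s + i + d = \dist_\stringedit(\psi(\pi), \psi(\pi'))$), and simulate it at the fractional level. Since the densities of $\pi$ and $\pi'$ are multiples of $1/N$, I can use the free insert, delete, and rearrange operations on $\mathsf{str}(\pi)$ to bring the fractional string into a form where every character position of $\psi(\pi)$ is represented by its own fractional piece of mass $1/N$. Each character insertion or deletion is then simulated by an adjust of magnitude $1/N$ (cost $1/(2N)$), and each substitution by a pair of such adjusts (cost $1/N$). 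The total fractional cost is at most $s/N + (i+d)/(2N) = (2s + i + d)/(2N) \leq (s + i + d)/N = \dist_\reledit(\psi(\pi), \psi(\pi'))$.

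For the lower bound $\dist_\edit(\pi, \pi') \geq \tfrac{1}{2} \dist_\reledit(\psi(\pi), \psi(\pi'))$, I use \cref{lemma:edit-distance-alternate} to write $\dist_\edit(\pi, \pi') = \inf \dist_\TV(\cD_f, \cE_g)$ over labeled distributions realizing $\pi, \pi'$. It suffices to show, for every such $(f, \cD), (g, \cE)$, that $\dist_\stringedit(\psi(\pi), \psi(\pi')) \leq 2N \dist_\TV(\cD_f, \cE_g)$. Fix $\varepsilon > 0$; by a standard measure
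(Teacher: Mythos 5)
Your upper-bound argument is correct and takes a slightly different route from the paper. Instead of passing through the Hamming-distance characterization of string edit distance (via the extension set $\ext(\cdot)$ and \cref{fact:string-extension-hamming}), you simulate each string edit directly by fractional edit operations: first use free inserts and rearranges to split $\mathsf{str}(\pi)$ into $N$ pieces of mass $1/N$ each, then implement insertion/deletion as a single adjust (cost $1/(2N)$) and substitution as a pair of adjusts (cost $1/N$), and finally use free rearranges and deletes to recombine into $\mathsf{str}(\pi')$. The cost accounting $s/N + (i+d)/(2N) \le (s+i+d)/N = \dist_\reledit(\psi(\pi),\psi(\pi'))$ is right. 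The paper's proof of the same direction goes through labeled distributions $(f,\cD), (g,\cE)$ built from aligned strings $x, y$ and then bounds $\dist_\TV(\cD_f,\cE_g)$; the two approaches are morally equivalent, but yours is more direct and arguably cleaner.

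Your lower bound, however, is truncated before the substantive argument. You correctly set up the reduction via \cref{lemma:edit-distance-alternate}: it suffices to show, for every labeled-distribution pair $(f,\cD), (g,\cE)$ realizing $\pi, \pi'$, that $\dist_\stringedit(\psi(\pi),\psi(\pi')) \le 2N\,\dist_\TV(\cD_f,\cE_g)$. This is exactly the claim the paper proves. But the proof of this claim is where the real work lies, and it is not present in the proposal. Concretely, an arbitrary $(f,\cD), (g,\cE)$ may have alternation structure and densities wildly unrelated to any $1/N$-multiple string representation, so you cannot directly read off strings in $\ext(\psi(\pi)), \ext(\psi(\pi'))$ that track the TV distance. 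The paper handles this via a chain of reductions: first pass to labeled distributions where $f,g$ share no alternation points and where any interval on which $f \ne g$ has zero mass under $\cD$ or $\cE$ (\cref{fact:edit-zero-interval}); then condense support onto alternation points (\cref{fact:edit-condense}); then round the masses to $1/N$-multiples via a careful per-interval transformation (using \cref{fact:edit-match-mass}) while ensuring the TV distance does not increase and that disagreement coordinates carry zero min-mass; and only then construct aligned strings and bound the Hamming distance. "Fix $\varepsilon > 0$; by a standard measure..." suggests an approximation/limit argument, which can plausibly handle the passage to finitely-supported or rational-density distributions, but it does not by itself supply the alignment and rounding machinery above. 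You should finish this direction: either reproduce the reduction chain or provide an alternative construction of strings $x \in \ext(\psi(\pi)), y \in \ext(\psi(\pi'))$ with $\dist_\ham(x,y) \le 2N\,\dist_\TV(\cD_f,\cE_g)$ for the general $(f,\cD),(g,\cE)$ case.
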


\subsection{Single-Trace Upper Bounds}

We seek to obtain algorithms for testing properties of strings, with respect to the relative
edit distance, by reducing to testing properties of distributions under the parity
trace, with respect to the edit distance on distributions. We will make use of the following simple
technique, which turns a trace from a string (\ie produced by a deletion channel) into
(the parity trace of) a Poissonized sample from the associated probability distribution
(\ie the result of sampling with replacement).

\newcommand{\ZTPoi}{\mathsf{Poi}_{>0}}

\begin{proposition}
    \label{prop:poissonize-trace}
    Fix $\rho \in (0,1)$. There exists an algorithm \textsc{Poissonize} which consumes a binary
    string and produces another binary string satisfying the following. Let $N \in \bN$ and
    $x \in \zo^N$, and suppose the input is a random trace from $x$ with deletion
    rate $1-\rho$. Then the output is distributed as $\trace(\bm{S})$, where
    $\bm{S} \sim \samp(\psi^{-1}(x), \bm{m})$ and
    $\bm{m} \sim \Poi\left(N \log\left(\frac{1}{1-\rho}\right)\right)$.
\end{proposition}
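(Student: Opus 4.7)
The algorithm is simple: set $\lambda \define \log(1/(1-\rho))$ so that $e^{-\lambda} = 1-\rho$. \textsc{Poissonize} reads the input trace one character at a time; for each character $b$, it independently samples a random variable $\bm{r}$ from the zero-truncated Poisson distribution (namely $\Poi(\lambda)$ conditioned on being nonzero) and appends $b^{\bm{r}}$ to the output.

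To verify correctness, I would track each position $i \in [N]$ of $x$ through the pipeline. In the trace, position $i$ is retained independently with probability $\rho$, contributing a single copy of $x_i$; \textsc{Poissonize} then replicates it $\bm{r}$ times conditional on retention. Since $\Poi(\lambda)$ equals the mixture ``$0$ with probability $e^{-\lambda} = 1-\rho$, otherwise zero-truncated-$\Poi(\lambda)$,'' the total number $\bm{N}_i$ of copies of $x_i$ that end up in the output of \textsc{Poissonize} is distributed as $\Poi(\lambda)$, independently across $i$. Because the deletion channel preserves the relative order of retained characters and each replication is performed in place, the output of \textsc{Poissonize} is exactly the concatenation $x_1^{\bm{N}_1} x_2^{\bm{N}_2} \dotsm x_N^{\bm{N}_N}$ with i.i.d.\ $\Poi(\lambda)$ multiplicities.

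On the other side, I would show that $\trace(\bm{S})$ has the same distribution. Let $a_1 < a_2 < \dotsm$ be the alternation sequence of $f_x$, so that $\pi_{f_x,\cD}(k) = |(a_{k-1}, a_k] \cap [N]|/N$ by the definition of $\psi^{-1}$. By standard Poissonization, the multiplicity with which each integer $k$ appears in $\bm{S}$ is an independent $\Poi(N\lambda \cdot \pi_{f_x,\cD}(k))$ variable, and by Poisson splitting we may equivalently generate $\bm{S}$ by drawing, for each $i \in [N]$ independently, a $\Poi(\lambda)$-distributed multiplicity $\bm{N}'_i$ of the index $k(i)$ defined by $i \in (a_{k(i)-1}, a_{k(i)}]$. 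Taking $\trace$ sorts the resulting multiset of $k$-values and outputs their parities; since $k(i)$ is monotone nondecreasing in $i$, and $\parity(k(i)) = f_x(i) = x_i$ by 1-properness of $(f_x,\cD)$, the arbitrary tie-breaking among indices in the same interval does not affect the parity trace, and the output is exactly $x_1^{\bm{N}'_1} x_2^{\bm{N}'_2} \dotsm x_N^{\bm{N}'_N}$. This matches the distribution from the previous paragraph and completes the proof.

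The only substantive step is the Poisson splitting argument, which lets me factor a $\Poi(N\lambda)$-sized sample from the uniform distribution over $[N]$ into an independent $\Poi(\lambda)$ count at each position; everything else is mechanical unpacking of the definitions of $\psi^{-1}$, $\pi_{f_x,\cD}$, and $\trace$, together with the mixture identity $\Poi(\lambda) = (1-\rho)\,\delta_0 + \rho\cdot(\Poi(\lambda)\,|\,\Poi(\lambda)\geq 1)$ which pairs the retention probability of the deletion channel with the zero-truncated Poisson replication.
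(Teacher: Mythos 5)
Your proof is correct and follows essentially the same approach as the paper: the same algorithm (replicate each retained character a zero-truncated $\Poi(\lambda)$ number of times), the same position-by-position tracking, and the same key identity pairing Bernoulli retention with zero-truncated replication to recover an i.i.d.\ $\Poi(\lambda)$ multiplicity at each position. The paper verifies the $\trace(\bm{S})$ side via Poisson additivity (merging per-position counts into per-block counts) where you use Poisson splitting (the reverse direction), but these are the same fact.
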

\begin{proof}
    The idea is to treat each symbol in the input as indicating the event that a corresponding
    Poisson random variable was non-zero, and then up-sample the symbol to the appropriate
    conditional distribution to obtain a Poissonized sample.

    Let $\lambda \define \log\left(\frac{1}{1-\rho}\right)$ and let $\ZTPoi(\lambda)$ denote the
    distribution of a $\Poi(\lambda)$ random variable conditional on being nonzero.

    The algorithm proceeds as follows: on input string $s$, for each symbol $s_j$ from left to
    right, independently sample $\bm{z}_j \sim \ZTPoi(\lambda)$ and append $\bm{z}_j$ copies of
    $s_j$ to the output.

    For each $i \in [N]$, let $\bm{X}_i \sim \Ber(\rho)$ independently.
    Then the input is distributed as
    \[
        x_1^{\bm{X}_1} x_2^{\bm{X}_2} \dotsm x_{N-1}^{\bm{X}_{N-1}} x_N^{\bm{X}_N} \,.
    \]
    Let $\pi \define \psi^{-1}(x)$, say it is supported on $[n]$. For each $i \in [n]$, let
    $\bm{Y}_i \sim \Poi(N \lambda \pi(i))$ independently.
    Then the target output distribution is identical to that of
    \[
        1^{\bm{Y}_1} 0^{\bm{Y}_2} \dotsm \parity(n-1)^{\bm{Y}_{n-1}} \parity(n)^{\bm{Y}_n} \,.
    \]
    By additivity of the Poisson distribution and definition of $\psi^{-1}$, this distribution
    is identical to
    \[
        x_1^{\bm{Z}_1} x_2^{\bm{Z}_2} \dotsm x_{N-1}^{\bm{Z}_{N-1}} x_N^{\bm{Z}_N} \,.
    \]
    where for each $i \in [N]$, $\bm{Z}_i \sim \Poi(\lambda)$ independently.

    By considering the random process that produces the trace from $x$ along with the random process
of the algorithm, we may identify each symbol in the output of the algorithm with the location $i
\in [N]$ corresponding to the appearance of $x_i$ in the trace.  For each $i \in [N]$, let
$\bm{K}_i$ be the random variable denoting how many times $x_i$ was appended to the output. Then
$\bm{K}_i$ is distributed according to the following random process: if $\bm{X}_i = 0$ then
$\bm{K}_i \gets 0$, otherwise $\bm{K}_i \gets \ZTPoi(\lambda)$. Note that the $\bm{K}_i$ are
mutually independent, and the output of the algorithm is
    \[
        x_1^{\bm{K}_1} x_2^{\bm{K}_2} \dotsm x_{N-1}^{\bm{K}_{N-1}} x_N^{\bm{K}_N} \,.
    \]
    Therefore we will be done if, for each $i \in [N]$, $\bm{Z}_i$ and
    $\bm{K}_i$ are distributed identically, which we now check. We have
    $\Pr{\bm{K}_i = 0} = \Pr{\bm{X}_i = 0} = 1 - \rho$ and
    $\Pr{\bm{Z}_i = 0} = e^{-\lambda} = 1 - \rho$, and for each $k \ge 1$,
    \begin{align*}
        \Pr{\bm{Z}_i = k}
        &= \Pr{\bm{Z}_i > 0} \Pruc{}{\bm{Z}_i = k}{\bm{Z}_i > 0}
        = \rho \cdot \Pr{\ZTPoi(\lambda) = k} \\
        &= \Pr{\bm{X}_i = 1} \cdot \Pruc{}{\bm{K}_i = k}{\bm{X}_i = 1}
        = \Pr{\bm{K}_i = k \text{ and } \bm{X}_i=1}
        = \Pr{\bm{K}_i = k} \,. \qedhere
    \end{align*}
\end{proof}

\begin{remark}
    Although we assume that $\rho$ is explicitly known to obtain \cref{prop:poissonize-trace},
    this assumption is not crucial: if we only knew a lower bound $\rho_0$ on $\rho$, we could
    obtain essentially equivalent results by sub-sampling $\Bin(N, \rho')$ elements from
    the trace with $\rho' = \rho_0/C$ for some large constant $C$. Then, except with negligible
    probability of failure, the sample would be distributed as a trace with known deletion rate
    $1-\rho'$.
\end{remark}

Equipped with this result, we obtain testers in the trace reconstruction model from testers
in the parity trace model via a black-box reduction.

\begin{lemma}
    \label{lemma:trace-tester-from-parity-trace-tester}
    Let $N, m \in \bN$.
    Let $\Psi_1, \Psi_2$ be properties of strings in $\zo^N$, and let $\alpha > 0$.
    If there is a Poissonized $(\Pi(\Psi_1), \Pi(\Psi_2), \alpha)$-distribution tester under the
    parity trace with sample complexity $m$, then there is a
    $(\Psi_1, \Psi_2, \rho, \alpha)$-trace tester using one trace, for $\rho = 1 - e^{-m/N}$.
\end{lemma}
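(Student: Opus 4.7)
The plan is to use Proposition~\ref{prop:poissonize-trace} as a black-box converter that transforms a single deletion-channel trace from $x$ into a parity trace of a Poissonized sample from $\psi^{-1}(x)$, and then feed the result to the assumed parity-trace tester. The retention rate $\rho = 1 - e^{-m/N}$ is chosen precisely so that $\log(1/(1-\rho)) = m/N$, which makes the Poissonized sample size $\bm{m} \sim \Poi(N \log(1/(1-\rho))) = \Poi(m)$ match the distribution assumed by the Poissonized parity-trace tester with sample complexity parameter $m$.

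More concretely, the trace tester I would construct works as follows. On input a single trace of $x$ with retention rate $\rho = 1 - e^{-m/N}$, first run the \textsc{Poissonize} algorithm of Proposition~\ref{prop:poissonize-trace} to obtain a string $s$. By that proposition, $s$ is distributed exactly as $\trace(\bm{S})$, where $\bm{S} \sim \samp(\psi^{-1}(x), \bm{m})$ and $\bm{m} \sim \Poi(m)$. Then feed $s$ to the given Poissonized $(\Pi(\Psi_1),\Pi(\Psi_2),\alpha)$-distribution tester under the parity trace, and output its decision.

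For correctness, I would argue the two cases separately. If $x \in \Psi_1$, then by definition of $\Pi(\Psi_1) = \{\psi^{-1}(y) : y \in \Psi_1\}$, the distribution $\psi^{-1}(x)$ lies in $\Pi(\Psi_1)$; since $s$ has exactly the distribution of a parity trace of a Poissonized size-$m$ sample from $\psi^{-1}(x)$, the parity-trace tester accepts with probability at least $\alpha$. Symmetrically, if $x \in \Psi_2$ then $\psi^{-1}(x) \in \Pi(\Psi_2)$, so the parity-trace tester rejects with probability at least $\alpha$.

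This proof is short and essentially a matching of parameters; the only possible subtlety is that the parity-trace tester is assumed to be Poissonized, i.e.\ it expects its sample size to be drawn from $\Poi(m)$ rather than being exactly $m$, so one must be careful to state the lemma (and invoke Proposition~\ref{prop:poissonize-trace}) in the Poissonized formulation. No real obstacle arises here, since Proposition~\ref{prop:poissonize-trace} delivers exactly the Poissonized distribution, and the choice $\rho = 1 - e^{-m/N}$ matches the rate parameters.
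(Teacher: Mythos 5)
Your proposal matches the paper's proof essentially verbatim: both invoke \textsc{Poissonize} (\cref{prop:poissonize-trace}) to convert the single deletion-channel trace into the parity trace of a $\Poi(m)$-size sample from $\psi^{-1}(x)$, observe that $\rho = 1 - e^{-m/N}$ makes the rate parameters line up, and conclude correctness from the containments $\psi^{-1}(\Psi_i) \subseteq \Pi(\Psi_i)$. No meaningful differences.
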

\begin{proof}
    Let $A$ be a Poissonized $(\Pi(\Psi_1), \Pi(\Psi_2), \alpha)$-distribution
    tester under the parity trace with sample complexity $m$. Our trace tester $B$ works as follows:
    \begin{enumerate}
        \item Receive a trace $x$ with deletion rate $1-\rho$.
        \item Let $y \gets \textsc{Poissonize}(x)$.
        \item Return $A(y)$.
    \end{enumerate}
    Let $\bm{x}$ and $\bm{y}$ be random variables denoting the inputs to $B$ and $A$, respectively.
    By \cref{prop:poissonize-trace}, $\bm{y}$ is distributed as $\trace(\bm{S})$ where
    $\bm{S} \sim \samp(\psi^{-1}(x), \bm{m})$ and $\bm{m} \sim \Poi(N\lambda)$, where
    $\lambda = \log\left(\frac{1}{1-\rho}\right) = m/N$.
    In other words, $\bm{y}$ is distributed as the parity trace of a sample from $\psi^{-1}(x)$ of
    size $\Poi(m)$. Moreover, by definition of $\Pi(\Psi_1)$ we have that if
    $x \in \Psi_1$ then $\psi^{-1}(x) \in \Pi(\Psi_1)$, and the same for $\Psi_2$. Therefore
    the correctness of $B$ follows from the correctness of $A$.
\end{proof}

We now conclude each of our single-trace upper bounds from
\cref{section:intro-trace-reconstruction}, using the following immediate consequence of
the equivalence of edit distances between strings and distributions.

\begin{proposition}
    \label{prop:far-reledit-far-edit}
    Let $N \in \bN$ and $\epsilon > 0$, and let $\Psi$ be a property of strings in $\zo^N$.
    Then $\Pi(\far^\reledit_\epsilon(\Psi)) \subseteq \far^\edit_{\epsilon/2}(\Pi(\Psi))$.
\end{proposition}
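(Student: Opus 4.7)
The plan is to derive this as an immediate consequence of \cref{lemma:relative-edit-distance}, which compares edit distance on distributions with relative edit distance on their string encodings. The only care needed is in chasing the definitions through the $\psi/\psi^{-1}$ correspondence.

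First, I would unpack the hypothesis. Suppose $\pi \in \Pi(\far^\reledit_\epsilon(\Psi))$. By definition of $\Pi(\cdot)$, there exists $x \in \far^\reledit_\epsilon(\Psi) \subseteq \zo^N$ with $\pi = \psi^{-1}(x)$, and by definition of $\far^\reledit_\epsilon$, we have $\dist_\reledit(x,y) > \epsilon$ for every $y \in \Psi$. Second, I would pick an arbitrary $\pi' \in \Pi(\Psi)$; again by definition, $\pi' = \psi^{-1}(y)$ for some $y \in \Psi$. Note that the densities of $\pi = \psi^{-1}(x)$ and $\pi' = \psi^{-1}(y)$ are integer multiples of $1/N$ (they are uniform measures over $[N]$ bucketed by the block structure of $x$ and $y$), so the hypotheses of \cref{lemma:relative-edit-distance} are met.

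Third, I would apply the lower-bound half of \cref{lemma:relative-edit-distance} with these $\pi, \pi'$. Using the identity $\psi(\psi^{-1}(x)) = x$ recorded in \cref{def:string-to-distribution} (and likewise for $y$), the lemma gives
\[
    \dist_\edit(\pi, \pi') \;\geq\; \tfrac{1}{2}\,\dist_\reledit(\psi(\pi), \psi(\pi'))
    \;=\; \tfrac{1}{2}\,\dist_\reledit(x, y) \;>\; \tfrac{\epsilon}{2}.
\]
Since $\pi' \in \Pi(\Psi)$ was arbitrary, we conclude $\dist_\edit(\pi, \Pi(\Psi)) > \epsilon/2$, i.e.\ $\pi \in \far^\edit_{\epsilon/2}(\Pi(\Psi))$, as required.

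There is no real obstacle here; the statement is essentially a translation step between the two distance metrics and the two ways of encoding strings and distributions. The substantive content has already been absorbed into \cref{lemma:relative-edit-distance}, and this proposition just packages that inequality into the language of ``far-from-property'' sets. The only minor point to verify carefully is that $\psi^{-1}$ produces distributions whose densities are multiples of $1/N$, so the lemma is applicable — this follows directly from the construction $\pi_{f_x,\cD}$ with $\cD$ uniform on $[N]$.
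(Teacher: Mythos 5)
Your proof is correct and follows the same route as the paper: both reduce to the lower-bound half of \cref{lemma:relative-edit-distance}, using $\psi(\psi^{-1}(x)) = x$ and the fact that $\psi^{-1}$ produces densities that are integer multiples of $1/N$. No gaps.
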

\begin{proof}
    Let $\pi \in \Pi(\far^\reledit_\epsilon(\Psi))$. By definition of
    $\Pi(\far^\reledit_\epsilon(\Psi))$, we have $\pi = \psi^{-1}(x)$ for some
    $x \in \far^\reledit_\epsilon(\Psi)$. Thus $x \in \zo^N$ and for each $y \in \Psi$,
    $\dist_\reledit(x,y) > \epsilon$, and by \cref{lemma:relative-edit-distance}
    $\dist_\edit(\psi^{-1}(x), \psi^{-1}(y)) > \epsilon/2$.

    By definition of $\Pi(\Psi)$, for each $\pi' \in \Pi(\Psi)$ we have
    $\pi' = \psi^{-1}(y')$ for some $y' \in \Psi$. But then
    $\dist_\edit(\pi, \pi') = \dist_\edit(\psi^{-1}(x), \psi^{-1}(y')) > \epsilon/2$. Therefore
    $\pi \in \far^\edit_{\epsilon/2}(\Pi(\Psi))$.
\end{proof}

Our result for testing $n$-block strings will require the following equivalence between the
relative edit distance of strings to the property of $n$-block strings, and the edit distance of
appropriate probability distributions to the property of distributions supported on at most $n$
elements. We defer the proof to \cref{section:string-edit-distance-support-size}.

\begin{restatable}{proposition}{propdisttonstringsdistributions}
    \label{prop:dist-to-n-strings-distributions}
    Let $N,n \in \bN$. Let $\Psi$ be the set of $n$-block strings in $\zo^N$, and
    let $\Pi$ be the set of probability distributions over $\bN$ with support size at most
    $n$. Then for every distribution $\pi$ over $\bN$ whose densities are integer multiples of $1/N$
    and for $x = \psi(\pi)$,
    %For all $x \in \zo^N$ and $\pi = \psi^{-1}(x)$,
    \[
        \dist_\edit(\pi, \Pi) \le \dist_\reledit(x, \Psi) \le 2 \cdot \dist_\edit(\pi, \Pi) \,.
    \]
\end{restatable}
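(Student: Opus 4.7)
My plan is to prove both inequalities by leveraging the correspondence $y \in \Psi \longleftrightarrow \pi'' \in \Pi \cap \Pi_N$ given by $\psi$ and $\psi^{-1}$---where $\Pi_N$ denotes the set of distributions on $\bN$ whose densities are integer multiples of $1/N$---combined with \cref{lemma:relative-edit-distance}, which controls the relationship between $\dist_\edit$ and $\dist_\reledit$ for distributions in $\Pi_N$. By \cref{obs:n-block-support}, $\psi^{-1}(y) \in \Pi \cap \Pi_N$ for every $y \in \Psi$, and conversely $\psi(\pi'') \in \Psi$ for every $\pi'' \in \Pi \cap \Pi_N$ (a distribution of support size at most $n$ yields a length-$N$ string with at most $n$ blocks). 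Hence the proposition reduces to the equality
\[
  \dist_\edit(\pi, \Pi) = \dist_\edit(\pi, \Pi \cap \Pi_N),
\]
after which both halves follow directly from \cref{lemma:relative-edit-distance}.

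Concretely, for the lower bound, for any $y \in \Psi$ I apply \cref{lemma:relative-edit-distance} to $\pi$ and $\pi_y \define \psi^{-1}(y)$ (both in $\Pi_N$) to get $\dist_\edit(\pi, \pi_y) \le \dist_\reledit(\psi(\pi), \psi(\pi_y)) = \dist_\reledit(x, y)$; infimizing over $y \in \Psi$ yields $\dist_\edit(\pi, \Pi \cap \Pi_N) \le \dist_\reledit(x, \Psi)$, and since $\Pi \cap \Pi_N \subseteq \Pi$ one has $\dist_\edit(\pi, \Pi) \le \dist_\reledit(x, \Psi)$. For the upper bound, for any $\pi'' \in \Pi \cap \Pi_N$, \cref{lemma:relative-edit-distance} gives $\dist_\reledit(x, \psi(\pi'')) \le 2\,\dist_\edit(\pi, \pi'')$; infimizing and invoking the equality yields $\dist_\reledit(x, \Psi) \le 2\,\dist_\edit(\pi, \Pi \cap \Pi_N) = 2\,\dist_\edit(\pi, \Pi)$.

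The main obstacle is proving the equality $\dist_\edit(\pi, \Pi) = \dist_\edit(\pi, \Pi \cap \Pi_N)$ when $\pi \in \Pi_N$. The inequality $\dist_\edit(\pi, \Pi \cap \Pi_N) \ge \dist_\edit(\pi, \Pi)$ is immediate, so the content is: for every $\pi' \in \Pi$, exhibit $\pi'' \in \Pi \cap \Pi_N$ with $\dist_\edit(\pi, \pi'') \le \dist_\edit(\pi, \pi')$. I would attempt this via the alternate characterization \cref{lemma:edit-distance-alternate}: take labeled-distribution witnesses $(f, \cD), (g, \cE)$ for $\dist_\edit(\pi, \pi')$ with $\cD$ the uniform distribution on $[N]$ (possible since $\pi \in \Pi_N$), and then argue that $\cE$ can be replaced by the same uniform distribution on $[N]$ and $g$ by a function $[N] \to \zo$ with at most $n$ blocks, producing a density sequence $\pi'' \in \Pi \cap \Pi_N$ without inflating the TV distance. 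Making this ``alignment'' step precise---avoiding the $O(n/N)$ overhead that a naive rounding of $\pi'$ would introduce---is the crux of the proof, and would likely require a limiting argument over finer discretizations of $\cE$ together with a greedy projection that exploits the $1/N$-granularity of $\pi$.
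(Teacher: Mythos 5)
Your high-level architecture is right and mirrors the paper's proof: the first inequality $\dist_\edit(\pi, \Pi) \le \dist_\reledit(x, \Psi)$ follows exactly as you say by picking a minimizer $y \in \Psi$ and applying \cref{lemma:relative-edit-distance} and \cref{obs:n-block-support}, and the second inequality does indeed reduce to showing that the infimum in $\dist_\edit(\pi, \Pi)$ is achieved within $\Pi \cap \Pi_N$. However, you leave that reduction as an unproven ``crux,'' and the tools you gesture at (``a limiting argument over finer discretizations of $\cE$ together with a greedy projection'') are not the right ones and would be difficult to carry through. The difficulty is that the edit distance infimum over labeled-distribution witnesses in \cref{lemma:edit-distance-alternate} does not respect discretization in any simple way, so rounding $\cE$ to the $1/N$ lattice can shift mass between alternation intervals and break the support-size bound.

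The paper avoids this via a different key lemma that you did not invoke: \cref{lemma:edit-tv-distances-to-support-size-k}, which states that for the density property $\Xi$ of labeled distributions whose density sequences have support size at most $n$, the edit distance equals the TV distance, $\dist_\edit((f,\cD), \Xi) = \dist_\TV((f,\cD), \Xi)$. Once you are working with TV distance, you can fix the witness $\cD$ to be $\pi$ itself (extended to $\bZ$), choose $(g, \cE) \in \Xi$ attaining $\dist_\TV((f,\cD), \Xi)$, and do a direct two-step modification: first replace $\cE$ by a distribution supported only where $f$ and $g$ agree (plus one reference point $i^*$), and then replace those densities by $\cD$'s own densities (which are $1/N$-multiples by hypothesis), dumping the residual onto $i^*$. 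Both steps can only decrease the TV distance, and the result has $1/N$-granular densities and support size still $\le n$. No limiting argument is needed. Without an argument of this shape, your proposal has a genuine gap at precisely the step you flag as ``the crux.''
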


\noindent
This implies:

\begin{proposition}
    \label{prop:pi-far-reledit-to-far-edit}
    Let $N \in \bN$. Let $\Psi$ be a property of strings in $\zo^N$ and let $\Pi$ be a property
    of probability distributions over $\bN$.
    Suppose that for every distribution $\pi$ over $\bN$ whose densities are integer multiples
    of $1/N$ and for $x = \psi(\pi)$, it holds
    %Suppose that for every $x \in \zo^N$ and $\pi = \psi^{-1}(x)$, it holds
    that $\dist_\reledit(x, \Psi) \le 2 \cdot \dist_\edit(\pi, \Pi)$. Then
    \[
        \Pi(\far^\reledit_\epsilon(\Psi)) \subseteq \far^\edit_{\epsilon/4}(\Pi) \,.
    \]
\end{proposition}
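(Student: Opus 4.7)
The plan is to unpack definitions and apply the hypothesis directly; this proposition is essentially a bookkeeping lemma that translates a ``one-sided'' comparison between relative edit distance on strings and edit distance on distributions into a containment of ``far'' sets.

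First I would pick an arbitrary $\pi \in \Pi(\far^\reledit_\epsilon(\Psi))$ and unpack what this means using \cref{def:string-to-distribution}: there exists some $x \in \far^\reledit_\epsilon(\Psi) \subseteq \zo^N$ with $\pi = \psi^{-1}(x)$. Two facts about this $\pi$ will let us invoke the hypothesis. First, the densities of $\pi = \pi_{f_x, \cD}$ (with $\cD$ uniform on $[N]$) are sums of $\cD$-masses of elements of $[N]$, hence integer multiples of $1/N$. Second, the excerpt records the identity $\psi(\psi^{-1}(x)) = x$, so $x = \psi(\pi)$. Thus $\pi$ and $x = \psi(\pi)$ satisfy the exact preconditions of the hypothesis.

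Applying the hypothesis gives $\dist_\reledit(x, \Psi) \le 2 \cdot \dist_\edit(\pi, \Pi)$, and since $x \in \far^\reledit_\epsilon(\Psi)$ we have $\dist_\reledit(x, \Psi) > \epsilon$. Chaining these two inequalities yields
\[
  \dist_\edit(\pi, \Pi) \;\geq\; \tfrac{1}{2}\,\dist_\reledit(x, \Psi) \;>\; \tfrac{\epsilon}{2} \;>\; \tfrac{\epsilon}{4},
\]
so $\pi \in \far^\edit_{\epsilon/4}(\Pi)$, establishing the desired containment. (In fact the proof yields the slightly stronger conclusion $\Pi(\far^\reledit_\epsilon(\Psi)) \subseteq \far^\edit_{\epsilon/2}(\Pi)$, but the weaker $\epsilon/4$ statement is what is needed downstream.)

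There is no real obstacle: everything reduces to checking that $\psi^{-1}(x)$ has the right granularity and that $\psi \circ \psi^{-1} = \mathrm{id}$ on $\zo^N$, both of which are already recorded in the excerpt. The only thing to be careful about is making sure the hypothesis is applied to the correct pair $(\pi, x)$ — namely the $\pi$ produced from the ``far'' string $x$, rather than to some other representative — so that the bound $\dist_\reledit(x, \Psi) > \epsilon$ can actually be transferred into a bound on $\dist_\edit(\pi, \Pi)$.
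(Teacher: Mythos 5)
Your proof is correct, and it takes a genuinely more direct route than the paper's. You pick the far string $x$ witnessing $\pi \in \Pi(\far^\reledit_\epsilon(\Psi))$, verify that $\pi = \psi^{-1}(x)$ has densities that are multiples of $1/N$ and that $\psi(\pi) = x$, and then apply the hypothesis to that exact pair, giving $\epsilon < \dist_\reledit(x,\Psi) \le 2\,\dist_\edit(\pi,\Pi)$ and hence $\dist_\edit(\pi,\Pi) > \epsilon/2$. The paper instead makes a round trip: it first converts $x \in \far^\reledit_\epsilon(\Psi)$ into $\pi \in \far^\edit_{\epsilon/2}(\Pi(\Psi))$ via \cref{prop:far-reledit-far-edit} (losing a factor of $2$ through \cref{lemma:relative-edit-distance}), then converts back to $\dist_\reledit(\psi(\pi),\Psi) > \epsilon/2$ using \cref{lemma:relative-edit-distance} again, and only then invokes the hypothesis, which is exactly why its conclusion is $\epsilon/4$. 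Your argument avoids both conversions, needs neither \cref{prop:far-reledit-far-edit} nor \cref{lemma:relative-edit-distance}, and delivers the stronger containment $\Pi(\far^\reledit_\epsilon(\Psi)) \subseteq \far^\edit_{\epsilon/2}(\Pi)$, of which the stated $\epsilon/4$ claim is an immediate weakening; the downstream applications only need $\epsilon/4$, so nothing breaks either way.
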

\begin{proof}
    %Let $I \define \Psi(\zo^N)$, \ie $I$ is the property of distributions $\pi$ over $\bN$ such
    %that $\pi = \psi^{-1}(x)$ for some $x \in \zo^N$.

    Let $I$ be the set of distributions over $\bN$ whose densities are integer multiples of $1/N$.
    We claim that $\far^\edit_{\epsilon/2}(\Pi(\Psi)) \cap I \subseteq \far^\edit_{\epsilon/4}(\Pi)$.

    Fix any $\pi \in \far^\edit_{\epsilon/2}(\Pi(\Psi)) \cap I$. We have that
    $\dist_\edit(\pi, \psi^{-1}(y)) > \epsilon/2$ for every $y \in \Psi$, and since all densities
    in $\pi$ are integer multiples of $1/N$ by $\pi \in I$, letting $x \define \psi(\pi)$ we
    conclude by \cref{lemma:relative-edit-distance} that $\dist_\reledit(x, y) > \epsilon/2$.
    Therefore $\dist_\reledit(x, \Psi) > \epsilon/2$, and using
    the hypothesis, we conclude that
    $\dist_\edit(\pi, \Pi) > \epsilon/4$ and thus $\pi \in \far^\edit_{\epsilon/4}(\Pi)$,
    establishing the first claim.

    By \cref{prop:far-reledit-far-edit},
    $\Pi(\far^\reledit_{\epsilon}(\Psi)) \subseteq \far^\edit_{\epsilon/2}(\Pi(\Psi))$,
    and therefore
    $\Pi(\far^\reledit_{\epsilon}(\Psi)) \cap I
    \subseteq \far^\edit_{\epsilon/2}(\Pi(\Psi)) \cap I
    \subseteq \far^\edit_{\epsilon/4}(\Pi)$. Finally, note that
    $\Pi(\far^\reledit_{\epsilon}(\Psi)) \cap I = \Pi(\far^\reledit_{\epsilon}(\Psi))$
    because every member of the latter has the form $\pi = \psi^{-1}(x)$ for some $x \in \zo^N$,
    so we are done.
\end{proof}

The following result establishes the upper bound portion of
\cref{thm:intro-trace-testing-support-n-informal}.

\begin{restatable}{theorem}{thmtracetestingsupportk}
    \label{thm:trace-testing-support-n}
    Let $N,n \in \bN$ and $\epsilon > 0$, and let $\Psi$ be the set of $n$-block strings in $\zo^N$.
    There is a $(\Psi, \far^\reledit_\epsilon(\Psi), \rho, 2/3)$-trace tester using one trace
    with expected trace size $\rho N = O(n/\epsilon)$.
\end{restatable}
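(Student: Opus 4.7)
The plan is to chain together three results already established earlier in the paper: the VC-based upper bound for testing $n$-alternating functions, the equivalence between that problem and testing distributions of support size at most $n$ under the parity trace, and the reduction from parity-trace distribution testing to single-trace property testing. Essentially every ingredient is in place; what remains is bookkeeping for the constants in the distance parameter.

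First, I would obtain a distribution tester under the parity trace for the property $\Pi_n$ of distributions over $\bN$ supported on at most $n$ elements, with respect to \emph{edit} distance. By Lemma \ref{lemma:labeled-distribution-testing-k-alternating}, the standard testing-by-learning reduction using the VC dimension bound yields a distribution-free sample-based tester for $n$-alternating functions with sample complexity $O(n/\epsilon)$. Applying the upper-bound direction of Theorem \ref{thm:testing-support-k} (with parameter $\epsilon/4$ in place of $\epsilon$), this immediately gives a $(\Pi_n, \far^\edit_{\epsilon/4}(\Pi_n), 2/3)$-distribution tester under the parity trace with sample complexity $m = O(n/\epsilon)$. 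To feed this into the trace-testing reduction we need a Poissonized tester; this is standard and only costs a constant factor, since $\Poi(2m)$ is at least $m$ except with probability $e^{-\Omega(m)}$, so the non-Poissonized tester can be invoked on the prefix of the Poissonized sample whenever the latter is large enough.

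Next, I would connect this to the target property $\Psi$ of $n$-block strings in $\zo^N$. By Observation \ref{obs:n-block-support}, $\Pi(\Psi) \subseteq \Pi_n$. By Proposition \ref{prop:dist-to-n-strings-distributions}, we have $\dist_\reledit(\psi(\pi),\Psi) \le 2\dist_\edit(\pi, \Pi_n)$ for every $\pi$ whose densities are integer multiples of $1/N$, so Proposition \ref{prop:pi-far-reledit-to-far-edit} gives $\Pi(\far^\reledit_\epsilon(\Psi)) \subseteq \far^\edit_{\epsilon/4}(\Pi_n)$. Hence the tester from the previous step also serves as a $(\Pi(\Psi), \Pi(\far^\reledit_\epsilon(\Psi)), 2/3)$-distribution tester under the parity trace, with the same $O(n/\epsilon)$ sample complexity.

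Finally, I would apply Lemma \ref{lemma:trace-tester-from-parity-trace-tester} to convert this parity-trace tester into a single-trace trace tester with retention rate $\rho = 1 - e^{-m/N}$. Correctness transfers automatically, and the expected trace size is
\[
    \rho N = N\bigl(1 - e^{-m/N}\bigr) \le m = O(n/\epsilon),
\]
using $1 - e^{-x} \le x$. This yields the claimed $(\Psi, \far^\reledit_\epsilon(\Psi), \rho, 2/3)$-trace tester with $\rho N = O(n/\epsilon)$. There is no real technical obstacle here—the main ``obstacle'' was already paid for in Sections \ref{section:sample-based-testing} and in the distance-conversion lemma, which together let the VC-dimension bound for testing $n$-alternating functions percolate all the way down to property testing in the trace reconstruction model.
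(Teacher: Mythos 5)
Your proposal follows the paper's proof essentially line for line: invoke \cref{thm:testing-support-k} (together with the $O(n/\epsilon)$ VC-dimension bound from \cref{lemma:labeled-distribution-testing-k-alternating}) to get a Poissonized $(\Pi_n, \far^\edit_{\epsilon/4}(\Pi_n), 2/3)$-tester under the parity trace, then use \cref{obs:n-block-support}, \cref{prop:dist-to-n-strings-distributions}, and \cref{prop:pi-far-reledit-to-far-edit} to match $\Pi(\Psi)$ and $\Pi(\far^\reledit_\epsilon(\Psi))$ with $\Pi_n$ and $\far^\edit_{\epsilon/4}(\Pi_n)$, and finally apply \cref{lemma:trace-tester-from-parity-trace-tester}. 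One small imprecision: your Poissonization sketch proposes running the non-Poissonized tester on a \emph{prefix} of the Poissonized trace, but because the trace records the parities of the \emph{sorted} sample, a prefix selects the smallest elements rather than a uniformly random subsample of size $m$; the correct move is to retain a uniformly random size-$m$ subset of trace positions (exchangeability then gives the right distribution), which is exactly what \cref{prop:poissonization} formalizes. This does not affect the validity of your argument, since you correctly identified the step as standard and the paper supplies it.
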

\begin{proof}
    Let $\Pi$ be the class of distributions over $\bN$ with support size at most $n$.
    By \cref{thm:testing-support-k}, there is a
    $(\Pi, \far^\edit_{\epsilon/4}(\Pi), 2/3)$-distribution tester under the parity
    trace, which we may assume is Poissonized by \cref{prop:poissonization}, with sample complexity
    $O(n/\epsilon)$.

    Since for every $x \in \Psi$ we have that $\psi^{-1}(x)$ has support size at most $n$
    (\cref{obs:n-block-support}), it
    follows that $\Pi(\Psi) \subseteq \Pi$.
    \cref{prop:pi-far-reledit-to-far-edit} together with
    \cref{prop:dist-to-n-strings-distributions} gives that
    $\Pi(\far^\reledit_\epsilon(\Psi)) \subseteq \far^\edit_{\epsilon/4}(\Pi)$.

    Therefore we obtain a
    $(\Pi(\Psi), \Pi(\far^\reledit_\epsilon(\Psi)), 2/3)$-distribution tester under the parity
    trace with sample complexity $m = O(n/\epsilon)$.
    Then \cref{lemma:trace-tester-from-parity-trace-tester} yields a
    $(\Psi, \far^\reledit_\epsilon(\Psi), \rho, 2/3)$-trace tester using one trace for
    $\rho = 1 - e^{-m/N} \le m/N$, \ie expected trace size $\rho N = O(n/\epsilon)$.
\end{proof}

Our results for trace testing uniform $n$-block strings are simpler to obtain from distribution
testing under the parity trace, because now the corresponding property $\Pi$ of probability
distributions contains only the distributions corresponding to the uniform $n$-block strings.

\paragraph{Notation.} For fixed $N \in \bN$ and $n$ that divides $N$, let
$u^{(1)}$ denote the 1-uniform $n$-block string and let $u^{(0)}$ denote the 0-uniform $n$-block
string.

\begin{proposition}
    \label{prop:dist-to-uniform-n-strings-distributions}
    Let $N,n \in \bN$ be such that $n$ divides $N$.
    Let $\Psi$ contain only the 1-uniform $n$-block string $u^{(1)}$, and let
    $\Pi$ contain only the uniform distribution over $[n]$.
    For every distribution $\pi$ over $\bN$ whose densities are integer multiples of $1/N$ and
    for $x = \psi(\pi)$,
    \[
        \dist_\edit(\pi, \Pi) \le \dist_\reledit(x, \Psi) \le 2 \cdot \dist_\edit(\pi, \Pi) \,.
    \]
\end{proposition}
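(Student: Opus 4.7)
The plan is to reduce the statement directly to \cref{lemma:relative-edit-distance}. The key observation is that since $n$ divides $N$, the uniform distribution $\mu$ on $[n]$ has all densities equal to $1/n$, which are integer multiples of $1/N$. Applying the definition of $\psi$ from \cref{def:string-to-distribution} gives
\[
  \psi(\mu) = 1^{(1/n)\cdot N} 0^{(1/n)\cdot N} 1^{(1/n)\cdot N} \dotsm \parity(n)^{(1/n)\cdot N}
           = 1^{N/n} 0^{N/n} 1^{N/n} \dotsm \parity(n)^{N/n} = u^{(1)} \,,
\]
which is precisely the 1-uniform $n$-block string.

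With this identification, and noting that by hypothesis $\pi$ also has densities that are integer multiples of $1/N$, the hypotheses of \cref{lemma:relative-edit-distance} are satisfied for the pair $(\pi, \mu)$. Instantiating that lemma with $\pi' = \mu$ yields
\[
  \frac{1}{2} \cdot \dist_\reledit(\psi(\pi), u^{(1)})
  \;\leq\; \dist_\edit(\pi, \mu)
  \;\leq\; \dist_\reledit(\psi(\pi), u^{(1)}) \,.
\]
Since $\Pi = \{\mu\}$ and $\Psi = \{u^{(1)}\}$, we have $\dist_\edit(\pi, \Pi) = \dist_\edit(\pi,\mu)$ and $\dist_\reledit(x, \Psi) = \dist_\reledit(\psi(\pi), u^{(1)})$. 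Rearranging the two inequalities above gives exactly the desired chain $\dist_\edit(\pi, \Pi) \leq \dist_\reledit(x, \Psi) \leq 2 \cdot \dist_\edit(\pi, \Pi)$.

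There is no real obstacle here: once one sees that $\psi(\mu) = u^{(1)}$, the statement is just the singleton-property specialization of \cref{lemma:relative-edit-distance}. The only thing to verify carefully is that the divisibility condition ``$n$ divides $N$'' is exactly what is needed to ensure $\mu$'s densities are representable as integer multiples of $1/N$, so that \cref{lemma:relative-edit-distance} actually applies to $(\pi, \mu)$; this contrasts with \cref{prop:dist-to-n-strings-distributions}, whose proof had to cope with the infimum over a whole class $\Pi$ of distributions whose densities need not be multiples of $1/N$.
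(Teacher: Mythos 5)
Your proof is correct and follows essentially the same route as the paper: identify $\psi(\mu) = u^{(1)}$ (equivalently, $\psi^{-1}(u^{(1)}) = \mu$, which the paper cites via \cref{obs:n-block-support}) and then invoke \cref{lemma:relative-edit-distance} directly on the singleton pair. The remark on why the divisibility hypothesis matters is a nice bit of context but adds nothing the paper doesn't already use implicitly.
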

\begin{proof}
    Let $\pi^*$ be the uniform distribution over $[n]$, so that $\pi^* = \psi^{-1}(u^{(1)})$
    (\cref{obs:n-block-support}).
    The two inequalities are immediate consequences of \cref{lemma:relative-edit-distance}.
    First, we have
    $\dist_\edit(\pi, \Pi) = \dist_\edit(\pi, \pi^*)
    \le \dist_\reledit(x, u^{(1)}) = \dist_\reledit(x, \Psi)$.
    Similarly,
    $\dist_\edit(\pi, \Pi) = \dist_\edit(\pi, \pi^*)
    \ge \frac{1}{2} \dist_\edit(x, u^{(1)}) = \frac{1}{2} \dist_\edit(x, \Psi)$.
\end{proof}

We first show a tester for the 1-uniform $n$-block strings, and then generalize it to both types
of uniform strings to obtain \cref{thm:trace-testing-uniform-k-no-promise-informal}, restated here:

\begin{restatable}{theorem}{thmtracetestinguniformknopromise}
\label{thm:trace-testing-uniform-k-no-promise}
Let $N, n \in \bN$ be such that $n$ divides $N$, and let $\epsilon > 0$.
Let $\Psi$ contain only the uniform $n$-block strings in $\zo^N$. There is a
$(\Psi, \far^\reledit_\epsilon(\Psi), \rho, 2/3)$-trace tester using one trace with expected
trace size $\rho N = O\left( \frac{n}{\epsilon} + \frac{n}{\epsilon^2 \log n} \right)$.
\end{restatable}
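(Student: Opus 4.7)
\textbf{The plan} is to lift \cref{thm:testing-uniform-k-alternating-no-promise}, which gives a parity trace tester for uniformity over $[n]$ in edit distance, to the trace model via \cref{lemma:trace-tester-from-parity-trace-tester}. Two adaptations are needed: (i) that theorem's $\Pi$ contains only the uniform distribution over $[n]$, which via \cref{obs:n-block-support} corresponds only to the \emph{1-uniform} $n$-block string $u^{(1)}$, so I also need the analogous tester for the uniform distribution $u_0$ over $\{2,\dotsc,n+1\}$ (corresponding to the 0-uniform string $u^{(0)}$); and (ii) the two testers must be combined into a single parity trace tester for $\Pi = \{u_1, u_0\}$.

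\textbf{Step 1: two parity trace testers.} The proofs of \cref{thm:uniform-k-alternating-learning-bound} and \cref{thm:testing-uniform-k-alternating-no-promise} rely only on a proper PAC learner for $n$-alternating functions and the tolerant uniformity tester of \cite{VV17a}, both of which are symmetric under swapping the roles of 0-proper and 1-proper labeled distributions. Repeating the argument with 0-proper labeled distributions in place of 1-proper ones yields an analogous parity trace tester targeting $\Pi_0 = \{u_0\}$. Boosting each by a constant number of independent Poissonized repetitions with majority vote, I obtain Poissonized $(\Pi_b, \far^\edit_{\epsilon/4}(\Pi_b), 11/12)$-distribution testers under the parity trace, for $b \in \{0,1\}$, each with sample complexity $m = O\left(\frac{n}{\epsilon} + \frac{n}{\epsilon^2 \log n}\right)$.

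\textbf{Step 2: combine, then lift.} Given one Poissonized parity trace sample of size $\Poi(m)$, run both sub-testers and accept iff at least one accepts. For completeness, if $\pi \in \Pi$ then $\pi$ equals one of $u_1,u_0$ and the corresponding sub-tester accepts with probability at least $11/12$. For soundness, if $\pi \in \far^\edit_{\epsilon/4}(\Pi)$ then both sub-testers must reject, so a union bound over the two failure events gives overall rejection probability at least $5/6 \ge 2/3$. This yields a Poissonized $(\Pi, \far^\edit_{\epsilon/4}(\Pi), 2/3)$-distribution tester under the parity trace with sample complexity $O(m)$. The analog of \cref{prop:dist-to-uniform-n-strings-distributions} for $u_0$ is immediate from \cref{lemma:relative-edit-distance} applied to $\pi^* = u_0$, $x = \psi(\pi^*) = u^{(0)}$, giving $\dist_\reledit(x, \Psi) \le 2\cdot \dist_\edit(\pi, \Pi)$ for $x = \psi(\pi)$ whenever the densities of $\pi$ are integer multiples of $1/N$. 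Plugging this into \cref{prop:pi-far-reledit-to-far-edit} gives $\Pi(\far^\reledit_\epsilon(\Psi)) \subseteq \far^\edit_{\epsilon/4}(\Pi)$, and since $\Pi(\Psi) = \Pi$ by \cref{obs:n-block-support}, the combined tester is also a $(\Pi(\Psi), \Pi(\far^\reledit_\epsilon(\Psi)), 2/3)$-distribution tester. Finally, \cref{lemma:trace-tester-from-parity-trace-tester} converts this into a $(\Psi, \far^\reledit_\epsilon(\Psi), \rho, 2/3)$-trace tester using one trace of expected size $\rho N = N(1-e^{-m/N}) \le m = O\left(\frac{n}{\epsilon} + \frac{n}{\epsilon^2 \log n}\right)$.

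\textbf{Main obstacle.} The only nontrivial point is the parity-swap adaptation needed in Step 1, since the proof chain for \cref{thm:testing-uniform-k-alternating-no-promise} passes through several layers (density properties, labeled distribution testing, a proper-learner-and-verifier pair). A cleaner alternative that would avoid this bookkeeping is to absorb both target distributions into a single labeled-distribution property and adjust \cref{def:learner-verifier-pair} to allow the learner to emit either an $n$-alternating or $(n+1)$-alternating function, followed by an appropriate case split in the verifier; the argument above, by contrast, keeps the changes to existing proofs minimal.
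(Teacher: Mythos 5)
Your Step~2 is sound: combining two sub-testers by a union bound, applying \cref{prop:pi-far-reledit-to-far-edit} and \cref{lemma:relative-edit-distance} to the two-element properties $\Pi = \{\psi^{-1}(u^{(1)}), \psi^{-1}(u^{(0)})\}$ and $\Psi = \{u^{(1)}, u^{(0)}\}$, and then lifting via \cref{lemma:trace-tester-from-parity-trace-tester} all go through. However, the paper's proof takes a much shorter route that avoids your Step~1 entirely: it applies the $\{u^{(1)}\}$ trace tester from \cref{lemma:trace-testing-1-uniform-k-no-promise} (run at success probability $5/6$) twice, once to the trace and once to the bit-complemented trace, accepting if either run accepts. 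Since $u^{(0)}$ is the bit-complement of $u^{(1)}$ and the deletion channel commutes with bit-complementation, the negated trace from $x$ is a trace from $\bar x$, so the second run tests whether $x = u^{(0)}$; and relative edit distance is invariant under bit-complementation, so soundness carries over. No new parity-trace tester for $\{\psi^{-1}(u^{(0)})\}$ needs to be constructed at all.

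Your Step~1 is the part I would push back on. You assert that the proof chain for \cref{thm:testing-uniform-k-alternating-no-promise} is "symmetric under swapping the roles of 0-proper and 1-proper labeled distributions," so the analogous tester for $\Pi_0 = \{\psi^{-1}(u^{(0)})\}$ follows by a parity swap. This is not quite right as stated. The density property $\Xi(\Pi_0)$ contains not only $0$-proper $n$-alternating labeled distributions but also $1$-proper labeled distributions with $n+1$ alternation points whose first interval has zero $\cD$-mass, whereas $\Xi(\Pi_1)$ (for $\Pi_1 = \{\psi^{-1}(u^{(1)})\}$, whose density sequence has nonzero first entry) contains only $1$-proper ones. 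So "swap $0$-proper and $1$-proper" does not map one density property to the other, and the learner in your learner-verifier pair would have to target a genuinely different function class than in \cref{thm:uniform-k-alternating-learning-bound}. The plan can likely be repaired, but it is not the immediate symmetry you describe and is far more machinery than the problem requires. If you prefer to stay at the parity-trace level, the clean shortcut is again bit-negation: the bit-complement of a parity trace from $\pi$ is distributed as a parity trace from the right-shift $\pi^{\to}$ (with $\pi^{\to}(1)=0$, $\pi^{\to}(i)=\pi(i-1)$), all edit operations in \cref{def:fractional-string-edit-distance} are symbol-agnostic so $\dist_\edit(\pi^{\to},{\pi'}^{\to}) = \dist_\edit(\pi,\pi')$, and $\psi^{-1}(u^{(1)})^{\to} = \psi^{-1}(u^{(0)})$; hence running the $\Pi_1$-tester on the bit-complemented trace is a $(\Pi_0, \far^\edit_{\epsilon/4}(\Pi_0))$-tester. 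That replaces the entirety of your Step~1 re-derivation with a one-line reduction, essentially the same trick the paper uses, one level down.
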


\begin{lemma}
\label{lemma:trace-testing-1-uniform-k-no-promise}
Let $N,n \in \bN$ be such that $n$ divides $N$, and let $\epsilon > 0$.  Let $\Psi$ contain only the
1-uniform $n$-block string $u^{(1)} \in \zo^N$.  There is a $(\Psi, \far^\reledit_\epsilon(\Psi), \rho,
2/3)$-trace tester using one trace with expected trace size $\rho N = O\left(\frac{n}{\epsilon} +
\frac{n}{\epsilon^2 \log n}\right)$.
\end{lemma}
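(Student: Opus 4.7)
The plan is to follow exactly the template of the proof of \cref{thm:trace-testing-support-n}, substituting the uniformity tester of \cref{thm:testing-uniform-k-alternating-no-promise} in place of the support-size tester, and using \cref{prop:dist-to-uniform-n-strings-distributions} in place of \cref{prop:dist-to-n-strings-distributions} to relate the relative string edit distance to the distribution edit distance. No new technical ingredients should be needed.

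Concretely, first I would let $\Pi$ be the singleton property containing only the uniform distribution over $[n]$, and observe that $\Pi = \Pi(\Psi)$: by \cref{obs:n-block-support}, $\psi^{-1}(u^{(1)})$ is exactly the uniform distribution on $[n]$, and $\Psi$ has a single element. Then I would invoke \cref{thm:testing-uniform-k-alternating-no-promise} (which may be assumed Poissonized via \cref{prop:poissonization}) to obtain a $(\Pi,\far^\edit_{\epsilon/4}(\Pi),2/3)$-distribution tester under the parity trace with sample complexity
\[
m = O\!\left(\frac{n}{\epsilon} + \frac{n}{\epsilon^2 \log n}\right).
\]

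Next I would upgrade this into a tester whose soundness is stated with respect to strings that are far in relative edit distance. Combining \cref{prop:dist-to-uniform-n-strings-distributions} with \cref{prop:pi-far-reledit-to-far-edit} yields the inclusion
\[
\Pi(\far^\reledit_\epsilon(\Psi)) \subseteq \far^\edit_{\epsilon/4}(\Pi).
\]
Hence the algorithm from the previous paragraph is in fact a $(\Pi(\Psi), \Pi(\far^\reledit_\epsilon(\Psi)), 2/3)$-distribution tester under the parity trace, with the same sample complexity $m$.

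Finally, I would apply \cref{lemma:trace-tester-from-parity-trace-tester} to convert this into a $(\Psi,\far^\reledit_\epsilon(\Psi),\rho,2/3)$-trace tester using a single trace, with $\rho = 1-e^{-m/N}$, so that the expected trace size satisfies $\rho N \leq m = O\!\left(\frac{n}{\epsilon} + \frac{n}{\epsilon^2 \log n}\right)$, as required. The only ``obstacle'' is really bookkeeping: making sure the inclusion of \cref{prop:pi-far-reledit-to-far-edit} applies here, which it does because \cref{prop:dist-to-uniform-n-strings-distributions} provides exactly the $\dist_\reledit(x,\Psi)\leq 2\,\dist_\edit(\pi,\Pi)$ hypothesis it requires. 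Nothing beyond the already-developed machinery is needed.
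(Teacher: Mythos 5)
Your proposal is correct and matches the paper's proof step for step: identify $\Pi(\Psi)$ with the singleton uniform distribution on $[n]$, invoke \cref{thm:testing-uniform-k-alternating-no-promise} (Poissonized), use \cref{prop:pi-far-reledit-to-far-edit} with \cref{prop:dist-to-uniform-n-strings-distributions} to get $\Pi(\far^\reledit_\epsilon(\Psi)) \subseteq \far^\edit_{\epsilon/4}(\Pi)$, and apply \cref{lemma:trace-tester-from-parity-trace-tester}. No gaps.
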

\begin{proof}
    Let $\Pi$ contain only the uniform distribution $\pi = \psi^{-1}(u^{(1)})$ over $[n]$.
    By \cref{thm:testing-uniform-k-alternating-no-promise}, there is a
    $(\Pi, \far^\edit_{\epsilon/4}(\Pi), 2/3)$-distribution tester under the parity trace, which we
    may assume is Poissonized by \cref{prop:poissonization}, with sample complexity
    $O\left(\frac{n}{\epsilon} + \frac{n}{\epsilon^2 \log n}\right)$.

    Note that $\Pi = \{\pi\} = \{\psi^{-1}(u^{(1)})\} = \Pi(\Psi)$.
    Moreover, \cref{prop:pi-far-reledit-to-far-edit} together with
    \cref{prop:dist-to-uniform-n-strings-distributions} gives that
    $\Pi(\far^\reledit_\epsilon(\Psi)) \subseteq \far^\edit_{\epsilon/4}(\Pi)$, so we obtain a
    $(\Pi(\Psi), \Pi(\far^\reledit_\epsilon(\Psi), 2/3)$-distribution tester under the parity
    trace with sample complexity $m=O\left(\frac{n}{\epsilon} + \frac{n}{\epsilon^2 \log n}\right)$.
    Then \cref{lemma:trace-tester-from-parity-trace-tester} yields a
    $(\Psi, \far^\reledit_\epsilon(\Psi), \rho, 2/3)$-trace tester using one trace for
    $\rho = 1 - e^{-m/N}$, \ie expected trace size
    $\rho N = O\left(\frac{n}{\epsilon} + \frac{n}{\epsilon^2 \log n}\right)$.
\end{proof}

\begin{proof}[Proof of \cref{thm:trace-testing-uniform-k-no-promise}]
    Use a version of the tester for the 1-uniform $n$-block string from
    \cref{lemma:trace-testing-1-uniform-k-no-promise} with success probability $5/6$, repeat it with
    all symbols negated, and accept if either execution accepts.
    If the input $x = u^{(1)}$, the first execution accepts with probability at least $5/6$, and
    if $x = u^{(0)}$, the second execution accepts with probability at least $5/6$.
    If $\dist_\reledit(x, \Psi) > \epsilon$, then $x$ is far from both $u^{(1)}$ and $u^{(0)}$,
    so each execution only accepts with probability at most $1/6$, and by the union bound the
    probability that $x$ is accepted is at most $1/3$.
\end{proof}

Toward establishing the upper bound portion of \cref{thm:intro-trace-testing-uniform}, we introduce
the following definition. We say that $x \in \{0,1\}^N$ is a \emph{type-1 $n$-block string} if
$x = 1^{t_1} 0^{t_2} 1^{t_3} \dotsm \parity(n)^{t_n}$ for some choice of non-negative integers
$t_1, \dotsc, t_n$. We say that $x$ is a \emph{type-0 $n$-block string} if
$x = 0^{t_1} 1^{t_2} 0^{t_3} \dotsm (1-\parity(n))^{t_n}$ for some choice of non-negative
integers $t_1, \dotsc, t_n$.

\begin{remark}
    \label{remark:type-1-0-strings}
    A string $x$ may be \emph{both} a type-1 $n$-block string and a type-0 $n$-block string.
    Moreover, $x$ is an $n$-block string if and only if it is a type-1 $n$-block string or
    a type-0 $n$-block string.

    If $x$ is a type-1 $n$-block string, then $\psi^{-1}(x)$ is supported within $[n]$, and if
    $x$ is a type-0 $n$-block string, then $\psi^{-1}(x)$ is supported within $\{2,3,\dotsc,n+1\}$.
\end{remark}

\begin{lemma}
    \label{lemma:trace-testing-1-uniform-k}
    Let $N, n \in \bN$ be such that $n$ is even and divides $N$, and let $\epsilon > 0$.
    Let $\Psi_1$ contain only the 1-uniform $n$-block string $u^{(1)} \in \zo^N$,
    and let $\Psi_2$ contain all type-1 $n$-block strings in $\zo^N$ that are $\epsilon$-far from
    $\Psi_1$ in (relative) edit distance.
    There is a $(\Psi_1, \Psi_2, \rho, 2/3)$-trace tester using one trace of expected size $\rho N =
    \widetilde O((n/\epsilon)^{4/5} + \sqrt{n}/\epsilon^2)$.
\end{lemma}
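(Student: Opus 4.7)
The plan is to reduce this trace-testing problem to distribution testing of uniformity under the parity trace, for which \cref{thm:intro-main} already gives the tight $\widetilde \Theta\left((n/\epsilon)^{4/5} + \sqrt{n}/\epsilon^2\right)$ upper bound. This is structurally parallel to the argument of \cref{lemma:trace-testing-1-uniform-k-no-promise}, but using the promise (type-1 $n$-block strings) to land inside the setting of \cref{thm:intro-main} without needing the learning-based tester, thereby obtaining the sharper sample complexity.

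First I would let $\Pi_1 = \{\mu\}$, where $\mu$ is the uniform distribution over $[n]$, and invoke \cref{thm:intro-main} with the parameter $n/2$ (justified because $n$ is even) to obtain a Poissonized $(\Pi_1, \far^\TV_{\epsilon/2}(\Pi_1), 2/3)$-distribution tester under the parity trace, with sample complexity $m = \widetilde O((n/\epsilon)^{4/5} + \sqrt{n}/\epsilon^2)$. By \cref{lemma:trace-tester-from-parity-trace-tester} applied to the properties $\Psi' \define \{x \in \zo^N : \psi^{-1}(x) \in \Pi_1\}$ and $\Psi'' \define \{x \in \zo^N : \psi^{-1}(x) \in \far^\TV_{\epsilon/2}(\Pi_1)\}$, this yields a $(\Psi', \Psi'', \rho, 2/3)$-trace tester using a single trace, where $\rho = 1 - e^{-m/N}$ and hence the expected trace length satisfies $\rho N \le m = \widetilde O((n/\epsilon)^{4/5} + \sqrt{n}/\epsilon^2)$.

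The key step is then to verify the containments $\Psi_1 \subseteq \Psi'$ and $\Psi_2 \subseteq \Psi''$. The first is immediate from \cref{obs:n-block-support}, which gives $\psi^{-1}(u^{(1)}) = \mu$. For the second, fix any $x \in \Psi_2$. Since $x$ is a type-1 $n$-block string, $\psi^{-1}(x)$ is supported within $[n]$ (\cref{remark:type-1-0-strings}), so it is a valid input distribution for the uniformity tester. The distance chain then reads
\[
  \epsilon < \dist_\reledit(x, u^{(1)}) \leq 2 \dist_\edit(\psi^{-1}(x), \mu) \leq 2 \dist_\TV(\psi^{-1}(x), \mu),
\]
where the first inequality is the hypothesis $x \in \Psi_2$, the second is the string-to-distribution comparison from \cref{lemma:relative-edit-distance} (applicable since the densities of $\psi^{-1}(x)$ are integer multiples of $1/N$), and the third is the general bound \eqref{eq:edit-to-tv-inequality}. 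Hence $\dist_\TV(\psi^{-1}(x), \mu) > \epsilon/2$, so $\psi^{-1}(x) \in \far^\TV_{\epsilon/2}(\Pi_1)$, as needed.

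There is no main obstacle: the heavy lifting is done by \cref{thm:intro-main,lemma:trace-tester-from-parity-trace-tester,lemma:relative-edit-distance}. The only minor subtlety is getting the right direction of the edit-to-TV comparison, but this goes through cleanly because we only need edit distance to lower-bound TV distance (which is the easy direction), and the quantitative constants from $(n/2, \epsilon/2)$ are absorbed into the $\widetilde O(\cdot)$.
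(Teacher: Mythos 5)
Your proof is correct and follows essentially the same route as the paper's: reduce via the string-to-distribution correspondence to testing uniformity on $[n]$ under the parity trace (Theorem~\ref{thm:intro-main}), pass from relative edit distance on strings to (edit, then TV) distance on distributions via Lemma~\ref{lemma:relative-edit-distance} and~\eqref{eq:edit-to-tv-inequality}, and conclude with the Poissonized single-trace reduction of Lemma~\ref{lemma:trace-tester-from-parity-trace-tester}. The only cosmetic differences are that the paper phrases the far-set as edit-far and then applies Fact~\ref{fact:edit-tv-subset} whereas you go directly to TV-far, and the paper works directly with $\Pi(\Psi_1),\Pi(\Psi_2)$ rather than intermediate string properties $\Psi',\Psi''$; neither affects correctness.
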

\begin{proof}
    Let $\Pi_1$ contain only the uniform distribution $\pi^* = \psi^{-1}(u^{(1)})$ over $[n]$,
    and let $\Pi_2$ contain the distributions over $[n]$ that are $\epsilon/2$-far from uniform in
    edit distance. By \cref{thm:intro-main} together with \cref{fact:edit-tv-subset},
    there is a $(\Pi_1, \Pi_2, 2/3)$-distribution tester
    under the parity trace, which we may assume is Poissonized by \cref{prop:poissonization},
    with sample complexity $\widetilde O((n/\epsilon)^{4/5} + \sqrt{n}/\epsilon^2)$.

    We claim that $\Pi(\Psi_1) \subseteq \Pi_1$ and $\Pi(\Psi_2) \subseteq \Pi_2$. First, let
    $\pi \in \Pi(\Psi_1)$, so that necessarily $\pi = \psi^{-1}(u^{(1)})$. Then indeed
    $\pi$ is uniform over $[n]$, so $\pi \in \Pi_1$. Now, suppose $\pi \in \Pi(\Psi_2)$, so that
    $\pi = \psi^{-1}(x)$ for some $x \in \zo^N$ such that $x$ is a type-1 $n$-block string and
    $\dist_\reledit(x,u^{(1)}) > \epsilon$. It follows that $\pi$ is supported within $[n]$
    by \cref{remark:type-1-0-strings} and, by \cref{lemma:relative-edit-distance},
    $\dist_\edit(\pi, \pi^*) \ge \frac{1}{2} \dist_\reledit(\psi(\pi), \psi(\pi^*))
    = \frac{1}{2} \dist_\reledit(x, u^{(1)}) > \epsilon/2$, so $\pi \in \Pi_2$.

    Therefore we obtain a $(\Pi(\Psi_1), \Pi(\Psi_2), 2/3)$-distribution tester under the parity
    trace with sample complexity $m = \widetilde O((n/\epsilon)^{4/5} + \sqrt{n}/\epsilon^2)$.
    Then \cref{lemma:trace-tester-from-parity-trace-tester} yields a
    $(\Psi_1, \Psi_2, \rho, 2/3)$-trace tester using one trace for $\rho = 1 - e^{-m/N}$,
    \ie expected trace size
    $\rho N = \widetilde O((n/\epsilon)^{4/5} + \sqrt{n}/\epsilon^2)$.
\end{proof}

Now, we obtain the single-trace upper bound portion of \cref{thm:intro-trace-testing-uniform}:

\begin{theorem}
    \label{thm:trace-testing-uniform-k}
    Let $N, n \in \bN$ be such that $n$ is even and divides $N$, and let $\epsilon > 0$.
    Let $\Psi_1$ contain only the uniform $n$-block strings $u^{(1)}, u^{(0)} \in \zo^N$,
    and let $\Psi_2$ contain all $n$-block strings in $\zo^N$ that are $\epsilon$-far from $\Psi_1$
    in (relative) edit distance.
    There is a $(\Psi_1, \Psi_2, \rho, 2/3)$-trace tester using one trace of expected size $\rho N =
    \widetilde O((n/\epsilon)^{4/5} + \sqrt{n}/\epsilon^2)$.
\end{theorem}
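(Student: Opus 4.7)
My plan is to apply the single-trace tester $A$ from \cref{lemma:trace-testing-1-uniform-k} (which handles the 1-uniform case under a type-1 promise) \emph{verbatim} and argue that it automatically solves the unpromised two-uniform problem here, inheriting the sample-complexity bound with no loss. The key structural property I will use is that $A$ is \emph{invariant under bit-complementation of the observed trace}. The \textsc{Poissonize} subroutine of \cref{prop:poissonize-trace} copies each input symbol a random number of times without inspecting its value, so the outputs on $T$ and on $\overline T$ are complements of each other; and \cref{alg:uniformity-tester-linear-trace} runs identical tests inside its outer loop over $b \in \{0,1\}$, so complementing its input trace merely swaps the two iterations of that loop and leaves its accept/reject decision unchanged. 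Combined, $A(\overline{\bm T})$ and $A(\bm T)$ are identically distributed. Moreover, since the deletion channel acts symbolwise, if $\bm T$ is a deletion-channel trace of $x$ then $\overline{\bm T}$ is distributed as a deletion-channel trace of $\bar x$.

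Completeness is then immediate. For $x = u^{(1)}$, \cref{lemma:trace-testing-1-uniform-k} gives acceptance with probability at least $2/3$. For $x = u^{(0)} = \overline{u^{(1)}}$, the complementation symmetry shows that $A$'s acceptance probability on a trace from $x$ equals its acceptance probability on a trace from $u^{(1)}$, so it is again at least $2/3$. For soundness, fix $x \in \Psi_2$, so $x$ is an $n$-block string satisfying $\dist_\reledit(x, u^{(1)}) > \epsilon$ and $\dist_\reledit(x, u^{(0)}) > \epsilon$. By \cref{remark:type-1-0-strings}, $x$ is type-1 or type-0. If $x$ is type-1, then it is a type-1 $n$-block string $\epsilon$-far from $u^{(1)}$, and \cref{lemma:trace-testing-1-uniform-k} directly gives rejection with probability at least $2/3$. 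If $x$ is type-0, then $\bar x$ is type-1, and since $\overline{u^{(1)}} = u^{(0)}$ we have $\dist_\reledit(\bar x, u^{(1)}) = \dist_\reledit(x, u^{(0)}) > \epsilon$; so $\bar x$ lies in the promise class of \cref{lemma:trace-testing-1-uniform-k}, which rejects $\bar x$ with probability at least $2/3$, and by the complementation symmetry $A$ also rejects $x$ with that probability.

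The only real task is verifying the complementation symmetry of $A$, which amounts to checking that each of the bias, concentration, and collision-based tests inside \cref{alg:uniformity-tester-linear-trace} is defined purely from the run-lengths of a single symbol $b$ and therefore exchanges under $b \leftrightarrow 1-b$. The sample-complexity bound $\rho N = \widetilde O((n/\epsilon)^{4/5} + \sqrt n/\epsilon^2)$ is then inherited directly from \cref{lemma:trace-testing-1-uniform-k}, with no loss in constants or polylog factors.
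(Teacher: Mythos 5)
Your proposal matches the paper's own proof essentially verbatim: it uses the same complementation-symmetry observation about $A$ (combining the symbol-agnostic \textsc{Poissonize} step with the $b\in\{0,1\}$-symmetric parity-trace tester) and the same case split on whether $x$ or $\overline x$ is type-1 to invoke \cref{lemma:trace-testing-1-uniform-k}. The only difference is that you unpack the complementation-symmetry claim in a bit more detail than the paper does, which is a fine expository choice but not a different argument.
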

\begin{proof}
    The key observation is that the algorithm $A$ obtained in \cref{lemma:trace-testing-1-uniform-k}
    is invariant to negation of all the symbols in the input: it is a combination of the
    distribution tester from \cref{thm:intro-main}, which only depends on run lengths, and the
    \textsc{Poissonize} algorithm from \cref{prop:poissonize-trace}, which transforms the input
    in the same way regardless of the values of the symbols. Formally, for $x \in \{0,1\}^N$ and
    letting $\overline x$ denote the string obtained by negating every symbol in $x$, the
    outputs $A(x)$ and $A(\overline x)$ are identically distributed.

    Therefore, we claim that $A$ is also a $(\Psi_1, \Psi_2, \rho, 2/3)$-trace tester. Indeed if
    the input $x \in \Psi_1$, then either $x = u^{(1)}$ and $A$ accepts with probability at least
    $2/3$ by \cref{lemma:trace-testing-1-uniform-k}, or $x = u^{(0)}$ and therefore
    $\overline x = u^{(1)}$, so again $A$ accepts. On the other hand, if $x \in \Psi_2$ then
    $\dist_\reledit(x, u^{(1)}) > \epsilon$ and
    $\dist_\reledit(\overline x, u^{(1)}) = \dist_\reledit(x, u^{(0)}) > \epsilon$. Moreover,
    either $x$ or $\overline x$ is a type-1 $n$-block string, so $A$ rejects with probability
    at least $2/3$.
\end{proof}

We remark that the probability of success $2/3$ in the results above could be replaced with any
higher constant without affecting the asymptotic bounds, by using the distribution tester under the
parity trace with correspondingly better constant probability of success. Alternatively,
multiple independent traces may be used to amplify the result into the high probability regime.

\subsection{Multiple-Trace Upper Bound}

The ability to make inferences from multiple independent traces is a central component of the
trace reconstruction model. Accordingly, we would like to test the class of uniform $n$-block
strings with smaller traces than afforded by our single-trace results, at the cost of taking more
traces. The main idea is to \emph{reduce} to the single-trace case by concatenating the $k$ traces
together, and thinking of the result as one trace from the input string copied $k$ times.

For any strings $x, y \in \zo^*$ and integer $k > 0$, denote by $x \circ y$ the concatenation
of $x$ and $y$, and by $x^{\circ k}$ the concatenation $x \circ \dotsm \circ x$ with $k$
terms in total.

\begin{proposition}
    \label{prop:concat-edit-dist}
    There exists a universal constant $c > 0$ such the following holds.
    Let $N,n,k \in \bN$ be such that $n$ is even and divides $N$.
    Let $u = u^{(1)} \in \zo^N$ be the 1-uniform $n$-block string and let $x \in \zo^N$ be a type-1
    $n$-block string. Then
    \[
        \dist_\reledit(u^{\circ k}, x^{\circ k}) \ge c \cdot \dist_\reledit(u, x) \,.
    \]
\end{proposition}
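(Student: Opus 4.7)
The plan is to prove $\dist_\stringedit(u^{\circ k}, x^{\circ k}) \ge c' k \cdot \dist_\stringedit(u, x)$ for a universal $c' > 0$, which suffices since both concatenated strings have length $kN$ and so $\dist_\reledit(u^{\circ k}, x^{\circ k}) = \dist_\stringedit(u^{\circ k}, x^{\circ k})/(kN)$ and $\dist_\reledit(u, x) = \dist_\stringedit(u, x)/N$. Write $d \define \dist_\stringedit(u, x)$ and $D \define \dist_\stringedit(u^{\circ k}, x^{\circ k})$.

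The main approach is a partition argument. Given an optimal edit script from $u^{\circ k}$ to $x^{\circ k}$ of cost $D$, the induced alignment determines, for each $i \in \{0, 1, \dotsc, k\}$, a position $a_i$ in $u^{\circ k}$ aligned to the natural $x$-boundary at position $iN$ in $x^{\circ k}$ (with $a_0 = 0$, $a_k = kN$). The cost decomposes as $D = \sum_{i=1}^k C_i$, where $C_i \ge \dist_\stringedit(v_i, x)$ and $v_i \define u^{\circ k}[a_{i-1}+1 \mathinner{.\,.} a_i]$ has length $L_i$. The structural key is that $u^{\circ k} = [1^{N/n} 0^{N/n}]^{kn/2}$ is periodic with period $2N/n$, so every length-$N$ contiguous substring of $u^{\circ k}$ is a cyclic rotation of $u$, and every such rotation has edit distance at most $2N/n$ from $u$ (achieved by shifting $u$ within one period by deleting a prefix of one block and appending the matching suffix). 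Extending or truncating $v_i$ to length $N$ therefore yields $\dist_\stringedit(v_i, u) \le |L_i - N| + 2N/n$. By the triangle inequality, $\dist_\stringedit(v_i, x) \ge d - |L_i - N| - 2N/n$.

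Summing over $i$ and using that $L_i - N$ equals the net excess of deletions over insertions in the $i$-th sub-script (so $\sum_i |L_i - N|$ is bounded by the total number of insertions and deletions, which is at most $D$), I obtain $D \ge \sum_i \dist_\stringedit(v_i, x) \ge kd - D - 2kN/n$, which rearranges to $D \ge kd/2 - kN/n$. Dividing by $kN$, this yields
\[
  \dist_\reledit(u^{\circ k}, x^{\circ k}) \;\ge\; \tfrac{1}{2}\dist_\reledit(u, x) - \tfrac{1}{n} \,.
\]
In the main regime $\dist_\reledit(u, x) \ge 4/n$, the additive $1/n$ term is at most $\tfrac{1}{4} \dist_\reledit(u, x)$, giving $\dist_\reledit(u^{\circ k}, x^{\circ k}) \ge \tfrac{1}{4}\dist_\reledit(u, x)$ directly.

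The main obstacle is the residual regime $\dist_\reledit(u, x) < 4/n$, where $d < 4N/n$ and the partition bound above becomes vacuous. In this regime $x$ is structurally very close to $u$, and I would handle it by translating the statement to the distribution edit distance via \cref{lemma:relative-edit-distance}: since $\pi_{u^{\circ k}}$ is exactly the $k$-fold tiling (with mass scaled by $1/k$) of $\pi_u$ on $[kn]$, given any pair of labeled-distribution representatives achieving near-optimal TV distance between $\pi_{u^{\circ k}}$ and $\pi_{x^{\circ k}}$, I would ``fold'' them onto the $[n]$-scale by averaging the $k$ natural copies; the restricted mass in each copy is $1/k$, so the TV cost on each copy is at most $k$ times the corresponding chunk of the overall cost, and averaging over copies produces labeled distributions for $\pi_u$ and $\pi_x$ with TV distance at most a constant times $\dist_\edit(\pi_{u^{\circ k}}, \pi_{x^{\circ k}})$. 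The technical subtlety is that block merging can occur in $x^{\circ k}$ when $t_1 = 0$ or $t_n = 0$, so the $k$ natural copies of $\pi_{x^{\circ k}}$ are not perfectly aligned with those of $\pi_{u^{\circ k}}$; I would handle this by choosing the $k$-fold partition to fall in the interiors of the (possibly merged) boundary blocks and paying only an $O(\|\pi_x\|_\infty)$ alignment cost per boundary, which is absorbed once we are in the small-$d$ regime where $x$ is nearly uniform.
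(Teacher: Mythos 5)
Your approach is genuinely different from the paper's. The paper gives a single unified argument through the distribution edit distance: since $n$ is even, $u^{\circ k}$ is the 1-uniform $kn$-block string, so $\pi_u^k = \psi^{-1}(u^{\circ k})$ is \emph{uniform} on $[kn]$ while $\pi_x^k = \psi^{-1}(x^{\circ k})$ is supported within $[kn]$; then \cref{lemma:edit-to-tv-uniform} lower bounds $\dist_\edit(\pi_u^k, \pi_x^k)$ by a constant times $\dist_\TV(\pi_u^k, \pi_x^k)$, which equals $\dist_\TV(\pi_u, \pi_x)$ exactly because the $k$-fold tilings are aligned entry-by-entry, and the chain closes with two applications of \cref{lemma:relative-edit-distance}. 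You instead propose a two-regime argument. Your main-regime piece is an elementary string-level calculation and it is correct: the alignment decomposition $D = \sum_i C_i$ with $C_i \ge \dist_\stringedit(v_i, x)$, the bound $\dist_\stringedit(v_i, u) \le |L_i - N| + 2N/n$ coming from the $2N/n$-periodicity of $u^{\circ k}$ (this is where $n$ being even enters), and the telescoping $\sum_i |L_i - N| \le D$ all check out. That gives a self-contained proof when $\dist_\reledit(u,x) \ge 4/n$, avoiding \cref{lemma:edit-to-tv-uniform} entirely, which is a nice feature.

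The gap is the residual regime $\dist_\reledit(u,x) < 4/n$, which you correctly identify as unavoidable (your bound is vacuous there, and this is precisely the regime where $x$ differs from $u$ on only $O(1)$ block-length units). Your folding sketch is not a proof. You do not specify how to choose the $k$-fold partition so that the restrictions of both labeled-distribution representatives are simultaneously representatives of $\pi_u$ and $\pi_x$; the misalignment between $f$'s and $g$'s alternation points is the entire difficulty here, not a bookkeeping afterthought, and block merging at the seams makes it worse, not a side case. Also, the per-boundary cost $O(\|\pi_x\|_\infty)$ you propose to pay is only $O(1/n)$ once you already know that $\pi_x$ is close to uniform in \emph{TV} distance; but in the small regime you only start with smallness of $\dist_\reledit(u,x)$, hence of $\dist_\edit(\pi_u, \pi_x)$, and passing from small edit distance to small TV distance to a uniform target is precisely \cref{lemma:edit-to-tv-uniform}. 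If you are going to invoke that lemma anyway, the clean route is to drop the case split and run the paper's chain, which handles both regimes at once. As written, the proposal establishes the proposition only when $\dist_\reledit(u,x) \ge 4/n$.
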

\begin{proof}
    Let $\pi_u \define \psi^{-1}(u)$ and $\pi_x \define \psi^{-1}(x)$, so that $\pi_u$ and $\pi_x$
    are supported within $[n]$ and $\psi(\pi_u) = u, \psi(\pi_x) = x$.
    Define $\pi_u^k$ as the following distribution on $\bN$: for all $t \in \bZ_{\ge 0}$
    and $i \in [n]$,
    \[
        \pi_u^k(tn + i) \define \begin{cases}
            \pi_u(i)/k, & \text{if $t \le k-1$} \\
            0,          & \text{if $t \ge k$.}
        \end{cases}
    \]
    Define $\pi_x^k$ analogously. Then
    $\dist_\TV(\pi_u^k, \pi_x^k) = \dist_\TV(\pi_u, \pi_x)$, since the entries of the former
    are aligned in each of the $k$ rescaled copies of the latter.
    Also, the entries of $\pi_u^k, \pi_x^k$ are integer multiples of $1/kN$, and their
    corresponding strings over $\zo^{kN}$ satisfy $\psi(\pi_u^k) = u^{\circ k}$ and
    $\psi(\pi_x^k) = x^{\circ k}$.

    By \cref{lemma:relative-edit-distance}, we have
    $\dist_\reledit(u^{\circ k}, x^{\circ k}) = \dist_\reledit(\psi(\pi_u^k), \psi(\pi_x^k))
    \ge \dist_\edit(\pi_u^k, \pi_x^k)$.
    Since $n$ is even, $u^{\circ k}$ is the 1-uniform $kn$-block string, and therefore
    $\pi_u^k$ is uniformly distributed on $[kn]$. It is also clear that
    $\pi_x^k$ is supported within $[kn]$. It follows from \cref{lemma:edit-to-tv-uniform} that
    $\dist_\edit(\pi_u^k, \pi_x^k) \ge c' \cdot \dist_\TV(\pi_u^k, \pi_x^k)
    = \dist_\TV(\pi_u, \pi_x)$ for some universal constant $c' > 0$.
    It is easy to see from the definition of edit distance that
    $\dist_\TV(\pi_u, \pi_x) \ge \dist_\edit(\pi_u, \pi_x)$. Finally, applying
    \cref{lemma:relative-edit-distance} again yields
    $\dist_\edit(\pi_u, \pi_x) \ge \frac{1}{2} \dist_\reledit(\psi(\pi_u), \psi(\pi_x))$.
    Recalling that $\psi(\pi_u) = u$ and $\psi(\psi_x) = x$, this concludes the proof.
\end{proof}

We first use the result above to show a multiple-trace upper bound for testing the 1-uniform strings
among the type-1 $n$-block strings, and then generalize this result to both types of (uniform)
strings to obtain the upper bound portion of \cref{thm:intro-trace-testing-uniform}.

\begin{lemma}
    \label{lemma:multiple-trace-testing-1-uniform}
    Let $N, n, k \in \bN$ be such that $n$ is even and divides $N$, and let $\epsilon > 0$.
    Let $\Psi_1$ contain only the 1-uniform $n$-block string $u^{(1)} \in \zo^N$,
    and let $\Psi_2$ be the set of type-1 $n$-block strings that are $\epsilon$-far from $\Psi_1$
    in (relative) edit distance. Then there is a $(\Psi_1, \Psi_2, \rho, 2/3)$-trace tester using
    $k$ traces of expected size $\rho N = \widetilde
    O\left( \frac{n^{4/5}}{k^{1/5} \epsilon^{4/5}} + \frac{\sqrt{n}}{\sqrt{k} \epsilon^2} \right)$.
\end{lemma}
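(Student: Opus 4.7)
The plan is to reduce the multiple-trace testing problem to the single-trace case by concatenating the $k$ independent traces of $x$ into one string and treating the result as a single trace from $x^{\circ k}$. Specifically, I would run the following tester: draw $k$ independent traces $\bm{T}_1, \dotsc, \bm{T}_k$ of $x$ with retention rate $\rho$, form $\bm{T} = \bm{T}_1 \circ \dotsm \circ \bm{T}_k$, and feed $\bm{T}$ to the single-trace tester from \cref{lemma:trace-testing-1-uniform-k} applied to strings in $\zo^{kN}$ with $kn$ blocks and edit-distance parameter $c\epsilon$, where $c$ is the universal constant from \cref{prop:concat-edit-dist}.

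The correctness of this reduction rests on two observations. First, because each bit of $x$ is deleted independently with probability $1-\rho$, the concatenated trace $\bm{T}$ is distributed exactly as a single trace from $x^{\circ k}$ with retention rate $\rho$: the bits from the $i$-th copy of $x$ inside $x^{\circ k}$ yield precisely $\bm{T}_i$. Second, since $n$ is even and $x$ is a type-1 $n$-block string, $x^{\circ k}$ is a type-1 $kn$-block string in $\zo^{kN}$, and the 1-uniform $n$-block string $u^{(1)}$ of length $N$ satisfies $(u^{(1)})^{\circ k} = u^{(1),k}$, the 1-uniform $kn$-block string of length $kN$ (each block still has length $N/n$). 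Thus the inputs to the single-trace tester are type-1 strings of the correct form.

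For completeness, if $x = u^{(1)}$, then $x^{\circ k}$ equals the 1-uniform $kn$-block string, so the single-trace tester accepts with probability $\ge 2/3$. For soundness, if $\dist_\reledit(x, u^{(1)}) > \epsilon$, then by \cref{prop:concat-edit-dist} we have $\dist_\reledit(x^{\circ k}, (u^{(1)})^{\circ k}) > c\epsilon$, so the single-trace tester (run with parameter $c\epsilon$) rejects with probability $\ge 2/3$. For the sample complexity, the single-trace tester of \cref{lemma:trace-testing-1-uniform-k} applied to length-$kN$ strings with $kn$ blocks and distance parameter $c\epsilon$ uses one trace with expected size
\[
  \rho \cdot kN = \widetilde O\!\left( \frac{(kn)^{4/5}}{\epsilon^{4/5}} + \frac{\sqrt{kn}}{\epsilon^2} \right),
\]
so dividing by $k$ yields the advertised per-trace expected size $\rho N = \widetilde O\!\left( \frac{n^{4/5}}{k^{1/5}\epsilon^{4/5}} + \frac{\sqrt{n}}{\sqrt{k}\epsilon^2} \right)$.

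The only genuine technical content is \cref{prop:concat-edit-dist}, which is already stated and available; everything else is bookkeeping about how concatenation interacts with the deletion channel, with the $n$-block structure, and with the parity-based string/distribution correspondence. The mild subtlety to verify carefully is that $n$ even ensures $x^{\circ k}$ stays a type-1 $kn$-block string (as opposed to having two adjacent equal blocks at the seam between copies of $x$), which is precisely where the hypothesis on $n$ is used.
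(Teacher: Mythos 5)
Your proposal is correct and follows essentially the same approach as the paper: concatenate the $k$ traces to simulate a single trace from $x^{\circ k}$, apply the single-trace tester of \cref{lemma:trace-testing-1-uniform-k} on length $kN$ with $kn$ blocks and distance parameter $c\epsilon$, and invoke \cref{prop:concat-edit-dist} for soundness. Your remark about where $n$ even is used is accurate (it prevents adjacent same-valued blocks from merging at the seam, which in particular is what makes $\left(u^{(1)}\right)^{\circ k}$ the 1-uniform $kn$-block string), and the sample-complexity bookkeeping matches the paper's.
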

\begin{proof}
    Let $u \define u^{(1)}$ for convenience of notation.
    Let $\Psi'_1$ contain only the 1-uniform $kn$-block string $u^{\circ k} \in \zo^{kN}$, and let
    $\Psi'_2$ be the set of type-1 $kn$-block strings in $\zo^{kN}$ that are $c\epsilon$-far from
    $\Psi'_1$ in relative edit distance, where $c$ is the constant from \cref{prop:concat-edit-dist}.

    By \cref{lemma:trace-testing-1-uniform-k} we obtain algorithm $A$, a
    $(\Psi'_1, \Psi'_2, \rho', 2/3)$-trace tester using one trace of expected size
    $\rho' (kN) = \widetilde O((kn)^{4/5}/\epsilon^{4/5} + \sqrt{kn}/\epsilon^2)$.
    Our algorithm $B$, which will be a $(\Psi_1, \Psi_2, \rho, 2/3)$-trace tester using $k$
    traces, works as follows:
    \begin{enumerate}
        \item Obtain $k$ independent traces $x_1, \dotsc, x_k$ of expected size
            $\rho N = \widetilde O\left(
            \frac{n^{4/5}}{k^{1/5} \epsilon^{4/5}} + \frac{\sqrt{n}}{\sqrt{k} \epsilon^2} \right)$.
        \item Return the output of $A$ on $x_1 \circ \dotsm \circ x_k$.
    \end{enumerate}

    Let $x \in \zo^N$ be the unknown input to $B$. If $x \in \Psi_1$, then $x = u$ and hence
    $x^{\circ k} \in \Psi'_1$.  On the other hand, if $x \in \Psi_2$, then
    $\dist_\reledit(x, u) > \epsilon$ and we use \cref{prop:concat-edit-dist} to conclude that
    $\dist_\reledit(x^{\circ k}, u^{\circ k}) > c\epsilon$, so
    $x^{\circ k} \in \Psi'_2$. Moreover, the input $\bm{x}_1 \circ \dotsm \circ \bm{x}_k$ to $A$
    is distributed as a trace from $x^{\circ k}$ of expected size $k (\rho N)$, \ie deletion rate
    $\rho$. Therefore $A$ will produce the correct output (and hence so will $B$) with probability
    at least $2/3$ as long as we satisfy
    \[
        k \rho N \ge \widetilde O\left(
            \frac{(kn)^{4/5}}{\epsilon^{4/5}} + \frac{\sqrt{kn}}{\epsilon^2} \right) \,,
    \]
    which holds when
    \[
        \rho N \ge \widetilde O\left(
            \frac{n^{4/5}}{k^{1/5} \epsilon^{4/5}} + \frac{\sqrt{n}}{\sqrt{k} \epsilon^2}
        \right) \,. \qedhere
    \]
\end{proof}

We now obtain the upper bound portion of \cref{thm:intro-trace-testing-uniform}.

\begin{theorem}
Let $N, n, k \in \bN$ be such that $n$ is even and divides $N$, and let $\epsilon > 0$.  Let
$\Psi_1$ contain only the uniform $n$-block strings in $\zo^N$, and let $\Psi_2$ be the set of
$n$-block strings that are $\epsilon$-far from $\Psi_1$ in (relative) edit distance.  Then there is
a $(\Psi_1, \Psi_2, \rho, 2/3)$-trace tester using $k$ traces of expected size $\rho N = \widetilde
O\left( \frac{n^{4/5}}{k^{1/5} \epsilon^{4/5}} + \frac{\sqrt{n}}{\sqrt{k} \epsilon^2} \right)$.
\end{theorem}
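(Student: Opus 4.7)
My plan is to mimic the argument used for the single-trace result \cref{thm:trace-testing-uniform-k}, which lifted \cref{lemma:trace-testing-1-uniform-k} (type-1 promise) to the promise-free setting using a negation-invariance observation. The multi-trace analogue \cref{lemma:multiple-trace-testing-1-uniform} already handles the type-1 case with $k$ traces and gives exactly the claimed trace-size bound, so the remaining work is to remove the type-1 promise.

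\textbf{First,} I would let $A$ denote the $k$-trace algorithm from \cref{lemma:multiple-trace-testing-1-uniform}. Recall that $A$ receives traces $\bm x_1,\dots,\bm x_k$, concatenates them into $\bm x_1 \circ \dots \circ \bm x_k$, and feeds this single string into the one-trace tester from \cref{lemma:trace-testing-1-uniform-k}. That one-trace tester in turn is the composition of \textsc{Poissonize} (\cref{prop:poissonize-trace}) with the distribution tester from \cref{thm:intro-main}; as noted in the proof of \cref{thm:trace-testing-uniform-k}, both of these subroutines depend on their input only through its sequence of run-lengths, hence are invariant under global negation of the symbols. Since negating all of $\bm x_1,\dots,\bm x_k$ negates their concatenation, $A(\bm x_1,\dots,\bm x_k)$ and $A(\overline{\bm x}_1,\dots,\overline{\bm x}_k)$ are identically distributed.

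\textbf{Second,} I would use this invariance to verify correctness on $\Psi_1 = \{u^{(1)}, u^{(0)}\}$ and $\Psi_2$. If $x = u^{(1)}$, then $x$ is a type-1 $n$-block string at distance $0$ from $\Psi_1$, so $A$ accepts with probability at least $2/3$ by \cref{lemma:multiple-trace-testing-1-uniform}. If $x = u^{(0)}$, then negating all symbols yields $\overline x = u^{(1)}$, and the traces from $\overline x$ are exactly the negations of traces from $x$; by the invariance just established, $A$ accepts $x$ with the same probability it accepts $\overline x$, namely at least $2/3$. For soundness, suppose $x \in \Psi_2$; then $x$ is an $n$-block string with $\dist_\reledit(x, u^{(1)}) > \epsilon$ and $\dist_\reledit(x, u^{(0)}) > \epsilon$. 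Since every $n$-block string is either type-1 or type-0 (\cref{remark:type-1-0-strings}), either $x$ is a type-1 string $\epsilon$-far from $u^{(1)}$, in which case \cref{lemma:multiple-trace-testing-1-uniform} gives rejection probability at least $2/3$, or $\overline x$ is a type-1 string satisfying $\dist_\reledit(\overline x, u^{(1)}) = \dist_\reledit(x, u^{(0)}) > \epsilon$, and \cref{lemma:multiple-trace-testing-1-uniform} applied to $\overline x$ combined with negation-invariance gives the same rejection guarantee on $x$.

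\textbf{Main obstacle.} There is essentially no obstacle: the technical work (the concatenation trick and its edit-distance cost from \cref{prop:concat-edit-dist}, the Poissonization reduction, and the invocation of \cref{thm:intro-main}) has already been done inside \cref{lemma:multiple-trace-testing-1-uniform}. The only subtlety worth double-checking is that negation-invariance genuinely survives the concatenation step and that the bound $\dist_\reledit(x, u^{(0)}) > \epsilon$ transfers to $\overline x$ and $u^{(1)}$; both are immediate from the symmetry of (relative) edit distance under simultaneous negation of both arguments. The sample-complexity bound $\rho N = \widetilde O(n^{4/5}/(k^{1/5}\epsilon^{4/5}) + \sqrt{n}/(\sqrt{k}\epsilon^2))$ is inherited verbatim from \cref{lemma:multiple-trace-testing-1-uniform}, since $A$ is literally the same algorithm.
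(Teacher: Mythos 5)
Your proposal is correct and follows essentially the same approach as the paper: both take the algorithm from \cref{lemma:multiple-trace-testing-1-uniform}, observe that it is invariant under negation of all input symbols (because the underlying distribution tester and \textsc{Poissonize} depend only on run lengths), and then use this symmetry together with \cref{remark:type-1-0-strings} to handle both uniform strings and both block-string types without changing the algorithm or its sample complexity.
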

\begin{proof}
    The argument is identical to the proof of \cref{thm:trace-testing-uniform-k}. Letting $A$
    be the algorithm from \cref{lemma:multiple-trace-testing-1-uniform}, we observe that
    the output $A(x)$ is distributed identically to $A(\overline x)$. Therefore $A$ is also a
    $(\Psi_1, \Psi_2, \rho, 2/3)$-trace tester using $k$ traces: if $x \in \Psi_1$, then either
    $x = u^{(1)}$ or $\overline x = u^{(1)}$, so $A$ accepts,
    and if $x \in \Psi_2$, then both $x$ and $\overline x$ are far from $u^{(1)}$ and moreover
    either $x$ or $\overline x$ is a type-1 $n$-block string, so $A$ rejects.
\end{proof}

\subsection{Lower Bounds}

We wish to reduce from distribution testing under the parity trace to testing properties of
strings in the trace reconstruction model. We define a ``trace splitting'' procedure,
which takes a parity trace from distribution $\pi$ and produces $k$ strings that look like
independent traces from string $\psi(\pi)$.

\begin{proposition}[Poissonized trace splitting]
    \label{prop:trace-splitting}
    There exists a randomized algorithm \textsc{Split} that satisfies the following.
    Let $N, k \in \bN$ and let $\rho > 0$ satisfy
    $\rho < \frac{1}{20 \cdot \sqrt{kN}}$. Let $\pi$ be any probability distribution over $\bN$
    whose densities are integer multiples of $1/N$. Then on inputs $N, k \text{ and } \rho$,
    \textsc{Split} draws a parity trace of size
    $\Poi\left(k \cdot \frac{\rho}{1-\rho} \cdot N\right)$ from $\pi$ and outputs
    a sequence of $k$ binary strings satisfying the following. Let $\bm{x}_1, \dotsc, \bm{x}_k$
    be the random variables denoting the output of \textsc{Split} (over the randomness of the
    parity trace and internal randomness of the algorithm), and let
    $\bm{y}_1, \dotsc, \bm{y}_k$ be such that each $\bm{y}_i$ is an independent trace from $\psi(\pi)$
    with expected size $\rho N$. Then
    \[
        \dist_\TV((\bm{x}_1, \dotsc, \bm{x}_k), (\bm{y}_1, \dotsc, \bm{y}_k)) < 1/100 \,.
    \]
\end{proposition}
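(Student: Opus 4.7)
The plan is to define \textsc{Split} as follows: draw a single parity trace $\bm{\mathcal{T}}$ of size $\Poi(M)$ from $\pi$, where $M = k \cdot \tfrac{\rho}{1-\rho} \cdot N$; then, for each character of $\bm{\mathcal{T}}$, independently and uniformly assign it to one of $k$ buckets; finally, output the $k$ binary strings $\bm{x}_1,\dotsc,\bm{x}_k$, where $\bm{x}_j$ is the subsequence of $\bm{\mathcal{T}}$ corresponding to bucket $j$. By the Poisson splitting property, the $\bm{x}_j$ are mutually independent, and each $\bm{x}_j$ is distributed as the parity trace of a $\Poi(M/k) = \Poi\bigl(\tfrac{\rho}{1-\rho} N\bigr)$-sized sample from $\pi$.

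The next step is to identify a common structural form for both distributions under comparison. Writing $\pi(i) = n_i/N$ with $n_i \in \bZ_{\ge 0}$, observe that the parity trace of any Poisson sample from $\pi$, when sorted, takes the form $\parity(1)^{C_1} \parity(2)^{C_2} \dotsm$ where $C_i$ is the number of samples of value $i$. By standard Poissonization, the $C_i$ are independent with $C_i \sim \Poi\bigl(n_i \tfrac{\rho}{1-\rho}\bigr)$. On the other hand, a deletion trace from $\psi(\pi) = 1^{n_1} 0^{n_2} 1^{n_3}\dotsm$ with retention rate $\rho$ is exactly $\parity(1)^{B_1} \parity(2)^{B_2} \dotsm$ where $B_i \sim \Bin(n_i, \rho)$ independently. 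Hence the TV distance between $\bm{x}_j$ and a true deletion trace $\bm{y}_j$ is bounded by the TV distance between the product distributions $\prod_i \Poi(n_i \rho/(1-\rho))$ and $\prod_i \Bin(n_i, \rho)$, which by sub-additivity is at most $\sum_i \dist_\TV\bigl(\Bin(n_i,\rho), \Poi\bigl(n_i\tfrac{\rho}{1-\rho}\bigr)\bigr)$.

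To bound each term, I would use the standard Bernoulli–Poisson coupling: let $\bm{Z}_1,\dotsc,\bm{Z}_{n_i}$ be iid $\Poi(-\log(1-\rho))$, so that $\mathbb{1}[\bm{Z}_\ell \ge 1] \sim \Ber(\rho)$ exactly, and $\sum_\ell \bm{Z}_\ell \sim \Poi(-n_i \log(1-\rho))$. A union bound on $\{\bm{Z}_\ell \ge 2\}$ gives $\dist_\TV(\Bin(n_i,\rho), \Poi(-n_i\log(1-\rho))) \le n_i \rho^2$, and a standard Poisson-to-Poisson coupling bound gives $\dist_\TV(\Poi(-n_i\log(1-\rho)), \Poi(n_i \rho/(1-\rho))) \le n_i \cdot |\rho/(1-\rho) + \log(1-\rho)| \le n_i \rho^2$ for $\rho \le 1/2$. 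Summing over blocks yields a per-trace TV bound of $O(N\rho^2)$, and summing over the $k$ independent components (using again that TV of products is at most the sum of marginal TVs) gives a total joint TV distance of at most $O(kN\rho^2)$.

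Finally, substituting the assumption $\rho < \tfrac{1}{20\sqrt{kN}}$ yields $kN\rho^2 < 1/400$, which delivers the required bound $<1/100$ with room to spare. I expect the main obstacle to be bookkeeping: correctly arguing that both output strings really do have the same ``parity-block'' structure $\parity(1)^{C_1}\parity(2)^{C_2}\dotsm$ (taking care of blocks with $n_i = 0$ and the absence of separators between adjacent same-parity notional blocks in $\psi(\pi)$), and tracking constants through the coupling so that the Poisson-approximation errors stay below $n_i\rho^2$ uniformly in $\rho \le 1/20$. Everything else is a standard Poissonization argument.
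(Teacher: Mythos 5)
Your proposal is correct and uses essentially the same approach as the paper: the algorithm (Poisson splitting of a $\Poi(k\lambda N)$-size parity trace), the reduction to comparing products of independent per-position integer counts, and the final per-position TV bound summed over all $kN$ coordinates are all the same. The only cosmetic difference is that you route the per-position comparison through the intermediate $\Poi(-\log(1-\rho))$ (Binomial-to-Poisson coupling, then Poisson-to-Poisson), whereas the paper compares $\Ber(\rho)$ to $\Poi(\rho/(1-\rho))$ in a single step by observing that the latter conditioned on $\{\le 1\}$ is exactly the former, giving $\dist_\TV = \Pr[\Poi(\lambda) \ge 2] \le \lambda^2$. One tiny slip: the bound $\left|\rho/(1-\rho) + \log(1-\rho)\right| \le \rho^2$ fails at $\rho = 1/2$ (the LHS is $\approx 0.307 > 0.25$), so the stated range should be something like $\rho \le 1/4$; since the hypothesis forces $\rho < 1/20$, this does not affect the result.
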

\begin{proof}
    Let $\lambda \define \rho/(1-\rho)$. The algorithm \textsc{Split} proceeds as follows:
    \begin{enumerate}
        \item Draw a parity trace $T$ of size $\Poi(k \lambda N)$ from $\pi$;
        \item Initialize empty strings $x_1, \dotsc, x_k$;
        \item For each symbol $b$ in $T$ from left to right, append $b$ to $x_i$ where
            $i$ is drawn uniformly at random from $[k]$;
        \item Return the strings $(x_1, \dotsc, x_k)$.
    \end{enumerate}

    For each $i \in [k]$ and $j \in [N]$, let $\bm{X}_{i,j} \sim \Poi(\lambda)$ independently.
    Define random variables $\bm{x}'_1, \dotsc, \bm{x}'_k$ as follows: for each
    $i \in [k]$,
    \[
        \bm{x}'_i
        = \psi(\pi)_1^{\bm{X}_{i,1}} \psi(\pi)_2^{\bm{X}_{i,2}} \dotsm \psi(\pi)_N^{\bm{X}_{i,N}} \,.
    \]
    We claim that $(\bm{x}_1, \dotsc, \bm{x}_k)$ is distributed identically to
    $(\bm{x}'_1, \dotsc, \bm{x}'_k)$. Indeed, first recall that the parity trace $\bm{T}$ from
    Step 1 is distributed as follows:
    \[
        \bm{T} = \psi(\pi)_1^{\bm{A}_1} \psi(\pi)_2^{\bm{A}_2} \dotsm \psi(\pi)_N^{\bm{A}_N} \,,
    \]
    where $\bm{A}_j \sim \Poi(k \lambda)$ independently for each $j \in [N]$.
    Then, Step 3 is equivalent to splitting each $\bm{A}_j$ into random variables
    $(\bm{A}_{1,j}, \dotsc, \bm{A}_{k,j}) \sim \Multinomial(\bm{A}_j, (1/k, \dotsc, 1/k))$,
    and producing each $\bm{x}_i$ by concatenation:
    \[
        \bm{x}_i
        = \psi(\pi)_1^{\bm{A}_{i,1}} \psi(\pi)_2^{\bm{A}_{i,2}} \dotsm \psi(\pi)_N^{\bm{A}_{i,N}} \,.
    \]
    It follows from standard arguments that $\bm{A}_{1,j}, \dotsc, \bm{A}_{k,j}$ are i.i.d.
    random variables distributed as $\bm{A}_{i,j} \sim \Poi(\lambda)$ for each $i \in [k]$.
    Therefore the $\bm{A}_{i,j}$ are distributed identically to the $\bm{X}_{i,j}$, and so
    $(\bm{x}_1, \dotsc, \bm{x}_k)$ is distributed identically to $(\bm{x}'_1, \dotsc, \bm{x'}_k)$.

    Now, for each $i \in [k]$ and $j \in [N]$, let $\bm{Y}_{i,j} \sim \Ber(\rho)$ independently.
    By definition of trace, we have
    \[
        \bm{y}_i
        = \psi(\pi)_1^{\bm{Y}_{i,1}} \psi(\pi)_2^{\bm{Y}_{i,2}} \dotsm \psi(\pi)_N^{\bm{Y}_{i,N}} \,.
    \]
    Therefore, we will be done if we can show that the $\bm{X}_{i,j}$ are sufficiently similar to
    the $\bm{Y}_{i,j}$.

    Concretely, fix some $i \in [k]$ and $j \in [N]$, and let $\bm{X} = \bm{X}_{i,j}$ and
    $\bm{Y} = \bm{Y}_{i,j}$ for convenience.
    We claim that $\dist_\TV(\bm{X}, \bm{Y}) \le \lambda^2$.
    Indeed, first, note that the distribution of $\bm{X}$ conditional on $\bm{X} \le 1$ is identical
    to that of $\bm{Y}$:
    \[
        \Pruc{}{\bm{X} = 1}{\bm{X} \le 1} = \frac{\Pr{\bm{X}=1}}{\Pr{\bm{X}=0} + \Pr{\bm{X}=1}}
        = \frac{\lambda e^{-\lambda}}{e^{-\lambda} + \lambda e^{-\lambda}}
        = \frac{\lambda}{1+\lambda}
        = \rho
        = \Pr{\bm{Y} = 1} \,,
    \]
    and therefore
    \begin{align*}
        \dist_\TV(\bm{X}, \bm{Y})
        &= \Pr{\bm{X} \ge 2}
        = 1 - \Pr{\bm{X}=0} - \Pr{\bm{X}=1}
        = 1 - e^{-\lambda} - \lambda e^{-\lambda}
        = 1 - e^{-\lambda} (1 + \lambda) \\
        &\le 1 - (1 - \lambda)(1 + \lambda)
        = \lambda^2 \,.
    \end{align*}

    Finally, since the $\bm{X}_{i,j}$ and $\bm{Y}_{i,j}$ are all mutually independent, we obtain
    \[
        \dist_\TV((\bm{X}_{i,j})_{i \in [k], j \in [N]}, (\bm{Y}_{i,j})_{i \in [k], j \in [N]})
        \le k N \cdot \lambda^2
        < k N (2\rho)^2
        < 1/100 \,,
    \]
    and thus $\dist_\TV((\bm{x}_1, \dotsc, \bm{x}_k), (\bm{y}_1, \dotsc, \bm{y}_k)) < 1/100$
    by the data processing inequality.
\end{proof}

We use this procedure to give our general lower bound for trace testing:

\begin{lemma}
    \label{lemma:trace-testing-general-lower-bound}
    Let $\alpha > 0$.
    Let $N, k \in \bN$, and let $\Pi_1, \Pi_2$ be properties of probability distributions over $\bN$
    whose densities are integer multiples of $1/N$, and such that
    $(\Pi_1, \Pi_2, \alpha)$-testing under the parity trace requires sample complexity at
    least $m$. Then any $(\Psi(\Pi_1), \Psi(\Pi_2), \rho, \alpha+1/100)$-trace tester using $k$
    traces of expected size $\rho N$ must satisfy
    $k \rho N = \Omega\left( \min\left( m, \sqrt{kN} \right) \right)$.
\end{lemma}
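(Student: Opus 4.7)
The plan is to reduce from distribution testing under the parity trace (where a lower bound of $m$ is known by hypothesis) to trace testing, using the \textsc{Split} procedure of \cref{prop:trace-splitting}. I split into two cases based on the size of $\rho$ relative to $1/\sqrt{kN}$.

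\textbf{Case 1.} If $\rho \geq \frac{1}{20\sqrt{kN}}$, then trivially
\[
    k \rho N \geq k \cdot \frac{N}{20\sqrt{kN}} = \frac{\sqrt{kN}}{20} = \Omega(\sqrt{kN}),
\]
so the conclusion holds with the $\sqrt{kN}$ branch of the minimum, regardless of $m$.

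\textbf{Case 2.} If $\rho < \frac{1}{20\sqrt{kN}}$, the conditions of \cref{prop:trace-splitting} are satisfied. Suppose, toward a contradiction, that there is a $(\psi(\Pi_1), \psi(\Pi_2), \rho, \alpha+1/100)$-trace tester $A$ using $k$ traces of expected size $\rho N$ but with $k \rho N = o(m)$. Build a $(\Pi_1, \Pi_2, \alpha)$-Poissonized distribution tester $B$ under the parity trace as follows: on input a parity trace of size $\Poi(k \cdot \tfrac{\rho}{1-\rho} \cdot N)$ from $\pi \in \Pi_1 \cup \Pi_2$, run \textsc{Split}$(N, k, \rho)$ to obtain strings $(\bm{x}_1, \dotsc, \bm{x}_k)$ and output $A(\bm{x}_1, \dotsc, \bm{x}_k)$. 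By \cref{prop:trace-splitting}, $(\bm{x}_1, \dotsc, \bm{x}_k)$ has total variation distance less than $1/100$ from a tuple of $k$ independent traces of $\psi(\pi)$ of expected size $\rho N$. The data processing inequality then implies that $B$ succeeds with probability at least $(\alpha + 1/100) - 1/100 = \alpha$ on inputs $\pi \in \Pi_1 \cup \Pi_2$ (observing that $\pi \in \Pi_1$ iff $\psi(\pi) \in \psi(\Pi_1)$, since the densities of $\pi$ are integer multiples of $1/N$).

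The sample complexity of $B$ is $\Poi(k \cdot \tfrac{\rho}{1-\rho} \cdot N)$; since $\rho \leq 1/20$ we have $\tfrac{\rho}{1-\rho} \leq 2\rho$, so $B$'s expected sample complexity is at most $2k\rho N = o(m)$. Standard Poissonization arguments (see \cref{section:poissonization} and references therein) then yield a non-Poissonized $(\Pi_1, \Pi_2, \alpha)$-distribution tester under the parity trace with sample complexity $o(m)$, contradicting the hypothesized lower bound. Therefore $k \rho N = \Omega(m)$ in Case 2, and combining with Case 1 yields $k \rho N = \Omega(\min(m, \sqrt{kN}))$.

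The only real subtlety is checking that the success probability slack of $1/100$ in the hypothesis on $A$ exactly absorbs the TV approximation error of \cref{prop:trace-splitting}, and that the Poissonization step passes through cleanly; both are routine. The elegance of the argument lies entirely in \cref{prop:trace-splitting}, which allows a single parity trace to be ``unfolded'' into $k$ nearly-independent deletion-channel traces provided $\rho$ is small enough relative to $1/\sqrt{kN}$, and the $\sqrt{kN}$ term in the bound is precisely the regime in which this simulation breaks down.
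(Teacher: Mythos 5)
Your proof is correct and takes essentially the same route as the paper's: the contrapositive reduction using \textsc{Split} to turn a parity trace of size $\Poi(k\cdot\frac{\rho}{1-\rho}\cdot N)$ into $k$ near-independent deletion-channel traces, with the TV slack of $1/100$ absorbing the approximation error. The only difference is presentational — you make the two-case split explicit and spell out the Poissonization conversion, whereas the paper handles the first implicitly (by assuming $k\rho N < \sqrt{kN}/20$ at the outset) and leaves the second unstated.
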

\begin{proof}
    Suppose $B$ is a $(\Psi(\Pi_1), \Psi(\Pi_2), \alpha+1/100)$-trace tester using $k$ traces
    of expected size $\rho N$, and suppose $k \rho N < \sqrt{kN}/20$. Then
    $\rho < \frac{1}{20\sqrt{kN}}$ and our goal is to show that $k \rho N = \Omega(m)$. We do
    so by constructing an algorithm $A$ in the parity trace model and showing that $A$ is a
    $(\Pi_1, \Pi_2, \alpha)$-tester under the parity trace with sample complexity $O(k \rho N)$.
    The algorithm works as follows:
    \begin{enumerate}
        \item Take a parity trace of size $\Poi\left( k \cdot \frac{\rho}{1-\rho} \cdot N \right)$.
        \item Apply \textsc{Split} to obtain $k$ binary strings $x_1, \dotsc, x_k$.
        \item Return the output of $B$ on inputs $x_1, \dotsc, x_k$.
    \end{enumerate}
    Note that $A$ has sample complexity $O(k \rho N)$. Let $\pi$ be the input distribution;
    recall that if $\pi \in \Pi_1$ then $\psi(\pi) \in \Psi(\Pi_1)$, and if $\pi \in \Pi_2$ then
    $\psi(\pi) \in \Psi(\Pi_2)$.
    Let $\bm{x}_1, \dotsc, \bm{x}_k$ be the inputs fed to $B$, and let
    $\bm{y}_1, \dotsc, \bm{y}_k$ be mutually independent traces from $\psi(\pi)$, each with
    expected size $\rho N$. By \cref{prop:trace-splitting},
    $\dist_\TV((\bm{x}_1, \dotsc, \bm{x}_k), (\bm{y}_1, \dotsc, \bm{y}_k)) < 1/100$, and we know
    that $B$ would succeed (\ie accept if $\psi(\pi) \in \Psi(\Pi_1)$, reject if
    $\psi(\pi) \in \Psi(\Pi_2)$) with probability at least $\alpha+1/100$ if it were given inputs
    $\bm{y}_1, \dotsc, \bm{y}_k$. Therefore $A$ succeeds with probability at least $\alpha$,
    and $k \rho N = \Omega(m)$.
\end{proof}

We now obtain the lower bounds stated in
\cref{thm:intro-trace-testing-uniform,thm:intro-trace-testing-support-n-informal}.

\begin{theorem}
    \label{thm:multiple-trace-lower-bound}
    There exists a universal constant $C > 0$ such that the following holds.
    Let $n, N \in \bN$ and $\rho, \epsilon > 0$ be such that $n$ is even and divides $N$,
    $\epsilon$ is smaller than some sufficiently small universal constant, and
    $N \ge C \cdot \max\left\{ (n/\epsilon)^{8/5}, n/\epsilon^4 \right\}$.
    Let $\Psi_1$ contain only the uniform $n$-block strings in $\zo^N$, and let $\Psi_2$ be the set
    of all $n$-block strings that are $\epsilon$-far from $\Psi_1$ in (relative) edit distance.
    Then any $(\Psi_1, \Psi_2, \rho, 2/3)$-trace tester using $k$ traces of expected size $\rho N$
    must satisfy $k \rho N = \widetilde \Omega((n/\epsilon)^{4/5} + \sqrt{n}/\epsilon^2)$.
\end{theorem}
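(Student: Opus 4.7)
The plan is to apply \cref{lemma:trace-testing-general-lower-bound}, which converts a distribution testing lower bound under the parity trace into a trace testing lower bound. I would take $\Pi_1$ to be the singleton containing the uniform distribution $\mu$ on $[n]$ (whose densities are multiples of $1/n$, hence of $1/N$ since $n \mid N$), and $\Pi_2$ to be the set of probability distributions on $[n]$ whose densities are integer multiples of $1/N$ and that are $2\epsilon$-far from $\mu$ in edit distance. Then $\Psi(\Pi_1) = \{\psi(\mu)\} = \{u^{(1)}\} \subseteq \Psi_1$. For $\Psi(\Pi_2) \subseteq \Psi_2$, observe that for any $\pi \in \Pi_2$, $\psi(\pi)$ is a type-1 $n$-block string and by \cref{lemma:relative-edit-distance},
\[
  \dist_\reledit(\psi(\pi), u^{(1)}) \ge \dist_\edit(\pi,\mu) > 2\epsilon \,.
\]
Since $u^{(0)}$ is obtained from $u^{(1)}$ by deleting the leading length-$N/n$ block and appending a matching block at the end, $\dist_\reledit(u^{(0)}, u^{(1)}) \le 2/n$, so the triangle inequality yields $\dist_\reledit(\psi(\pi), u^{(0)}) > 2\epsilon - 2/n > \epsilon$ for sufficiently small $\epsilon$, and hence $\psi(\pi) \in \Psi_2$.

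The second step is to lower bound the sample complexity of $(\Pi_1, \Pi_2, 51/100)$-testing under the parity trace. By \cref{thm: lower bound} (instantiated with the roles of ``$2n$'' and ``$n$'' scaled down by a factor of $2$, which absorbs into the hidden constants), uniformity testing on domain $[n]$ under the parity trace with respect to TV distance requires $m := \widetilde{\Omega}\!\left((n/\epsilon)^{4/5} + \sqrt{n}/\epsilon^2\right)$ samples. As noted in \cref{remark:intro-main-edit}, combining \cref{lemma:edit-to-tv-uniform} with the lower bound strengthens it to hold under edit distance (with the distance threshold rescaled by the universal constant $c$ from that lemma, which is harmless up to constants in $\epsilon$). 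This gives the lower bound $m$ for $(\Pi_1, \Pi_2)$-testing under the parity trace. A small technical point is that the hard instances in the proof of \cref{thm: lower bound} use densities of the form $(1 \pm \epsilon)/n$ and $1/n$, and I need these to be realizable as integer multiples of $1/N$; the hypothesis $N \ge C\max\{(n/\epsilon)^{8/5}, n/\epsilon^4\}$ ensures $\epsilon N/n \gg 1$, so the rescaling is painless.

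Finally, \cref{lemma:trace-testing-general-lower-bound} yields that any $(\Psi_1, \Psi_2, \rho, 52/100)$-trace tester using $k$ traces of expected size $\rho N$ satisfies $k\rho N = \Omega(\min(m, \sqrt{kN}))$. Since $m^2 = \widetilde{O}((n/\epsilon)^{8/5} + n/\epsilon^4)$, the hypothesis on $N$ gives $kN \ge N \ge m^2$ for every $k \ge 1$, so $\min(m,\sqrt{kN}) = m$, and we conclude $k\rho N = \widetilde{\Omega}\!\left((n/\epsilon)^{4/5} + \sqrt{n}/\epsilon^2\right)$. Any $(\Psi_1,\Psi_2,\rho,2/3)$-trace tester is a fortiori a $(\Psi_1,\Psi_2,\rho,52/100)$-trace tester, so the bound applies.

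There is no deep obstacle in this proof: all the substantive lower bound work is packaged inside \cref{thm: lower bound} and the reduction lemma. The main nuisance is bookkeeping---translating between TV and edit distances for distributions, between edit distance on distributions and relative edit distance on strings (via \cref{lemma:relative-edit-distance}), and verifying that the discretization of the hard instances (forcing densities to be integer multiples of $1/N$) does not destroy the construction. The hypothesis $N \ge C \max\{(n/\epsilon)^{8/5}, n/\epsilon^4\}$ is chosen precisely to handle both the discretization and to ensure that the $\sqrt{kN}$ truncation in \cref{lemma:trace-testing-general-lower-bound} is never the binding constraint.
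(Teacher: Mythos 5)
Your high-level plan matches the paper's: reduce from the parity-trace distribution-testing lower bound (\cref{thm: lower bound}) via \cref{lemma:trace-testing-general-lower-bound}, and use \cref{lemma:edit-to-tv-uniform} plus \cref{lemma:relative-edit-distance} to translate between metrics. There is, however, a genuine gap in your inclusion argument $\Psi(\Pi_2)\subseteq\Psi_2$, and the direction of your parenthetical ``for sufficiently small $\epsilon$'' is a symptom of it.

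You choose $\Pi_2$ to be \emph{all} distributions on $[n]$ with $1/N$-multiple densities that are $2\epsilon$-far in edit distance from $\mu$, and then argue $\dist_\reledit(\psi(\pi),u^{(0)}) > 2\epsilon - 2/n > \epsilon$ via the triangle inequality. But $2\epsilon - 2/n > \epsilon$ holds only when $\epsilon > 2/n$, i.e.\ when $\epsilon$ is \emph{large} relative to $1/n$; for small $\epsilon$ the triangle bound is useless. And indeed the inclusion genuinely fails for your broad $\Pi_2$: when $\epsilon \lesssim 1/n$ one can take $\pi$ with $\pi(1)=0$ so that $\psi(\pi)$ begins with a long block of 0s; such a $\pi$ can be within $O(1/n)$ of $\psi^{-1}(u^{(0)})$ in edit distance (hence $\psi(\pi)$ is close to $u^{(0)}$ and not in $\Psi_2$) while still being $\Omega(1/n) \ge 2\epsilon$-far from $\mu$ in edit distance (hence in your $\Pi_2$). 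So a trace tester for $(\Psi_1,\Psi_2)$ need not solve the distribution-testing task on your $\Pi_2$, and the reduction breaks.

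The paper avoids this by taking $\Pi_2$ to be only the hard distributions actually produced in the proof of \cref{thm: lower bound}, slightly rounded so densities are multiples of $1/N$; these have the crucial structural property that every density is at least $(1-O(\epsilon))/n \ge 1/2n$. This means $\psi(\pi)$ always starts with a block of 1s of length $\ge N/2n$, so its relative edit distance to $u^{(0)}$ (which starts with $N/n$ zeros) is $\ge 1/2n > \epsilon$ whenever $\epsilon < 4/n$; the paper then splits into the two cases $\epsilon^* \ge 4/n$ (your triangle-inequality argument works) and $\epsilon^* < 4/n$ (the structural argument works). You would need to similarly restrict $\Pi_2$ to the actual hard instances (which you implicitly invoke when you appeal to the lower bound anyway), make the density lower bound explicit, and run the two-case argument for the $u^{(0)}$ side. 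Your treatment of the $1/N$-rounding, while only sketched, points in the right direction: the condition $N \ge C\max\{(n/\epsilon)^{8/5}, n/\epsilon^4\}$ does make $\epsilon N/n$ large enough to round $\epsilon$ to a nearby $\epsilon'$ with multiples-of-$1/N$ densities, which is exactly what the paper does.
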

\begin{proof}
    Let $\Pi_1$ contain only the uniform distribution
    $\pi^* = \psi^{-1}(u^{(1)})$ on $[n]$. Let $\epsilon^* \define 8\epsilon$ for convenience.
    Using the value $\epsilon' \in (\epsilon^*, 2\epsilon^*]$ defined below, let
    $\Pi_2$ be the set of distributions over $[n]$ that 1) are $(\epsilon'/c)$-far from
    uniform in total variation distance (where $c$ is the constant from
    \cref{lemma:edit-to-tv-uniform}); and 2) have all densities in the set
    $\left\{ \frac{1}{n}, \frac{1-4\epsilon'/c}{n}, \frac{1+4\epsilon'/c}{n} \right\}$.
    Note that, by \cref{lemma:edit-to-tv-uniform}, every distribution in $\Pi_2$ is
    $\epsilon'$-far from uniform in \emph{edit} distance.

    We define $\epsilon'$ as the smallest $\epsilon' > \epsilon^*$ such that $\frac{4\epsilon'/c}{n}$
    is an integer multiple of $1/N$, and claim that $\epsilon' \le 2\epsilon^*$. Indeed, we have
    \[
        \frac{4\epsilon^*/c}{n} \ge \frac{1}{N} \iff N \ge \frac{cn}{4\epsilon^*} \,,
    \]
    which holds by assumption for sufficiently large $C$, so there exists an integer multiple of
    $1/N$ between $\frac{4\epsilon^*/c}{n}$ and $\frac{8\epsilon^*/c}{n}$.

    Since the proof of \cref{thm: lower bound} only uses distributions of the form of $\Pi_2$,
    it follows that $(\Pi_1, \Pi_2, 51/100)$-distribution testing under the parity trace requires
    $\widetilde\Omega\left( (n/\epsilon)^{4/5} + \sqrt{n}/\epsilon^2 \right)$ samples.
    Therefore, noticing that by assumption we have
    $\sqrt{kN} = \Omega\left( (n/\epsilon)^{4/5} + \sqrt{n}/\epsilon^2 \right)$,
    \cref{lemma:trace-testing-general-lower-bound} gives that any
    $(\Psi(\Pi_1), \Psi(\Pi_2), 52/100)$-trace tester using $k$ traces of expected size $\rho N$
    must satisfy $k \rho N = \widetilde \Omega((n/\epsilon)^{4/5} + \sqrt{n}/\epsilon^2)$.
    The result will follow if we show that $\Psi(\Pi_1) \subseteq \Psi_1$ and
    $\Psi(\Pi_2) \subseteq \Psi_2$.

    First, suppose $x \in \Psi(\Pi_1)$. Then $x = \psi(\pi^*) = \psi(\psi^{-1}(u^{(1)})) = u^{(1)}$
    and hence $x \in \Psi_1$.

    Second, suppose $x \in \Psi(\Pi_2)$, so $x = \psi(\pi)$ for some $\pi \in \Pi_2$
    We claim that $x \in \Psi_2$. It is clear that $x$ is an $n$-block string, so it remains to
    show that $\dist_\reledit(x, \Psi_1) > \epsilon$.
    Since $\pi \in \Pi_2$, we have $\dist_\TV(\pi, \pi^*) \ge \epsilon'/c$ and thus,
    as was observed, $\dist_\edit(\pi, \pi^*) \ge \epsilon' > \epsilon^*$.
    Then by \cref{lemma:relative-edit-distance},
    $\dist_\reledit(x, u^{(1)}) = \dist_\reledit(\psi(\pi), \psi(\pi^*))
    \ge \dist_\edit(\pi, \pi^*) > \epsilon^* > \epsilon$. We also need to show that
    $\dist_\reledit(x, u^{(0)}) > \epsilon$. We consider two cases.

    First, suppose $\epsilon^* \ge 4/n$. By the triangle inequality,
    $\dist_\reledit(\psi(\pi), u^{(1)})
    \le \dist_\reledit(\psi(\pi), u^{(0)}) + \dist_\reledit(u^{(0)}, u^{(1)})
    = \dist_\reledit(\psi(\pi), u^{(0)}) + 2/n$.
    Thus $\dist_\reledit(\psi(\pi), u^{(0)}) > \epsilon^* - 2/n > \epsilon$.

    Second, suppose $\epsilon^* < 4/n$, so $1/2n > \epsilon^*/8$. The first block of $\psi(\pi)$
    has length at least $N \cdot \frac{1 - 4\epsilon'/c}{n}$ by construction, and for sufficiently
    small $\epsilon$, this is at least
    $N \cdot \frac{1}{2n}$. Moreover this first block of $\psi(\pi)$ is a block of 1s, whereas
    the first block of $u^{(0)}$ is a block of 0s of length $N \cdot \frac{1}{n}$. Therefore
    $\dist_\reledit(\psi(\pi), u^{(0)}) \ge 1/2n > \epsilon^*/8 = \epsilon$, as desired.
    Therefore $x \in \Psi_2$, completing the proof.
\end{proof}

\begin{theorem}
    \label{thm:trace-testing-support-n-lower-bound}
    The following holds for all sufficiently small constant $\epsilon > 0$.
    There exists a function $N(n) = \Theta(n^2)$ such that, for all $n, k \in \bN$ and
    $N \ge N(n)$, the following is true. Let $\Psi$ be the set of $n$-block strings in $\zo^N$.
    Then any $(\Psi, \far^\reledit_\epsilon(\Psi), \rho, 2/3)$-trace tester using $k$ traces of
    expected size $\rho N$ must satisfy $k \rho N = \Omega(n / \log n)$.
\end{theorem}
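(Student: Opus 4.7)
The plan is to apply \cref{lemma:trace-testing-general-lower-bound} to a suitable distribution-testing hardness for support size, mirroring the structure of the proof of \cref{thm:multiple-trace-lower-bound}. The starting point is the \cite{VV11} construction underlying \cref{thm:supp-size-testing-lower-bound}: by tuning the Valiant--Valiant parameter to $2n/(1+\delta)$ for a sufficiently small constant $\delta$, I would obtain distributions $\pi^-, \pi^+$ on a finite domain whose minimum non-zero density is $\Omega(1/n)$, such that $\pi^-$ has support size at most $n$, $\pi^+$ has support size at least $3n/2$, and distinguishing them with probability $51/100$ requires $m = \Omega(n/\log n)$ samples in the standard distribution testing model. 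Since any parity-trace tester can be simulated by a standard tester that computes the parities of its samples, the same $\Omega(n/\log n)$ lower bound applies in the parity trace model.

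To feed this into \cref{lemma:trace-testing-general-lower-bound}, I would rationalize $\pi^\pm$ to $\tilde \pi^\pm$ whose entries are integer multiples of $1/N$, by rounding each non-zero entry to the nearest such multiple and redistributing the remaining $O(n/N)$ mass within the support. For $N \ge N(n) = \Theta(n^2)$, each rounding error is $O(1/n^2)$, so $\tilde \pi^-$ keeps support size $\le n$ (placing it in the property $\Pi_n$ of distributions with support $\le n$), $\tilde \pi^+$ still has at least $3n/2$ entries of mass $\Omega(1/n)$, and $\dist_\TV(\tilde \pi^\pm, \pi^\pm) = O(1/n) = o(1)$. The latter implies, by a standard coupling argument, that distinguishing $\tilde \pi^-, \tilde \pi^+$ with probability $52/100$ still requires $\Omega(n/\log n)$ samples for large enough $n$, since any tester using $o(n/\log n)$ samples would distinguish the original $\pi^-, \pi^+$ with probability at least $52/100 - o(1) > 51/100$. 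I would then argue $\dist_\edit(\tilde \pi^+, \Pi_n) = \Omega(1)$: reducing support from $3n/2$ to $n$ requires zeroing out at least $n/2$ entries of mass $\Omega(1/n)$ each, which costs a constant amount in edit distance (via adjust operations that must redistribute the removed mass), and the $o(1)$ distortion from rationalization does not affect this qualitatively. For a sufficiently small constant $\epsilon$, this places $\tilde \pi^+ \in \far^\edit_\epsilon(\Pi_n)$.

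Setting $\Pi_1 = \{\tilde \pi^-\}$ and $\Pi_2 = \{\tilde \pi^+\}$, I would verify $\Psi(\Pi_1) \subseteq \Psi$ (since $\psi(\tilde \pi^-)$ is an $n$-block string as $\tilde \pi^-$ has support $\le n$) and $\Psi(\Pi_2) \subseteq \far^\reledit_\epsilon(\Psi)$ via \cref{prop:dist-to-n-strings-distributions}, which gives $\dist_\reledit(\psi(\tilde \pi^+), \Psi) \ge \dist_\edit(\tilde \pi^+, \Pi_n) > \epsilon$. Any $(\Psi, \far^\reledit_\epsilon(\Psi), \rho, 2/3)$-trace tester is then in particular a $(\Psi(\Pi_1), \Psi(\Pi_2), \rho, 52/100)$-trace tester, and \cref{lemma:trace-testing-general-lower-bound} yields $k \rho N = \Omega(\min(m, \sqrt{kN})) = \Omega(n/\log n)$, using $\sqrt{kN} \ge \sqrt{N} = \Omega(n) \ge n/\log n$. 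The main obstacle is the rationalization step: we must simultaneously preserve the sample-complexity lower bound (controlling the loss in success probability from the coupling argument so that the $52/100$ threshold is still comfortably met after degrading from $51/100$) and the constant edit-distance separation from $\Pi_n$ after rounding. The choice $N = \Theta(n^2)$ is precisely what makes the per-entry rounding error $O(1/n^2)$ dominated by the minimum density $\Omega(1/n)$, keeping the total distortion $o(1)$ and closing both requirements.
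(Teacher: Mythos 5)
Your overall architecture (round to multiples of $1/N$, choose $N=\Theta(n^2)$ so the distortion is $o(1)$, feed the resulting parity-trace hardness into \cref{lemma:trace-testing-general-lower-bound}, and finish with $\sqrt{kN}=\Omega(n)$) matches the paper's. But the foundation has a genuine gap: you set $\Pi_1=\{\tilde\pi^-\}$ and $\Pi_2=\{\tilde\pi^+\}$ as \emph{singletons} and assert that distinguishing this fixed pair requires $\Omega(n/\log n)$ samples in the standard model. The \cite{VV11} lower bound does not say this. Their hardness is for the permutation-randomized ensembles (equivalently, for symmetric algorithms / the fingerprint statistic): for a fixed, canonical assignment of the probability values to domain elements, the two distributions typically have \emph{constant} total variation distance as distributions (they disagree by $\Theta(1/n)$ on $\Theta(n)$ elements), so $O(1)$ samples suffice to distinguish them with advantage $51\%$ in the standard model. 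Passing to the parity trace does not rescue the singleton version either, since nothing in the construction controls the parity structure of the two fixed distributions. To make your route work you would have to take $\Pi_1,\Pi_2$ to be the full (permutation-closed, suitably embedded) ensembles and then verify $\Psi(\Pi_1)\subseteq\Psi$ and $\Psi(\Pi_2)\subseteq\far^\reledit_\epsilon(\Psi)$ for \emph{every} member — which is essentially what the paper does, but indirectly: it takes $\Pi_1$ to be the whole property of support-$\le n$ distributions and $\Pi_2$ a restriction of $\far^\edit_{2\epsilon}(\Pi_1)$ to support $O(n)$, and imports the hardness from \cref{thm:testing-support-k} together with the \cite{BFH21} lower bound for testing $n$-alternating functions (itself a reduction from support-size distinction that correctly carries the ensemble structure).

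A secondary, smaller issue: your one-line claim that $\dist_\edit(\tilde\pi^+,\Pi_n)=\Omega(1)$ because "zeroing out $n/2$ entries of mass $\Omega(1/n)$ costs a constant" glosses over the free rearrange/insert/delete operations of \cref{def:fractional-string-edit-distance}; adjacent same-parity mass can be merged at no cost, so converting a TV-distance separation from the support-size property into an edit-distance separation requires an argument like \cref{lemma:distr-free-reduction-edit} (note the paper's doubling trick $\pi'(2i-1)=\pi'(2i)=\pi(i)/2$ there, which exists precisely to handle this). Your rounding and probability bookkeeping, and the final application of \cref{lemma:trace-testing-general-lower-bound}, are otherwise in order.
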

\begin{proof}
    Let $\Pi_1$ be the set of probability distributions over $\bN$ with support size
    at most $n$.  By \cref{thm:testing-support-k} and the lower bound on testing $n$-alternating
    functions from~\cite{BFH21}, any
    $(\Pi_1,\allowbreak \far^\edit_{2\epsilon}(\Pi_1),\allowbreak 51/100)$-distribution
    tester under the parity trace must have sample complexity $\Omega(n / \log n)$. In fact, a stronger
    statement holds: for some sufficiently large universal constant $C > 0$, let $\Pi_2$ be the
    restriction of $\far^\edit_{2\epsilon}(\Pi_1)$ to those distributions with support size at most
    $Cn$. Then any $(\Pi_1, \Pi_2, 51/100)$-distribution tester under the parity trace must have sample
    complexity $\Omega(n / \log n)$. This is because the lower bound on testing $n$-alternating
    functions from~\cite{BFH21} is proved via a reduction from the \emph{support size distinction}
    problem~\cite{VV11,WY19}, and the hard examples for that problem have support size linear in $n$.

    To apply \cref{lemma:trace-testing-general-lower-bound}, we need the all densities to be integer
    multiples of $1/N$. Let $\Pi'_1$ be a property obtained by taking each distribution
    $\pi \in \Pi_1$ and rounding all the densities of $\pi$ to a multiple of $1/N$ in such a way
    that we obtain another probability distribution $\pi'$.
    Let $\Pi'_2$ be a property obtained from $\Pi_2$ in the same way. Then every
    $\pi \in \Pi_1$ satisfies $\dist_\TV(\pi, \Pi'_1) \leq \tfrac{1}{N} \cdot O(n)$ and every
    $\pi \in \Pi_2$ satisfies $\dist_\TV(\pi, \Pi'_2) \leq \tfrac{1}{N} \cdot  O(n)$.

    We claim that any $(\Pi'_1, \Pi'_2, 60/100)$-distribution tester under the parity trace must
    have sample complexity $\Omega(n / \log n)$. Suppose $A$ is a
    $(\Pi'_1, \Pi'_2, 60/100)$-distribution tester under the parity trace with sample complexity
    $m \le n / \log n$. Then $A$ is a $(\Pi_1, \Pi_2, 55/100)$-distribution tester under the parity
    trace, as we now prove. For any input $\pi \in \Pi_1$, there exists $\pi' \in \Pi'_1$ such that
    $\dist_\TV(\pi, \pi') = O(n/N)$. Then the random variables
    $\bm{S} \sim \samp(\pi, m)$ and $\bm{S'} \sim \samp(\pi', m)$ satisfy
    $\dist_\TV(\bm{S}, \bm{S'}) \le m \cdot \dist_\TV(\pi, \pi') = O(\frac{n^2}{N \log n}) = o(1)$.
    Since $A$ accepts $\pi'$ with probability at least $60/100$,
    it accepts $\pi$ with probability at least $60/100 - o(1) \ge 55/100$.
    The same argument holds for $\Pi_2$ and
    $\Pi'_2$, and therefore $A$ is a $(\Pi_1, \Pi_2, 55/100)$-distribution tester under the
    parity trace. Hence $m = \Omega(n / \log n)$, proving the claim.

    Now, noticing that by assumption we have $\sqrt{Nk} = \Omega(n)$,
    \cref{lemma:trace-testing-general-lower-bound} gives that any
    $(\Psi(\Pi'_1),\allowbreak \Psi(\Pi'_2),\allowbreak 61/100)$-trace
    tester using $k$ traces of expected size
    $\rho N$ must satisfy $k \rho N = \Omega(n / \log n)$. The result will follow if we show
    that $\Psi(\Pi'_1) \subseteq \Psi$ and $\Psi(\Pi'_2) \subseteq \far^\reledit_\epsilon(\Psi)$.

    First, let $x \in \Psi(\Pi'_1)$, so $x = \psi(\pi'_1)$ for some $\pi'_1 \in \Pi'_1$.
    Since the process to obtain $\Pi'_1$ from $\Pi_1$ does not add any new elements to the support
    of the distributions, every $\pi'_1 \in \Pi'_1$ has support size at most $n$, and thus
    $x$ is an $n$-block string. Hence $\Psi(\Pi'_1) \subseteq \Psi$.

    Second, let $x \in \Psi(\Pi'_2)$, so $x = \psi(\pi'_2)$ for some $\pi'_2 \in \Pi'_2$.
    We need to show that $\dist_\reledit(x, \Psi) > \epsilon$. First, we claim that
    $\dist_\edit(\pi'_2, \Pi_1) > \epsilon$, \ie $\pi'_2$ is $\epsilon$-far in edit distance
    from any distribution (not necessarily rounded) with support size at most $n$.
    Suppose for a contradiction there exists $\pi_1 \in \Pi_1$ such that
    $\dist_\edit(\pi_1, \pi'_2) \le \epsilon$. Let $\pi_2 \in \Pi_2$ be some distribution
    satisfying $\dist_\TV(\pi_2, \pi'_2) = O(n/N)$, which exists by construction of $\Pi'_2$.
    Then $\dist_\edit(\pi_2, \pi'_2) = O(n/N)$ and, since $\epsilon$ is a constant, we
    may assume that $\dist_\edit(\pi_2, \pi'_2) < \epsilon/2$. Then
    \[
        \dist_\edit(\pi_1, \pi_2) \le \dist_\edit(\pi_1, \pi'_2) + \dist_\edit(\pi'_2, \pi_2)
        < \epsilon + \epsilon/2 < 2\epsilon \,,
    \]
    contradicting the definition of $\Pi_2$. Therefore $\dist_\edit(\pi'_2, \Pi_1) > \epsilon$.
    Now, let $y \in \Psi$. We claim that $\dist_\reledit(x, y) > \epsilon$.
    Since $y$ is an $n$-block string, we have $\psi^{-1}(y) \in \Pi_1$, and hence
    $\dist_\edit(\pi'_2, \psi^{-1}(y)) > \epsilon$. Then \cref{lemma:relative-edit-distance} gives
    that $\dist_\reledit(x, y) = \dist_\reledit(\psi(\pi'_2), \psi(\psi^{-1}(y)))
    \ge \dist_\edit(\pi'_2, \psi^{-1}(y)) > \epsilon$. Hence $\dist_\reledit(x, \Psi) > \epsilon$
    and therefore $\Psi(\Pi'_2) \subseteq \far^\reledit_\epsilon(\Psi)$, concluding the proof.
\end{proof}

\iftoggle{anonymous}{}{%
\begin{samepage}
\newpage
\begin{center}
\Large \bf Acknowledgments
\end{center}

We thank Eric Blais for helpful discussions and comments on the presentation of this article,
and Maryam Aliakbarpour for references on testing with imperfect information.
We thank anonymous reviewers for their comments and references to related work.
\end{samepage}
}

\bibliographystyle{alpha}
\bibliography{references.bib}

\appendix
\addtocontents{toc}{\protect\setcounter{tocdepth}{1}}

\section{Poissonization and Boosting}
\label{section:poissonization}

\subsubsection{Poissonization}

It is standard (see \eg \cite{VV11,VV17,WY19}) to analyze distribution testing algorithms in the
``Poissonized'' setting, where, instead of taking $m$ independent samples, the algorithm first
samples $\bm{m} \sim \Poi(m)$ and then takes $\bm{m}$ independent samples. We slightly abuse
notation and simply say that the tester takes $\Poi(m)$ samples.  The advantage of this technique is
that the number of times each domain element appears in the sample becomes independent.  Taking
$\Poi(m)$ independent samples from distribution $\pi$ over domain $\cX$ is equivalent to taking
$\Poi(m \pi(x))$ samples independently from each $x \in \cX$.  Since $\Poi(m)$ is tightly
concentrated around $m$, one can convert back and forth between the Poissonized and non-Poissonized
model while preserving upper and lower sample complexity bounds.  We briefly state the conversions
relevant to us, and refer the reader to \eg \cite[Appendix C]{Can22} and references therein for
details:

\begin{proposition}
    \label{prop:poissonization}
    We say that an algorithm is a \emph{Poissonized} $(\Pi_1, \Pi_2, \alpha)$-distribution tester
    under the parity trace with sample complexity $m$ if it satisfies the same conditions as
    \cref{def:testing-parity-trace}, except that it draws (the parity trace of) a sample
    of size $\Poi(m)$ instead of $m$.
    Then for all $\delta > 0$, the following hold:
    \begin{enumerate}
        \item If there is a (standard) $(\Pi_1, \Pi_2, 1-\delta/2)$-distribution tester under the
            parity trace with sample complexity $m$, then there is a Poissonized
            $(\Pi_1, \Pi_2, 1-\delta)$-distribution tester under the parity trace with sample
            complexity $\max\left\{2m, 12\log(4/\delta)\right\}$.
        \item If there is a Poissonized $(\Pi_1, \Pi_2, 1-\delta/2)$-distribution tester under the
            parity trace with sample complexity $m$, then there is a (standard)
            $(\Pi_1, \Pi_2, 1-\delta)$-distribution tester under the parity trace with sample
            complexity $\max\left\{\frac{3}{2} m, 18\log(4/\delta)\right\}$.
    \end{enumerate}
    We may similarly convert the confused collector model between the standard and Poissonized
    versions by adapting \cref{def:testing-confused-collector}, and note that the analogous results
    hold for that model.
    %We similarly define Poissonized $(\Pi_1, \Pi_2, \alpha)$-distribution testers with
    %resolution $\eta$ and sample complexity $m$ by adapting
    %\cref{def:testing-confused-collector} in the same way, and note that the analogous results hold
    %for these testers.
\end{proposition}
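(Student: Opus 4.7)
The plan is to use the standard Poissonization trick, relying on two facts: (i) a uniformly random sub-multiset of size $k$ drawn from $n$ i.i.d.\ samples from $\pi$ is itself distributed as $k$ i.i.d.\ samples from $\pi$; and (ii) in the parity trace model, if we have received a parity trace of length $n$ (the sorted parities of $n$ i.i.d.\ samples), we can simulate receiving the parity trace of a uniform random sub-sample of size $k\le n$ simply by keeping the parities at $k$ uniformly random positions out of $[n]$, preserving order. Both conversions boil down to checking that a chosen sample size is large enough with probability at least $1-\delta/2$, using the Poisson tail bound (\cref{fact:poisson-concentration}).

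For the first direction, given a standard $(\Pi_1,\Pi_2,1-\delta/2)$-tester $A$ with sample complexity $m$, I would construct a Poissonized tester $B$ with complexity $M \define \max\{2m, 12\log(4/\delta)\}$. On input the parity trace of $\bm{N}\sim \Poi(M)$ samples, $B$ checks whether $\bm N \ge m$; if so, it keeps $m$ parities at a uniformly random subset of positions (in order) to obtain a parity trace of an i.i.d.\ sample of size exactly $m$, and simulates $A$ on it; otherwise it outputs an arbitrary answer. The only additional failure probability is $\Pr[\Poi(M)<m]$, which by the Chernoff-type bound $\Pr[\Poi(M)\le M-t]\le \exp(-t^2/(2M))$ is at most $\delta/2$ in both the regime $M=2m \ge 12\log(4/\delta)$ (where the exponent is $m/4 \ge \tfrac{3}{2}\log(4/\delta)$) and the regime $M = 12\log(4/\delta) > 2m$ (where $M-m\ge 6\log(4/\delta)$ so the exponent is at least $\tfrac{3}{2}\log(4/\delta)$). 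A union bound with the $1-\delta/2$ success probability of $A$ gives overall success at least $1-\delta$.

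For the second direction, given a Poissonized $(\Pi_1,\Pi_2,1-\delta/2)$-tester $A$ with sample complexity $m$, I would construct a standard tester $B$ with complexity $M \define \max\{\tfrac{3}{2}m, 18\log(4/\delta)\}$. The tester $B$ receives the parity trace of $M$ i.i.d.\ samples, independently draws $\bm{M}'\sim \Poi(m)$, and if $\bm{M}'\le M$ it again keeps $\bm{M}'$ parities at uniformly random positions in order to obtain the parity trace of a $\Poi(m)$-sized i.i.d.\ sample, then simulates $A$. The failure mode $\bm{M}'>M$ has probability bounded by $\Pr[\Poi(m)\ge m+(M-m)]\le \exp(-(M-m)^2/(2M))$, which is at most $\delta/2$ in both regimes by the same kind of case analysis: if $M=\tfrac{3}{2}m\ge 18\log(4/\delta)$ the exponent is $m/12\ge \log(4/\delta)$, and if $M=18\log(4/\delta) > \tfrac{3}{2}m$ then $M-m\ge 6\log(4/\delta)$ and the exponent is at least $\log(4/\delta)$.

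The only real work is verifying the two failure-probability calculations above; there is no conceptual obstacle. The analogous statement for the confused collector follows by the same argument, noting that in that model the per-domain-element multiplicities $\bm T_j\sim \Poi(m\cdot p_j)$ are already independent and one can subsample a random sub-multiset of the observed sample in exactly the same way. Throughout, I take care to note that the subsampling is valid in the parity trace model because the parity trace is a deterministic function of the underlying (sorted) multiset, so deleting a uniformly random subset of positions from the received trace produces the correct parity trace of the corresponding sub-sample.
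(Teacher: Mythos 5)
Your proof is correct, and the arithmetic checks out. Worth noting up front: the paper does not actually prove this proposition --- it states it and defers to the literature (the survey cited as \cite{Can22} and references therein), so there is no internal proof to compare against; your argument is a self-contained reconstruction of the standard Poissonization reduction. The two conceptual points you identify --- that a uniformly random size-$k$ sub-multiset of $n$ i.i.d.\ draws is itself $k$ i.i.d.\ draws, and that in the parity-trace model one can realize this subsampling by retaining a uniformly random subset of positions of the received trace (because the trace is a deterministic, order-preserving function of the sorted multiset) --- are exactly the content that makes the reduction go through in this nonstandard access model, and you state both explicitly. One small remark on the tail bound: for the lower-tail step in the first direction you invoke $\Pr[\Poi(\lambda)\le\lambda-t]\le\exp(-t^2/(2\lambda))$, whereas the paper's \cref{fact:poisson-concentration} gives the weaker $\exp(-t^2/(2(\lambda+t)))$. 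Your version is a standard (and correct) sharpening, and in any case one can verify the computations also go through with the paper's weaker fact in both cases, so nothing hinges on this. The second direction uses the upper-tail bound, which matches the paper's Fact exactly.
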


\subsubsection{Boosting Success Probabilities}

In standard distribution testing, one can usually boost the probability of success of
an algorithm to any desired level by amplification: repeat the algorithm many times and take a
majority vote.

In the parity trace model, the algorithm receives only the trace of a single sample, so it cannot
simply repeat the test multiple times. Therefore we require a different technique for boosting the
success probability. By taking a larger original sample, the tester can perform ``sample splitting''
to produce a number of independent traces, which it can then test independently, as we describe
below.

Recall that, in the (Poissonized) parity trace model, when the tester draws $\Poi(m)$ samples from
$\pi$, it receives a trace
\[
    \bm{\cT} = 1^{\bm{A}_1} 0^{\bm{B}_1} \dotsc 1^{\bm{A}_n} 0^{\bm{B}_n} \,,
\]
where each $\bm{A}_i \sim \Poi(m p_i)$ and each $\bm{B}_i \sim \Poi(m q_i)$ are mutually
independent.

\begin{fact}
    Let $\lambda > 0$ and $k \in \bN$. Define random variables $\bm{X}_1, \dotsc, \bm{X}_k$ via the
    following probabilistic process:
    \begin{enumerate}
        \item Draw $X \gets \Poi(k \lambda)$;
        \item Draw $X_1, \dotsc, X_k \gets \Multinomial(X, (1/k, \dotsc, 1/k))$.
    \end{enumerate}
    Then $\bm{X}_1, \dotsc, \bm{X}_k$ are mutually independent random variables and
    $\bm{X}_i \sim \Poi(\lambda)$ for each $i \in [k]$.
\end{fact}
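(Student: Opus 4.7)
The plan is a direct computation of the joint probability mass function of $(\bm{X}_1, \dotsc, \bm{X}_k)$, showing that it factors as a product of $\Poi(\lambda)$ PMFs. This simultaneously establishes both conclusions of the Fact (mutual independence and the marginal distribution of each $\bm{X}_i$).

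First, I would fix arbitrary non-negative integers $x_1, \dotsc, x_k$ and let $s \define x_1 + \dotsm + x_k$. Conditioning on the value of the outer Poisson variable $\bm{X}$, I observe that $\bm{X}_1 + \dotsm + \bm{X}_k = \bm{X}$ with probability 1 because the multinomial in step 2 distributes exactly $\bm{X}$ items into the $k$ bins. So the event $\{\bm{X}_1 = x_1, \dotsc, \bm{X}_k = x_k\}$ is contained in $\{\bm{X} = s\}$, and I can write
\[
    \Pr{\bm{X}_1 = x_1, \dotsc, \bm{X}_k = x_k}
    = \Pr{\bm{X} = s} \cdot \Pruc{}{\bm{X}_1 = x_1, \dotsc, \bm{X}_k = x_k}{\bm{X} = s} \,.
\]

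Next, I would substitute the closed-form PMFs of the Poisson and multinomial distributions:
\[
    \Pr{\bm{X} = s} = \frac{(k\lambda)^s e^{-k\lambda}}{s!} \,,
    \qquad
    \Pruc{}{\bm{X}_1 = x_1, \dotsc, \bm{X}_k = x_k}{\bm{X} = s}
    = \frac{s!}{x_1! \dotsm x_k!} \left(\frac{1}{k}\right)^s \,.
\]
Multiplying these, the $s!$ cancels and the $k^s$ in the numerator of the first factor cancels with the $(1/k)^s$ in the second, yielding
\[
    \Pr{\bm{X}_1 = x_1, \dotsc, \bm{X}_k = x_k}
    = \frac{\lambda^s e^{-k\lambda}}{x_1! \dotsm x_k!}
    = \prod_{i=1}^k \frac{\lambda^{x_i} e^{-\lambda}}{x_i!} \,,
\]
where the last equality uses $\lambda^s = \prod_i \lambda^{x_i}$ and $e^{-k\lambda} = \prod_i e^{-\lambda}$.

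Finally, I would observe that the right-hand side is precisely the product of $k$ copies of the PMF of $\Poi(\lambda)$ evaluated at $x_1, \dotsc, x_k$. By the standard characterization of joint independence via factorization of the joint PMF, this simultaneously establishes that the $\bm{X}_i$ are mutually independent and that each $\bm{X}_i \sim \Poi(\lambda)$. There is no real obstacle here; this is the textbook Poisson splitting identity, and the only thing to be careful about is bookkeeping the cancellations in the combined PMF.
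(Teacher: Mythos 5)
Your proof is correct. The paper states this as a known Fact without supplying a proof (it is the standard Poisson splitting identity), and your direct computation of the joint PMF — factoring $\Pr{\bm{X}=s}\cdot\Pruc{}{\bm{X}_1=x_1,\dotsc,\bm{X}_k=x_k}{\bm{X}=s}$ and observing the $s!$ and $k^s$ cancellations to get $\prod_i \lambda^{x_i}e^{-\lambda}/x_i!$ — is the textbook argument and establishes both the marginals and mutual independence simultaneously.
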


Therefore, we may simulate $k$ parity traces of size $\Poi(m)$ by
\begin{enumerate}
\item Drawing a parity trace
$\bm{\cT}$ of size $\Poi(mk)$; and
\item Assigning each symbol in $\bm{\cT}$, from left to right, to
$\bm{\cT}_{\bm{j}}$ where $\bm{j}$ is chosen from $[k]$ independently uniformly at random.
\end{enumerate}

Then the fact above implies that the $j^{th}$ trace is distributed as
\[
    \bm{\cT}_j = 1^{\bm{A}_{j,1}} 0^{\bm{B}_{j,1}} \dotsc 1^{\bm{A}_{j,n}} 0^{\bm{B}_{j,n}}
\]
where $\bm{A}_{j,i} \sim \Poi(m p_i)$ and $\bm{B}_{j,i} \sim \Poi(m q_i)$ independently for all
$i$ and $j$, as desired.

As a consequence, the probability of success of a tester under the parity trace may be boosted
to any level $1-\delta$ by incurring a multiplicative factor of $\Theta(\log(1/\delta))$
in the sample complexity.

\begin{remark}
Probability boosting is not possible in the confused collector model, in the conventional sense,
because the algorithm does not have control over its resolution parameter $\eta$.
\end{remark}

\section{Missing Proofs from
\texorpdfstring{\cref{sec:upper-bound-confused-collector}}{Section~\ref{sec:upper-bound-confused-collector}}}

\begin{proposition}
    \label{prop:geometric-sum-expression}
    Let $\eta \in (0,1)$. Let $S_i$ denote the sum of the entries in the $i$-th column of
    $\phi^\pth$ for each $i \in \bZ_n$, and let $h \define (n-1)/2$.
    Then for any non-negative integer $N \le n/2$,
    \[
        - \sum_{i=0}^{\ceil{N/2}-1} S_i
            - \sum_{i=0}^{\floor{N/2}-1} S_{n-1-i}
            + \sum_{i=0}^{\ceil{N/2}-1} S_{\ceil{h}+i}
            + \sum_{i=0}^{\floor{N/2}-1} S_{\floor{h}-i}
        <
        \frac{2}{\eta^2} \,.
    \]
\end{proposition}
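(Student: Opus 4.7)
The first step is to obtain a closed-form expression for $S_j$. Since $\phi^\pth_{i,j} = \nu^{|i-j|}$ by \cref{prop:phi-pth}, splitting the sum at $i = j$ and applying the geometric series formula gives
\[
  S_j \;=\; \sum_{k=0}^{j} \nu^k \;+\; \sum_{k=1}^{n-1-j} \nu^k \;=\; \frac{(1+\nu) - \nu^{j+1} - \nu^{n-j}}{\eta}.
\]
From this formula the symmetry $S_j = S_{n-1-j}$ is immediate, since exchanging $j+1$ and $n-j$ leaves the numerator unchanged. An equally important symmetry holds around the midpoint: setting $j = \ceil{h}+i$ or $j = \floor{h}-i$ (with $h = (n-1)/2$) produces numerators that are permutations of the same two exponents, so $S_{\ceil{h}+i} = S_{\floor{h}-i}$ for every relevant $i$ (whether $n$ is even or odd).

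Using $S_{n-1-i} = S_i$, the second negative sum equals $\sum_{i=0}^{\floor{N/2}-1} S_i$, and using $S_{\floor{h}-i} = S_{\ceil{h}+i}$, the two positive sums both take the form $\sum_i S_{\ceil{h}+i}$. Writing $c_i := S_{\ceil{h}+i} - S_i$, the entire expression collapses to
\[
  E \;=\; \sum_{i=0}^{\ceil{N/2}-1} c_i \;+\; \sum_{i=0}^{\floor{N/2}-1} c_i.
\]

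The crux is now a clean upper bound on $c_i$. Substituting the closed form for $S_j$ and cancelling the $(1+\nu)$ terms, one finds (for both parities of $n$) that the two ``large'' exponent contributions $\nu^{n-i}$ and $\nu^{n/2-i}$ or $\nu^{h-i+1}$ enter with opposite sign from the ``small'' exponent contributions, and their combined effect is nonpositive whenever $i \le N/2 \le n/4$. Dropping those nonpositive terms yields
\[
  c_i \;\le\; \frac{\nu^{i+1}(1-\nu^{\lfloor n/2\rfloor})}{\eta} \;\le\; \frac{\nu^{i+1}}{\eta},
\]
which is the main technical step. (For odd $n$ this is cleanest via the factorization $c_i = \tfrac{(1-\nu^h)(\nu^{i+1}-\nu^{h-i+1})}{\eta}$; the even case uses an analogous factorization with $h$ replaced by $n/2$.)

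Summing the geometric bound, each of the two sums defining $E$ is at most
\[
  \frac{1}{\eta}\sum_{i=0}^{\infty} \nu^{i+1} \;=\; \frac{\nu}{\eta(1-\nu)} \;=\; \frac{\nu}{\eta^2},
\]
so $E \le \frac{2\nu}{\eta^2} < \frac{2}{\eta^2}$, as required. The main obstacle is verifying the cancellation that justifies the bound $c_i \le \nu^{i+1}/\eta$; beyond that, the proof is a bookkeeping exercise in combining the two symmetries $S_j = S_{n-1-j}$ and $S_{\ceil{h}+i} = S_{\floor{h}-i}$ with a single geometric series estimate.
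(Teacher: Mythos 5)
Your proof is correct, and it takes a genuinely cleaner route than the one in the paper. The paper's proof in \cref{prop:geometric-sum-expression} simply expands all four column sums as geometric series and grinds through a long chain of algebraic regroupings until the bound $2/\eta^2$ emerges. You instead exploit the two symmetries $S_j = S_{n-1-j}$ and $S_{\lceil h\rceil + i} = S_{\lfloor h\rfloor - i}$ (the second of which is just the first restated via $\lceil h\rceil + \lfloor h\rfloor = n-1$) to collapse the four sums into $\sum_{i < \lceil N/2\rceil} c_i + \sum_{i < \lfloor N/2\rfloor} c_i$ with $c_i = S_{\lceil h\rceil + i} - S_i$, and then produce the factorization
\[
c_i \;=\; \frac{\bigl(1-\nu^{\lfloor n/2\rfloor}\bigr)\bigl(\nu^{i+1}-\nu^{\lceil n/2\rceil - i}\bigr)}{\eta},
\]
valid for both parities of $n$. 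Dropping the nonnegative subtracted term and summing a single geometric series gives $E \le 2\nu/\eta^2 < 2/\eta^2$. This factorization-plus-symmetry argument is shorter, avoids the parity case-split almost entirely, and makes the $2/\eta^2$ bound transparent rather than emergent from a pile of cancellations — it is arguably a strict improvement on the paper's computation. (One minor note: your intermediate remark that the negative contributions are "nonpositive whenever $i \le N/2 \le n/4$" is slightly stronger than needed; the factorized form already gives $c_i \le \nu^{i+1}/\eta$ whenever $\lceil n/2\rceil - i \ge 0$, which is guaranteed by $i < N/2 \le n/4$.)
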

\begin{proof}
    Recall that $\phi^\pth_{i,j} = \nu^{\abs{i-j}}$ where $\nu = 1-\eta$.
    We express the column sums explicitly and reduce the geometric sums that emerge:
    \begin{align*}
        &\left[
            - \sum_{i=0}^{\ceil{N/2}-1} S_i
            - \sum_{i=0}^{\floor{N/2}-1} S_{n-1-i}
            + \sum_{i=0}^{\ceil{N/2}-1} S_{\ceil{h}+i}
            + \sum_{i=0}^{\floor{N/2}-1} S_{\floor{h}-i}
            \right] \\
        &\quad= \left[
            \begin{array}{l}
                - \sum_{i=0}^{\ceil{N/2}-1} \sum_{j=0}^{n-1} \nu^{\abs{i-j}}
                - \sum_{i=0}^{\floor{N/2}-1} \sum_{j=0}^{n-1} \nu^{\abs{n-1-i-j}} \\
                + \sum_{i=0}^{\ceil{N/2}-1} \sum_{j=0}^{n-1} \nu^{\abs{\ceil{h}+i-j}}
                + \sum_{i=0}^{\floor{N/2}-1} \sum_{j=0}^{n-1} \nu^{\abs{\floor{h}-i-j}}
            \end{array}
            \right] \\
        &\quad= \left[
            \begin{array}{l}
                - \sum_{i=0}^{\ceil{N/2}-1} \left(
                    \sum_{j=0}^i \nu^{i-j}
                    + \sum_{j=i+1}^{n-1} \nu^{j-i} \right)
                \\
                - \sum_{i=0}^{\floor{N/2}-1} \left(
                    \sum_{j=0}^{n-1-i} \nu^{n-1-i-j}
                    + \sum_{j=n-i}^{n-1} \nu^{j+i+1-n} \right)
                \\
                + \sum_{i=0}^{\ceil{N/2}-1} \left(
                    \sum_{j=0}^{\ceil{h}+i} \nu^{\ceil{h}+i-j}
                    + \sum_{j=\ceil{h}+i+1}^{n-1} \nu^{j-i-\ceil{h}} \right)
                \\
                + \sum_{i=0}^{\floor{N/2}-1} \left(
                    \sum_{j=0}^{\floor{h}-i} \nu^{\floor{h}-i-j}
                    + \sum_{j=\floor{h}-i+1}^{n-1} \nu^{j+i-\floor{h}} \right)
            \end{array}
            \right] \\
        &\quad= \left[
            \begin{array}{l}
                - \sum_{i=0}^{\ceil{N/2}-1} \left(
                    \frac{\nu^{i} - \nu^{-1}}{1 - \nu^{-1}}
                    + \frac{\nu - \nu^{n-i}}{1 - \nu} \right)
                - \sum_{i=0}^{\floor{N/2}-1} \left(
                    \frac{\nu^{n-1-i} - \nu^{-1}}{1 - \nu^{-1}}
                    + \frac{\nu - \nu^{i+1}}{1 - \nu} \right)
                \\
                + \sum_{i=0}^{\ceil{N/2}-1} \left(
                    \frac{\nu^{\ceil{h}+i} - \nu^{-1}}{1 - \nu^{-1}}
                    + \frac{\nu - \nu^{n-i-\ceil{h}}}{1 - \nu} \right)
                + \sum_{i=0}^{\floor{N/2}-1} \left(
                    \frac{\nu^{\floor{h}-i} - \nu^{-1}}{1 - \nu^{-1}}
                    + \frac{\nu - \nu^{n+i-\floor{h}}}{1 - \nu} \right)
            \end{array}
            \right] \\
        &\quad= \frac{1}{\eta} \left[
            \begin{array}{l}
                - \sum_{i=0}^{\ceil{N/2}-1} \left(
                    1 - \nu^{i+1}
                    + \nu - \nu^{n-i} \right)
                - \sum_{i=0}^{\floor{N/2}-1} \left(
                    1 - \nu^{n-i}
                    + \nu - \nu^{i+1} \right)
                \\
                + \sum_{i=0}^{\ceil{N/2}-1} \left(
                    1 - \nu^{\ceil{h}+1+i}
                    + \nu - \nu^{n-i-\ceil{h}} \right)
                + \sum_{i=0}^{\floor{N/2}-1} \left(
                    1 - \nu^{\floor{h}+1-i}
                    + \nu - \nu^{n+i-\floor{h}} \right)
            \end{array}
            \right] \\
        &\quad= \frac{1}{\eta} \left[
            \begin{array}{l}
                \sum_{i=0}^{\ceil{N/2}-1} \nu^{i+1}
                + \sum_{i=0}^{\ceil{N/2}-1} \nu^{n-i}
                + \sum_{i=0}^{\floor{N/2}-1} \nu^{n-i}
                + \sum_{i=0}^{\floor{N/2}-1} \nu^{i+1}
                \\
                - \sum_{i=0}^{\ceil{N/2}-1} \nu^{\ceil{h}+1+i}
                - \sum_{i=0}^{\ceil{N/2}-1} \nu^{n-i-\ceil{h}}
                - \sum_{i=0}^{\floor{N/2}-1} \nu^{\floor{h}+1-i}
                - \sum_{i=0}^{\floor{N/2}-1} \nu^{n+i-\floor{h}}
            \end{array}
            \right] \\
        &\quad= \frac{1}{\eta} \left[
            \begin{array}{l}
                \frac{\nu - \nu^{\ceil{N/2}+1}}{1-\nu}
                + \frac{\nu^{n} - \nu^{n-\ceil{N/2}}}{1 - \nu^{-1}}
                + \frac{\nu^{n} - \nu^{n-\floor{N/2}}}{1 - \nu^{-1}}
                + \frac{\nu - \nu^{\floor{N/2}+1}}{1 - \nu}
                \\
                - \frac{\nu^{\ceil{h}+1} - \nu^{\ceil{h}+1+\ceil{N/2}}}{1 - \nu}
                - \frac{\nu^{n-\ceil{h}} - \nu^{n-\ceil{h}-\ceil{N/2}}}{1 - \nu^{-1}}
                - \frac{\nu^{\floor{h}+1} - \nu^{\floor{h}+1-\floor{N/2}}}{1 - \nu^{-1}}
                - \frac{\nu^{n-\floor{h}} - \nu^{n+\floor{N/2}-\floor{h}}}{1 - \nu}
            \end{array}
            \right] \\
        &\quad= \frac{1}{\eta^2} \left[
            \begin{array}{l}
                \nu - \nu^{\ceil{N/2}+1}
                + \nu^{n+1-\ceil{N/2}} - \nu^{n+1}
                \\
                + \nu^{n+1-\floor{N/2}} - \nu^{n+1}
                + \nu - \nu^{\floor{N/2}+1}
                \\
                - \nu^{\ceil{h}+1} + \nu^{\ceil{h}+1+\ceil{N/2}}
                - \nu^{n+1-\ceil{h}-\ceil{N/2}} + \nu^{n+1-\ceil{h}}
                \\
                - \nu^{\floor{h}+2-\floor{N/2}} + \nu^{\floor{h}+2}
                - \nu^{n-\floor{h}} + \nu^{n+\floor{N/2}-\floor{h}}
            \end{array}
            \right] \\
        &\quad= \frac{1}{\eta^2} \left[
            \begin{array}{l}
                (\nu + \nu)
                + (\nu^{n+1-\ceil{N/2}} - \nu^{\ceil{N/2}+1})
                + (\nu^{n+1-\floor{N/2}} - \nu^{\floor{N/2}+1})
                - (\nu^{n+1} + \nu^{n+1})
                \\
                + (\nu^{\ceil{h}+1+\ceil{N/2}} - \nu^{\ceil{h}+1})
                + (\nu^{n+1-\ceil{h}} - \nu^{n+1-\ceil{h}-\ceil{N/2}})
                \\
                + (\nu^{\floor{h}+2} - \nu^{\floor{h}+2-\floor{N/2}})
                + (\nu^{n+\floor{N/2}-\floor{h}} - \nu^{n-\floor{h}})
            \end{array}
            \right] \\
        &\quad< \frac{2}{\eta^2} \,,
    \end{align*}
    where we used the facts that $0 < \nu < 1$ and, in the last step, that $2\ceil{N/2} \le n$
    (which holds because $2N \le n$ by assumption).
\end{proof}

\section{Missing Proofs from
\texorpdfstring{\cref{sec:general-upper-bound}}{Section~\ref{sec:general-upper-bound}}}

\subsection{Testing Uniformity: the Small $\epsilon$ Case}
\label{section:parity-trace-small-eps}

\newcommand{\psample}{\mathcal{S}}

Here we prove \cref{lemma:upper-bound-linear-trace-small-epsilon}.
The standard testing algorithms for uniformity make their decision based upon only the
\emph{histogram} of the samples, which is the tuple $(X_1, \dotsc, X_{2n})$ where $X_i$ is the
number of times element $i \in [2n]$ appears in the sample. For a sample $S$, we will write $H(S)$
for the histogram. The well-known uniformity testing result can be stated as follows:

\begin{theorem}[\cite{VV17,DGPP19}]
\label{thm:standard-histogram-tester}
There is a constant $C > 0$ and an algorithm \textsc{UniformityHistogramTester} (abbreviated as
\textsc{UHT}) such that, for any $n \in \bN$ and $\epsilon > 0$ and $m \geq C \cdot \frac{ \sqrt n
}{ \epsilon^2 }$:
\begin{enumerate}
\item If $\pi = \pi(\mu,\mu)$, $\Pr{ \textsc{UHT}( H(S) ) \text{ accepts } } \geq 3/4$; and,
\item If $\pi = \pi(p,q)$ is $\epsilon$-far from uniform, $\Pr{ \textsc{UHT}( H(S) ) \text{ rejects
}} \geq 3/4$.
\end{enumerate}
\end{theorem}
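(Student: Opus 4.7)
The plan is to employ the classical collision-counting estimator of Goldreich--Ron, sharpened to optimal $\epsilon$-dependence using the modified chi-squared statistic of Valiant--Valiant and Diakonikolas--Gouleakis--Peebles--Price. First I would instantiate \textsc{UHT} with the statistic
\[
    Z \;=\; \sum_{i=1}^{2n} \frac{(X_i - m/(2n))^2 - X_i}{m/(2n)},
\]
and compare it against a threshold of order $m\epsilon^2$. To see that this separates the two cases, note that under Poissonization (with $X_i \sim \Poi(m\pi_i)$ independently), one computes $\bE[Z] = \frac{2n}{m}\sum_i (m\pi_i - m/(2n))^2 = 2mn \cdot \|\pi - \mu'\|_2^2$, where $\mu'$ denotes the full uniform distribution on $[2n]$. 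In the completeness case this is $0$, and in the soundness case Cauchy--Schwarz gives $\|\pi - \mu'\|_2^2 \geq \|\pi - \mu'\|_1^2/(2n) \geq 2\epsilon^2/n$, producing a gap of $\Omega(m\epsilon^2)$.

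Next I would bound the variance of $Z$. Using the fact that the $X_i$ are independent Poissons and the identities $\bE[(X_i - \lambda)^2 - X_i] = 0$ for $\lambda = \bE[X_i]$, together with the fourth-moment formula for Poissons, one obtains (after collecting terms)
\[
    \Var{Z} \;\leq\; O\!\left(n + m \cdot \|\pi - \mu'\|_2^2 \cdot n\right).
\]
By Chebyshev's inequality, it suffices that $\Var{Z} \leq c \cdot (m\epsilon^2)^2 / n$ for a small constant $c$, which rearranges to $m \geq C \cdot \sqrt{n}/\epsilon^2$ plus a lower-order contribution that is absorbed in the completeness case and dominated by the separation in the soundness case. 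Depoissonization preserves the bounds up to constants, as Poisson random variables are tightly concentrated around their mean.

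The main obstacle is the sharp variance bound. The naive collision statistic $Y = m^{-2}\sum X_i(X_i-1)$ yields only $O(\sqrt{n}/\epsilon^4)$ because its variance scales with $\|\pi\|_3^3$, which is large when the mass is concentrated and does not match the separation gap. The chi-squared normalization above subtracts a Poisson-variance correction, so the cross terms between the ``true'' signal and the noise cancel at first order, giving a variance that scales with the actual squared distance to uniformity rather than with the $3$-norm of $\pi$. Carrying out this cancellation carefully is the key step; alternately, one can follow DGPP by avoiding Poissonization and directly controlling the non-Poissonized collision statistic, which introduces negative dependence that aids variance bounds. Either route yields the claimed $m = \Theta(\sqrt{n}/\epsilon^2)$ sample complexity.
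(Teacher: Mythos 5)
This theorem is a citation: the paper attributes it to \cite{VV17,DGPP19} and does not prove it, so there is no internal proof to compare against. Your sketch follows the standard chi-squared route from the cited works, which is the right high-level approach, but two concrete gaps are worth naming.

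The claimed variance bound $\Var{Z} \le O(n + nm\|\pi - \mu'\|_2^2)$ does not fall out of the fourth-moment formula as cleanly as stated. Writing $a = m/(2n)$ and $d_i = m\pi_i - a$, one finds for $X_i \sim \Poi(m\pi_i)$ that
\[
    \Var{(X_i - a)^2 - X_i} \;=\; 2a^2 + 4a d_i + 4a d_i^2 + 4 d_i^3 + 2 d_i^2 ,
\]
so the leading $a^3$ terms cancel but a $d_i^3$ term survives; after rescaling by $(2n/m)^2$ and summing, it contributes a term of order $n^2 m \sum_i (\pi_i - 1/(2n))^3$ to $\Var{Z}$. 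This is not absorbed by $O(n + nm\|\pi - \mu'\|_2^2)$ without also controlling $\|\pi\|_\infty$ (or the third moment of the deviation), so the chi-squared correction does not, on its own, eliminate the dependence on higher moments of $\pi$. This is exactly where the cited proofs do real work, either by flattening or pre-testing to force $\|\pi\|_\infty = O(1/n)$, or (as in DGPP) by analyzing the non-Poissonized collision count and exploiting negative dependence. Separately, your Chebyshev condition $\Var{Z} \le c(m\epsilon^2)^2/n$ has a stray $1/n$: it should read $\Var{Z} \le c(m\epsilon^2)^2$, since the mean gap is $\Omega(m\epsilon^2)$. With the extra $1/n$, the completeness case $\Var{Z} = \Theta(n)$ would force $m = \Omega(n/\epsilon^2)$ rather than the intended $\sqrt{n}/\epsilon^2$.
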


\begin{claim}
    \label{claim:small-eps-big-sample}
    Let $\pi = \pi(\mu,\mu)$ be the uniform distribution over $[2n]$, and let $m \geq 2 \cdot n \log(100
    n)$. Let $\bm{S} \sim \samp(\pi, m)$. Then
    \[
        \Pr{ \exists i \in [2n] : i \notin \bm{S} } < 1/50 \,.
    \]
\end{claim}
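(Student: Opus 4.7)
The plan is to apply a straightforward coupon-collector style argument via the union bound. For each fixed element $i \in [2n]$, the probability that $i$ fails to appear in a single sample drawn from $\pi(\mu,\mu)$ is exactly $1 - \tfrac{1}{2n}$, so since the $m$ samples are independent, the probability that $i$ is missing from $\bm{S}$ is $\left(1 - \tfrac{1}{2n}\right)^m$.

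Next, I would use the standard inequality $1-x \leq e^{-x}$ to bound this by $e^{-m/(2n)}$, and then plug in the assumption $m \geq 2n\log(100n)$ to obtain a per-element bound of $e^{-\log(100n)} = \tfrac{1}{100n}$. A union bound over the $2n$ elements of the domain then yields
\[
\Pr{\exists i \in [2n] : i \notin \bm{S}} \leq 2n \cdot \frac{1}{100n} = \frac{1}{50} \,,
\]
which is the desired conclusion.

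There is no real obstacle here; this is a one-line coupon-collector calculation, included as a preparatory lemma for the subsequent argument that reduces the small-$\epsilon$ parity-trace tester to the standard uniformity histogram tester of \cref{thm:standard-histogram-tester}. The only thing to be mindful of is using the strict inequality claimed in the statement ($< 1/50$), which follows from the strict inequality $1 - x < e^{-x}$ for $x \in (0,1)$.
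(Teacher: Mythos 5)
Your proof is correct and is essentially identical to the paper's: the paper also bounds $\Pr{i \notin \bm{S}} = (1 - \tfrac{1}{2n})^m < e^{-m/(2n)} \le e^{-\log(100n)} = \tfrac{1}{100n}$ and then applies a union bound over the $2n$ elements. Your remark that the strict inequality $1-x < e^{-x}$ is what gives the strict conclusion $< 1/50$ is also exactly what makes the paper's chain of inequalities strict.
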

\begin{proof}
    By the union bound, using the fact that for any $i \in [2n]$,
    $ \Pr{ i \notin \bm{S} } = \left(1 - \frac{1}{2n}\right)^m < e^{-\frac{m}{2n}} = e^{-\log(100 n)}
    = \frac{1}{100 n} $.
\end{proof}

\begin{algorithm}[H]
    \caption{Uniformity tester for the case when $\epsilon < \frac{K \log^{3}}{n^{1/4}}$.}
    \label{alg:general-small-eps}

    \hspace*{\algorithmicindent}
    Set $m \gets \Theta\left(\frac{\sqrt n}{\epsilon^2} \log^7 n \right)$. \\
    \hspace*{\algorithmicindent} \textbf{Constants:}
        $K = K_{\alpha,\beta,\gamma} > 1$ as in \cref{alg:uniformity-tester-linear-trace}.\\
    \hspace*{\algorithmicindent}
    \textbf{Input:} For $\pi = \pi(p,q)$ on domain $[n]$,
        receive $\trace(S)$ for sample $S \gets \samp(\pi, m)$ \\
    \hspace*{\algorithmicindent} \textbf{Requires:} $\epsilon < \frac{K\log^{3} n}{n^{1/4}}$. \\

    \begin{algorithmic}[1]
        \Procedure{UniformityTesterSmall}{$p, q, n, \epsilon$}
            \State Let $X_1, \dotsc, X_n$ be the run-lengths of 1s in the trace, as defined in
\cref{section:parity-trace}.
            \State Let $X'_1, \dotsc, X'_n$ be the run-lengths of 0s in the trace, as in
\cref{section:parity-trace}.
            \If {$\exists i \in [n]$ such that $X_i = 0$ or $X'_i = 0$}
              \State \textbf{Reject}
            \Else
              \State \textbf{Output}
                    $\textsc{UniformityHistogramTester}(X_1, X'_1, \dotsc, X_{n}, X'_n)$
            \EndIf
        \EndProcedure
    \end{algorithmic}
\end{algorithm}

We will need the following fact about the total variation distance.

\begin{fact}
    \label{fact:tv-distance-conditional}
    Let $\cD$ be a probability distribution and $E$ an event in the same probability space.
    Denote by $\cD_{|E}$ the probability distribution of a random variable distributed by $\cD$
    conditional on $E$. Then
    \[
        \dist_\TV(\cD, \cD_{|E}) \le \Pr{\neg E} \,.
    \]
\end{fact}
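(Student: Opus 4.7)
The plan is to compute $\dist_\TV(\cD, \cD_{|E})$ directly from the definition by splitting the underlying sample space into the event $E$ and its complement $\neg E$. This is a standard computation, so the main ``obstacle'' is just being careful with signs and cases; there is no conceptual hurdle.

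First I would recall that for any $\omega$ in the sample space, $\cD_{|E}(\omega) = \cD(\omega)/\Pr{E}$ when $\omega \in E$ and $\cD_{|E}(\omega) = 0$ otherwise (assuming $\Pr{E} > 0$; the case $\Pr{E}=0$ is vacuous since then the conditional distribution is undefined and $\Pr{\neg E} = 1$ trivially upper bounds TV distance). Then I would use the $\ell^1$ formulation
\[
\dist_\TV(\cD, \cD_{|E}) = \frac{1}{2} \sum_{\omega} \abs*{\cD(\omega) - \cD_{|E}(\omega)}
\]
and split the sum into the two cases. On $\neg E$ the summand is simply $\cD(\omega)$, contributing $\tfrac{1}{2}\Pr{\neg E}$. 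On $E$ the summand is $\cD(\omega)\abs*{1 - 1/\Pr{E}} = \cD(\omega) \cdot \tfrac{\Pr{\neg E}}{\Pr{E}}$ (since $\Pr{E} \le 1$), and summing over $\omega \in E$ gives $\Pr{E} \cdot \tfrac{\Pr{\neg E}}{\Pr{E}} = \Pr{\neg E}$, contributing another $\tfrac{1}{2}\Pr{\neg E}$.

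Adding these yields $\dist_\TV(\cD, \cD_{|E}) = \Pr{\neg E}$, which in particular implies the stated inequality. (The proof in fact gives equality, but the weaker inequality is all that is needed for the applications.) I expect the write-up to be three or four lines; the only thing to watch is making sure the conditional density formula is handled correctly when $\Pr{E}$ could be zero, which is easily dispatched as a degenerate case.
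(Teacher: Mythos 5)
Your computation is correct and in fact establishes equality, $\dist_\TV(\cD, \cD_{|E}) = \Pr{\neg E}$, which implies the stated bound; the paper states this as a fact without proof, and your argument is exactly the standard one that would be supplied. The only cosmetic remark is that you sum over atoms $\omega$, implicitly assuming a discrete sample space, which suffices for every application in the paper (and the degenerate case $\Pr{E}=0$ is handled appropriately).
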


The following proves \cref{lemma:upper-bound-linear-trace-small-epsilon}.

\begin{lemma}
    Suppose that $\epsilon < \frac{K \log^{3} n}{n^{1/4}}$. Then \cref{alg:general-small-eps} satisfies
    the following:
    \begin{enumerate}
        \item If $\pi = \pi(\mu,\mu)$, the algorithm will accept with probability at least $2/3$.
        \item If $\pi = \pi(p,q)$ is $\epsilon$-far from uniform, then the algorithm will reject with
            probability at least $2/3$.
    \end{enumerate}
\end{lemma}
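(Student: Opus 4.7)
The plan is to show that the chosen sample size is large enough that, with high probability, the sample contains every element of $[2n]$, in which case the trace determines the histogram exactly and we may simply invoke the standard uniformity tester from \cref{thm:standard-histogram-tester}.

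First I would verify the two required sample-size bounds. Since $\epsilon < \frac{K \log^3 n}{n^{1/4}}$ implies $\frac{\sqrt{n}}{\epsilon^2} > \frac{n}{K^2 \log^6 n}$, we get
\[
  m = \Theta\!\left(\tfrac{\sqrt{n}}{\epsilon^2} \log^7 n\right)
  = \Omega\!\left(\tfrac{n \log n}{K^2}\right),
\]
so by choosing the hidden constant in the definition of $m$ sufficiently large (depending on $K$), we can ensure both $m \geq 2 n \log(100 n)$ (the precondition of \cref{claim:small-eps-big-sample}) and $m \geq C \sqrt{n}/\epsilon^2$ (the precondition of \cref{thm:standard-histogram-tester}) for all sufficiently large $n$.

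Next I would make explicit the reconstruction step. Let $E$ be the event that every element of $[2n]$ appears at least once in the sample $S$. Conditional on $E$, the sorted sample has the form where each element $j \in [2n]$ contributes a contiguous block of its multiplicity, so the trace takes the canonical form $1^{X_1} 0^{X'_1} 1^{X_2} 0^{X'_2} \cdots 1^{X_n} 0^{X'_n}$ with every $X_i, X'_i > 0$; in this case $(X_1, X'_1, \ldots, X_n, X'_n)$ is precisely the histogram $H(S)$. Conversely, when $\neg E$ occurs, some $X_i$ or $X'_i$ equals zero (either the number of positive runs of 1s or 0s falls short of $n$, or the trace fails to begin with 1 and end with 0), so the algorithm enters its rejection branch. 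Thus the algorithm may be viewed as $A(S) = \text{reject if } \neg E$, and $A(S) = \textsc{UHT}(H(S))$ otherwise.

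For completeness, when $\pi = \pi(\mu,\mu)$, \cref{claim:small-eps-big-sample} gives $\Pr[\neg E] \leq 1/50$, and \cref{thm:standard-histogram-tester} gives $\Pr[\textsc{UHT}(H(S)) \text{ accepts}] \geq 3/4$. Therefore
\[
  \Pr[A \text{ accepts}] \geq \Pr[\textsc{UHT}(H(S)) \text{ accepts}] - \Pr[\neg E] \geq 3/4 - 1/50 > 2/3.
\]
For soundness, when $\pi$ is $\epsilon$-far from uniform,
\[
  \Pr[A \text{ rejects}] = \Pr[\neg E] + \Pr[E \wedge \textsc{UHT}(H(S)) \text{ rejects}] \geq \Pr[\textsc{UHT}(H(S)) \text{ rejects}] \geq 3/4 > 2/3,
\]
regardless of which case occurs. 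No step here is difficult; the only real thing to verify is the sample-size bookkeeping in the first paragraph, which is immediate from the hypothesis $\epsilon < K \log^3 n / n^{1/4}$.
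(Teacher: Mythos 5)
Your proof is correct and follows the same high-level approach as the paper: bound the probability that the sample misses some element of the domain (coupon collector), and on the complementary event observe that the run-lengths recover the histogram exactly, reducing to the standard uniformity tester. Your completeness step is actually slightly cleaner than the paper's — you use the elementary bound $\Pr[E \wedge F] \geq \Pr[F] - \Pr[\neg E]$ directly, whereas the paper passes through a conditioned sample $\bm{S}'$ and invokes \cref{fact:tv-distance-conditional}, but this is only a cosmetic difference and both arguments yield the same conclusion.
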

\begin{proof}
By \cref{thm:standard-histogram-tester}, we know that for appropriate choice of constant $C >
0$, if $m \geq C \cdot \frac{\sqrt n}{\epsilon^2}$, then the \textsc{UHT} algorithm will be correct
with probability at least $3/4$. In our case, $m$ satisfies this condition.

    Let $\bm{S} \sim \samp(\pi, m)$ and $\bm{T} = \trace(\bm{S})$. Define
    \[
(\bm{Z}_1, \dotsc,
\bm{Z}_{2n}) \define (\bm{X}_1, \bm{X}'_1, \bm{X}_2, \bm{X}'_2, \dotsc, \bm{X}_n, \bm{X}'_n) \,,
    \]
    so that $\bm{Z}$ is the vector of run-lengths in $\bm{T}$.

    Write $A$ for the event that $\bm{Z}_i > 0$ for all $i \in [2n]$.  Observe that, if event $A$
    occurs, then $\bm{Z}$ is the histogram $H(\bm{S})$.
    Suppose that $\pi = \pi(\mu,\mu)$.
    We first argue that \textsc{UHT} has small probability of rejection even if its input comes
    from a sample conditioned on event $A$. Let $\bm{S}'$ be the random variable distributed
    as the sample $\bm{S}$ conditional on $A$ occurring.
    Then, by \cref{fact:tv-distance-conditional} and \cref{claim:small-eps-big-sample},
    \[
        \dist_\TV(\bm{S}, \bm{S}') \le \Pr{\neg A} < \frac{1}{50} \,.
    \]
    Therefore $\Pr{\textsc{UHT}(H(\bm{S})) \ne \textsc{UHT}(H(\bm{S}'))} < 1/50$ and
    \begin{align*}
        \Pruc{}{\textsc{UHT}(\bm{Z}) \text{ rejects}}{A}
        &= \Pruc{}{\textsc{UHT}(H(\bm{S})) \text{ rejects}}{A} \\
        &= \Pr{\textsc{UHT}(H(\bm{S}')) \text{ rejects}} \\
        &< \Pr{\textsc{UHT}(H(\bm{S})) \text{ rejects}} + \frac{1}{50}
        < \frac{1}{4} + \frac{1}{50} \,.
    \end{align*}
    Then the probability that \cref{alg:general-small-eps} rejects is
    \begin{align*}
        &\Pr{\textsc{UniformityTesterSmall}(\bm{Z}) \text{ rejects}} \\
        &\qquad = \Pr{ A }\Pruc{}{ \textsc{UHT}(\bm{Z}) \text{ rejects } }{ A } + \Pr{ \neg A }
        < \frac{1}{4} + \frac{2}{50}
        < \frac{1}{3} \,.
    \end{align*}
    Now suppose that $\pi = \pi(p,q)$ is $\epsilon$-far from uniform.
    Since $\bm{Z} = H(\bm{S})$ when $A$ occurs, we have
    \begin{align*}
        \Pruc{}{\textsc{UniformityTesterSmall}(\bm{Z}) \text{ rejects}}{A}
        &= \Pruc{}{\textsc{UHT}(\bm{Z}) \text{ rejects}}{A} \\
        &= \Pruc{}{\textsc{UHT}(H(\bm{S})) \text{ rejects}}{A}\,.
    \end{align*}
    Moreover, since \cref{alg:general-small-eps} always rejects when $A$ does not occur, we have
    \[
        1 = \Pruc{}{\textsc{UniformityTesterSmall}(\bm{Z}) \text{ rejects}}{\neg A}
        \ge \Pruc{}{\textsc{UHT}(H(\bm{S})) \text{ rejects}}{\neg A} \,.
    \]
    Hence the probability that \cref{alg:general-small-eps} rejects is
    \begin{align*}
        &\Pr{\textsc{UniformityTesterSmall}(\bm{Z}) \text{ rejects}} \\
        &\qquad = \Pr{A} \Pruc{}{\textsc{UniformityTesterSmall}(\bm{Z}) \text{ rejects}}{A} \\
            &\qquad \qquad + \Pr{\neg A}
                \Pruc{}{\textsc{UniformityTesterSmall}(\bm{Z}) \text{ rejects}}{\neg A} \\
        &\qquad \ge \Pr{A} \Pruc{}{\textsc{UHT}(H(\bm{S})) \text{ rejects}}{A} +
            \Pr{\neg A} \Pruc{}{\textsc{UHT}(H(\bm{S})) \text{ rejects}}{\neg A} \\
        &\qquad = \Pr{\textsc{UHT}(H(\bm{S})) \text{ rejects}}
        > 3/4 > 2/3 \,,
    \end{align*}
which concludes the proof.
\end{proof}

\section{Edit Distance Proofs}
\label{section:appendix-edit-distance-proofs}

\subsection{Facts About Edit Distance and Labeled Distributions}
\begin{fact}
\label{fact:tv-to-ham}
Let $\cD_f$ and $\cD_g$ be labeled distributions over any domain $\cX$. Then
\[
  \dist_\TV(\cD_f, \cD_g) = \Pru{x \sim \cD}{f(x) \neq g(x)} \,.
\]
\end{fact}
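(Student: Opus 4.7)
The plan is to derive this immediately from \cref{prop:tv-distance-labeled}, which is already proven in the paper and gives a general formula for the TV distance between two labeled distributions $(f,\cD)$ and $(g,\cE)$ on $\bZ$. The formula involves a sum with two kinds of terms: where $f(i) \neq g(i)$, the contribution is $\cD(i) + \cE(i)$, and where $f(i) = g(i)$, the contribution is $|\cD(i) - \cE(i)|$.

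The key observation is that in the present fact, both labeled distributions use the \emph{same} underlying distribution $\cD$. So I would specialize \cref{prop:tv-distance-labeled} with $\cE = \cD$: the ``$f(i) = g(i)$'' terms collapse to $|\cD(i) - \cD(i)| = 0$ and vanish entirely, while the ``$f(i) \neq g(i)$'' terms become $\cD(i) + \cD(i) = 2\cD(i)$. The factor $1/2$ out front then cancels the $2$, yielding
\[
  \dist_\TV(\cD_f, \cD_g) = \sum_{i \in \bZ} \ind{f(i) \neq g(i)} \cdot \cD(i) = \Pru{x \sim \cD}{f(x) \neq g(x)} \,,
\]
which is exactly the claim. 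Although the proposition was stated for domain $\bZ$, the same identity holds for any countable domain $\cX$ by the identical computation, or by observing that the statement of \cref{fact:tv-to-ham} only depends on the joint support and we can relabel elements into $\bZ$.

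There is no real obstacle: this is essentially a one-line specialization of an already-established formula. If one preferred to avoid invoking \cref{prop:tv-distance-labeled}, an equally short direct proof works by expanding
\[
  \dist_\TV(\cD_f, \cD_g) = \frac{1}{2} \sum_{(x,b) \in \cX \times \zo} |\cD_f(x,b) - \cD_g(x,b)|
\]
and splitting the sum according to whether $f(x) = g(x)$: when $f(x) = g(x)$, both $\cD_f(x,b)$ and $\cD_g(x,b)$ take value $\cD(x)$ at $b = f(x) = g(x)$ and $0$ elsewhere, contributing nothing; when $f(x) \neq g(x)$, the summands at $b = f(x)$ and $b = g(x)$ each contribute $\cD(x)$, for a total of $2\cD(x)$, which after the $1/2$ factor gives $\cD(x)$.
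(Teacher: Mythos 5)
Your proposal is correct and follows exactly the paper's own proof: the paper also specializes \cref{prop:tv-distance-labeled} to the case $\cE=\cD$, with the ``$f(i)=g(i)$'' terms vanishing and the factor $1/2$ cancelling the resulting $2\cD(x)$. Your added remarks on generalizing from $\bZ$ to countable $\cX$ and the optional direct expansion are fine but not needed beyond what the paper already states.
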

\begin{proof}
Assume, for simplicity of notation, that $\cX$ is countable. Using
\cref{prop:tv-distance-labeled},
\[
  \dist_\TV(\cD_f, \cD_g)
  = \frac{1}{2} \sum_x \ind{f(x) \neq g(x)} (2 \cD(x))
  = \sum_x \ind{f(x) \neq g(x)} \cdot \cD(x)
  = \Pru{x \sim \cD}{f(x) \neq g(x)} \,. \qedhere
\]
\end{proof}

\begin{proposition}
\label{prop:labeled-distribution-construction}
For any proper labeled distribution $(f,\cD)$ on domain $\bZ$ and any distribution $\pi$ on $\bN$,
there exists a distribution $\cE$ on $\bZ$ such that $\pi = \pi_{f,\cE}$ and
\[
\dist_\TV(\cD, \cE) = \dist_\TV(\pi_{f,\cD}, \pi) \,.
\]
\end{proposition}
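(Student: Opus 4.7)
The plan is to construct $\cE$ by modifying $\cD$ interval-by-interval according to the alternation structure of $f$, so that the TV ``cost'' inside each interval matches the pointwise contribution to $\dist_\TV(\pi_{f,\cD}, \pi)$, and these contributions then telescope into the desired equality.

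Assume without loss of generality that $(f,\cD)$ is 1-proper, with alternation sequence $a_1 < a_2 < \dotsm$. Write $a_0 \define -\infty$ and let $I_i \define (a_{i-1}, a_i]$ for $i \geq 1$, so that the intervals $I_i$ partition $\bZ$ and $\pi_{f,\cD}(i) = \cD(I_i)$. Set $\delta_i \define \pi(i) - \pi_{f,\cD}(i)$; since both $\pi$ and $\pi_{f,\cD}$ are probability distributions we have $\sum_i \delta_i = 0$, and trivially $\delta_i \geq -\pi_{f,\cD}(i) = -\cD(I_i)$, so there is always enough mass inside $I_i$ to absorb a negative $\delta_i$.

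I will define $\cE$ separately on each $I_i$ so that (i) $\cE(I_i) = \pi(i)$, (ii) $\cE(x) \geq 0$ everywhere, and (iii) $\sum_{x \in I_i} |\cE(x) - \cD(x)| = |\delta_i|$. If $\delta_i \geq 0$, pick any $x_i^\star \in I_i$ (e.g., $x_i^\star = a_i$) and set $\cE(x_i^\star) \define \cD(x_i^\star) + \delta_i$ and $\cE(x) \define \cD(x)$ for the other $x \in I_i$; all differences $\cE(x) - \cD(x)$ are non-negative, so their absolute values sum to $\delta_i$. If $\delta_i < 0$, enumerate the elements of $I_i$ in some fixed order and greedily subtract mass from consecutive elements until a total of $|\delta_i|$ has been removed (which is feasible by the observation above); then all differences are non-positive and their absolute values sum to $|\delta_i|$. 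In both cases property (iii) holds, and combined with $\cE(x) \geq 0$ we have $\cE(I_i) = \pi(i)$.

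Summing (iii) over $i$ gives $\sum_{x \in \bZ} |\cE(x) - \cD(x)| = \sum_i |\delta_i|$, so
\[
\dist_\TV(\cD, \cE) = \tfrac{1}{2} \sum_i |\delta_i| = \tfrac{1}{2} \sum_i |\pi(i) - \pi_{f,\cD}(i)| = \dist_\TV(\pi_{f,\cD}, \pi).
\]
Also $\sum_x \cE(x) = \sum_i \pi(i) = 1$, so $\cE$ is a probability distribution; and since $f$ is unchanged and has the same alternation sequence, $\pi_{f,\cE}(i) = \cE(I_i) = \pi(i)$, giving $\pi_{f,\cE} = \pi$. The 0-proper case is identical up to a shift in the indexing of intervals by one. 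The only potential obstacle is the routine bookkeeping of ensuring within each interval that the absolute differences add up to exactly $|\delta_i|$ (rather than more), which is handled by keeping all local changes of the same sign as $\delta_i$; the existence of $\pi_{f,\cE}$ matching $\pi$ on indices beyond the support of $\pi_{f,\cD}$ is never an issue, since any interval $I_i$ has at least one element to place new mass on.
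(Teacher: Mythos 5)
Your proof is correct and takes essentially the same interval-by-interval construction as the paper's: on each interval $I_i$ you add the excess mass to a single point when $\pi(i) \geq \cD(I_i)$, and otherwise remove mass while keeping all adjustments of the same sign, so that the local $\ell_1$ cost is exactly $|\pi(i) - \cD(I_i)|$ and the costs add up to $\dist_\TV(\pi_{f,\cD},\pi)$. The one small bookkeeping point the paper handles more carefully is the convention $a_{t+1} = \infty$ when the alternation sequence is finite, which is needed for the intervals to actually partition $\bZ$ as you claim; with that convention added, your argument coincides with the paper's.
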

\begin{proof}
Let $a_1 < a_2 < \dotsm $ be the alternation sequence for $f$, and use the convention $a_0 =
-\infty$; if the sequence is of finite length $t$, also define $a_{t+1} = \infty$.
For each interval $I = (a_{i-1}, a_i]$, we define $\cE$ on the points $x \in I$ as follows.
\begin{itemize}
\item If $\cD(I) \leq \pi(i)$, choose an arbitrary point $x^* \in I$. For $x \in I \setminus \{
x^* \}$, let $\cE(x) \gets \cD(x)$. Then let $\cE(x^*) \gets \pi(i) - \sum_{x \in I \setminus \{ x^*
\} } \cD(x)$. Observe that $\cE(I) = \pi(i)$, as desired, and
\begin{align*}
  \sum_{x \in I} |\cD(x) - \cE(x)|
      &= \cE(x^*) - \cD(x^*) + \left(\sum_{x \in I \setminus \{x^*\}} \cE(x) \right)
                            - \left(\sum_{x \in I \setminus \{x^*\}} \cD(x) \right) \\
      &= |\cE(I) - \cD(I)| \,.
\end{align*}
\item If $\cD(I) > \pi(i)$, let $\cE(x) \gets \cD(x) - \delta_x$ for an arbitrary choice of
values $\delta_x$ satisfying $0 \leq \delta_x \leq \cD(x)$ and $\sum_{x \in I} \delta_x = \cD(I) -
\pi(i)$; it is easy to verify that such a choice exists. Observe that, as desired,
\[
  \cE(I) = \cD(I) - \sum_{x \in I} \delta_x = \pi(i) \,,
\]
and
\[
  \sum_{x \in I} |\cE(x) - \cD(x)| = \sum_{x \in I} \delta_x = \left|\cD(I) - \pi(i)\right|
    = |\cD(I) - \cE(I)| \,.
\]
\end{itemize}
We now have $\pi_{f,\cE} = \pi$, and
\begin{align*}
  \dist_\TV(\pi_{f,\cD}, \pi)
  &= \dist_\TV(\pi_{f,\cD}, \pi_{f,\cE})
  = \frac{1}{2} \sum_{i=1}^{k+1} |\pi_{f,\cD}(i) - \pi_{f,\cE}(i)|
  = \frac{1}{2} \sum_{i=1}^{k+1} |\cD(a_{i-1}, a_i] -  \cE(a_{i-1}, a_i] | \\
  & = \frac{1}{2} \sum_x |\cD(x) - \cE(x)|
  = \dist_\TV(\cD_f, \cE_f) \,. \qedhere
\end{align*}
\end{proof}

\begin{fact}
\label{fact:edit-match-mass}
Let $a_1, \dotsc, a_m \geq 0$ and $b \geq 0$. Then there exist $b_1, \dotsc, b_m$ such that $\sum_i
b_i = b$ and $\sum_i |a_i - b_i| = \left| b - \sum_i a_i \right|$.
\end{fact}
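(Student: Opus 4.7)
The plan is to verify this by a direct construction, splitting into cases according to the sign of $b - \sum_i a_i$. Write $A \define \sum_{i=1}^m a_i$.

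If $b \geq A$, I would construct $b_i \define a_i + \delta_i$ where $\delta_i \geq 0$ are any non-negative numbers summing to $b - A$ (\eg $\delta_1 = b - A$ and $\delta_i = 0$ for $i \geq 2$). Then clearly $\sum_i b_i = A + (b-A) = b$, and since each $\delta_i \geq 0$, we get $\sum_i |b_i - a_i| = \sum_i \delta_i = b - A = |b - A|$.

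If $b < A$, I would construct $b_i \define a_i - \delta_i$ where $0 \leq \delta_i \leq a_i$ are chosen to sum to $A - b$. Such a choice exists because the total available mass $\sum_i a_i = A$ is at least $A - b$ (since $b \geq 0$); explicitly, one may process the $a_i$ in order, setting $\delta_i \define \min(a_i, R_i)$ where $R_i$ is the remaining deficit. Then $b_i \geq 0$, $\sum_i b_i = A - (A - b) = b$, and $\sum_i |b_i - a_i| = \sum_i \delta_i = A - b = |b - A|$.

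The statement does not explicitly require $b_i \geq 0$, but the nontrivial direction (case $b < A$) genuinely needs the hypothesis $b \geq 0$ to guarantee the construction works with non-negative $b_i$; this is the only place where any constraint is actively used, and it is not really an obstacle. The fact itself is essentially a restatement of the observation that the $\ell^1$-distance between two non-negative vectors with different total masses is minimized by keeping one vector ``inside'' the other (with respect to componentwise order), in which case it equals the difference of the total masses.
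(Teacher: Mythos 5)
Your proof is correct and takes essentially the same approach as the paper: a case split on whether $b$ exceeds or falls short of $\sum_i a_i$, adding the surplus to a single coordinate in one case and greedily subtracting the deficit (keeping $b_i \geq 0$) in the other. Your side remark that the construction additionally yields $b_i \geq 0$ is accurate and is in fact relied on in the paper's applications of this fact.
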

\begin{proof}
First assume $\sum_i a_i \leq b$. Then assign $b_i = a_i$ for $i < m$ and $b_m = b - \sum_{i < m}
a_i$. Then $\sum_i b_i = b$ and $\sum_i |a_i - b_i| = b_m - a_m = b - \left(\sum_{i < m} a_i \right)
- a_m = b - \sum_i a_i$, as desired.

Now assume $\sum_i a_i > b$.  Let $j$ be the smallest number such that $\sum_{i \leq j} a_i > b$.
Assign $b_i = a_i$ for $i < j$, $b_j = b - \sum_{i < j} a_i$, and $b_i = 0$ for $i > j$. Then
$\sum_i b_i = b$ and
\[
  \sum_i |a_i - b_i| = (a_j - b_j) + \sum_{i > j} a_i  = a_j - b + \sum_{i < j} a_i + \sum_{i > j}
a_i = \left(\sum_i a_i\right) - b \,.\qedhere
\]
\end{proof}

\begin{fact}
\label{fact:edit-zero-mass}
Let $(f, \cD)$ and $(g, \cE)$ be any two proper labeled distributions, and let $g'$ be any function
such that $g'(x) = g(x)$ when $\min(\cD(x), \cE(x)) > 0$. Then
\[
  \dist_\TV(\cD_f, \cE_{g'}) = \dist_\TV(\cD_f, \cE_g) \,.
\]
\end{fact}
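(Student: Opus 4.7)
The plan is to apply the formula from \cref{prop:tv-distance-labeled} to both $\dist_\TV(\cD_f, \cE_g)$ and $\dist_\TV(\cD_f, \cE_{g'})$ and show they agree term by term. Concretely, \cref{prop:tv-distance-labeled} expresses the TV distance as
\[
  \dist_\TV(\cD_f, \cE_h) = \frac{1}{2} \sum_{i \in \bZ} \bigl[\ind{f(i) \neq h(i)}(\cD(i)+\cE(i)) + \ind{f(i)=h(i)}|\cD(i)-\cE(i)|\bigr]
\]
for $h \in \{g, g'\}$. I would then show that for every $i \in \bZ$, the summand takes the same value whether we use $g$ or $g'$.

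The argument splits on whether $\min(\cD(i), \cE(i)) > 0$. If $\min(\cD(i), \cE(i)) > 0$, then by hypothesis $g(i) = g'(i)$, so the summands are trivially equal. Otherwise, at least one of $\cD(i), \cE(i)$ is zero, and I would verify by case analysis that the summand at index $i$ does not depend on the value of $h(i)$ at all: if $\cD(i) = \cE(i) = 0$ both indicator terms contribute zero; if $\cD(i) = 0$ but $\cE(i) > 0$, one checks that $\ind{f(i) \neq h(i)}(\cD(i)+\cE(i)) + \ind{f(i) = h(i)}|\cD(i) - \cE(i)|$ simplifies to $\cE(i)$ regardless of whether $f(i) = h(i)$ or not; the symmetric case $\cE(i) = 0, \cD(i) > 0$ gives $\cD(i)$ similarly. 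In all cases the per-index summand is independent of $h(i)$, so summing over $i$ yields the desired equality.

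The proof is essentially a direct bookkeeping exercise with no real obstacle; the only subtlety is making sure that the observation ``the contribution at indices with zero $\cD$- or $\cE$-mass is independent of the labeling'' is correctly justified from the explicit formula in \cref{prop:tv-distance-labeled}, which is a one-line case check.
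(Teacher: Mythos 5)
Your proof is correct and takes exactly the approach the paper intends: the paper's entire proof is the single sentence ``This follows from \cref{prop:tv-distance-labeled},'' and you have simply written out the case check that sentence leaves implicit. No gap, no difference in method.
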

\begin{proof}
This follows from \cref{prop:tv-distance-labeled}.
\end{proof}

\begin{fact}
\label{fact:edit-zero-interval}
Let $(f, \cD)$ and $(g, \cE)$ be any two proper labeled distributions. Then there exist $(f', \cD')$
and $(g', \cE')$ which satisfy $\pi_{f', \cD'} = \pi_{f,\cD}$, $\pi_{g',\cE'} = \pi_{g,\cE}$, and
$\dist_\TV(\cD'_{f'}, \cE'_{g'}) \leq \dist_\TV(\cD_f, \cE_g)$, which satisfy the following
conditions:
\begin{enumerate}
\item If $I$ is any interval such that $f'$ and $g'$ are both constant on $I$, and $f'(x) \neq
g'(x)$ on all $x \in I$, then either $\cD'(I) = 0$ or $\cE'(I) = 0$.
\item $f'$ and $g'$ have no alternation points in common.
\end{enumerate}
\end{fact}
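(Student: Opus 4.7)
The plan is to construct $(f', \cD')$ and $(g', \cE')$ by modifying the given labeled distributions in two stages, each preserving the density sequences and not increasing the TV distance. The key degrees of freedom are: (a) redistributing $\cD$ within each $f$-alternation interval, (b) redistributing $\cE$ within each $g$-alternation interval, and (c) shifting the alternation points themselves (with corresponding mass reassignment).

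\textbf{Stage 1 (Condition 2).} For each common alternation point $a$ of $f$ and $g$, I would shift one of the two alternations by one position. Specifically, to move a $g$-alternation at $a$ to $a+1$: redefine $g'(a+1) \define g(a)$, then compensate by moving the mass $\cE(a+1)$ from position $a+1$ to some other point in the (new) $g'$-alternation interval $(a+1, b_{j+1}]$, so that $\cE'$ still realizes $\pi_{g,\cE}$. Using the per-point formula in Proposition~C.1, the net change in TV distance can be bounded by a local analysis at position $a+1$ together with the location where the displaced mass is placed; a careful choice of that location (preferring a point where $f = g'$, which exists within $(a+1, b_{j+1}]$ provided the interval is not degenerate) ensures no increase. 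One must also ensure that the new position $a+1$ is not itself a common alternation point; this is arranged by choosing shifts globally according to a fixed parity-based rule so that, after all shifts, $f'$ and $g'$ have disjoint alternation sequences.

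\textbf{Stage 2 (Condition 1).} After Condition 2 holds, I claim that every maximal common-constant interval $I$ on which $f' \neq g'$ is strictly contained in either the surrounding $f'$-alternation interval $I_{f'}$ or the surrounding $g'$-alternation interval $J_{g'}$ (or both). The endpoints of $I$ are alternation points of $f'$ or $g'$; since $f'$ and $g'$ share no alternation points, $I$ cannot have both endpoints from the same function, so $I \subsetneq I_{f'}$ or $I \subsetneq J_{g'}$. In the first case, the piece of $I_{f'}$ adjacent to $I$ across the $g'$-alternation endpoint has $g'$ taking the opposite value on it; since $f'$ is constant throughout $I_{f'}$ and $f'|_I \neq g'|_I$, this adjacent piece is an \emph{agree} piece. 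Redistributing $\cD'$ within $I_{f'}$ to move all mass out of $I$ into adjacent agree pieces preserves $\pi_{f',\cD'}$; the TV-distance bookkeeping via Proposition~C.1 shows that moving mass $\delta$ from a disagree piece decreases its contribution by $\delta/2$, and the increase on the agree piece is at most $\delta/2$, giving a net change $\leq 0$. Applying this symmetrically (for $\cE'$ within $J_{g'}$) handles the other case.

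The hard part will be verifying that the Stage 2 redistributions can be carried out in a globally consistent way and that the TV-distance bookkeeping actually closes. The essential observation that rescues the argument is that within each $f'$-alternation interval, the $\cD'$-redistribution is purely internal and interacts with $\cE'$ only through the fixed values of $\cE'$ on that interval; hence one can perform all $\cD'$-redistributions in parallel across distinct $f'$-alternation intervals. Condition 2 is exactly what rules out the otherwise irreducible degenerate case $I = I_{f'} = J_{g'}$ with both masses positive, so the redistribution strategy always has room to move mass to an agree piece.
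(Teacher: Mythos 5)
Your Stage~2 argument essentially matches the paper's: after Condition~2 holds, every maximal disagree-interval is strictly contained in one of the two surrounding alternation intervals, one endpoint is an alternation point of the \emph{other} function, and crossing it lands on an agree piece inside the same interval. Moving mass off the disagree piece onto that agree piece preserves the density sequence, and the $\delta/2$ bookkeeping via the triangle inequality closes. This is the same mechanism as in the paper, just phrased slightly differently.

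Stage~1 is where the gap is, and it is a real one. You try to eliminate a common alternation point $a = a_i = b_j$ by moving the $g$-alternation to the \emph{existing} integer position $a+1$ and then repairing $\pi_{g,\cE}$ by relocating the mass $\cE(a+1)$ to some point in $(a+1, b_{j+1}]$. This forces you to (i) show that a suitable destination $z$ with $f(z) = g'(z)$ exists in $(a+1, b_{j+1}]$ — you flag the degenerate case yourself and do not resolve it (if $b_{j+1} = a+1$ the receiving interval is empty, and if $f$ is constant and disagreeing on all of $(a+1, b_{j+1}]$ there is no agree point); (ii) actually carry out the per-point TV bookkeeping at $a+1$ and at $z$, which you only assert "ensures no increase" — note that shifting $g$'s alternation to $a+1$ flips the agree/disagree status at $a+1$, so the sign of the local change depends on $\cD(a+1)$ and is not automatically favorable; and (iii) rule out that $a+1$ becomes a \emph{new} common alternation point, for which you invoke an unspecified "fixed parity-based rule." None of these is verified, and the degeneracies you acknowledge are not contained.

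The paper sidesteps all three issues at once by \emph{making room}: it shifts the tail of the domain (all $a_{i'} > a_i$, all $b_{j'} \ge b_j$, and the densities $\cD(x), \cE(x)$ for $x > a_i$) one step to the right, and sets $\cD(a_i+1) = \cE(a_i+1) = 0$. This inserts a fresh position carrying zero mass in both distributions; $g$'s alternation moves to $a_i+1$ while $f$'s stays at $a_i$. Both density sequences are preserved exactly (the new position contributes nothing to any block sum), and $\dist_\TV$ is preserved exactly as well, since the new position has zero mass on both sides and everything else is a translation. There is no mass to relocate, no destination to choose, and no accidental collision to dodge. You should replace your Stage~1 with this domain-insertion move; your Stage~2 can then go through as you sketched it.
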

\begin{proof}
Let $a_1 < a_2 < \dotsm$ and $b_1 < b_2 < \dotsm$ be the alternation sequences for $f$ and $g$.  We
may assume without loss of generality that $f$ and $g$ do not have any alternation points in common.
This is because if $a_i = b_j$, then we increment all values $b_{j'} \geq b_j$ and $a_{i'} > a_i$ by
1, shift all densities $\cD(x)$ and $\cE(x)$ to the right by one position for $x > a_i$ and $y >
b_j$, and redefine $\cD(a_i+1) = \cE(b_j+1)=0$.

Any interval $I$ such that $f$ and $g$ are both constant on $I$ and $f(x) \neq g(x)$ on all $x \in
I$, must satisfy $I \subset I^*$ where $I^* \define (a_{i-1}, a_i] \cap (b_{j-1}, b_j]$ for some
alternation points $a_i, b_j$. Since $f$ and $g$ do not have any alternation points in common, then
either there exists $z \in (a_{i-1}, a_i]$ such that $f(x) = g(x)$, or there exists $z \in (b_{j-1},
b_j]$ such that $f(z) = g(z)$. In the first case, define $\cD'$ the same as $\cD$ except on $I^*
\cup \{z\}$, and define $\cD'(x) = 0$ for $x \in I^*$ and $\cD'(z) = \cD(z) + \cD(I^*)$. Since
$\{z\} \cup I^* \subset (b_{j-1}, b_j]$, we have $\cD'(b_{j-1}, b_j] = \cD(b_{j-1}, b_j]$, so
$\pi_{f,\cD'} = \pi_{f,\cD}$. Observe that
\[
  |\cD'(z) - \cE(z)| + \sum_{x \in I^*} (\cD'(x) + \cE(x))
  = |\cD(z) + \cD(I^*) - \cE(z)| + \cE(I^*)
  \leq |\cD(z) - \cE(z)| + \cD(I^*) + \cE(I^*) \,,
\]
so $\dist_\TV(\cD'_f, \cE_g) \leq \dist_\TV(\cD_f, \cE_g)$ by \cref{prop:tv-distance-labeled}. In
the second case, where $z \in (b_{j-1}, b-j]$, we perform the analogous adjustment on $\cE$ to get
$\cE'$.
\end{proof}

\begin{fact}
\label{fact:edit-condense}
Let $(f, \cD)$ and $(g, \cE)$ be any two proper labeled distributions. Let $a_1 < a_2 < \dotsm$ be
the alternation sequence of $f$ and let $b_1 < b_2 < \dotsm$ be the alternation sequence of $g$.
Then there exist distributions $\cD'$ and $\cE'$ that satisfy the following conditions:
\begin{enumerate}
\item $\pi_{f,\cD} = \pi_{f,\cD'}$ and $\pi_{g,\cE'} = \pi_{g,\cE}$;
\item $\cD'$ and $\cE'$ are supported on the set $C = \{ a_i \} \cup \{ b_j \}$;
\item $\dist_\TV(\cD'_f, \cE'_g) \leq \dist_\TV(\cD_f, \cE_g)$.
\end{enumerate}
\end{fact}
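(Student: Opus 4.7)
My plan is to form the common refinement of the two alternation-point sets and then condense the mass of each distribution inside each refined cell to the right endpoint of that cell. Concretely, sort $C = \{a_i\} \cup \{b_j\}$ into an increasing sequence $c_1 < c_2 < \dotsm$, and for each $k$ define the cell $I_k \define (c_{k-1}, c_k]$, with $I_1 \define (-\infty, c_1]$. I would then define $\cD'$ to place the mass $\cD(I_k)$ at the point $c_k$ (and zero elsewhere), and $\cE'$ analogously with $\cE$. This immediately gives condition (2), since $\cD'$ and $\cE'$ are supported on $\{c_k\} = C$.

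For condition (1), the key observation is that each $I_k$ lies entirely inside some $(a_{i-1},a_i]$ (where $f$ is constant) and entirely inside some $(b_{j-1},b_j]$ (where $g$ is constant), because no alternation point of $f$ or $g$ lies strictly inside $I_k$; and moreover the right endpoint $c_k$ still lies in those same two intervals. Hence moving all of the $\cD$-mass on $I_k$ to the single point $c_k$ preserves the total mass $\cD(a_{i-1},a_i] = \pi_{f,\cD}(i)$ (and similarly for $\cE$ against the $b_j$-partition), so the density sequences of both labeled distributions are preserved. The leftmost cell $I_1 = (-\infty, c_1]$ is handled the same way, using the first entry of the density sequence.

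For condition (3), I use the explicit TV formula from \cref{prop:tv-distance-labeled}, summing over each cell $I_k$ separately. On $I_k$ both $f$ and $g$ are constant, so there are two cases. If $f \equiv g$ on $I_k$, the contribution to $\dist_\TV(\cD_f,\cE_g)$ is $\tfrac{1}{2}\sum_{x \in I_k} |\cD(x)-\cE(x)|$, while after condensing it becomes $\tfrac{1}{2} |\cD(I_k) - \cE(I_k)|$, which is at most the original by the triangle inequality. If $f \not\equiv g$ on $I_k$, the original contribution is $\tfrac{1}{2}(\cD(I_k) + \cE(I_k))$, and after condensing both $\cD'$ and $\cE'$ place their cell mass at the same point $c_k$ (where $f$ and $g$ still disagree, since $f(c_k)$ and $g(c_k)$ are the constant values on $I_k$), giving exactly the same contribution $\tfrac{1}{2}(\cD(I_k) + \cE(I_k))$. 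Summing across all cells gives $\dist_\TV(\cD'_f, \cE'_g) \le \dist_\TV(\cD_f,\cE_g)$.

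The only subtlety I expect is the bookkeeping around the definition of the leftmost cell $I_1$ and the potential coincidence $a_i = b_j$ for some $i,j$ (which creates a cell whose endpoint is simultaneously in both alternation sequences, but causes no trouble since both $f$ and $g$ are still constant on the cell). I do not expect to need \cref{fact:edit-zero-interval} here: the argument works directly from the TV formula, and the routine verifications above are essentially all that is required.
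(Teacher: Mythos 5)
Your construction is exactly the one in the paper: sort $C$, form the common-refinement cells $(c_{k-1}, c_k]$, and condense each cell's $\cD$- and $\cE$-mass onto the right endpoint $c_k$. The paper states only that "it is easy to verify the required properties," and your case analysis (triangle inequality on cells where $f\equiv g$, exact equality on cells where $f\not\equiv g$, plus the observation that each cell and its right endpoint sit inside a single $(a_{i-1},a_i]$ and a single $(b_{j-1},b_j]$) is precisely the routine verification being alluded to.
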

\begin{proof}
Define $\cD'$ and $\cE'$ as follows. Write $C = \{ c_1, c_2, \dotsc \}$ where $c_1 \leq c_2 \leq c_3
\leq \dotsm$. For each interval $(c_{i-1}, c_i] \neq \emptyset$, define $\cE'(c_i) =
\cE(c_{i-1},c_i]$ and $\cD'(c_i) = \cD(c_{i-1}, c_i]$. It is easy to verify the required properties.
\end{proof}

\subsection{Equivalence of Edit Distance Definitions}
\label{section:edit-distance-definition-equivalence}

We must prove the following lemma from \cref{section:edit-distance}.

\lemmaeditdistancealternate*

We will use the following facts, which are easy to verify, by swapping consecutive pairs
of permitted operations (and adjusting the indices appropriately).
\begin{fact}
\label{fact:edit-standard-form}
Let $a$ be any fractional string and let $O_1, \dotsc, O_k$ be any sequence of permitted operations
on $a$. Then there exists a sequence $O'_1, \dotsc, O'_k$ of permitted operations on $a$ such that
\[
(O_k \circ O_{k-1} \circ \dotsm \circ O_1)(a) = (O'_k \circ O'_{k-1} \circ \dotsm \circ O'_1)(a)
\]
and, for some $0 \leq i \leq j \leq k+1$, it holds that $O_\ell$ is an \emph{Insert} or
\emph{Rearrange} operation for all $\ell \leq i$; $O_\ell$ is an \emph{Adjust} operation for all $i
< \ell < j$; and $O_\ell$ is a \emph{Delete} or \emph{Rearrange} operation for all $\ell \geq j$.
\end{fact}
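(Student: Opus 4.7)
The plan is to prove the fact by a bubble-sort-style rewriting argument on adjacent pairs of operations. I would define the target type order $\textsc{Insert} \prec \textsc{Adjust} \prec \textsc{Delete}$, with $\textsc{Rearrange}$ allowed to sit in either the Early (Insert) or Late (Delete) phase but forbidden in the Middle (Adjust) phase, and let $\Phi$ be a potential counting positions $\ell$ whose type violates this layout (with Middle-phase Rearranges counted). The strategy is to show that whenever $\Phi > 0$, some adjacent pair $(O_\ell, O_{\ell+1})$ can be replaced by an equivalent pair $(O'_\ell, O'_{\ell+1})$ that produces the same output on $a$ and strictly decreases $\Phi$; iterating yields the standard form while preserving length $k$.

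The bulk of the argument is a case analysis of adjacent pair rewrites. The ``clean'' cases are the swaps (Adjust, Insert) $\to$ (Insert, Adjust), (Delete, Insert) $\to$ (Insert, Delete), and (Delete, Adjust) $\to$ (Adjust, Delete), each of which becomes the reverse-order pair after a routine $\pm 1$ shift in target indices. For instance, Delete at position $t$ (requiring $p_t = 0$) followed by an Insert before position $s$ is realized by first Inserting before position $s$ or $s{+}1$ (according to $s$ vs.\ $t$) and then Deleting at the correspondingly shifted position; the constraint $p_t = 0$ is preserved because no operation in the rewritten pair modifies position $t$. Rearrange is similarly portable past Insert and Delete, since it only alters mass on a consecutive pair of equal-character positions, which neither Insert nor Delete disturbs beyond renumbering. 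The Adjust-Rearrange swap is trivial whenever the Adjust's target $t$ is disjoint from $\{i, i{+}1\}$, as the constraints on both operations decouple.

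The main obstacle is an (Adjust, Rearrange) or (Rearrange, Adjust) pair in the Middle phase with $t \in \{i, i{+}1\}$, where a direct swap may violate Rearrange's permissibility constraint $-p_i \le \delta \le p_{i+1}$ when measured against the pre-Adjust string. To handle this case, I would compute the net change $(A, B)$ the pair induces on positions $i$ and $i{+}1$ and replace it with the two-operation sequence $(\textsc{Adjust}(i, A), \textsc{Adjust}(i{+}1, B))$. Because the original pair is permissible, the final masses $p_i + A$ and $p_{i+1} + B$ are non-negative, which in turn verifies the permissibility $A \ge -p_i$ and $B \ge -p_{i+1}$ of the two replacement Adjusts; both operations produce the correct net effect and remain within the Middle phase, eliminating one Middle-phase Rearrange without creating new violations.

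Finally, I would conclude by iterating the rewrites in a suitable order: first use the clean swaps of the second paragraph to move all Inserts to the front and all Deletes to the back, then use the (Adjust, Adjust) substitution of the third paragraph to eliminate each remaining Middle-phase Rearrange. With $\Phi$ defined lexicographically (counting misplaced Inserts/Deletes first, then Middle-phase Rearranges), each rewrite strictly decreases $\Phi$, so the process terminates and produces the required standard-form sequence $O'_1, \ldots, O'_k$ of length $k$ with the same output as the original.
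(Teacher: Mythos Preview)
Your approach is the paper's---it merely says the fact is ``easy to verify, by swapping consecutive pairs of permitted operations (and adjusting the indices appropriately)''---and you supply far more detail than that sketch. In particular, your substitution of an overlapping (Adjust, Rearrange) or (Rearrange, Adjust) pair by two Adjusts is correct and is the key nontrivial step.

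There is, however, a gap in your claim that ``Rearrange is similarly portable past Insert and Delete.'' Consider $\mathsf{rearr}_{i,\delta}$ followed by $\mathsf{ins}_{i+1,b}$ with $b \ne a_i$: the result has $a_i^{p_i+\delta}\, b^0\, a_{i+1}^{p_{i+1}-\delta}$ at positions $i,i{+}1,i{+}2$. If you Insert first, positions $i$ and $i{+}2$ carry the masses that need to change, but they are no longer adjacent, and position $i{+}1$ has the wrong character for any Rearrange touching it; no single second operation reproduces the effect. The symmetric obstruction hits $(\mathsf{del}_{t}, \mathsf{rearr}_{t-1,\delta})$ when the deleted slot sits between the two positions the Rearrange then acts on in the shortened string. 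So your phase-1 plan of bubbling Inserts to the very front and Deletes to the very back does not go through as written.

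The fix is that you never need those swaps: a (Rearrange, Insert) pair is already in the Early phase and a (Delete, Rearrange) pair is already in the Late phase. What you do need is to get an Insert past an Adjust when Rearranges lie between them, e.g.\ in $(\textsc{Adjust},\textsc{Rearrange},\ldots,\textsc{Rearrange},\textsc{Insert})$. Here process the leftmost (Adjust, Rearrange) pair first---your swap in the non-overlapping case, your two-Adjust substitution in the overlapping case---which pushes the Rearrange left of the Adjust or eliminates it; iterating brings the Adjust adjacent to the Insert, where your clean swap applies. The Delete side is symmetric. In short, the two phases must be interleaved rather than sequenced; a workable potential is the pair (number of Insert/Adjust/Delete inversions, total distance of each Middle-phase Rearrange to the nearest boundary of the Adjust block), ordered lexicographically.
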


\begin{fact}
\label{fact:edit-zero-standard-form}
Let $a$ be any fractional string and let $O_1, \dotsc, O_k$ be any sequence of permitted operations
that are each \emph{Insert} or \emph{Rearrange} operations. Then there is a sequence $O'_1, \dotsc,
O'_k$ of permitted operations such that $(O'_k \circ \dotsm \circ O'_1)(a) = (O_k \circ \dotsm \circ
O_1)(a)$ and such that the following holds.  There is some $s \leq k$ such that for all $j > s$,
each $O_j$ is an \emph{Insert} operation, and for all $j \leq s$, each $O_j$ is either a
\emph{Rearrange} operation, or an \emph{Insert} operation of the form $\mathsf{ins}_{i,b}$ where
either $b = a_{i-1}$ or $b = a_i$.
\end{fact}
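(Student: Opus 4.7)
The plan is to prove the claim by a bubble-sort argument: starting from the given sequence, I will repeatedly apply local swaps that each preserve the final fractional string $(O_k \circ \dotsm \circ O_1)(a)$ until every useless Insert in the sequence follows every Rearrange. At that point, taking $s$ to be the index of the last Rearrange in the transformed sequence (or $s = 0$ if none exist) yields the required form, since positions $j \le s$ then consist only of Rearranges and useful Inserts, and positions $j > s$ consist only of Inserts (possibly useless).

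The core swap is the $(\text{useless Insert}, \text{Rearrange})$ swap. Let $O_j = \mathsf{ins}_{i,b}$ be useless on the intermediate string $a^{(j-1)}$ (so $b \neq a^{(j-1)}_{i-1}$ and $b \neq a^{(j-1)}_i$), and let $O_{j+1} = \mathsf{rearr}_{k,\delta}$, which is valid on $a^{(j)}$ (so $a^{(j)}_k = a^{(j)}_{k+1}$). I claim that only the cases $k+1 < i$ and $k > i$ can occur: the cases $k = i$ and $k = i-1$ would respectively force $b = a^{(j-1)}_i$ and $b = a^{(j-1)}_{i-1}$, contradicting uselessness. In the two valid cases, the Rearrange acts on positions disjoint from the inserted $b^0$, so the swap is executed by applying the Rearrange to $a^{(j-1)}$ with index $k$ (if $k+1 < i$) or $k-1$ (if $k > i$), and then the useless Insert at the same position $i$. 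The new Rearrange is valid on $a^{(j-1)}$ because the symbols at the relevant positions coincide with those in $a^{(j)}$, and the subsequent useless Insert remains useless because Rearrange only redistributes mass, not symbols.

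To bubble useless Inserts past useful Inserts that separate them from Rearranges, I also need a $(\text{useless Insert}, \text{useful Insert}) \to (\text{useful Insert}, \text{useless Insert})$ swap preserving the final string. Writing the two operations as $\mathsf{ins}_{i,b}$ (useless) and $\mathsf{ins}_{i',b'}$ (useful on $a^{(j)}$), a case analysis on $i'$ relative to $i$ (namely $i' < i$, $i' > i+1$, $i' = i$, and $i' = i+1$) produces the swapped pair with carefully adjusted indices, verifying that the new first operation is useful on $a^{(j-1)}$ and the new second operation is useless on the post-insert string. The delicate subcase is $i' = i$ with $b' = b$, where the useful Insert's matching symbol is the one just inserted by the useless Insert; here I handle it by picking a different index for the swapped useless Insert so that the structure $b^0 b^0$ is rebuilt by a useful Insert into the newly made break, then a useless Insert next to it.

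Termination of the bubble-sort follows from a lexicographic monovariant such as $(\sum_{j : O_j \text{ Rearrange}} j, \; -\sum_{j : O_j \text{ useless Insert}} j)$, which strictly decreases with each swap. When no applicable swap remains, no useless Insert precedes any Rearrange, and so setting $s$ equal to the position of the last Rearrange completes the proof. The main technical obstacle will be the case analysis in the $(\text{useless Insert}, \text{useful Insert})$ swap, especially the symbol-coincidence subcase where careful index bookkeeping is required; the $(\text{useless Insert}, \text{Rearrange})$ swap is easier because the impossibility of the straddling cases is immediate from the uselessness condition.
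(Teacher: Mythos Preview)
Your approach matches the paper's one-line justification (``swap consecutive pairs and adjust indices''), and your (useless Insert, Rearrange) swap is correct: uselessness rules out $k\in\{i-1,i\}$, so the swap goes through. The gap is in the (useless Insert, useful Insert) swap, in the subcase $i'=i$, $b'=b$. Since the alphabet is $\{0,1\}$, uselessness of the first insert forces $a^{(j-1)}_{i-1}=a^{(j-1)}_i=:c=1-b$, and the pair produces $\ldots c^{p_{i-1}}\, b^0\, b^0\, c^{p_i}\ldots$. For the swapped pair to start with a useful Insert on $a^{(j-1)}$, that insert must match a neighbor and hence inserts $c$, not $b$; no single further operation then yields the $b^0 b^0$ block at that location. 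Your proposed fix (``useful Insert into the newly made break, then a useless Insert next to it'') does not reverse the order: the only way to place the first $b^0$ between two $c$'s is a useless insert, so locally you are stuck.

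This obstruction is not a technicality. Read literally (new sequence of the same length $k$), the Fact fails: take $a=c^1$ with $O_1=\mathsf{ins}_{1,b}$, $O_2=\mathsf{ins}_{1,b}$, $O_3=\mathsf{rearr}_{1,0}$, producing $b^0 b^0 c^1$. Any length-$3$ sequence with the claimed structure must contain exactly two Inserts and one Rearrange; but a Rearrange on the one-character string $c^1$ is impossible, and if the Rearrange is placed later then all preceding useful Inserts can only add $c$'s, so the output contains at most one $b^0$. The paper's downstream use of this fact only needs \emph{some} reordered sequence reaching the same final string, not one of length exactly $k$; dropping the length constraint (so that, e.g., trivial Rearranges can be discarded) and building the new sequence directly rather than via local swaps would close the gap.
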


\begin{proof}[Proof of \cref{lemma:edit-distance-alternate}]
For a labeled distribution $(f,\cD)$ where $\cD$ is finitely-supported, there is $k \in \bN$ such
that $\cD(i) = 0$ for all $i \in \bZ$ with $|i| > k$, and we define
\[
  \mathsf{str}(\cD_f) \define (f(-k))^{\cD(-k)} (f(1-k))^{\cD(1-k)} \dotsm (f(k))^{\cD(k)} \,.
\]
\textbf{Upper bound.}
Let $(f,\cD)$ and $(g,\cE)$ be any two labeled distributions such that $\cD$ and $\cE$ are finitely
supported and $\pi_{f,\cD} = \pi$, $\pi_{g,\cE} = \pi'$. (Since $\pi, \pi'$ are finitely supported,
such labeled distributions always exist.) We will prove that $\dist_\edit(\pi,\pi') \leq \|\cD_f -
\cE_g\|_\TV$ in two steps. First, we show that $\dist_{\mathsf{fr-edit}}(\mathsf{str}(\pi),
\mathsf{str}(\cD_f)) = \dist_{\mathsf{fr-edit}}(\mathsf{str}(\pi'), \mathsf{str}(\cE_g)) = 0$.
Second, we show that $\dist_{\mathsf{fr-edit}}(\mathsf{str}(\cD_f), \mathsf{str}(\cE_g)) \leq
\dist_\TV(\cD_f, \cE_g)$. From here, the conclusion holds by the triangle inequality.

\emph{Step 1.}
Let $a_1, \dotsc, a_t$ be the alternation points of $f$, where we may assume that $f(x) = 1$ for all
$x \leq a_1$, and we may assume that there is a finite number of alternation points because
$\pi_{f,\cD}=\pi$ and $\pi$ is finitely-supported. Write $a_0 = -\infty$ and $a_{t+1} = \infty$, so
that $\cD(a_{i-1}, a_i] = \pi(i)$ for all $i \in [t+1]$. For each $i \in [t+1]$, observe that $f(x)
= \parity(i)$ for all $x \in (a_{i-1}, a_i]$. We replace each character $(\parity(i))^{\pi(i)}$ in
$\mathsf{str}(\pi)$ with the fractional string 
\begin{align*}
&(\parity(i))^{\cD(a_{i-1}+1)} 
(\parity(i))^{\cD(a_{i-1}+2)} \dotsm (\parity(i))^{\cD(a_i)} \\
=&(f(a_{i-1}+1))^{\cD(a_{i-1}+1)} (f(a_{i-1}+2))^{\cD(a_{i-1}+2)} \dotsm
  (f(a_i))^{\cD(a_i)} \,,
\end{align*}
using a finite sequence of Insert and Rearrange operations. Repeating this for each $i \in [t+1]$,
we arrive at the fractional string $\mathsf{str}(\cD_f)$, using only operations of cost 0. Repeating
the same argument for $\pi'$ and $\cE_g$, we get the similar conclusion, completing the first step
of the proof.

\emph{Step 2.}
There are $k, k'$ such that
\begin{align*}
  a \define \str(\cD_f) &= (f(-k))^{\cD(-k)} (f(1-k))^{\cD(1-k)} \dotsc (f(k))^{\cD(k)} \\
  b' \define \str(\cE_g) &= (g(-k'))^{\cE(-k')} (g(1-k'))^{\cE(1-k')} \dotsc (g(k'))^{\cE(k')} \,.
\end{align*}
Without loss of generality, we may assume $k' \leq k$ and define
\[
  b \define (g(-k))^{\cE(-k)} (g(1-k))^{\cE(1-k)} \dotsc (g(k))^{\cE(k)} \,.
\]
It is easy to see that $b$ can be obtained from $b' = \str(\cE_g)$ using only insertions, since
$\cE(x) = 0$ for $|x| > k'$. From \cref{prop:tv-distance-labeled} we have
\[
  \dist_\TV(\cD_f, \cE_g)
  = \frac12 \sum_{-k \leq i \leq k} \left(\ind{ f(i) = g(i) } |\cE(i) - \cD(i)|
                                + \ind{ f(i) \neq g(i) } (\cE(i) + \cD(i)) \right) \,.
\]
For each $-k \leq i \leq k$, we edit $a$ as follows:
\begin{itemize}
\item If $f(i) = g(i)$, use one \emph{Adjust} operation to replace the fractional character
$(f(i))^{\cD(i)}$ with $(g(i))^{\cE(i)}$, with cost $\frac 12 |\cD(i) - \cE(i)|$.
\item If $f(i) \neq g(i)$, use one \emph{Adjust} operation to replace the fractional character
$(f(i))^{\cD(i)}$ with $(f(i))^0$ with cost $\cD(i)/2$, followed by a \emph{Delete} operation and
\emph{Insert} operation to replace $(f(i))^0$ with $(g(i))^0$; and finally an \emph{Adjust}
operation to replace $(g(i))^0$ with $(g(i))^{\cE(i)}$ with cost $\cE(i)/2$. The total cost is
$\frac 12 (\cD(i) + \cE(i))$.
\end{itemize}
The resulting string is $b$ and has been obtained with cost
\[
  \frac12 \sum_{-k \leq i \leq k} \left(\ind{ f(i) = g(i) } |\cE(i) - \cD(i)|
                                + \ind{ f(i) \neq g(i) } (\cE(i) + \cD(i)) \right)
  = \dist_\TV(\cD_f, \cE_g) \,.
\]
Combined with the triangle inequality and Step 1, we have now proved that
\[
  \dist_\edit(\pi,\pi') \leq \inf \dist_\TV(\cD_f, \cE_g) \,,
\]
where the infimum is taken over all labeled distributions $(f,\cD)$ and $(g,\cE)$ that have $\cD$
and $\cE$ being \emph{finitely-supported}, and where $\pi_{f,\cD} = \pi$ and $\pi_{g,\cE} = \pi'$.
To complete the proof, we must allow labeled distributions not to be finitely supported. This is
achieved by observing that for any labeled distribution $(f,\cD)$ and any $\epsilon > 0$, we can
find a finitely-supported $\cD'$ such that $\pi_{f,\cD'} = \pi_{f,\cD}$ and $\dist_\TV(\cD_f,
\cD'_f) < \epsilon$.

\textbf{Lower bound.}
Consider any sequence of permitted edit operations $O_1, \dotsc, O_k$ such that $\str(\pi') = (O_k
\circ O_{k-1} \circ \dotsm \circ O_1)(\str(\pi))$, where due to \cref{fact:edit-standard-form} we
assume that $O_1, \dotsc, O_s$ are \emph{Insert} and \emph{Rearrange} operations, and $O_t, \dotsc,
O_k$ are \emph{Delete} and \emph{Rearrange} operations, for some $s < t$. Write $a = (O_s \circ
\dotsm \circ O_1)( \str(\pi) )$ and $b = (O_{t-1} \circ \dotsm \circ O_1)(\str(\pi))$. We
may then assume without loss of generality that the sequence is of the form described in
\cref{fact:edit-zero-standard-form}, where we write $s'$ for the index described there.
For each $j \in [s]$, write $a^{(j)} = (O_j \circ \dotsm \circ O_1)( \str(\pi) )$.

We will define a sequence $(f^{(0)}, \cD^{(0)}), (f^{(1)}, \cD^{(1)}), \dotsc, (f^{(s)}, \cD^{(s)})$
of labeled distributions inductively, in such a way that $\pi_{f^{(j)}, \cD^{(j)}} = \pi$ for each
$j$, and for each $a^{(j)} = (O_j \circ \dotsm \circ O_1)(\str(\pi))$ with $a^{(j)} =
(a^{(j)}_1)^{p^{(j)}_1} \dotsm (a^{(j)}_n)^{p^{(j)}_n}$ we will also have $\cD^{(j)}(i) =
p^{(j)}_i$, and $f^{(j)}(i) = a^{(j)}_i$ unless $p^{(j)}_i = 0$.

Define $f^{(0)}(i) = \parity(i)$ for all $i \in \bN$ and $f^{(0)}(i) = 1$ for $i \leq 1$.  Define
$\cD^{(0)}(i) = \pi(i)$ for $i \in \bN$ and $\cD^{(0)}(i) = 0$ otherwise. It holds by definition
that $\pi_{f^{(0)}, \cD^{(0)}} = \pi$.

For each $j \in [s']$, where $a^{(j)} = (a^{(j)}_1)^{p^{(j)}_1} \dotsm (a^{(j)}_n)^{p^{(j)}_n}$, we
define $(f^{(j)}, \cD^{(j)})$ as simply $f^{(j)}(i) = a^{(j)}_i$ and $\cD^{(j)}(i) = p^{(j)}_i$.
Consider the operation $O_j$. If $O_j$ is a \emph{Rearrange} operation then $f^{(j)} = f^{(j-1)}$
since none of the symbols change. If $O_j$ is a \emph{Insert} operation then it inserts a symbol
that is equal to the one before or after it. In either case, the number of alternation points of
$f^{(j)}$ is the same as the number of alternation points of $f^{(j-1)}$, and the mass of
$\cD^{(j)}$ and $\cD^{(j-1)}$ between the $i^{th}$ and $(i+1)^{th}$ respective alternation points
does not change. So $\pi_{f^{(j)}, \cD^{(j)}} = \pi_{f^{(j-1)}, \cD^{(j-1)}} = \pi$.

For the remaining operations $O_j$ with $s' < j \leq s$, we know that $O_j$ is an \emph{Insert}
operation. When inserting a new fractional character immediately before the $i^{th}$ fractional
character, we change $f^{(j-1)}$ to $f^{(j)}$ and $\cD^{(j)}$ to $\cD^{(j-1)}$ by shifting all
values $f^{(j-1)}(i')$ and $\cD^{(j-1)}(i')$ for $i' \geq i$ to the right by one place. Then we
define $\cD^{(j)}(i) = 0$ and set $f^{(j)}(i) = f^{(j-1)}(i-1)$, which does not increase the number
of alternation points. We once again have $\pi_{f^{(j)}, \cD^{(j)}} = \pi_{f^{(j-1)}, \cD^{(j-1)}} =
\pi$. In this case, we may have $f^{(j)}(i) \neq a^{(j)}_i$, but we have $p^{(j)}_i = 0$ since this
was an \emph{Insert} operation.

We now have a labeled distribution $(f, \cD) \define (f^{(s)}, \cD^{(s)})$ such that $\pi_{f,\cD} =
\pi$ and for all $i \in \bN$ it holds that $f(i) = a^{(s)}_i = a_i$ unless $p^{(s)}_i = p_i = 0$,
and $\cD(i) = p^{(s)}_i = p_i$ for all $i \in \bN$ (which further implies $\cD(i) = 0$ for $i \notin
\bN$). Note that the fractional string $b = (O_{t-1} \circ \dotsm \circ O_1)(\str(\pi))$ may be
obtained from $\str(\pi')$ only by \emph{Insert} and \emph{Rearrange} operations, and so by applying
the same argument we get $(g,\cE)$ such that $\pi_{g,\cE} = \pi'$ and for all $i \in \bN$, $g(i) =
b_i$ unless $q_i = 0$, and $\cE(i) = p_i$.

Now we must have $a_i = b_i$ for all $i$, since $b$ is obtained from $a$ using only \emph{Adjust}
operations. The cost of these \emph{Adjust} operations must be at least $\frac 12 \sum_i |p_i -
q_i|$. On the other hand, we have
\begin{align*}
  \dist_\TV(\cD_f, \cE_g)
  &= \frac 12 \sum_i \left( \ind{f(i) = g(i)} |\cD(i) - \cE(i)| + \ind{f(i) \neq g(i)} (\cD(i) + \cE(i))
\right) \\
  &= \frac 12 \sum_i \left( \ind{f(i) = g(i)} |p_i - q_i| + \ind{f(i) \neq g(i)} (p_i + q_i) \right) \\
  &= \frac 12 \sum_{i : p_i = 0 \text{ or } q_i = 0} \left( \ind{f(i) = g(i)} |p_i - q_i| + \ind{f(i) \neq g(i)} |p_i - q_i| \right) \\
  &\qquad +  \frac 12 \sum_{i : p_i > 0 \text{ and } q_i > 0} |p_i - q_i| \\
  &= \frac 12 \sum_i |p_i - q_i| \,.
\end{align*}
Therefore
\[
  \inf \dist_\TV(\cD_g, \cE_f) \leq \dist_\edit(\pi, \pi') \,,
\]
as desired.
\end{proof}

\subsection{Equivalence of Edit Distances for Strings and Distributions}
\label{section:equivalence-edit-distance-strings}

\newcommand{\ext}{\mathsf{ext}}

Write $\dist_\ham(x,y)$ for the Hamming distance between two strings $x,y$ with the same length.

\begin{definition}
    \label{def:string-extension}
    For a string $x \in \zo^*$, write $\ext(x)$ for the set of all strings $z \in
    \{0,1,\bot\}^*$ where the unique (not necessarily contiguous) subsequence $\widetilde z$ of $z$
    containing the non-$\bot$ characters is equal to $x$.
\end{definition}

\begin{fact}
    \label{fact:string-extension-hamming}
    Given strings $u \in \zo^N$ and $v \in \zo^M$, it holds that
    \[
        \dist_\stringedit(u, v) = \min_{x,y} \dist_\ham(x,y) \,,
    \]
    where the minimum is over all strings $x \in \ext(u)$ and $y \in \ext(v)$ of equal length.
\end{fact}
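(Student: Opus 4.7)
The plan is to prove the two inequalities separately; both directions are essentially bookkeeping arguments that realize an edit script as an alignment and vice versa.

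For the direction $\dist_\stringedit(u,v) \le \min_{x,y} \dist_\ham(x,y)$, I would fix any pair $x \in \ext(u)$, $y \in \ext(v)$ of common length $L$, and construct an explicit edit script from $u$ to $v$ of cost at most $\dist_\ham(x,y)$. Processing the $L$ aligned columns left to right, I handle each column according to the five cases: both entries equal to the same bit (no-op, no cost, no Hamming contribution); both entries in $\{0,1\}$ but unequal (one substitution, cost $1$, contributes $1$ to Hamming distance); $x_i = \bot$ and $y_i \in \{0,1\}$ (insert $y_i$, cost $1$); $x_i \in \{0,1\}$ and $y_i = \bot$ (delete $x_i$, cost $1$); and both equal to $\bot$ (no-op). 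In every column the edit cost incurred equals the indicator of $x_i \neq y_i$, so the total cost is $\dist_\ham(x,y)$. The assumption $x \in \ext(u)$ and $y \in \ext(v)$ guarantees that the non-$\bot$ subsequence of $x$ read in order is $u$ and similarly for $y$, so this really is a sequence of edits from $u$ to $v$.

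For the direction $\dist_\stringedit(u,v) \ge \min_{x,y} \dist_\ham(x,y)$, I take an optimal edit script $O_1, \dots, O_k$ transforming $u$ into $v$ with $k = \dist_\stringedit(u,v)$, and build a pair $(x,y)$ of equal length with $\dist_\ham(x,y) \le k$. Replaying the script and writing one column per operation (and one per retained character), I fill the columns as follows: a retained character $c$ yields the column $(c,c)$ (match); a substitution of $c$ by $c'$ yields $(c, c')$ (one mismatch); a deletion of $c$ yields $(c, \bot)$ (one mismatch); an insertion of $c'$ yields $(\bot, c')$ (one mismatch). The non-$\bot$ entries of $x$ in order form exactly the characters of $u$ that appear in the script (both retained and deleted/substituted), which is $u$; similarly $y$ recovers $v$. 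Thus $x \in \ext(u)$, $y \in \ext(v)$, they have equal length, and each Hamming mismatch is charged to a unique edit operation of cost $1$, giving $\dist_\ham(x,y) \le k$.

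There is no real obstacle here; the only mildly delicate point is making sure the replay in the second direction maintains the invariant that the non-$\bot$ subsequence of $x$ is precisely $u$ and that of $y$ is precisely $v$ at every intermediate stage, which follows directly from the standard correspondence between edit scripts and gap alignments. Combining the two inequalities yields the claim.
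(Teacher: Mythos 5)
Your proof is correct, and it is the standard argument establishing the well-known equivalence between (Levenshtein) edit distance and optimal gap alignment. The paper states this as a \emph{Fact} without proof, treating it as folklore, so there is no authorial argument to compare against; your two-directional bookkeeping (reading off an edit script from an alignment column-by-column, and conversely building the aligned extensions from a canonical edit script) is exactly what one would expect. The only point worth making slightly more explicit is the conversion in the second direction: an arbitrary optimal edit script need not directly yield a valid column assignment (for instance, a script that inserts a character and later deletes it), so one should first pass to a canonical script in alignment form, e.g., the one obtained by tracing an optimal path through the edit-distance dynamic-programming table, in which each position of $u$ is matched, substituted, or deleted exactly once and insertions are interleaved. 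You flag this with the phrase ``standard correspondence between edit scripts and gap alignments,'' which is fine, but a more self-contained writeup would replace the phrase ``take an optimal edit script'' with ``take an optimal alignment path'' and proceed from there, sidestepping the canonicalization step entirely.
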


\lemmarelativeeditdistance*
\begin{proof} We proceed by establishing two claims.

\begin{claim}
$\dist_\edit(\pi, \pi') \leq \dist_\mathsf{rel-edit}(\psi(\pi), \psi(\pi'))$.
\end{claim}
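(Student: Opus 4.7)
My plan is to construct labeled distributions $(f, \cD)$ and $(g, \cE)$ on $\bZ$ with $\pi_{f,\cD} = \pi$ and $\pi_{g,\cE} = \pi'$, satisfying $\dist_\TV(\cD_f, \cE_g) \leq \dist_\reledit(\psi(\pi), \psi(\pi'))$; the claim will then follow immediately from \cref{lemma:edit-distance-alternate}. The first step is to invoke \cref{fact:string-extension-hamming} to obtain extensions $x \in \ext(\psi(\pi))$ and $y \in \ext(\psi(\pi'))$ of a common length $L$ with $\dist_\ham(x,y) = \dist_\stringedit(\psi(\pi), \psi(\pi')) =: k$. Since both $\psi(\pi)$ and $\psi(\pi')$ have length $N$, we have $\dist_\reledit(\psi(\pi), \psi(\pi')) = k/N$.

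Next, I will define $(f, \cD)$ supported on $\{1, \dotsc, L\} \subset \bZ$ as follows. Set $\cD(i) = 1/N$ if $x_i \neq \bot$ and $\cD(i) = 0$ otherwise; this yields a valid distribution because the non-$\bot$ characters of $x$ form $\psi(\pi)$, which has length $N$. To define $f$, I partition $[L]$ into consecutive intervals $I_1, I_2, \dotsc, I_M$ such that $I_j$ contains exactly the non-$\bot$ positions of $x$ corresponding to the $j$-th block of $\psi(\pi)$ (assigning any $\bot$ positions between blocks to one of the adjacent intervals arbitrarily), and set $f(i) = \parity(j)$ for all $i \in I_j$. Extending $f(x) = 1$ for $x \leq 0$ and $\cD(x) = 0$ outside $[L]$ makes $(f, \cD)$ $1$-proper, and a direct count shows $\pi_{f,\cD}(j) = (1/N) \cdot |\{i \in I_j : x_i \neq \bot\}| = (1/N)(\pi(j) \cdot N) = \pi(j)$. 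I define $(g, \cE)$ analogously from $y$ and $\pi'$. Crucially, whenever $x_i \neq \bot$ we have $f(i) = x_i$ by construction (and similarly for $g$ and $y$).

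To finish, I will apply \cref{prop:tv-distance-labeled} and bound the contribution of each position $i \in [L]$:
\begin{itemize}
\item If both $x_i, y_i \neq \bot$: $\cD(i) = \cE(i) = 1/N$ and $f(i) = x_i$, $g(i) = y_i$, contributing $1/N$ when $x_i \neq y_i$ and $0$ otherwise.
\item If exactly one of $x_i, y_i$ is $\bot$: one of $\cD(i), \cE(i)$ is $0$ and the other is $1/N$, contributing exactly $1/(2N)$ regardless of $f(i), g(i)$.
\item If both $x_i = y_i = \bot$: contribution $0$.
\end{itemize}
In every case the contribution is at most $\tfrac{1}{N}\ind{x_i \neq y_i}$, so summing gives $\dist_\TV(\cD_f, \cE_g) \leq k/N = \dist_\reledit(\psi(\pi), \psi(\pi'))$.

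The main point of care is defining the intervals $I_j$ so that the density sequence of $f$ matches $\pi$ exactly, particularly when $\pi$ has zero entries (yielding empty blocks) or when long $\bot$-runs in $x$ straddle block boundaries; these edge cases cause no difficulty because $\bot$ positions carry zero $\cD$-mass, so the assignment of $\bot$ positions to intervals is irrelevant for computing $\pi_{f,\cD}$. Beyond this bookkeeping, the argument is essentially a direct translation from the Hamming-distance characterization of string edit distance to the TV-distance characterization of distribution edit distance.
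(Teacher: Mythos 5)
Your proposal follows essentially the same route as the paper's proof of this claim: pull a Hamming-optimal pair $x\in\ext(\psi(\pi))$, $y\in\ext(\psi(\pi'))$ from \cref{fact:string-extension-hamming}, place density $1/N$ on each non-$\bot$ coordinate to form $\cD,\cE$, align $f,g$ with the block structure of $x,y$, and bound $\dist_\TV(\cD_f,\cE_g)\le\dist_\ham(x,y)/N$ via \cref{prop:tv-distance-labeled} and \cref{lemma:edit-distance-alternate}. Your casework (split by the $\bot$-pattern of $(x_i,y_i)$ rather than by whether $f(i)=g(i)$) is in fact a touch cleaner: the paper's intermediate display drops the term $\ind{f(i)=g(i)}|\cD(i)-\cE(i)|$, which need not vanish when exactly one of $x_i,y_i$ is $\bot$ yet $f(i)=g(i)$, whereas your version bounds the contribution uniformly by $\frac{1}{N}\ind{x_i\neq y_i}$ in every case.

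One place the justification is not quite right is the closing paragraph on edge cases. The remark that $\bot$ positions carry no $\cD$-mass, so their assignment does not matter, is true but does not address the actual failure mode: if $\pi(j)=0$ and there happen to be no $\bot$ positions available between the neighbouring blocks, then $I_j=\emptyset$, $f$ has no alternation there, and the density sequence of $(f,\cD)$ \emph{collapses} that coordinate, so the ``direct count'' $\pi_{f,\cD}(j)=\cD(I_j)$ fails. A concrete instance: $N=5$, $\pi=(2/5,0,3/5,0,\ldots)$ gives $\psi(\pi)=11111$, and if $x$ has no $\bot$s then your $f$ is constant $1$ on $[L]$, so $\pi_{f,\cD}=(1,0,\ldots)\neq\pi$. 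In general the construction produces $\pi_{f,\cD}=\psi^{-1}(\psi(\pi))$ rather than $\pi$. This is repairable --- note $\dist_\edit(\pi,\psi^{-1}(\psi(\pi)))=0$ since merging across zero-mass blocks is a free edit, and then use the triangle inequality; or pad $x,y$ with synchronized $\bot$ insertions so every $I_j$ is nonempty --- but the parenthetical as written does not supply the fix. For what it is worth, the paper's own proof glosses over the identical point (it asserts $\pi_{f,\cD}=\pi$ without argument), so this is a shared loose end rather than a difference between the two approaches.
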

\begin{proof}[Proof of claim]
Let $x \in \ext(\psi(\pi))$ and $y \in \ext(\psi(\pi'))$ be strings attaining
\[
    \dist_\reledit(\psi(\pi), \psi(\pi')) = \frac{1}{N} \dist_\ham(x,y) \,,
\]
and let $M$ be their
length. Note that $\psi(\pi)$ is an $n$-block string, for some $n \leq N$, and $\psi(\pi')$ is an
$n'$-block string for some $n' \leq N$. Then there exists a sequence $0=a_0 \leq a_1 < a_2 < \dotsm
< a_{n-1} \leq a_n$  such that for each $j \in [n]$ and each $i \in (a_{j-1}, a_j]$, it holds that
$x_i \in \{\bot, \parity(j)\}$.  Similarly, there exists a sequence $0=b_0 \leq b_1 < b_2 < \dotsm <
b_{n'-1} \leq b_{n'}$ such that for each $j \in [n']$ and each $i \in (b_{j-1}, b_j]$, it holds that
$y_i \in \{\bot, \parity(j)\}$.

We may then define $f : \bN \to \zo$ as the function with alternation sequence $(a_j)$, and $g : \bN
\to \zo$ as the function with alternation sequence $(b_j)$. Observe that, for each $i \in [N]$, we
have $f(i) = x_i$ when $x_i \neq \bot$, and $g(i) = y_i$ when $y_i \neq \bot$.

Now, define the probability distribution $\cD$ to have density $1/N$ on each $i \in [N]$ with $x_i
\neq \bot$, and define the probability distribution $\cE$ to have density $1/N$ on each $i \in [N]$
with $y_i \neq \bot$. It follows that $\pi_{f,\cD} = \pi$ and $\pi_{g,\cE} = \pi'$. Using
\cref{prop:tv-distance-labeled}:
\begin{align*}
  \dist_\edit(\pi, \pi')
  &\leq \dist_\TV(\cD_f, \cE_g) \\
  &= \frac{1}{2} \sum_{i=1}^N \left( \ind{f(i) \neq g(i)}(\cD(i)+\cE(i))
                            + \ind{f(i) = g(i)} |\cD(i)-\cE(i)| \right) \\
  &= \frac{1}{2} \sum_{i=1}^N \ind{f(i) \neq g(i)}(\cD(i)+\cE(i)) \,.
\end{align*}
If $f(i) \neq g(i)$ then, either:
\begin{enumerate}
\item $x_i = \bot$ and $y_i \neq \bot$, or $x_i \neq \bot$ and $y_i = \bot$, in which case
$\ind{f(i) \neq g(i)}(\cD(i) + \cE(i)) = \ind{x_i \neq y_i} \cdot \frac{1}{N}$; or
\item $x_i \neq \bot$ and $y_i \neq \bot$, in which case
$\ind{f(i) \neq g(i)}(\cD(i) + \cE(i)) = \ind{x_i \neq y_i} \cdot \frac{2}{N}$; or
\item $x_i = y_i = \bot$, in which case $\ind{f(i) \neq g(i)}(\cD(i) + \cE(i)) = 0$.
\end{enumerate}
Then 
\begin{align*}
  \dist_\edit(\pi, \pi')
  \leq \frac{1}{2} \sum_{i=1}^N \ind{x_i \neq y_i} \cdot \frac{2}{N} 
  = \frac{1}{N} \cdot \dist_\ham(x,y) = \dist_\reledit(\psi(\pi), \psi(\pi')) \,,
\end{align*}
which proves the claim.
\end{proof}

\begin{claim}
$\dist_\reledit(\psi(\pi), \psi(\pi')) \leq 2 \cdot \dist_\edit(\pi, \pi')$.
\end{claim}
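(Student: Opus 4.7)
The plan: construct equal-length extensions $x \in \ext(\psi(\pi))$ and $y \in \ext(\psi(\pi'))$ whose Hamming distance is controlled by a TV distance close to $\dist_\edit(\pi, \pi')$. For any $\delta > 0$, \cref{lemma:edit-distance-alternate} together with \cref{fact:edit-condense} yields proper labeled distributions $(f, \cD), (g, \cE)$ with $\pi_{f,\cD} = \pi$, $\pi_{g,\cE} = \pi'$, supported on a common finite set $C = \{c_1 < \dotsc < c_K\}$, and satisfying $\dist_\TV(\cD_f, \cE_g) \leq \dist_\edit(\pi, \pi') + \delta$.

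Assume first that the densities $d_i = \cD(c_i)$ and $e_i = \cE(c_i)$ are integer multiples of $1/N$. Build $x, y$ slot by slot: at each $c_i$, place $d_i N$ copies of $f(c_i)$ in $x$ and $e_i N$ copies of $g(c_i)$ in $y$, padding the shorter side with $\bot$'s so that both occupy $\max(d_i, e_i) N$ positions. Because $\sum_{c_i \in \text{block } j \text{ of } f} d_i = \pi(j)$, the non-$\bot$ characters of $x$, read in order, form $\psi(\pi)$ exactly; similarly $y$ yields $\psi(\pi')$. A case split on whether $f(c_i) = g(c_i)$ (mirroring \cref{prop:tv-distance-labeled}) gives Hamming contribution $|d_i - e_i| N$ at matching-label positions and $\max(d_i, e_i) N$ at differing-label positions, and summing with $\max(d_i, e_i) \leq d_i + e_i$ yields $\dist_\ham(x, y) \leq 2N \dist_\TV(\cD_f, \cE_g)$.

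For general densities, I work at a finer scale: round $d_i, e_i$ to multiples of $1/(NL)$ for large $L$, adjusting within each $f$- and $g$-block to preserve the marginals $\pi(j), \pi'(j')$ (possible since these are multiples of $L/(NL)$). This perturbs $\dist_\TV$ by $O(K/(NL))$, and running the construction at scale $NL$ produces extensions of the strings obtained from $\psi(\pi)$ and $\psi(\pi')$ by repeating each character $L$ times, with Hamming distance at most $2NL(\dist_\edit(\pi, \pi') + \delta) + O(K)$. The main obstacle is then relating this back to $\dist_\stringedit(\psi(\pi), \psi(\pi'))$: I would prove a standalone scaling lemma $L \cdot \dist_\stringedit(u, v) \leq \dist_\stringedit(u^{\star L}, v^{\star L})$, where $u^{\star L}$ denotes $u$ with each character repeated $L$ times (the reverse inequality is immediate by applying edits to each block), by analyzing how the periodic block structure of $u^{\star L}, v^{\star L}$ constrains the Levenshtein DP table entries at positions of the form $(iL, jL)$. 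Granting this, dividing by $L$ and taking $L \to \infty$ and $\delta \to 0$ gives $\dist_\stringedit(\psi(\pi), \psi(\pi')) \leq 2N \dist_\edit(\pi, \pi')$, and dividing by $N$ yields $\dist_\reledit(\psi(\pi), \psi(\pi')) \leq 2 \dist_\edit(\pi, \pi')$.
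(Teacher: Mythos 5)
Your slot-by-slot construction with $\bot$-padding, together with the bound $\max(d_i,e_i)\le d_i+e_i$ that produces the factor of $2$, is exactly the device used in the paper. Where you diverge is in how to handle densities that are not already integer multiples of $1/N$. The paper's route (and the technical heart of its proof) is to transform $(\cD,\cE)$ into $(\cD',\cE')$ supported on the alternation-point set $C$ with \emph{all densities integer multiples of $1/N$}, while preserving $\pi_{f,\cD'}=\pi$, $\pi_{g,\cE'}=\pi'$ and not increasing the TV distance; this is possible precisely because $\pi,\pi'$ themselves live on the $1/N$-grid, and is carried out interval-by-interval using \cref{fact:edit-match-mass} (processed in order of endpoints so that previously-fixed masses are already $1/N$-integer and can be matched against). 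Your route instead rounds to the $1/(NL)$-grid and tries to descend from the length-$NL$ picture.

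That descent is the gap: it rests on an unproven scaling lemma $L\cdot\dist_\stringedit(u,v)\le\dist_\stringedit(u^{\star L},v^{\star L})$, which you explicitly leave as a separate subgoal, and your suggested route for proving it (reading off constraints on the DP entries at the block corners $(iL,jL)$) does not obviously work. The natural local recurrence one would want here, namely $D[iL][jL] \ge L + \min\bigl(D[(i-1)L][jL],\, D[iL][(j-1)L],\, D[(i-1)L][(j-1)L]\bigr)$ when $u_i\ne v_j$, is not a consequence of the Levenshtein DP rules: an optimal alignment of $u^{\star L},v^{\star L}$ may match part of the last block of $u^{\star L}$ to an \emph{earlier} block of $v^{\star L}$ where the characters agree, picking up zero-cost edges, so peeling off the last block need not reduce cost by $L$. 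The lemma appears to be true, but establishing it seems to require a genuinely global argument (for instance, projecting an optimal alignment of $u^{\star L},v^{\star L}$ to a fractional, chain-supported alignment of $u,v$, and then showing such fractional alignments cannot beat integral ones via a flow/LP-integrality decomposition). That is a real, separate lemma rather than a routine DP observation, and until it is proved your argument is incomplete. The paper's direct $1/N$-rounding avoids the issue entirely, which is why it is the right step to look for.
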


\begin{proof}[Proof of claim] 
Let $(f,\cD)$ and $(g,\cE)$ be any labeled distributions with $\pi_{f,\cD} = \pi$ and $\pi_{g,\cE} =
\pi'$. We wish to show that
\[
  \dist_\reledit(\psi(\pi), \psi(\pi')) \lequestion 2 \cdot \dist_\TV( \cD_f, \cE_g ) \,.
\]
Using \cref{fact:edit-zero-interval}, followed by \cref{fact:edit-condense}, we may assume without
loss of generality that $(f,\cD)$ and $(g,\cE)$ satisfy the following conditions:
\begin{enumerate}
\item If $I$ is any interval such that $f$ and $g$ are both constant on $I$, and $f(x) \neq
g(x)$ on all $x \in I$, then either $\cD(I) = 0$ or $\cE(I) = 0$.
\item $f$ and $g$ have no alternation points in common.
\item $\cD$ and $\cE$ are supported on the set $C$, containing the alternation points of $f$ and
$g$.
\end{enumerate}
We will transform $\cD$ and $\cE$ into $\cD'$ and $\cE'$ that satisfy the
following properties:
\begin{enumerate}
\item $\pi_{f,\cD'} = \pi_{f,\cD} = \pi$ and $\pi_{g,\cE'} = \pi_{g,\cE} = \pi'$;
\item $\cD'$ and $\cE'$ are supported on $C$;
\item $\forall i \in \bZ$, $\cD'(i)$ and $\cE'(i)$ are integer multiples of $1/N$ (including 0);
\item $\forall i \in \bZ$, If $f(i) \neq g(i)$ then $\min\{ \cD'(i), \cE'(i) \} = 0$; and
\item $\dist_\TV( \cD'_f, \cE'_g ) \leq \dist_\TV( \cD_f, \cE_g )$.
\end{enumerate}
Let $a_1 < a_2 < \dotsm$ be the alternation points of $f$, and let $b_1 < b_2 < \dotsm$ be the
alternation points of $g$. Write $A = \{ a_1, a_2, \dotsc \}$ and $B = \{ b_1, b_2, \dotsc \}$; we
have $A \cap B = \emptyset$ and that $\cD$ and $\cE$ are supported on $C = A \cup B$.

We define $\cD'$ and $\cE'$ by performing the following transformation inside each interval
$(a_{i-1}, a_i]$ and $(b_{j-1}, b_j]$ in order of the endpoints $a_i$ and $b_j$; since $A \cap B =
\emptyset$, this is a well-defined ordering. We define the process for intervals $(a_{i-1}, a_i]$;
intervals $(b_{j-1}, b_j]$ are handled symmetrically. For each interval $(a_{i-1}, a_i]$. For each
iteration of the process, write $\cD$ and $\cE$ for the distributions before the iteration, and
$\cD'$ for the distribution after adjusting the mass in $(a_{i-1}, a_i]$. We will
guarantee that
\begin{equation}
\label{eq:edit-equiv-guarantee}
\begin{aligned}
  &\sum_{x \in (a_{i-1}, a_i]} \left(\ind{ f(x) = g(x) } | \cD'(x) - \cE(x) |
    + \ind{ f(x) \neq g(x) } ( \cD'(x) + \cE(x) ) \right) \\
  &\qquad\leq
  \sum_{x \in (a_{i-1}, a_i]} \left(\ind{ f(x) = g(x) } | \cD(x) - \cE(x) |
    + \ind{ f(x) \neq g(x) } ( \cD(x) + \cE(x) ) \right) \,.
\end{aligned}
\end{equation}
By \cref{prop:tv-distance-labeled}, this suffices to guarantee Property (5).
\begin{enumerate}
\item Let $C_i \define C \cap (a_{i-1}, a_i]$.
Let $C^+_i \define \{ x \in C_i : f(x) = g(x) \}$ and $C^-_i \define \{ x \in C_i : f(x) \neq g(x)
\}$.
\item If $C^+_i = \emptyset$, we are guaranteed that $(a_{i-1}, a_i] \cap B = \emptyset$ and either
$\cD(a_{i-1,a_i}] = 0$, or $\cE(a_{i-1}, a_i] = 0$ since $(a_{i-1}, a_i]$ is an interval where $f,g$
are constant and unequal. In this case, set $\cD'(x) = \cD(x)$ for all $x \in (a_{i-1}, a_i]$, so
$\cD'(a_i) = \cD(a_{i-1}, a_i]$ which is an integer multiple of $1/N$. This guarantees Property (3)
inside $(a_{i-1}, a_i]$, and the guarantee \eqref{eq:edit-equiv-guarantee} trivially holds.

\item Otherwise write $C^+_i = \{ c_1, \dotsc, c_m \}$ and consider the sequence $u_1, \dotsc, u_m$
where $u_j \define \cE(c_j)$. Note that for $c_j < a_i$, $\cE(c_j)$ has been defined earlier in this
process, since $c_j$ is the endpoint of an interval, and therefore $\cE(c_j)$ is an integer multiple
of $1/N$. Then define $\cD'$ on the points $c_1, \dotsc, c_m$ by distributing the mass $\cD(a_{i-1},
a_i]$ according to \cref{fact:edit-match-mass}. That fact guarantees $\cD'(a_{i-1}, a_i] =
\cD(a_{i-1}, a_i]$ and, by inspection of the proof, that each $\cD'(c_j)$ is an integer multiple of
$1/N$; we then have guarantee \eqref{eq:edit-equiv-guarantee}, because:
\begin{align*}
  &\sum_{x \in (a_{i-1}, a_i]} \left(\ind{ f(x) = g(x) } | \cD'(x) - \cE(x) |
  + \ind{ f(x) \neq g(x) } (\cD'(x) + \cE(x)) \right) \\
  &\qquad= \sum_{x \in C^+_i} |\cD'(x) - \cE(x)| + \sum_{x \in C^-_i} (\cD'(x) + \cE(x)) \\
  &\qquad= \sum_{x \in C^+_i} |\cD'(x) - \cE(x)| + \sum_{x \in C^-_i} \cE(x) 
  = \left|\cD(a_{i-1},a_i] - \sum_{x \in C^+_i} \cE(x) \right|  + \sum_{x \in C^-_i} \cE(x) \\
  &\qquad\leq \sum_{x \in C^+_i} | \cD(x) - \cE(x) | + \sum_{x \in C^-_i} (\cD(x) + \cE(x)) \\
  &\qquad= \sum_{x \in (a_{i-1}, a_i]} \left(\ind{ f(x) = g(x) } | \cD(x) - \cE(x) |
  + \ind{ f(x) \neq g(x) } (\cD(x) + \cE(x)) \right) \,.
\end{align*}
\end{enumerate}
Having obtained the desired labeled distributions $(f,\cD')$ and $(g,\cE')$, we conclude the proof
as follows. Write $C = \{ c_1, c_2, \dotsm \}$ such that $c_1 < c_2 < \dotsc$. Since $\cD'$ and
$\cE'$ have densities that are integer multiples of $1/N$, there is some $m$ such that $\cD'$ and
$\cE'$ are supported on $c_1, \dotsc, c_m$. For each $t \in [m]$, define $Z_t \define [ (t-1)N + 1 ,
tN ]$ so that $|Z_t| = N$. Let $p_t \define N \cdot \cD'(c_t)$ and $q_t \define N \cdot \cE'(c_t)$,
which are non-negative integers. Then we define the strings $x,y \in \{0,1,\bot\}^{m \cdot N}$ as
follows.  For each $t \in [m]$, define $x_i = f(c_t)$ for the first $p_t$ values of $i \in Z_t$, and
define $y_i = g(c_t)$ for the first $q_t$ values of $i \in Z_t$, and let the remaining characters in
$Z_t$ be $\bot$.

It is easily verified that $x \in \mathsf{ext}(\psi(\pi))$ and $y \in \mathsf{ext}(\psi(\pi'))$, so
\begin{align*}
  &\dist_\reledit(\psi(\pi), \psi(\pi')) = \frac{1}{N} \dist_{\stringedit}(\psi(\pi), \psi(\pi')) \\
  &\qquad\leq \frac{1}{N} \cdot \dist_\ham(x,y)\\
  &\qquad= \frac{1}{N} \sum_{t=1}^m \left(\ind{f(c_t) = g(c_t)} \cdot \left| p_t - q_t \right|
                                 + \ind{f(c_t) \neq g(c_t)} \max\{p_t , q_t \} \right) \\
  &\qquad= \sum_{t=1}^m \left(\ind{ f(c_t) = g(c_t) } \cdot | \cD'(c_t) - \cE'(c_t) |
                     + \ind{ f(c_t) \neq g(c_t) } \cdot \max\{ \cD'(c_t) + \cE'(c_t) \} \right) \\
  &\qquad= \sum_x \left(\ind{ f(x) = g(x) } \cdot | \cD'(x) - \cE'(x) |
                     + \ind{ f(x) \neq g(x) } \cdot \max\{ \cD'(x) + \cE'(x) \} \right) \\
  &\qquad= \sum_x \left(\ind{ f(x) = g(x) } \cdot | \cD'(x) - \cE'(x) |
                     + \ind{ f(x) \neq g(x) } \cdot ( \cD'(x) + \cE'(x) ) \right) \\
  &\qquad= 2 \cdot \dist_\TV(\cD'_f, \cE'_g) \,,
\end{align*}
which proves the claim.
\end{proof}

These two claims complete the proof.
\end{proof}

\subsection{Edit Distance for the Uniform Distribution}
\label{section:edit-distance-uniform-distribution}

\lemmaedittotvuniform*

\begin{proof}
    Recall that $\pi(i) = 1/k$ for each $i \in [k]$.
    Let $z \in \bR^{k}$ be the vector such that $\pi'(i) = \pi(i) + z_i$ for each $i \in [k]$.
    Note that, since $\pi$ and $\pi'$ are probability distributions, we have $\sum_i z_i = 0$ and
    \[
        \dist_\TV(\pi, \pi')
        = \frac{1}{2} \left( \sum_{i : z_i > 0} z_i + \sum_{i : z_i < 0} \abs*{z_i} \right)
        = \sum_{i : z_i > 0} z_i \,.
    \]
    Let $S \define \{i \in [k] : z_i > 0\}$, which we may assume is nonempty since, otherwise,
    the claim holds trivially. Now, our goal is to show that
    $\dist_\edit(\pi, \pi') \ge c \cdot \sum_{i \in S} z_i$.

    Let $\cD_f$ and $(g, \cE)$ be two $1$-proper labeled distributions such that
    $\pi_{f,\cD} = \pi$ and $\pi_{g,\cE} = \pi'$.
    Our goal is to show that $\dist_\TV(\cD_f, \cE_g) \ge c \cdot \sum_{i \in S} z_i$.
    We may assume that $g$ (also) alternates exactly $k-1$ times, because if it had fewer
    alternations, we could introduce extra alternations starting at a coordinate large enough that
    all but arbitrarily small mass of $\cE_g$ is affected.

    Let $a_1 < a_2 < \dotsm < a_{k-1}$ be the alternation sequence of $f$, and let
    $b_1 < b_2 < \dotsm < b_{k-1}$ be the alternation sequence of $g$.
    For convenience of notation, write $a_0 \define -\infty$, $a_k = \infty$, $b_0 \define -\infty$
    and $b_k \define \infty$, so that for each $t \in [k]$ we have
    $1/k = \pi(t) = \cD(a_{t-1}, a_t]$ and $\pi'(t) = \cE(b_{t-1}, b_t]$.

    Fix any $t \in S$. By \cref{prop:tv-distance-labeled}, it suffices to show the following:
    \[
        \sum_{i \in [b_{t-1}, b_t)} \ind{f(i) \neq g(i)}( \cD(i) + \cE(i) )
            +  \ind{f(i) = g(i)} | \cD(i) - \cE(i) |
        \gequestion 2c \cdot z_t \,.
    \]
    For convenience, let $F_t$ denote the left-hand side expression in this proposed inequality.

    Let $h \define \floor{\frac{\cE(b_{t-1}, b_t]}{1/k}}$ and
    $\epsilon \define \cE(b_{t-1}, b_t] - h/k$, so that $0 \le \epsilon < 1/k$ and
    $\cE(b_{t-1}, b_t] = h/k + \epsilon$. Note that $h \ge 1$ because $t \in S$,
    meaning that $\cE(b_{t-1}, b_t] > 1/k$. Moreover, recalling that
    $\cE(b_{t-1}, b_t] = 1/k + z_t$, we conclude that $z_t = (h-1)/k + \epsilon$.
    We now consider a number of cases.

    \textbf{Case 1.} Suppose
    $\cD(b_{t-1}, b_t] \le \frac{1}{k}\left(1 + \frac{h-1}{2}\right) + \frac{\epsilon}{2}$.
    Then we obtain
    \begin{align*}
        F_t
        &\ge \sum_{i \in (b_{t-1}, b_t]} \abs*{\cD(i) - \cE(i)}
        \ge \abs*{\cE(b_{t-1}, b_t] - \cD(b_{t-1}, b_t]}
        \ge \frac{h}{k} + \epsilon - \frac{1}{k} - \frac{(h-1)/2}{k} - \frac{\epsilon}{2} \\
        &= \frac{(h-1)/2}{k} + \frac{\epsilon}{2}
        = \frac{z_t}{2} \,,
    \end{align*}
    so we are done with this case.

    \textbf{Case 2.} Suppose that
    $\cD(b_{t-1}, b_t] > \frac{1}{k}\left(1 + \frac{h-1}{2}\right) + \frac{\epsilon}{2}$. We
    consider further sub-cases based on the value of $h$.
    Throughout the remaining analysis, we use the fact that $g$ is constant on $(b_{t-1}, b_t]$.

    \textbf{Case 2A.} $h = 1$. Let $t'$ be the smallest index such that $a_{t'} \in (b_{t-1}, b_t)$,
    which must exist because $\cD(b_{t-1}, b_t] > 1/k$, so $f$ must alternate in this interval.
    If $f(a_{t'}) = g(a_{t'})$, then $f$ and $g$ disagree from $a_{t'}+1$ up to just before the
    next alternation point $a_{t'+1}$ or $b_t$, whichever comes first. Moreover, since the
    minimality of $t'$ implies that $a_{t'-1} \le b_{t-1}$ and therefore
    $\cD(b_{t-1}, a_{t'}] \le 1/k$, we have
    $\cD(a_{t'}, b_t] = \cD(b_{t-1}, b_t] - \cD(b_{t-1}, a_{t'}]
    > \frac{(h-1)/2}{k} + \frac{\epsilon}{2} = \epsilon/2$. Therefore, recalling that
    $\epsilon < 1/k$, we obtain
    \[
        F_t \ge \sum_{i \in (a_{t'}, \min\{a_{t'+1}, b_t\}]} (\cD(i) + \cE(i))
        \ge \cD(a_{t'}, \min\{a_{t'+1}, b_t\}]
        \ge \min\{1/k, \epsilon/2\}
        = \epsilon/2
        = z_t/2 \,,
    \]
    as desired.

    Otherwise, suppose $f(a_{t'}) \ne g(a_{t'})$. The logic is similar, but now we argue that there
    must be substantial $\cD$-mass that is both in $(b_{t-1}, b_t]$ and either at most $a_{t'}$ or
    in $(a_{t'+1}, a_{t'+2}]$, \ie the regions where $f$ and $g$ disagree. Indeed, suppose
    $\cD(b_{t-1}, a_{t'}] < \epsilon/4$.
    Then $\cD(b_{t-1}, a_{t'+1}] = \cD(b_{t-1}, a_{t'}] + \cD(a_{t'}, a_{t'+1}]
    < \epsilon/4 + 1/k$, while
    $\cD(b_{t-1}, b_t] > 1/k + \epsilon/2$, implying that
    $\cD(a_{t'+1}, \min\{a_{t'+2}, b_t\}] \ge \min\{1/k, \epsilon/4\} = \epsilon/4$.
    Therefore $f$ and $g$ disagree in at least $\epsilon/4$ $\cD$-mass inside $(b_{t-1}, b_t]$,
    so $F_t \ge \epsilon/4 = z_t/4$, and we are done with this case.

    The cases with $h \ge 2$ follow similar logic, but now, the $(h-1)/k$ term in $z_t$ dominates
    the $\epsilon$ term, so we must adjust the argument accordingly.

    \textbf{Case 2B.} $2 \le h \le 13$. As above, let $t'$ be the smallest index such that
    $a_{t'} \in (b_{t-1}, b_t)$.
    If $f(a_{t'}) = g(a_{t'})$, then we are done as follows. Observe that, by the minimality of
    $t'$, we have $a_{t'-1} \le b_{t-1}$ and hence $\cD(b_{t-1}, a_{t'}] \le 1/k$. It follows that
    $\cD(a_{t'}, b_t] = \cD(b_{t-1}, b_t] - \cD(b_{t-1}, a_{t'}] > \frac{(h-1)/2}{k} \ge 1/2k$.
    Therefore, we obtain
    \[
        F_t \ge \sum_{i \in (a_{t'}, \min\{a_{t'+1}, b_t\}]} \left( \cD(i) + \cE(i) \right)
        \ge \cD(a_{t'}, \min\{a_{t'+1}, b_t\}]
        \ge \min\{1/k, 1/2k\}
        = 1/2k \,.
    \]
    Then, since $\epsilon < 1/k$ and $1 \le h-1 \le 12$, we get
    $F_t > \frac{1}{4k} + \frac{\epsilon}{4} \ge \frac{(h-1)/12}{4k} + \frac{\epsilon}{4}
    \ge \frac{z_t}{48}$, as needed.

    Otherwise, suppose $f(a_{t'}) \ne g(a_{t'})$. We proceed similarly to the previous cases by
    arguing that there
    must be substantial $\cD$-mass that is both in $(b_{t-1}, b_t]$ and either at most $a_{t'}$ or
    in $(a_{t'+1}, a_{t'+2}]$, \ie the regions where $f$ and $g$ disagree. Indeed, suppose
    $\cD(b_{t-1}, a_{t'}] < 1/4k$.
    Then $\cD(b_{t-1}, a_{t'+1}] = \cD(b_{t-1}, a_{t'}] + \cD(a_{t'}, a_{t'+1}] < 1/4k + 1/k$, while
    $\cD(b_{t-1}, b_t] > 1/k + 1/2k$ by assumption, implying that
    $\cD(a_{t'+1}, \min\{a_{t'+2}, b_t\}] \ge \min\{1/k, 1/4k\} = 1/4k$.
    Then, again using $\epsilon < 1/k$ and $1 \le h-1 \le 12$, we get
    $F_t \ge \frac{1}{4k} > \frac{1}{8k} + \frac{\epsilon}{8}
    \ge \frac{(h-1)/12}{8k} + \frac{\epsilon}{8} \ge \frac{z_t}{96}$, as needed.\footnote{We did
    not try to optimize the constant $c$.}

    \textbf{Case 2C.} $h \ge 14$. Let $\ell$ be the number of alternation points of $f$ in
    $(b_{t-1}, b_t]$; say they are $a_{t'}, a_{t'+1}, \dotsc, a_{t'+\ell-1}$.
    We claim that $\ell \ge (h-1)/2$. Indeed, suppose $\ell < (h-1)/2$. Then the total
    $\cD$-mass in $(b_{t-1}, b_t]$ is at most
    \[
        \cD(\max\{a_{t'-1}, b_{t-1}\}, a_{t'}]
            + \sum_{j=1}^{\ell-1} \cD(a_{t'+j-1}, a_{t'+j}]
            + \cD(a_{t'+\ell-1}, \min\{a_{t'+\ell}, b_t\}]
        \le \frac{1}{k} (\ell+1)
        < \frac{1}{k} \left( 1 + \frac{h-1}{2} \right) \,,
    \]
    contradicting our assumption about $\cD$. Therefore $\ell \ge (h-1)/2$.

    Now, consider the $\ell-1$ ranges of the form $(a_{t'+j-1}, a_{t'+j}]$ consisting of pairs of
    consecutive $f$ alternations inside $(b_{t-1}, b_t]$.
    Since $f$ is constant inside each of them, it disagrees with $g$ in at least $\floor{(\ell-1)/2} \ge \floor{(h-3)/4} \ge (h-7)/4 \ge h/8$ of them, where the last inequality
    holds because $h \ge 14$. Therefore $f$ and $g$ disagree on sufficient $\cD$-mass: recalling
    that $\epsilon < 1/k$, we have
    \[
        F_t \ge \frac{h}{8} \cdot \frac{1}{k}
        > \frac{1}{8} \left( \frac{h-1}{k} + \epsilon \right)
        = \frac{z_t}{8} \,,
    \]
    which concludes the proof.
\end{proof}

\subsection{Edit Distance for Labeled Distribution Support Size}
\label{section:edit-distance-support-size}
\propedittvdistancestosupportsizek*
\begin{proof}
Let $(f', \cD')$ and $(g, \cE) \in \Xi$ be such that $\pi_{f',\cD'} = \pi_{f,\cD}$ and
$\dist_\edit((f,\cD), \Xi) = \dist_\TV(\cD'_{f'}, \cE_g)$.

\textbf{Step 1.} We will show that there exists $(g', \cE') \in \Xi$ such that $\dist_\TV(\cD'_{f'},
\cE'_{g'}) \leq \dist_\TV(\cD'_{f'}, \cE_g)$, and the alternation sequence of $g'$ is a subset of
the alternation sequence of $f'$. By \cref{fact:edit-zero-interval}, we may assume that any interval
$I$ where $f'$ and $g$ are constant and unequal to each other has either $\cD(I) = 0$ or $\cE(I) =
0$.

Let $a'_1 < a'_2 < \dotsm$ and $b_1 < b_2 < \dotsm$ be the alternation sequences for $f'$ and $g$
respectively. Suppose there is $(a'_{i-1}, a'_i]$ such that there is $b_j \in (a'_{i-1}, a'_i)$.
Define $g'$ such that $g'(x) = f'(x)$ for all $x \in (a'_{i-1}, a'_i]$ and $g'(x) = g(x)$ otherwise.
By \cref{fact:edit-zero-mass}, $\dist_\TV(\cD'_{f'}, \cE_{g'}) = \dist_\TV(\cD'_{f'}, \cE_g)$.

We claim that this does not increase the number of alternation points, so $g'$ has at most the
number of alternations as $g$.  Let $z$ be the constant such that $f'(x) = g'(x) = z$ for all $x \in
(a'_i, a'_{i-1}]$.  Since there is an alternation point $b_j \in (a'_i, a'_{i-1})$, there is an
interval $I \subseteq (a'_i, a'_{i-1}]$ such that $g$ and $g'$ have constant value $z$ on $I$. When
we replace the values of $g$ with $z$ in $(a'_i, a'_{i-1}]$ to obtain $g'$, we cannot increase the
number of alternation points, since we simply expand the interval $I$.

Performing this operation in each interval $(a'_{i-1}, a'_i]$ where there exists an alternation
point $b_j \in (a'_{i-1}, a'_i)$, and simply setting $\cE' = \cE$ (for clarity of notation in step
2), we obtain $(g',\cE')$ with the desired property.

\textbf{Step 2.} We now have $(f', \cD')$ and $(g', \cE')$ where the alternation sequence $b'_1 <
b'_2 < \dotsm$ of $g'$ is a subset of the alternation sequence of $f'$. Let $a_1 < a_2 < \dotsm$ be
the alternation sequence of the original function $f$. We will define $(h, \cH)$ as follows. For
each interval $B'_j \define (b'_{j-1}, b'_j]$ in the alternation sequence of $g'$, let $a'_i = b'_{j-1} < a'_{i+1} < \dotsm < a'_{i+t} = b'_j$ be the alternation points of $f'$ contained in
$[b'_{j-1}, b'_j]$, and let $A_j = (a_i, a_{i+t}]$. Let $z_j$ be the value such that $g'(x) = z_j$
for all $x \in B'_j$. Let $T'_j \subseteq B'_j$ be the points $x$ such that $f'(x) = z_j$. We define
$h(x) = z_j$ for all $x \in A_j$. Note that the intervals $A_j$ partition the domain, so this fully
defines $h$.

Fix an interval $B'_j$.  If $f(x) \neq z_j$ for all $x \in A_j$, we set $\cH(a_{i+t}) = \cE'(B'_j)$
and $\cH(x) = 0$ for the remaining $x \in (a_i, a_{i+t})$. Then
\begin{align*}
  &\sum_{x \in A_j} \ind{f(x) \neq h(x)} (\cD(x) + \cH(x))
                + \ind{f(x) = h(x)} |\cD(x) + \cH(x)| \\
  &\qquad= \sum_{x \in A_j} (\cD(x) + \cH(x)) \\
  &\qquad= \cD(a_{i+t}) + \cH(a_{i+t}) + \sum_{x \in A_j \setminus \{a_{j+t}\}} \cD(x) \\
  &\qquad= \cE'(B'_j) + \cD(A_j) = \cE'(B'_j) + \cD'(B'_j) \\
  &\qquad= \sum_{x \in B'_j} \ind{f'(x) \neq g'(x)} (\cD'(x) + \cE'(x)) 
                        + \ind{f'(x) = g'(x)} |\cD'(x) - \cE'(x)| \,.
\end{align*}
\noindent
Otherwise, let $T_j \subseteq A_j$ be the coordinates such that $f(x) = z_j$ for $x \in T_j$. First
observe that
\begin{align*}
  &\sum_{x \in B'_j} \ind{ f'(x) \neq g'(x) } (\cD'(x) + \cE'(x))
                + \ind{ f'(x) =    g'(x) } | \cD'(x) - \cE'(x) | \\
  &\qquad= \cD'(B'_j \setminus T'_j) + \cE'(B'_j \setminus T'_j) + |\cE'(T'_j) - \cD'(T'_j)| \\
  &\qquad\geq \cD'(B'_j \setminus T'_j) + |\cE'(B'_j) - \cD'(T'_j)|  \\
  &\qquad= \cD(A_j \setminus T_j) + |\cE'(B'_j) - \cD'(T'_j)| \,.
\end{align*}
We assign values for $\cH$ to the coordinates in $T_j$ such that $\cH(T_j) = \cE'(B'_j)$, $\cH(x) =
0$ for all $x \in A_j \setminus T_j$, and
\[
  \sum_{x \in T_j} \left| \cH(x) - \cD(x) \right|
    = |\cH(T_j) - \cD(T_j)|
    = |\cH(A_j) - \cD(T_j)| 
    = |\cE'(B'_j) - \cD'(T'_j)| \,.
\]
which is possible due to \cref{fact:edit-match-mass}. Then
\begin{align*}
  &\sum_{x \in T_j} | \cH(x) - \cD(x) |
    + \sum_{x \in A_j \setminus T_j} ( \cH(x) + \cD(x) ) \\
  &\qquad= | \cE'(B'_j) - \cD'(T'_j) | + \cD( A_j \setminus T_j ) \\
  &\qquad\leq \sum_{x \in B'_j} \ind{ f'(x) \neq g'(x) } (\cD'(x) + \cE'(x))
                + \ind{ f'(x) =    g'(x) } | \cD'(x) - \cE'(x) | \,.
\end{align*}
Applying the same argument to each interval $B'_j = (b'_{j-1}, b'_j]$, we obtain $(h,\cH)$ with
the required properties, due to \cref{prop:tv-distance-labeled}:
\begin{align*}
&\sum_x \left( \ind{ f(x) = h(x) }\cdot | \cD(x) - \cH(x) |
    + \ind{ f(x) \neq h(x) }( \cD(x) + \cH(x) ) \right) \\
&= \sum_j \left( \sum_{x \in T_j}  | \cH(x) - \cD(x) |
    + \sum_{x \in A_j \setminus T_j} ( \cH(x) + \cD(x) ) \right)\\
&\leq \sum_j \left( \sum_{x \in B'_j} \ind{ f'(x) \neq g'(x) } (\cD'(x) + \cE'(x))
                + \ind{ f'(x) =    g'(x) } | \cD'(x) - \cE'(x) | \right) \\
&= \dist_\TV( \cD'_{f'}, \cE'_{g'} ) \leq \dist_\edit((f,\cD), \Xi) \,. \qedhere
\end{align*}
\end{proof}

\subsection{Edit Distance for Distribution Support Size}
\label{section:edit-distance-distribution-support-size}
\lemmadistfreereductionedit*

\begin{proof}
By a limit argument, and using the triangle inequality on the edit and TV distances, it suffices to
prove the claim for the case where $\pi$, and thus $\pi'$, have rational densities only. This will
allow us to minimize technical details by using the standard edit distance on \emph{strings}, which
is simpler to analyze, as follows. We may fix integer $N$ such that all densities of $\pi'$ are
integer multiples of $1/N$ (for example, we may take $N = \prod_i b_i$ where we write $\pi'(i) =
a_i/b_i$).

    Now, \cref{prop:dist-to-n-strings-distributions} applies: let $\Psi$ be the set of $2k$-block
    strings in $\zo^N$, and let $x' \define \psi(\pi')$;
    then $\dist_\reledit(x', \Psi) \le 2 \cdot \dist_\edit(\pi', \Pi_{2k})$.
    Let $\delta \define \dist_\reledit(x', \Psi)$. By definition of relative edit distance, there
    exists a sequence of $N\delta$ edit operations $O_1, \dotsc, O_{N\delta}$ such that
    $(O_{N\delta} \circ O_{N\delta-1} \circ \dotsm \circ O_1)(x') \in \Psi$, where each $O_j$
    is an \emph{insertion}, \emph{deletion}, or \emph{substitution} of a single character.

    We claim that there exists a sequence $O'_1, \dotsc, O'_{\ell}$ of operations, with
    $\ell \le N\delta$, such that
    \begin{enumerate}
        \item Each of $O'_1, \dotsc, O'_{\ell}$ is a deletion;
        \item $(O'_{\ell} \circ O'_{\ell-1} \circ \dotsm \circ O'_1)(x')$ is a $2k$-block string.
    \end{enumerate}
    To see why this is true, choose $s \in \Psi$ such that $\dist_\reledit(x', s) = \delta$ and,
    using \cref{fact:string-extension-hamming}, fix $u \in \ext(x'), v \in \ext(s)$ of equal
    length $M$ such that $\dist_\ham(u,v) = \dist_\stringedit(x',s) = N\delta$. We construct string
    $v' \in \{0,1,\bot\}^M$ as follows: for each $i \in [M]$,
    \begin{enumerate}
        \item If $u_i = \bot$, set $v'_i = \bot$.
        \item If $u_i \ne \bot$ and $u_i = v_i$, set $v'_i = v_i$.
        \item If $u_i \ne \bot$ and $u_i \ne v_i$, set $v'_i = \bot$.
    \end{enumerate}
    We make three observations. First, there exists a $2k$-block string $s' \in \zo^*$ such that
    $v' \in \ext(s')$; this is true because for each $i \in [M]$,
    either $v'_i = v_i$ or $v'_i = \bot$. Second, for every $i \in [M]$, we have the implication
    $u_i \ne v'_i \implies v'_i = \bot$; this holds by construction. Third,
    $\dist_\ham(u, v') \le \dist_\ham(u, v) = N\delta$, which is also clear by construction.
    Let $\ell \define \dist_\ham(u, v')$.

    We obtain our deletion operations as follows.
    Let $u^* \in \{0,1,\bot\}^M$ be given by $u^*_i = u_i$ when $u_i = v'_i$, and
    $u^*_i = \bot$ otherwise. Then $\dist_\ham(u^*, v') = 0$ by our second observation.
    Let $x^* \in \{0,1\}^*$ be obtained from $x'$ by deleting each of the $\ell$ characters
    corresponding to the case $u_i \ne v'_i$ above (\ie if $u_i \ne v'_i$, then this occurs at
    the $j$-th non-$\bot$ character of $u_i$, so delete the $j$-th character of $x'$).
    Then $\dist_\stringedit(x', x^*) = \ell$. Moreover, $u^* \in \ext(x^*)$, because we deleted
    characters from $x'$ to obtain $x^*$, and set to $\bot$ characters from $u$ to obtain $u^*$, in
    correspondence. Hence $\dist_\stringedit(x^*, s') \le \dist_\ham(u^*, v') = 0$,
    so that $x^* = s'$ is a $2k$-block string. Therefore $\ell = \dist_\stringedit(x', x^*)$
    deletion operations suffice to turn $x'$ into a $2k$-block string.
    This yields the desired $O'_1, \dotsc, O'_\ell$.

    We now use these operations to transform $\pi$ into a vector $\nu$ supported on at most
    $k$ elements, as follows. We set $\nu$ to zero everywhere outside the support of $\pi$.
    For each $i \in \supp(\pi)$,
    \begin{enumerate}
        \item Let $x'_p \dotsc x'_{p+\frac{1}{2}N\pi(i)-1} = 1^{N\pi'(2i-1)}$ be the block of 1s
            corresponding to the entry $\pi'(2i-1)$ in $x' = \psi(\pi')$. Similarly, let
            $x'_q \dotsc x'_{q+\frac{1}{2}N\pi(i)-1} = 0^{N\pi'(2i)}$ be the block of 0s
            corresponding to the entry $\pi'(2i)$ in $x'$.
        \item If all the characters in at least one of these two blocks were deleted by
            operations in $O'_1, \dotsc, O'_\ell$, set $\nu(i) = 0$. Otherwise,
            set $\nu(i) = \pi(i)$.
    \end{enumerate}
    First, note that
    \begin{align*}
        \|\pi - \nu\|_1
        &= \sum_{i \in \supp(\pi)} \pi(i) \cdot \ind{\text{all 0s or all 1s corresponding to $\pi(i)$ deleted}} \\
        &\le 2\sum_{i' \in \supp(\pi')} \pi'(i') \cdot \ind{\text{entire block corresponding to
$\pi'(i')$ deleted}}  \,.
    \end{align*}
If a block $b^{N\pi'(i')}$ was deleted, then there were $N\pi'(i')$ deletions required to delete the
characters in this block. Each deleted block corresponds to a disjoint set of deletion operations,
and there is a total of $\ell$ deletion operations.  Therefore we must have
\begin{align*}
  \sum_{i' \in \supp(\pi')} (N\pi'(i')) \cdot \ind{\text{entire block corresponding to $\pi'(i')$ deleted}} 
  \leq \ell \,.
\end{align*}
As a consequence,
\[
        \|\pi - \nu\|_1 \le 2 \ell / N \le 2\delta \,.
\]
Now, let $\nu^*$ be a probability distribution obtained from $\nu$ by adding the required
probability mass $1 - \|\nu\|_1$ to an arbitrary element with nonzero mass. Observe that $1 - \| \nu
\|_1 = \| \pi \|_1 - \| \nu \|_1 \leq 2\delta$, so $\|\pi - \nu^*\|_1 \leq \|\pi - \nu \|_1 + 2\delta
\leq 4\delta$. It follows that $\dist_\TV(\pi, \nu^*) \le 2\delta$. It remains to
show that $\nu^*$ is supported on at most $k$ elements.

    Let $K \define \abs{\supp(\pi)}$; then $\abs{\supp(\pi')} = 2K$ and $x'$ contains
    exactly $2K$ blocks. Since $(O'_\ell \circ \dotsc \circ O'_1)(x')$ contains at most $2k$
    blocks, it follows that at least $2(K-k)$ blocks are entirely deleted by the
    operations. Therefore at least $K-k$ distinct indices $i \in \supp(\pi)$ are such that
    $\nu(i) = 0$, by the construction above. Therefore
    $\abs{\supp(\nu^*)} = \abs{\supp(\nu)} \le \abs{\supp(\pi)} - (K-k) = k$, as desired.

    Putting everything together, we obtain
    $\dist_\TV(\pi, \Pi_k) \le \dist_\TV(\pi, \nu^*) \le 2\delta \le 4\cdot \dist_\edit(\pi', \Pi_{2k})$.
\end{proof}

\subsection{String Edit Distance for Support Size}
\label{section:string-edit-distance-support-size}

\propdisttonstringsdistributions*

\begin{proof}
    We first show that $\dist_\edit(\pi, \Pi) \le \dist_\reledit(x, \Psi)$.
    Pick some $y \in \Psi$ satisfying
    $\dist_\reledit(x,y) = \dist_\reledit(x, \Psi)$. Then $\psi^{-1}(y) \in \Pi$ since
    $\psi^{-1}(y)$ is supported on at most $n$ elements (by \cref{obs:n-block-support}),
    and therefore, using \cref{lemma:relative-edit-distance},
    $\dist_\edit(\pi, \Pi) \le \dist_\edit(\pi, \psi^{-1}(y)) \le \dist_\reledit(x, y)
    = \dist_\reledit(x, \Psi)$.

    We now show that $\dist_\edit(\pi, \Pi) \ge \frac{1}{2} \dist_\reledit(x, \Psi)$.
    If we can show that $\dist_\edit(\pi, \Pi) = \dist_\edit(\pi, \pi^*)$ for some
    $\pi^* \in \Pi$ whose densities are integer multiples of $1/N$, we will be done:
    such $\pi^*$ will satisfy $\psi(\pi^*) \in \Psi$, so the conclusion will follow from
    \cref{lemma:relative-edit-distance}.

    Let $\cD$ be the probability distribution over $\bZ$ with same densities as $\pi$
    (\ie we simply extend the domain from $\bN$ to $\bZ$), and let
    $f : \bZ \to \zo$ be a function such that $\pi = \pi_{f,\cD}$, which must exist.

    Let $\Xi$ be the property of labeled distributions $(h, \cF)$ such that $\pi_{h,\cF}$ has
    support size at most $n$; equivalently, such that $\pi_{h,\cF} \in \Pi$.
    By \cref{lemma:edit-tv-distances-to-support-size-k},
    $\dist_\edit((f,\cD), \Xi) \ge \dist_\TV((f,\cD), \Xi)$. We claim that there exists
    $(g,\cE) \in \Xi$ such that $\dist_\TV((f,\cD), \Xi) = \dist_\TV((f,\cD), (g,\cE))$ and,
    moreover, all densities of $\cE$ are integer multiples of $1/N$.

    Let $(g,\cE) \in \Xi$ be such that $\dist_\TV((f,\cD), (g,\cE)) = \dist_\TV((f,\cD), \Xi)$.
    We first claim that $\supp(\cD) \cap \supp(\cE) \ne \emptyset$ and, moreover,
    there exists $i^* \in \supp(\cD) \cap \supp(\cE)$ such that $f(i^*) = g(i^*)$. Indeed,
    suppose this is not the case. Then $\dist_\TV((f,\cD), (g,\cE)) = 1$. On the other hand,
    the labeled distribution $(f,\cD')$ where $\cD'$ is the singleton distribution supported on
    any $i \in \supp(\cD)$ satisfies $\dist_\TV((f,\cD), (f,\cD')) < 1$, contradicting our
    choice of $(g,\cE)$ since $\pi_{f,\cD'}$ is supported on a single element and thus
    $(f,\cD') \in \Xi$. Fix any such $i^*$.

    We first construct a distribution $\cE'$ from $\cE$ satisfying three conditions:
    \begin{enumerate}
        \item $(g,\cE') \in \Xi$.
        \item For every $i \in \supp(\cE')$, $g(i) = f(i)$; and
        \item $\dist_\TV((f,\cD), (g,\cE')) \le \dist_\TV((f,\cD), (g,\cE))$;
    \end{enumerate}
    We construct $\cE'$ as follows: for each $i \in \bN$,
    \begin{enumerate}
        \item If $i \ne i^*$ and $g(i) = f(i)$, set $\cE'(i) = \cE(i)$.
        \item If $g(i) \ne f(i)$, set $\cE'(i) = 0$.
        \item Set $\cE'(i^*) = 1 - \cE'(\supp(\cE) \setminus \{i^*\})$.
    \end{enumerate}
    By construction, $\cE'$ is a probability distribution. The first condition is easy to
    verify: note that $\supp(\cE') \subseteq \supp(\cE)$, and thus we have
    $|\supp(\pi_{g,\cE'})| \le |\supp(\pi_{g,\cE})| \le n$ and hence $(g,\cE') \in \Xi$.
    The second condition holds by construction of $\cE'$ and choice of $i^*$.
    Finally, we verify the third condition. By \cref{prop:tv-distance-labeled}, we have
    \begin{align*}
        &\dist_\TV((f,\cD), (g,\cE')) \\
        &\quad = \frac{1}{2} \sum_{i \in \bN} \ind{f(i) \ne g(i)}( \cD(i) + \cE'(i) )
            + \ind{f(i) = g(i)} \abs*{ \cD(i) - \cE'(i) } \\
        &\quad = \frac{1}{2} \sum_{i \ne i^*}\Bigg[ \ind{f(i) \ne g(i)} \cD(i)
            + \ind{f(i) = g(i)} \abs*{ \cD(i) - \cE(i) } \Bigg] \\
            &\qquad + \frac{1}{2} \Bigg[
                \abs*{\cD(i^*) - \left( 1 - \sum_{i \ne i^*} \ind{f(i) = g(i)} \cE(i) \right)}
                \Bigg] \\
        &\quad = \frac{1}{2} \sum_{i \ne i^*}\Bigg[ \ind{f(i) \ne g(i)} \cD(i)
            + \ind{f(i) = g(i)} \abs*{ \cD(i) - \cE(i) } \Bigg] \\
            &\qquad + \frac{1}{2} \Bigg[
                \abs*{\cD(i^*) - \left( \cE(i^*) + \sum_{i \in \bN} \ind{f(i) \ne g(i)} \cE(i) \right)}
                \Bigg] \\
        &\quad \le \frac{1}{2} \sum_{i \in \bN}\Bigg[ \ind{f(i) \ne g(i)} (\cD(i) + \cE(i))
            + \ind{f(i) = g(i)} \abs*{ \cD(i) - \cE(i) } \Bigg] \\
        &\quad = \dist_\TV((f,\cD), (g,\cE)) \,,
    \end{align*}
    the inequality being the triangle inequality.

    Now, we construct $\cE''$ from $\cE'$ to satisfy three conditions:
    \begin{enumerate}
        \item $(g,\cE'') \in \Xi$.
        \item Every density of $\cE''$ is an integer multiple of $1/N$; and
        \item $\dist_\TV((f,\cD), (g,\cE'')) \le \dist_\TV((f,\cD), (g,\cE'))$;
    \end{enumerate}
    We construct $\cE''$ as follows: for each $i \in \supp(\cE') \setminus \{i^*\}$,
    set $\cE''(i) = \cD(i)$; and set $\cE''(i^*) = 1 - \cE''(\supp(\cE') \setminus \{i^*\})$
    (and $0$ elsewhere).
    Again, $\cE''$ is a probability distribution by construction. One can check that
    $\supp(\cE'') \subseteq \supp(\cE)$, and hence $(g,\cE'') \in \Xi$, thus meeting the first
    condition. The second condition holds because each $\cD(i)$ is a multiple of $1/N$
    (recall $\cD$ has the same densities as $\pi$). One can also verify that
    \[
        \dist_\TV((f,\cD), (g,\cE''))
        = \sum_{i \in \bN \setminus (\supp(\cE') \cup \{i^*\}) } \cD(i)
        \le \dist_\TV((f,\cD), (g,\cE')) \,,
    \]
    satisfying the third condition.

    It follows that $\dist_\TV((f,\cD), \Xi) = \dist_\TV((f,\cD), (g,\cE''))$.
    Let $\pi^* \define \pi_{g,\cE''}$. Then $\pi^*$ has densities that are multiples of $1/N$ and,
    recalling that $\pi = \pi_{f,\cD}$,
    \begin{align*}
        \dist_\TV((f,\cD), (g,\cE''))
        &\ge \dist_\edit(\pi, \pi^*)
            & \text{(Definition of edit distance)} \\
        &= \dist_\edit((f,\cD), (g,\cE''))
            & \text{(Definition of edit distance)} \\
        &\ge \dist_\edit((f,\cD), \Xi)
            & \text{(Since $(g,\cE'') \in \Xi$)} \\
        &\ge \dist_\TV((f,\cD), \Xi)
            & \text{(\cref{lemma:edit-tv-distances-to-support-size-k})} \\
        &= \dist_\TV((f,\cD), (g,\cE''))
            & \text{(Conclusion above)} \,.
    \end{align*}
    Thus equality holds and $\dist_\edit(\pi, \pi^*) = \dist_\edit((f,\cD), \Xi)$.
    We claim that, in fact, $\dist_\edit((f,\cD), \Xi) = \dist_\edit(\pi, \Pi)$.
    Indeed, for any $(h,\cF) \in \Xi$ we have $\pi_{h,\cF} \in \Pi$ and thus
    $\dist_\edit(\pi, \Pi) \le \dist_\edit(\pi_{f,\cD}, \pi_{h,\cF})
    = \dist_\edit((f,\cD), (h,\cF))$, so $\dist_\edit(\pi, \Pi) \le \dist_\edit((f,\cD), \Xi)$.
    Similarly, for any $\pi' \in \Pi$ we may construct $(h,\cF)$ such that
    $\pi_{h,\cF} = \pi'$ and hence $(h,\cF) \in \Xi$, so that
    $\dist_\edit((f,\cD), \Xi) \le \dist_\edit((f,\cD), (h, \cF))
    = \dist_\edit(\pi_{f,\cD}, \pi_{h,\cF}) = \dist_\edit(\pi, \pi')$, and thus
    $\dist_\edit((f,\cD), \Xi) \le \dist_\edit(\pi, \Pi)$. We have constructed our desired
    $\pi^*$, concluding the proof.
\end{proof}

\section{Comparison of Labeled Distribution Testing vs.~the Parity Trace}
\label{section:relative-strength}

It remains to prove the $\not\to$ relations illustrated in \cref{fig:relative-strength}, which we
repeat here for convenience:

\begin{figure}[h]
    \centering
\fbox{
    \begin{tabular}{ccc}
        $(\LD, \TV)$   & $\substack{\centernot\longrightarrow \\ \longleftarrow}$ & $(\PT, \TV)$ \\
        $\downarrow \quad  \centernot\uparrow$ &      & $\downarrow \quad \centernot\uparrow$ \\
        $(\LD, \edit)$ & $\substack{\longrightarrow \\ \longleftarrow}$           & $(\PT, \edit)$ \\
    \end{tabular}}
\end{figure}

$(\LD, \TV) \centernot\rightarrow (\PT, \TV)$ is
\cref{prop:labeled-tv-tester-does-not-imply-parity-tv-tester}, and $(\LD, \edit) \centernot
\rightarrow (\LD, \TV)$ is \cref{prop:edit-labeled-tester-does-not-imply-tv-labeled-tester}.  The
remaining arrow follows by transitivity.

Recall that any density property $\Xi$ has an associated property of distributions $\Pi$, and vice
versa. 
\newcommand{\estD}{\widehat{\bm{D}}}
\newcommand{\estL}{\widehat{\bm{L}}}
\newcommand{\estR}{\widehat{\bm{R}}}

\begin{proposition}
    \label{prop:labeled-tv-tester-does-not-imply-parity-tv-tester}
    For every sufficiently small $\epsilon > 0$ and every $m \in \bN$,
    there exists a property $\Pi$ of distributions
    over $\bN$ and corresponding density property $\Xi = \Xi(\Pi)$ such that
    \begin{enumerate}
        \item There exists a $(\Xi, \far^\TV_{\epsilon}(\Xi), 3/4)$-labeled distribution tester
            with sample complexity $O(1/\epsilon^2)$;
        \item No $(\Pi, \far^\TV_{1/2}(\Pi), 2/3)$-distribution tester under the parity trace
            with sample complexity $m$ exists.
    \end{enumerate}
\end{proposition}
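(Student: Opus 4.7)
The plan is to take $\Pi = \{\delta_1\}$, the singleton consisting of the point-mass distribution at $1 \in \bN$, with the corresponding density property $\Xi = \Xi(\Pi)$ consisting of all proper labeled distributions $(f, \cD)$ over $\bZ$ whose density sequence is $\delta_1$ (equivalently, 1-proper $(f, \cD)$ with $\supp(\cD) \subseteq \{x : f(x) = 1\}$). This single choice will work for every $\epsilon$ and every $m$ simultaneously, so no $\epsilon$- or $m$-dependent construction is needed.

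For part (2), observe that $\dist_\TV(\delta_1, \delta_3) = 1 > 1/2$, so $\delta_3 \in \far^\TV_{1/2}(\Pi)$. But a sample from either distribution is always a constant multiset of a single odd integer, so $\trace(\bm{S})$ is deterministically equal to $1^{|\bm{S}|}$ whether $\bm{S} \sim \samp(\delta_1, m)$ or $\bm{S} \sim \samp(\delta_3, m)$. The distributions of the parity trace are thus identical, so no algorithm can distinguish $\delta_1$ from $\delta_3$ with any positive probability gap, ruling out a $2/3$-success tester of any sample complexity $m$.

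For part (1), the tester draws $m_0 = O(1/\epsilon^2)$ samples from $\cD_f$ and rejects iff the empirical fraction of $0$-labels exceeds $\epsilon/2$. Completeness is immediate since every sample from $(f,\cD) \in \Xi$ has label $1$. For soundness, it suffices to show that $p \define \Pru{x \sim \cD}{f(x) = 0} > \epsilon$ whenever $(f,\cD) \in \far^\TV_\epsilon(\Xi)$, since a standard Chernoff bound then gives rejection with probability at least $3/4$. To prove this, assume $p < 1$ (else the claim is trivial), pick any integer $M$ with $\supp(\cD) \subseteq (-\infty, M]$, and construct $(g, \cE) \in \Xi$ by setting $g(x) = 1$ for $x \le M$, $g(x) = 0$ for $x > M$, and $\cE(x) = \cD(x)/(1-p)$ on $\{x : f(x) = 1\}$ (zero elsewhere). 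Then $(g,\cE)$ is $1$-proper with $\pi_{g,\cE} = \delta_1$, hence in $\Xi$, and a direct calculation via \cref{prop:tv-distance-labeled} gives $\dist_\TV(\cD_f, \cE_g) = p$, yielding $\epsilon < \dist_\TV(\cD_f, \Xi) \le p$.

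There is no real obstacle here; the construction exploits the fact that the parity trace retains no information about the positions of sample points beyond their parities and ordering, so distinct point masses at odd integers are perfectly indistinguishable under the parity trace while being trivially distinguishable in the labeled-distribution model. Note that this separation is particular to the TV distance: under the edit distance one has $\dist_\edit(\delta_1, \delta_3) = 0$ via the ``free'' insertion and rearrangement operations, which is consistent with the fact that $(\PT, \edit) \to (\LD, \edit)$ does hold.
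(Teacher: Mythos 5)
Your proof is correct and takes a genuinely different, and in fact simpler, route than the paper's. The paper sets $\Pi = \{\pi^*\}$ where $\pi^* = \left(\tfrac{1-\delta}{2}, \delta, \tfrac{1-\delta}{2}, 0, \dotsc\right)$ with $\delta = \Theta(1/m)$; the parity-trace indistinguishability argument is then only statistical (a tester with $o(1/\delta)$ samples rarely sees a 0-symbol), which is why $\Pi$ must depend on $m$, and the labeled-distribution upper bound runs through the learner-verifier machinery of \cref{prop:testing-by-learning} (estimate a median, then verify a three-point density vector). Your choice $\Pi = \{\delta_1\}$ is cleaner on both counts: the parity traces from $\delta_1$ and $\delta_3$ are identically distributed for \emph{every} sample size (perfect indistinguishability, so one $\Pi$ works for all $m$), and the labeled-distribution tester degenerates to a one-sided count of 0-labels with a Hoeffding bound. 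The soundness calculation via \cref{prop:tv-distance-labeled} checks out ($\dist_\TV(\cD_f,\cE_g) = p$), as does the closing remark that $\dist_\edit(\delta_1,\delta_3)=0$ by two free insertions.

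Two small cleanups you should make. First, the parenthetical characterization of $\Xi$ as ``1-proper $(f,\cD)$ with $\supp(\cD) \subseteq \{x : f(x)=1\}$'' is not quite right: $\pi_{f,\cD} = \delta_1$ forces $\supp(\cD) \subseteq (-\infty, a_1]$, which is strictly smaller than $\{x : f(x)=1\}$ when $f$ alternates more than once, since mass on $(a_2,a_3]$ also lands where $f=1$ but would give $\pi_{f,\cD}(3) > 0$. This does not affect the argument, since the construction of $(g,\cE)$ uses the correct characterization implicitly. Second, the choice of $M$ with $\supp(\cD) \subseteq (-\infty,M]$ presumes $\cD$ is bounded above, which need not hold. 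Either truncate: choose $M$ so that $\cD(\bZ_{>M}) < \eta$, run the same construction on the truncated mass, and let $\eta \to 0$ to conclude $\dist_\TV(\cD_f,\Xi) \le p$ (the strict inequality in $\far^\TV_\epsilon$ then gives $p > \epsilon$); or simply take $g \equiv 1$ and note that the density sequence of a never-alternating 1-proper function with any $\cD$ is $\delta_1$ (an edge case the paper's definition handles implicitly). Either fix is routine.
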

\begin{proof}[Proof sketch]
    Let $\delta = O(1/m)$, and let
    $\Pi \define \{\pi^*\}$ where $\pi^*$ is the property over $\bN$ with densities
    $\left( \frac{1-\delta}{2}, \delta, \frac{1-\delta}{2}, 0, 0, \dotsc \right)$.
    We now show that $\Pi$ and $\Xi = \Xi(\Pi)$ satisfy the two properties in the statement.

    \paragraph{Efficient labeled distribution tester.}
    We outline the construction of a $(\Xi, \far^\TV_{\epsilon}(\Xi), 3/4)$-labeled
    distribution tester using the
    testing-by-learning approach from \cref{prop:testing-by-learning}. For that, we need to
    give a learner-verifier pair for $\Xi$ with sample complexity $O(1/\epsilon^2)$.

    The learner $A$ takes a sample from $\cD_f$ of size $O(1/\epsilon^2)$, and uses it to get an
estimate $\bm{y}$ of the median of $\cD$. It then produces a function $\bm{g} : \bZ \to \zo$ given
by $\bm{g}(x) = 1$ for $x \ne \bm{y}$, and $\bm{g}(\bm{y}) = 0$. Note that, for appropriate
distribution $\cE$, we have $(\bm{g}, \cE) \in \Xi$. It remains to show that, when $(f, \cD) \in
\Xi$, the output $\bm{g}$ also satisfies the other conditions from \cref{def:learner-verifier-pair};
namely, that with high constant probability $\dist_\TV(\cD_f, \cD_{\bm{g}}) < \epsilon/4$ and, for
some $\cE$ satisfying $(\bm{g}, \cE) \in \Xi$, $\dist_\TV(\cD, \cE) < \epsilon/4$.

    The main idea is that, by Hoeffding's inequality, $\bm{y}$ will be $O(\epsilon)$ close to the
    true median of $\cD$, which is the point that should receive value zero and mass $\delta$
    as per the definition of $\Xi$. More precisely, by choosing $\bm{y}$ as close to the median
    of the sample as possible, but taking care \emph{not} to choose any of the 1-valued elements
    in the sample, we can guarantee the following conditions with sufficient probability:
    1) the total $\cD$-mass to the left and to the right of $\bm{y}$ only differ by $O(\epsilon)$;
    and 2) $\cD(\bm{y}) = O(\epsilon)$.
    The first condition is enough to ensure that, for some $\cE$ satisfying $(\bm{g}, \cE) \in \Xi$,
    $\dist_\TV(\cD, \cE) < \epsilon/4$. This is because the $\cD$-mass to the left and right of
    $\bm{y}$ are sufficiently close to the desired value $\frac{1-\delta}{2} \approx \frac{1}{2}$,
    which is also how much $\cE$-mass needs to be in either range to satisfy
    $(\bm{g}, \cE) \in \Xi$. An application of \cref{prop:labeled-distribution-construction}
    concludes that, as long as the masses to the left and right of $\bm{y}$ are correct to
    $O(\epsilon)$ tolerance, a specific $\cE$ can be chosen so that
    $\dist_\TV(\cD, \cE)$ is small.
    Then, the second condition ensures that $\cD(\bm{y})$ is sufficiently smaller than
    $\epsilon$ if $f(\bm{y})=1$ (because otherwise $\bm{y}$ would have been chosen differently),
    so that assigning $\bm{g}(\bm{y}) = 0$ does not make $\dist_\TV(\cD_f, \cE_{\bm{g}})$ too
    large (which would happen if $\cD_f$ and $\cE_{\bm{g}}$ disagreed on some element with
    $\Omega(\epsilon)$ $\cD$-mass). Together, these show that $A$ is a proper learner.

    We now outline the verifier $B_g$. Say $g = g_y$. Then on input $\cD$, which is
    a distribution on $\bZ$, $B_g$ must distinguish between the cases
    $\cD \in \close^\TV_{\epsilon/4}(\Pi_g)$ and $\cD \in \far^\TV_{\epsilon/2}(\Pi_g)$.
    Consider distribution $\pi_{g,\cD}$ on $\bN$.
    Note that $B_g$ is able to sample from $\pi_{g,\cD}$ by drawing a sample $\bm{x} \sim \cD$
    and mapping it to an index in $\{1,2,3\}$ depending on whether
    $\bm{x} < y$, $\bm{x} = y$ or $\bm{x} > y$.

    $B_g$ proceeds by sampling $O(1/\epsilon^2)$ points from $\pi_{g,\cD}$ and using them to learn
    $\pi_{g,\cD}$ to sufficiently small additive error $O(\epsilon)$.
    It follows that $B_g$ can distinguish,
    with high constant probability, between the cases $\dist_\TV(\pi_{g,\cD}, \pi^*) \le \epsilon/4$
    and $\dist_\TV(\pi_{g,\cD}, \pi^*) > \epsilon/2$.
    One can then show that this is equivalent to distinguishing between
    $\cD \in \close^\TV_{\epsilon/4}(\Pi_g)$ and $\cD \in \far^\TV_{\epsilon/2}(\Pi_g)$.
    Intuitively, this is because the only factor determining the distance of $\cD$ to
    $\Pi_g$ is how far its densities around $y$ are from the desired vector
    $\left(\frac{1-\delta}{2}, \delta, \frac{1-\delta}{2}\right)$. Formally, one of the directions
    requires another application of \cref{prop:labeled-distribution-construction}.

    It follows that $(A,B)$ is a learner-verifier pair for $\Xi$ with success probability $3/4$,
    error $\epsilon$, and sample complexity $O(1/\epsilon^2)$.
    By \cref{prop:testing-by-learning}, there exists a
    $(\Xi, \far^\TV_{\epsilon}(\Xi), 3/4)$-labeled distribution tester with sample complexity
    $O(1/\epsilon^2)$.

    \paragraph{Non-existence of efficient tester under the parity trace.}
Consider the distribution $\pi$ given by $\pi(1) = 1$, which has $\dist_\TV(\pi, \pi^*) > 1/2$.
However, under the parity trace, the only event that can distinguish $\pi$ from $\pi^*$ is a
0-valued symbol from $\pi^*$, which occurs for each sampled element with probability $\pi^*(2) =
\delta$. Therefore any tester that takes $o(1/\delta)$ samples cannot distinguish $\pi$ from $\pi^*$
with non-negligible probability.
\end{proof}

\begin{proposition}
    \label{prop:edit-labeled-tester-does-not-imply-tv-labeled-tester}
    For every sufficiently small $\epsilon > 0$ and every $m \in \bN$,
    there exists a property $\Pi$ of distributions
    over $\bN$ and corresponding density property $\Xi = \Xi(\Pi)$ such that
    \begin{enumerate}
        \item There exists a $(\Xi, \far^\edit_{\epsilon}(\Xi), 3/4)$-labeled distribution tester
            with sample complexity $O(1/\epsilon^2)$;
        \item No $(\Xi, \far^\TV_{\epsilon}(\Xi), 2/3)$-labeled distribution tester with sample
            complexity $m$ exists.
    \end{enumerate}
\end{proposition}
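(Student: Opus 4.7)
The plan is to adapt the construction of \cref{prop:labeled-tv-tester-does-not-imply-parity-tv-tester} to the labeled-distribution edit-vs-TV comparison. Concretely, I will take $\Pi = \{\pi^*\}$ where $\pi^* = \left(\frac{1-\delta}{2}, \delta, \frac{1-\delta}{2}\right)$ with $\delta = \Theta(1/m)$, so that $\Xi = \Xi(\Pi)$ consists of the 1-proper labeled distributions $(g,\cE)$ whose alternation sequence has length exactly $2$ and for which $\cE$ splits as $(1-\delta)/2, \delta, (1-\delta)/2$ across the three induced intervals. The key asymmetry exploited by the construction is that edit distance collapses the $\delta$-mass ``middle block'' almost for free, while TV distance is forced to see it through the labels.

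\textbf{Efficient edit tester.} I will construct a learner-verifier pair (in the sense of \cref{def:learner-verifier-pair}) and appeal to \cref{prop:testing-by-learning}. The proper learner samples $O(1/\epsilon^2)$ points and returns the two-alternation function whose alternation points are placed around the empirical median and a nearby small empirical ``gap''; standard Hoeffding arguments show that with probability $1-\delta/3$ the output $g$ satisfies the success conditions of a proper labeled-distribution learner with error $\epsilon/4$ (the role played by the edit distance instead of TV here is exactly what makes this learner able to treat the $\delta$-mass as noise). The verifier for each $g$ uses $O(1/\epsilon^2)$ further samples to test closeness of the induced three-block density sequence to $\pi^*$ in edit distance, which by \cref{lemma:labeled-edit-tester-to-parity-tester} reduces to an edit-distance uniformity-type test under the parity trace of a $3$-block distribution (simply counting $1$- and $0$-symbols and their ordering suffices). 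Then \cref{prop:testing-by-learning} gives a $(\Xi, \far^\edit_\epsilon(\Xi), 3/4)$-tester with sample complexity $O(1/\epsilon^2)$.

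\textbf{Hardness of the TV tester.} I will exhibit a pair of distributions $\cD_0, \cD_1$ over labeled distributions such that $\cD_0$ is supported on $\Xi$, $\cD_1$ is supported on $\far^\TV_\epsilon(\Xi)$, and the induced $m$-sample distributions over $\bZ \times \zo$ have total variation distance $o(1)$ (an indistinguishability argument in the style of Le Cam). The pair is built by choosing an $(f,\cD) \in \Xi$ whose underlying $\cD$ is spread over a large set, and an $(f', \cD')$ obtained by ``flattening'' the $\delta$-mass block so that $\pi_{f',\cD'}$ equals something whose reduced form matches $\pi^*$ up to a large TV gap at the joint $(x,b)$ level. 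The $\delta$-mass only affects the empirical label marginal by $\delta = \Theta(1/m)$, so the $m$-sample distributions under $\cD_0$ and $\cD_1$ differ in TV by $O(m \cdot \delta) = O(1)$ and, by tuning $\delta$ to a small enough constant times $1/m$, by less than $1/3$; Fano or a direct two-point argument then precludes the existence of a $(\Xi, \far^\TV_\epsilon(\Xi), 2/3)$-tester with $m$ samples.

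\textbf{Main obstacle.} The hard step is establishing the lower bound for TV testing while keeping the input genuinely $\epsilon$-far in TV from $\Xi$: as I explored, naive choices either leave the input close in TV to some relabeling in $\Xi$ (via the ``flattening'' construction of $g$ on $\supp(\cD)$) or else expose the $\delta$-mass via the label marginal and make TV testing easy in $O(1/\delta)$ samples. The delicate part is choosing the supporting distribution for $\cD'$ (e.g., concentrated so that flattening costs a non-trivial amount of $\cE$-mass, but spread enough that the birthday-style arguments prevent the tester from pinning down the alternation location) so that the $\epsilon$-far-in-TV condition and the $m$-sample indistinguishability condition hold simultaneously.
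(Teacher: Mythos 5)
Your construction cannot work, and the difficulty you identify as the ``main obstacle'' is not an incidental technicality but a fundamental breakdown of the approach. The $\delta$-mass ``middle block'' trick of \cref{prop:labeled-tv-tester-does-not-imply-parity-tv-tester} exploits the fact that the \emph{parity trace} model hides the positions of the samples: a tester that only sees the ordered sequence of labels cannot localize a $\delta$-mass block of $0$-labeled elements without waiting $\Omega(1/\delta)$ samples to observe one. But the statement you are proving concerns two testers both living in the \emph{labeled distribution} model, where the tester sees $(x_i, f(x_i))$ for all samples. A labeled distribution tester can see exactly where the middle block sits. So the only thing controlling TV distance to $\Xi = \Xi(\{\pi^*\})$ is the total label-marginal mismatch, and that is $O(\delta)$ for every $(f,\cD)$ whose density sequence collapses the middle block --- the same order as the edit distance. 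For instance, if $\pi_{f,\cD} = (1,0,0,\dots)$, then both $\dist_\edit((f,\cD), \Xi)$ and $\dist_\TV((f,\cD), \Xi)$ are $\Theta(\delta) = o(\epsilon)$. No $\epsilon$-far-in-TV-but-close-in-edit instance exists, so there is no Le~Cam pair of the type you describe, and the lower bound half of your argument cannot be completed.

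The paper's proof uses a genuinely different mechanism to open the edit-vs-TV gap. It takes $\Pi$ to consist of distributions $\pi$ with total odd mass exactly $1/2$ \emph{and with every individual density at most $1/n$}, for $n = \Theta(m^2\epsilon^4)$. The pointwise-cap constraint is completely invisible to edit distance because the fractional-string edit operations let you insert a zero-width alternation and split any block at zero cost, so the $(\Xi, \far^\edit_\epsilon(\Xi))$ problem reduces to estimating the odd-mass fraction in $O(1/\epsilon^2)$ samples. But the pointwise-cap constraint is fully visible to TV distance, since TV is computed position by position and cannot redistribute mass. Labeling $[n]$ by parity then embeds standard uniformity testing over $[n]$ into $(\Xi, \far^\TV_\epsilon(\Xi))$-testing, forcing sample complexity $\Omega(\sqrt{n}/\epsilon^2) = \Omega(m)$. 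This is the key idea your construction is missing: the separating property must encode a \emph{structural constraint on individual densities} (something edit operations can repair for free), not a \emph{small total mass} that is expensive to sample (which is only hidden from parity-trace testers, not from labeled-distribution testers).
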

\begin{proof}[Proof sketch]
    Let $n = \Theta(m^2 \epsilon^4)$ be an integer.
    Let $\Pi$ be the class of all distributions $\pi$ supported on $\bN$ such that
    1) the total density on the odd numbers is exactly $1/2$; and
    2) for every $i \in \bN$, $\pi(i) \le 1/n$.
    Let $\Xi = \Xi(\Pi)$ be the corresponding density property.

    \textbf{Existence of efficient edit distance tester.}
    We observe that the second requirement of $\Pi$ has essentially no effect under the edit
    distance, as the following outline shows. Let $\cO$ denote the set of positive odd integers.
    Then $\dist_\edit(\pi, \Pi) > \epsilon$ implies that $\abs*{\pi(\cO) - \frac{1}{2}} > \epsilon$.
    The reason is that, given a labeled distribution $(f,\cD)$ such that
    $\pi_{f,\cD} = \pi$ with sufficiently small pointwise masses and sufficient space between
    nonzero entries (which can always be accomplished without affecting $\pi_{f,\cD}$),
    one may move $\abs*{\pi(\cO)-\frac{1}{2}}$ mass in $\cD$ between the even and odd elements
    so as to satisfy the first condition of $\Pi$, and then using the sufficient space between
    nonzero entries, one may change the values of $f$ at points of zero mass so as to break up
    any alternations with more than $1/n$ mass, so as to satisfy the second condition of $\Pi$
    at no additional cost.

    Therefore the following algorithm $A$ distinguishes $\Xi$ from $\far^\edit_{\epsilon}(\Xi)$ with
    high constant probability: take $O(1/\epsilon^2)$ samples and use the empirical frequency
    of $1$-valued sample points $\widehat{\bm{o}}$ as an estimate of $\pi(\cO)$
    to $\epsilon/4$ additive error. Then accept if and only if
    $\abs*{\widehat{\bm{o}} - \frac{1}{2}} < \epsilon/2$.

    \textbf{Non-existence of efficient TV distance tester.}
    We reduce the problem of testing uniformity of distributions over $[n]$ in the standard model,
    to $(\Xi, \far^\TV_{\epsilon}(\Xi))$-labeled distribution testing. Suppose
    algorithm $A$ is a $(\Xi, \far^\TV_{\epsilon}(\Xi), 2/3)$-labeled distribution tester.
    Then our algorithm $B$ to distinguish, in the standard model, between the uniform distribution
    over $[n]$ and distributions over $n$ that are $\epsilon$-far from uniform in TV distance
    works as follows.

    For input distribution $\pi$ over $[n]$, let $f,\cD$ be the labeled distribution given by
    $f(x) = \parity(x)$ on $x \ge 1$ and $f(x) = 1$ on $x \le 0$,
    and $\cD = \pi$. It follows that $\pi = \pi_{f,\cD}$. Therefore $B$, on input
    $\pi$, can simulate $A$ on input $(f,\cD)$ by sampling $\bm{x} \sim \pi$ and producing
    $(\bm{x}, \parity(\bm{x}))$ when $A$ requests a sample from $(f, \cD)$.

    If $\pi$ is uniform over $[n]$, it follows that $(f, \cD) \in \Xi$. On the other hand,
    if $\pi$ is supported on $[n]$ and $\epsilon$-far from uniform in TV distance, then its
    total density in excess of $1/n$ is $\sum_i \max\{0, \pi(i)-1/n\} > \epsilon$.
    Therefore $\dist_\TV(\cD_f, \Xi) > \epsilon$. It follows that $B$ correctly accepts/rejects
    with probability at least $2/3$. Since testing uniformity in the standard model requires
    $\Omega(\sqrt{n}/\epsilon^2)$ samples, the sample complexity of $A$ must
    be at least $\Omega(\sqrt{n}/\epsilon^2)$.
\end{proof}

\end{document}